\documentclass[10pt,a4paper]{article}
\usepackage[utf8]{inputenc}
\usepackage{amsmath, amsthm}
\usepackage{amsfonts}
\usepackage{amssymb}
\usepackage{bbm}
\usepackage{hyperref}
\usepackage{float}

\newcommand{\Cin}{\underline{C}_{in}}

\newcommand{\OmegaRN}{\Omega^2_{RN}}

\newcommand{\omer}{\omega_{res}}
\newcommand{\uhr}{u_{\mathcal{H}_R}}
\newcommand{\uhl}{u_{\mathcal{H}_L}}
\newcommand{\uchr}{u_{\mathcal{CH}_{\mathcal{R}}}}
\newcommand{\uchl}{u_{\mathcal{CH}_{\mathcal{L}}}}
\newcommand{\phiHL}{(\phi_{\mathcal{L}}')_{\mathcal{H}^+}}
\newcommand{\phil}{\phi_{\mathcal{L}}'}
\newcommand{\phiH}{\phi_{H}}

\newcommand{\uH}{u_{\mathcal{H}^+}}

\newcommand{\usi}{u_{\mathcal{S}_{i^+}}}

\newcommand{\ep}{\epsilon}
\newcommand{\uch}{u_{\CH}}

\newcommand{\A}{\mathcal{A}}
\newcommand{\T}{\mathcal{T}}
\newcommand{\R}{\mathcal{R}}

\newcommand{\blue}[1]{{\color{black} #1}}
\newcommand{\teal}[1]{{\color{black} #1}}
\newcommand{\magenta}[1]{{\color{black} #1}}

\newcommand{\CH}{\mathcal{CH}_{i^+}}

\newtheorem{theo}{Theorem}

\usepackage{slashed}
\usepackage[left=2cm,right=2cm,top=1.51cm,bottom=1.5cm]{geometry}

\usepackage{authblk}
\usepackage{enumerate}

\usepackage{enumerate}

\theoremstyle{plain}
\newtheorem{thm}{Theorem}[section]

\newtheorem{lemma}[thm]{Lemma}

\newtheorem{prop}[thm]{Proposition}

\newtheorem{cor}[thm]{Corollary}

\newtheorem{conjecture}[thm]{Conjecture}

\theoremstyle{remark}
\newtheorem{rmk}{Remark}[section]

\theoremstyle{definition}

\newtheorem{defn}{Definition}[section]

\newcommand{\RR}{\mathbb{R}}

\newcommand{\rd}{\partial}
\newcommand{\ls}{\lesssim}

\newcommand{\HH}{ \mathcal{H}^+}

\theoremstyle{plain}

\theoremstyle{remark}

\theoremstyle{definition}

\usepackage{hyperref, todonotes, mathtools}	

\mathtoolsset{showonlyrefs}

\usepackage{bm}

\usepackage{color}

\addtocounter{tocdepth}{-2}
\usepackage{graphicx}
\usepackage{authblk}
 
\numberwithin{equation}{section}

\setcounter{tocdepth}{4}

\title{Asymptotically flat black holes with \\ a singular Cauchy horizon and a spacelike singularity}

\author[1]{Maxime~Van~de~Moortel\thanks{maxime.vandemoortel@rutgers.edu}}
\affil[1]{\small  Department of Mathematics, Rutgers University, 
	Hill~Center,~New~Brunswick~NJ~08854,~United~States~of~America \vskip.1pc \  }

\date{\today}

\begin{document}
	\maketitle
	\thispagestyle{empty}
	\thispagestyle{empty}
	
	\begin{abstract}  
		
		In  our recent work  \emph{[Van de Moortel, The  coexistence of null and spacelike singularities inside spherically symmetric black holes]} \cite{bif}, we analyzed the transition between null and spacelike singularities in spherically symmetric dynamical black holes and demonstrated that the spacelike portion is described by a Kasner metric with positive varying exponents that degenerate to $(1, 0, 0)$ near the null-spacelike transition.

		In the present paper, we provide examples of global spacetimes satisfying  the local assumptions of \cite{bif} and apply its analysis to obtain a large class of  asymptotically flat (spherically symmetric) black hole spacetimes that exhibit coexisting null and spacelike singularities as described in \cite{bif}. Our main results include:

		\begin{enumerate}
			\item  The construction of \emph{one-ended asymptotically flat} black hole spacetimes  solving the Einstein–Maxwell-charged-scalar-field equations. The proof relies on a  new  spacelike-characteristic gluing method between any uncharged spherically symmetric solution to the event horizon of a    charged dynamical black hole.
			\item The construction of a large class of \emph{two-ended asymptotically flat} black hole spacetimes solving the Einstein–Maxwell-(uncharged)-scalar-field equations.

		\end{enumerate} In both cases, we show that the terminal boundary in the black hole interior only has  two distinct components: a weakly singular (null) Cauchy horizon $\CH$ where curvature blows up and a strong singularity $\mathcal{S}=\{r=0\}$. 
		
		\noindent Our construction provides  the first  examples of black holes with coexisting null and spacelike singularities.		These examples hold particular significance in the one-ended case as a model of gravitational collapse, where this phenomenon is conjecturally  generic for the Einstein-scalar-field model, even beyond spherical symmetry.

	\end{abstract}

	\section{Introduction}\label{intro.section}
	The nature of the singularity inside a realistic black hole formed from gravitational collapse remains one of the most profound open problems in General Relativity. While the influential Oppenheimer-Snyder spacetime \cite{OppenheimerSnyder} provided an  early model for a dynamical black hole with a purely spacelike singularity, this scenario has been shown  to be \emph{highly non-generic in gravitational collapse}.  Indeed, recent mathematical results for the Einstein vacuum equations establish that the black hole's terminal boundary must include a null component (a Cauchy horizon, $\CH$) \cite{KerrStab}, and therefore cannot be entirely spacelike. This progress in the understanding of dynamical black holes has thus led to the following conjecture (see \cite{review}).
	\begin{conjecture}[\cite{Dafermos:2004jp,MihalisICM,Kommemi,breakdown}] \label{spacelike.conj}
		The  black hole terminal boundary  in generic gravitational collapse consists of a (weak) null singularity -- the Cauchy horizon $\CH$ --  and a  spacelike singularity $\mathcal{S}$ (Figure~\ref{fig:spacelikeconj}).
	\end{conjecture}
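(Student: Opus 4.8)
\emph{Strategy.} Conjecture~\ref{spacelike.conj} is a statement about \emph{generic} gravitational collapse for the full Einstein-scalar-field system, so a complete proof is well beyond reach; what one can realistically establish — and what the present paper sets out to do — is the construction of a robust, nonempty class of asymptotically flat solutions whose terminal boundary is \emph{exactly} $\CH\cup\Sc$, thereby showing that the conjectured picture is genuinely realized and that the mechanism producing it is stable under perturbation of the relevant data. The plan is to decouple the interior problem from the exterior one: the local analysis of \cite{bif} is invoked as a black box which, given characteristic data on a neighborhood of the event horizon with a suitably decaying scalar field and a subextremal terminal charge, produces the weakly singular null Cauchy horizon $\CH$ (with curvature blow-up) together with the spacelike portion $\Sc=\{r=0\}$ carrying the degenerate Kasner asymptotics. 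The task thus reduces to (i) constructing global asymptotically flat spacetimes whose event-horizon data verify the hypotheses of \cite{bif}, and (ii) proving that \emph{no further component} enters the terminal boundary — no additional null piece, no locally naked singularity, no second Cauchy horizon — which in spherical symmetry is controlled via the Penrose-diagram classification of \cite{Kommemi} together with the monotonicity of the area radius $r$ and of the Hawking mass.

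For the two-ended case I would solve the characteristic initial value problem from two copies of the event horizon over a subextremal Reissner--Nordstr\"{o}m charge with an uncharged scalar field, prescribing data that decay along $\HH$ at the Price-law rate (as supplied by the known exterior decay theory); forward evolution toward timelike infinity together with the redshift then yields exactly the decay hypotheses required, after which \cite{bif} closes the argument and the $\cite{Kommemi}$-type analysis excludes spurious boundary components. The genuinely new ingredient, and the principal obstacle, is the one-ended construction: one must \emph{glue} an essentially arbitrary uncharged spherically symmetric collapsing solution to the event horizon of a charged dynamical black hole. The obstruction is structural — a neutral past is incompatible with a charged horizon — so the charge has to be switched on inside a compact ``gluing collar'' by activating a charged scalar field there, and one must then solve the spacelike-characteristic constraints (the two Raychaudhuri equations and the Maxwell ODE governing the charge) while steering the solution from the neutral seed data to Reissner--Nordstr\"{o}m-like horizon data, preserving asymptotic flatness and completeness of null infinity.

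The hard part is carrying out this gluing while maintaining every sign/structural condition needed downstream: the correct monotonicity of $r$ in the chosen gauge, the absence of premature extremality or of trapped surfaces in the wrong place, the right sign of the mass and charge increments, and enough residual freedom to match onto the subextremal asymptotic regime that feeds \cite{bif}. I expect that controlling the collar — and, crucially, showing the construction succeeds for an \emph{arbitrary} uncharged seed rather than a carefully tuned one — absorbs the bulk of the effort; once the event-horizon data are in place, the decay toward $i^+$, the application of \cite{bif}, and the verification that the terminal boundary is precisely $\CH\cup\Sc$ proceed along the same lines as in the two-ended case.
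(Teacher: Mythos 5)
The statement you are addressing is a \emph{conjecture}, not a theorem: the paper does not prove it and explicitly leaves it open. What the paper actually does --- and what you correctly propose as the realistic goal --- is to construct the first examples realizing the conjectured terminal-boundary structure $\CH\cup\mathcal{S}$. At that level your strategy coincides with the paper's: invoke the local result of \cite{bif} (Theorem~\ref{thm.I}) as a black box, build global one- and two-ended asymptotically flat solutions whose interior data satisfy its hypotheses, and rule out extra boundary components via the a priori classification of \cite{Kommemi} together with the monotonicity of $r$ and of the Hawking mass. The one-ended spacelike-characteristic gluing of an arbitrary uncharged seed to a charged dynamical event horizon is indeed the paper's main new ingredient (Theorems~\ref{construction.thm} and~\ref{gluing.thm.intro}).

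However, two essential mechanisms are missing from your plan, and without them the construction fails. First, in the two-ended case, evolving Price-law event-horizon data forward does \emph{not} yield $\mathcal{S}\neq\emptyset$: by Dafermos's theorem (Theorem~\ref{Dafermos.thm}) an open set of such data produces a spacetime whose terminal boundary is entirely the Cauchy horizon, with no spacelike piece, so your step ``\cite{bif} closes the argument'' has nothing to close. The paper must perform a large-data surgery deep inside the black hole, using the quantitative fact that the trapped region under a mass-inflating $\CH$ does not shrink toward its endpoint (Theorem~\ref{breakdown.thm.new}, i.e.\ Proposition~\ref{apriori.prop1}) to force $r\to 0$ on an ingoing cone and hence $\mathcal{S}\neq\emptyset$. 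Second, in the one-ended case the hypotheses \eqref{decay.intro} of Theorem~\ref{thm.I} include a pointwise \emph{lower} bound on $|D_v\phi|$ and a smallness condition on $|\Im(\bar{\phi}D_v\phi)|$ on a cone strictly inside the black hole; for a charged scalar field, generic event-horizon data produce an oscillatory $D_v\phi$ near $\CH$, which violates these. The paper resolves this by prescribing the tangential horizon data to oscillate at the resonant frequency $-\omer$ and by developing a refined nonlinear scattering theory (Theorem~\ref{scat.thm.intro}/Theorem~\ref{nonlinearscat.thm}) to propagate the required non-oscillatory asymptotics from $\mathcal{H}^+$ to the interior cone. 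Your remark that once the horizon data are in place ``the decay toward $i^+$ \dots\ proceed[s] along the same lines as in the two-ended case'' hides precisely this obstruction, which is why the paper's gluing theorem is designed to allow free prescription of the tangential event-horizon profile in the first place.
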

	
	\noindent Yet, despite decades of progress, no black hole solution exhibiting the singularity structure of Conjecture~\ref{spacelike.conj} has ever been constructed. This paper provides the first such construction, which consists of  	spherically symmetric (one or two-ended) asymptotically flat black hole    solutions of the Einstein–Maxwell-charged-scalar-field system:
	\begin{equation} \label{1.1}   Ric_{\mu \nu}(g)- \frac{1}{2}R(g)g_{\mu \nu}= \mathbb{T}^{EM}_{\mu \nu}+  \mathbb{T}^{CSF}_{\mu \nu} ,    \end{equation} 
	\begin{equation} \label{2.1} \mathbb{T}^{EM}_{\mu \nu}=2\left(g^{\alpha \beta}F _{\alpha \nu}F_{\beta \mu }-\frac{1}{4}F^{\alpha \beta}F_{\alpha \beta}g_{\mu \nu}\right),\ \hskip 1 mm \mathbb{T}^{CSF}_{\mu \nu}= 2\left( \Re(D _{\mu}\phi \overline{D _{\nu}\phi})\ -\frac{1}{2}(g^{\alpha \beta} D _{\alpha}\phi \overline{D _{\beta}\phi}  )g_{\mu \nu} \right), \end{equation} \begin{equation} \label{4.1} \nabla^{\mu} F_{\mu \nu}= \frac{ q_{0} }{2}i (\phi \overline{D_{\nu}\phi} -\overline{\phi} D_{\nu}\phi),\ \hskip 1 mm F=dA,\ \hskip 1 mm D_{\mu} = \nabla_{\mu} + i q_0 A_{\mu},
	\end{equation} \begin{equation} \label{5.1} g^{\mu \nu} D_{\mu} D_{\nu}\phi = 0 .	\end{equation} These black holes have a  terminal boundary consisting of a weakly singular null Cauchy horizon ($\CH$) and a strong  singularity ($\mathcal{S}=\{r=0\}$) as depicted in Figure~\ref{fig:spacelikeconj}.
	Our construction offers a new and arguably more realistic global model of gravitational collapse, providing a contrast to the  Oppenheimer-Snyder scenario.

	\begin{figure}[H] \begin{center}\includegraphics[width=79 mm, height=45 mm]{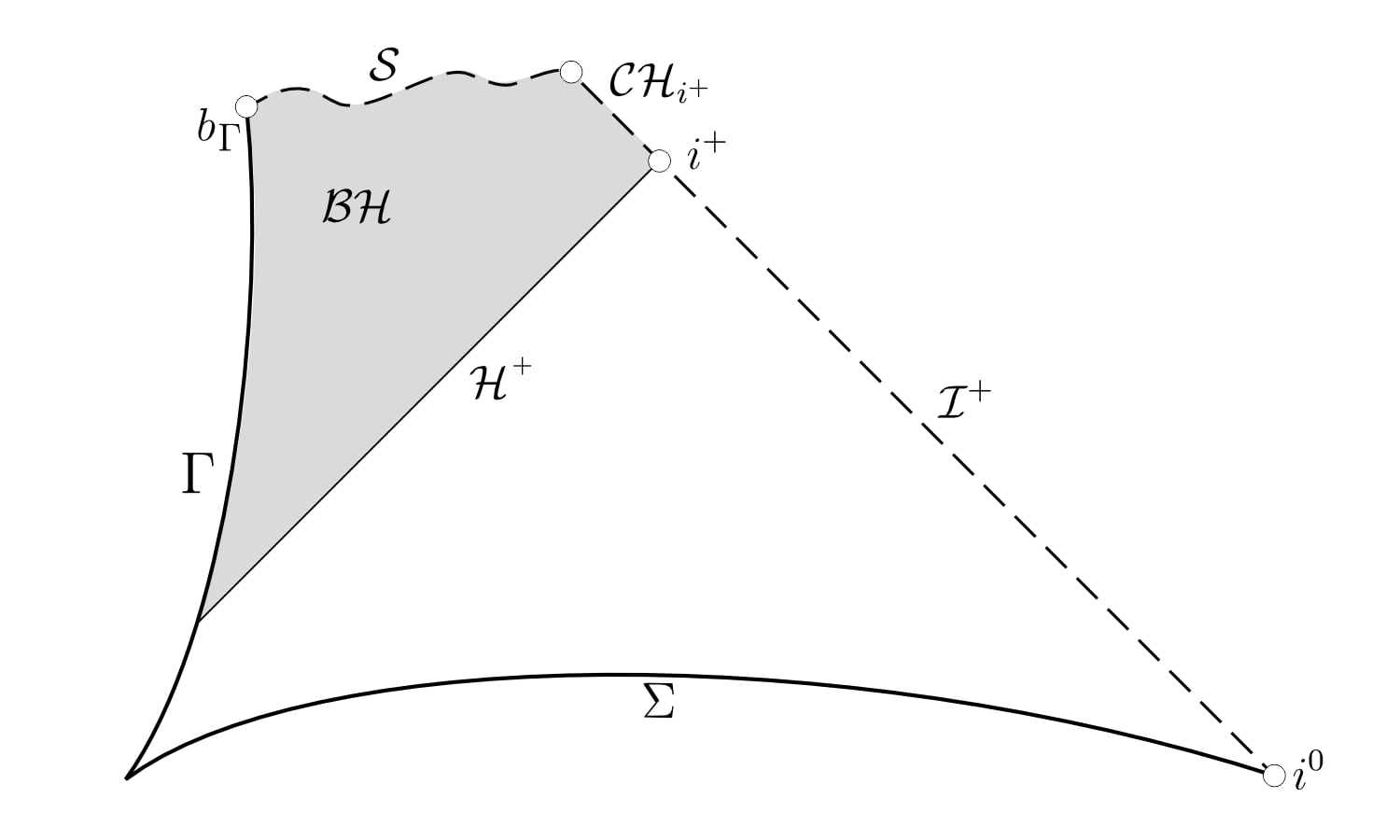}				
			\caption{Penrose diagram of the gravitational collapse (=one-ended) spacetimes obtained in Theorem~\ref{thm.II}. $\CH$ is a weakly singular Cauchy horizon and $\mathcal{S}=\{r=0\}$ is a strong singularity, spacelike near $\CH\cap\mathcal{S}$ and  $b_{\Gamma}$.}
			\label{fig:spacelikeconj}	\end{center}\end{figure}

	\vspace*{-1.5em} 
	In a broader context, the goal of the present paper is to apply the local analysis of our recent work \cite{bif} to a global, asymptotically flat setting, and obtain  black hole solutions of \eqref{1.1}--\eqref{5.1} possessing both a spacelike singularity and a weakly singular null Cauchy horizon whose transition is described by the analysis in \cite{bif}. 
	
	We recall the main result in \cite{bif} that precisely describes the  solution of \eqref{1.1}--\eqref{5.1} in this local region.
	\begin{theo} \label{thm.I}[Theorem I. in \cite{bif}]. Consider  local initial data in the interior of a black hole  consisting of an ingoing cone $\Cin$ and an outgoing cone $C_{out}$ terminating at the sphere of a weakly singular Cauchy horizon $\CH$, and denote  $\mathcal{B}$ the terminal boundary of the resulting solution of \eqref{1.1}--\eqref{5.1}.  Assume   a Cauchy horizon breakdown with no locally-naked singularity, i.e., $ \CH = \{v=+\infty\} \underset{\neq}{\subset} \mathcal{B}  $, 
		and  there exists $s>1$ such that the following  hold: \begin{equation}\label{decay.intro}
			v^{-s} \lesssim	|D_v \phi|_{|C_{out}}(v) \lesssim v^{-s},\  |\Im(\bar{\phi} D_v \phi)|_{|C_{out}}(v) \ll v^{-s},\  	|D^2_{vv} \phi|_{|C_{out}}(v) \lesssim v^{-s-1} \text{ as } v\rightarrow +\infty.
		\end{equation}		Then, $\mathcal{B}$ contains a  spacelike singularity $\mathcal{S} \neq \emptyset$ intersecting $\CH$ as depicted in Figure~\ref{fig:local} and the metric near $\CH \cap \mathcal{S}$ is approximated by a Kasner metric of $v$-dependent positive Kasner exponents  $(1-2p(u,v), p(u,v),p(u,v))$ \blue{which degenerate to $(1,0,0)$ at the following rate}
		
		\begin{equation}\label{p.est.intro}
			p(u,v) \approx \frac{1}{v} 	\text{ as } v \rightarrow +\infty.
		\end{equation} 
	\end{theo}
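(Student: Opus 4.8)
\noindent\emph{Strategy of proof.}

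The plan is to work in a double-null gauge $(u,v)$ adapted to the data, with $C_{out}=\{u=0\}$ and $\Cin=\{v=v_0\}$, in which the metric reads $g=-\Omega^2\,\mathrm{d}u\,\mathrm{d}v+r^2\,\mathrm{d}\sigma_{\mathbb{S}^2}$ and the Maxwell potential is $A=A_v\,\mathrm{d}v$. Writing $\lambda=\partial_v r$, $\nu=\partial_u r$, and defining the Hawking mass $m$ and renormalised charge $Q$ by $-4\Omega^{-2}\lambda\nu=1-\tfrac{2m}{r}+\tfrac{Q^2}{r^2}$, the system \eqref{1.1}--\eqref{5.1} reduces to the usual first-order transport system: the wave equations for $r$ and for $r\phi$, the two Raychaudhuri equations (which give the one-signed propagation of $\partial_v(\Omega^{-2}\lambda)$ and $\partial_u(\Omega^{-2}\nu)$, hence the persistence of the trapped region $\lambda<0$, $\nu<0$), and Maxwell's equations for $\partial_u Q$, $\partial_v Q$, controlled by the charge current $\Im(\bar\phi D\phi)$. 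The argument then splits into the region near $\CH$ where $r$ is bounded below and the neighbourhood of the corner $\CH\cap\mathcal{S}$ where $r\to 0$; the structure near the other endpoint $b_{\Gamma}$ of $\mathcal{S}$ plays no role here.

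First I would show that in $\{r\ge\delta\}$, to the future of $C_{out}\cup\Cin$, the solution extends continuously up to $\{v=+\infty\}$. This is a bootstrap in the spirit of the stability-of-the-Cauchy-horizon analysis, now run under the \emph{weak} decay \eqref{decay.intro}: the $v$-integrability of $|D_v\phi|$ and $|D^2_{vv}\phi|$, available because $s>1$, controls $\phi$, $\partial_v(r\phi)$ and $\log\Omega^2$, while the extra smallness $|\Im(\bar\phi D_v\phi)|\ll v^{-s}$ makes the charge flux small in $L^1_v$, so that $Q$ converges as $v\to+\infty$. Together with the Raychaudhuri monotonicities this produces a continuous, non-increasing limit $r_{\CH}(u):=\lim_{v\to+\infty}r(u,v)$, while the Hawking mass may blow up at $\{v=+\infty\}$ --- the mass inflation that makes $\CH$ weakly singular. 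Next I would prove that $\CH$ closes off: as $u$ increases along $\CH$, $r_{\CH}$ has a limit $r_\ast\ge 0$; if $r_\ast>0$, then $\CH=\{v=+\infty\}$ would be a complete achronal component of $\mathcal{B}$ with no interior point from which another boundary component could emanate, which --- since locally-naked singularities are excluded --- would force $\mathcal{B}=\CH$, contradicting $\CH\underset{\neq}{\subset}\mathcal{B}$. Hence $r_\ast=0$, so $\mathcal{S}=\{r=0\}\neq\emptyset$ and $\overline{\mathcal{S}}$ contains the corner $\CH\cap\mathcal{S}$; the two-sided bound $v^{-s}\ls|D_v\phi|$ is used to check the degeneration is genuine (the field does not become asymptotically trivial), so that $r$ actually attains $0$. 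Finally, since continuity and the monotonicities propagate $\lambda<0$, $\nu<0$ into a punctured neighbourhood of the corner, one has $g^{\mu\nu}\partial_\mu r\,\partial_\nu r=-4\Omega^{-2}\lambda\nu<0$ there, so $\nabla r$ is timelike and $\mathcal{S}=\{r=0\}$ is spacelike near $\CH\cap\mathcal{S}$, as in Figure~\ref{fig:local}.

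For the Kasner asymptotics I would introduce the spacelike time function $\tau:=r$ near the corner and foliate by $\{r=\tau\}$, writing $g=-N^2\,\mathrm{d}\tau^2+a^2\,\mathrm{d}x^2+\tau^2\,\mathrm{d}\sigma_{\mathbb{S}^2}$ with $N,a$ functions of $(\tau,x)$, $x$ a transverse coordinate. Projecting the wave equations for $r$ and for $\phi$ onto this foliation, and using $r\to 0$, the $\phi$-bounds from the previous step and the control on $Q$, one checks that the spatial terms ($\partial_u\partial_v$-type, angular Laplacian, Maxwell) are subleading as $\tau\to 0$ relative to the $\partial_\tau$-terms, so the leading dynamics is velocity-term-dominated, $\partial_\tau(\tau\,\partial_\tau\log a)\approx 0$ and $\partial_\tau(\tau\,\partial_\tau\phi)\approx 0$, with solutions $a\sim\tau^{1-2p(x)}$ and $\phi\sim\phi_\ast(x)+p_\phi(x)\log\tau$, while $r=\tau$ supplies the remaining unit of exponent, split equally between the two sphere directions. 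Passing to synchronous proper time $\hat\tau$ puts the metric in the Kasner form $-\mathrm{d}\hat\tau^2+\hat\tau^{2(1-2p)}\,\mathrm{d}x^2+\hat\tau^{2p}\,\mathrm{d}\sigma_{\mathbb{S}^2}$ to leading order; the first Kasner relation $(1-2p)+p+p=1$ holds automatically, and the Raychaudhuri (Hamiltonian) constraint pins down the scalar-field Kasner parameter, $p_\phi^2=2p+O(p^2)$ up to normalisation, so that $p$ and $p_\phi$ degenerate together and the exponents tend to $(1,0,0)$. To obtain the rate, $p=p(u,v)$ is identified by evaluating a renormalised scalar-field derivative on a spacelike slice $\{r=\delta\}$, transporting it toward the corner, and matching to \eqref{decay.intro}; combined with the relation between the Eddington-type coordinate $v$ and $\hat\tau$ near $\CH$ --- dictated by the exponential blue-shift behaviour of $\Omega^2$ --- and the logarithmic dependence of the Kasner exponent on $\hat\tau$, this yields the sharp rate $p(u,v)\approx v^{-1}$, establishing \eqref{p.est.intro}.

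The hard part is this last step, for two reasons. First, the velocity-term-dominated estimate must be proved \emph{uniformly up to the corner} $\CH\cap\mathcal{S}$, where the $\{r=\tau\}$-foliation degenerates onto the null hypersurface $\{v=+\infty\}$: one must match the Kasner regime to the null-Cauchy-horizon regime across a transition region where neither description is \emph{a priori} valid, which is the genuinely new analytic difficulty compared with both the pure Cauchy-horizon stability problem and the pure Kasner (spacelike singularity) problem. Second, extracting the \emph{sharp} rate $p\approx 1/v$ --- in particular, seeing that the decay exponent $s$ governing \eqref{decay.intro} drops out of the final answer --- requires a careful accounting of how the renormalised scalar-field energy is transported from $C_{out}$ through the mass-inflation region to the corner, with the polynomial-in-$v$ factors surviving and the exponential-in-$v$ factors cancelling.
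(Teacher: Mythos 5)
First, a caveat: Theorem~\ref{thm.I} is quoted from \cite{bif} and is not proved in the present manuscript; the paper only recalls its precise form (Theorem~\ref{main.thm}) and two ingredients of its proof (Propositions~\ref{apriori.prop1} and \ref{localbreak.prop}/\ref{apriori.prop2}). Your three-stage architecture (stability up to $\{v=+\infty\}$ where $r$ is bounded below; closing-off of $\CH$; AVTD/Kasner analysis at the corner) matches the known structure at a high level, but two steps as written would not go through.

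The first gap is in your deduction ``$r_\ast=0$, hence $\mathcal{S}\neq\emptyset$ intersects $\CH$''. Having $r\to 0$ at the endpoint of $\CH$ is compatible with the terminal boundary continuing as an ingoing \emph{null} collapsed cone $\mathcal{S}_{i^+}$ on which $r\equiv 0$ (a configuration explicitly allowed by the a priori classification, Theorem~\ref{Kommemi.thm}), in which case there is no spacelike piece adjacent to $\CH$ and Figure~\ref{fig:local} fails. What must actually be shown is that $v_{\mathcal{S}}(u)<+\infty$ for every $u>\uch$; in the paper this is the content of Proposition~\ref{apriori.prop2}, and it is not soft: it uses the quantitative estimate \eqref{nu.est.prelim} ($\bigl|\,r|\rd_u r|(u,v)-r|\rd_u r|_{CH}(u)\bigr|\lesssim e^{1.9K_-v}$), a gauge normalization of $r\rd_u r$ along $\{v=+\infty\}$, and the integration of $-\rd_u\rd_v(r^2)$ over the putative null strip to reach a contradiction. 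Your topological dichotomy ($r_\ast>0$ forces $\mathcal{B}=\CH$) only rules out the non-breakdown scenario, which is anyway excluded by hypothesis; it says nothing about $\mathcal{S}_{i^+}$. Relatedly, the spacelike character of $\mathcal{S}$ near the corner is not a consequence of ``$\nabla r$ timelike in the trapped region'' alone (that only concerns level sets $\{r=\epsilon\}$ with $\epsilon>0$); it requires the quantitative graph estimates $v_{\mathcal{S}}(u)\approx(u-\uch)^{-1/(2s-1)}$, $v_{\mathcal{S}}'(u)\approx-(u-\uch)^{-2s/(2s-1)}$ of \eqref{K1}.

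The second problem is the foliation $\tau:=r$. With that choice the angular part of the metric is exactly $\tau^2\,d\sigma_{\mathbb{S}^2}$, i.e.\ the sphere directions carry Kasner exponent $1$ and the constraint $p_1+p_2+p_3=1$ would force $p_1=-1$ — the Schwarzschild-type, tidally expanding scenario, which is the opposite of the conclusion. The correct time function (cf.\ \eqref{K4}) is $\tau=r^{1/p(u,v)}$ with $p\approx v^{-1}$, so synchronous time is an enormous, $v$-dependent power of $r$; the degeneration of the exponents to $(1,0,0)$ lives precisely in this reparametrization, driven by the blow-up $|p_\phi|(v)\approx v^{1/2}$ of the renormalized transversal scalar-field derivative (mass inflation) together with $r_0(v)=r(\uch,v)\approx v^{\frac12-s}$ from \eqref{quant1}. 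Your ``pass to synchronous proper time at the end'' step therefore cannot be treated as a routine change of variables: the AVTD estimate you invoke is exactly the one known to require non-degenerating exponents, and you correctly flag this as the hard part but do not supply the mechanism that replaces it.
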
\begin{figure}[H]
		\begin{center}
			
			\includegraphics[width=64 mm, height=40 mm]{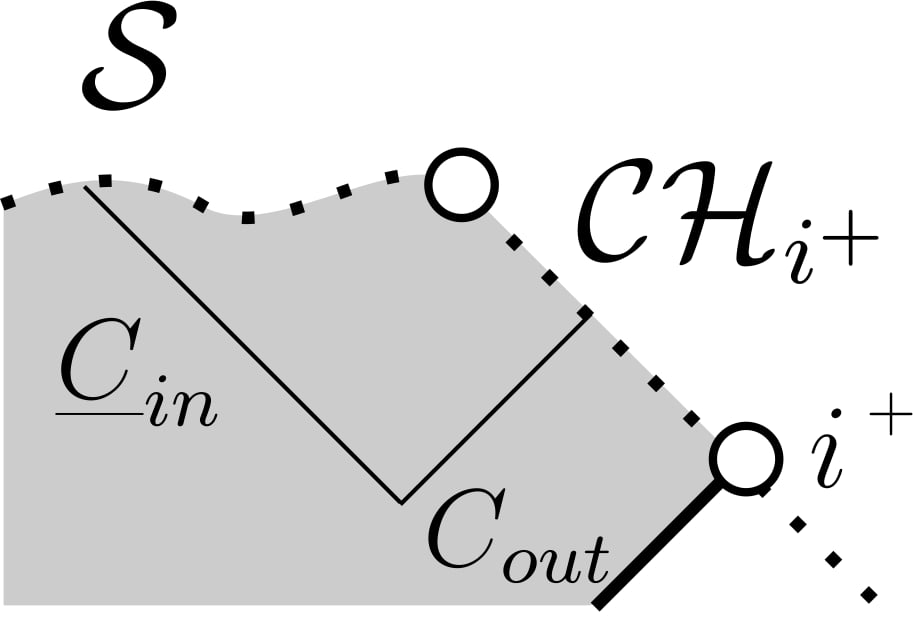}
			
		\end{center}
		\caption{Penrose diagram of the solution obtained in Theorem~\ref{thm.I} with bifurcate initial data $\Cin \cup C_{out}$.}
		\label{fig:local}

	\end{figure}	
	
	\paragraph{Examples of gravitational collapse black holes} We now present our main result for gravitational collapse: the construction of one-ended, asymptotically flat black holes realizing the spacelike/null singularity structure of Conjecture~\ref{spacelike.conj} while additionally analyzing the spacelike/null transition as an application of Theorem~\ref{thm.I}. While the scope of our rigorous analysis is restricted to the spherically symmetric model of equations \eqref{1.1}--\eqref{5.1}, we provide heuristic evidence in Section~\ref{collapse.section} that a generic rotating black hole will possess similar features. Theorem~\ref{thm.II} below is our main result, and a simplified version of Theorem~\ref{main.thm.global.ii} which can be found in Section~\ref{thm.section}.

	\begin{theo} \label{thm.II}
		\blue{Let $k \in \mathbb{N}$.}	There exists a large class of spherically symmetric one-ended asymptotically flat black hole \blue{$C^k$} solutions of \eqref{1.1}--\eqref{5.1} with $q_0\neq0$,  with  a regular center $\Gamma \neq \emptyset$ satisfying the following properties:
		
		\begin{itemize}

			\item The Penrose diagram  is given by Figure~\ref{fig:spacelikeconj}, namely the terminal boundary only has two non-empty components:    a weakly singular null Cauchy horizon $\CH  \neq~ \emptyset$, and  a crushing singularity $\mathcal{S}=\{r=0\}$.
			\item  	$\mathcal{S}$ is spacelike in a neighborhood of $\CH\cap \mathcal{S}$, and obeys the degenerating Kasner asymptotics of Theorem~\ref{thm.I}. 
			\item  $\mathcal{S}$ is spacelike in a neighborhood of $\Gamma$ and spatially-homogeneous, described by an asymptotically Kasner metric of exponents $(\frac{1}{3},\frac{1}{3},\frac{1}{3})$.
		\end{itemize}

	\end{theo}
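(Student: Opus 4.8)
The plan is to build the desired one-ended spacetimes in four regions, proceeding from the regular center outward in the exterior and then inward in the black-hole interior, and to arrange at each interface that the hypotheses of the previously established results (in particular Theorem~\ref{thm.I}) are met. First, I would fix an \emph{uncharged} spherically symmetric seed solution of the Einstein-scalar-field system with a regular center $\Gamma$ that collapses to form a black hole --- for instance a member of Christodoulou's collapsing family --- restricted to a neighborhood of $\Gamma$ and of the spacelike singularity emanating from it. Because the seed is uncharged and spherically symmetric, the singularity it forms near $\Gamma$ is (after the known BKL-type analysis) asymptotically Kasner with exponents $(\tfrac13,\tfrac13,\tfrac13)$, and the scalar field is mild there; this directly supplies the third bullet of the theorem.

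Second --- and this is the genuinely new analytic ingredient advertised in the abstract --- I would perform the \emph{spacelike--characteristic gluing} that attaches this uncharged seed to the event horizon $\HH$ of a charged dynamical black hole. Concretely, one prescribes data on a spacelike hypersurface $\Sigma$ crossing $\Gamma$ deep in the interior and on an outgoing characteristic cone $C_{out}$ reaching out to $\HH$, choosing the charge-to-mass parameters and the profile of $\phi$ along $C_{out}$ so that: (i) the solution is genuinely charged ($q_0\neq0$), hence has a non-degenerate Cauchy horizon $\CH\neq\emptyset$ by the standard Reissner--Nordström-type stability mechanism; and (ii) the scalar field along the late event horizon obeys an \emph{inverse-polynomial lower and upper bound} $v^{-s}\lesssim|D_v\phi|\lesssim v^{-s}$ with $s>1$, together with the smallness $|\Im(\bar\phi D_v\phi)|\ll v^{-s}$ and $|D^2_{vv}\phi|\lesssim v^{-s-1}$ of \eqref{decay.intro}. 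The lower bound is crucial and is the point where the construction must be done with care: one typically obtains it by choosing data that are \emph{not} conformally coupled / not compactly supported in a way that kills the Price-law tail, so that the sharp decay rate is saturated rather than merely bounded above.

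Third, I would propagate the solution from $\HH$ into the interior. In the region near $\HH$ the charged scalar field and the contraction of $r$ produce the usual blue-shift instability, so that $\CH$ is a \emph{weakly singular} null boundary on which curvature (e.g.\ the Hawking mass, or $|D_v\phi|$ in a suitable gauge) blows up but the metric extends continuously --- this is the mechanism already understood for the charged-scalar-field model, and it gives the first bullet and the weak-singularity part. The outgoing cone reaching $\CH$ now carries exactly the data \eqref{decay.intro} by construction in step two, so \emph{Theorem~\ref{thm.I} applies verbatim} on a bifurcate neighborhood of $\CH\cap\mathcal S$: it produces the spacelike singularity $\mathcal S=\{r=0\}$ meeting $\CH$, with the degenerating Kasner exponents $(1-2p,p,p)$ and $p\approx v^{-1}$, giving the second bullet.

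Fourth, I would glue together the local interior pictures --- the $(\tfrac13,\tfrac13,\tfrac13)$ Kasner region near $\Gamma$, the Cauchy-horizon region near $\HH$, and the bifurcation region of Theorem~\ref{thm.I} near $\CH\cap\mathcal S$ --- and run a global-in-$\{r=0\}$ continuity/compactness argument (of the type developed by Kommemi, and used in \cite{bif}) to show that no \emph{other} boundary component appears: in particular there is no null portion of $\mathcal S$ away from $\CH$, no locally naked singularity, and no first singularity at the center other than the spacelike one. This is essentially a soft argument given sharp one-sided bounds on $r$, $m$ and $\phi$ along the relevant cones, but it must be checked that the charge parameters chosen in step two are compatible with the Penrose diagram of Figure~\ref{fig:spacelikeconj} (e.g.\ that the black hole is sub-extremal along $\HH$ in the limit, so $\CH$ is non-empty, yet the interior still crushes to $r=0$).

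\textbf{Main obstacle.} The crux is step two: simultaneously achieving a \emph{matching} inverse-polynomial upper \emph{and} lower bound on $D_v\phi$ along the event horizon --- the lower bound $v^{-s}\lesssim|D_v\phi|$ required by \eqref{decay.intro} --- while keeping $\Im(\bar\phi D_v\phi)$ subleading and controlling $D^2_{vv}\phi$. Upper bounds (Price-law decay) are classical, but saturating the rate from below for a \emph{charged} field coupled to its own backreacted geometry, across a spacelike-to-characteristic gluing, is delicate: one must track the charge $Q$ and the subleading expansion of $\phi$ carefully enough to exclude accidental cancellations, and the gluing must be arranged so that this tail is inherited from, rather than destroyed by, the uncharged seed data.
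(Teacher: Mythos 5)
There is a genuine gap, and it sits exactly where your construction needs to hand off data to Theorem~\ref{thm.I}. You assert that "the outgoing cone reaching $\CH$ now carries exactly the data \eqref{decay.intro} by construction in step two," but \eqref{decay.intro} must hold on a cone $C_{out}$ \emph{strictly inside} the black hole, under $\CH$, not on $\HH$ itself, and for a charged scalar field ($q_0\neq 0$) the passage from $\HH$ to such a cone is governed by a nontrivial scattering map, not by the monotonicity arguments available when $q_0=0$. Generically this map makes $D_v\phi$ \emph{oscillate} in $v$ near $\CH$, which destroys both the lower bound $v^{-s}\lesssim |D_v\phi|$ (at the zeros of the oscillation) and the smallness of $\Im(\bar\phi D_v\phi)$ — no matter how carefully you saturate the tail on $\HH$. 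The paper's resolution is to identify a resonant frequency $\omer(M,e)$ and to \emph{prescribe} $\phi_{|\HH}(v)=\Phi_H(v)e^{-iq_0\omer v}$ with $\Phi_H$ non-oscillatory; a refined nonlinear scattering theorem (Theorem~\ref{scat.thm.intro}/Theorem~\ref{nonlinearscat.thm}, building on \cite{MoiChristoph}) then shows $|D_v\phi|\approx|\Phi_H|$ near $\CH$ with the required phase control. Relatedly, you misplace the "main obstacle": since the event-horizon profile is freely prescribed by the gluing (there is no Price-law genericity to fight), the lower bound on $\HH$ is free; the delicate point is entirely the interior transfer.

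Two further points would make your argument fail as written. First, your seed near $\Gamma$ cannot be a Christodoulou collapsing solution: those are rough (BV/$C^0\cap W^{1,\infty}$) and obtained by perturbing naked singularities, and their spacelike singularity near $\Gamma$ has \emph{varying} Kasner exponents, not $(\tfrac13,\tfrac13,\tfrac13)$; the isotropic exponents in the third bullet come precisely from choosing a spatially homogeneous (FLRW) seed, as the paper does. Second, the gluing itself is not a single free prescription on $\Sigma\cup C_{out}$: it proceeds through an apparent-horizon sphere (with the obstruction $1<[R_A|\rd_v\phi|]^2<1+\tfrac{1}{k-1}$, and the impossibility of spatially gluing to a Schwarzschild apparent horizon), then characteristically to a Schwarzschild trapped sphere, then to a sub-extremal Reissner--Nordstr\"{o}m trapped sphere via a Borsuk--Ulam argument under a largeness condition $|q_0|M_f(1-q)/q\gg1$, and finally sideways to an asymptotically flat end using small charge. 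Your step four (exclusion of other boundary components) is in the right spirit and matches the paper's conditional Theorem~\ref{thm.cond.intro}.
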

	\begin{rmk}
		We already note that, to study spherically symmetric one-ended solutions of \eqref{1.1}--\eqref{5.1} with $F\neq 0$  (which is essential to have $\CH \neq \emptyset$ \cite{Christo1}), one must consider a charged scalar field, i.e., $q_0\neq 0$, see e.g. \cite{Kommemi,review}.
	\end{rmk}
	
	\paragraph{New spherically symmetric gluing results}

	Proving Theorem~\ref{thm.II} requires surmounting the challenge of constructing a \emph{global} asymptotically flat black hole solution,  while maintaining sufficiently precise control over the dynamics in order to satisfy the assumptions of Theorem~\ref{thm.I}  in the black hole interior. \\To achieve this, we introduce a novel gluing strategy allowing the construction of one-ended black holes with \emph{any spacelike singularity near the center} combined with \emph{any prescribed} event-horizon late-time tail  decaying at an inverse-polynomial rate. Theorem~\ref{construction.thm} below, a rough version of Theorem~\ref{EH.AF.thm}, describes this.

	\begin{thm}[Black hole  construction by gluing]\label{construction.thm}
		Let $k\in \mathbb{N}$ and		let $\mathcal{M}_L$ a one-ended spherically symmetric black hole \magenta{$C^k$} solution of \eqref{1.1}--\eqref{5.1} with $F\equiv 0$ and  $\phi_{H}(v)$ satisfying the decay assumptions, for some $s>\frac{3}{2}$ \begin{equation}
			|\phi_{H}|(v),\ |D_v\phi_{H}|(v) \lesssim [1+ |v|]^{-s} \text{ as }  v\rightarrow+\infty.
		\end{equation} 	Then, there exists $\mathcal{M}$  a one-ended asymptotically flat spherically symmetric black hole $C^k$ solution of \eqref{1.1}--\eqref{5.1} with $F\neq 0$, $q_0 \neq 0$ satisfying  the following properties:

		\begin{itemize}
			\item Near the center $\Gamma$, $\mathcal{M}$ coincides with $\mathcal{M}_L$.
			\item There exists $0<|e|<M$ such that the black hole relaxes to a Reissner--Nordstr\"{o}m solution of mass $M$ and charge $e$. Moreover, the metric and scalar field are $C^k$ regular across and on the event horizon $\mathcal{H}^+$.
			\item   In Eddington--Finkelstein  coordinate $v$, the scalar field coincides with $\phi_{H}(v)$ on $\mathcal{H}^+$, i.e.,  \begin{equation}
				\phi_{|\mathcal{H}^+}(v)=\phi_{H}(v).
			\end{equation} 
			
		\end{itemize}
	\end{thm}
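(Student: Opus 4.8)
\emph{Proof proposal.} The plan is to construct $\mathcal{M}$ region by region as the solution of characteristic and mixed (spacelike-characteristic) initial value problems for \eqref{1.1}--\eqref{5.1}, working throughout in the spherically symmetric double-null reduction: unknowns $r(u,v)>0$, $\Omega^2(u,v)>0$, the renormalized Maxwell charge $Q(u,v)$ and $\phi(u,v)$, with $g=-\Omega^2\,du\,dv+r^2\,d\sigma^2$. On a null cone the free datum is $\phi$ (together with corner values of $r,\Omega^2,Q$ and a parametrization gauge): the Raychaudhuri equations $\rd_u(\Omega^{-2}\rd_u r)=-r\,\Omega^{-2}|D_u\phi|^2$, $\rd_v(\Omega^{-2}\rd_v r)=-r\,\Omega^{-2}|D_v\phi|^2$ hold as constraints along the cones, the Maxwell equations $\rd_u Q\propto q_0 r^2\,\Im(\bar\phi\,D_u\phi)$, $\rd_v Q\propto q_0 r^2\,\Im(\bar\phi\,D_v\phi)$ reconstruct $Q$ from $\phi$ up to a constant, and the remaining equations (the wave equation for $r$, the equations for $\rd_u\varpi,\rd_v\varpi$ of the renormalized Hawking mass $\varpi$, and the charged wave equation for $\phi$) propagate the solution off the cones. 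In $\mathcal{M}_L$ one has $F\equiv0$, hence $Q\equiv0$ and vanishing charge current; the whole point is to switch on $Q$ without disturbing a neighborhood of $\Gamma$.

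I would keep $\mathcal{M}=\mathcal{M}_L$ on a neighborhood $\mathcal{U}_\Gamma$ of the timelike center $\Gamma$, containing $\Gamma$ up to its endpoint on the singularity, whose boundary away from $\Gamma$ comprises a spacelike piece $S_\ast$ (a slice $\{r=r_0\}$ inside the black hole, near $\Gamma$'s singular endpoint) and a characteristic piece $C_\ast$ (an outgoing cone in the exterior region near $\Gamma$) --- this is the ``spacelike-characteristic'' interface. Outside $S_\ast\cup C_\ast$ I prescribe data on the to-be event horizon $\HH=\{u=u_{\HH}\}$, setting $\phi_{|\HH}(v)=\phiH(v)$ and fixing $r,\Omega^2$ along $\HH$ from the $v$-Raychaudhuri constraint and the asymptotic condition $\rd_v r\to0$, the integration constant in $Q_{|\HH}$ being left free. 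The remaining freedom lies in a \emph{transition region} interpolating between the $\mathcal{M}_L$-data on $S_\ast\cup C_\ast$ and the $\phiH$-data on $\HH$; this is the sole locus where charge is created. There I would deform $\phi$ to a genuinely complex profile with $\Im(\bar\phi\,D\phi)$ of a fixed sign, supported in a compact parameter window where $r\sim1$, and scale the deformation so that integrating the Maxwell constraint produces a prescribed increment $\Delta Q=e$; the integration constant in $Q_{|\HH}$ is then fixed so that $Q\to e$ on $\HH$. The deformation must be $C^k$-compatible with the $\mathcal{M}_L$-data on $S_\ast\cup C_\ast$ and, on its far side, $C^k$-compatible with the $\phiH$-data on $\HH$ --- a two-sided $C^k$ interpolation --- and it must keep $\varpi$ inside the sub-extremal window and make it converge, which I would control via the monotonicities $\rd_u\varpi\ge0$, $\rd_v\varpi\ge0$ in the relevant regions and the mass law, thereby also preventing spurious trapped surfaces or extra boundary components from forming there. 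With such data, local well-posedness of the (spacelike-)characteristic initial value problem for \eqref{1.1}--\eqref{5.1} together with continuation criteria produces the spacetime near the interface, and $C^k$ regularity across $\HH$ follows by propagating the compatibility of the data.

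It then remains to build the exterior and identify the limiting black hole. I would solve the exterior characteristic problem with data on $\HH$ and on an outgoing cone terminating at $\mathcal{I}^+$ (or argue backward) and show the exterior is asymptotically flat; the hypothesis $s>\tfrac32$ on $\phiH$, which makes the flux through $\HH$ decay with room to spare, is what bounds the growth of $r$ and of $\varpi$ towards $\mathcal{I}^+$. That $\mathcal{M}$ relaxes to a Reissner--Nordstr\"{o}m solution of parameters $(M,e)$ with $0<|e|<M$ follows from Price-law-type decay estimates driven by $|\phiH|(v),|D_v\phiH|(v)\ls[1+|v|]^{-s}$ --- in the spirit of the decay theory for the charged spherically symmetric model --- together with the mass and charge laws, which force $Q_{|\HH}\to e$, $\varpi_{|\HH}\to M$ and $r_{|\HH}\to r_+(M,e)$, sub-extremality being precisely the tuning of the transition region.

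The step I expect to be the main obstacle is the design of the transition region: the deformation of $\phi$ there must at once achieve (a) the two-sided $C^k$-matching, (b) a Maxwell flux integrating to exactly a sub-extremal charge $e$, (c) a value of $\varpi$ that stays in the window $|e|<\varpi<\infty$ and converges --- so that neither sub-extremality nor asymptotic flatness is spoiled --- and (d) a constraint/evolution system that stays regular, with $r>0$ and no spurious trapped surfaces. These demands are coupled nonlinearly through the Raychaudhuri and mass equations; reconciling them hinges on localizing the deformation where $r\sim1$ and on tracking carefully the relevant monotonicities (the sign of the Raychaudhuri right-hand side and the monotonicity of $\varpi$).
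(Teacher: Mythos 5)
Your overall architecture --- retain $\mathcal{M}_L$ near $\Gamma$, interpolate across a spacelike--characteristic interface, create charge via $\Im(\bar\phi\,D\phi)$, prescribe $\phi_H$ on $\mathcal{H}^+$, and build the exterior by sideways/backwards evolution using $s>\tfrac32$ --- matches the paper's in outline, but the two steps you dismiss as ``two-sided $C^k$ interpolation'' are exactly where the construction can fail, and they are where the paper's new lemmas live. For the passage from the regular region into the trapped region: along a spacelike hypersurface the constraint for $\varpi$ is a linear ODE whose integrating factor degenerates at the marginally trapped sphere, and one finds $1-\tfrac{2\varpi}{r}\sim c\,(\rho_2-\rho)$ with the correct sign and $C^k$ regularity only if $\alpha=\theta_2^2/(2M_2\mathcal{L}_1)>k$, where $\theta_2=r\,\rd_v\phi$ is the \emph{tangential} scalar-field derivative at the final sphere. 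This forces the condition \eqref{gluing.condition.intro}, i.e.\ $1<[r\rd_v\phi]^2<1+\tfrac{1}{k-1}$ there, and in particular makes it \emph{impossible} to spatially glue a regular uncharged sphere directly to an exact Schwarzschild or Reissner--Nordstr\"om apparent-horizon sphere (Remark~\ref{impossible.remark}). Your interface therefore cannot reach vacuum data at the apparent horizon; the paper instead glues spatially to a non-vacuum apparent-horizon sphere with tuned $\theta_2$ (Proposition~\ref{spacelike.gluing.thm}) and only then glues characteristically along an outgoing cone to a Schwarzschild trapped sphere (Proposition~\ref{uncharged.null.gluing.prop}), exploiting the residual freedom in the transverse data at the apparent horizon.

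For the charging step, the transverse jets $(\rd_u\phi,\dots,\rd_u^k\phi)$ at the far end of a characteristic gluing are not freely prescribable: they are propagated by transport equations from the tangential data, so one cannot ``interpolate'' them to the Reissner--Nordstr\"om values (zero). Achieving $(\rd_u\phi,\dots,\rd_u^k\phi)=(0,\dots,0)$ at the final sphere while simultaneously integrating the Maxwell constraint to $\Delta Q=e$ is the genuine obstruction; the paper resolves it with the oscillatory ansatz $\phi_\alpha=\bigl[\sum_j\alpha_j\chi_j\bigr]e^{-iv/\Delta v}$ and the Borsuk--Ulam theorem (following Kehle--Unger), under the quantitative restriction $|q_0|M_f(1-q)/q\gg1$, i.e.\ small charge-to-mass ratio; the same smallness reappears in the exterior, where the $r^p$ hierarchy closes only for $p>-1+\sqrt{1+4|q_0e|^2}$. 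Relatedly, your appeal to monotonicity of $\varpi$ to keep the data sub-extremal is unavailable once $Q\neq0$: the mass laws \eqref{massUEinstein}--\eqref{massVEinstein} contain the signless terms $\pm q_0Qr\,\Im(\phi\overline{D\phi})$, so the sub-extremality window must be tracked by explicit estimates (as in the proof of Theorem~\ref{charged.gluing.thm}) rather than by a sign argument.
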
 
	\noindent	We then obtain Theorem~\ref{thm.II} as an application  of Theorem~\ref{construction.thm} where  $\mathcal{M}_L$ is a FLRW metric featuring a spacelike singularity, which, by design, is spatially-homogeneous \blue{with time-slices diffeomorphic to $\RR^3$} (see Section~\ref{gluing.intro}).\begin{rmk}\label{loc.naked.remark}
		One can also apply Theorem~\ref{construction.thm} where  $\mathcal{M}_L$  is an uncharged naked  singularity, as constructed by Christodoulou in \cite{Christo.existence}. The result is a black hole spacetime with a \blue{(charged)} locally naked singularity \blue{$\mathcal{CH}_{\Gamma}$} (see Section~\ref{qual.section}) \blue{and, a consequence from \cite{Moi}, this spacetime also has  a Cauchy horizon $\CH\neq \emptyset$  (see  Theorem~\ref{CH.thm.SS}).}
	\end{rmk}
	
	It turns out that Theorem~\ref{construction.thm} emerges as the outcome of a more general gluing strategy within spherical symmetry.  The main novelty of our approach is to exploit spherical symmetry and combine methods from spacelike and characteristic gluing to operate deep inside the trapped region of the black hole—a regime inaccessible to previous perturbative gluing methods, see Section~\ref{gluing.intro}.
	
	Theorem~\ref{construction.thm} will also be applied to construct analogues of the Oppenheimer--Snyder spacetime for the Einstein-scalar-field system \eqref{1.1}--\eqref{5.1}, both in the uncharged case with a Schwarzschild exterior (Figure~\ref{fig:uncharged}) and  in the charged case with a  Reissner--Nordstr\"{o}m exterior (Figure~\ref{fig:charged}), see \blue{already} Theorem~\ref{OS.thm.intro}. It should  be noted that while these Oppenheimer--Snyder analogues are of great historical interest, they are\blue{, however,} expected to be \emph{non-generic} under Conjecture~\ref{spacelike.conj} (see the discussion following Theorem~\ref{OS.thm.intro}).
	
	Finally, we remark 	that Theorem~\ref{construction.thm} is not yet sufficient to apply  Theorem~\ref{thm.I}. Our gluing method indeed allows us to \emph{construct a global spacetime} with the desired large-scale properties, but we must still \emph{verify} that this construction satisfies the precise local dynamics required by Theorem~\ref{thm.I}, specifically the assumptions \eqref{decay.intro}  on an  outgoing cone strictly to the future of the event horizon. To propagate estimates between the event horizon and the cone on which \eqref{decay.intro} should hold, we use (a slightly refined version of) the scattering theory developed by the author and Kehle  \cite{MoiChristoph} for \eqref{1.1}--\eqref{5.1} with $q_0\neq 0$. This will also be discussed in   Section~\ref{CH.section}.

	\paragraph{Examples of two-ended black holes} We now turn to global applications of Theorem~\ref{thm.I} for  \eqref{1.1}--\eqref{5.1} with $q_0= 0$ (uncharged scalar field). In this case, a spherically symmetric solution of \eqref{1.1}--\eqref{5.1} with $F\not \equiv 0$ is necessarily two-ended (see e.g.\ \cite{Kommemi}). In the two-ended case, however, there exist stable spherically symmetric solutions of \eqref{1.1}--\eqref{5.1} with no spacelike singularity as showed by Dafermos \cite{nospacelike} (left-most Penrose diagram in Figure~\ref{fig:nospacelike}). On the other hand, our result in the two-ended case only applies to  solutions in which there exists a non-empty singularity $\mathcal{S}=\{r=0\}$  (right-most Penrose diagram in Figure~\ref{fig:nospacelike}).

	\begin{theo} \label{thm.III}
		\blue{Let $k\in \mathbb{N}$.}	There exists a large class of spherically symmetric two-ended asymptotically flat black hole \blue{$C^k$} solutions of \eqref{1.1}--\eqref{5.1} for $q_0=0$   satisfying the following properties:
		
		\begin{itemize}

			\item The Penrose diagram  is given by the right panel of  Figure~\ref{fig:nospacelike}, namely the terminal boundary only has two non-empty components:    a weakly singular null Cauchy horizon $\CH  \neq~ \emptyset$, and  a crushing singularity $\mathcal{S}=\{r=0\}$. 
			\item  	$\mathcal{S}$ is spacelike in a neighborhood of $\CH\cap \mathcal{S}$, and obeys the Kasner asymptotics of Theorem~\ref{thm.I}. 
			
		\end{itemize}

	\end{theo}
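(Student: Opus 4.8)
The plan is to run a construction parallel to the one behind Theorem~\ref{thm.II}, but exploiting the considerably better-understood late-time tail theory available for $q_0=0$ in place of the charged scattering theory. Starting from suitable two-ended asymptotically flat Cauchy data, I would first show that the maximal development possesses two complete exterior regions, each relaxing to a fixed subextremal Reissner--Nordström exterior with Price-law tails on the event horizons, and then reduce the two stated properties to an application of Theorem~\ref{thm.I}. The heart of the reduction is to locate, inside the black hole and to the future of one event horizon $\mathcal{H}^+$, an outgoing cone $C_{out}$ terminating on the Cauchy horizon $\CH$ on which the decay assumptions \eqref{decay.intro} hold; Theorem~\ref{thm.I} then yields a nonempty singularity $\mathcal{S}$ intersecting $\CH$, spacelike near $\CH\cap\mathcal{S}$, with the degenerate Kasner asymptotics \eqref{p.est.intro}. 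One convenient simplification, available precisely because $q_0=0$: I would take $\phi$ real-valued on the initial data, so that $\phi$ remains real throughout the evolution and the current term $\Im(\overline{\phi} D_v\phi)=\Im(\phi\,\partial_v\phi)$ vanishes identically, making the middle condition in \eqref{decay.intro} automatic.

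To produce a genuinely large class of such solutions which in addition have $\mathcal{S}=\{r=0\}\neq\emptyset$ --- as opposed to the small-perturbation regime of \cite{nospacelike}, where $\CH$ closes up and $\mathcal{S}=\emptyset$ --- I would install a collapsing core ``by hand'' via the gluing machinery, exactly as in the proof of Theorem~\ref{construction.thm}: using Theorem~\ref{gluing.thm.intro} (together with its characteristic counterpart) one glues a spatially-homogeneous collapsing region, whose big-crunch $\{r=0\}$ persists under evolution and therefore forces $\mathcal{S}\neq\emptyset$ (hence $\CH\underset{\neq}{\subset}\mathcal{B}$), to exact Reissner--Nordström trapped surfaces on both sides, and completes the result to two-ended asymptotically flat data, adding a small generic scalar perturbation on the exterior part of a Cauchy slice so that the event-horizon tails are sharp with nonvanishing leading coefficient. (Alternatively, one could start from a sufficiently large spherically symmetric perturbation of two-ended subextremal Reissner--Nordström data, for which the terminal $\{r=0\}$ component is expected to be nonempty.)

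Next I would run the quantitative analysis. On each event horizon, the sharp late-time asymptotics of Dafermos--Rodnianski \cite{PriceLaw}, Luk--Oh \cite{JonathanStabExt,twotails} and Gautam \cite{Gautam} give, for generic data, two-sided bounds $v^{-s}\lesssim |D_v\phi|(v)\lesssim v^{-s}$ on $\mathcal{H}^+$ for some $s>1$ (generically $s=3$), together with $|D^2_{vv}\phi|(v)\lesssim v^{-s-1}$. Near the bifurcation spheres of $\CH$, the Cauchy-horizon stability theory (Dafermos; Luk--Oh) shows that $\CH\neq\emptyset$ with $r$ bounded below there, so no locally naked singularity emanates from $\CH$ itself; and the generic lower bounds force mass inflation, hence curvature blowup at the (still $C^0$-extendible, i.e.\ weakly singular) $\CH$. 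It then remains to transport the event-horizon asymptotics to an outgoing cone $C_{out}=\{u=u_0\}$ with $u_0$ slightly to the future of $\mathcal{H}^+$: for $u_0$ close enough to $\mathcal{H}^+$ the cone $\{u=u_0\}$ reaches $\CH$ as $v\to+\infty$ (again by Cauchy-horizon stability), and integrating the wave equation at fixed $v$ over the bounded $u$-interval separating $\mathcal{H}^+$ from $C_{out}$ --- where the geometry stays uniformly close to the Reissner--Nordström interior, using a version of the interior estimates of Luk--Oh adapted to $q_0=0$ --- propagates $v^{-s}\lesssim|D_v\phi|\lesssim v^{-s}$ and $|D^2_{vv}\phi|\lesssim v^{-s-1}$ to $C_{out}$, i.e.\ verifies \eqref{decay.intro}. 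Theorem~\ref{thm.I} then applies and gives the Kasner asymptotics near $\CH\cap\mathcal{S}$. Finally, Kommemi's structure theorem \cite{Kommemi} for the terminal boundary of two-ended spherically symmetric black holes, combined with the facts just established (no locally naked singularity, $\CH\neq\emptyset$ weakly singular, $\mathcal{S}=\{r=0\}\neq\emptyset$ spacelike near $\CH\cap\mathcal{S}$), leaves exactly the Penrose diagram of the right panel of Figure~\ref{fig:nospacelike}.

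I expect the main obstacle to be this last propagation step: \eqref{decay.intro} requires the \emph{sharp two-sided} rate $|D_v\phi|\approx v^{-s}$ and the sharp $v^{-s-1}$ bound on $D^2_{vv}\phi$ on $C_{out}$, not merely upper bounds, so one must carry the full leading-order asymptotic expansion --- not just a decay estimate --- from $\mathcal{H}^+$ through the early interior and rule out destructive cancellation there. A secondary technical point is to make the ``large class'' precise and to verify that the gluing construction (or the large-data alternative) is simultaneously compatible with genericity of the event-horizon tails and with $\mathcal{S}\neq\emptyset$.
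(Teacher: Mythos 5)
Your conditional skeleton is essentially the paper's: take data in the generic class of Luk--Oh, use Gautam's sharp tail $\rd_v\phi_{|\mathcal{H}^+}(v)=Dv^{-4}+o(v^{-4})$ with $D\neq 0$ and $|\rd_{vv}^2\phi|\lesssim v^{-5}$, note that real-valuedness of $\phi$ makes the imaginary-part condition in \eqref{decay.intro} vacuous, propagate the two-sided bound and the second-derivative bound to an interior outgoing cone, and then invoke Theorem~\ref{thm.I} together with the structure theorem. You correctly flag the two-sided propagation as the delicate point; the paper's Lemma~\ref{propagation.lemma} handles the lower bound by Dafermos's monotonicity for the uncharged equation (integrating \eqref{Field2}--\eqref{Field3} and exploiting the sign of $-\lambda/r$ outside the red-shift region), and the $v^{-s-1}$ bound on $\rd_{vv}^2\phi$ by comparison with the linear solution on exact Reissner--Nordstr\"om commuted with the Killing field $T$. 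So that part of your plan is sound and matches the paper.

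The genuine gap is in your mechanism for producing examples with $\mathcal{S}\neq\emptyset$ while retaining generic event-horizon tails. Gluing in a spatially-homogeneous collapsing core does not work here. First, the FLRW region of Proposition~\ref{FRLW.prop} has $\mathbb{R}^3$ spatial topology with a regular center; a two-ended spacetime has no center, so there is no place to insert it. Second, the charging step of the gluing machinery (Theorem~\ref{charged.gluing.thm}) generates $Q$ through $\rd_vQ=q_0r^2\Im(\bar\phi\,\rd_v\phi)$ and therefore requires $q_0\neq 0$; with $q_0=0$ and real $\phi$ the charge is a fixed constant $e\neq 0$ throughout, so an uncharged core cannot be joined to the charged exterior needed for $\CH\neq\emptyset$. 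Third — and this is the point you correctly sense as a tension but dismiss as "secondary" — any backwards/gluing construction prescribes the event-horizon data and thus forfeits exactly the genericity of the tails on which your quantitative argument rests (this is why the one-ended construction needs the ad hoc scattering profile of Theorem~\ref{scat.thm.intro}). Your parenthetical alternative ("large perturbations are expected to have $\mathcal{S}\neq\emptyset$") is an expectation, not an argument. The paper closes this gap by a surgery performed entirely inside the trapped region: Proposition~\ref{apriori.prop1} (i.e.\ Theorem~\ref{breakdown.thm.new}) gives the quantitative lower bound $r^2(u,v)\geq C_R^-v^{1-2s}$ on the rectangle under $\CH$, depending only on the data on one outgoing cone; one then re-poses data on a deep ingoing cone with a large scalar field (forced by the Raychaudhuri equation \eqref{RaychU}) so that $r^2$ at its endpoint violates this bound as in \eqref{r.small2}, which forces the Cauchy horizon to break down and hence $\mathcal{S}\neq\emptyset$ by Proposition~\ref{apriori.prop2} — all without altering the exterior regions or the event-horizon tails. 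Without this (or an equivalent) interior-only modification, your construction does not simultaneously achieve $\mathcal{S}\neq\emptyset$ and the hypotheses \eqref{decay.intro}.
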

	\begin{figure}[H]	\begin{center}
			\includegraphics[width=165 mm, height=50 mm]{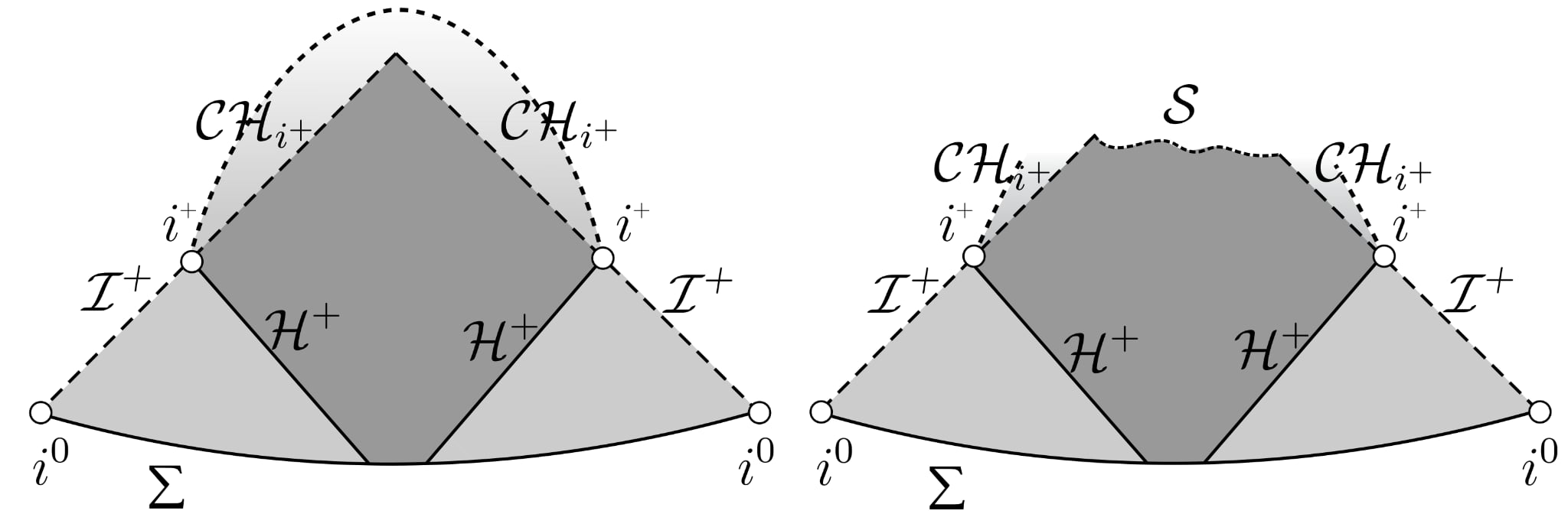}
		\end{center}
		\caption{Left: Two-ended black hole with a Cauchy horizon $\CH$ and no spacelike singularity. \\Right: Two-ended black hole with coexisting Cauchy horizon $\CH$ and   singularity $\mathcal{S}=\{r=0\}$ (Theorem~\ref{thm.III}).}\label{fig:nospacelike}\end{figure}Contrary to the charged case $q_0\neq 0$, the late-time tail theory of generic spherically symmetric solutions of \eqref{1.1}--\eqref{5.1} for $q_0=0$ is well-\blue{understood}, thanks to results by Dafermos--Rodnianski \cite{PriceLaw}, Luk--Oh \cite{JonathanStabExt,twotails} and most recently Gautam \cite{Gautam}. We leverage these results to  apply  the analysis of Theorem~\ref{thm.I} to the subclass of generic spherically symmetric two-ended solutions of \eqref{1.1}--\eqref{5.1} whose Penrose diagram is as in the right-most of Figure~\ref{fig:nospacelike}, and we construct the \emph{first  examples such that $\mathcal{S}\neq \emptyset$} for this model\footnote{The author, however, previously constructed two-ended black holes with a Cauchy horizon $\CH$ and a crushing singularity $\mathcal{S} =\{r=0\}$ for \eqref{1.1}--\eqref{5.1} with a massive scalar field \cite{violent} and with Li for a charged scalar field \cite{fluctuating}, i.e., \eqref{1.1}--\eqref{5.1} with $q_0 \neq 0$.}.
	Not only does Theorem~\ref{thm.III} construct the first black holes with the  right Penrose diagram  of  Figure~\ref{fig:nospacelike}, i.e., $\mathcal{S}\neq \emptyset$, but it also provides the first non-trivial quantitative estimates on the junction between the spacelike singularity $\mathcal{S}=\{r=0\}$ and the Cauchy horizon $\CH$ in a two-ended asymptotically flat black hole.

	\paragraph{Conditional global applications of Theorem~\ref{thm.I}}
	
	A secondary objective of the present manuscript is to find the most general  global conditions on a spherically symmetric black hole  so the  conclusions of Theorem~\ref{thm.I} (which is a local result) hold. In essence, we prove that this is the case as long as there exist one singular sphere with zero area-radius on the terminal boundary (or, equivalently, $\mathcal{S} \neq \emptyset$ in the language of Theorem~\ref{Kommemi.thm}). 
	
	\begin{theo}\label{thm.cond.intro}
		Consider a black hole spacetime with a Cauchy horizon $\CH \neq \emptyset$  and  assume that \eqref{decay.intro} holds on any outgoing cone $C_{out}$ under the Cauchy horizon $\CH$. Then, under the following respective conditions: \begin{enumerate} 
			\item (One-ended case). Assume the absence of a locally-naked singularity emanating from the center $\Gamma$.\label{IV.1}
			
			\item (Two-ended case). Assume that the spacetime is described by  the rightmost Penrose diagram of Figure~\ref{fig:nospacelike}, i.e., $\mathcal{S}\neq \emptyset$.\label{IV.2}
			
		\end{enumerate}
		
		Then, the  terminal boundary only has two components: $\CH$, which is weakly singular in the sense that the Hawking mass is infinite on $\CH$, and a crushing singularity $\mathcal{S}=\{r=0\}$, which is spacelike in a neighborhood of $\CH\cap \mathcal{S}$ and obeys the degenerating Kasner asymptotics of Theorem~\ref{thm.I}.
	\end{theo}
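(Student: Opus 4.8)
The plan is to reduce Theorem~\ref{thm.cond.intro} to a local statement near the terminal boundary by first establishing, via the global structure theory of Kommemi-type Penrose diagrams for \eqref{1.1}--\eqref{5.1} (invoked here as Theorem~\ref{Kommemi.thm}), that the terminal boundary $\mathcal{B}$ decomposes a priori into at most the pieces: a (possibly empty) null piece emanating from $\Gamma$ or from $i^+$ on each end, a null piece $\CH$, a spacelike/null portion $\mathcal{S}=\{r=0\}$, and a bifurcation sphere. The two hypotheses---absence of a locally-naked singularity at $\Gamma$ in the one-ended case, and $\mathcal{S}\neq\emptyset$ in the two-ended case---are precisely designed to rule out all the exotic components and leave only $\CH$ and $\mathcal{S}=\{r=0\}$. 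Concretely, in the one-ended case the no-locally-naked-singularity assumption forces every ingoing null ray from $\Gamma$ to have a regular endpoint on $\mathcal{S}$ or to reach $\CH$; combined with $\CH\neq\emptyset$ this closes off the top of the diagram. In the two-ended case one runs the analogous argument on each end, where the Cauchy horizon $\CH$ is known to be present by the assumption, and $\mathcal{S}\neq\emptyset$ excludes the Dafermos scenario (leftmost diagram of Figure~\ref{fig:nospacelike}).

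Next I would establish that $\CH$ is weakly singular in the stated sense, i.e., that the Hawking mass blows up along $\CH$. This should follow from the decay assumption \eqref{decay.intro}: the lower bound $v^{-s}\lesssim |D_v\phi|_{|C_{out}}(v)$ propagates (using the characteristic estimates already developed for \eqref{1.1}--\eqref{5.1} in the interior, together with the relaxed scattering theory referenced from \cite{MoiChristoph}) to a nonvanishing transversal derivative of $\phi$ at $\CH$, which via the Raychaudhuri equation and the renormalized Hawking mass equation forces $\frac{2m}{r}\to\infty$, i.e., $r_{,v}\to 0$ and infinite Hawking mass. This is the standard mass-inflation mechanism adapted to the present setting.

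The crux is then verifying that $\mathcal{S}=\{r=0\}$ is nonempty and spacelike near $\CH\cap\mathcal{S}$ with the degenerating Kasner asymptotics of Theorem~\ref{thm.I}. For this I would reduce to Theorem~\ref{thm.I} directly: fix an outgoing cone $C_{out}$ strictly under $\CH$ on which \eqref{decay.intro} holds by hypothesis, and a suitable ingoing cone $\Cin$ to its future, forming bifurcate characteristic initial data of the type required by Theorem~\ref{thm.I}. One must check the ``Cauchy horizon breakdown with no locally-naked singularity'' hypothesis of Theorem~\ref{thm.I}, namely $\CH=\{v=+\infty\}\subsetneq\mathcal{B}$: the strict inclusion is exactly the statement that the Cauchy horizon does \emph{not} close up the whole diagram, which is where $\mathcal{S}\neq\emptyset$ (two-ended) or the no-locally-naked assumption plus $\CH\neq\emptyset$ (one-ended) re-enters. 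Once the hypotheses of Theorem~\ref{thm.I} are in place, its conclusion gives $\mathcal{S}\neq\emptyset$ intersecting $\CH$ and the Kasner exponents $(1-2p,p,p)$ with $p\approx v^{-1}$, completing the proof. The main obstacle I anticipate is the bookkeeping required to pass from the \emph{global} hypothesis ``\eqref{decay.intro} on any outgoing cone under $\CH$'' to \emph{local} bifurcate data to which Theorem~\ref{thm.I} literally applies—in particular ensuring that the ingoing cone $\Cin$ can be chosen so that its data are compatible (finite, regular, not already singular) and that no portion of $\mathcal{B}$ other than $\CH$ and $\{r=0\}$ sneaks in between $C_{out}\cup\Cin$ and the terminal boundary; this is a global causal-structure argument rather than a computation, and it is precisely the place where the dichotomy between the one- and two-ended hypotheses is used in an essential way.
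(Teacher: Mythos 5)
Your skeleton (a priori Kommemi/Dafermos boundary decomposition, then reduction to the local Theorem~\ref{thm.I} on bifurcate data $\Cin\cup C_{out}$) matches the paper's, but three steps that you treat as soft causal-structure bookkeeping are in fact the quantitative content of the proof, and as written your argument does not close. First, the exclusion of $\mathcal{S}_{i^+}$ (the ingoing collapsed cone with $r\equiv 0$ extending $\CH$ to the future) is \emph{not} implied by either hypothesis~\ref{IV.1} or~\ref{IV.2}; the paper proves $\mathcal{S}_{i^+}=\emptyset$ in Proposition~\ref{apriori.prop2} by propagating the a priori estimate \eqref{nu.est.prelim} (which says $r|\rd_u r|$ is $v$-independent up to an $e^{1.9K_-v}$ error) onto the putative segment $\mathcal{S}_{i^+}$, where $r\rd_u r\to 0$ would force $u-\uch=0$ after letting $v\to+\infty$ --- a contradiction. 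Second, in the one-ended case the breakdown hypothesis $\CH\subsetneq\mathcal{B}$ of Theorem~\ref{thm.I} does \emph{not} follow from ``no locally-naked singularity plus $\CH\neq\emptyset$'': a priori $\CH$ could close off the spacetime as in Figure~\ref{fig:disproof}. The paper derives breakdown from Proposition~\ref{apriori.prop1} (Theorem~\ref{breakdown.thm.new}): the decay/lower bounds \eqref{hyp1} force the entire rectangle $[u_0,\uch]\times[v_0,+\infty)$ into the trapped region $\T$, which therefore cannot meet the center, giving $\uch<u_\Gamma$. Your proposed route makes the verification of the breakdown hypothesis circular in the one-ended case.

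Third, hypothesis~\ref{IV.1} only excludes $\mathcal{CH}_\Gamma$ (the locally-naked component on which $r>0$); it does not by itself exclude $\mathcal{S}^1_\Gamma$ and $\mathcal{S}^2_\Gamma$, nor does it show $\mathcal{S}\neq\emptyset$. The paper gets these from the quantitative lower bound $r(u,v)\geq Cv^{-p}>0$ on $[\uch-\epsilon,\uch)\times[v_T,+\infty)$ (a consequence of \eqref{r.est.prelim}): if $\mathcal{S}=\emptyset$ then the endpoint of $\mathcal{S}^1_\Gamma\cup\mathcal{S}^2_\Gamma$ would have to meet the endpoint of $\CH$, contradicting that lower bound; hence $\mathcal{S}\neq\emptyset$ and its endpoint meets that of $\CH$ inside $\T$, which is exactly what lets one choose $\Cin\subset\{v=v_T\}\subset\T$ with $r\to 0$ and invoke Statement~\ref{main.thm.III} of Theorem~\ref{main.thm}. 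Your mass-inflation step is also organized differently from the paper's (which reads $\rho\to\infty$ on each cone directly off \eqref{hyp1} via \eqref{murelation} and then propagates the blow-up to the future of $\CH$, rather than propagating a scalar-field lower bound to $\CH$ first), but that is a benign difference; the missing quantitative arguments above are the real gap.
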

	
	\noindent Theorem~\ref{thm.cond.intro} corresponds\footnote{We note, however, that Theorem~\ref{main.thm.global.i} and Theorem~\ref{main.thm.2end.i} are slightly stronger results than Theorem~\ref{thm.cond.intro} as stated in that they only require \eqref{decay.intro} to hold on \emph{one} outgoing cone $C_{out}$ under $\CH$ so that the conclusion of Theorem~\ref{thm.I} applies.} to Theorem~\ref{main.thm.global.i} (one-ended case) and Theorem~\ref{main.thm.2end.i} (two-ended case).
	\begin{rmk}
		As noted in Remark~\ref{loc.naked.remark}, there exist one-ended black holes with both a Cauchy horizon $\CH$ and a locally naked singularity $\mathcal{CH}_{\Gamma}$. However, we note that locally naked singularities are conjecturally non-generic; this is related to  Weak Cosmic Censorship  for the model \eqref{1.1}--\eqref{5.1}, see \cite{Kommemi,review} or \cite{bif}, Section 1.3 for  details. 
	\end{rmk}
	
	As part of the proof of Theorem~\ref{thm.cond.intro}, we obtain an intermediate result of independent interest, excluding the presence of  certain terminal boundary components that were a priori possible, see Figure~\ref{fig:one-ended} and Section~\ref{qual.section}.\begin{itemize}
		\item  in the one-ended case, the Cauchy horizon $\CH$ cannot be the only terminal boundary component, as in the Penrose diagram of Figure~\ref{fig:disproof} (this result was already obtained in \cite{breakdown} where  we proved the   breakdown of the Cauchy horizon; however, the new proof presented here is shorter and  more constructive).
		\item There is no  ingoing collapsed cone $\mathcal{S}_{i^+}$ on which $r\equiv 0$ that extends  the Cauchy horizon $\CH$ to the future, as depicted in Figure~\ref{fig:one-ended} (more precisely, we prove  $\mathcal{S}_{i^+}=\emptyset$ under the assumptions of Theorem~\ref{thm.I}).
	\end{itemize} \begin{rmk}
		It is also of interest to consider the Einstein--Maxwell equations coupled to a massive and charged scalar field. The main result of \cite{bif}, Theorem~\ref{thm.I}, is also valid in this case, as are the conditional results presented in Theorem~\ref{thm.cond.intro}. However, the construction of one or two-ended asympotically flat black holes provided in Theorem~\ref{thm.II} or Theorem~\ref{thm.III} do not work in the massive case, where late-time tails are more subtle, see \cite{KGSchw1,KGSchw2}.
	\end{rmk}
	
	\subsection{Models of gravitational collapse: Oppenheimer--Snyder's solution and beyond}\label{collapse.section}
	
	\subsubsection{The Oppenheimer--Snyder spacetime}  The rigorous theoretical modeling of gravitational collapse that predicted the  dynamical formation of a black hole began with the celebrated Oppenheimer-Snyder \cite{OppenheimerSnyder} spacetime in 1939. Their construction is a spherically symmetric solution of the Einstein equations coupled with a homogeneous ball of dust  (the density of dust is constant  inside  a ball of fixed radius $R_b$ and zero outside). Its Penrose diagram  is as in Figure~\ref{fig:OS}: a black hole whose terminal boundary only contains the spacelike singularity $\mathcal{S}$. We make some observations on the Oppenheimer--Snyder spacetime: \begin{itemize}
		\item Outside of the ball of fixed radius $R_b$, the Oppenheimer--Snyder spacetime is exactly Schwarzschild, thus $\mathcal{S}$ coincides with Schwarzschild's spacelike singularity, with bounded Jacobi fields in the orthoradial direction but an unbounded Jacobi field in the radial one (also known as ``spaghettification'', see \cite{Hawking,gravitation,ONeill}).
		\item Inside of the ball of fixed radius $R_b$, $\mathcal{S}=\{r=0\}$ is isotropic, coinciding with a FLRW singularity.
		\item The  Oppenheimer--Snyder  spacetime is the MGHD of one-ended asymptotically flat  $W^{1,\infty} \cap C^0$ initial data. As such, it only solves the Einstein-dust equations in a weak (distributional) sense.
	\end{itemize}
	
	\begin{figure}[H]
		\begin{center}
			
			\includegraphics[width=75 mm, height=50 mm]{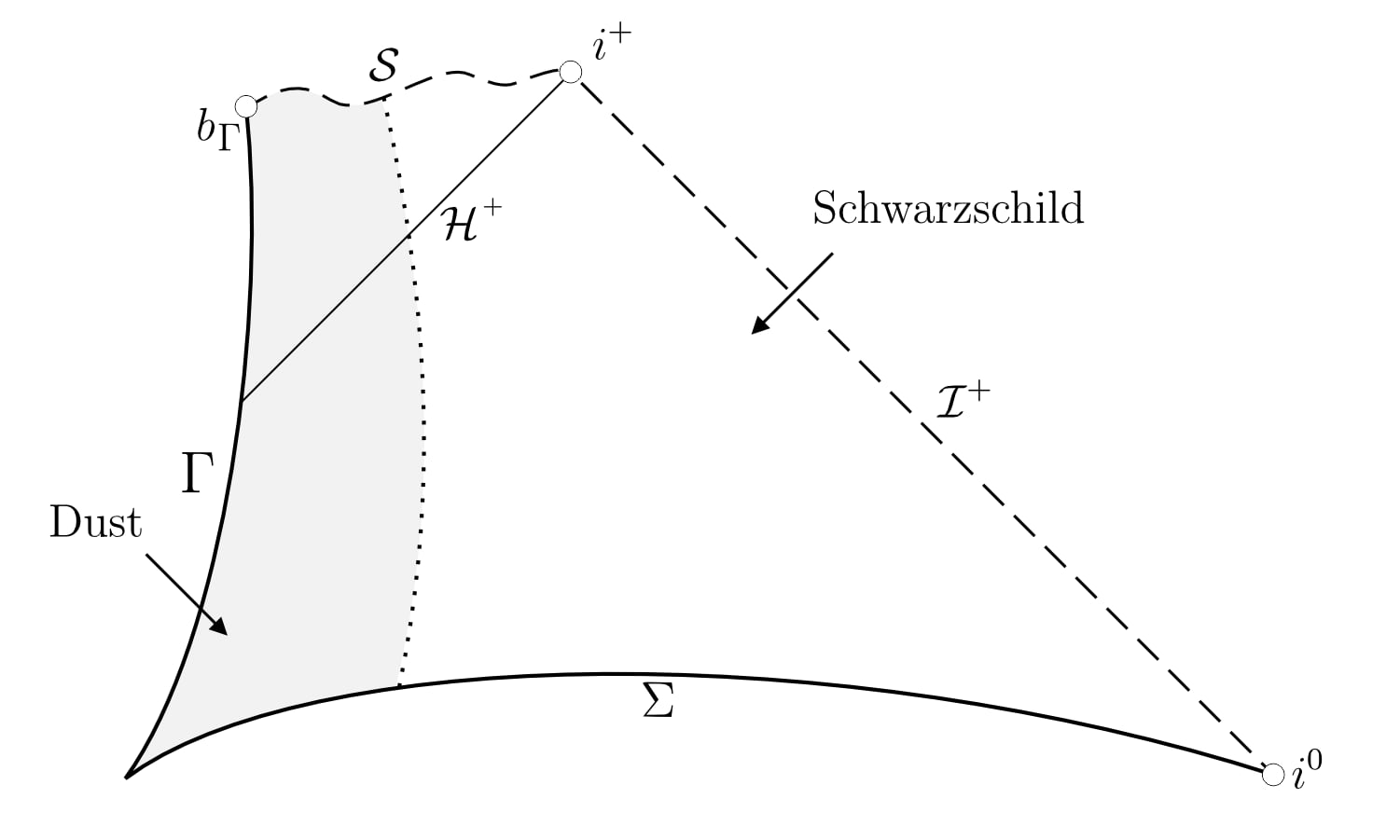}
			
		\end{center}
		\caption{Penrose diagram of the Oppenheimer--Snyder solution of the Einstein-dust equations.}
		\label{fig:OS}
		
	\end{figure} Despite its historical importance, the Oppenheimer--Snyder spacetime is not generic, even among spherically symmetric solutions of the Einstein-dust model: Christodoulou \cite{Christo4} indeed showed that even a slightly inhomogeneous ball of dust collapses into a naked singularity instead of a black hole, due to dust shells crossing.
	
	\subsubsection{Einstein-scalar-field gravitational collapse in spherical symmetry} \label{uncharged.collapse.section}In a series of papers \cite{Christo1,Christo2,Christo.existence,Christo3}, Christodoulou famously studied spherically symmetric solutions of \eqref{1.1}--\eqref{5.1} with $F\equiv 0$, a model known as Einstein-scalar-field, and which is free from shell-crossing singularities. He considered the gravitational collapse case (one-ended asymptotically flat) and showed the following results: \begin{itemize}
		\item Solutions of \eqref{1.1}--\eqref{5.1} (for rough initial data) with a naked (or locally naked) singularity exist \cite{Christo.existence}.
		\item Such (locally)-naked singularity solutions, however, are unstable \cite{Christo3} within a class of rough solutions.
		\item For generic rough initial data, the terminal boundary of the black hole  is a spacelike singularity $\mathcal{S}$ \cite{Christo1}.
	\end{itemize}
	Moreover, it was later shown that (smooth) black holes for this model converge to Schwarzschild \cite{DejanAn,PriceLaw,twotails} in some sense towards timelike infinity $i^+$; in particular, $\mathcal{S}$ also exhibits the above-mentioned ``spaghettification'' near $i^+$. Away from $i^+$, it was shown generally that $\mathcal{S}$ locally adopts a Kasner form \cite{Warren1} with varying exponents.
	
	We also mention the uncharged Kehle--Unger spherically symmetric gluing methods \cite{KehleUnger}, allowing to construct a one-ended black hole whose event horizon is  exactly  Schwarszschild at late times  (see also Section~\ref{gluing.intro}).

	Nonetheless, to the best of the author's knowledge, there is no  known analogue to the Oppenheimer--Snyder solution for \eqref{1.1}--\eqref{5.1}, in the sense of a one-ended asymptotically flat black hole whose terminal boundary is a spacelike singularity coinciding with Schwarzschild's in a open neighborhood of $i^+$, as depicted in Figure~\ref{fig:uncharged}. Theorem~\ref{construction.thm}, however, allows for such a construction (see Theorem~\ref{OS.thm.intro}), as we will explain in Section~\ref{charged.collapse.section}.
	
	\begin{rmk}
		Even though Christodoulou proved that generic (among rough data solutions) black holes only have a spacelike singularity \cite{Christo3}, it is not known how to construct any $C^2$ black hole solution of  \eqref{1.1}-\eqref{5.1} with this feature. This is because the strategy used in \cite{Christo3} to generate (generic) examples of such black holes is based on a perturbation argument of a (locally)naked singularity. The gluing strategy employed in Theorem~\ref{construction.thm}, on the other hand, offers a  method  to construct more concrete examples with arbitrarily high $C^k$-regularity.
	\end{rmk}

	\begin{figure}[H]
		\begin{center}
			
			\includegraphics[width=75 mm, height=50 mm]{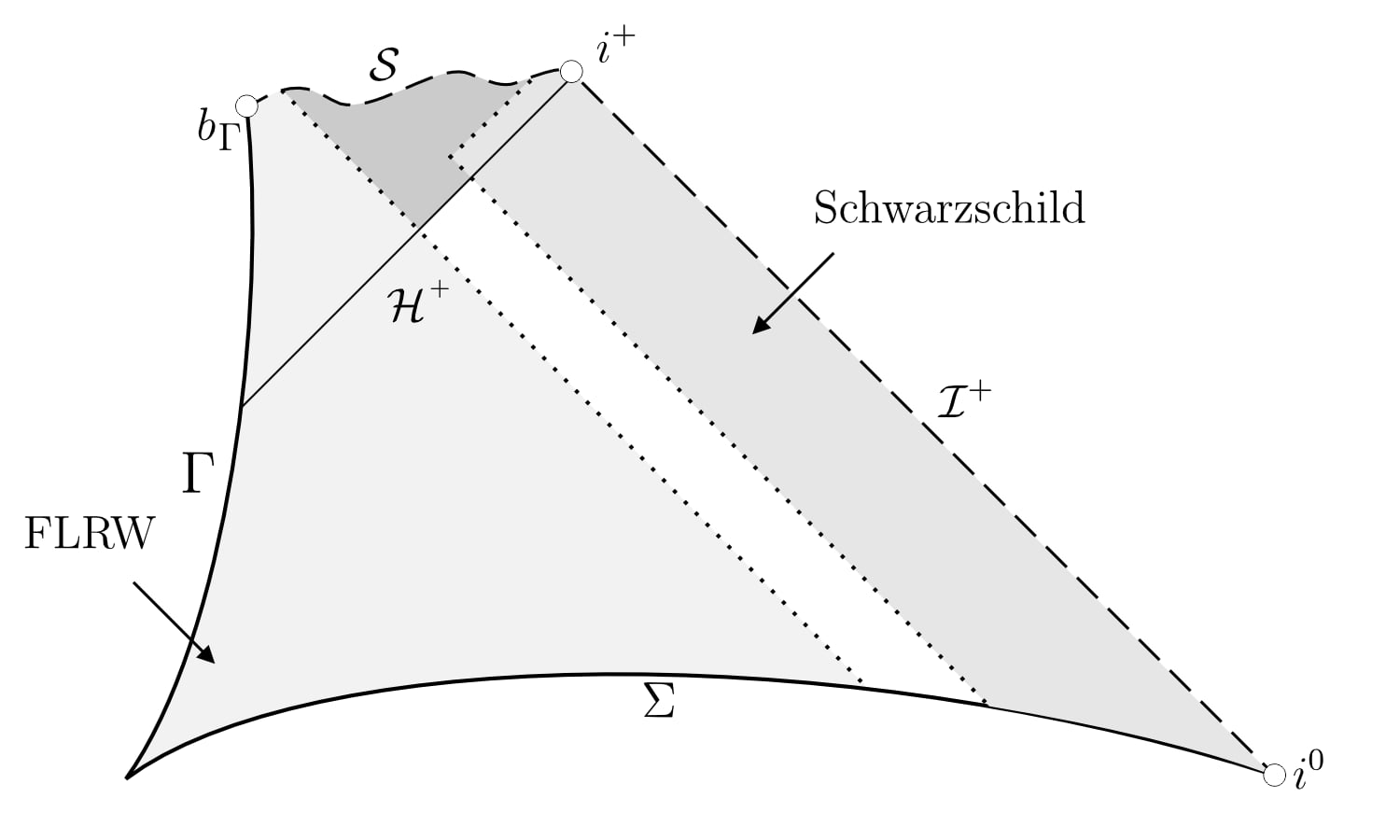}
			
		\end{center}
		\caption{Penrose diagram of the Oppenheimer--Snyder \blue{scalar field analogues solving}  \eqref{1.1}-\eqref{5.1} in Theorem~\ref{OS.thm.intro} for $q=0$.  $\mathcal{S}$ is exactly Schwarzschild's singularity near $i^+$, and exactly the FLRW singularity near $b_{\Gamma}$. Note the absence of a Cauchy horizon ($\CH=\emptyset$), in contrast to the examples of Theorem~\ref{thm.II} (depicted in Figure~\ref{fig:spacelikeconj}).}
		\label{fig:uncharged}
		
	\end{figure}
	
	\subsubsection{Einstein--Maxwell-charged-scalar-field gravitational collapse in spherical symmetry}\label{charged.collapse.section}
	The study of \eqref{1.1}--\eqref{5.1} in the charged case $F\neq 0$ is in parts motivated by its resemblance with   the Einstein equations in vacuum,	 see the discussions in \cite{bif,review}. In particular, as we will elaborate  in Section~\ref{rotating.collapse.section}, dynamical rotating black holes admit a Cauchy horizon from infinity $\CH$, in the same way dynamical charged black holes do \cite{Mihalis1,Moi}. The author  extensively studied  the black hole interior for the  \eqref{1.1}--\eqref{5.1} in spherical symmetry \cite{MoiChristoph,Moi,Moi4,breakdown,bif} and has shown the following results for dynamical black holes converging to Reissner--Nordstr\"{o}m (see Theorem~\ref{CH.thm.SS}  below and Section~\ref{qual.section} for more precise versions of the statements): \begin{itemize}
		\item There exists a Cauchy horizon $\CH$, which is $C^0$-regular under physical assumptions on the event horizon.
		\item $\CH$ is $C^2$-singular under physical assumptions on the event horizon (\emph{weak singularity}).
		\item Due to the weak singularity, $\CH$ breaks down in finite retarded-time. Therefore, there is another terminal boundary component, which could be a crushing singularity $\mathcal{S}$, or a locally naked singularity $\mathcal{CH}_{\Gamma}$.
		\item If we assume  $\mathcal{CH}_{\Gamma}=\emptyset$, then $\mathcal{S}$ is spacelike   and tidally contracting near $\mathcal{S}\cap \CH$, in the sense that all its Jacobi fields tend to zero (or equivalently, its generalized Kasner exponents are positive).
		\item As a conclusion, if $\mathcal{CH}_{\Gamma}=\emptyset$, the only terminal boundary components of the black hole are $\mathcal{S}$ and $\CH$.
	\end{itemize}
	\noindent	The attentive reader will have recognized a rephrasing of the first statement of Theorem~\ref{thm.cond.intro} in \blue{the last two bullet points}. We point out that Theorem~\ref{thm.II} precisely provides (one-ended asymptotically flat) examples such that  $\mathcal{CH}_{\Gamma}=\emptyset$, and  described as above; in addition, the crushing singularity $\mathcal{S}$ is spatially homogeneous near the center $\Gamma$, similarly to the Oppenheimer--Snyder case. However, these examples provide a \emph{model of gravitational collapse competing} with Oppenheimer--Snyder's  with two essential differences that we emphasize:
	\begin{itemize}
		\item The   Cauchy horizon from infinity $\CH$, leading to the Kasner exponents of $\mathcal{S}$  degenerating at $\CH\cap \mathcal{S}$.
		\item The absence of unbounded Jacobi fields (a.k.a spaghettification, or negative Kasner exponents)  near $i^+$.
	\end{itemize}
	
	\noindent	As we discuss in Section~\ref{qual.section}, it is expected that $\mathcal{CH}_{\Gamma}=\emptyset$ for generic solutions, leading to the following conjecture\footnote{The conjectured tidally contracting character of the spacelike singularity goes back to the celebrated BKL heuristics on the stability of spacelike singularities \cite{BKL1,BKL2}, see the discussion in \cite{bif}, Section 1.6.}, which is more  specific than our introductory Conjecture~\ref{spacelike.conj}. \begin{conjecture}\label{charged.conj}
		Generic (regular) spherically symmetric one-ended asymptotically flat black hole solutions for \eqref{1.1}--\eqref{5.1} with $q_0\neq 0$ have only two terminal boundary components: \begin{itemize}
			\item the Cauchy horizon from infinity $\CH$, a weak null singularity.
			\item the crushing singularity $\mathcal{S}=\{r=0\}$, which is spacelike and tidally contracting.
		\end{itemize}
	\end{conjecture}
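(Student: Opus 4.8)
The plan is to reduce Conjecture~\ref{charged.conj} to two currently-open \emph{generic} statements about spherically symmetric solutions of \eqref{1.1}--\eqref{5.1} with $q_0\neq 0$, and then invoke the conditional one-ended result Theorem~\ref{thm.cond.intro} (i.e.\ Theorem~\ref{main.thm.global.i}). First one must fix a genericity notion --- say, the complement, in a suitable space of regular one-ended asymptotically flat initial data equipped with a weighted topology compatible with the hypotheses of \cite{Moi,breakdown} that guarantee a $C^0$ Cauchy horizon $\CH\neq\emptyset$ which is $C^2$-singular, of either a meager set or of an exceptional set satisfying a non-genericity condition on the late-time tail. It then suffices to prove, for data in such a generic set: (i) the inverse-polynomial estimates \eqref{decay.intro} hold on some outgoing cone $C_{out}$ strictly inside the black hole and under $\CH$; and (ii) the absence of a locally naked singularity emanating from the center, $\mathcal{CH}_{\Gamma}=\emptyset$. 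Granting (i) and (ii), Theorem~\ref{main.thm.global.i} identifies the terminal boundary with exactly $\CH$ --- weakly singular since the Hawking mass blows up along it by \cite{breakdown} --- together with the crushing singularity $\mathcal{S}=\{r=0\}$, spacelike and tidally contracting in a neighborhood of $\CH\cap\mathcal{S}$ with the degenerating generalized Kasner exponents $(1-2p,p,p)$, $p\approx v^{-1}$, of Theorem~\ref{thm.I}.

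For step (i), the key missing input is a late-time tails (Price law) theorem for \eqref{1.1}--\eqref{5.1} with $q_0\neq 0$: for generic data, $\phi_{|\mathcal{H}^+}(v)$ and $D_v\phi_{|\mathcal{H}^+}(v)$ should decay at a sharp inverse-polynomial rate $v^{-s}$ along the event horizon, with the charge-current term $\Im(\bar{\phi} D_v\phi)$ strictly subdominant. This is the charged analogue of the uncharged results of Dafermos--Rodnianski \cite{PriceLaw}, Luk--Oh \cite{JonathanStabExt,twotails} and Gautam \cite{Gautam}; the difficulty is that the charge coupling destroys the exact conservation laws and the clean linear-scattering structure of the uncharged analyses, while the long-range Coulombic potential modifies the effective dispersion. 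Granting such a Price law on $\mathcal{H}^+$, one then transports it inward to an outgoing cone $C_{out}$ in the interior by means of the (slightly refined) characteristic scattering theory of the author and Kehle \cite{MoiChristoph}, obtaining \eqref{decay.intro} together with the subdominance of $|\Im(\bar{\phi} D_v\phi)|$ in exactly the form feeding Theorem~\ref{thm.I}.

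Step (ii) is a Weak-Cosmic-Censorship--type statement in spherical symmetry for the charged model: generic regular one-ended data do not develop a (locally) naked singularity at the center. In the uncharged case $F\equiv 0$ this is Christodoulou's theorem \cite{Christo3}, obtained by proving the instability --- within a scale-invariant class of rough solutions --- of the naked singularities of \cite{Christo.existence}; the charged analogue has not been established, and the adaptation is delicate because the electromagnetic field breaks the scaling symmetry underlying Christodoulou's BV-theory and his instability mechanism. A perhaps more tractable route in the smooth category would combine a trapped-surface-formation criterion with a quantitative bound ruling out charge concentrating at the center fast enough to shield a would-be singularity. Either way, this is the step I expect to be the hardest, being essentially a form of Weak Cosmic Censorship for \eqref{1.1}--\eqref{5.1} with $q_0\neq 0$.

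Finally, Conjecture~\ref{charged.conj} as stated asks for slightly more than Theorem~\ref{main.thm.global.i} delivers: it asserts that $\mathcal{S}$ is spacelike and \emph{tidally contracting everywhere}, whereas the conditional theorem controls only a neighborhood of $\CH\cap\mathcal{S}$. Upgrading this to the full singularity $\mathcal{S}$ would require a BKL-type stability analysis of the spacelike portion --- showing that the local (generalized) Kasner exponents, whose existence is known from the local analysis (cf.\ \cite{Warren1,bif}), remain positive away from $\CH$ and from the center --- which is consistent with, but not implied by, \cite{bif}. This is the one place where a complete proof of Conjecture~\ref{charged.conj} would demand genuinely new structural input beyond assembling the pieces listed above.
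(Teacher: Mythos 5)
The statement you are addressing is a \emph{conjecture}: the paper offers no proof of it, and explicitly records (immediately after its statement, and again in Section~\ref{SCC.section}) that its resolution requires, among other things, a charged analogue of Christodoulou's Weak Cosmic Censorship theorem, which is open. So there is no proof in the paper to compare against. Your submission is, accordingly, not a proof but a conditional reduction, and to your credit it is honest about this. The reduction you sketch --- (i) a generic Price law for $q_0\neq 0$ on $\mathcal{H}^+$ transported inward via the scattering theory of \cite{MoiChristoph}, (ii) generic absence of $\mathcal{CH}_{\Gamma}$, then Theorem~\ref{main.thm.global.i} --- matches the paper's own discussion of what is missing, and your closing caveat that the conjecture asserts global spacelikeness and tidal contraction of $\mathcal{S}$ whereas the conditional theorem only controls a neighborhood of $\CH\cap\mathcal{S}$ is exactly right.

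There is, however, one concrete point where your step (i) conflicts with the paper's own expectations and would fail as stated. The hypotheses \eqref{decay.intro} are not merely upper bounds: they include the pointwise lower bound $v^{-s}\lesssim |D_v\phi|_{|C_{out}}$ and the non-oscillation condition $|\Im(\bar{\phi}D_v\phi)|\ll v^{-s}$ (equivalently \eqref{hyp5}). The paper emphasizes in Section~\ref{CH.section} that for \emph{generic} solutions with $q_0\neq 0$ the quantity $D_v\phi$ is conjectured to be \emph{oscillatory} in the black hole interior --- this is precisely why the examples of Theorem~\ref{thm.II} require prescribing non-generic event-horizon data tuned to the resonant frequency $-\omer$. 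So even granting a sharp charged Price law on $\mathcal{H}^+$, the scattering map of \cite{MoiChristoph} would generically produce an oscillatory $D_v\phi$ under $\CH$, violating \eqref{decay.intro}, and the route through Theorem~\ref{thm.cond.intro}/Theorem~\ref{main.thm.global.i} as stated would not apply to generic data. A genuine proof of Conjecture~\ref{charged.conj} therefore needs, in addition to your (i) and (ii), an extension of the local Theorem~\ref{thm.I} that tolerates oscillatory scalar-field profiles at the Cauchy horizon --- a structural input your proposal does not identify.
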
 \noindent In other words, it is conjectured that the examples of Theorem~\ref{thm.II} reflect the properties of generic black holes. Proving the validity of Conjecture~\ref{charged.conj}  requires, among other things, a resolution of Weak Cosmic Censorship for charged scalar field (the analogue of Christodoulou's work in the uncharged case \cite{Christo1,Christo.existence,Christo3}),   open at present. \\
	
	We now return to the construction of  Oppenheimer--Snyder like solutions already mentioned in Section~\ref{uncharged.collapse.section}. Our goal is to produce a spatially-homogeneous spacelike singularity near the center $\Gamma$, and exactly isometric to Schwarzschild (or Reissner--Nordstr\"{o}m in the charged case) in an open neighborhood of $i^+$ that includes part of the trapped region.
	The following theorem, a simplified version of Corollary~\ref{charging.cor} (see also Corollary~\ref{uncharged.cor} for the uncharged version), carries out such a construction for $C^{k}$ solutions, where $k$ is arbitrarily large.
	\begin{thm} (Oppenheimer--Snyder \blue{spacetime} analogues with a scalar field).\label{OS.thm.intro}
		For any regularity index $k \in \mathbb{N}$ and   charge to mass ratio $q \in [0,1)$, there exist $C^k$ spherically symmetric one-ended asymptotically flat black hole solutions of \eqref{1.1}--\eqref{5.1} with $q_0\neq0$,  with  a regular center $\Gamma \neq \emptyset$ and satisfying the following:
		
		\begin{itemize}
			\item There exists an open neighborhood of $i^+$ (timelike infinity), $\mathcal{I}^+$ (null infinity) and $i^0$ (spacelike infinity) in which the spacetime coincides with a Reissner--Nordstr\"{o}m metric of charge ratio $\pm q M_f$, for some $M_f>0$ (Schwarzschild if $q=0$).  Thus, at late enough times, the event horizon $\mathcal{H}^+$ is exactly Reissner--Nordstr\"{o}m.
			\item If $q=0$, \blue{the construction is such that its} Penrose diagram  is given by Figure~\ref{fig:uncharged}, namely the terminal boundary only has one non-empty component:    a spacelike singularity $\mathcal{S}=\{r=0\}$. 
			
			\item  If $q\in (0,1)$, the Penrose diagram  is given by Figure~\ref{fig:charged}, namely the terminal boundary only has two non-empty components:    a null Cauchy horizon $\CH  \neq~ \emptyset$, and  a crushing singularity $\mathcal{S}=\{r=0\}$.

			\item  $\mathcal{S}$ is spacelike in a neighborhood of $\Gamma$ and spatially-homogeneous, described by an asymptotically Kasner metric of exponents $(\frac{1}{3},\frac{1}{3},\frac{1}{3})$.
		\end{itemize}

	\end{thm}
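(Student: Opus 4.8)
The plan is to deduce Theorem~\ref{OS.thm.intro} from the gluing machinery of Theorem~\ref{construction.thm} (in the charged case) and of Theorem~\ref{gluing.thm.intro} specialised to zero charge (in the uncharged case), taking as interior model $\mathcal{M}_L$ a spatially-homogeneous collapsing region. Concretely, I would let $\mathcal{M}_L$ be a truncated neighbourhood of the regular centre $\Gamma$ of a spatially flat FLRW spacetime driven by a \emph{real} massless scalar field $\phi=\phi(t)$, whose scale factor obeys $a(t)\approx |t|^{1/3}$ near the Big Crunch; in area-radius coordinates this is asymptotically the Kasner metric with exponents $(\tfrac13,\tfrac13,\tfrac13)$ and the crunch is the spacelike crushing hypersurface $\{r=0\}=b_{\Gamma}$. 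Since $\phi$ is real and $A\equiv 0$, the charged current in \eqref{4.1} vanishes identically, so this is a genuine solution of \eqref{1.1}--\eqref{5.1} with $F\equiv 0$ for \emph{any} $q_0$, in particular $q_0\neq0$; after truncating near the crunch and matching to vacuum, its event-horizon scalar field is supported in a bounded range of $v$, hence trivially satisfies the decay hypotheses.

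For the uncharged case $q=0$, I would glue, along a spacelike hypersurface $\Sigma$, the FLRW data on a small ball $\{r\le r_1\}$ around $\Gamma$ to exact Schwarzschild data on $\{r\ge r_2\}$ (trapped for $r_2$ close to $r_1$, then extended out to $i^0$), using Theorem~\ref{gluing.thm.intro} with $e=0$ on the shell $\{r_1\le r\le r_2\}$ to produce a $C^k$ interpolation. Passing to the maximal development and using uniqueness, on the domain of dependence of the Schwarzschild part of $\Sigma$ -- which contains an open neighbourhood of $i^+,\mathcal{I}^+,i^0$, the event horizon, and a strip of the interior -- the solution coincides with Schwarzschild; in particular the exterior is exactly Schwarzschild and $\mathcal{H}^+$ is exactly Schwarzschild at late times. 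Because $F\equiv0$ globally, no Cauchy horizon can form, so by Kommemi's classification (Theorem~\ref{Kommemi.thm}) the terminal boundary is contained in $\mathcal{S}=\{r=0\}$ together with a possible locally-naked component $\mathcal{CH}_{\Gamma}$; the latter is excluded because $\mathcal{M}$ coincides near $\Gamma$ with the globally regular FLRW spacetime up to its spacelike crunch. Hence the terminal boundary equals $\mathcal{S}=\{r=0\}$ (Figure~\ref{fig:uncharged}), spacelike and spatially-homogeneous near $\Gamma$ with the asserted Kasner exponents.

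For the charged case $q\in(0,1)$, I would apply Theorem~\ref{construction.thm}, in its quantitative form Theorem~\ref{EH.AF.thm} so as to prescribe the limiting charge-to-mass ratio to equal $q$, with $\mathcal{M}_L$ the uncharged solution just constructed: a one-ended spherically symmetric black hole with $F\equiv0$ whose horizon field $\phi_H$ is of bounded $v$-support. The output is a $C^k$ one-ended asymptotically flat solution $\mathcal{M}$ with $F\neq0$, $q_0\neq0$, coinciding with $\mathcal{M}_L$ (hence with FLRW) near $\Gamma$, with $\phi_{|\mathcal{H}^+}$ still of bounded $v$-support, relaxing to a Reissner--Nordstr\"{o}m metric of mass $M_f>0$ and charge $\pm q M_f$. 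As in the uncharged case, the non-trivial data -- the FLRW ball, the spacelike-gluing shell, and the charging layer -- is confined to a trapped region whose causal future closes off at $\mathcal{S}$ and is causally separated from $i^+,\mathcal{I}^+,i^0$; therefore the domain of dependence of the exact Reissner--Nordstr\"{o}m part of $\Sigma$ contains a neighbourhood of $i^+,\mathcal{I}^+,i^0$ on which $\mathcal{M}$ equals that Reissner--Nordstr\"{o}m metric, and $\mathcal{H}^+$ is exactly Reissner--Nordstr\"{o}m at late times. Near $i^+$ the interior is then the Reissner--Nordstr\"{o}m interior, so $\CH\neq\emptyset$ (coinciding there with the Reissner--Nordstr\"{o}m Cauchy horizon), whereas near $\Gamma$ the FLRW crunch gives $\mathcal{S}=\{r=0\}\neq\emptyset$ with the asserted Kasner form. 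That $\CH$ and $\mathcal{S}$ are the \emph{only} terminal components follows, exactly as in Theorem~\ref{thm.cond.intro}(\ref{IV.1}), from Kommemi's classification and the qualitative arguments of Section~\ref{qual.section}: $\mathcal{CH}_{\Gamma}=\emptyset$ because the centre is exactly FLRW, and the ingoing collapsed cone $\mathcal{S}_{i^+}$ is ruled out by the same monotonicity reasoning; note that one does \emph{not} need Theorem~\ref{thm.I} here, since the finer Kasner description near $\CH\cap\mathcal{S}$ is not claimed. This yields Figure~\ref{fig:charged}.

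The step I expect to be the main obstacle is reconciling ``\emph{exactly} Reissner--Nordstr\"{o}m (resp.\ Schwarzschild) on a full open neighbourhood of $i^+,\mathcal{I}^+,i^0$'' with ``$\mathcal{S}=\{r=0\}\neq\emptyset$ (and, for $q>0$, a genuine Cauchy horizon)'': exact triviality near the infinities forces all matter and charge to vanish there, while the interior singularity requires non-trivial matter near the centre. The resolution -- confining the FLRW, gluing and charging data to a trapped region near $\Gamma$ whose causal future is precisely the crushing singularity and which never causally reaches $i^+,\mathcal{I}^+,i^0$ -- is what Theorems~\ref{gluing.thm.intro}--\ref{construction.thm} are built to provide, but carrying it out requires (i) a form of the construction that is \emph{eventually exactly} Reissner--Nordstr\"{o}m/Schwarzschild rather than merely asymptotically so, and (ii) controlling the charge build-up deep inside the trapped region while keeping $\phi_{|\mathcal{H}^+}$ of bounded $v$-support and the solution globally $C^k$. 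A secondary difficulty is that, the horizon data being eventually trivial, the two-sided lower bound in \eqref{decay.intro} fails on the relevant outgoing cones, so Theorem~\ref{thm.I} is inapplicable and the ``only two terminal components'' statement must instead be extracted purely from Kommemi's trichotomy and the monotonicity arguments of Section~\ref{qual.section}.
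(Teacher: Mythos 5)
Your uncharged case ($q=0$) follows the paper's route essentially verbatim: the FLRW interior of Proposition~\ref{FRLW.prop} with $a(t)\sim (T_S-t)^{1/3}$, the spacelike--characteristic gluing of Theorem~\ref{uncharged.gluing.thm} to an exact Schwarzschild trapped sphere, extension of the exterior as exact Schwarzschild, and the terminal-boundary classification via Theorem~\ref{Kommemi.thm} together with $\mathcal{CH}_{i^+}=\emptyset$ for $F\equiv 0$ and $\mathcal{CH}_{\Gamma}=\emptyset$ from the exact FLRW centre. That part is correct and is exactly Corollary~\ref{uncharged.cor}.

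The charged case has a genuine gap. You propose to charge the spacetime by applying Theorem~\ref{construction.thm} (in the form of Theorem~\ref{EH.AF.thm}) with compactly supported horizon data. But that theorem is designed to produce a \emph{dynamical} event horizon that only \emph{converges} to Reissner--Nordstr\"{o}m: its exterior is built by solving sideways from the horizon and backwards from prescribed scattering data on $\mathcal{I}^+$, so it is asymptotically, not exactly, Reissner--Nordstr\"{o}m near $i^+$, $\mathcal{I}^+$ and $i^0$ (even with compactly supported data, the region near $i^+$ lies in the causal future of the nontrivial gluing data and is not covered by a domain-of-dependence argument). Moreover Theorem~\ref{EH.AF.thm} is restricted to $q<q_L$ small, whereas the statement allows every $q\in[0,1)$. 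You correctly flag exactness as ``the main obstacle'' but then assert that Theorems~\ref{gluing.thm.intro}--\ref{construction.thm} are built to resolve it; the resolution actually lives entirely in Theorem~\ref{main.gluing.thm}/Theorem~\ref{charged.gluing.thm}: one characteristically glues the Schwarzschild trapped sphere to an \emph{exact} sub-extremal Reissner--Nordstr\"{o}m trapped sphere $\mathbb{S}_{RN}^{\mathcal{T}}$ (via the oscillatory charged ansatz and the Borsuk--Ulam theorem, under the largeness condition $|q_0|M_f(1-q)/q\gg 1$, which accommodates any $q\in[0,1)$), and then extends $\mathbb{S}_{RN}^{\mathcal{T}}$ to the past and future by the exact Reissner--Nordstr\"{o}m metric. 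Only this step delivers the first bullet of the theorem; Theorem~\ref{construction.thm} cannot. A further, smaller mismatch: taking $\mathcal{M}_L$ to be the already-constructed uncharged \emph{black hole} violates the hypothesis of the charging theorems that $\mathcal{M}_L$ contain no trapped spheres; the paper instead reuses the FLRW region as $\mathcal{M}_L$ and performs the full chain FLRW $\to$ apparent horizon $\to$ Schwarzschild trapped $\to$ Reissner--Nordstr\"{o}m trapped in one construction.
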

	
	\noindent The examples presented in Theorem~\ref{OS.thm.intro} are known to be non-generic \cite{review} and violate  Conjecture~\ref{spacelike.conj}, since $\CH =\emptyset$ in the case $q=0$, and $\CH\neq \emptyset$ is \emph{non-singular} near $i^+$ (since it coincides with Reissner--Nordstr\"{o}m) in the case $q\neq 0$, in contrast to the spacetimes of Theorem~\ref{thm.II}. This distinction prevents the application of Theorem~\ref{thm.I} and leaves the quantitative coexistence of the spacelike and null singularities unresolved \blue{in these examples}. 	Despite the fact that the examples of Theorem~\ref{OS.thm.intro} violate Conjecture~\ref{spacelike.conj}, our proof of Theorem~\ref{thm.II} \emph{builds up} on that of Theorem~\ref{OS.thm.intro}, which already illustrates the heart of our gluing strategy (see Section~\ref{gluing.intro}).
	
	\begin{rmk}
		In Theorem~\ref{OS.thm.intro}, we allow for an arbitrary choice of $q \in [0,1)$ and the resulting spacetime is initially free of trapped surfaces, consistently with our modelization of gravitational collapse (see Section~\ref{qual.section}). However, if we want to construct a spacetime such that the event horizon $\mathcal{H}^+$ lies strictly to the future of the \blue{asymptotically flat} initial data \blue{$\Sigma$} as depicted in Figure~\ref{fig:uncharged} and Figure~\ref{fig:charged}, we must \blue{additionally} restrict $q$ to a \blue{sufficiently} small value\footnote{\blue{More precisely,  the final black hole mass $M$ must be assumed to be  small, and  moreover $0<|q| =O(M^3)$, where $0<M\ll 1$.}}, see Corollary~\ref{charging.cor}.
	\end{rmk}
	
	\begin{figure}[H]
		\begin{center}
			
			\includegraphics[width=90 mm, height= 60 mm]{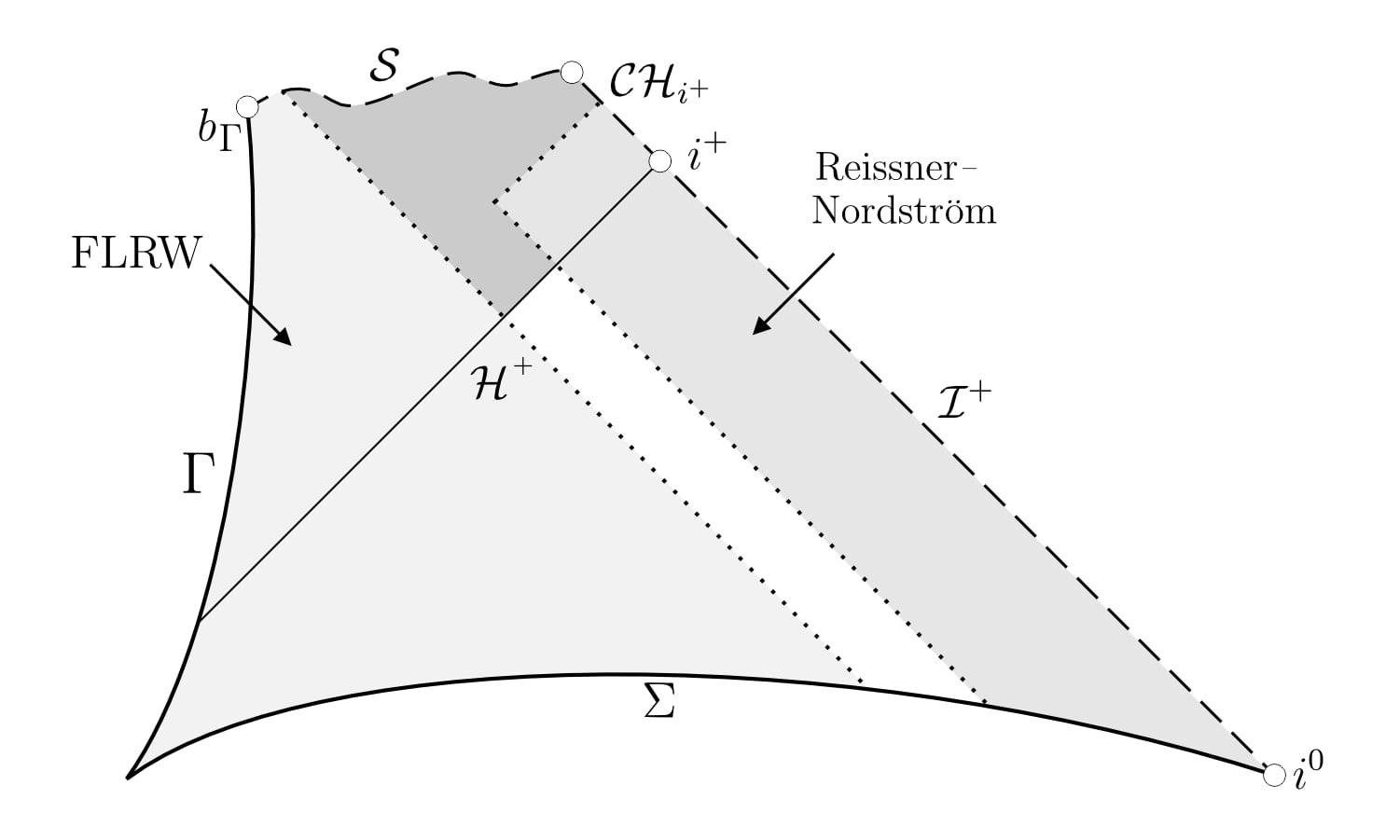}
			
		\end{center}
		\caption{Penrose diagram of the Oppenheimer--Snyder \blue{spacetime charged scalar field analogue solving}   of \eqref{1.1}-\eqref{5.1} obtained in Theorem~\ref{OS.thm.intro} for $q\neq 0$.  $\CH$ is a Reissner--Nordstr\"{o}m Cauchy horizon near $i^+$, and exactly  FLRW  near $b_{\Gamma}$. Note that $\CH$ is not (everywhere) singular, in contrast to the examples of Theorem~\ref{thm.II}.}
		\label{fig:charged}
		
	\end{figure}
	\subsubsection{Einstein-scalar-field gravitational collapse outside of spherical symmetry} \label{rotating.collapse.section}

	The dynamics in the black hole interior is less understood outside of spherical symmetry.  A major result by Dafermos--Luk \cite{KerrStab} established that small perturbations of Kerr spacetimes develop a non-empty Cauchy horizon $(\CH)$, and the work of Luk \cite{JonathanWeakNull} provided the first examples of such Cauchy horizons containing weak null singularities. 			While these findings resolve the nature of the terminal boundary near $i^+$, the global interior structure—particularly the presence of a  spacelike singularity—remains an open problem central  to an eventual resolution of the Strong Cosmic Censorship Conjecture (see Section~\ref{SCC.section}).  This question is conjectured to be particularly delicate \emph{in vacuum}, as no dynamically-stable spacelike singularities are known for the Einstein equations, except if we restrict the dynamics to non-generic symmetry classes \cite{Fournoxakis,FournodavlosRodnianskiSpeck}. This is consistent with the BKL heuristics, which predict chaotic dynamics rather than a stable singular terminal boundary \cite{BKL1,BKL2}.
	
	The result of Dafermos--Luk \cite{KerrStab} notably also extends\footnote{While this is not explicitly stated in \cite{KerrStab}, it can be inferred from the estimates of \cite{KerrStab}, together the methods to handle the scalar field on the Kerr black hole interior, see \cite{Franzen2}.} to the Einstein-scalar-field model (i.e., \eqref{1.1}-\eqref{5.1} with $F\equiv 0$). Contrary to the vacuum case, however, spacelike singularities are better understood in the presence of a scalar field. In this setting, the BKL heuristics \cite{BKL1,BKL2} identify certain spatially-homogeneous solutions (a sub-class of Kasner spacetimes \cite{Kasner}, solutions of  Einstein-scalar-field) as stable, and Fournodavlos, Rodnianski, and Speck indeed proved this \cite{FournodavlosRodnianskiSpeck}. The tidally contracting spacelike singularity (positive Kasner exponents) present in these Kasner models thus provides a plausible model for the structure of generic spacelike singularities.

	In the spherically symmetric case, the spacelike singularity  $\mathcal{S}$   addressed by Theorem~\ref{thm.I} is  indeed tidally contractive (positive Kasner exponents), consistently with the \blue{BKL expectations}. However, its construction cannot not be addressed by the methods of \cite{FournodavlosRodnianskiSpeck} due to a crucial technical distinction. The stability analysis in \cite{FournodavlosRodnianskiSpeck}, as well as other known constructions \cite{FournodavlosLuk}, requires non-degenerating Kasner exponents. In contrast, the singularity $\mathcal{S}$ in Theorem~\ref{thm.I} has Kasner exponents that degenerate \blue{to $(1,0,0)$} towards the null/spacelike transition sphere $\CH\cap \mathcal{S}$, a phenomenon that requires specific new estimates in the proof of Theorem~\ref{thm.I}.
	
	We conjecture that this degeneration is not an artifact of spherical symmetry but  instead a key feature of the general problem. This is analogous to the study of the Cauchy horizon, where the full dynamical treatment for rotating black holes in vacuum  \cite{KerrStab} built upon foundational insights from the spherically symmetric case \cite{MihalisPHD,Mihalis1,JonathanStab,Moi}. Motivated by this, we propose a generalization of Conjecture~\ref{charged.conj} to the Einstein-scalar-field system (i.e, \eqref{1.1}--\eqref{5.1} with $F\equiv 0$) outside of spherical symmetry. This constitutes a strengthened version of Conjecture~\ref{spacelike.conj}, and we hope to return to this and related problems in future work.

	\subsection{Applications to the Strong Cosmic Censorship Conjecture}  \label{SCC.section}
	The celebrated Strong Cosmic Censorship Conjecture of  Penrose \cite{penrose1974gravitational} is a central motivation to study the interior of dynamical black holes, see the related discussion in our previous work \cite{bif} or in the review \cite{review}.	Here, we focus on a simplified version of Strong Cosmic Censorship within spherically symmetric solutions of \eqref{1.1}--\eqref{5.1} (see \cite{ChristoCQG,nospacelike,Kommemi,review}), formulated in the context of gravitational collapse following Penrose \cite{PenroseSCC,penrose1974gravitational,Penrose1979}.
	\begin{conjecture}[$C^2$ Strong Cosmic Censorship] \label{SCC.conj}
		The Maximal Globally Hyperbolic Development for \eqref{1.1}--\eqref{5.1} of generic, one-ended asymptotically
		flat complete spherically symmetric initial data  is
		$C^2$-(future)-inextendible.
	\end{conjecture}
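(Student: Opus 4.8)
A full proof is beyond current technology, but the results assembled here reduce Conjecture~\ref{SCC.conj} to a small number of genericity inputs, and the plan is to organize the argument around that reduction. Fix generic one-ended asymptotically flat complete spherically symmetric initial data for \eqref{1.1}--\eqref{5.1} and let $(\mathcal{M},g)$ denote its maximal globally hyperbolic development; the goal is to show no $C^2$ extension of $(\mathcal{M},g)$ to the future exists. First I would invoke the classification of spherically symmetric terminal boundaries (Theorem~\ref{Kommemi.thm}): either $(\mathcal{M},g)$ disperses and is future causally geodesically complete — in which case it is trivially $C^2$-future-inextendible — or it contains a black hole whose terminal boundary $\mathcal{B}$ is a union of a null cone $\CH$ emanating from $i^+$, a possible locally naked singularity $\mathcal{CH}_{\Gamma}$ from the center, a possible ingoing null segment $\mathcal{S}_{i^+}$ on which $r\equiv0$, and a (possibly empty) crushing part $\mathcal{S}=\{r=0\}$. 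Any $C^2$ extension must extend across one of these, so the plan is: (i) show that generically $\mathcal{CH}_{\Gamma}=\emptyset$ and $\mathcal{S}_{i^+}=\emptyset$ and that the late-time decay \eqref{decay.intro} holds on an outgoing cone $C_{out}$ under $\CH$; (ii) deduce from the conditional Theorem~\ref{thm.cond.intro} (= Theorem~\ref{main.thm.global.i}) that $\mathcal{B}=\CH\cup\mathcal{S}$ with $\mathcal{S}$ obeying the Kasner asymptotics of Theorem~\ref{thm.I}; (iii) prove $C^2$-inextendibility across each of $\CH$, $\mathcal{S}$, and their corner $\CH\cap\mathcal{S}$.

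Step (i) is where I expect the main obstacle. In the uncharged regime $q_0=0$ ($F\equiv0$), Christodoulou's theorems supply exactly what is needed: generic rough data form a black hole and avoid (locally) naked singularities \cite{Christo1,Christo3,Christo.existence}, so $\mathcal{CH}_{\Gamma}=\emptyset$, $\mathcal{S}\neq\emptyset$ is a spacelike singularity with $r\to0$, and $\mathcal{S}_{i^+}=\emptyset$; here $C^2$-inextendibility of the MGHD is then close to Christodoulou's own conclusion. The genuinely new and hard regime is $q_0\neq0$. There one must establish the charged analogue of Christodoulou's work — a Weak Cosmic Censorship statement showing that generic data develop a trapped surface and that the resulting black hole has no locally naked singularity — which is an open problem and the principal gap in this roadmap. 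A conceivable route is to combine a Christodoulou/Dafermos-type trapped-surface formation criterion with: the breakdown of $\CH$ in finite retarded time from \cite{breakdown} (so $\mathcal{B}\supsetneq\CH$); the exclusion of $\mathcal{S}_{i^+}$ under the hypotheses of Theorem~\ref{thm.I} carried out in this paper; and a \emph{generic} late-time tail theory on the event horizon for \eqref{1.1}--\eqref{5.1} with $q_0\neq0$, giving two-sided inverse-polynomial bounds on $|D_v\phi|_{|\mathcal{H}^+}$. The scattering theory of \cite{MoiChristoph} (suitably refined) would then propagate such bounds to $C_{out}$ and produce \eqref{decay.intro}. Since the charged late-time tail theory is far less developed than its uncharged counterpart (\cite{PriceLaw,twotails,Gautam}), even conditioning on WCC this remains substantial.

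Granting (i), step (ii) is immediate from Theorem~\ref{thm.cond.intro}. For step (iii): across $\CH$, I would use the weak-null-singularity result for \eqref{1.1}--\eqref{5.1} (Theorem~\ref{CH.thm.SS}), by which under the generic event-horizon assumptions the renormalized Hawking mass diverges on $\CH$ while $r$ remains bounded below; since the Misner--Sharp mass controls a Coulomb-type component of the Riemann tensor $\sim 2m/r^3$, the Kretschmann scalar blows up along causal geodesics crossing $\CH$, ruling out a $C^2$ extension there. Across $\mathcal{S}$, since $r\to0$ and the metric is asymptotic to a Kasner metric of exponents $(1-2p,p,p)$ with $p\approx v^{-1}>0$ by \eqref{p.est.intro}, one has $(1-2p)^2+2p^2<1$, so the Kasner spacetime is genuinely singular and its Kretschmann scalar diverges like a negative power of the proper time to $\mathcal{S}$; the quantitative control of this divergence \emph{uniformly up to the transition sphere} $\CH\cap\mathcal{S}$, where $p\to0$, is exactly what the new estimates of \cite{bif} provide, and it yields $C^2$-inextendibility across $\mathcal{S}$ and through a neighborhood of the corner. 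Finally, a $C^2$ extension through a neighborhood of $i^+$ would restrict to one across $\CH$ (and near $i^+$ the spacetime is, for the solutions of interest, asymptotically Reissner--Nordström or Schwarzschild, which is itself $C^2$-inextendible there), so none exists. Assembling the cases $q_0=0$ and $q_0\neq0$ then gives Conjecture~\ref{SCC.conj}, modulo the WCC-and-generic-tails input of step (i), which is the crux.
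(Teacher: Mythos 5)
The statement you were asked to prove is a \emph{conjecture}, and the paper does not prove it: immediately after stating it, the paper says ``Conjecture~\ref{SCC.conj} remains open at present,'' and its actual proved content in this direction is the conditional reduction (Theorem~\ref{thm.cond.intro}) together with Theorem~\ref{inext.thm}, i.e.\ that any spacetime satisfying the decay hypotheses \eqref{decay.intro} with no locally naked singularity is $C^2$-future-inextendible (Kretschmann blow-up on $\mathcal{S}=\{r=0\}$, mass inflation on $\CH$ via \cite{Moi4}). Your roadmap reproduces exactly this reduction and correctly identifies the same irreducible gap the paper points to — establishing Weak Cosmic Censorship and a generic late-time tail theory for the charged field $q_0\neq 0$ so that the hypotheses of Theorem~\ref{thm.cond.intro} hold generically — so as a reduction it matches the paper's own position; just be aware that it is not, and is not presented by the paper as, a proof. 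Two minor cautions: the dispersive branch of your case analysis is not ``trivial'' (future causal geodesic completeness does not by itself rule out $C^0$- or $C^2$-extensions; you need a Galloway--Ling--Sbierski-type inextendibility result there, and excluding naked singularities visible from infinity is again WCC), and in the one-ended case the relevant regime is necessarily $q_0\neq 0$ since $F\equiv 0$ forces $\CH=\emptyset$, so the uncharged Christodoulou theory only covers a degenerate sub-case of the conjecture as stated.
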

	Conjecture~\ref{SCC.conj} remains open at present. The charged spacetimes of Theorem~\ref{OS.thm.intro} are smoothly extendible across an open subset of $\CH$ and thus violate the statement of Conjecture~\ref{SCC.conj}, but as we explained, they are non-generic. It is instructive to notice, however, that  the spacetimes  constructed in Theorem~\ref{thm.II} and Theorem~\ref{thm.III} (more generally, those of Theorem~\ref{thm.cond.intro}) are $C^2$-inextendible, and thus respect the paradigm of Conjecture~\ref{SCC.conj}.
	
	\begin{thm}\label{inext.thm}
		The one-ended or two-ended asymptotically flat spacetimes obeying the assumptions of  Theorem~\ref{thm.cond.intro} are 	$C^2$-(future)-inextendible. In particular, the one-ended examples of Theorem~\ref{thm.II} and the two-ended examples of Theorem~\ref{thm.III} are 	$C^2$-(future)-inextendible.
	\end{thm}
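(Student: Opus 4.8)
The plan is to argue by contradiction along the classical lines for $C^2$-inextendibility of black hole spacetimes, the genuinely new input being the classification of the terminal boundary supplied by Theorem~\ref{thm.cond.intro} together with the degenerating Kasner behaviour of Theorem~\ref{thm.I}. Suppose $(\tilde{\mathcal{M}},\tilde g)$ were a $C^2$ extension of the maximal globally hyperbolic development $(\mathcal{M},g)$, with isometric embedding $\iota:\mathcal{M}\hookrightarrow\tilde{\mathcal{M}}$ onto a proper open subset, and fix $p\in\partial\iota(\mathcal{M})$. Working in a convex normal neighbourhood of $p$ in $\tilde{\mathcal{M}}$, a standard argument produces a future-directed timelike geodesic $\gamma:[0,a)\to\mathcal{M}$ of finite affine length ($a<\infty$), future-inextendible in $\mathcal{M}$, such that $\iota\circ\gamma$ extends continuously to $[0,a]$ with $\iota\circ\gamma(a)=p$. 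Since $\tilde g$ is $C^2$, every polynomial curvature invariant of $\tilde g$, and indeed every sectional curvature, is continuous on $\tilde{\mathcal{M}}$, hence uniformly bounded on a compact neighbourhood of $p$, hence bounded along $\gamma$ as the parameter tends to $a$. The whole proof then reduces to showing that no such $\gamma$ can exist, by exhibiting a geometric curvature quantity of $(\mathcal{M},g)$ — which $\gamma$ inherits from $(\iota(\mathcal{M}),\tilde g|_{\iota(\mathcal{M})})$ — that blows up along $\gamma$.

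First I would localise $p$. As $\gamma$ is future-inextendible in $\mathcal{M}$ and has finite affine length, its projection to the two-dimensional quotient leaves every compact set; completeness of the asymptotically flat end(s), regularity of the center $\Gamma$ away from $\mathcal{S}$ (one-ended case), and the Penrose diagram of Theorem~\ref{thm.cond.intro} forbid convergence to $i^+$, $\mathcal{I}^+$, $i^0$, or the bifurcation sphere, so $\gamma$ must approach the terminal boundary $\mathcal{B}=\CH\cup\mathcal{S}$: either $r\circ\gamma\to r_\infty>0$ with $\gamma$ entering every neighbourhood of a point of the interior of $\CH$, or $r\circ\gamma\to 0$, so that $\gamma$ approaches $\mathcal{S}$ (possibly at a corner sphere $\CH\cap\mathcal{S}$). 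In the former case, the hypothesis that the Hawking mass is infinite on $\CH$, together with the monotonicity of the renormalised Hawking mass, forces the Hawking mass $m$ to tend to $+\infty$ along $\gamma$; since the sectional curvature of $\tilde g$ restricted to the $2$-planes tangent to the orbit spheres of $\mathcal{M}$ equals a fixed multiple of $m/r^3$, and $r$ stays bounded below, this sectional curvature — evaluated at points of $\iota(\mathcal{M})$ accumulating at $p$ — diverges, contradicting its boundedness near $p$. (This step re-derives, and relies on, the known $C^2$-singularity of $\CH$, cf.\ \cite{Moi,breakdown}.)

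It remains to treat the case $r\circ\gamma\to 0$, for which I would directly exhibit a polynomial curvature invariant of $\tilde g$ diverging along $\gamma$ — this is the sense in which $\mathcal{S}=\{r=0\}$ is a crushing singularity. Near the center $\Gamma$ (one-ended case) the metric is asymptotically FLRW with exponents $(\tfrac13,\tfrac13,\tfrac13)$; there $F\equiv 0$ and the energy density $\sim m/r^3\sim t^{-2}$ (equivalently, the orbit-sphere sectional curvature) blows up as the proper time $t$ to $\mathcal{S}$ tends to $0$. Away from $\Gamma$, and in particular near a corner $\CH\cap\mathcal{S}$, one has $F\not\equiv 0$ with $e\neq 0$ and the degenerating Kasner asymptotics of exponents $(1-2p(u,v),p(u,v),p(u,v))$, $p(u,v)\approx v^{-1}$, of Theorem~\ref{thm.I} together with its companion estimates on $\phi$; there the Kretschmann scalar receives a Maxwell contribution and a Kasner/scalar-field contribution, each blowing up as $r\to 0$ but with a coefficient that degenerates as $v\to+\infty$ because $p(u,v)\to 0$, and the task is to show that along $\gamma$ their sum genuinely diverges. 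This last sub-case, the degenerate corner, is the \emph{main obstacle}: since $r$ decays only like the small power $t^{p(u,v)}$ of the proper time $t$ to $\mathcal{S}$ while $\gamma$ runs off to $v=+\infty$, one needs the sharp asymptotics of Theorem~\ref{thm.I} and quantitative control of the rate at which $\gamma$ reaches $\CH\cap\mathcal{S}$ in the $v$-variable in order to exclude a cancellation keeping all curvature invariants bounded; the earlier steps, by contrast, are soft structural statements or restatements of known results. In every case the divergence of a curvature invariant of $\tilde g$ along $\gamma$ contradicts its continuity at $p$, so $(\mathcal{M},g)$ admits no $C^2$ extension. The two-ended spacetimes of Theorem~\ref{thm.III} are handled identically, simply dropping the FLRW sub-case near $\Gamma$.
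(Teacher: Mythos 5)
Your overall strategy coincides with the paper's, which is essentially a two-step argument: by Theorem~\ref{thm.cond.intro} the terminal boundary consists only of $\CH$ and $\mathcal{S}=\{r=0\}$; there is no $C^2$-extension through $\CH$ because of mass inflation (citing \cite{Moi4}, exactly your orbit-sphere sectional-curvature argument with $\rho\to\infty$ and $r$ bounded below), and no $C^2$-extension through $\mathcal{S}$ because a curvature invariant is infinite on $\{r=0\}$ (citing \cite{Kommemi,JonathanStab}). Your geodesic set-up and the localisation of the boundary point to $\CH\cup\mathcal{S}$ are a more explicit rendering of the same scheme.

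However, as written your proof has a gap in precisely the sub-case you single out as the ``main obstacle'': for $r\circ\gamma\to 0$ near the corner $\CH\cap\mathcal{S}$ you only describe what would need to be shown (that the Maxwell and Kasner/scalar-field contributions to the Kretschmann scalar do not cancel as the exponents degenerate) and do not carry it out. The good news is that this obstacle is illusory, and the resolution requires neither the sharp Kasner asymptotics of Theorem~\ref{thm.I} nor any control on how $\gamma$ approaches the corner in the $v$-variable. The singularity $\mathcal{S}$ is approached from within the trapped region (this is part of the quantitative output of Theorem~\ref{main.thm}, e.g.\ $[u_0,\uch]\times[v_0,+\infty)\subset\mathcal{T}$ and its propagation by \eqref{RaychV}), where $1-\frac{2\rho}{r}<0$, i.e.\ $\rho>\frac{r}{2}$. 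Hence the very same orbit-sphere sectional curvature $\frac{2\rho}{r^{3}}$ that you used on $\CH$ satisfies $\frac{2\rho}{r^{3}}>\frac{1}{r^{2}}\to+\infty$ as $r\to 0$, uniformly and with no possibility of cancellation, since this is a single geometric quantity rather than a sum of competing terms. This soft trapped-region estimate is what underlies the paper's citation of \cite{Kommemi,JonathanStab}, and inserting it closes your argument in all sub-cases with $r\to 0$, including the degenerate corner and the FLRW region near $\Gamma$.
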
 Therefore, the final step in proving Conjecture~\ref{SCC.conj} is to show that generic one-ended spherically symmetric solutions of \eqref{1.1}--\eqref{5.1} uphold the assumptions of Theorem~\ref{thm.cond.intro}, which is indeed conjectured (recall Section~\ref{collapse.section}).

	\subsection{Qualitative results on the terminal boundary and Cauchy horizon breakdown}\label{qual.section}
	\paragraph{Mathematical setting of gravitational collapse} Gravitational collapse  (see \cite{review}, Section 5 or \cite{bif}, Section 1.3-1.4) is modeled by the MGHD of asymptotically flat initial data $(\Sigma,g)$ with one-end, assuming that $\Sigma$ is diffeomorphic to $\mathbb{R}^3$ and thus has a center $\Gamma$ corresponding to the origin of  $\mathbb{R}^3$. We also assume that $\Sigma$ is free from anti-trapped surfaces or trapped surfaces. 
	Note that the spacetimes constructed in Theorem~\ref{thm.II} are of gravitational collapse type, while those of Theorem~\ref{thm.III}, which are two-ended asymptotically flat, are not.	\begin{figure}\begin{center}	\includegraphics[width=105 mm, height=60 mm]{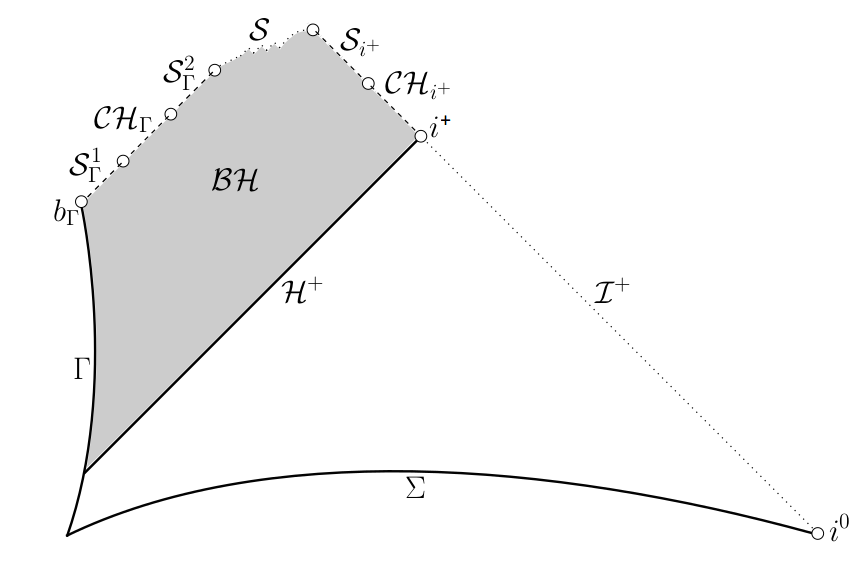}\caption{General Penrose diagram of a one-ended spherically symmetric black hole solution after Theorem~\ref{Kommemi.thm}.}\label{fig:one-ended}\end{center}\end{figure}	\paragraph{A priori characterization of the black hole terminal boundary}
	For spherically symmetric black hole solutions of \eqref{1.1}-\eqref{5.1}, it is possible to characterize the future boundary of the MGHD (also called terminal boundary) \cite{Mihalis1,Kommemi}: $\mathcal{B}'$ is a component of the terminal boundary if either of the following conditions is satisfied: \begin{enumerate}[A.]
		\item (\textbf{Crushing singularity}). The area-radius $r$ extends to $0$ on  $\mathcal{B}'$ (corresponds to $\mathcal{S}$,  $\mathcal{S}^{1}_{\Gamma}$,  $\mathcal{S}^{2}_{\Gamma}$ or  $\mathcal{S}_{i^+}$ in Figure~\ref{fig:one-ended}).
		\item (\textbf{Cauchy horizon from infinity}). $\mathcal{B}'$ is a null segment emanating from timelike infinity $i^+$ (corresponds to $\CH$ or  $\mathcal{S}_{i^+}$ in Figure~\ref{fig:one-ended}).
		\item\label{sing.C} (\textbf{Locally naked singularity}) [Only if $\Gamma\neq \emptyset$]. $\mathcal{B}'$ is a null segment emanating from the center $\Gamma$ (corresponds to $\mathcal{CH}_{\Gamma}$,  $\mathcal{S}^{1}_{\Gamma}$ or $\mathcal{S}^{2}_{\Gamma}$  in Figure~\ref{fig:one-ended}).
	\end{enumerate}
	This is formalized by Theorem~\ref{Kommemi.thm} (originally from \cite{Kommemi})  in the one-ended case, and Theorem \ref{Dafermos.thm} (originally from \cite{Mihalis1})  in the two-ended case. Note that, in the two-ended case, there is no center (i.e., $\Gamma=\emptyset$) and therefore Case~\ref{sing.C} is impossible.  In the one-ended case, however, Case~\ref{sing.C} is possible (Remark~\ref{loc.naked.remark}) although we recall Christodoulou proved  that Case~\ref{sing.C} is non-generic \cite{Christo3}  for \eqref{1.1}-\eqref{5.1} with $F\equiv 0$ in spherical symmetry.
	
	\paragraph{Breakdown of the Cauchy horizon: old and new} A gravitational collapse scenario in which a Cauchy horizon from infinity $\CH$  is the only terminal boundary component implies in particular that $\mathcal{S}=\emptyset$ in Figure~\ref{fig:one-ended}, corresponding to the absence of any spacelike singularity. Such examples where the Cauchy horizon closes-off the spacetime, as depicted in Figure~\ref{fig:disproof}, can be rigorously constructed if the Cauchy horizon $\CH$ is non-singular \cite{KehleUnger}. However, the author proved in \cite{Moi,Moi4} that $\CH$ is (weakly) singular under generic assumptions on the event horizon and subsequently in \cite{breakdown} that, due to this singularity, the Cauchy horizon cannot close-off the spacetime, i.e., $\mathcal{S}\cup \mathcal{S}^{1}_{\Gamma} \cup \mathcal{CH}_{\Gamma}\neq \emptyset$ in Figure~\ref{fig:one-ended}. This result--\emph{the breakdown of weak null singularities}--is the starting point of the investigation of the transition between null and spacelike singularities initiated in our   work \cite{bif}.

	\begin{figure}[H]	\begin{center}
			\includegraphics[width=62.5 mm, height=50 mm]{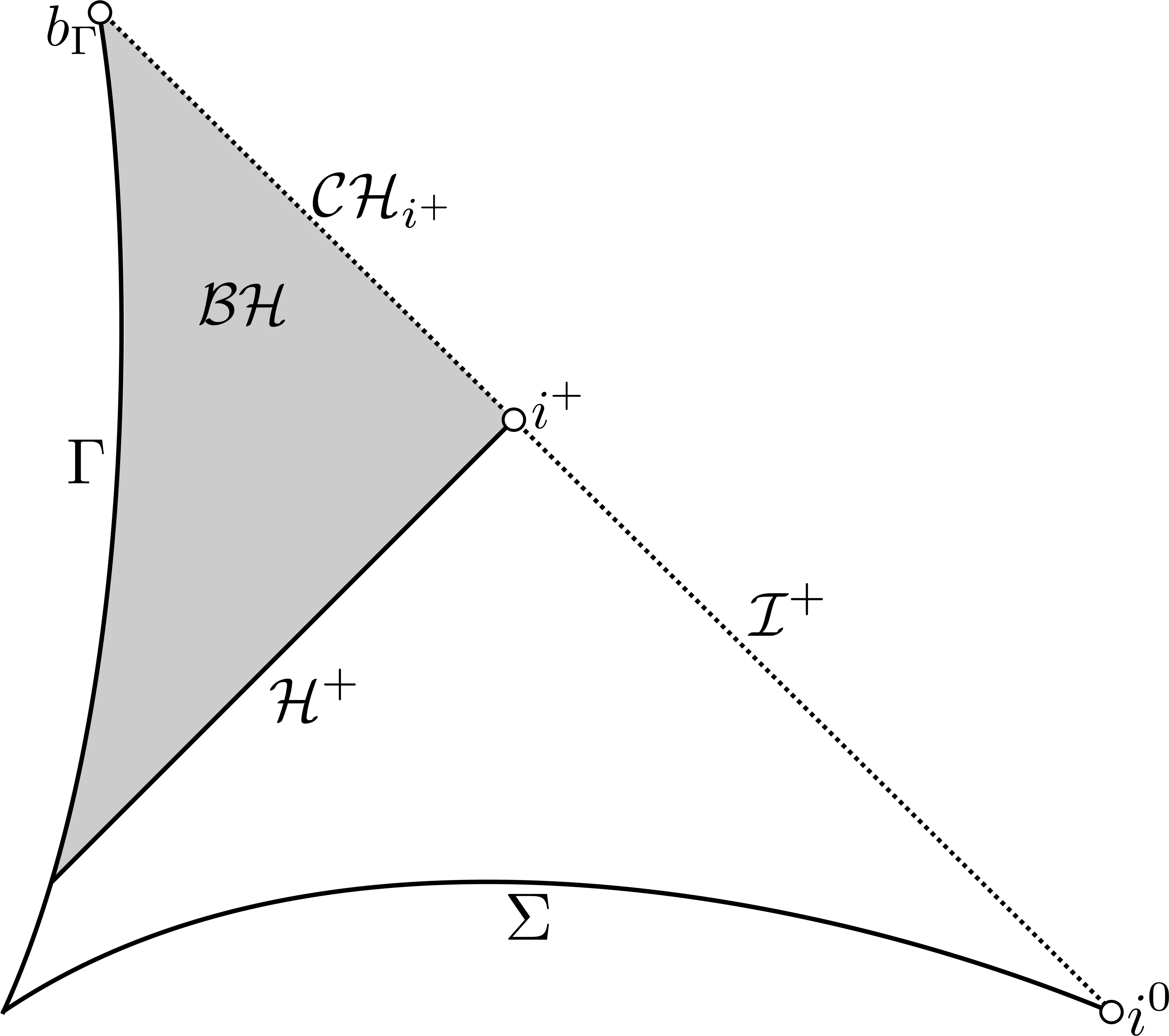}
		\end{center}
		\caption{The impossible Penrose diagram if $\CH$ is weakly singular, as a consequence of the result of \cite{breakdown}.}\label{fig:disproof}\end{figure}
	
	In \cite{breakdown}, the breakdown of the Cauchy horizon was proved using an argument by contradiction which did not provide any  quantitative estimates on the spacelike singularity $\mathcal{S}$.  It can be summarized as follows: \begin{itemize}
		\item The Cauchy horizon $\CH$ is surrounded by the trapped region, due to the blow-up of the Hawking mass. 
		\item Assuming that $\CH$ closes off the spacetime, i.e., that the Penrose diagram is given by Figure~\ref{fig:disproof}, there exists a sequence of apparent horizon spheres $(a_n)_{n\in \mathbb{N}}$  converging to $b_{\Gamma}$, the  endpoint of the center $\Gamma$. This is proven by interpolation between the center $\Gamma$,  within the regular region, and $\CH$, which is trapped.
		\item  A charge (denoted $Q$) to mass (denoted $\varpi$) ratio inequality of the form $|Q|\ll \varpi$ holds at the apparent horizon $\mathcal{A}$.  The proof of this inequality uses  the ingoing Raychaudhuri equation (see \cite{breakdown}, Section 4.1.2).
		\item The charge to mass ratio inequality shows that the ingoing future of the apparent horizon $\mathcal{A}$ is immediately inside the trapped region, which is a contradiction if $\mathcal{A}$ is chosen to be ``the left-most apparent horizon''.
	\end{itemize}

	In 
	our previous work \cite{bif},  we have  developed quantitative  estimates under the Cauchy horizon to handle the transition with a spacelike singularity, in order to prove Theorem~\ref{thm.I}. In the present manuscript, we will use  these estimates in the proof of Theorem~\ref{thm.cond.intro} and also offers a shorter, more constructive proof of the breakdown of the Cauchy horizon originally proven in \cite{breakdown}. Essentially, we will prove the following (see Proposition~\ref{apriori.prop1}): \begin{thm}\label{breakdown.thm.new}
		Let $\CH$ be a Cauchy horizon from infinity such that, in Eddington--Finkelstein coordinates, the Hawking mass blows up at the following exponential rate, where $K_-<0$ and $s>\frac{1}{2}$: there exists $u$ such that \begin{equation}\label{mass.inflation.intro}
			\varpi(u,v) \gtrsim_u e^{ 2|K_-| v} v^{-2s}.
		\end{equation} Then, there exists an outgoing  cone $C_0$ intersecting $\CH$, such that the entire causal rectangle generated by $C_0$ and $\CH$ is within the trapped region. In particular, this causal rectangle does not intersect the center $\Gamma$.
	\end{thm}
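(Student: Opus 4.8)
The plan is to work entirely within the double-null gauge $(u,v)$ adapted to the black hole interior, with the Cauchy horizon at $\{v=+\infty\}$, and to track the sign of $\partial_u r$ and $\partial_v r$ to locate the trapped region $\mathcal{T}=\{\partial_u r<0,\ \partial_v r<0\}$. Recall the spherically symmetric wave-type system for $r$: in the standard variables one has $\partial_u\partial_v r^2 = -\Omega^2(1-\mu)$-type identities, and most importantly the Hawking mass satisfies $1-\frac{2\varpi}{r}+\frac{Q^2}{r^2}=-\frac{4\partial_u r\,\partial_v r}{\Omega^2}$, so that on an outgoing cone $C_u$ where $\partial_u r<0$, the quantity $\partial_v r$ has the same sign as $1-\frac{2\varpi}{r}+\frac{Q^2}{r^2}$. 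The strategy is therefore: (i) use the exponential mass blow-up \eqref{mass.inflation.intro} together with the boundedness of $r$ and $Q$ near $\CH$ to show that $1-\frac{2\varpi}{r}+\frac{Q^2}{r^2}<0$ on a full sub-rectangle adjacent to $\CH$; (ii) propagate the sign $\partial_u r<0$ into that rectangle using the Raychaudhuri equation in the $u$-direction; and (iii) combine these to conclude the rectangle lies in $\mathcal{T}$, which by the maximum-principle structure of spherically symmetric black holes (the center $\Gamma$ is always in the regular region $\mathcal{R}=\{\partial_u r<0,\ \partial_v r>0\}$, cf. Theorem~\ref{Kommemi.thm}) cannot meet $\Gamma$.

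In more detail, first I would fix $u_0$ for which \eqref{mass.inflation.intro} holds, and choose $v_0$ large. On $C_{u_0}$, for $v\geq v_0$, the bound $\varpi(u_0,v)\gtrsim e^{2|K_-|v}v^{-2s}\to+\infty$ while $r$ stays bounded (it converges to the Cauchy horizon radius $r_{\CH}>0$) and $Q$ is bounded, so $1-\frac{2\varpi}{r}+\frac{Q^2}{r^2}\to-\infty$; in particular $\partial_v r<0$ strictly on $C_{u_0}\cap\{v\geq v_0\}$, i.e. this segment of $C_{u_0}$ is trapped or marginally trapped — more precisely, since additionally $\partial_u r<0$ there (the outgoing cone lies in the black hole interior, to the future of the apparent horizon), the segment is in $\mathcal{T}$. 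Next I would set $C_0:=C_{u_0}\cap\{v\geq v_0\}$ and consider the causal rectangle $\mathcal{D}=[u_0,u_{\CH})\times[v_0,+\infty)$ it generates with $\CH$. The key propagation step is: the monotonicity $\partial_u\varpi\geq 0$ in the region where $\partial_v r<0$ (ingoing Raychaudhuri, $\partial_u(\partial_v r/\Omega^2)$ has a sign, equivalently $\partial_u\varpi = $ a manifestly-signed flux when $\partial_v r\le 0$) forces $\varpi$ to remain large, hence $1-\frac{2\varpi}{r}+\frac{Q^2}{r^2}<0$, on all of $\mathcal{D}$ once $\partial_u r<0$ is known; and $\partial_u r<0$ propagates forward in $v$ from $C_0$ because $\partial_v\partial_u r$ has the favorable sign precisely in the trapped/antitrapped dichotomy — so the two conditions $\{\partial_u r<0\}$ and $\{\partial_v r<0\}$ bootstrap each other throughout $\mathcal{D}$. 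Finally, since $\mathcal{D}\subset\mathcal{T}$ and $\Gamma\subset\mathcal{R}$ with $\mathcal{R}\cap\mathcal{T}=\emptyset$, the rectangle does not intersect the center, which is the claim.

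The main obstacle I anticipate is step (ii)–(iii): making the bootstrap between the signs of $\partial_u r$ and $\partial_v r$ rigorous across the \emph{entire} semi-infinite rectangle $\mathcal{D}$ rather than just a neighborhood of $C_0$. One has to rule out that, as $u$ increases toward the right edge of $\mathcal{D}$ (which may be a null ray emanating from $i^+$, or from $\Gamma$, depending on the global picture one is trying to exclude), the quantity $\partial_v r$ changes sign back to positive — this is exactly the scenario where $\CH$ could ``close off'' as in Figure~\ref{fig:disproof} or where an ingoing cone $\mathcal{S}_{i^+}$ could appear. The resolution is that the exponential-in-$v$ lower bound \eqref{mass.inflation.intro}, once established on $C_0$, is inherited (up to the $u$-dependent constant) on every later outgoing cone in $\mathcal{D}$ via $\partial_u\varpi\ge 0$, and an exponentially large $\varpi$ simply cannot be compensated by the bounded terms $\frac{Q^2}{r^2}$ and the factor $\frac1r$ unless $r\to 0$ — but $r\to 0$ would mean hitting $\mathcal{S}$, not escaping $\mathcal{T}$ into $\mathcal{R}$. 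So the dangerous competitor is not the regular region but a crushing singularity, and indeed the theorem only claims non-intersection with $\Gamma$, which is precisely what this sign analysis delivers; one must be careful to phrase the conclusion so as not to overclaim about $\mathcal{S}$. A secondary technical point is confirming that $r$ (equivalently the Cauchy-horizon radius) stays uniformly bounded below away from zero on $C_0$ itself, so that the division by $r$ in $1-\frac{2\varpi}{r}+\frac{Q^2}{r^2}$ is harmless — this should follow from the $C^0$-stability of the Cauchy horizon recalled in Theorem~\ref{CH.thm.SS}, or can be taken as part of the standing hypotheses on $\CH$ being a genuine Cauchy horizon from infinity.
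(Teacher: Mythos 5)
Your route is genuinely different from the paper's, and it has a gap at precisely the step you flag as the main obstacle. The paper proves this theorem by invoking Proposition~\ref{apriori.prop1} (Proposition 4.2 of \cite{bif}), whose mechanism is a quantitative bootstrap on $-r\partial_v r$ rather than on $\varpi$: one integrates $\partial_u(-r\partial_v r)=\frac{\Omega^2}{4}\bigl(1-\frac{Q^2}{r^2}\bigr)$ (equation \eqref{Radius3}) in $u$ across the rectangle and exploits the fact that $\Omega^2\lesssim e^{2K_-v+Cv^{1-\eta}}$ is \emph{exponentially} small in $v$ while $-r\partial_v r(u_0,v)\gtrsim v^{-2s}$ is only polynomially small, so the $u$-integral can never flip the sign of $\partial_v r$; the same bootstrap yields the lower bound $r^2(u,v)\gtrsim r_{CH}^2(u)+v^{1-2s}$ of \eqref{r.est.prelim}. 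Your proposal replaces this by $u$-monotonicity of $\varpi$ combined with the sign identity \eqref{murelation}, and that replacement does not close.

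Two concrete problems. First, $\partial_u\varpi\geq 0$ in the trapped region is simply false for a charged scalar field: by \eqref{massUEinstein}, $\partial_u\varpi$ contains the unsigned term $-q_0Qr\,\Im(\phi\overline{D_u\phi})$ in addition to the signed flux, so the monotonicity you invoke is only available for $q_0=0$. Second, and more seriously, even granting $\varpi(u,v)\geq\varpi(u_0,v)$, largeness of $\varpi$ alone does not give $1-\frac{2\varpi}{r}+\frac{Q^2}{r^2}<0$: that inequality is $r_-(\varpi,Q)<r<r_+(\varpi,Q)$, and the lower root $r_-\approx\frac{Q^2}{2\varpi}$ is not vacuous when $Q\neq 0$ and $r$ is small. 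The radius does degenerate to $0$ at the future endpoint of $\CH$ (Proposition~\ref{apriori.prop2}), and ruling out that an interior cone $u>u_0$ exits $\T$ through the inner root $r=r_-(\varpi,Q)$ before reaching $v=+\infty$ is the entire content of the theorem (that the trapped rectangle does not shrink toward the endpoint of $\CH$). Note that \eqref{Radius3} shows $\partial_u(-r\partial_v r)<0$ wherever $r<|Q|$, i.e. the evolution actively \emph{untraps} spheres in exactly the regime where your sign argument degenerates; the only thing preventing this is the exponential smallness of $\Omega^2$, which your proposal never uses or controls. Your fallback that ``$r\to 0$ would mean hitting $\mathcal{S}$, not escaping $\T$'' is not a substitute: a priori the rectangle could be truncated by $\{r=0\}$ reaching down to finite $v$ on cones $u>u_0$, in which case the \emph{entire} causal rectangle generated by $C_0$ and $\CH$ would not exist inside the spacetime and the statement would fail. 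Closing the argument requires the quantitative bounds on $\Omega^2$ and $-r\partial_v r$ throughout the rectangle — i.e. the full hypotheses \eqref{hyp1} and the bootstrap of Proposition~\ref{apriori.prop1} — not just the mass blow-up \eqref{mass.inflation.intro} on a single cone.
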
	\begin{figure}[H] \begin{center}\includegraphics[width=50 mm, height=50 mm]{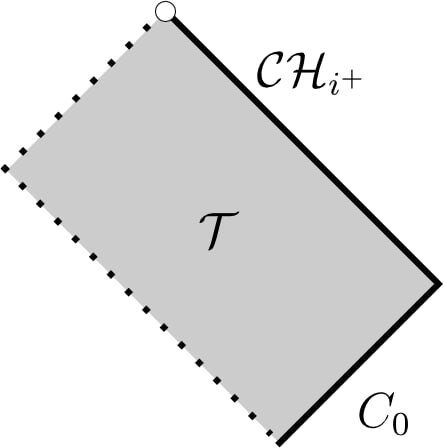}				
			\caption{The trapped region $\mathcal{T}$ under the Cauchy horizon $\CH$, extending to its future end-sphere.}
			\label{fig:CH}	\end{center}\end{figure}
	The major difference between Theorem~\ref{breakdown.thm.new} and the older strategy of \cite{breakdown} is that the size of the trapped region does not shrink towards the endpoint of the Cauchy horizon $\CH$, which is remarkable and immediately disallows a spacetime geometry where $\CH$ closes off the spacetime as depicted in Figure~\ref{fig:disproof}. The required assumption \eqref{mass.inflation.intro} is, however, more quantitative and demanding than the qualitative Hawking mass blow-up from \cite{breakdown}. Theorem~\ref{breakdown.thm.new} will be applied to the one-ended global setting as  part of Theorem~\ref{main.thm.global.i}.
	
	Unexpectedly, Theorem~\ref{breakdown.thm.new} will also play a crucial in constructing examples of \emph{two-ended spacetimes} with a spacelike singularity $\mathcal{S}\neq \emptyset$, an essential step in the proof of Theorem~\ref{thm.III}, see Section~\ref{section.global.uncond2}.

	\subsection{A novel gluing theorem for spacetimes relaxing  to Reissner--Nordstr\"{o}m}\label{gluing.intro}
	
	Gluing methods are routinely used in General Relativity to construct initial data on a spacelike hypersurface that satisfy  the constraints equations imposed by \eqref{1.1}, see e.g., \cite{BieriChruscielYau,Chrusciel.gluing} or the reviews 	\cite{Carlotto.review,Corvino.review} for further details and references. In the black hole setting, we mention the notable results 	 \cite{ChruscielDelay,Corvino,CorvinoSchoen} allowing to glue a general solution to the Kerr black hole outside of a compact region and recent progress on gluing two black hole initial data sets \cite{Hintzgluing1}. See also \cite{LC3,LC2,LC1} for a novel class of spacelike gluing results relying on the flexibility of the linearized theory.
	In other recent developments, Aretakis, Czimek and Rodnianski  designed a novel so-called \emph{characteric gluing approach} \cite{ACR3,ACR2,ACR1} allowing to construct solutions to the null constraint equations imposed by \eqref{1.1} on a null cone in the near-Minkowski perturbative regime, see also the more recent work of Czimek and Rodnianski on obstruction-free characteristic gluing \cite{CR}. Even more recently, the characteristic gluing approach was used by Kehle and Unger \cite{KehleUnger3,KehleUnger} in the context of black holes, who obtained event horizon gluing results.
	
	\noindent These	previous gluing results, however, are either limited to the perturbative near-Minkowski regime or, in the black hole setting, to the black hole exterior region near the asymptotically flat end, with the exception of \cite{KehleUnger3,KehleUnger} that address the event horizon. In the present manuscript, we require in contrast a gluing result deep inside the trapped region of the black hole, in the interior region. The main novelty of our approach is to rely on spherical symmetry and combine methods from spacelike and characteristic gluing to glue \emph{any uncharged regular sphere to a dynamical black hole event horizon converging to Reissner--Nordstr\"{o}m at any prescribed rate}.\\

	\noindent	The main application of our gluing theorems is to construct a one-ended asymptotically black hole such that: 
	\begin{itemize}
		\item The terminal boundary near $\Gamma$ is $\mathcal{S}$ spacelike. In particular, there is no locally naked singularity: $\mathcal{CH}_{\Gamma} = \emptyset$. In our case, we choose $\mathcal{S}$ modeled after the FLRW spacelike singularity, which is spatially-homogeneous.
		\item The event horizon  is transversally 
		\blue{$C^k$} regular \blue{for arbitrarily large $k \in \mathbb{N}$}, and converges to a  Reissner--Nordstr\"{o}m event horizon at a \emph{freely prescribed rate}, governed by a freely prescribed scalar field.
		\item The assumptions of Theorem~\ref{thm.I}, most importantly \eqref{decay.intro}, are satisfied under the Cauchy horizon $\CH$.
	\end{itemize}\noindent To fix the terminology, let us recall the FLRW metric, a spatially-homogeneous solution of \eqref{1.1}--\eqref{5.1} with $F\equiv 0$ and  initial data diffeomorphic to $\RR^3$ (note, however, that it is obviously not asymptotically flat): \begin{equation}\label{FLRW.intro}
		g=  -dt^2 + a^2(t) \left( d\rho^2 + \rho^2 d\sigma_{\mathbb{S}^2}\right).
	\end{equation} Spacetimes of the form \eqref{FLRW.intro} are analyzed in Proposition~\ref{FRLW.prop}, where it is shown that a spacelike singularity is formed in finite time $t=T_S$.
	Then, we recall the Reissner--Nordstr\"{o}m metric, a stationary solution of \eqref{1.1}--\eqref{5.1} in spherical symmetry   \begin{equation}\label{RN}
		g_{RN} = -(1-\frac{2M}{r}+ \frac{e^2}{r^2}) dt^2+ (1-\frac{2M}{r}+  \frac{e^2}{r^2})^{-1} dr^2+ r^2 ( d\theta^2+ \sin^2(\theta) d\varphi^2),
	\end{equation} corresponding to $\phi \equiv 0 $, and $F= \frac{e}{r^2} dt \wedge dr$. In the sub-extremal case $0<|e|<M$, \eqref{RN} is a two-ended black hole solution whose MGHD terminates a smooth Cauchy horizon $\CH=\{r= M-\sqrt{M^2-e^2}\}$.
	
	We will now describe our new gluing strategy and sketch its main steps below, comparing it to the existing literature. The reader can also consult  a more detailed outline  of the construction in Section~\ref{outline.section}.

	\subsubsection{Spatial gluing of a regular uncharged sphere to an apparent horizon}\label{intro.glue1}
	
	The first step is to start with a regular, uncharged sphere $\mathbb{S}_R$, which we will eventually choose to be FLRW from \eqref{FLRW.intro}. To get into the trapped region, we first want to glue it to an uncharged apparent horizon using spacelike gluing techniques. Note that the spacelike constraints induced by \eqref{1.1}--\eqref{5.1} in spherical symmetry form a system of ODEs, detailed in Section~\ref{gluing.section}. We show the following results (see already Proposition~\ref{spacelike.gluing.thm}): \begin{itemize}
		\item For any $k \in \mathbb{N}$, one can glue $\mathbb{S}_R$ spatially to any apparent horizon sphere $\mathbb{S}_A$ as a $C^{k}$ solution providing \begin{equation}\label{gluing.condition.intro}
			1<	[	R_A |\rd_v \phi|_{|\mathbb{S}_A}]^2 < 1+\frac{1}{k-1},
		\end{equation} where $R_A$ is the area-radius of $\mathbb{S}_A$, and $v$ is chosen so that $\mathbb{S}_A$  is lapse-normalized (see Section~\ref{gluing.section}). Moreover, this can be done through a spacelike hypersurface $\blue{\Sigma_G}$, where $\blue{\Sigma_G}-\mathbb{S}_A$ is in the regular region.
		
	\end{itemize}
	
	\noindent Therefore, the above  gluing procedure is flexible, since the apparent horizon sphere $\mathbb{S}_A$ is essentially free, except for the condition \eqref{gluing.condition.intro}. In particular, the scalar field ingoing derivatives can be prescribed arbitrarily at $\mathbb{S}_A$.

	\subsubsection{Characteristic gluing of an apparent horizon to a Schwarzschild trapped surface}\label{intro.gluing2}
	
	As a next step, we glue $\mathbb{S}_A$ to a Schwarzschild trapped sphere $\mathbb{S}_S^{\T}$. However, this is not possible for any arbitrary choice of apparent horizon  sphere $\mathbb{S}_A$.  We prove the following result (see already Proposition~\ref{uncharged.null.gluing.prop}):
	\begin{itemize}
		\item We can \emph{choose} $\mathbb{S}_A$ so that it can be characteristically glued to a Schwarzschild trapped sphere $\mathbb{S}_S^{\T}$.
	\end{itemize}
	We note that, while it is  difficult to glue two arbitrary spheres in a characteristic manner, it is much easier if one has flexibility on one of the two spheres, which is what we exploit here. For comparison,  Kehle--Unger \cite{KehleUnger} achieve gluing from a Minkowski regular sphere to the Schwarzschild event horizon, which already requires the use of the (difficult, albeit classical) Borsuk--Ulam Theorem. Our strategy, on the other hand, allows to \blue{deform the spacelike-characteristic gluing  into purely spacelike gluing and} prove the following (see Theorem~\ref{uncharged.gluing.thm}):\begin{itemize}
		\item  one can glue $\mathbb{S}_R$ spatially to a Schwarzschild \emph{trapped surface} $\mathbb{S}_S^{\T}$ as a $C^{k}$ solution of the constraints.
	\end{itemize}
	\noindent This  immediately proves Theorem~\ref{OS.thm.intro} in the uncharged case ($q=0$) and serves as foundation for Theorem~\ref{construction.thm}.

	\subsubsection{Characteristic gluing of Schwarzschild to a Reissner--Nordstr\"{o}m trapped surface} Then, we need to ``charge'' the spacetime, which we do via characteristic gluing (see already Theorem~\ref{charged.gluing.thm}):\begin{itemize}
		\item  one can glue $\blue{\mathbb{S}_S^{\T}}$ characteristically to a Reissner--Nordstr\"{o}m trapped sphere $\mathbb{S}_{RN}^{\T}$ as a $C^{k}$ solution.
	\end{itemize}\noindent In the above, one needs to assume that $\mathbb{S}_{RN}^{\T}$ is a \emph{sub-extremal}  Reissner--Nordstr\"{o}m \emph{trapped} sphere. This is to be compared to the charged gluing result of Kehle--Unger \cite{KehleUnger}, achieving to glue a Schwarzschild regular sphere to a (possibly extremal) Reissner--Nordstr\"{o}m event horizon of mass $M_f$, charge ratio $q\in (0,1]$ under the condition: \begin{equation}
		\frac{	|q_0| M_f}{q} \gg 1.
	\end{equation} In Theorem~\ref{charged.gluing.thm}, we also require an analogous condition, but we remain away from extremality ($q=1$) as such\begin{equation}
		\frac{	|q_0| M_f (1-q)}{q} \gg 1,
	\end{equation} since we carry out \emph{trapped spheres gluing}. This step of the proof is  inspired from the methods and formalism of \cite{KehleUnger}, in particular the use of the Borsuk--Ulam Theorem on the sphere, that we already mentioned previously.
	Combining this with the earlier results completes the proofs of Theorem~\ref{OS.thm.intro} (case $q\neq 0$).
	
	\subsubsection{Gluing a Reissner--Nordstr\"{o}m trapped surface to any dynamical event horizon} To prove Theorem~\ref{construction.thm}, we must go beyond spacetimes that are stationary in a neighborhood of $i^+$ like those of Theorem~\ref{OS.thm.intro}. To do this, we impose tangential event horizon data, which we trivially glue to the Reissner--Nordstr\"{o}m trapped surface, providing it is only weakly trapped. The following summarizes Proposition~\ref{EH.AF.1st.prop}: \begin{itemize}
		\item Let $\Phi_H(v)$ an arbitrary decaying profile  as $v\rightarrow+\infty$. One can glue $\mathbb{S}_{RN}^{\T}$ to an event horizon $\mathcal{H}^+$ which is asymptotically Reissner--Nordstr\"{o}m and such that the scalar field coincides with $\Phi_H$ for  large $v$, i.e.,  $$ \phi_{|\mathcal{H}^+}(v) = \Phi_H(v).$$
	\end{itemize}
	We have now constructed the black hole spacetime in Theorem~\ref{construction.thm} to the future of its event horizon $\mathcal{H}^+$.
	\subsubsection{Connecting the event horizon to an asymptotically flat end} To complete the proof of Theorem~\ref{construction.thm}, one must construct an asymptotically flat end. For this, we rely on spherical symmetry to solve \eqref{1.1}--\eqref{5.1} from left to right. The following result corresponds to  Theorem~\ref{EH.AF.thm}: \begin{itemize}
		\item For any event horizon that relaxes to  Reissner--Nordstr\"{o}m sufficiently fast, and with sufficiently small charge, one can construct a corresponding complete null infinity $\mathcal{I}^+$ and an asymptotically flat end $i^0$.
	\end{itemize}
	We	note, however, that the left to right evolution is only relevant in a neighborhood of $i^+$, while the region near spacelike infinity $i^0$ is treated by backwards-in-time evolution of \eqref{1.1}--\eqref{5.1} with prescribed scattering data at $\mathcal{I}^+$, see Section~\ref{right.section} . The proof relies on an application of the $r^p$ method inspired by the author's work \cite{Moi2}.
	
	\subsection{Scattering theory for charged scalar fields in the black hole interior}  \label{CH.section}

	Construction/gluing results such as Theorem~\ref{construction.thm} described in Section~\ref{gluing.intro} mark an important step towards the proof of Theorem~\ref{thm.II}, in constructing a spherically symmetric black hole spacetime: \begin{itemize}
		\item which is one-ended asymptotically flat (with no anti-trapped surface, and no trapped surface initially).
		\item which relaxes to a sub-extremal Reissner--Nordstr\"{o}m black hole.
		\item  which  coincides with given $\mathcal{M}_L$ near $\Gamma$. In particular, if $\mathcal{M}_L$ is FLRW, it has  a spacelike singularity $\mathcal{S}$.
	\end{itemize}

	\noindent However, it is not yet clear that the constructed spacetime even admits a non-empty null Cauchy horizon from infinity, which is crucial to apply Theorem~\ref{thm.I}.  The author previously studied  black interior dynamics for \eqref{1.1}--\eqref{5.1} in \cite{Moi}  in the vicinity of $i^+$, interpreted as timelike infinity. To study this problem, one  poses characteristic data of a future affine-complete outgoing cone $\mathcal{H}^+$ interpreted as the black hole event horizon and a regular ingoing cone $\Cin$ penetrating $\mathcal{H}^+$ as depicted in Figure~\ref{fig:interior} and assume that $\mathcal{H}^+$ asymptotically converges to a sub-extremal Reissner--Nordstr\"{o}m  black hole and that the following decay assumption holds for the scalar field	\begin{equation}\label{decay}
		|\phi_{|\mathcal{H}^+}|(v),\  |D_v\phi_{|\mathcal{H}^+}|(v) \ls [1+|v|]^{-s},
	\end{equation}\blue{for some $s>0$,} in an Eddington--Finkelstein advanced-time coordinate $v$ on $\mathcal{H}^+$ as $v\rightarrow +\infty$. Then, the author proved in \cite{Moi} the existence of a non-empty Cauchy horizon from infinity $\CH$ as depicted in Figure~\ref{fig:interior}.

	\begin{thm}[\cite{Moi}, Theorem 3.2]\label{CH.thm.SS}

		Consider spherically symmetric characteristic initial data  for \eqref{1.1}--\eqref{5.1} 		on the event horizon $\mathcal{H}^+$ converging to a sub-extremal  Reissner--Nordstr\"{o}m black hole and on a $C^1$-regular ingoing cone $\Cin$. Assume \eqref{decay}		holds as an upper bound on $\mathcal{H}^{+}=[v_0,+\infty)$ for some  decay rate $s>\frac{1}{2}$.\\					Then, the spacetime is bound to the future by an ingoing null boundary $\CH \neq \emptyset$  (the Cauchy horizon) foliated by spheres of  positive radius and emanating from $i^+$, and the Penrose diagram is given by the dark gray region in Figure~\ref{fig:interior}. Moreover, if $s>1$, then $\phi$ is uniformly bounded and $g$ is continuously-extendible.	
	\end{thm}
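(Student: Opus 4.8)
The strategy is a large characteristic bootstrap in the black hole interior in a neighbourhood of timelike infinity $i^+$, showing that the trapped dynamical solution stays quantitatively close to the sub-extremal Reissner--Nordstr\"om interior all the way to $\{v=+\infty\}$, which then serves as the Cauchy horizon $\CH$. I would work in a double-null gauge $(u,v)$ with the event horizon $\mathcal{H}^+=\{u=0\}$ parametrised by the Eddington--Finkelstein coordinate $v\in[v_0,+\infty)$ and the ingoing cone $\Cin=\{v=v_0\}$ parametrised by $u\in[0,U]$, and reduce \eqref{1.1}--\eqref{5.1} to the spherically symmetric system for $(r,\Omega^2,\phi,\varpi,Q)$: the two Raychaudhuri equations $\partial_u(\partial_u r/\Omega^2)=-r\Omega^{-2}|D_u\phi|^2$ and $\partial_v(\partial_v r/\Omega^2)=-r\Omega^{-2}|D_v\phi|^2$ (each with a manifestly-signed scalar-field source), the evolution equation for $\partial_u\partial_v r$ in terms of $\Omega^2,\varpi,Q,r$, the charged wave equation $\partial_u\partial_v(r\phi)=(\text{lower order})$, and the transport equations for the Hawking mass $\varpi$ and the Maxwell charge $Q$ (with $\partial_v Q=-q_0 r^2\,\Im(\bar\phi D_v\phi)$ and its $u$-analogue). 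I would use the standard auxiliary quantities $\kappa=-\Omega^2/(4\,\partial_u r)>0$ and $1-\mu=1-2\varpi/r+Q^2/r^2$, related by $\partial_u r\,\partial_v r=-\tfrac14\Omega^2(1-\mu)$, hence $\partial_v r=\kappa(1-\mu)$; on $\mathcal{H}^+$ one has $\varpi\to M$, $Q\to e$, $r\to r_+:=M+\sqrt{M^2-e^2}$, and $1-\mu<0$ just inside $\mathcal{H}^+$, so the interior starts trapped, $\partial_v r<0$.

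Then I would set up the bootstrap. Fix a small $\epsilon>0$, let $r_-:=M-\sqrt{M^2-e^2}>0$ be the Reissner--Nordstr\"om Cauchy-horizon radius, and aim to solve on all of $[0,U]\times[v_0,+\infty)$ for $U=U(\epsilon)$ small while maintaining $r\geq r_--\epsilon$. The bootstrap quantities should include: (i) the lower barrier $r\geq r_--2\epsilon$ together with $\partial_u r,\partial_v r<0$; (ii) two-sided bounds on $\kappa$ and $|\partial_u r|$ that are allowed to degenerate only in a controlled, $\epsilon$-small way; (iii) near-constancy of $Q$ and $\varpi$ where $r\geq r_-+\epsilon$; and, most importantly, (iv) a weighted scalar-field energy flux $\int_v^{+\infty}{v'}^{2s'}\Omega^{-2}r^2|D_{v'}\phi|^2\,dv'\lesssim v^{1-2s+2s'}$ along each outgoing cone $\{u=\mathrm{const}\}$, for any $0<s'<s-\tfrac12$. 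Bound (iv) holds on $\mathcal{H}^+$ precisely because the data flux there is finite --- this is where $s>\tfrac12$ enters --- and it is propagated inwards by multiplying the wave equation by a suitably weighted vector field and running Gr\"onwall in $u$; the charged connection terms $q_0A_v\phi$ are handled once the electromagnetic gauge is fixed and $|Q|$ is known to stay near $|e|$, which follows from $\partial_v Q,\partial_u Q=O(|\phi||D\phi|)=O(v^{-2s})$ being integrable for $s>\tfrac12$. From (iv) and the wave equation one recovers pointwise decay $|D_v\phi|,|\partial_v(r\phi)|\lesssim v^{-s+\eta}$ with arbitrarily small loss $\eta$.

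With the energy and charge under control, the geometric estimates would follow the red-shift/no-shift/blue-shift trichotomy. In the red-shift region adjacent to $\mathcal{H}^+$ ($r_+-\delta_0\leq r\leq r_+$), the red-shift effect at the (near-Reissner--Nordstr\"om) event horizon gives exponential-in-$v$ decay of $\phi$ and its derivatives and keeps $\varpi,Q$ close to $M,e$. In the causally compact no-shift region ($r_-+\delta_0\leq r\leq r_+-\delta_0$), a direct Gr\"onwall argument carries the estimates across. In the blue-shift region ($r_--\epsilon\leq r\leq r_-+\delta_0$), where mass inflation ($\varpi\to+\infty$) and blow-up of $|\partial_u r|$ can occur, the point is that $r$ is already bounded below by $r_--\epsilon$ and one only needs to bound the further loss of $r$ up to $\{v=+\infty\}$ by less than $\epsilon$: since $r$ is monotone decreasing in $v$ and bounded, $\int_{v_0}^{+\infty}(-\partial_v r)(u,v')\,dv'=r(u,v_0)-r(u,+\infty)$ is automatically finite, and the content is the quantitative estimate $\int_{v_0}^{+\infty}\kappa|1-\mu|\,dv'<\epsilon$ for $U$ small, obtained by tying $\kappa|1-\mu|=|\partial_v r|$ to the scalar-field flux via the $v$-Raychaudhuri equation and using the smallness of the $u$-extent. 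This closes the bootstrap on the full rectangle, so $r$ extends continuously to $r_{\CH}(u)\geq r_--\epsilon>0$ on $\{v=+\infty\}$, $\partial_u r<0$ is preserved (no anti-trapped surfaces), and $\CH:=\{v=+\infty\}\cap\{0\leq u\leq U\}$ is a non-empty ingoing null boundary foliated by spheres of positive radius emanating from $i^+$; combined with the a priori characterisation of the terminal boundary (cf.\ Theorem~\ref{Kommemi.thm}), this yields the stated Penrose diagram.

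For the improvement under $s>1$, I would upgrade (iv) to the statement that $\int_{v_0}^{+\infty}|\partial_{v'}(r\phi)|(u,v')\,dv'<+\infty$ uniformly in $u$ --- the key being that when $s>1$ the propagated decay $|\partial_v(r\phi)|\lesssim v^{-s+\eta}$, with $\eta$ small enough that $s-\eta>1$, survives even through the blue-shift region and is integrable --- whence $r\phi$, hence $\phi$, is uniformly bounded up to $\CH$, and choosing the double-null coordinates so that $\Omega^2$ extends to a continuous, strictly positive function across $\CH$ gives a $C^0$ extension of $g$. The main obstacle throughout is the blue-shift region, and specifically in the charged case the simultaneous control of the \emph{dynamical} Maxwell field $Q$ --- which enters the effective potential $1-\mu$ and must be kept close to $e$ so that $r_--\epsilon$ remains a positive lower barrier --- together with the charged covariant derivatives of $\phi$ and the area-radius, all in the presence of the nonlinear feedback of mass inflation $\varpi\to+\infty$; showing that this blow-up of $\varpi$ is nevertheless compatible with $r$ bounded below is the heart of the $C^0$-stability statement.
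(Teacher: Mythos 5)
Your overall architecture is the one actually used in the cited proof \cite{Moi} and recalled in this paper's Propositions~\ref{RS.prop}--\ref{noshift.prop} and Lemma~\ref{propagation.lemma}: a bootstrap over red-shift, no-shift and blue-shift regions, near-constancy of $(\varpi,Q)$ away from the Cauchy horizon, and a lower bound on $r$ obtained by integrating $\partial_v r$ using the smallness of the $u$-extent and the decay of the lapse. However, your central bootstrap quantity (iv) is false in the regime where the theorem has content, and since you derive the pointwise decay and the blue-shift estimates from it, the argument as written does not close.

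The flux $\int_v^{+\infty} (v')^{2s'}\Omega^{-2} r^2 |D_{v'}\phi|^2\,dv'$ is, up to the bounded factor $|\partial_u r|$ and the weight $(v')^{2s'}$, the $\partial_v$-energy flux appearing in $\partial_v\varpi = \tfrac{r^2}{2\kappa}|D_v\phi|^2+\dots$ with $\kappa^{-1}=4|\partial_u r|\,\Omega^{-2}$. Its finiteness on a cone reaching $\CH$ is therefore essentially equivalent to boundedness of the Hawking mass there, so it \emph{diverges} whenever mass inflation occurs --- which is generic, is precisely what this paper relies on (see \eqref{mass.inflation.intro} and Statement~\ref{E.stat} of Theorem~\ref{main.thm.global.i}), and is forced in the late blue-shift region where $\Omega^{-2}\sim e^{2|K_-|v}$ overwhelms $|D_v\phi|^2\lesssim v^{-2s}$. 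Relatedly, on $\mathcal{H}^+$ your quantity is finite because $\Omega^{-2}\sim e^{-2K_+v}$ in the Eddington gauge, not ``precisely because $s>\tfrac12$''; the quantity whose finiteness encodes $s>\tfrac12$ is the unweighted flux $\int (v')^{2s'}|D_{v'}\phi|^2\,dv'$. The repair is the one implemented in \cite{Moi}: bootstrap \emph{pointwise} bounds ($|D_v\phi|\lesssim v^{-s}$, two-sided bounds on $\kappa$, $\log\Omega^2$, $|Q-e|$, $|\varpi-M|$) in the red-shift, no-shift and early blue-shift regions only, and in the late blue-shift region abandon all control of $\varpi$ and of any $\Omega^{-2}$-weighted quantity, retaining only $r$, $-r\partial_v r$ and the pointwise scalar-field bounds. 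The lower bound on $r$ then comes from integrating $\partial_u(-r\partial_v r)=\tfrac{\Omega^2}{4}(1-Q^2/r^2)$ in $u$ from $\mathcal{H}^+$ together with $\Omega^2\lesssim v^{-2s}$ in that region, giving $|\partial_v r|\lesssim v^{-2s}\in L^1_v$ for $s>\tfrac12$ --- not from your flux (iv). With that substitution the rest of your outline (including the $s>1$ improvement and the $C^0$ extension) matches the cited argument.
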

	
	\begin{figure}	\begin{center}\includegraphics[width=0.4\linewidth]{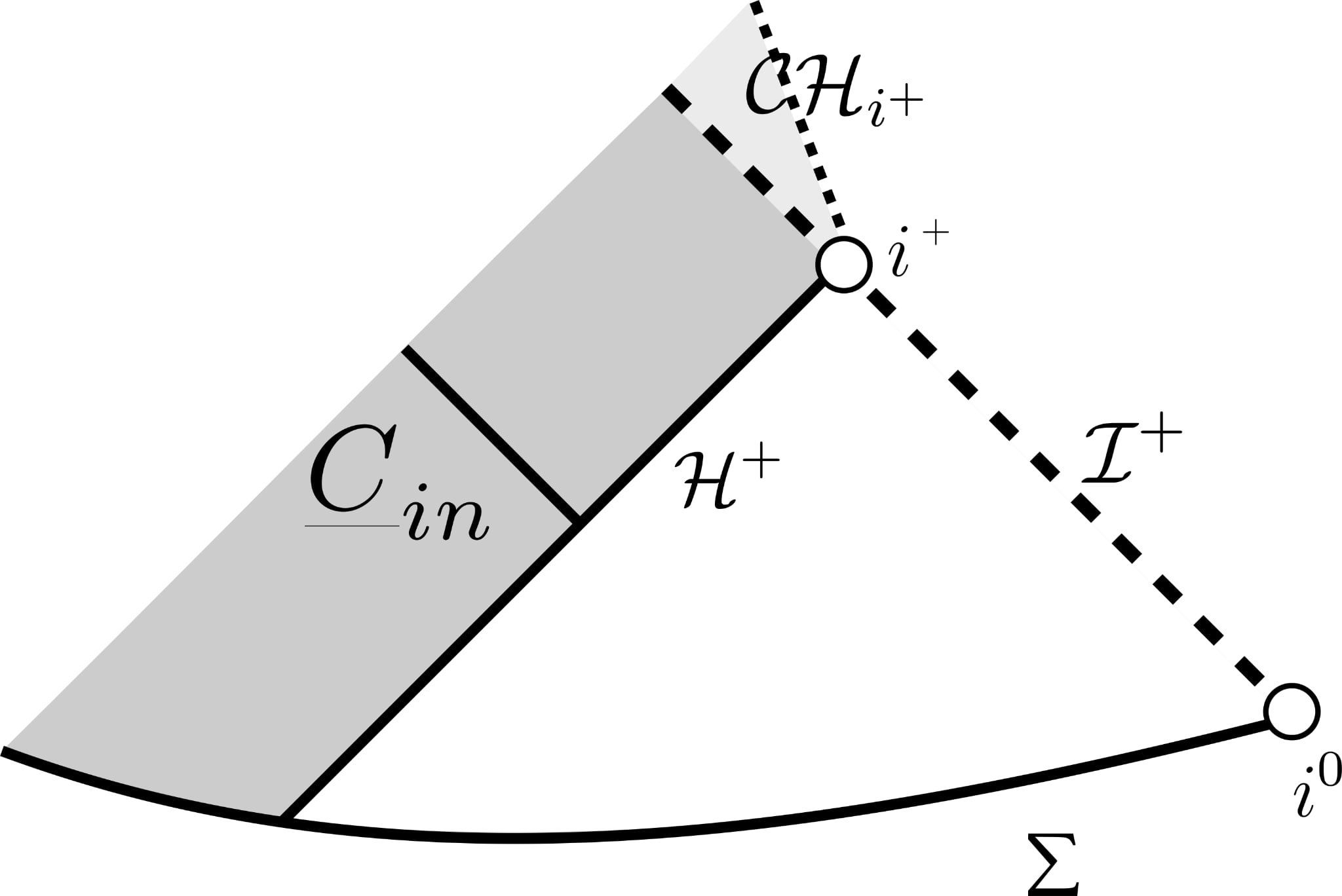}			\caption{\small Local structure of the black hole terminal boundary near $i^+$ for a spherical solution of \eqref{1.1}--\eqref{5.1}.}		\label{fig:interior}\end{center}	\end{figure}
	
	\noindent Theorem~\ref{CH.thm.SS} also shows  scalar field upper bounds consistent with \eqref{decay.intro}, but those are insufficient to obtain the more detailed estimates required by \eqref{decay.intro} as part of the assumptions of Theorem~\ref{thm.I}. Therefore, to prove that \eqref{decay.intro} holds, we appeal to the nonlinear scattering theory developed by the author and Kehle in \cite{MoiChristoph}. While the  event horizon to Cauchy horizon scattering map is quite complicated for general frequencies,  we identify a resonating frequency $\omer \in \RR-\{0\}$ (depending on the Reissner--Nordstr\"{o}m parameters $(M,e)$) at which it takes a simpler form; informally, \blue{a key feature of the scattering map is that} $$ \text{If } \phi_{|\mathcal{H}^+} \text{ oscillates  at frequency } \omega=-\omer, \text{ then }  	D_v \phi(u,v) \approx \phi_{|\mathcal{H}^+}(v) e^{i q_0 \omer v} \text{ near } \CH.$$ 
	In other words, if $\phi$  oscillates  at frequency  $\omega=-\omer$  at $\mathcal{H}^+$, then  	$D_v \phi$ does \underline{not} oscillate under $\CH$.

	Nonlinear estimates were also employed in \cite{MoiChristoph} to show that some aspects of the linear theory subsist in the nonlinear setting but, \blue{as it turns out}, the estimates in \cite{MoiChristoph} are insufficient to verify whether  \eqref{decay.intro} holds or fails for a given event horizon profile. Thus, building up on the methods of \cite{MoiChristoph}, we establish a new nonlinear scattering result (Theorem~\ref{nonlinearscat.thm}), whose rough version we give below.

	\begin{thm}\blue{[Nonlinear scattering theorem]}.\label{scat.thm.intro} Let  $\Phi_H(v)$ satisfying \eqref{decay} with $s>1$, and such that $$ |\rd_v^{2} \Phi_H|(v) \ll  |\rd_v \Phi_H|(v)\ll| \Phi_H|(v) \text{ as }  v\rightarrow+\infty.$$ Assume that \begin{equation}\label{PhiHintro}
			\phi_{|\mathcal{H}^+}(v) = \Phi_H(v) e^{- i q_0 \omer v},
		\end{equation} and that the event horizon $\mathcal{H}^+$ is regular as in the assumptions \blue{of} Theorem~\ref{CH.thm.SS}. Then, for every fixed $u$, and as $v\rightarrow+\infty$:
		\begin{equation}\begin{split}&
				|D_v \phi|(u,v) \approx_u |\Phi_H|(v),\\ &  |D_v^2 \phi|(u,v) \approx_u |\rd_v \Phi_H|(v).
			\end{split}
		\end{equation}
	\end{thm}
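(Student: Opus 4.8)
\textbf{Proof proposal for Theorem~\ref{scat.thm.intro}.}

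The plan is to build on the linear scattering picture established in \cite{MoiChristoph}, where the resonating frequency $\omer$ is precisely the frequency at which the event-horizon-to-Cauchy-horizon map simplifies. The key structural fact is that the change of phase $\phi \mapsto \phi e^{iq_0 \omer v}$ eliminates the oscillatory factor that otherwise appears in $D_v \phi$ near $\CH$; after this gauge-type normalization the transport equation for $D_v \phi$ along ingoing directions has a right-hand side that is, to leading order, non-oscillatory and slowly varying. Concretely, I would first set $\psi := \phi\, e^{iq_0 \omer v}$ and rewrite the wave equation \eqref{5.1} in double-null Eddington--Finkelstein coordinates $(u,v)$ adapted to the black hole interior as in Theorem~\ref{CH.thm.SS}, so that the unknown $D_v \psi$ satisfies a linear transport equation in $u$ (with $v$ as a parameter) whose coefficients involve $r$, the Hawking mass $\varpi$, the charge $Q$, and the gauge potential $A_v$, all of which converge to their (finite, Reissner--Nordstr\"{o}m) values as $v\to+\infty$ on each fixed $u$-slice, by the a priori estimates of \cite{Moi}.

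Next I would set up a bootstrap on a fixed characteristic rectangle $[u_0, u] \times [v_0, +\infty)$: assume $|D_v \psi|(u',v) \le C_u |\Phi_H|(v)$ and $|D_v^2\psi|(u',v) \le C_u |\rd_v \Phi_H|(v)$ throughout, with constants to be improved. Integrating the transport equation for $D_v\psi$ from $\mathcal{H}^+$ (where $D_v\psi = \rd_v\Phi_H + (\text{lower order})$ by \eqref{PhiHintro} and the hierarchy $|\rd_v^2\Phi_H|\ll|\rd_v\Phi_H|\ll|\Phi_H|$) one picks up a term proportional to $\psi$ itself integrated against the transport coefficient; the point of choosing $\omega = -\omer$ is exactly that the leading contribution here is $\asymp |\Phi_H|(v)$ rather than being suppressed or enhanced, giving the lower bound $|D_v\psi|(u,v)\gtrsim_u |\Phi_H|(v)$ and simultaneously the matching upper bound once the error terms—controlled using $s>1$, the decay of $\Phi_H$, and the exponential convergence of the geometric quantities in $v$—are shown to be $o(|\Phi_H|(v))$. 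Differentiating the transport equation once more in $v$ and repeating the argument yields $|D_v^2\psi|(u,v)\approx_u |\rd_v\Phi_H|(v)$: the main term in $\rd_v(D_v\psi)$ is $\rd_v\Phi_H$ carried along the flow, while the terms where $\rd_v$ hits the coefficients or hits $\psi$ (producing $|\Phi_H|$ times a $v$-derivative of a coefficient, which decays faster than $|\rd_v\Phi_H|/|\Phi_H|$ relative to $|\Phi_H|$) are genuinely lower order. Finally one converts back: $|D_v\phi| = |D_v\psi|$ and $|D_v^2\phi| = |D_v^2\psi| + O(|D_v\psi|)$, and since $|D_v\psi|\approx |\Phi_H| \gg |\rd_v\Phi_H| \approx |D_v^2\psi|$ one must check the claimed asymptotics are not spoiled—here I would be careful and note that in fact the statement is about $|D_v\phi|\approx|\Phi_H|$ and $|D_v^2\phi|\approx|\rd_v\Phi_H|$, so the $O(|D_v\psi|)=O(|\Phi_H|)$ term in $D_v^2\phi$ would dominate unless it cancels; the resolution is that $D_v^2\phi = e^{-iq_0\omer v}(D_v^2\psi - iq_0\omer D_v\psi \cdot(\dots))$ and the apparent $\omer$-term is an oscillation in phase, not in modulus growth—one tracks that the \emph{modulus} of $D_v^2\phi$ is governed by the genuinely transported piece, which requires keeping the phases explicit throughout rather than passing to absolute values too early.

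The main obstacle I anticipate is precisely this last point: closing the lower bound $|D_v^2\phi|(u,v)\gtrsim_u |\rd_v\Phi_H|(v)$ rather than only the upper bound. The upper bounds follow from a fairly standard Grönwall/bootstrap argument once the geometric quantities are controlled (which is inherited from \cite{Moi}), but the lower bounds require showing that no destructive interference occurs between the transported data and the nonlinear/geometric source terms along the entire ingoing ray—this is where the special choice $\omega=-\omer$ is essential, and where the refinement of \cite{MoiChristoph} alluded to in Section~\ref{CH.section} must be invoked to guarantee the relevant scattering coefficient is nonzero at $\omer$. I would isolate this as a lemma: on each fixed $u$-slice, the leading-order transport map sends $\rd_v\Phi_H(v)$ (resp. $\Phi_H(v)$) to a nonzero multiple of itself plus strictly lower-order terms, uniformly for $v$ large; the nonlinear corrections are then absorbed because $s>1$ makes the scalar field's backreaction on $r,\varpi,Q,A_v$ summable in $v$, so the ``frozen-coefficient'' linear computation persists to the nonlinear level with only $o(1)$ relative errors.
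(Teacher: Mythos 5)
Your high-level reading of the theorem is right: the whole point is that data oscillating at the resonant frequency $-\omer$ on $\mathcal{H}^+$ produces a \emph{non-oscillatory} $D_v\phi$ near $\CH$, and you correctly flag that the lower bounds and the non-vanishing of a scattering coefficient at $\omer$ are the crux. But the analytic mechanism you propose — a physical-space transport/Gr\"{o}nwall bootstrap in $u$ with ansatz $|D_v\psi|\le C_u|\Phi_H|$, $|D_v^2\psi|\le C_u|\rd_v\Phi_H|$ — cannot deliver the theorem, and this is a genuine gap rather than a presentational difference. The problem is already visible in the \emph{upper} bound $|D_v^2\phi|\lesssim|\rd_v\Phi_H|$: integrating the $\rd_v$-commuted wave equation \eqref{Field.dvpsi} in $u$ from the event horizon to a fixed interior $u$, the ingoing ray at fixed large $v$ traverses the no-shift region where $\Omega^2\sim 1$, and the source terms of the schematic form $\Omega^2\psi$ and $\rd_v\log(\Omega^2)\,\Omega^2\psi$ contribute $O(|\Phi_H|(v))$ to $\rd_v^2\psi$, which is far larger than the target $|\rd_v\Phi_H|(v)$. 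The bound $|D_v^2\phi|\approx|\rd_v\Phi_H|\ll|\Phi_H|$ therefore encodes a \emph{cancellation} (destructive interference of the oscillatory phases in the $u$-integral), and Gr\"{o}nwall-type arguments, which only see moduli of sources, are structurally blind to it. The same applies to the lower bound $|D_v\phi|\gtrsim_u|\Phi_H|$: your proposed ``lemma'' that the leading-order transport map sends $\Phi_H$ to a nonzero multiple of itself is not a lemma one can prove by freezing coefficients along the ray — it \emph{is} the theorem, restated.

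The paper's proof (Theorem~\ref{nonlinearscat.thm}, Sections~\ref{previous.paper.withCK.section}--\ref{novel.paper.withCK.section}) works entirely in frequency space on the exact Reissner--Nordstr\"{o}m background: one takes the $t$-Fourier transform, writes $D_v(re^{i\omer v/2}\phi_{\mathcal{L}})$ as a Fourier integral of the horizon data against the transmission coefficient $\mathfrak{t}(\omega+\omer)$ (Proposition~\ref{previousthm.withCK}), Taylor-expands $\mathfrak{t}(\omega+\omer)=\mathfrak{t}(\omer)+\mathfrak{t}'(\omer)\omega+\omega^2\tilde{\mathfrak{t}}_{\omer}(\omega)$, and controls the remainder via new bounds on $\rd_\omega^N\mathfrak{t}$ obtained from Volterra estimates on the radial ODE (Lemma~\ref{lemma.t}) together with Plancherel. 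The leading term $-i\frac{r_+}{\sqrt{2\pi}}\mathfrak{t}(\omer)\Phi_H(v)$ with $\mathfrak{t}(\omer)\neq 0$ (a Wronskian non-degeneracy, not a transport statement) gives both the upper and lower bounds for $|D_v\psi|$, and the second-derivative asymptotics are proved by a separate frequency-space representation for $\rd_v^2$, not by differentiating the first. Only after this linear analysis is the nonlinear problem handled, via the linear--nonlinear difference estimates of \cite{MoiChristoph} (Propositions~\ref{MoiChristoph.nonlinear.prop}--\ref{MoiChristoph.nonlinear.prop.new}), which is where your correct observation that $s>1$ makes the backreaction summable enters. If you want to salvage your outline, the honest fix is to replace the bootstrap core with this Fourier-analytic computation; the transport picture can at best serve as heuristic motivation.
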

	
	\noindent Therefore, to satisfy \eqref{decay.intro}, it is clear that \emph{one must choose} 	$\phi_{|\mathcal{H}^+}(v)$ to oscillate at the frequency $-\omer$ (or equivalently choose $\Phi_H$ in \eqref{PhiHintro} that does not oscillate) so that $D_v \phi$ does not oscillate towards the Cauchy horizon $\CH$, which is an essential requirement of \eqref{decay.intro}.
	
	We conclude by emphasizing that it is conjectured that $D_v \phi$ is oscillatory in the black hole interior for generic solutions of \eqref{1.1}--\eqref{5.1}(see, e.g., \cite{review}) which is why the solutions of Theorem~\ref{thm.II} are non-generic\footnote{However, as explained in Section~\ref{collapse.section}, the solutions of Theorem~\ref{thm.II} possess the main  features  of conjecturally generic solutions.}. Therefore, to require $D_v \phi$ to be non-oscillatory in the black hole interior, as demanded by \eqref{decay.intro}, it is crucial for our gluing strategy discussed in Section~\ref{gluing.intro} to allow for the \emph{prescription of (possibly non-generic) tangential event horizon data},  to ensure that the assumptions of Theorem~\ref{scat.thm.intro} are satisfied. This strategy is at the heart of our approach.

	\section*{Outline of the paper}
	
	In Section~\ref{geometricframework}, we introduce the necessary geometric preliminaries, together with relevant gauge choices and the Einstein--Maxwell--Klein--Gordon equations \eqref{1.1}--\eqref{5.1} in $(u,v)$-coordinates. In Section~\ref{thm.section}, we provide precise statements of our main results. In Section~\ref{oldresults.section}, we recall some essential results of \cite{bif} which we will be using in subsequent sections.	In Section~\ref{section.global.cond}, we show our conditional statements, i.e., we prove Theorem~\ref{thm.cond.intro}. In Section~\ref{section.global.uncond1}, by far the most involved of the manuscript, is dedicated to the construction of one-ended black holes to which Theorem~\ref{thm.I} applies, i.e., the proof Theorem~\ref{thm.II}.  Section~\ref{section.global.uncond1} will also contain the proofs of Theorem~\ref{construction.thm}, Theorem~\ref{OS.thm.intro} and Theorem~\ref{scat.thm.intro}.		Finally, in Section~\ref{section.global.uncond2}, we carry out the construction of two-ended black holes to which Theorem~\ref{thm.I} applies, i.e., we prove Theorem~\ref{thm.III}.
	
	\section*{Acknowledgement} The author warmly thanks Jan Sbierski for helpful conversations regarding FLRW spacetimes and their properties, which inspired the discussion on scalar field analogues of the Oppenheimer--Snyder spacetime. We also credit Haydee Pacheco for  making the figures in this manuscript.   Finally, we  gratefully
	acknowledge the support from the NSF Grant DMS-2247376.
	
	\section{Geometric preliminaries} \label{geometricframework}

	The purpose of this section is to provide the precise setup, together with the definition of various geometric quantities, the coordinates and the equations that we will use throughout the paper.

	\subsection{Spherically symmetric solutions} \label{preliminary}
	We consider $(M,g,\phi,F)$, a regular solution of the system \eqref{1.1}, \eqref{2.1}, \eqref{4.1}, \eqref{5.1}, where $(M,g)$ is a Lorentzian manifold of dimension $3+1$, $\phi$ is a complex-valued function on $M$ and $F$ is a real-valued 2-form on $M$. 	$(M,g,\phi,F)$ is related to a quadruplet of scalar functions $ \{\Omega^2(u,v), r(u,v), \phi(u,v), Q(u,v)\}$, with $(u,v) \in \mathcal{Q}^+ \subset \RR^{1+1}$ by \begin{equation} \label{gdef}
		g= g_{\mathcal{Q}^+}+ r^{2} \cdot  (d\theta^2 + \sin(\theta)^2 d\varphi^2)= -\Omega^2(u,v) du dv+ r^{2}(u,v) \cdot  (d\theta^2 + \sin(\theta)^2 d\varphi^2),
	\end{equation} 
	\begin{equation}\label{Q.def}
		F(u,v)= \frac{Q(u,v)}{2r^{2}(u,v)}  \Omega^{2}(u,v) du \wedge dv.
	\end{equation}  
	One can now formulate the Einstein equations \eqref{1.1}, \eqref{2.1},  \eqref{4.1}, \eqref{5.1} as a system of non-linear PDEs on $\Omega^2$, $r$, $\phi$ and $Q$ expressed in the double null coordinate system $(u,v) \in \mathcal{Q}^+  $: \begin{equation}\label{Omega}
		\partial_{u}\partial_{v} \log(\Omega^2)=-2\Re(D_{u} \phi \overline{D_{v}\phi})+\frac{ \Omega^{2}}{2r^{2}}+\frac{2\partial_{u}r\partial_{v}r}{r^{2}}- \frac{\Omega^{2}}{r^{4}} Q^2,
	\end{equation} \begin{equation}\label{Radius}\partial_{u}\partial_{v}r =\frac{- \Omega^{2}}{4r}-\frac{\partial_{u}r\partial_{v}r}{r}
		+\frac{ \Omega^{2}}{4r^{3}} Q^2, 
	\end{equation} 	\begin{equation}\label{Field}
		D_{u} D_{v} \phi =-\frac{ \partial_{v}r \cdot D_{u}\phi}{r}-\frac{\partial_{u}r \cdot  D_{v}\phi}{r} +\frac{ iq_{0} Q \Omega^{2}}{4r^{2}} \phi,
	\end{equation} 	\begin{equation} \label{chargeUEinstein}
		\partial_u Q = -q_0 r^2 \Im( \phi \overline{ D_u \phi}),
	\end{equation}	\begin{equation} \label{ChargeVEinstein}
		\partial_v Q = q_0 r^2 \Im( \phi \overline{D_v \phi}),
	\end{equation}
	\begin{equation} \label{RaychU}\partial_{u}(\frac {\partial_{u}r}{\Omega^{2}})=\frac {-r}{\Omega^{2}}|  D_{u} \color{black}\phi|^{2}, \end{equation} 
	\begin{equation} \label{RaychV}\partial_{v}(\frac {\partial_{v}r}{\Omega^{2}})=\frac {-r}{\Omega^{2}}|D_{v}\color{black}\phi|^{2},\end{equation} where the gauge derivative is defined by $D_{\mu}:= \partial_\mu+iq_0 A_{\mu}$, and the electromagnetic potential $A_{\mu}=A_u du + A_v dv$ satisfies 	\begin{equation}\label{Maxwell} \partial_u A_v - \partial_v A_u = \frac{Q \Omega^2}{2r^2}.\end{equation}	 Note that, under our electromagnetic gauge choice $A_v \equiv 0$ (see \eqref{A.gauge}), \eqref{Maxwell} and  \eqref{Field} can also be written as
	\begin{equation}\label{dv.Au}
		\partial_v A_u = -\frac{Q \Omega^2}{2r^2},
	\end{equation}	\begin{equation}\label{Field4}
		\partial_{u}\partial_{v} \phi =-\frac{\partial_{u}\phi\partial_{v}r}{r}-\frac{\partial_{u}r \partial_{v}\phi}{r} +\frac{ q_{0}i \Omega^{2}}{4r^{2}}Q \phi
		- i q_{0} A_{u}\frac{\phi \partial_{v}r}{r}-i q_0 A_{u}\partial_{v}\phi.\end{equation}
	In terms of the radiation field $\psi=r\phi$, we have \begin{equation}\label{Field.psi}
		D_u\partial_{v} \psi = \left( \frac{- \Omega^{2}}{4r^2}-\frac{\partial_{u}r\partial_{v}r}{r^2}
		+\frac{ \Omega^{2}}{4r^{4}} Q^2 +\frac{  iq_{0}Q\Omega^{2}}{4r^2}\right) \psi.  
	\end{equation}

	Subsequently, we define the Lorentzian gradient of $r$, and introduce the mass ratio $\mu$ by the formula $$ 1-\mu:=g_{\mathcal{Q}^+}(\nabla r,\nabla r),$$ where we recall that $g_{\mathcal{Q}^+}$ was the spherically symmetric part of $g$ defined in \eqref{gdef}. We can also define the Hawking mass:	$$ \blue{\mathfrak{m}} := \frac{\mu \cdot r}{2} =\frac{r}{2} \cdot(1- g_{\mathcal{Q}^+} (\nabla r, \nabla r )).$$	
	
	Notice that the $(u,v)$ coordinate system, we have $g_{\mathcal{Q}^+} (\nabla r, \nabla r )= \frac{-4 \partial_u r \cdot \partial_v r}{\Omega^2}$. Now we introduce the modified mass $\varpi$ which involves the charge $Q$:
	
	\begin{equation} \label{electromass}
		\varpi := \blue{\mathfrak{m}}  + \frac{Q^2}{2r}= \frac{\mu r}{2} + \frac{Q^2}{2r} ,
	\end{equation}obeying \begin{equation} \label{massUEinstein}
		\partial_u \varpi = \frac{-2r^2 \Omega^2}{\partial_v r}| D_u \phi |^2 
		- q_0 Q r \Im(\phi \overline{D_u \phi}),
	\end{equation}
	\begin{equation} \label{massVEinstein} 
		\partial_v \varpi = \frac{r^2}{2\kappa}| D_v \phi |^2
		+ q_0 Q r \Im(\phi \overline{D_v\phi}).
	\end{equation}
	An elementary computation relates the previously quantities : 	\begin{equation} \label{murelation}
		1-\frac{2\blue{\mathfrak{m}} }{r} = 1-\frac{2\varpi}{r}+\frac{Q^2}{r^2}=\frac{-4 \partial_u r \cdot \partial_v r}{\Omega^2}.
	\end{equation} We also define \begin{equation}
		2K = \frac{2}{r^2} \left( \varpi - \frac{Q^2}{r}\right).
	\end{equation}

	Now we can reformulate our former equations to put them in a form that is more convenient to use. For instance, the Klein-Gordon wave equation \eqref{Field} can be expressed in different ways, using the commutation relation $[D_u,D_v]=\frac{ iq_{0} Q \Omega^{2}}{2r^{2}}$ 
	and  under our electromagnetic gauge choice $A_v \equiv 0$ (see \eqref{A.gauge}): 
	\begin{equation}\label{Field2}
		\rd_u \theta =  -\frac{\rd_v r}{r} \cdot\xi +\frac{ \Omega^{2} \cdot \phi}{4r} \cdot  i q_{0} Q- i q_0 A_u  r\rd_v \phi -i q_0 \partial_{v}r \cdot A_u \phi,
	\end{equation}  \begin{equation}\label{Field3}
		\rd_v \xi =  -\frac{\partial_{u}r }{r}\cdot \theta +\frac{ \Omega^{2} \cdot \phi}{4r} \cdot  i q_{0} Q- i q_0 A_u  r\rd_v \phi -i q_0 \partial_{v}r \cdot A_u \phi,
	\end{equation} where we introduced the notations $\theta=r\rd_v \phi$ and $\xi=r\rd_u \phi$.

	Introducing the following notations \begin{equation}\begin{split}
			& \lambda=\rd_v r,\  \nu=\rd_u r,\\ & \iota = \frac{\Omega^2}{4\lambda },\ \kappa = -\frac{\Omega^2}{4\nu },
		\end{split}
	\end{equation}we can also re-write  \eqref{Omega} and \eqref{Radius},	\begin{equation}\label{Omega3}
		\partial_{u}\partial_{v} \log(r\Omega^2)=  \frac{ \Omega^{2}}{4r^{2}} \cdot \left(1-   \frac{3Q^2}{r^{2}}  - 8 r^2\Re( \frac{D_{u}\phi}{\Omega^2} \cdot  D_{v}\bar{\phi})  \right),
	\end{equation}	\begin{equation} \label{Radius3}
		-\partial_u  \partial_v (\frac{r^2}{2})	=\partial_u (-r \lambda) =	-\partial_v (r\nu ) = \frac{\Omega^2}{4}\cdot (1- \frac{Q^2}{r^2}),
	\end{equation} 
	\begin{equation} \label{Omega2}
		\partial_{u}\partial_{v} \log(\Omega^2)=\kappa \partial_u (2K)-2\Re(D_{u} \phi D_v\bar{\phi})- \frac{2\kappa}{r^2}( \partial_u \varpi- \frac{\partial_u Q^2}{r})= \\
		\iota \partial_v (2K) -2\Re(D_{u} \phi D_v{v}\bar{\phi})-  \frac{2\iota}{r^2}( \partial_v \varpi- \frac{\partial_v Q^2}{r}).
	\end{equation}

	Moreover, the following $\rd_v$-commuted equation will also be useful: 
	\begin{equation}\label{Field.dvpsi}\begin{split}
			&		D_u\partial_{v}^2 \psi = \left( \frac{ \rd_v\log(\Omega^{2}) \Omega^2 }{4r^2}[-1+\frac{Q^2}{r^2} + i q_0 Q]+\frac{3\nu \lambda^2}{r^3}-\frac{\nu \rd_v \lambda}{r^2}+ \frac{\Omega^2\lambda}{4r^3}[1-iq_0 -\frac{5Q^2}{r^2}]+ q_0r^2 \Im(\phi \overline{D_v\phi})\frac{\Omega^2}{4r^2}[2Qr^{-2}+ i q_0]\right) \psi\\ & + \left[- \frac{ \Omega^{2}}{4r^2}-\frac{\partial_{u}r\partial_{v}r}{r^2}
			+\frac{ \Omega^{2}}{4r^{4}} Q^2 -\frac{  iq_{0}Q\Omega^{2}}{4r^2}\right] \rd_v \psi.  \end{split}
	\end{equation}

	\subsection{Double null coordinate and gauge choices}\label{gauge.section}

	It is well-known that the above system of equations are invariant under the following gauge transformations: \begin{equation}
		\tilde{u} = f(u),
	\end{equation}
	\begin{equation}
		\tilde{v} = g(v),
	\end{equation} where $f$ and $g$ are increasing $C^1$ functions. Note that the gauge transform $(u,v) \rightarrow (\tilde{u},\tilde{v})$ transforms the lapse $\Omega^2$ in the following way \begin{equation}
		\tilde{\Omega}^2 = \frac{\Omega^2}{f'(u) g'(v)}.
	\end{equation}

	The system of equations \eqref{1.1}, \eqref{2.1},  \eqref{4.1}, \eqref{5.1} is also invariant under the  electromagnetic gauge transformation: $$ \phi \rightarrow  e^{-i q_0 F } \phi ,$$
	$$ A \rightarrow  A+ d F. $$
	where $F$ is a smooth real-valued function.    $|\phi|$ and $|D_{\mu}\phi|$, on the other hand, are gauge invariant.

	We will use several gauge choices throughout the paper, always for convenience of the proof; none of them will truly matter, as all the theorem statements are formulated independently of the gauge. 
	
	For the convenience of the reader, we regroup some of the gauge choices we will use in the sequel, and refer to this section when each choice if elected. \begin{enumerate}[G.i]
		\item \label{gauge.spacelike} Let $\Sigma$ be a spacelike hypersurface on which we prescribe $\rd_v r>0$ and $\rd_u r<0$. This fixes the choice of \begin{equation}
			u_{|\Sigma_0} \text{ and }  	v_{|\Sigma_0},
		\end{equation} and we then fix $(u,v)$ in spacetime by requiring $u$ and $v$ to solve the standard eikonal equation $$ g^{\mu \nu} \rd_{\mu }\Psi \rd_{\nu} \Psi =0$$ with the above initial conditions. This determines $(u,v)$ in a neighborhood of the hypersurface $\Sigma$. This also provides a parametrization of the hypersurface $\Sigma$ by \begin{equation}\label{rho.def}
			\rho=v-u.
		\end{equation}
		\item \label{gauge.unit.lapse} Let $C_0$ be an outgoing, spherically symmetric  cone. We can fix an unitary lapse on $C_0$, i.e., \begin{equation}
			\Omega^2_{|C_0} =1,
		\end{equation} which fixes, up to constant, the coordinate $v$, not only on $C_0$, but also in the strip $\mathcal{D}_0$ obtained shooting ingoing (radial) null geodesics with starting spheres on $C_0$.
		
		\item \label{gauge.EH.v} Let $\mathcal{H}^+$ be the event horizon of a black hole spacetime, which is affine complete towards the future. One can consider a future subset  $\mathcal{H}^+_{v_0}\subset \mathcal{H}^+$, which is affine complete towards the future and with a past sphere $\mathcal{S}_{v_0}$.  We then impose the gauge \begin{equation}
			\frac{-4\rd_u r}{\Omega^2}_{|\mathcal{H}^+_{v_0}}=1,
		\end{equation}\begin{equation}
			v_{|\mathcal{S}_{v_0}} = v_0.
		\end{equation} As in the previous case, this determines $v$ in the causal future of $\mathcal{S}_{v_0}$, and more broadly on a strip $\mathcal{D}_{v_0}$ emanating from $\mathcal{H}^+_{v_0}$ in the ingoing direction.
		
		\item \label{gauge.EH.U}  In the same setting as before, let us assume additionally that $\mathcal{H}^+$ is asymptotically Reissner--Nordstr\"{o}m with mass $M$ and charge $e$, $0\leq |e| \leq M$. Let $C_{v_0}$ an ingoing cone emanating from $\mathcal{S}_{v_0}$ towards the future or the past. We can fix the $U$-gauge \begin{equation}
			\rd_U r(U,v_0) = - e^{2K_+(M,e) v_0}.
		\end{equation}

		\item \label{gauge.U.future} Let $\mathcal{M}$ globally hyperbolic, $C_{u_0}$ an outgoing cone and denote $\mathcal{B}_{u_0}=  \mathcal{B}\cap J^{+}[C_{u_0}]$. We can fix the $u$-gauge in $J^{+}(C_{u_0})$ by imposing \begin{equation}
			r\rd_u r_{|\mathcal{B}_{u_0}} = -1.
		\end{equation}
		
		\item \label{gauge.U.past} Let $\mathcal{H}^+$ be the event horizon of a black hole spacetime, which we normalize to be $\{U=0,\ v_0 \leq v<+\infty\}$ (this is not a $v$-gauge choice). We place ourselves in the black hole exterior region  $\{U<0,\ v_0 \leq v<+\infty\}$.	Then, one can define the $U$-gauge as $U=-[2K_+]^{-1} e^{-2K_+ u}$, and $u\in \RR$ is determined by \begin{equation}
			\lim_{v\rightarrow+\infty} \frac{4\rd_v r(u,v) }{\Omega^2(u,v)} =1.
		\end{equation} 
		
	\end{enumerate}

	As for fixing the electromagnetic gauge freedom, we impose throughout the paper \begin{equation}\label{A.gauge}
		A_v \equiv 0.
	\end{equation} (except, perhaps in notational passages such as Section~\ref{gluing.section} where convenience leads us to impose $A_v = A_u$. However, when using these notions later in the paper, we will work in an uncharged spacetime region where $A\equiv 0$ anyway, so this difference in gauge will be inconsequential.)
	
	We also have the freedom to fix \begin{equation}\label{Au.gauge}
		A_u(\cdot,v_0) = 0,
	\end{equation} on some ingoing cone $\underline{C}_{v_0}$, which we will invoke when convenient.
	
	\subsection{Reissner--Nordstr\"{o}m solution}
	Recall that the Reissner--Nordstr\"{o}m solution is a spherically symmetric and stationary solution of \eqref{1.1}--\eqref{5.1} with $\phi\equiv 0$ and $F= \frac{e}{r^2} dt \wedge dr$:
	\begin{equation}\label{RN2}
		g_{RN} = -(1-\frac{2M}{r}+ \frac{e^2}{r^2}) dt^2+ (1-\frac{2M}{r}+  \frac{e^2}{r^2})^{-1} dr^2+ r^2 ( d\theta^2+ \sin^2(\theta) d\varphi^2),
	\end{equation} with charge $e$, mass $M$ and in the sub-extremal case $0<|e|<M$. Note that for \eqref{RN2}, $\varpi \equiv M$ and $Q \equiv e$. 
	$r_{\pm}(M,e)$ correspond to the area-radius of sections of the event horizon $\mathcal{H}^+$ and the Cauchy horizon $\mathcal{CH}_{i^+}$, defined by \begin{equation}\begin{split}
			&  r_{\pm}(M,e) := M \pm \sqrt{M^2-e^2}>0,\\ & 2K(\varpi=M,Q=e,r=r_{\pm}(M,e))=2K_{\pm}= \frac{2}{r_{\pm}^2(M,e)} [M -\frac{e^2}{r_{\pm}(M,e)}]\neq 0,
		\end{split}
	\end{equation} and $2 K_{\pm}$ are respectively the surface gravities of $\mathcal{H}^+$ and $\CH$. Note that, in the notations of Section~\ref{gauge.section}, and the Eddington--Finkelstein gauges \eqref{gauge.EH.v}, \eqref{gauge.EH.U}: \begin{equation}\begin{split}
			&\Omega^2= 4(1-\frac{2M}{r}+ \frac{e^2}{r^2}),\\ & \rd_v r= -\rd_u r = (1-\frac{2M}{r}+ \frac{e^2}{r^2}). \end{split}
	\end{equation}
	\subsection{Trapped region and apparent horizon}

	We define the trapped region $\mathcal{T}$, the regular region $\R$ and the apparent horizon $\A$ as 
	\begin{enumerate}
		\item  \label{characttrapped}$(u,v) \in \T$ if and only if $\partial_v r(u,v)<0$ if and only if $1-
		\frac{2\blue{\mathfrak{m}} (u,v)}{r(u,v)}<0$,
		\item $(u,v) \in \R$ if and only if $\partial_v r(u,v)>0$ if and only if $1-\frac{2\blue{\mathfrak{m}} (u,v)}{r(u,v)}>0$,
		\item $(u,v) \in \A$ if and only if $\partial_v r(u,v)=0$ if and only if $1-\frac{2\blue{\mathfrak{m}} (u,v)}{r(u,v)}=0$.
	\end{enumerate}	
	
	Note that, if $\mathcal{M}$ is the MGHD of initial data on $\Sigma$ which has no anti-trapped sphere, i.e., $\rd_u r <0$ on $\Sigma$, then $\rd_u r <0$ on $\mathcal{M}$, as a consequence of \eqref{RaychU}. We will only consider spacetimes free of anti-trapped sphere in this manner.
	
	Note also that, if $\mathbb{S}_1 \in \mathcal{T}$ (respectively $\mathcal{A}\cup \mathcal{T}$) and $\mathbb{S}_2$ is located on an outgoing  cone emanating from $\mathbb{S}_1$, then  $\mathbb{S}_2 \in \mathcal{T}$ (respectively $\mathcal{A}\cup \mathcal{T}$), as a consequence of \eqref{RaychV}. This well-known property will be used repetitively throughout the paper with no further reference.
	
	\subsection{Asymptotically flat one-ended initial data}\label{initial.data.section}
	We prescribe initial data $(r,f,h,l,Q,\phi,\dot{\phi})$ on an hypersurface $\Sigma_0$ diffeomorphic to $\RR^3$ and denote $\Gamma$ the center (diffeomorphic image of the origin of $\RR^3$), where $r \in C^2(\Sigma_0)$, $f \in C^{1}(\Sigma_0)$, $Q \in C^{1}(\Sigma_0)$, $\phi \in C^{1}(\Sigma_0)$,  $\dot{\phi} \in C^{0}(\Sigma_0)$. The induced metric 	$\hat{g}$,  second fundamental form $\hat{k}$, Maxwell field  $F$ on $\Sigma_0$, are given by
	\begin{equation}\label{hat.g.def}
		\hat{g}= f^2(\rho) d\rho^2 + r^2(\rho) d\sigma_{\mathbb{S}^2},
	\end{equation}
	\begin{equation}\label{hat.k.def}
		\hat{k}= h(\rho) d\rho^2 + l(\rho) d\sigma_{\mathbb{S}^2},
	\end{equation}
	\begin{equation}\label{dot.phi.def}
		(\phi, D_{n} \phi)_{|\Sigma_0}=(\phi,\dot{\phi}),
	\end{equation}
	\begin{equation}\label{F.def}
		F(n,\rd_{\rho})_{|\Sigma_0}=\frac{Q f}{r^2},
	\end{equation} where $n$ is the unit future-directed normal to $\Sigma_0$. Moreover, we need to specify an electromagnetic potential $A$ such that $dA=F$ (electromagnetic gauge choice). For simplicity (note that this is different from the gauge choice \eqref{A.gauge}, but this will not impact the subsequent results.), we can choose $A_{\rho}=0$, in which case \eqref{Maxwell} gives \begin{equation}
		\rd_{\rho} A_n = \frac{Q f}{r^2}.
	\end{equation}

	\noindent	Furthermore,	\eqref{1.1}-\eqref{5.1} imposes the following constraint equations for $(r,f,h,l,Q,\phi,\dot{\phi})$ on $\Sigma_0$: \begin{equation}\label{constraint1}
		R(\hat{g})- \hat{g}^{\alpha \mu} \hat{g}^{\beta \nu}\hat{k}_{\alpha \beta} \hat{k}_{\mu \nu}+(\hat{g}^{\alpha \beta}\hat{k}_{\alpha \beta})^2= 2 |\dot{\phi}|^2+ \frac{2}{f^2}|D_{\rho}\phi|^2+ \frac{2Q^2}{r^4},
	\end{equation}
	\begin{equation}\label{constraint2}
		\nabla_{\hat{g}}^{\alpha} \hat{k}_{\alpha \rho} -\rd_{\rho} (\hat{g}^{\alpha \beta}\hat{k}_{\alpha \beta})= 2\Re(\dot{\phi}D_\rho \phi),
	\end{equation}
	\begin{equation}\label{constraintQ}
		\rd_{\rho} Q = -q_0 r^2\Im(\bar{\phi} \dot{\phi}).
	\end{equation}
	The following definition of asymptotic flatness is inspired from \cite{JonathanStabExt}, Definition 3.1.
	
	\begin{defn}\label{AF.def}
		We say $(r,f,h,l,Q,\phi,\dot{\phi})\in  (C^{2}(\Sigma_0),C^{1}(\Sigma_0),C^{0}(\Sigma_0),C^{1}(\Sigma_0),C^{1}(\Sigma_0),C^{1}(\Sigma_0),C^{0}(\Sigma_0))$ is an asymptotically flat set of initial data for \eqref{1.1}--\eqref{5.1}  if  $(r,f,h,l,Q,\phi,\dot{\phi})$ satisfy \eqref{constraint1}, \eqref{constraint2}, \eqref{constraintQ} and there exists $\ep>0$ such that  the following estimates hold as $\rho \rightarrow +\infty$: \begin{equation}\begin{split} &| r(\rho) - \rho|\ls\log(\rho),\ |\rd_{\rho}r(\rho)|\ls \rho^{-1},\ |\rd_{\rho}^2 r(\rho)|\ls \rho^{-2},  \\ 
				&|f(\rho)-1|\ls \rho^{-1},\  |\rd_{\rho}f(\rho)|\ls \rho^{-2},\\ & |h(\rho)|\ls \rho^{-2},\\ &|l(\rho)|\ls 1,\  |\rd_{\rho}l(\rho)|\ls \rho^{-1},\\   & |Q|(\rho)\ls 1,\ |\rd_{\rho} Q(\rho)|\ls \rho^{-1-\ep},\\  & |\phi(\rho)| \ls \rho^{-1-\ep},\ |\rd_{\rho}\phi(\rho)|,\  |\dot{\phi}(\rho)| \ls \rho^{-2-\ep}.
			\end{split}
		\end{equation} 
	\end{defn}
	
	\begin{rmk}
		Note that $(r,f,h,l,Q,\phi,\dot{\phi})\in  (C^{2}(\Sigma_0),C^{1}(\Sigma_0),C^{0}(\Sigma_0),C^{1}(\Sigma_0),C^{1}(\Sigma_0),C^{1}(\Sigma_0),C^{0}(\Sigma_0))$ is sufficient to make sense of   \eqref{constraint1}, \eqref{constraint2}, \eqref{constraintQ} in the sense of distributions. In fact, we will work with a $C^k$ solution for any arbitrarily large $k\in \mathbb{N}$ and thus    \eqref{constraint1}, \eqref{constraint2}, \eqref{constraintQ}  are satisfied classically as long as $k\geq 2$. If desired, it  is possible to introduce a stronger notion of asymptotic flatness (involving the spatial decay of more derivatives) and prove that this stronger notion is also satisfied for the class of spacetimes we work with. 
	\end{rmk}

	\subsection{Sphere gluing in spherical symmetry}\label{gluing.section}
	We recall some relevant concepts  regarding the gluing of spheres for \eqref{1.1}--\eqref{5.1}, which can be found in \cite{KehleUnger}. In the notations of Definition~\ref{lapse.normalized}, the reader should associate $(\varrho,\omega,\varphi,q,a)$ to $(r,\Omega,\phi,Q,A)$, $\varrho_u^{j}$ to $\rd_u^{j} r$, etc...
	\begin{defn}[$C^k$ sphere data set, Definition 2.3 in \cite{KehleUnger}]\label{lapse.normalized}
		A $C^k$ sphere data set for  \eqref{1.1}--\eqref{5.1} is the following list of numbers \begin{enumerate}
			\item $(2k+3)$ real numbers $(\varrho,\varrho_u^{1},...\varrho_u^{k+1},\varrho_v^{1},...\varrho_v^{k+1})$ with $\rho>0$, $\rho_u^{1}<0$.
			
			\item $(2k+1)$ real numbers $(\omega,\omega_u^{1},...\omega_u^{k},\omega_v^{1},...\omega_v^{k})$ with $\omega>0$.
			\item $(2k+1)$ complex numbers $(\varphi,\varphi_u^{1},...,\varphi_u^{k},\varphi_v^{1},...\varphi_v^{k})$.
			\item $(2k+1)$ real numbers $(q,q_u^{1},...q_u^{k},q_v^{1},...q_v^{k})$.
			
			\item $(2k+2)$ real numbers $(a,a_u^{1},...a_u^{k},a_v^{1},...a_v^{k+1})$.

		\end{enumerate} satisfying the following conditions: \begin{itemize}
			
			\item $\varrho_u^{i+2}$ can be expressed in terms of $\varrho_u^{j+1}$, $\omega_u^{j+1}$, $\varphi_u^{j+1}$, and $a_u^j$ for $0\le j\le i$ by formally differentiating \eqref{RaychU},
			\item $\varrho_v^{i+2}$ can be expressed in terms of $\varrho_v^{j+1}$, $\omega_v^{j+1}$, and $\varphi_v^{j+1}$ for $0\le j\le i$ by formally differentiating \eqref{RaychV},
			\item $q_u^{i+1}$ can be expressed in terms of $\varrho_u^j$, $\varphi_u^j$, and $a_u^j$ for $0\le j\le i$ by formally differentiating \eqref{chargeUEinstein}.
			\item $q_v^{i+1}$ can be expressed in terms of $\varrho_u^j$, and $\varphi_u^j$ for $0\le j\le i$ by formally differentiating \eqref{ChargeVEinstein}, and 
			\item $a_v^{i+1}$ can be expressed in terms of $\varrho_v^j$, $\omega_v^j$, and $q_v^j$ for $0\le j\le i$ by formally differentiating \eqref{dv.Au}.
		\end{itemize} 
		
		We say the above $C^k$ sphere data is $C^k$ lapse-normalized if $(\omega,\omega_u^{1},...\omega_u^{k},\omega_v^{1},...\omega_v^{k})=(1,0,...0)$.
		
	\end{defn}
	\begin{rmk}
		We note that, contrary to \cite{KehleUnger}, we make the assumption that $\varrho_u^{1}<0$ in Definition~\ref{lapse.normalized}, so that the relevant sphere is not anti-trapped
	\end{rmk}
	
	We note that every  $C^k$ sphere data set is gauge equivalent to a $C^k$ lapse-normalized sphere data set \cite{KehleUnger} (recall Section~\ref{gauge.section} for a discussion of gauge choices). \begin{defn}
		A $C^{k}$ sphere data set is uncharged if $(q,q_u^{1},..., q_u^{k},q_v^{1},...,q_v^{k})=0$ and $(a,a_u^{1},..., a_u^{k},a_v^{1},...,a_v^{k+1})=0$. If $D$ is an uncharged $C^k$  sphere data set and $D'$ is a $C^k$ sphere data set that is gauge-equivalent to $D$, then $D'$ is uncharged too.
	\end{defn}
	
	We now turn to the definition of spacelike gluing; although we will only use it in the uncharged case, we formulate it the general charged case below. Note that \eqref{d.rho.A} in Definition~\ref{spacelike.gluing.def} below is formulated in the gauge $A_{\rho} = 0$ for convenience; however, it is easy to make a more general definition not imposing this gauge.
	
	\begin{defn}\label{spacelike.gluing.def} Let $D_1 $ and $D_2$, two $C^k$ sphere data sets. We say that $D_1$ can be spatially glued to $D_2$ if there exists  $D(s)$, $s_1 \leq s \leq s_2$ a collection of $C^k$ sphere data sets such that $D(s_1)$ is gauge-equivalent to $D_1$, $D(s_2)$ is gauge-equivalent to $D_2$  and, defining \begin{equation}\begin{split} & r(s)=\varrho(s),\\ &   \lambda(s)=\varrho_v^{1}(s),\  \nu(s)=\varrho_u^{1}(s),\\ & Q(s) = q(s), \\ & A_u(s) =a_u^{1}(s),\ A_v(s) = a_v^{1}(s),\\ & \Omega^2(s) = \omega^2(s),\\  & \varpi(s) = \frac{r(s)}{2 } \left[ 1 + \frac{Q^2(s)}{r^2(s)} + \frac{\lambda(s) \nu(s)}{\Omega^2(s)} \right],\\ &  \phi(s) =  \varphi(s),\\ & \theta(s) = r(s) [\varphi_v^{1}(s)+ i q_0 A_v(s)],\\ & \xi(s) = r(s) [\varphi_u^{1}(s)+ i q_0 A_u(s)],
			\end{split}
		\end{equation}  such that   $r$ is $C^{k+2}$, $\lambda$, $\nu$, $\Omega^2$, $\varpi$, $Q$, $\phi$ are $C^{k+1}$, $A_u$ and $A_v$ are $C^{k}$, and they respectively satisfy \begin{equation}\label{d.rho.r}
			\rd_{\rho} r(s) = \lambda(s) -\nu(s),
		\end{equation}
		\begin{equation}\label{d.rho.}
			\rd_{\rho} \varpi(s) =\frac{1}{2} [1-\frac{2\varpi(s)}{r(s)}+ \frac{Q^2(s)}{r^2(s)}] \left( \frac{|\theta|^2(s) }{\lambda(s)}+ \frac{|\xi|^2(s) }{|\nu|(s)} \right) + q_0 Q(s) \Im( \phi(s) [\overline{\theta}(s)+\overline{\xi}(s)] ),
		\end{equation}
		\begin{equation}\label{d.rho.phi}
			\rd_{\rho} \phi(s) =r^{-1}(s) [\theta(s) - \xi(s)] - i q_0 [ A_v(s)- A_u(s)] \phi(s), 
		\end{equation}
		\begin{equation}\label{d.rho.Q}
			\rd_{\rho} Q(s) = - q_0 r(s) \Im( \bar{\phi}(s)  [\theta(s) + \xi(s)]),
		\end{equation} 
		\begin{equation}\label{d.rho.A}
			\rd_{\rho} [A_u(s) + A_v(s)] = \frac{ \Omega^2(s) Q^2(s)}{r^2(s)},
		\end{equation}and for all $1\leq j\leq k+1$, $\rd_{\rho}^{j+1} r(s)$, $\rd_{\rho}^{j} \varpi(s)$, $\rd_{\rho}^{j} \phi(s)$, $\rd_{\rho}^{j} Q(s)$   satisfy the respective equations obtained by formally differentiating \eqref{d.rho.r}, \eqref{d.rho.}, \eqref{d.rho.phi}, \eqref{d.rho.Q}, \eqref{d.rho.A}, replacing $\rd_{\rho}$ by $\rd_v - \rd_u$, $\rd_v \lambda$ by $\varrho_v^2$, etc...
		
		If $D_1 \in \mathcal{R}$, $D_2\in \mathcal{R}\cup \mathcal{A}$ and $D(s) \in \mathcal{R}$ for all $s_1 \leq s <s_2$, we say $D_1$ and $D_2$ can be spatially glued within the regular region.

		If $D(s)$ is an uncharged $C^k$ sphere data set for all $s_1\leq s\leq s_2$, we say $D_1$ and $D_2$ are spatially glued in an uncharged way.
	\end{defn}
	
	\section{Precise statements of the main theorem}\label{thm.section}

	\subsection{The local result of our previous paper \cite{bif}}
	We first recall the main result of \cite{bif}, which is formulated for local initial data on bifurcate hypersurfaces $\Cin \cup C_{out} = [u_0,u_F)\times \{v_0\} \cup \{u_0\}\times [v_0,+\infty)$ as depicted in Figure~\ref{fig:local}. In the discussion below, the $u$-gauge is fixed by gauge~\eqref{gauge.U.future}, while the $v$-gauge is inherited from an Eddington--Finkelstein gauge \eqref{gauge.EH.v}.

	\begin{thm}\label{main.thm}[Theorem 3.1 in \cite{bif}].
		Let $u_0 < u_F$ and $v_0 \in \RR$.  Let $\uch= \sup\{u \in [u_0,u_F],\ \underset{v\rightarrow +\infty}\lim r(u,v) >0 \} $. We denote $\underset{v\rightarrow +\infty}\lim r(u,v)= r_{CH}(u)$ for all $u\in [u_0,\uch]$. Assume  the following estimates hold on $C_{out}=\{u_0\}\times [v_0,+\infty)$:
		
		\begin{equation}	\label{hyp1}\begin{split}  &r_{CH}(u_0) >0,\ \lim_{v \rightarrow +\infty} \rd_u r(u_0,v) <0, \\
				& \Omega^2(u_0,v) \leq D \cdot e^{ 2K_- v + C v^{1-\eta}},\\	 &  (2+\eta)K_- \leq -\rd_v \log(\Omega^2)(u_-,v) \leq (2-\eta)K_-<0\\  &  				L_-\ v^{-2s}\leq -r\partial_v r(u_0,v) \leq L_+\ v^{-2s}, \\ & |\phi|^2(u_0,v),\  |Q|(u_0,v) \leq D,
			\end{split} 
		\end{equation}

		where $s>\frac{1}{2}$,  $\eta\in (0,1)$. Then, assuming $v_0$ is large enough with respect to  the constants involved in \eqref{hyp1}, \begin{enumerate}[i.]
			\item \label{main.thm.I} $\uch \in (u_0,u_F]$ and $[u_0,\uch] \times [v_0,+\infty) \subset \T$. Moreover, there exists $D_->0$, $D_+>0$, such that for all  $u \in [u_0,\uch]$, $v\geq v_0$ \begin{equation} \label{quant1}\begin{split} &   D_-\ v^{-2s}\leq -r\rd_v r(u,v) \leq D_+\ v^{-2s},\\ &
					r^2_{CH}(u) + \frac{2D_- }{2s-1}\cdot v^{1-2s} \leq	r^2(u,v)\leq   r^2_{CH}(u) +  \frac{2D_+ }{2s-1}\cdot v^{1-2s}.\end{split}
			\end{equation} 
			If, moreover, there exists $s>1$ such that the initial data satisfy \eqref{hyp1} and the additional estimate: \begin{equation}\label{hyp2}
				\begin{split}
					|D_v \phi|(u_0,v) \leq \tilde{D} \cdot v^{-s}
				\end{split}
			\end{equation} for some $\tilde{D}>0$, then the following spacetime estimates are satisfied: for all $u \in [u_0,\uch]$, $v\geq v_0$: 
			\begin{equation}\label{quant2}
				\begin{split}&	
					|\phi|(u,v) \lesssim r^{-\frac{1}{2s-1}}(u,v),\\ 
					&
					r |D_u\phi|(u,v) \lesssim r^{-\frac{s}{s-\frac{1}{2}}}(u,v),
					\\ &
					r| D_v \phi|(u,v)\lesssim v^{-s},\\ & |Q|(u,v) \leq \check{D},
				\end{split}
			\end{equation} for some $\check{D}>0$.
			
			\item \label{main.thm.II}	If  $\uch< u_F$ (breakdown assumption), then for all $ \uch <u \leq u_F$, there exists $v_{\mathcal{S}}(u)<+\infty$ such that  \begin{equation}
				\lim_{v \rightarrow v_{\mathcal{S}}(u)} r(u,v) =0 \text{, } \lim_{v\rightarrow +\infty} r(\uch,v)=0 \text{ and } \lim_{u \rightarrow \uch} r_{CH}(u)=0.\end{equation}
			
			If, moreover, there exists $s>1$ such that the initial data satisfy \eqref{hyp1}, \eqref{hyp2} and the following additional estimate holds: there exists $D_L>0$,  $D_C>0$, $\delta>0$, $\alpha_{\infty} \in \RR$ such that for all $v\geq v_0$:   \begin{equation}\label{hyp4}
				\begin{split}& 	|D_{v v}^2 \phi|(u_0,v) \leq D_{C} \cdot v^{-s-1},
			\end{split}	\end{equation} \begin{equation}\label{hyp5}
				|\Im( e^{i q_0 \int_{v_0}^{v}  A_v(u_0,v') dv'} e^{-i\alpha_{\infty}}D_v \phi(u_0,v))| \leq D_{C} \cdot|\phi|(u_0,v) \cdot v^{-s-\delta},
			\end{equation}
			\begin{equation}\label{hyp3}
				\begin{split}&  |D_v \phi|(u_0,v) \geq  D_L \cdot  v^{-s}.
				\end{split}
			\end{equation}

			Then, defining  $\mathcal{S}=\{(u,v_{\mathcal{S}}(u)),\ u\in (\uch,u_F)\}$, there exists $0<\ep< u_F -\uch$  such that \\$\mathcal{S} \cap (\uch,\uch+\ep)$ is spacelike with the following estimate for all $\uch<u \leq \uch +\ep$: \begin{equation}\label{K1}
				v_{\mathcal{S}}(u) \approx \left( u- \uch\right)^{-\frac{1}{2s-1}},\ v'_{\mathcal{S}}(u) \approx -\left( u- \uch\right)^{-\frac{2s}{2s-1}}.
			\end{equation}
			Moreover, the metric takes the following approximate Kasner form:  there exists coordinates $(x,\tau)$ so that $\mathcal{S}=\{\tau=0\}$, $\mathcal{S}\cap \CH=\{\tau=0,\ x=0\}$  and   $x_0\geq 0$ small enough so that for all 
			$0 \leq x \leq x_0$, $\tau\geq0$: 
			\begin{equation}\label{K2}\begin{split}
					&	g= - (1+ \mathcal{E}_T(\tau,x)) d \tau^2 + \tau^{2 (1-2p(\tau,x))}(1+ \mathcal{E}_X(\tau,x)) dx^2 + \tau^{2p(\tau,x)} ( d\theta^2+\sin^2(\theta)d\varphi^2 ) ,\\   &  \bigl| p(\tau,x)-p(0,x)\bigr|\ls \frac{|\log|(x)}{|\log|(\tau)},\ p(0,x) \approx x^{\frac{1}{2(s-1)}},\ |\rd_x p|(\tau,x) \ls x^{-1+\frac{1}{2(s-1)}},\\ & \phi(\tau,x) = p_{\phi}(x) \left( \log(\frac{x^{\frac{2s-1}{2(s-1)}}}{\tau})+ \mathcal{E}_\phi(\tau,x)\right)+\varXi_{\mathcal{S}}(x),\\ & |p_{\phi}|(x) \approx x^{\frac{1}{4(s-1)}},\ |\varXi_{\mathcal{S}}|(x)  \ls x^{\frac{1}{4(s-1)}},  \\ & |\mathcal{E}_T|(\tau,x),\  |\mathcal{E}_X|(\tau,x),\ |\mathcal{E}_\phi|(\tau,x) \lesssim \frac{\tau^{2p(\tau,x)}}{x^{\frac{2s-1}{s-1}}}\left[1+\log^2(\frac{\tau^{2p(\tau,x)}}{x^{\frac{2s-1}{s-1}}})\right], \end{split}
			\end{equation}  where $p_\phi(x) \in \mathbb{C}$ satisfies the usual Kasner relations \begin{equation}\begin{split}\label{K3}
					&  p_1^2(\tau,x)+ p_2^2(\tau,x)+ p_3^2(\tau,x)+2 |p_\phi|^2(x)=1,\\ & p_1(\tau,x) = 1-2p(\tau,x),\ p_2(\tau,x) = p_3(\tau,x) = p(\tau,x), \text{ i.e., }  p_1(\tau,x)+ p_2(\tau,x)+ p_3(\tau,x)=1. 
				\end{split}	
			\end{equation}
			in the coordinate system $(\tau,x,\theta,\varphi)$, which relates to  $(u,v,\theta,\varphi)$ in the following way, with $(u,v)=(\uch,+\infty)$ corresponding to $(\tau,x)=(0,0)$ and $\mathcal{S}=\{r=0\}=\{\tau=0\}$ and defining $ x_{\mathcal{S}}(v):=  \underset{u \rightarrow u_{\mathcal{S}}(v)}{\lim}x(u,v)$: \begin{equation}\begin{split}\label{K4}
					&\tau(u,v) =[ r(u,v)]^{p^{-1}(u,v)},\\ & \bigl| x(u,v) - x_{\mathcal{S}}(v)\big| \lesssim \frac{r^2(u,v)}{r_0^2(v)}\left[1+\log^2(\frac{r^2(u,v)}{r_0^2(v)})\right],\  x_{\mathcal{S}}(v) \approx v^{2(1-s)}.
				\end{split}
			\end{equation}
			
			In particular, the Kasner exponents and scalar field  obey the following estimates in $(u,v)$ coordinates: \begin{equation}\begin{split}\label{K5}
					&	p(u,v) \approx v^{-1},\ |p_{\phi}|(v) \approx v^{\frac{1}{2}},\ \phi(u,v) = p_{\phi}(v) \log(\frac{r_0(v)}{r(u,v)})+ \tilde{\varXi}_{\mathcal{S}}(v),\  |\tilde{\varXi}_{\mathcal{S}}|(v)\ls v^{\frac{1}{2}},\\ & |\mathcal{E}_T|(u,v),\  |\mathcal{E}_X|(u,v),\ |\mathcal{E}_\phi|(u,v) \lesssim \frac{r^2(u,v)}{r_0^2(v)}\left[1+\log^2(\frac{r^2(u,v)}{r_0^2(v)})\right], \end{split}
			\end{equation} where $r_0(v):= r(\uch,v) \approx v^{\frac{1}{2}-s}$ as a consequence of \eqref{quant1}.

			\item  \label{main.thm.III}If we assume that $[u_0,u_F) \times\{v_0\} \subset \T\cup \A$ and $\underset{ u \rightarrow u_F}{\lim} r(u,v_0)=0$, then $\uch < u_F$, so Statement~\ref{main.thm.II} holds.
			
			Moreover, there exist (a large class of) initial data $\underline{C}_{in} \cup C_{out}=[u_0,u_F) \times [v_0,+\infty)$ such that the assumptions \eqref{hyp1}, \eqref{hyp2}, \eqref{hyp3} hold on   $C_{out}$ and $\Cin \subset \T$ with $\underset{ u \rightarrow u_F}{\lim} r(u,v_0)=0$.    So, for such initial data,  the conclusion of Statement~\ref{main.thm.II} holds. These initial data are constructed as such: starting from the gauge choice    \begin{equation}\begin{split}
					& A_u(\cdot,v_0)=0,\\ & -r\partial_u r(\cdot,v_0) \equiv-1\ \& \lim_{u\rightarrow u_F} r(u,v_0)=0, \text{ or equivalently } u_F - u = \frac{r^2(u,v_0)}{2} ,\label{Kasner.data.construction1}\end{split}
			\end{equation}   we then assume Kasner-like scalar field asymptotics, in  that there exists a constant $|\psi_0|>1$ such that 
			\begin{equation} \label{Kasner.data.construction2}\begin{split}
					& |\phi|(u,v_0) \lesssim \log(r^{-1})(u,v_0),\ \frac{ |D_u \phi|}{|\rd_u r|}(u,v_0)  \ls r^{-1}(u,v_0),\\ & \liminf_{u\rightarrow u_F}\frac{  r|D_u \phi|(u,v_0)}{|\rd_u r|(u,v_0)} \geq |\psi_0|.\end{split}
			\end{equation}
			
			Under these assumptions, the following  quantities are well-defined: \begin{equation}\begin{split}\label{norms.def}
					&I(\phi)=  \sup_{u_0 \leq u < u_F } \left(- r^2(u,v_0) +  [Q(u,v_0)+q_0\int_{u_0}^{u} r^2 \Im(\bar{\phi}D_u\phi)(u',v_0) du' ]^2\right), \\ & N_0(\phi)= \int_{u_0}^{u_F} r^{-2}(u,v_0) 
					\exp(-\mathcal{F}(u)) |\rd_u r|(u,v_0)  du,
			\end{split}	\end{equation}  
			where we have introduced the notation $\mathcal{F}(u)=\int_{u_0}^{u} \frac{r|D_u \phi|^2(u',v_0)}{|\rd_u r|(u',v_0)}  du'$. Then, we make the additional quantitative assumption that 
			
			\begin{equation}\label{construction.main.assumption}
				2 \mathfrak{m}(u_0,v_0) -r(u_0,v_0)  >   I[\phi]N_0(\phi),
			\end{equation}
			which is in particular  satisfied if $2\mathfrak{m}(u_0,v_0) -r(u_0,v_0)$ is large with respect to  $|Q|(u_0,v_0)$ and $\phi(\cdot,v_0)$. Moreover, \eqref{construction.main.assumption} is also satisfied for a  class of large scalar field initial data, specifically with Kasner asymptotics of the form \begin{equation}\label{kasner.data.intro}
				\phi(u,v_0) = \Psi_0 \log(r^{-1}(u_0,v)) + \tilde{\phi}(u,v_0),
			\end{equation} where $\tilde{\phi}(u,v_0)$  is bounded and $|\Psi_0|$ is sufficiently large.

		\end{enumerate}

	\end{thm}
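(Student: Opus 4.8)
Since the statement recalls Theorem~3.1 of \cite{bif}, I describe how I would reprove it from scratch rather than by citation. The backbone is a continuity/bootstrap argument on the characteristic rectangle $\mathcal{D}_{u^*}=[u_0,u^*]\times[v_0,+\infty)$, organized around the structural fact that near a Cauchy horizon the lapse is exponentially suppressed, $\Omega^2(u,v)\lesssim e^{2K_- v+Cv^{1-\eta}}$ with $K_-<0$, so that every term in \eqref{Omega}--\eqref{RaychV} carrying an $\Omega^2$ factor is negligible against the polynomially-decaying quantities we track. For $u^*\in(u_0,u_F)$ I would postulate bootstrap bounds on $\mathcal{D}_{u^*}$: the lapse bound above; $|r(u,v)-r_{CH}(u)|$ small (in particular $r$ bounded below); $-r\partial_v r(u,v)\approx v^{-2s}$; $|Q|\lesssim 1$; and, once \eqref{hyp2} is assumed, $r|D_v\phi|\lesssim v^{-s}$ together with the $r$-power bounds of \eqref{quant2}. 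Local well-posedness provides openness and non-emptiness near $u_0$; the real work is improving every constant, after which closedness identifies the bootstrap set with $[u_0,\uch]$, forces $\uch>u_0$, and yields Statement~\ref{main.thm.I}.

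The geometric improvements go through \eqref{Radius3} written as $-\partial_v(r\nu)=\tfrac{\Omega^2}{4}(1-Q^2/r^2)$: integrating from $v$ to $+\infty$, using that $\Omega^2$ is exponentially small while $1-Q^2/r^2$ is bounded (since $r\approx r_{CH}(u)>0$, $|Q|\lesssim1$), shows $r\nu$ has a finite limit with exponentially small error, and a further integration in $u$ against the data gives $r^2(u,v)=r_{CH}^2(u)+\tfrac{2D_\pm}{2s-1}v^{1-2s}+\dots$ and the sharp two-sided bound on $-r\partial_v r$. Propagation of $-r\partial_v r\approx v^{-2s}$ off $C_{out}$ is immediate, since $\partial_u(-r\partial_v r)=\tfrac{\Omega^2}{4}(1-Q^2/r^2)$ has size $\lesssim e^{2K_- v}\ll v^{-2s}$ while $u$ ranges over the bounded interval $[u_0,u_F)$; the lapse bound propagates the same way from $C_{out}$ by integrating \eqref{Omega2}--\eqref{Omega3} in $u$ with source controlled by the bootstrapped quantities. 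For the matter bounds under \eqref{hyp2} I would use the transport form \eqref{Field2}--\eqref{Field3} of the wave equation for $\theta=r\partial_v\phi$ and $\xi=r\partial_u\phi$: along ingoing rays $\partial_u\theta$ equals $-\tfrac{\lambda}{r}\xi$ up to exponentially small terms, keeping $|\theta|\lesssim rv^{-s}$; along outgoing rays $\partial_v\xi\approx-\tfrac{\nu}{r}\theta$, which after integration in $v$ and tracking the growth of $\xi/r$ as $r$ shrinks produces the $r^{-s/(s-1/2)}$ loss; then $|\phi|(u,v)\lesssim|\phi|(u_0,v)+\int|\xi|/r\,du'$ gives the $r^{-1/(2s-1)}$ bound after converting $du'$ to $dr$ via the gauge \eqref{gauge.U.future}, and the charge bound follows from integrating \eqref{chargeUEinstein}--\eqref{ChargeVEinstein}.

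For Statement~\ref{main.thm.II}, assume the breakdown $\uch<u_F$. Monotonicity of $r$ in $u$ and continuity of $u\mapsto r_{CH}(u)$ (a consequence of the two-sided bound \eqref{quant1}) force $r_{CH}(\uch)=\lim_{u\to\uch}r_{CH}(u)=0$, so for $u>\uch$ the map $v\mapsto r(u,v)$, decreasing in the trapped region, must vanish at some finite $v_{\mathcal{S}}(u)$; the rate $v_{\mathcal{S}}(u)\approx(u-\uch)^{-1/(2s-1)}$ with $v_{\mathcal{S}}'(u)<0$ (hence $\mathcal{S}$ spacelike) follows by inverting the $v$-expansion of $r^2$, whose would-be limit is $\approx-2(u-\uch)$ near $\uch$ in the gauge \eqref{gauge.U.future}. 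The Kasner description \eqref{K1}--\eqref{K5}, which additionally uses \eqref{hyp2}, \eqref{hyp4}, \eqref{hyp5} (the non-oscillation condition) and \eqref{hyp3}, is the hard core: one introduces the renormalized time $\tau=r^{1/p}$ and a quasi-homogeneous spatial variable $x\approx x_{\mathcal{S}}(v)$, extracts from \eqref{Omega2}, \eqref{Radius3}, \eqref{Field4} an effective ODE system of Einstein-scalar-field Kasner type (the charge drops out because $\Omega^2$ is exponentially suppressed), and proves convergence with error $\lesssim\tau^{2p}/x^{(2s-1)/(s-1)}$ up to logarithmic corrections; imposing $p_2=p_3=p$, $p_1=1-2p$ and the scalar-field Kasner relation fixes $p(0,x)\approx x^{1/(2(s-1))}$, which via $x_{\mathcal{S}}(v)\approx v^{2(1-s)}$ becomes $p(u,v)\approx v^{-1}$ and the asymptotics of \eqref{K5}. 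The main obstacle is precisely this convergence near the transition sphere $\CH\cap\mathcal{S}$, where $p(u,v)\approx1/v\to0$: the Kasner exponents \emph{degenerate} to $(1,0,0)$, so the standard stabilization mechanism for Kasner singularities, which requires non-degenerate exponents (as in \cite{FournodavlosRodnianskiSpeck}), is unavailable, and one must instead obtain estimates uniform in the degeneration parameter $x$, carefully tracking the $x$-dependence of $\mathcal{E}_T,\mathcal{E}_X,\mathcal{E}_\phi$ --- the source of the logarithmically-weighted remainders in \eqref{K2}.

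Finally, for Statement~\ref{main.thm.III} I would first prove $\uch<u_F$ by contradiction: were $\uch=u_F$, the Cauchy horizon would extend all the way to $u_F$ with $r_{CH}(u)\downarrow0$, and one reaches an inconsistency from the quantitative mass inflation forced by the weak singularity --- via the charge-to-mass argument of \cite{breakdown}, or equivalently the mechanism of Theorem~\ref{breakdown.thm.new}, showing the trapped region under $\CH$ cannot close off compatibly with $r\to0$ on $\Cin$; since $[u_0,u_F)\times\{v_0\}\subset\mathcal{T}\cup\mathcal{A}$ with $r\to0$ along $\Cin$, this gives $\uch<u_F$ and Statement~\ref{main.thm.II} applies. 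The explicit data \eqref{Kasner.data.construction1}--\eqref{kasner.data.intro} is then built by prescribing a Reissner--Nordstr\"{o}m-type outgoing cone $C_{out}$ carrying a small $v^{-s}$ scalar-field tail (so \eqref{hyp1}, \eqref{hyp2}, \eqref{hyp3} and the oscillation conditions hold) together with large logarithmic Kasner data $\phi(u,v_0)=\Psi_0\log(r^{-1})+\tilde\phi$ on $\Cin$; the quantitative condition \eqref{construction.main.assumption} is imposed exactly so that the Hawking mass dominates $|Q|$ and $\phi$, which renders $\Cin$ trapped with $r\to0$ and closes the construction.
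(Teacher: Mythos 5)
First, a point of comparison that matters here: the present manuscript does not prove Theorem~\ref{main.thm} at all --- it is Theorem 3.1 of \cite{bif} quoted verbatim, and the only pieces of its proof that appear in this paper are the two recalled intermediate results, Proposition~\ref{apriori.prop1} (the a priori estimates \eqref{lambda.est.prelim}--\eqref{nu.est.prelim}) and Proposition~\ref{localbreak.prop} (the local breakdown criterion). So your proposal can only be judged against that skeleton and against internal consistency.

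Measured that way, your outline of Statement~\ref{main.thm.I} is the right argument: the bootstrap on $[u_0,u^*]\times[v_0,+\infty)$ driven by the exponential suppression of $\Omega^2$, the integration of \eqref{Radius3} to get $r^2(u,v)=r_{CH}^2(u)+O(v^{1-2s})$, and the $\theta$/$\xi$ transport estimates with the $du\sim -r\,dr$ conversion reproducing the exponents $r^{-1/(2s-1)}$ and $r^{-2s/(2s-1)}$ all check out and match the content of Proposition~\ref{apriori.prop1}. For Statement~\ref{main.thm.III}, note that the clean route to $\uch<u_F$ is not the charge-to-mass contradiction of \cite{breakdown} you invoke: Proposition~\ref{apriori.prop1} gives the uniform lower bound $r^2(u,v_0)\geq \frac{0.9L_-}{2s-1}v_0^{1-2s}>0$ for all $u\leq\uch$, which directly contradicts $r(u,v_0)\to0$ along $\Cin$ if $\uch=u_F$; this is exactly why the paper advertises the new proof as ``shorter and more constructive.'' Similarly, your claim that $r_{CH}(u)\to0$ as $u\to\uch$ follows from ``monotonicity and continuity'' is too quick --- continuity of $r_{CH}$ on $[u_0,\uch]$ does not by itself exclude $r_{CH}(\uch)>0$; one needs the thin-rectangle integration of $-\partial_u\partial_v(r^2)$ over $[\uch-\ep,\uch+\ep]\times[V,v_{\mathcal B}(u)]$ carried out in Proposition~\ref{apriori.prop2}.

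The genuine gap is Statement~\ref{main.thm.II}. You correctly name the difficulty (degeneration of the Kasner exponents at $\CH\cap\mathcal{S}$, unavailability of \cite{FournodavlosRodnianskiSpeck}-type stabilization), but naming it is not overcoming it: the passage from the double-null system to the ``effective ODE system of Einstein-scalar-field Kasner type,'' the derivation of the specific error weight $\tau^{2p}/x^{(2s-1)/(s-1)}$ with its logarithmic corrections, and the role of the non-oscillation hypothesis \eqref{hyp5} (which is what allows the pointwise lower bound \eqref{hyp3} on $|D_v\phi|$ to survive propagation without phase cancellation, and hence what produces the lower bounds on $|p_\phi|$ and the spacelike character of $\mathcal{S}$) are all asserted rather than derived. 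Since \eqref{K1}--\eqref{K5} constitute essentially the entire content of \cite{bif}, the proposal as written is a correct program for the easy parts and a statement of intent for the hard part.
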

	
	\begin{rmk}
		Note that, instead of \eqref{hyp5}, Theorem 3.1 in \cite{bif} required the (stronger) assumption that \begin{equation}\label{hyp5'}
			|\Im(\bar{\phi} D_v \phi)|(u_0,v) \lesssim v^{-s-\delta}.
		\end{equation} We proved in Proposition 6.5 in \cite{bif} that \eqref{hyp5'} implies \eqref{hyp5}, and subsequently we have only used \eqref{hyp5} (and not \eqref{hyp5'}) in the proof of Theorem 3.1.  Although \eqref{hyp5'} is a ``cleaner'' assumption to state, it turns out that   \eqref{hyp5}  is simpler to achieve in practice and thus, in this manuscript we have decided to reformulate Theorem 3.1 as Theorem~\ref{main.thm} with  \eqref{hyp5} instead of \eqref{hyp5'}.
	\end{rmk}

	\subsection{Applications of the local result to one-ended black hole spacetimes}
	Based on geometric extension principles, it is possible to isolate all the possibilities for spacetime boundary components for one-ended spherically symmetric solutions of \eqref{1.1}--\eqref{5.1}: this result was obtained in \cite{Kommemi}, finding only a finite number of possibilities as seen in the following theorem.

	\begin{thm}[Theorem 1.1\ of \cite{Kommemi}\color{black}] \label{Kommemi.thm}
		
		We consider the maximal development $(M=\mathcal{Q}^+ \times_r \mathcal{S}^2,g_{\mu \nu}, \phi,F_{\mu \nu})$ of smooth, spherically symmetric, containing no anti-trapped surface, one-ended asymptotically flat \color{black} initial data satisfying the Einstein--Maxwell--Klein--Gordon system, where $r: \mathcal{Q}^+\rightarrow [0,+\infty)$ is the area-radius function. Then the Penrose diagram of $\mathcal{Q}^+$ is given by Figure~\ref{fig:one-ended}, with boundary $\Sigma \cup \Gamma$ in the sense of manifold-with-boundary --- where $\Sigma$ is space-like, and $\Gamma$, the center of symmetry, is time-like with $r_{|\Gamma}=0$ --- and boundary $\mathcal{B}^+$ induced by the manifold ambient $\RR^{1+1}$: $$ \mathcal{B}^+ = b_{\Gamma} \cup \mathcal{S}^1_{\Gamma} \cup \mathcal{CH}_{\Gamma} \cup  \mathcal{S}^2_{\Gamma} \cup  \mathcal{S}  \cup \mathcal{S}_{i^+}  \cup \CH \cup i^{+} \cup \mathcal{I}^+ \cup i^0,$$ where $i^0$ is space-like infinity, $\mathcal{I}^+$ is null infinity, $i^{+}$ is time-like infinity and 	  \begin{enumerate}
			\item $\CH$ is a connected (possibly empty) half-open null ingoing segment emanating from $i^{+}$. The area-radius function $r$ extends as a strictly positive function on $\CH$, except maybe at its future endpoint.
			\item $\mathcal{S}_{i^+}$ is a connected (possibly empty) half-open null ingoing segment emanating (but not including) from the end-point of $\CH \cup i^{+}$. $r$ extends continuously to zero on $\mathcal{S}_{i^+}$.
			\item $b_{\Gamma}$ is the center end-point i.e.\ the unique future limit point of $\Gamma$ in $\overline{\mathcal{Q}^+}-\mathcal{Q}^+$.
			\item $ \mathcal{S}^1_{\Gamma} $ is a connected (possibly empty) half-open null outgoing segment emanating from $b_{\Gamma}$.\\ $r$ extends continuously to zero on $ \mathcal{S}^1_{\Gamma} $.
			\item $\mathcal{CH}_{\Gamma}$ is a connected (possibly empty) half-open null outgoing segment emanating from the future end-point of $b_{\Gamma} \cup  \mathcal{S}^1_{\Gamma} $. $r$ extends as a strictly positive function on $\mathcal{CH}_{\Gamma}$, except maybe at its future endpoint.
			\item $\mathcal{S}^2_{\Gamma}$ is a connected (possibly empty) half-open null outgoing segment emanating from the future end-point of $\mathcal{CH}_{\Gamma}$.  $r$ extends continuously to zero on $\mathcal{S}^2_{\Gamma}$. 	\item $\mathcal{S} $ is a connected (possibly empty) achronal curve that does not intersect null rays emanating from $b_{\Gamma}$ or $i^+$. $r$ extends continuously to zero on $\mathcal{S}$.
		\end{enumerate} 
		We also define the black hole region $ \mathcal{BH}:= \mathcal{Q}^+ \backslash J^{-}(\mathcal{I}^+)$, and the event horizon $\mathcal{H}^+ = \overline{J^{-}(\mathcal{I}^+)} \backslash J^{-}(\mathcal{I}^+) \subset~ \mathcal{Q}^+$.
	\end{thm}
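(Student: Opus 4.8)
\emph{Proof strategy.} This is the boundary classification theorem of Kommemi \cite{Kommemi}; we outline the approach one would follow. The object to analyze is the quotient $\mathcal{Q}^+\subset\RR^{1+1}$ of the maximal development, realized in a double-null gauge adapted to the Cauchy hypersurface $\Sigma$ (Section~\ref{gauge.section}) as a globally hyperbolic domain admitting the image of $\Sigma$ as a Cauchy hypersurface, whose boundary in the ambient $\RR^{1+1}$ consists of the past data curve, the center $\Gamma$, the conformal boundary at infinity, and the future (terminal) boundary. The plan has three ingredients: (i) record the monotonicity structure of $r$, $\varpi$, $\Omega^2$ forced by the Raychaudhuri equations \eqref{RaychU}--\eqref{RaychV}, the mass laws \eqref{massUEinstein}--\eqref{massVEinstein} and the no-anti-trapped-sphere assumption; (ii) prove a local extension principle showing that the only obstructions to extending the solution across a boundary sphere not on $\Gamma$ are $r\to 0$, $r\to+\infty$, or $\varpi\to+\infty$, together with a matching local theory at the regular center; (iii) use these to attach $\mathcal{I}^+\cup i^0\cup i^+$, define $\mathcal{H}^+=\overline{J^-(\mathcal{I}^+)}\setminus J^-(\mathcal{I}^+)$, and decompose the remaining future boundary into the listed finite collection of achronal curves and null segments.

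For (i): since $\Sigma$ has no anti-trapped sphere, $\rd_u r<0$ on $\Sigma$, and integrating \eqref{RaychU} (which gives $\rd_u(\rd_u r/\Omega^2)\le 0$) propagates $\rd_u r<0$ to all of $\mathcal{Q}^+$. Dually, \eqref{RaychV} makes $\rd_v r/\Omega^2$ non-increasing along outgoing cones, so $\T$ and $\A\cup\T$ are stable to the future along outgoing rays (as already recalled above); combined with the monotonicity of $\varpi$ in the appropriate null directions from \eqref{massUEinstein}--\eqref{massVEinstein}, this yields that $r$ is eventually decreasing in \emph{both} null directions once one enters $\A\cup\T$ — the key structural fact constraining the black hole interior. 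For (ii): in any characteristic rectangle whose past boundary carries regular data and throughout which $0<c\le r\le C$, $\varpi\le C$, $\Omega^2$ stays comparable to a fixed smooth function, and which stays away from $\Gamma$, local existence plus a Gr\"{o}nwall estimate on the renormalized system \eqref{Radius3}, \eqref{Omega3}, \eqref{Omega2}, \eqref{Field.psi} (\`{a} la Christodoulou--Dafermos) shows the solution extends past the closed rectangle. Hence a non-removable boundary point off $\Gamma$ must be a limit of spheres with $r\to 0$, $r\to+\infty$, or $\varpi\to+\infty$ (mass inflation, with $r$ still bounded above); by global hyperbolicity such a point is a null boundary generator, not an interior hole. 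Near the axis, a separate classical local theory for the spherically symmetric problem with a regular center controls the solution, identifying $\Gamma$ as a timelike boundary curve with $r\equiv 0$ and locating its unique future limit point $b_\Gamma$, the ``first singularity'' on the axis.

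For (iii): define $\mathcal{I}^+$ as the ideal boundary points approached along outgoing rays with $r\to+\infty$ (via the terminal-indecomposable-past-set formalism), with past limit $i^0$ and future endpoint $i^+$, and set $\mathcal{H}^+=\overline{J^-(\mathcal{I}^+)}\setminus J^-(\mathcal{I}^+)$, an outgoing null segment. In $\mathcal{BH}=\mathcal{Q}^+\setminus J^-(\mathcal{I}^+)$ every sphere lies in $\A\cup\T$, so by (i) $r$ strictly decreases along any future-directed causal curve there; hence the part of the future boundary on which $r\to 0$ and which is not a null ray emanating from $b_\Gamma$ or $i^+$ is \emph{achronal} — this is $\mathcal{S}$. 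The residual possibility is null generators on which $r$ stays bounded away from $0$; by global hyperbolicity and the $\RR^{1+1}$ topology these occur as at most one outgoing segment $\mathcal{CH}_\Gamma$ on the axis side and at most one ingoing segment $\CH$ emanating from $i^+$, with the collapsed pieces $\mathcal{S}^1_\Gamma,\mathcal{S}^2_\Gamma,\mathcal{S}_{i^+}$ (where the boundary returns to $r=0$) inserted exactly as in the stated emanation pattern, which is the list of configurations a Lipschitz achronal future boundary of such a domain can realize.

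\emph{Main obstacle.} The crux is step (ii): establishing the local extension principle and, within it, isolating $\{r\to 0\}$, $\{r\to+\infty\}$, $\{\varpi\to+\infty\}$ as the \emph{only} obstructions — which requires a lower bound on $\Omega^2$ and is precisely where the renormalized equations \eqref{Omega3}, \eqref{Omega2} are indispensable — together with the matching regular-center theory needed to define $b_\Gamma$. The subsequent topological bookkeeping (achronality of $\mathcal{S}$, and that each of $\CH,\mathcal{CH}_\Gamma,\mathcal{S}_{i^+},\mathcal{S}^{1}_\Gamma,\mathcal{S}^{2}_\Gamma$ occurs at most once and in the prescribed order) then follows from the $r$-monotonicity of step (i) and global hyperbolicity, but still demands a careful case analysis of how an achronal boundary curve in $\RR^{1+1}$ may alternate between null and spacelike portions.
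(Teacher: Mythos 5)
This statement is not proved in the paper at all: it is imported verbatim as Theorem 1.1 of \cite{Kommemi} and used as a black box (the paper's own contributions begin only with which of the listed components are empty or non-empty). So there is no ``paper proof'' to compare against; what you have written is a blind reconstruction of Kommemi's argument. As such, your three-step architecture --- (i) monotonicity of $r$ and $\varpi$ from the Raychaudhuri equations \eqref{RaychU}--\eqref{RaychV} and the no-anti-trapped-sphere hypothesis, (ii) a local extension principle isolating the admissible obstructions at a ``first singularity'' off the center together with a separate regular-center theory defining $b_{\Gamma}$, (iii) attachment of the ideal boundary via terminal indecomposable pasts and a topological case analysis in $\RR^{1+1}$ --- is indeed the correct skeleton of the proof in \cite{Kommemi}.

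One imprecision worth flagging: in step (ii) you list $\varpi\to+\infty$ as a third independent obstruction alongside $r\to 0$ and $r\to+\infty$. In Kommemi's formulation the extension principle for a \emph{strongly tame} matter model (which Einstein--Maxwell--charged-scalar-field is shown to be) requires only that $r$ be bounded away from $0$ and $\infty$ on the relevant compact causal past; on such a compact rectangle $\varpi$ is automatically controlled by its values on the past boundary together with its monotonicity, so mass inflation is not a separate obstruction to extendibility at a finite boundary point. The components where $r$ stays positive ($\CH$ and $\mathcal{CH}_{\Gamma}$) evade the extension principle not because $\varpi$ blows up but because they are not first singularities: $\CH$ sits at infinite coordinate distance, and points of $\mathcal{CH}_{\Gamma}$ have $b_{\Gamma}$ (where $r\to 0$) in their causal past. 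Establishing strong tameness for the charged field --- i.e.\ that the $r$-bounds alone suffice to close the Gr\"{o}nwall argument for the renormalized system --- is the genuinely hard analytic content you correctly identify as the crux, but your sketch does not supply it, so the proposal should be read as a faithful outline of the cited proof rather than a self-contained argument.
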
 We moreover record a useful definition from \cite{Kommemi,breakdown}: \begin{defn}\label{first.sing.def}
		We say $b_{\Gamma}$ is a (central) first singularity if  $\mathcal{S} \cup \mathcal{S}^1_{\Gamma}\cup \mathcal{S}^2_{\Gamma}\cup \mathcal{CH}_{\Gamma}\neq \emptyset$.
	\end{defn}
	The main open problem in spherically symmetric black hole dynamics is then to determine which of these spacetime boundaries components $\CH$, $\mathcal{S}_{i^+}$, $b_{\Gamma}$, $\mathcal{S}^1_{\Gamma}$,  $\mathcal{CH}_{\Gamma}$, $\mathcal{S}^2_{\Gamma}$,  $\mathcal{S}$, if any, is empty, and provide quantitative estimates on the spacetime metric and matter fields near the non-empty components.
	
	A partial resolution of this open problem is provided in   the main result of \cite{breakdown} reflecting the breakdown of weak null singularities, i.e., the fact that $\CH$ cannot be the only non-empty boundary component in Figure~\ref{fig:spacelikeconj}, providing $\CH$ is weakly singular (the weak singularity of $\CH$ is, in turn, obtained in \cite{Moi,Moi4}).
	
	\begin{thm}[Theorem 3.1 of \cite{breakdown}] \label{breakdown.thm}
		For a one-ended spacetime as in Theorem~\ref{Kommemi.thm}, assume there exists an outgoing future cone emanating from $(u_1,v_1) \in \T$ and reaching $\CH$ on which $\phi$ and $Q$ obey the following upper bounds: for all $v \geq v_1$, \begin{equation} \label{corollaryassumption}
			|\phi|(u_1,v)+ |Q|(u_1,v) \leq C \cdot |\log(\mathfrak{m})|,
		\end{equation} for some $C>0$, and the Hawking mass $\mathfrak{m}$ blows up \begin{equation} \label{corollaryassumption2}
			\lim_{v \rightarrow+\infty} \mathfrak{m}(u_1,v) =+\infty.
		\end{equation} Then the Cauchy horizon $\CH$ breaks down, namely $$\mathcal{S}^1_{\Gamma} \cup \mathcal{CH}_{\Gamma} \cup \mathcal{S}^2_{\Gamma} \cup  \mathcal{S}  \neq \emptyset.$$
	\end{thm}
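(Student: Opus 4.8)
I would argue by contradiction, along the scheme: \emph{the Cauchy horizon is engulfed by the trapped region, which forces apparent-horizon spheres to accumulate at the centre's endpoint $b_\Gamma$, where a charge--to--mass inequality is incompatible with the apparent horizon separating $\R$ from $\T$.} Suppose $\mathcal{S}^1_{\Gamma}\cup\mathcal{CH}_{\Gamma}\cup\mathcal{S}^2_{\Gamma}\cup\mathcal{S}=\emptyset$, i.e.\ $b_\Gamma$ is not a first singularity (Definition~\ref{first.sing.def}). By Theorem~\ref{Kommemi.thm}, the interior terminal boundary then consists only of $\CH$ (from $i^+$) and possibly $\mathcal{S}_{i^+}$, closing up at $b_\Gamma$: the Penrose diagram of Figure~\ref{fig:disproof}. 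The possibility $\mathcal{S}_{i^+}\neq\emptyset$ is harmless --- $\mathcal{S}_{i^+}$ is a crushing component ($r\to0$) to the future of $\CH$, hence also surrounded by $\T$, and the argument below applies to it verbatim --- so I write $\CH$ for the interior future boundary abutting $b_\Gamma$.

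\textbf{Steps 1--2 (engulfing and accumulation).} Let $C_{u_1}$ be the outgoing cone of the hypothesis, issued from $(u_1,v_1)\in\T$ and reaching $\CH$. By \eqref{RaychV}, $\{\rd_v r<0\}$ is preserved along outgoing cones issued from $\T$, so $C_{u_1}\subset\T$; since $r$ decreases to $r_{CH}(u_1)>0$ on $C_{u_1}$ while $\rho(u_1,v)\to+\infty$ by \eqref{corollaryassumption2}, the relation \eqref{murelation} gives $1-2\rho/r\to-\infty$ there, and a one-sided neighbourhood of $\CH$ in $\mathcal{Q}^+$ lies in $\T$. Near the regular centre $\Gamma$ one has $\rd_v r>0$ (the centre is locally Minkowskian). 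Hence, along each ingoing ray $\{v=v_n\}$ with $v_n$ large, $\rd_v r$ is negative at $(u_1,v_n)\in\T$ and positive near $\Gamma$; let $a_n$ be the apparent-horizon sphere on $\{v=v_n\}$ \emph{closest to $\Gamma$}, so $\rd_v r(a_n)=0$ and $\rd_v r>0$ strictly between $a_n$ and $\Gamma$. If the $u$-coordinate of $a_n$ did not converge to that of $b_\Gamma$, a regular band adjacent to $\Gamma$ would persist up to $v=+\infty$, where it would lie on $\CH\subset\T$; this contradiction (using $r_{CH}(u)\to0$ at $b_\Gamma$) forces $a_n\to b_\Gamma$.

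\textbf{Step 3 (the apparent-horizon sign).} Along $\{v=v_n\}$, ``towards $\Gamma$'' is the direction of increasing $u$ (as $\rd_u r<0$ and $r\to0$ at $\Gamma$), along which $\rd_v r$ increases from $0$ at $a_n$ into positive values; hence $\rd_u(\rd_v r)(a_n)\ge0$. Evaluating the Raychaudhuri equation \eqref{Radius} at $a_n$, where $\rd_v r=0$,
\begin{equation}
\rd_u\rd_v r(a_n)=\frac{\Omega^2}{4r^3}\bigl(Q^2-r^2\bigr)(a_n),
\end{equation}
so $|Q|(a_n)\ge r(a_n)$ --- equivalently, by \eqref{murelation} (which gives $\varpi=\tfrac12 r+\tfrac12 r^{-1}Q^2$ on $\A$), the charge is \emph{not} small compared to the mass at $a_n$.

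\textbf{Step 4 (the charge--mass inequality, and the obstacle).} I would contradict Step 3 by showing $|Q|(a_n)=o(r(a_n))$ for $n$ large. Since $Q$ vanishes at the regular centre, integrating \eqref{chargeUEinstein} along the ingoing segment from $\Gamma$ to $a_n$ yields $|Q|(a_n)\le|q_0|\int r^2|\phi|\,|D_u\phi|\,du$ over that segment. This is controlled by combining: (i) a bound on $|\phi|$ near $b_\Gamma$ obtained by propagating the weak logarithmic bound \eqref{corollaryassumption} (which is given only on $C_{u_1}$) into the causal region bounded by $C_{u_1}$, $\CH$ and $\Gamma$, via an energy estimate for $\psi=r\phi$ using \eqref{Field.psi}, \eqref{RaychU} and the mass equations \eqref{massUEinstein}--\eqref{massVEinstein}, so that $|\phi|,|Q|$ remain at worst logarithmically large there; and (ii) the fact that, as $a_n\to b_\Gamma$, both $r(a_n)\to0$ and the $u$-width of the segment from $\Gamma$ to $a_n$ tends to $0$, so that Cauchy--Schwarz together with the flux bound $\int r\,\Omega^{-2}|D_u\phi|^2\,du\lesssim1$ from \eqref{RaychU} makes the integral $o(r(a_n))$. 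This gives $|Q|(a_n)<r(a_n)$, contradicting Step 3; hence $\mathcal{S}^1_{\Gamma}\cup\mathcal{CH}_{\Gamma}\cup\mathcal{S}^2_{\Gamma}\cup\mathcal{S}\neq\emptyset$. The main difficulty is precisely this last step: the only a priori input is the weak, logarithmically-growing bound \eqref{corollaryassumption} on a \emph{single} outgoing cone, and converting it into quantitative smallness of the charge relative to the vanishing area-radius at apparent-horizon spheres arbitrarily near $b_\Gamma$ requires carefully propagating it while controlling the nonlinear backreaction of the charged scalar field --- this is the ``ingoing Raychaudhuri'' analysis of \cite{breakdown}, Section~4.1.2. (Alternatively, the statement follows \emph{a fortiori} from the stronger, quantitative Theorem~\ref{breakdown.thm.new}, whose trapped ``causal rectangle'' proof is shorter.)
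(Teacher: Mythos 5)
Your outline is a faithful reconstruction of the original argument of \cite{breakdown}, which is exactly how the present paper treats this statement: Theorem~\ref{breakdown.thm} is imported from \cite{breakdown} rather than reproved, and the four-step scheme you describe (trapped neighbourhood of $\CH$, apparent-horizon spheres $a_n\to b_\Gamma$ by interpolation between $\Gamma$ and the trapped region, the sign of $\rd_u\rd_v r$ at the left-most apparent horizon via \eqref{Radius}, and the charge-to-mass inequality from the ingoing Raychaudhuri analysis of \cite{breakdown}, Section~4.1.2) coincides step for step with the summary given in Section~\ref{qual.section}. You also correctly identify the genuinely hard step — upgrading the logarithmic bound \eqref{corollaryassumption} on a single cone to $|Q|(a_n)<r(a_n)$ as $r(a_n)\to 0$ — and defer it to the cited reference, which is all that can be asked of a proposal here. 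One correction to your closing parenthetical: the statement does \emph{not} follow a fortiori from Theorem~\ref{breakdown.thm.new} (equivalently from Proposition~\ref{apriori.prop1} and the new proof in Section~\ref{breakdown.reproof}), because that route requires the quantitative exponential mass-inflation rate \eqref{mass.inflation.intro} (i.e.\ the hypotheses \eqref{hyp1}), which is strictly stronger than the qualitative blow-up \eqref{corollaryassumption2} assumed in Theorem~\ref{breakdown.thm}; the paper itself remarks that Statement~\ref{B} of Theorem~\ref{main.thm.global.i} is for this reason a slightly weaker result. The trade-off between the two routes is precisely this: the old argument you reconstruct works under the weak qualitative hypothesis but is indirect and non-quantitative, while the new one buys a non-shrinking trapped rectangle (hence a constructive, shorter contradiction) at the price of assuming a mass-inflation rate.
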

	
	We now  give applications of the quantitative Theorem~\ref{main.thm} to the global one-ended setting in the following Theorem~\ref{main.thm.global.i} below, which also includes a new, simpler proof of Theorem~\ref{breakdown.thm}. The novel result of Theorem~\ref{main.thm.global.i} is that   $\mathcal{S}_{i^{+}} =\emptyset$ (unconditionally) and that locally-naked singularities on which $r>0$ (defined as $ \mathcal{CH}_{\Gamma}$) are the only obstruction to the presence of spacelike singularities described by the quantitative estimates of Theorem~\ref{main.thm}. Theorem~\ref{main.thm.global.i} below corresponds to the first statement of Theorem~\ref{thm.cond.intro} from the introduction.

	\begin{thm}\label{main.thm.global.i} 
		\item We consider a one-ended black hole interior in the sense of Theorem~\ref{Kommemi.thm} and we assume that the estimates \eqref{hyp1} are satisfied on some outgoing cone $C_{out}= \{u_0\}\times[v_0,+\infty)$  with $(u_0,v=+\infty)\in \CH$. Then  	 \begin{enumerate}[A.]
			\item \label{A} There is no collapsed light cone emanating from $i^+$, namely:  $\mathcal{S}_{i^{+}} =\emptyset$. 
			\item \label{B} The Cauchy horizon $\CH$ breaks down, namely: $\mathcal{S}^1_{\Gamma} \cup \mathcal{CH}_{\Gamma} \cup \mathcal{S}^2_{\Gamma} \cup  \mathcal{S}  \neq \emptyset.$

			\noindent	Now we make the initial data assumptions \eqref{hyp1}, \eqref{hyp2}, \eqref{hyp4}-\eqref{hyp3} on $C_{out}$ and we moreover assume that there is no locally-naked singularity on which $r>0$, i.e., $ \mathcal{CH}_{\Gamma}=\emptyset$. In particular, this assumption is satisfied if $\A$ has an endpoint at $b_{\Gamma}$. Then,

			\item  There is no locally-naked singularity on which $r=0$, namely: $ \mathcal{S}_{\Gamma}^{1}= \mathcal{S}_{\Gamma}^{2}=\emptyset$.
			\label{C.stat}
			
			\item There  exists a non-empty singularity not emanating from $i^+$ or $b_{\Gamma}$ and on which $r$ extends to $0$: namely $\mathcal{S} \neq \emptyset$ and for $\epsilon>0$ small enough, $\mathcal{S} \cap  (\uch,\uch+\epsilon] \times [v_0,+\infty)$ is a spacelike boundary in $\overline{\mathcal{Q}^+}$. Moreover, the quantitative estimates \eqref{quant1}, \eqref{quant2} and the Kasner asymptotics \eqref{K1}--\eqref{K5} of Theorem~\ref{main.thm} hold.\label{D.stat} 
			
			\noindent Finally, assume that there exists $u_s \in \RR$ such that  for all $u_0 \leq u_s$ with $(u_0,v=+\infty) \in \CH$, there exists $v_0(u_0)$ such that	  \eqref{hyp1}, \eqref{hyp2}, \eqref{hyp4}-\eqref{hyp3} hold on $C_{out}=\{u_0\} \times [v_0,+\infty)$. \item Then, $\CH$ is a weakly singular Cauchy horizon, in the sense of mass inflation, i.e., for all $u\leq \uch$: \begin{equation}
				\lim_{v\rightarrow +\infty} \varpi(u,v)=			\lim_{v\rightarrow +\infty} \mathfrak{m}(u,v)=+\infty.
			\end{equation}\label{E.stat} 
			
		\end{enumerate}
	\end{thm}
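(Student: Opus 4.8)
The strategy is to derive all of~\ref{A}--\ref{E.stat} from the local Theorem~\ref{main.thm}, together with the non-pinching Proposition~\ref{apriori.prop1} (Theorem~\ref{breakdown.thm.new}), by matching a characteristic rectangle $[u_0,u_F)\times[v_0,+\infty)$---with data on $C_{out}$ and on a suitable ingoing cone $\Cin=[u_0,u_F)\times\{v_0\}\subset\T\cup\A$---to the region $J^+(C_{out})\cap\{v\ge v_0\}$ of $\mathcal{Q}^+$, and reading off the terminal boundary from Theorem~\ref{Kommemi.thm}; uniqueness of the maximal development identifies the characteristic solution of Theorem~\ref{main.thm} with the corresponding piece of $\mathcal{Q}^+$, so its estimates transfer verbatim. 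First note that \eqref{hyp1} already forces exponential mass inflation on $C_{out}$: feeding $\Omega^2(u_0,v)\lesssim e^{2K_-v+Cv^{1-\eta}}$, $-r\rd_v r(u_0,v)\gtrsim v^{-2s}$, $\lim_{v\to+\infty}\rd_u r(u_0,v)<0$ and $\lim_{v\to+\infty}r(u_0,v)=r_{CH}(u_0)>0$ into \eqref{murelation} yields $\varpi(u_0,v)\gtrsim_{u_0}e^{2|K_-|v}v^{-2s}$, which is \eqref{mass.inflation.intro}; hence Proposition~\ref{apriori.prop1} applies and produces an outgoing cone $C_0$ reaching $\CH$ such that the causal rectangle $\mathcal{R}_0$ bounded by $C_0$ and $\CH$ extends up to the future end-sphere of $\CH$, lies entirely in $\T$, and meets neither $\Gamma$ nor $b_\Gamma$---so its $u$-width does not shrink to zero at the endpoint of $\CH$, and $\Gamma$ terminates at finite advanced time. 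Applying Theorem~\ref{main.thm}, Statement~\ref{main.thm.I}, to the rectangle also gives $\uch\in(u_0,u_F]$, $[u_0,\uch]\times[v_0,+\infty)\subset\T$, and \eqref{quant1} (and \eqref{quant2} once \eqref{hyp2} is available).

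\emph{Parts~\ref{A} and~\ref{B}.} If $\mathcal{S}_{i^+}\ne\emptyset$, then $\mathcal{Q}^+$ contains points $(u,v)$ with $u>\uch$ and $v\to+\infty$, so by global hyperbolicity the rectangle $[u_0,u_F)\times[v_0,+\infty)$ lies in $\mathcal{Q}^+$ for some $u_F>\uch$; then $\uch<u_F$, and the first conclusion of Statement~\ref{main.thm.II} (which uses only \eqref{hyp1}) gives, for each $u\in(\uch,u_F]$, a finite $v_{\mathcal{S}}(u)$ with $\lim_{v\to v_{\mathcal{S}}(u)}r(u,v)=0$, so the spacetime at such $u$ terminates at finite advanced time---contradicting $\mathcal{S}_{i^+}\ne\emptyset$. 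This proves~\ref{A}. For~\ref{B}, the non-pinching of $\mathcal{R}_0$ excludes the Penrose diagram of Figure~\ref{fig:disproof} (where $\CH$ closes off the spacetime, which would force $\mathcal{R}_0$ to pinch onto $b_\Gamma$), so by Theorem~\ref{Kommemi.thm} one of $\mathcal{S}^1_\Gamma,\mathcal{CH}_\Gamma,\mathcal{S}^2_\Gamma,\mathcal{S}$ is non-empty; this is the promised shorter, more constructive proof of Theorem~\ref{breakdown.thm}.

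\emph{Parts~\ref{C.stat} and~\ref{D.stat}.} Now add \eqref{hyp2}, \eqref{hyp4}--\eqref{hyp3} on $C_{out}$ and $\mathcal{CH}_\Gamma=\emptyset$. Taking $v_0$ later than the (finite) advanced time of $b_\Gamma$, the ingoing cone $\{v=v_0\}$ issuing from $u_0$ stays in $\T\cup\A$ (by \eqref{quant1} up to $\uch$, then by \eqref{RaychV}--\eqref{Radius3} beyond) and, since $\mathcal{CH}_\Gamma=\emptyset$, terminates at $r=0$ at some finite $u_F$; Statement~\ref{main.thm.III} then gives $\uch<u_F$, so Statement~\ref{main.thm.II} applies in full---$\mathcal{S}\ne\emptyset$, $\mathcal{S}\cap(\uch,\uch+\epsilon]\times[v_0,+\infty)$ is spacelike, and \eqref{quant1}, \eqref{quant2}, \eqref{K1}--\eqref{K5} hold, which is Part~\ref{D.stat}. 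For Part~\ref{C.stat} one must rule out $\mathcal{S}^1_\Gamma$ and $\mathcal{S}^2_\Gamma$, i.e.\ outgoing null boundary segments on which $r\equiv0$ emanating from $b_\Gamma$: the plan is to propagate \eqref{hyp1}, \eqref{hyp2}, \eqref{hyp4}--\eqref{hyp3} from $C_{out}$ to the cones $\{u_0'\}\times[v_0,+\infty)$, $u_0'\in[u_0,\uch)$, via the evolution equations and the bounds \eqref{quant1}--\eqref{quant2}, to re-apply Theorem~\ref{main.thm} from each, and thereby control the solution throughout the trapped rectangle under $\CH$ right up to $\{r=0\}$; together with the non-pinching of this rectangle and its reaching a neighborhood of $b_\Gamma$ (Proposition~\ref{apriori.prop1}), the terminal boundary joining $\CH\cap\mathcal{S}$ to $b_\Gamma$ is forced to be the spacelike curve $\mathcal{S}$, leaving no room for a collapsed cone $\mathcal{S}^1_\Gamma$ or $\mathcal{S}^2_\Gamma$. \textbf{I expect this last step---propagating the local control of Theorem~\ref{main.thm} down to $b_\Gamma$ so as to exclude a null collapsed cone there---to be the main obstacle, since the hypotheses of Theorem~\ref{main.thm} are naturally posed only near $\CH$.}

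\emph{Part~\ref{E.stat}.} By the mass-inflation computation above, the extra hypothesis gives $\varpi(u_0',v)\to+\infty$ as $v\to+\infty$ for every small enough $u_0'$ with $(u_0',+\infty)\in\CH$. To reach an arbitrary $u\le\uch$, propagate rightward in $u$ using \eqref{massUEinstein}: in $\T$ the term $\frac{-2r^2\Omega^2}{\rd_v r}|D_u\phi|^2$ is non-negative, while $|q_0 Q r\,\Im(\phi\overline{D_u\phi})|$ is integrable in $u$ on $[u_0',u]$ by \eqref{quant2} (recall $r\to r_{CH}(\cdot)>0$), so $\varpi(u,v)\ge\varpi(u_0',v)-C(u)\to+\infty$; since $|Q|$ is bounded and $r$ is bounded below on $\{u\}\times[v_0,+\infty)$, also $\rho=\varpi-\frac{Q^2}{2r}\to+\infty$, which is the claimed weak singularity of $\CH$.
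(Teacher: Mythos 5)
Parts~\ref{A}, \ref{B}, \ref{D.stat} and \ref{E.stat} of your proposal are essentially sound and close to the paper's route. Part~\ref{B} is exactly the paper's argument (Proposition~\ref{apriori.prop1} gives $[u_0,\uch]\times[v_0,+\infty)\subset\T$, hence $\uch<u_\Gamma$ and the breakdown). Part~\ref{D.stat} follows the same path as the paper (an ingoing trapped cone reaching $r=0$ plus Statement~\ref{main.thm.III}; both you and the paper are terse about why the cone stays in $\T\cup\A$ past $\uch$, where the monotonicity from \eqref{Radius3} only helps while $r>|Q|$). Part~\ref{E.stat} just spells out the mechanism of the propagation lemma the paper cites, using the sign of the first term of \eqref{massUEinstein} in $\T$ and \eqref{quant2}; the mass inflation on $C_{out}$ itself is, as you say, immediate from \eqref{murelation} and \eqref{hyp1}. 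For Part~\ref{A} you invoke the finite-$v_{\mathcal{S}}(u)$ conclusion of Statement~\ref{main.thm.II}; this is formally admissible if one takes Theorem~\ref{main.thm} at face value, but note that the paper's own Proposition~\ref{apriori.prop2} obtains that conclusion only \emph{after} (and from) $\mathcal{S}_{i^+}=\emptyset$, which it proves directly by integrating \eqref{nu.est.prelim} along $\{v=v_0\}$ in the gauge $r\rd_u r(u,v)=r\rd_u r(\uch,v)$ to get $\frac{u-\uch}{2}\lesssim e^{1.9K_-v}\to 0$; quoting the local theorem here therefore risks circularity, and the direct gauge argument is the safer one.

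The genuine gap is Part~\ref{C.stat}, which you correctly flag but whose proposed remedy would not work. All the cones $\{u_0'\}\times[v_0,+\infty)$ with $u_0'<\uch$ to which you would propagate \eqref{hyp1}, \eqref{hyp2}, \eqref{hyp4}--\eqref{hyp3} lie under $\CH$, where $r\geq r_{CH}(u_0')>0$; the candidate components $\mathcal{S}^1_\Gamma,\mathcal{S}^2_\Gamma$ are outgoing collapsed null cones sitting at $u=u_\Gamma>\uch$ and bounded $v$, so no amount of control under $\CH$ ever reaches them, and controlling the solution ``down to $b_\Gamma$'' is neither feasible nor needed. The paper's argument is soft and is localized at the \emph{future endpoint of $\CH$}, not at $b_\Gamma$: by Theorem~\ref{Kommemi.thm}, Parts~\ref{A} and \ref{B}, and the hypothesis $\mathcal{CH}_\Gamma=\emptyset$, some component of $\mathcal{S}\cup\mathcal{S}^1_\Gamma\cup\mathcal{S}^2_\Gamma$ is non-empty, and the component adjacent to $\CH$ must have its endpoint at the endpoint of $\CH$, i.e., at $(\uch,v=+\infty)$. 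But \eqref{quant1}/\eqref{r.est.prelim} give $r(u,v)\geq C v^{-p}>0$ on $[\uch-\epsilon,\uch)\times[v_T,+\infty)$, so a collapsed outgoing null cone (on which $r\equiv 0$, at fixed $u$ and finite $v$) cannot be that adjacent component; the adjacent component is therefore $\mathcal{S}$, which yields $\mathcal{S}\neq\emptyset$ and excludes the configurations in which $\mathcal{S}^1_\Gamma\cup\mathcal{S}^2_\Gamma$ borders the endpoint of $\CH$. In short, the exclusion of the locally-naked $r=0$ components is a consequence of the a priori boundary classification combined with the lower bound on $r$ near $\uch$ that you already have from Theorem~\ref{main.thm}, Statement~\ref{main.thm.I}; replace your propagation plan by this topological argument and Part~\ref{C.stat} closes with the tools you have already assembled.
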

	
	\begin{rmk}Notice that, comparing to Theorem~\ref{main.thm}, Theorem~\ref{main.thm.global.i} does not require a breakdown condition of the Cauchy horizon. However, Statement~\ref{C.stat} and \ref{D.stat} require the absence of a locally-naked singularity $\mathcal{CH}_{\Gamma}$ emanating from the center $\Gamma$, an assumption which only depends on the local structure of $\Gamma$ and therefore cannot be excluded generically without a resolution of  Weak Cosmic Censorship as we explain in the Section~\ref{qual.section}.
	\end{rmk}
	\begin{rmk}
		It is useful to note that, by	  \eqref{murelation}, \eqref{hyp1} implies that $\underset{v\rightarrow+\infty}{\lim} \mathfrak{m}(u_0,v)=\infty$ (Hawking mass blow-up) so Statement~\ref{B} of Theorem~\ref{main.thm.global.i} is a slightly weaker result than Theorem~\ref{breakdown.thm}. However, the proof of Statement~\ref{B} is much simpler and shorter, as we will see in Section~\ref{breakdown.reproof}.
	\end{rmk}

	Then, we turn to the construction of  one-ended asymptotically examples to which Theorem~\ref{main.thm} applies. Theorem~\ref{main.thm.global.ii} below corresponds to  Theorem~\ref{thm.II} from the introduction, and the main result of the manuscript.
	
	\begin{thm} \label{main.thm.global.ii}
		\blue{Let $k\in \mathbb{N}$.}	There exists a large class of spherically symmetric black hole \blue{$C^k$} solutions of \eqref{1.1}--\eqref{5.1} with $q_0 \neq 0$ with the following properties \begin{itemize}
			\item Their induced initial data on $\Sigma$, a one-ended spacelike hypersurface, are asymptotically flat, free of trapped and anti-trapped spheres. Therefore, the MGHD of $\Sigma$ is free of  anti-trapped spheres.
			
			\item Their event horizon $\mathcal{H}^+$ is future-complete and in the strict causal future of $\Sigma$. Null infinity $\mathcal{I}^+$ is also future-complete.
			\item  	The Penrose diagram  is given by Figure~\ref{fig:spacelikeconj}, i.e.,  $\CH \neq \emptyset$,  $\mathcal{S}_{i^+}= \mathcal{S}_{\Gamma}^{1}=\mathcal{S}_{\Gamma}^{2}=\emptyset$,  $\mathcal{S} \neq \emptyset$.
			\item $\CH$ is a weakly singular Cauchy horizon, in the sense of mass inflation, i.e., for all $u\leq \uch$: \begin{equation}
				\lim_{v \rightarrow +\infty} \varpi(u,v) =  	\lim_{v \rightarrow +\infty} \mathfrak{m}(u,v) =+\infty.
			\end{equation}
			\item For $\epsilon>0$ small enough, $\mathcal{S} \cap  (\uch,\uch+\epsilon] \times [v_0^R,+\infty)$ is a spacelike boundary in $\overline{\mathcal{Q}^+}$. Moreover,  the quantitative estimates \eqref{quant1}, \eqref{quant2} and the Kasner asymptotics \eqref{K1}--\eqref{K5} of Theorem~\ref{main.thm} hold.
			\item For $\epsilon>0$ small enough, $\mathcal{S} \cap   (v_{\Gamma},v_{\Gamma}+\ep)$ is a spacelike boundary in $\overline{\mathcal{Q}^+}$. Moreover, $\mathcal{S} \cap   (v_{\Gamma},v_{\Gamma}+\ep)$ is spatially-homogeneous and described by an asymptotically Kasner metric of exponents $(\frac{1}{3},\frac{1}{3},\frac{1}{3})$.
		\end{itemize} 
		
	\end{thm}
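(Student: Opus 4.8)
The plan is to assemble Theorem~\ref{main.thm.global.ii} from four ingredients that are all available above: the FLRW model near the center, the gluing construction of Theorem~\ref{construction.thm}, the Cauchy-horizon existence result Theorem~\ref{CH.thm.SS} together with the nonlinear scattering result Theorem~\ref{scat.thm.intro}, and finally the conditional global statement Theorem~\ref{main.thm.global.i}. First, fix a large $k\in\mathbb{N}$, take $\mathcal{M}_L$ to be a FLRW spacetime \eqref{FLRW.intro} — which by Proposition~\ref{FRLW.prop} has a regular center $\Gamma$ and a spacelike, spatially-homogeneous singularity $\mathcal{S}=\{r=0\}$ asymptotic to a Kasner metric of exponents $(\tfrac13,\tfrac13,\tfrac13)$, so that near $b_{\Gamma}$ the terminal boundary is already the spacelike singularity $\{r=0\}$ and in particular $\mathcal{CH}_{\Gamma}=\emptyset$ — and pick the event-horizon profile $\phi_{H}(v)=\Phi_H(v)\,e^{-iq_0\omer v}$, where $\Phi_H$ is a \emph{non-oscillatory} profile with $v^{-s}\lesssim|\Phi_H|(v)$ and $|\rd_v^2\Phi_H|\ll|\rd_v\Phi_H|\ll|\Phi_H|\lesssim v^{-s}$ for some $s>\tfrac32$ (e.g.\ $\Phi_H(v)=c\,v^{-s}$); since $A_v\equiv0$, one checks $|\phi_{H}|,|D_v\phi_{H}|\lesssim v^{-s}$. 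Applying Theorem~\ref{construction.thm} with this $\mathcal{M}_L$ and $\phi_{H}$ yields a one-ended asymptotically flat $C^k$ solution $\mathcal{M}$ of \eqref{1.1}--\eqref{5.1} with $q_0\neq0$, $F\neq0$, whose induced data on a one-ended spacelike $\Sigma$ is asymptotically flat and free of (anti-)trapped spheres, which coincides with $\mathcal{M}_L$ near $\Gamma$, relaxes to a sub-extremal Reissner--Nordström black hole with $0<|e|<M$, and has $\phi_{|\mathcal{H}^+}(v)=\phi_{H}(v)$; taking $|e|/M$ small (cf.\ the remark after Theorem~\ref{OS.thm.intro}) additionally places $\mathcal{H}^+$ in the strict causal future of $\Sigma$, with $\mathcal{I}^+$ future-complete.

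Next — the crux — I would verify the hypotheses of Theorem~\ref{main.thm} on an outgoing cone $C_{out}=\{u_0\}\times[v_0,+\infty)$ strictly inside the trapped region with $(u_0,v=+\infty)\in\CH$. Since $\mathcal{H}^+$ converges to sub-extremal Reissner--Nordström and $|\phi_{|\mathcal{H}^+}|,|D_v\phi_{|\mathcal{H}^+}|\lesssim v^{-s}$ with $s>1$, Theorem~\ref{CH.thm.SS} gives $\CH\neq\emptyset$, foliated by spheres of positive radius. The blueshift and mass-inflation estimates underlying Theorem~\ref{CH.thm.SS}, in the slightly refined form of \cite{Moi,MoiChristoph}, together with the interior equations \eqref{Omega}--\eqref{RaychV}, supply the purely geometric content of \eqref{hyp1} on $C_{out}$ (positivity of $r_{CH}(u_0)$, sign of $\rd_u r$, the one- and two-sided lapse bounds, boundedness of $|\phi|$ and $|Q|$), with the two-sided decay $-r\rd_v r\approx v^{-2s}$ tied to the two-sided bound on $|D_v\phi|$ below. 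The scalar-field assumptions \eqref{hyp2}, \eqref{hyp3}, \eqref{hyp4} come from Theorem~\ref{scat.thm.intro} applied to $\phi_{|\mathcal{H}^+}=\Phi_H\,e^{-iq_0\omer v}$: for each fixed $u_0$, $|D_v\phi|(u_0,v)\approx_{u_0}|\Phi_H|(v)\approx v^{-s}$ (two-sided, hence both \eqref{hyp2} and \eqref{hyp3}) and $|D^2_{vv}\phi|(u_0,v)\approx_{u_0}|\rd_v\Phi_H|(v)\lesssim v^{-s-1}$ (\eqref{hyp4}). Crucially, because $\phi_{|\mathcal{H}^+}$ is arranged to oscillate precisely at the resonant frequency $-\omer$, the reconstructed $D_v\phi$ near $\CH$ is \emph{non-oscillatory} (asymptotically phase-aligned with $\phi$), which is exactly the smallness condition \eqref{hyp5} (equivalently \eqref{hyp5'}), $|\Im(\bar\phi D_v\phi)|(u_0,v)\ll v^{-s}$. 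Propagating all of these bounds along \eqref{Omega}--\eqref{RaychV} extends them to $\{u_0\}\times[v_0(u_0),+\infty)$ for all $u_0\le u_s$ with $u_s$ close enough to $\uch$.

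Finally, I would invoke Theorem~\ref{main.thm.global.i}: with $\CH\neq\emptyset$, the decay \eqref{hyp1}, \eqref{hyp2}, \eqref{hyp3}, \eqref{hyp4}, \eqref{hyp5} holding on the family of cones $C_{out}$, and $\mathcal{CH}_{\Gamma}=\emptyset$ (from the FLRW region near $b_{\Gamma}$, whose terminal boundary is the spacelike $\{r=0\}$), that theorem yields at once $\mathcal{S}_{i^+}=\mathcal{S}^1_{\Gamma}=\mathcal{S}^2_{\Gamma}=\emptyset$, the breakdown and weak singularity (mass inflation) of $\CH$, and $\mathcal{S}\neq\emptyset$ with $\mathcal{S}\cap(\uch,\uch+\epsilon]\times[v_0,+\infty)$ a spacelike boundary obeying the quantitative estimates \eqref{quant1}, \eqref{quant2} and the degenerating Kasner asymptotics \eqref{K1}--\eqref{K5}. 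Since $\mathcal{M}$ agrees with the FLRW metric $\mathcal{M}_L$ near $b_{\Gamma}$, the portion of $\mathcal{S}$ there is spacelike, spatially-homogeneous, and asymptotic to a Kasner metric of exponents $(\tfrac13,\tfrac13,\tfrac13)$; together with the completeness of $\mathcal{H}^+$ and $\mathcal{I}^+$ and the initial-data properties from the first step, this is precisely the list of conclusions of Theorem~\ref{main.thm.global.ii}.

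The main obstacle is the second step: the gluing produces a global spacetime with the correct large-scale structure but no direct control of the interior dynamics, so one must run the full (refined) nonlinear event-horizon-to-Cauchy-horizon scattering theory of \cite{MoiChristoph} to pin down \eqref{hyp2}--\eqref{hyp5} deep inside the trapped region. In particular, securing the smallness \eqref{hyp5} forces the ad-hoc (non-generic) choice of an event-horizon scalar field oscillating at $-\omer$, and one must moreover check that the scattering estimates are uniform enough in $u$ to hold on an entire strip of outgoing cones accumulating to $\CH$, as required to feed Theorem~\ref{main.thm.global.i}.
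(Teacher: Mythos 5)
Your proposal is correct and follows essentially the same route as the paper: take $\mathcal{M}_L$ to be FLRW (Proposition~\ref{FRLW.prop}), glue it to a dynamical event horizon with a prescribed resonant-frequency tail via the construction of Section~\ref{right.section} (the paper invokes Corollary~\ref{EH.AF.cor}, the precise form of your Theorem~\ref{construction.thm}), verify \eqref{hyp1}, \eqref{hyp2}, \eqref{hyp4}--\eqref{hyp3} under $\CH$ by the refined nonlinear scattering result (Corollary~\ref{nonlinearscat.cor}, the precise form of your Theorem~\ref{scat.thm.intro}), and conclude with Theorem~\ref{main.thm.global.i} using $\mathcal{CH}_{\Gamma}=\emptyset$ from the FLRW region. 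You also correctly isolate the two genuine crux points — the ad-hoc choice of the phase $e^{-iq_0\omer v}$ to secure \eqref{hyp5}, and the uniformity in $u_0$ of the cone estimates needed for mass inflation on all of $\CH$ — which is exactly where the paper's effort is concentrated.
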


	\begin{rmk}
		The construction obtained in  Theorem~\ref{main.thm.global.ii} relies on three ingredients. The first is the novel spacelike-characteristic gluing strategy (see Theorem~\ref{main.gluing.thm}) in order to glue any uncharged regular sphere to the Reissner--Nordstr\"{o}m trapped region. The second are propagation estimates in the black hole exterior for small charge, using spherical symmetry to propagate from left to right (i.e., from the event horizon towards the asymptotically flat end) leading to Theorem~\ref{EH.AF.thm} and Corollary~\ref{EH.AF.cor}. The last ingredient is a refinement of the scattering estimates in the black hole interior first obtained by the author and Kehle in \cite{MoiChristoph} (Theorem~\ref{nonlinearscat.thm}), which allows to verify the assumptions of Theorem~\ref{main.thm} under the Cauchy horizon from infinity $\CH$.
	\end{rmk}

	\subsection{Applications of the local result to  two-ended black hole spacetimes}
	While the main motivation to  Theorem~\ref{main.thm} is on its applications to the one-ended case modeling the physical process gravitational collapse, it turns out that Theorem~\ref{main.thm} also applies  to the two-ended case, which has been extensively studied as discussed in Section~\ref{section.global.uncond2}.
	
	A simplifying feature of the two-ended case is the absence of a center, i.e.\ $\Gamma=\emptyset$, so in particular locally-naked singularities $\mathcal{CH}_{\Gamma}$ cannot arise. On the other hand, for an open set of initial data, no breakdown of the Cauchy horizon  occurs \cite{nospacelike} and in this case the spacelike component of the boundary is empty, namely $\mathcal{S}=\emptyset$, as in the rightmost diagram of Figure~\ref{fig:nospacelike}. We summarize these results, together with the analogue of Theorem~\ref{Kommemi.thm} below. It applies to the maximal development of  future-admissible initial data $\Sigma$, namely under the assumption that $\Sigma = \Sigma_A \cup \Sigma_B$,  where  $\Sigma_A $ satisfies $\rd_u r<0$, and $\Sigma_B $ satisfies $\rd_v r<0$ and $\Sigma_A $, $\Sigma_B $  are connected (see \cite{nospacelike}).

	\begin{thm}[\cite{Mihalis1,nospacelike,Kommemi}] \label{Dafermos.thm}

		We consider the maximal development $(M=\mathcal{Q}^+ \times_r \mathcal{S}^2,g_{\mu \nu}, \phi,F_{\mu \nu})$ of smooth, spherically symmetric, future-admissible  two-ended asymptotically flat \color{black} initial data satisfying the Einstein--Maxwell--Klein--Gordon system, where $r: \mathcal{Q}^+ \rightarrow [0,+\infty)$ is the area-radius function. 
		\begin{itemize}
			\item (\cite{nospacelike}, Proposition 4.1, after \cite{Mihalis1} and \cite{Kommemi}). Then, the Penrose diagram of $\mathcal{Q}^+$ is given by Figure \ref{fig:twoended}, with  boundary $\mathcal{B}^+$ induced by the manifold ambient $\RR^{1+1}$: $ \mathcal{B}^+ =  \mathcal{S}_{i^+}  \cup \CH \cup i^{+} \cup \mathcal{I}^+ \cup i^0$.
			
			\item (\cite{nospacelike}). Assume  that the event horizon $\mathcal{H}^+$ settles quantitatively towards a sub-extremal Reissner--Nordstr\"{o}m event
			horizon and that the scalar field is \blue{initially} small, then $\mathcal{S}= \mathcal{S}_{i^+}=\emptyset$; in other words,  $ \mathcal{B}^+ =   \CH \cup i^{+} \cup \mathcal{I}^+ \cup i^0$ and the Penrose diagram is given by the rightmost diagram in Figure~\ref{fig:nospacelike}.
		\end{itemize}

	\end{thm}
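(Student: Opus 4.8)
The plan is to prove the two bullet points separately: the first is a ``soft'' classification of the terminal boundary resting on geometric extension principles, and the second is a quantitative stability statement in the spirit of the interior analysis of sub-extremal Reissner--Nordstr\"{o}m. For the classification, I would work entirely in the double null quotient $\mathcal{Q}^+\subset\RR^{1+1}$ and use that $\mathcal{Q}^+$ is a past set whose future boundary $\mathcal{B}^+$ is a connected achronal curve. First, future-admissibility and \eqref{RaychU}--\eqref{RaychV} propagate $\rd_u r<0$ and $\rd_v r<0$ on the respective halves of $\Sigma$, which fixes the global causal orientation and, via the standard asymptotic analysis of \eqref{1.1}--\eqref{5.1} near each asymptotically flat end, identifies the two copies of null infinity $\mathcal{I}^+$, spacelike infinity $i^0$ and timelike infinity $i^+$. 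Writing $\mathcal{BH}=\mathcal{Q}^+\setminus J^-(\mathcal{I}^+)$ (with $\mathcal{I}^+$ the union over both ends), the past boundary of $\mathcal{BH}$ is the event horizon and its future boundary is what survives of $\mathcal{B}^+$. The key input is the extension principle (Dafermos, Kommemi): at any limit point of $\mathcal{B}^+$ where $r$ has a strictly positive limit and the matter fields stay bounded, the solution extends, a contradiction; hence along $\mathcal{B}^+$ either $r$ tends to a strictly positive value --- giving, by the local theory, a null ingoing segment $\CH$ emanating from $i^+$ --- or $r\to 0$, giving the collapsed ingoing segment $\mathcal{S}_{i^+}$ attached to the future endpoint of $\CH\cup i^+$. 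Monotonicity of $r$ in the trapped region, which is generated to the future of $\mathcal{H}^+$, together with \eqref{RaychV}, excludes any other configuration; this yields the first bullet.

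For the second bullet, I would pose the characteristic initial value problem with data on the two event horizons converging to a fixed sub-extremal Reissner--Nordstr\"{o}m horizon at a quantitative (say inverse-polynomial Price-law) rate, with a globally small scalar field, and run a continuity/bootstrap argument on the ``good'' set of outgoing cones. Exact sub-extremal Reissner--Nordstr\"{o}m has an interior in which $r$ takes values in the compact interval $[r_-,r_+]$, so in particular $r\ge r_->0$ there; the goal is to propagate this lower bound, with a slightly worse constant, all the way up to $\CH$. The bootstrap has the usual three regions: a red-shift estimate near each event horizon controlling $\rd_v r$ and the transversal scalar-field derivative and keeping $r$ close to $r_+$; propagation through the ``no-shift'' region; and, in the blue-shift region near $\CH$, energy estimates with exponential weights (of the type $e^{2|K_-|v}$) bounding $\int r^2|D_v\phi|^2$ and its $u$-analogue, which together with \eqref{RaychU}--\eqref{RaychV} and the mass identities \eqref{massUEinstein}--\eqref{massVEinstein} force $r$ to converge along every ingoing and outgoing cone to a limit bounded uniformly below by a constant near $r_-$. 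Since $r$ never degenerates to $0$, the future endpoint of $\CH$ is not a collapse point, so $\mathcal{S}_{i^+}=\emptyset$, and there is no spacelike singularity, $\mathcal{S}=\{r=0\}=\emptyset$; combined with the first bullet, $\mathcal{B}^+=\CH\cup i^+\cup\mathcal{I}^+\cup i^0$, the rightmost diagram of Figure~\ref{fig:nospacelike}.

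The main obstacle is the blue-shift region: one must control the solution up to $\CH$ while keeping $r$ bounded below, even though the Hawking mass $\varpi$ genuinely blows up there (mass inflation), so the geometry cannot be controlled by bounding $\varpi$. The resolution is to track the precise cancellation in $1-\mu=-4\,\rd_u r\,\rd_v r/\Omega^2$ and exploit the smallness of the field to show that the product $\rd_u r\,\rd_v r$, rather than $\varpi$, stays controlled, so that the total drop of $r$ from $r_+$ to its limit along each cone is small; propagating this through the nonlinear coupling with the charge $Q$ and the $q_0 Q$ terms in \eqref{Field}, \eqref{massUEinstein}--\eqref{massVEinstein} inside the bootstrap is the delicate point, and is precisely where the quantitative convergence of the event-horizon data and the global smallness of the scalar field enter.
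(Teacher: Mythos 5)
This theorem is not proved in the paper at all: it is recalled verbatim from the literature (the first bullet from Dafermos and Kommemi, the second from Dafermos's stability theorem for the Reissner--Nordstr\"{o}m Cauchy horizon, cited as \cite{Mihalis1,Kommemi,nospacelike}), so there is no in-paper argument to compare against. Judged against those references, your sketch reproduces the correct architecture. For the first bullet, the classification does rest on the extension principle plus the Raychaudhuri monotonicities \eqref{RaychU}--\eqref{RaychV} and the absence of a center ($\Gamma=\emptyset$, which kills the locally-naked components $\mathcal{CH}_\Gamma,\mathcal{S}^{1,2}_\Gamma$ of the one-ended classification); that is exactly how Theorem~\ref{Dafermos.thm} is obtained from Theorem~\ref{Kommemi.thm}-type arguments. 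For the second bullet, the red-shift/no-shift/blue-shift decomposition and the goal of propagating a lower bound $r\geq r_--\epsilon$ up to a Cauchy horizon that closes off the spacetime is indeed Dafermos's strategy, and your identification of the main obstacle (controlling $r$ while $\varpi$ blows up) is the right one.

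Two caveats. First, your proposed mechanism in the blue-shift region is slightly miscalibrated: exponentially weighted fluxes of the type $e^{2|K_-|v}\int r^2|D_v\phi|^2$ are precisely the quantities that \emph{grow} there (they are the engine of mass inflation, cf.\ \eqref{mass.inflation.intro}); the stability argument instead controls \emph{unweighted} (or polynomially weighted) pointwise quantities such as $\theta=r\rd_v\phi$, $\xi=r\rd_u\phi$ and $\kappa=-\Omega^2/(4\rd_u r)$, using $0<\kappa\leq 1$ and its monotonicity from \eqref{RaychU}, and deduces that $\int|\rd_v r|\,dv$ along each outgoing cone is small because $-r\rd_v r$ decays polynomially even as $1-\mu\to-\infty$. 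Second, to conclude $\mathcal{S}=\emptyset$ (and not merely that $\mathcal{S}$ avoids a neighborhood of each $i^+$) you need the lower bound on $r$ to be \emph{global} over the whole interior, including the region far from both copies of $i^+$ where the two Cauchy horizons must meet; in Dafermos's proof this is where the global smallness of the data on the full two-ended Cauchy surface, not just the quantitative decay on the event horizons, is used. Your sketch gestures at this but does not isolate it as a separate step. With those repairs the plan is the standard one; as written it is a correct outline rather than a proof.
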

	\begin{figure}[H]	\includegraphics[width=0.8\linewidth]{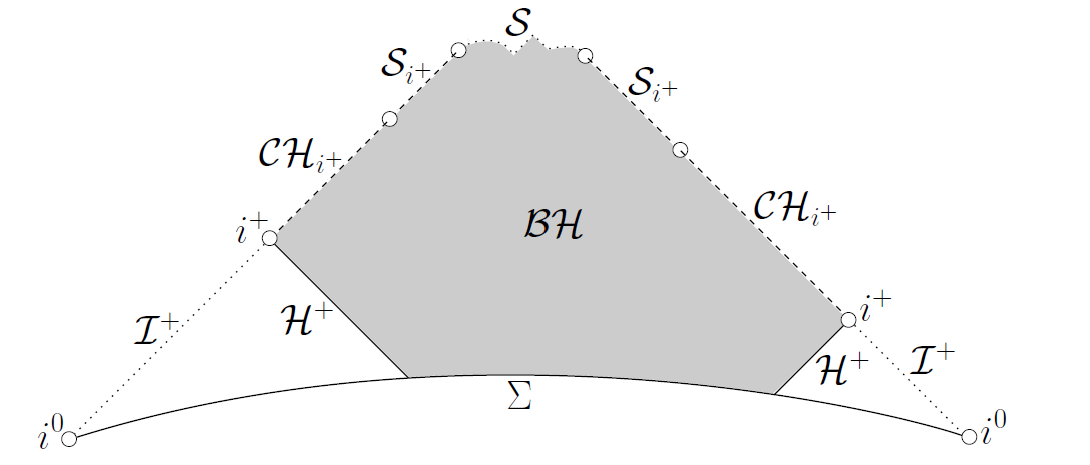}			
		\caption{General Penrose diagram of a two-ended future-admissible black hole, figure from \cite{Kommemi}.}	\label{fig:twoended}	\end{figure}
	In particular, the second statement of Theorem~\ref{Dafermos.thm} shows that there is no analogue of Theorem~\ref{breakdown.thm} in the two-ended case, since the Cauchy horizon $\CH$ can close-off a two-ended spacetime without breaking down. 
	We will prove,  however,  that there exist two-ended black holes such that $\CH \neq \emptyset$ and $\mathcal{S}\neq \emptyset$ (recall Theorem~\ref{thm.III} and see Theorem~\ref{main.thm.2end.ii} below). The following result is an application of Theorem~\ref{main.thm} to two-ended spacetimes such that  $\CH \neq \emptyset$  and $\mathcal{S}\neq \emptyset$ (in particular,  by Theorem~\ref{Dafermos.thm}, they  must be large perturbations of the Reissner--Nordstr\"{o}m black hole). Theorem~\ref{main.thm.2end.i} provides  a rather general  result, conditional on certain estimates being satisfied, \blue{as} the two-ended analogue of \blue{Theorem~\ref{main.thm.global.i}}. Theorem~\ref{main.thm.2end.i} below corresponds to the second statement of Theorem~\ref{thm.cond.intro} from the introduction; note that any value $q_0 \in \RR$ is allowed in  Theorem~\ref{main.thm.2end.i}. \phantom{ \eqref{K2},\eqref{K3},\eqref{K4}}

	\begin{thm}\label{main.thm.2end.i}
		
		\item  	\label{Itwo} We consider a two-ended black hole interior as in Theorem~\ref{Dafermos.thm} and we assume that there exists $C_{out}^{R}=\{u_0^{R}\}\times [v_0^{R},+\infty)$ with $(u_0^{R},v=+\infty)\in \CH$ such that \eqref{hyp1} is satisfied on  $C_{out}^{R}$ and  $C_{out}^{L}= [u_0^{L},+\infty)\times \{v_0^{L}\}$ with $(u=+\infty, v_0^{L})\in \CH$ such that the analogue of \eqref{hyp1} is satisfied on  $C_{out}^{L}$. Then
		\begin{enumerate}[A.]
			\item  \label{Atwo}There are no collapsed light cones emanating from $i^{+}$, namely: $\mathcal{S}_{i^+}=\emptyset$.

			Now we make the initial data assumptions  \eqref{hyp1}, \eqref{hyp2}, \eqref{hyp4}-\eqref{hyp3} on  $C_{out}^{R}$ and their analogue of $C_{out}^{L}$. 	 Moreover, we assume that $\CH$ does not close-off the spacetime, namely: $ \mathcal{S}\neq \emptyset$. Then,

			\item \label{Btwo} For $\epsilon>0$ small enough, $\mathcal{S} \cap  (\uch^{R},\uch^{R}+\epsilon] \times [v_0^R,+\infty)$ (and analogously  $\mathcal{S} \cap   [u_0^L,+\infty) \times (\uch^{L},\uch^{L}+\epsilon]  $) is a spacelike boundary in $\overline{\mathcal{Q}^+}$. Moreover, the quantitative estimates \eqref{quant1}, \eqref{quant2} and the Kasner asymptotics \eqref{K1}--\eqref{K5} of Theorem~\ref{main.thm} hold.
			
			\noindent Finally, assume that there exists $u_s^{R} \in \RR$ such that  for all $u_0^{R} \leq u_s^{R}$ with $(u_0^{R},v=+\infty) \in \CH$, there exists $v_0^{R}(u_0^{R})$ such that	  \eqref{hyp1}, \eqref{hyp2}, \eqref{hyp4}-\eqref{hyp3} hold on $C_{out}=\{u_0^{R}\} \times [v_0^{R},+\infty)$. Also assume that the analogous condition is satisfied on the left.

			\item \label{Ctwo}  Then, $\CH$ is a weakly singular Cauchy horizon, in the sense of mass inflation.

		\end{enumerate}

	\end{thm}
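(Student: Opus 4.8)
The plan is to follow the blueprint of the one-ended result Theorem~\ref{main.thm.global.i}, applying the quantitative machinery of Theorem~\ref{main.thm} and the trapped-region statement Theorem~\ref{breakdown.thm.new} \emph{symmetrically} from the two asymptotically flat ends, once with the outgoing cone $C_{out}^{R}$ and once with $C_{out}^{L}$. The main simplification relative to the one-ended case is the absence of a center: there is no $b_{\Gamma}$ and no locally-naked components $\mathcal{S}^1_{\Gamma}$, $\mathcal{CH}_{\Gamma}$, $\mathcal{S}^2_{\Gamma}$, so by Theorem~\ref{Dafermos.thm} the terminal boundary consists only of the (at most two) null pieces of $\CH$ emanating from the two copies of $i^+$, together with a single achronal piece which is either $\mathcal{S}_{i^+}$ (null, at ``$v=+\infty$'') or $\mathcal{S}$ (spacelike, reached at finite retarded/advanced time). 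Throughout, the $v$-gauge on $C_{out}^{R}$ is the Eddington--Finkelstein gauge~\ref{gauge.EH.v} in which \eqref{hyp1} is formulated and the $u$-gauge is~\ref{gauge.U.future}, symmetrically on the left.

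\emph{Statement~\ref{Atwo}.} From \eqref{hyp1} on $C_{out}^{R}$ one first extracts the mass-inflation lower bound~\eqref{mass.inflation.intro}, exactly as in the one-ended proof: in the trapped region $\partial_u r<0$ and $\partial_v r<0$, so \eqref{murelation} gives $\tfrac{2\varpi}{r}\geq\tfrac{4|\partial_u r|\,|\partial_v r|}{\Omega^2}$, and combining the lower bound $-r\partial_v r\gtrsim v^{-2s}$ with $\Omega^2\lesssim e^{2K_-v+Cv^{1-\eta}}$ (recall $K_-<0$) and the remaining bounds in \eqref{hyp1} yields $\varpi(u_0^{R},v)\gtrsim e^{2|K_-|v}v^{-2s}$. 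Theorem~\ref{breakdown.thm.new} then produces an outgoing cone $C_0^{R}$ meeting $\CH$ whose causal rectangle with $\CH$ is trapped and reaches the future endpoint $b_R$ of the right component of $\CH$, while Statement~\ref{main.thm.I} of Theorem~\ref{main.thm} applied on $C_{out}^{R}$ gives \eqref{quant1} on all of $[u_0^{R},\uch^{R}]\times[v_0^{R},+\infty)$, in particular $r_{CH}(u)>0$ for $u<\uch^{R}$. Since $\mathcal{S}_{i^+}$ can only emanate from $b_R$ or from the symmetric left endpoint $b_L$, it suffices to rule out each: near $b_R$, either the ingoing cone $\{v=v_0^{R}\}$ cannot be continued past $u=\uch^{R}$ in $\mathcal{Q}^+$ (so $\CH$ closes off the spacetime there and nothing lies past $b_R$), or it extends to some $u_F>\uch^{R}$, in which case Statement~\ref{main.thm.II} of Theorem~\ref{main.thm} forces $r(u,\cdot)$ to vanish at \emph{finite} $v$ for all $u\in(\uch^{R},u_F]$, leaving no boundary segment at $v=+\infty$ past $b_R$. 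Either way $\mathcal{S}_{i^+}$ does not emanate from $b_R$; the symmetric argument from $C_{out}^{L}$ disposes of $b_L$, so $\mathcal{S}_{i^+}=\emptyset$.

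\emph{Statement~\ref{Btwo}.} By Part~\ref{Atwo} and Theorem~\ref{Dafermos.thm}, the hypothesis $\mathcal{S}\neq\emptyset$ (that $\CH$ does not close off) is equivalent to the statement that the right and left components of $\CH$ do not meet and that the achronal boundary piece joining $b_R$ to $b_L$ is exactly the spacelike singularity $\mathcal{S}$. The task is then to upgrade this to the \emph{local} breakdown condition $\uch^{R}<u_F$ needed to apply Statement~\ref{main.thm.II} of Theorem~\ref{main.thm} with the full hypotheses \eqref{hyp1}, \eqref{hyp2}, \eqref{hyp4}--\eqref{hyp3} on $C_{out}^{R}$: one chooses $v_0^{R}$ large enough (compatibly with the smallness requirements of Theorem~\ref{main.thm}) that the ingoing cone $\{v=v_0^{R}\}$, which starts at $(u_0^{R},v_0^{R})$ with $r>0$ and $u_0^{R}<\uch^{R}$, continues past $u=\uch^{R}$ and terminates on $\mathcal{S}$; setting $u_F:=u_{\mathcal{S}}(v_0^{R})>\uch^{R}$ and matching the gauges as above, Statement~\ref{main.thm.II} of Theorem~\ref{main.thm} applies and yields that $\mathcal{S}\cap(\uch^{R},\uch^{R}+\epsilon]\times[v_0^{R},+\infty)$ is spacelike (from the sign of $v'_{\mathcal{S}}$ in~\eqref{K1}), together with the quantitative estimates \eqref{quant1}, \eqref{quant2} and the Kasner asymptotics \eqref{K1}--\eqref{K5}; running the same argument with $C_{out}^{L}$ and the ingoing cone $\{u=u_0^{L}\}$ gives the statement near $b_L$. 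I expect \textbf{this localization step --- converting the qualitative global hypothesis $\mathcal{S}\neq\emptyset$ into the quantitative local condition $\uch^{R}<u_F$, while preserving \eqref{hyp1}--\eqref{hyp3} under the change of gauge --- to be the main obstacle}, since it requires combining the a priori classification of Theorem~\ref{Dafermos.thm} with Theorem~\ref{breakdown.thm.new} to pin down the position of $\mathcal{S}$ relative to $C_{out}^{R}$ and $C_{out}^{L}$, in particular that $\mathcal{S}$ is met at finite advanced time by ingoing cones with $v_0^{R}$ arbitrarily large.

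\emph{Statement~\ref{Ctwo}.} With \eqref{hyp1}, \eqref{hyp2}, \eqref{hyp4}--\eqref{hyp3} holding on every cone $\{u_0^{R}\}\times[v_0^{R}(u_0^{R}),+\infty)$ with $u_0^{R}\leq u_s^{R}$ and $(u_0^{R},v=+\infty)\in\CH$ (and analogously on the left), the mass-inflation computation of Part~\ref{Atwo}, together with \eqref{murelation} and the propagation of the $\Omega^2$ and $-r\partial_v r$ bounds into the rectangle provided by Statement~\ref{main.thm.I} of Theorem~\ref{main.thm}, gives $\lim_{v\to+\infty}\varpi(u_0^{R},v)=\lim_{v\to+\infty}\rho(u_0^{R},v)=+\infty$ for each such $u_0^{R}$. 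To reach all $u\leq\uch^{R}$ one uses \eqref{massUEinstein}: in the trapped region $\tfrac{-2r^2\Omega^2}{\partial_v r}|D_u\phi|^2\geq0$, and the charge term $-q_0Qr\,\Im(\phi\overline{D_u\phi})$ is, by \eqref{quant2} and $r_{CH}(u)>0$, uniformly bounded and integrable in $u$ on the bounded interval $[u_s^{R},\uch^{R}]$, so $\varpi(u,v)\geq\varpi(u_s^{R},v)-O(1)\to+\infty$; the symmetric argument on the left then shows $\CH$ is weakly singular in the sense of mass inflation.
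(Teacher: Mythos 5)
Your proposal is correct and follows essentially the same route as the paper: Statement A rests on the a priori trapped rectangle and the collapse of ingoing cones past $\uch$ (the paper packages this as Propositions~\ref{apriori.prop1}--\ref{apriori.prop2}, whose content is exactly the part of Theorem~\ref{main.thm}, Statements~\ref{main.thm.I}--\ref{main.thm.II} you invoke); Statement B combines $\mathcal{S}_{i^+}=\emptyset$, the classification of Theorem~\ref{Dafermos.thm} and the hypothesis $\mathcal{S}\neq\emptyset$ to produce a trapped ingoing cone terminating at $r=0$, to which Theorem~\ref{main.thm} (the paper uses Statement~\ref{main.thm.III} rather than verifying the breakdown directly) applies; and Statement C is mass inflation on each cone $u_0^{R}\leq u_s^{R}$ propagated in $u$, where your explicit $\partial_u\varpi$ monotonicity argument replaces the paper's citation of the propagation lemma from \cite{breakdown} and is valid for each fixed $u<\uch^{R}$ since $r_{CH}(u)>0$ there.
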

	Finally,  we also provide  an unconditional construction that, however, only applies to the case $q_0=0$ \blue{(uncharged scalar field)} previously studied by Luk--Oh \cite{JonathanStab,JonathanStabExt}. We note that this construction is very different from that of Theorem~\ref{main.thm.global.ii} and allows to construct a much larger class\footnote{The constructions in Theorem~\ref{main.thm.2end.ii} start from Cauchy data and thus the late-time tail behavior of the scalar field is the generic one, using results of \cite{Gautam,JonathanStab,twotails}. This is to be contrasted with  the examples of Theorem~\ref{main.thm.global.ii} which require solving the equations \eqref{1.1}--\eqref{5.1}  backwards in time and a scattering argument resulting in non-generic late-time tails (as first discussed in Section~\ref{CH.section}).} of two-ended black hole spacetimes than in the one-ended case. Theorem~\ref{main.thm.2end.ii} below corresponds to Theorem~\ref{thm.III} in the introduction.
	
	\begin{thm}\label{main.thm.2end.ii}
		
		In the case $q_0=0$, there exists a large class $\mathcal{G}'$ of two-ended asymptotically flat spherically symmetric black hole solutions of \eqref{1.1}--\eqref{5.1} with $\CH \neq \emptyset$, a weakly singular Cauchy horizon in the sense of mass inflation, $\mathcal{S}_{i^+}=\emptyset$ and  $\mathcal{S} \neq \emptyset$ and for $\epsilon>0$ small enough, $\mathcal{S} \cap  (\uch^{R},\uch^{R}+\epsilon] \times [v_0^R,+\infty)$ (and analogously  $\mathcal{S} \cap   [u_0^L,+\infty) \times (\uch^{L},\uch^{L}+\epsilon]  $) is a spacelike boundary in $\overline{\mathcal{Q}^+}$. Moreover, the quantitative estimates \eqref{quant1}, \eqref{quant2} and the Kasner asymptotics \eqref{K1}--\eqref{K5} of Theorem~\ref{main.thm} hold.
		
	\end{thm}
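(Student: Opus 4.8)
The plan is to reduce Theorem~\ref{main.thm.2end.ii} to the conditional two-ended result Theorem~\ref{main.thm.2end.i}: it suffices to exhibit a large class $\mathcal{G}'$ of two-ended solutions \emph{arising from Cauchy data} for which (i) the hypotheses \eqref{hyp1}, \eqref{hyp2}, \eqref{hyp4}--\eqref{hyp3} of Theorem~\ref{main.thm} hold on a family of outgoing cones $C_{out}^{R}=\{u_0^{R}\}\times[v_0^{R},+\infty)$ reaching the right component of $\CH$ (for all $u_0^R$ small enough), and their analogues on the left, and (ii) $\mathcal{S}\neq\emptyset$. First I would fix a sub-extremal Reissner--Nordstr\"om pair $(M,e)$ and take $\mathcal{G}'$ to consist of generic, future-admissible, asymptotically flat two-ended Cauchy data for \eqref{1.1}--\eqref{5.1} with $q_0=0$ that are \emph{large} perturbations of it --- a large infalling scalar-field pulse, with data near the two asymptotic ends small enough that each exterior region settles to a sub-extremal Reissner--Nordstr\"om black hole of possibly larger mass $M_f$ --- subject to an additional explicit largeness condition introduced in the third step. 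Since $q_0=0$, the equations \eqref{chargeUEinstein}--\eqref{ChargeVEinstein} give $Q\equiv e$ throughout and one may take $\phi$ real-valued, so $D_v\phi=\partial_v\phi$ is real and $\Im(\bar\phi D_v\phi)\equiv 0$; thus \eqref{hyp5'} (hence \eqref{hyp5}) holds trivially and, crucially, \emph{no nonlinear scattering input \`{a} la \cite{MoiChristoph} is needed} --- this is precisely the extra flexibility allowing a much larger class than in Theorem~\ref{main.thm.global.ii}. By the sharp late-time tail theory of Dafermos--Rodnianski \cite{PriceLaw}, Luk--Oh \cite{JonathanStab,JonathanStabExt,twotails} and Gautam \cite{Gautam}, the maximal development is a two-ended black hole as in Theorem~\ref{Dafermos.thm}, each event-horizon component settles to the Reissner--Nordstr\"om horizon of $(M_f,e)$, and the scalar field obeys sharp inverse-polynomial asymptotics $|\phi_{|\mathcal{H}^+}|(v)\approx|\partial_v\phi_{|\mathcal{H}^+}|(v)\approx v^{-s}$ with the generic Price exponent $s>1$, including the matching \emph{lower} bound --- which is what genericity buys and what will feed \eqref{hyp3}.

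Second, I would propagate these horizon asymptotics into the interior on both ends, using the interior analysis of Dafermos \cite{Mihalis1} and Luk--Oh \cite{JonathanStab} together with Theorem~\ref{CH.thm.SS} (with the mild refinements concerning second $v$-derivatives below). This yields $\CH\neq\emptyset$, both components emanating from the two copies of $i^+$, with positive limiting area-radius away from the endpoint and exponential mass inflation. On a cone $C_{out}^R$ strictly to the future of $\mathcal{H}^+$ and reaching $\CH$, I would then check the assumptions of Theorem~\ref{main.thm}: the exponential decay $\Omega^2\lesssim e^{2K_- v+Cv^{1-\eta}}$ and the two-sided control on $\partial_v\log(\Omega^2)$ follow from the asymptotically Reissner--Nordstr\"om red-shift structure combined with \eqref{Omega2} and the mass-inflation estimate; the bound $L_- v^{-2s}\leq -r\partial_v r\leq L_+ v^{-2s}$ follows by integrating Raychaudhuri \eqref{RaychV} against $|\theta|=|r\partial_v\phi|\approx v^{-s}$; the bounds $|D_v\phi|\approx v^{-s}$ and $|D^2_{vv}\phi|\lesssim v^{-s-1}$ follow by transporting $\theta$ and $\partial_v\theta$ in the ingoing direction, using for the latter the $\partial_v$-commuted equation \eqref{Field.dvpsi}; the bounds on $|\phi|$ and $|Q|$ are immediate ($Q\equiv e$); and \eqref{hyp5'} holds for real $\phi$. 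Carrying this out for a whole family of cones $\{u_0^R\}$ with $u_0^R$ small, and symmetrically on the left, verifies \eqref{hyp1}, \eqref{hyp2}, \eqref{hyp4}--\eqref{hyp3} and the exponential mass inflation \eqref{mass.inflation.intro} on both components of $\CH$. At this point Statement~\ref{Atwo} of Theorem~\ref{main.thm.2end.i} already gives $\mathcal{S}_{i^+}=\emptyset$.

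Third, and this is the heart of the argument, I would show $\mathcal{S}\neq\emptyset$ --- that is, rule out the closed-off Penrose diagram on the left of Figure~\ref{fig:nospacelike}. Suppose for contradiction that $\mathcal{S}=\emptyset$; then by Theorem~\ref{Dafermos.thm} the two components of $\CH$ meet at a single sphere $\mathbb{S}_\ast$ with $r(\mathbb{S}_\ast)>0$, their common future endpoint. Here I would invoke Theorem~\ref{breakdown.thm.new}: since the exponential mass inflation \eqref{mass.inflation.intro} holds on each component, there exist outgoing cones $C_0^R$, $C_0^L$ such that the entire causal rectangles generated by $C_0^R$ and the right component, and by $C_0^L$ and the left component, lie in the trapped region $\T$, and --- this is the key output --- their size does \emph{not} shrink toward the endpoints, so both rectangles extend up to $\mathbb{S}_\ast$. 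Thus a neighborhood of $\mathbb{S}_\ast$ in $\overline{\mathcal{Q}^+}$ is approached by regions trapped from both sides; since $r$ strictly decreases along every future-directed null ray inside $\T$, and since --- by choosing the infalling pulse of the Cauchy data large enough (the explicit largeness condition of Step~1) --- the total decrease of $r$ across these trapped rectangles can be made to exceed any prescribed amount, one is forced to conclude $r(\mathbb{S}_\ast)=0$, contradicting $r(\mathbb{S}_\ast)>0$. Hence $\mathcal{S}\neq\emptyset$ for $\mathcal{G}'$. \emph{I expect this step to be the main obstacle:} one must make the largeness condition quantitative --- controlling $r(\mathbb{S}_\ast)$ through the variation of $r$ over the trapped rectangles of Theorem~\ref{breakdown.thm.new} --- while ensuring it only constrains the coarse ($\ell=0$) part of the data, leaving untouched the genericity needed in Step~1 for the sharp Price lower bounds, and one must handle the mild refinement of the interior estimates in \cite{Mihalis1,JonathanStab} to reach the sharp $|D^2_{vv}\phi|\lesssim v^{-s-1}$ on $C_{out}^R$, $C_{out}^L$.

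Finally, with \eqref{hyp1}, \eqref{hyp2}, \eqref{hyp4}--\eqref{hyp3} verified on the families $C_{out}^R$, $C_{out}^L$ and $\mathcal{S}\neq\emptyset$ established, Statements~\ref{Btwo} and \ref{Ctwo} of Theorem~\ref{main.thm.2end.i} apply: for $\epsilon>0$ small, $\mathcal{S}\cap(\uch^{R},\uch^{R}+\epsilon]\times[v_0^R,+\infty)$ (and its left analogue) is a spacelike boundary in $\overline{\mathcal{Q}^+}$ carrying the quantitative estimates \eqref{quant1}, \eqref{quant2} and the Kasner asymptotics \eqref{K1}--\eqref{K5} of Theorem~\ref{main.thm}, while $\CH$ is a weakly singular Cauchy horizon in the sense of mass inflation. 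Together with $\CH\neq\emptyset$ and $\mathcal{S}_{i^+}=\emptyset$ from the previous steps, this exhibits the desired class $\mathcal{G}'$ and completes the proof.
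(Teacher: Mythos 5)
Your Steps 1, 2 and 4 track the paper's argument: the tails of \cite{Gautam,JonathanStabExt,twotails} on $\mathcal{H}^+$ (with $s=4$), the interior propagation of upper/lower bounds and of the second $v$-derivative bound (Lemma~\ref{propagation.lemma} --- note the paper's proof of \eqref{upper.2} does in fact pass through a linear/nonlinear comparison via Proposition~\ref{MoiChristoph.nonlinear.prop.new}, so the scattering machinery is not entirely avoided), and the final appeal to Theorem~\ref{main.thm.2end.i}. The genuine gap is in your Step 3, the proof that $\mathcal{S}\neq\emptyset$. Your contradiction argument does not close: Theorem~\ref{breakdown.thm.new} gives trapped rectangles of non-shrinking size reaching the endpoint of each component of $\CH$, but inside those rectangles the a priori estimate \eqref{r.est.prelim} gives a \emph{lower} bound $r^2(u,v)\geq r_{CH}^2(u)+\frac{0.9L_-}{2s-1}v^{1-2s}$, so $r$ does not decrease by a large amount across them --- the variation of $r$ in $v$ there is $O(v^{1-2s})$, and the decrease toward $\mathbb{S}_\ast$ happens in the $u$-direction along $\CH$, which these rectangles do not control. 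In the one-ended case the non-shrinking trapped rectangle contradicts the closed-off diagram because it must eventually meet the regular center $\Gamma$; in the two-ended case there is no center, and Theorem~\ref{Dafermos.thm} shows the closed-off diagram with $r>0$ on all of $\CH$ genuinely occurs for small data, so no contradiction can come from Theorem~\ref{breakdown.thm.new} alone. Your proposed remedy --- a largeness condition on the infalling pulse in the \emph{Cauchy} data forcing "the total decrease of $r$ to exceed any prescribed amount" --- is exactly the hard, unproven step: it amounts to controlling large-data evolution from the initial hypersurface deep into the black hole, and it is in tension with keeping the data generic and the exteriors settling to sub-extremal Reissner--Nordstr\"om.

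The paper's route is different and avoids this entirely. It keeps the original generic solution $\mathcal{M}$ unchanged in $\{u\leq u_0^R,\ v\geq V_R\}$ (so all tail and interior estimates on $C_{out}^R=\{u_0^R\}\times[V_R,\infty)$ survive), and \emph{re-prescribes characteristic data on an ingoing cone} $\underline{C}_{V_R}=[u_0^R,u_F^R]\times\{V_R\}$ deep inside the trapped region so that $r^2(u_F^R,V_R)<\frac{C_R^-}{4}V_R^{1-2s}$. The a priori lower bound \eqref{last.crucial'} (a consequence of Proposition~\ref{apriori.prop1}, whose constant depends only on the unmodified data on $C_{out}^R$) then forces $u_{\mathcal{CH}_R}<u_F^R$, and Proposition~\ref{apriori.prop2} upgrades this breakdown to a genuine curve $\mathcal{S}=\{r=0\}\neq\emptyset$ (note that even if your contradiction gave $r(\mathbb{S}_\ast)=0$ at a single sphere, you would still need this step to produce $\mathcal{S}$ as a nonempty boundary component). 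The paper then reconstructs admissible two-ended asymptotically flat Cauchy data by joining the right and left modified regions at a common trapped sphere $(U_L,V_R)$, invoking local well-posedness and Cauchy stability to build an admissible $\Sigma'$. If you want to salvage your write-up, replace the contradiction in Step 3 by this direct modification of interior ingoing data; the required "largeness" then lives on $\underline{C}_{V_R}$ (via the Raychaudhuri equation \eqref{RaychU}) rather than on the initial hypersurface, and it manifestly does not disturb the genericity of the event-horizon tails.
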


	\subsection{Spacelike gluing theorem}\label{gluing.thm.section}
	
	Theorem~\ref{main.thm.global.ii} relies on \blue{a} new gluing result, which we now present. Theorem~\ref{main.gluing.thm} will be split into two sub-gluing statements in Section~\ref{section.global.uncond1}: Theorem~\ref{uncharged.gluing.thm} and Theorem~\ref{charged.gluing.thm}, the underlying strategy of which consists in a spacelike-characteristic gluing approach. 
	
	\begin{thm}\label{main.gluing.thm}
		Let $D\in \mathcal{R}$ an uncharged $C^{k}$ sphere data of area-radius $R>0$, $0 \leq q <1$, $\varsigma=\pm 1$, and \begin{equation}\label{small.R.condition}
			\frac{R}{1+\sqrt{1-q^2}}<M_f < \frac{R}{1-\sqrt{1-q^2}}
		\end{equation} 
		
		Assume also that $\frac{|q_0| M_f(1-q)}{q}$ is sufficiently large. Then, there exists $R_f$ satisfying the bound $$ M_f [1- \sqrt{1-q^2}] < R_f < M_f [1+ \sqrt{1-q^2}],$$ such that $D$ can be spatially glued to a trapped Reissner--Nordstr\"{o}m sphere of Hawking mass $M_f$, charge $\varsigma q  M_f$ and area-radius $R_f$. 
	\end{thm}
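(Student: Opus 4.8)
The plan is to implement the spacelike--characteristic gluing strategy announced in Section~\ref{gluing.intro}, by chaining together three gluing operations: first a purely spacelike gluing of the uncharged regular sphere $D$ to a carefully chosen uncharged apparent horizon sphere $\mathbb{S}_A$; then a characteristic gluing of $\mathbb{S}_A$ to a Schwarzschild trapped sphere $\mathbb{S}_S^{\T}$; and finally a characteristic ``charging'' gluing of $\mathbb{S}_S^{\T}$ to a trapped sub-extremal Reissner--Nordstr\"{o}m sphere $\mathbb{S}_{RN}^{\T}$ of the prescribed mass $M_f$, charge $\varsigma q M_f$, and some area-radius $R_f$ in the trapped range $M_f[1-\sqrt{1-q^2}] < R_f < M_f[1+\sqrt{1-q^2}]$. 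Since ``spatially glued'' is transitive (one simply concatenates the families $D(s)$ of sphere data sets, reparametrizing $s$, and a characteristic gluing segment is in particular realizable as a spacelike segment by choosing an appropriate spacelike hypersurface through the trapped region — concretely, once the two neighboring spheres are joined by a characteristic rectangle one interpolates the induced sphere data along a spacelike slice inside it), the composite is a spacelike gluing of $D$ to $\mathbb{S}_{RN}^{\T}$, which is the assertion. The condition \eqref{small.R.condition} enters precisely to guarantee that $M_f$ is compatible with $R=\rho(D)$ lying in the allowed interval so that the Reissner--Nordstr\"{o}m sphere one targets is genuinely trapped and sub-extremal with the chosen ratio $q$; and the largeness of $\frac{|q_0|M_f(1-q)}{q}$ is the analogue of the Kehle--Unger smallness-of-charge-to-coupling condition, needed in the last (charging) step.

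First I would invoke the spacelike gluing result quoted as Proposition~\ref{spacelike.gluing.thm} (the ``already'' reference in Section~\ref{intro.glue1}): for any $k$, the sphere $D\in\mathcal{R}$ can be spatially glued within the regular region to \emph{any} apparent horizon sphere $\mathbb{S}_A$ provided the obstruction-type inequality $1 < [R_A |\rd_v\phi|_{|\mathbb{S}_A}]^2 < 1 + \tfrac{1}{k-1}$ holds, where $R_A$ is its area-radius and the gauge is lapse-normalized. The freedom here is essential: the ingoing scalar-field derivative and the area-radius of $\mathbb{S}_A$ are at my disposal subject only to that pinching condition, which in particular forbids the Schwarzschild apparent horizon (where $|\rd_v\phi| = 0$). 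So I would \emph{choose} $\mathbb{S}_A$ — its area-radius $R_A$, and the one-jet of $\phi$ up to order $k$ — not only to satisfy the pinching inequality but, simultaneously, to be characteristically gluable onto a Schwarzschild trapped sphere. This second step is Proposition~\ref{uncharged.null.gluing.prop}: because we retain full flexibility on $\mathbb{S}_A$ (as opposed to Kehle--Unger, who must glue a \emph{fixed} Minkowski sphere and therefore invoke Borsuk--Ulam), the characteristic gluing of $\mathbb{S}_A$ to $\mathbb{S}_S^{\T}$ reduces to solving the characteristic constraint ODEs with enough free parameters — one prescribes the Schwarzschild target $\mathbb{S}_S^{\T}$ (mass $M_0$, trapped area-radius $R_0 < 2M_0$) and then back-solves the finitely many matching conditions at $\mathbb{S}_A$, which the parameters of $\mathbb{S}_A$ can absorb. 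Composing the two gives the spacelike gluing of $D$ to $\mathbb{S}_S^{\T}$ stated as Theorem~\ref{uncharged.gluing.thm}.

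For the final ``charging'' step I would invoke the charged characteristic gluing result Theorem~\ref{charged.gluing.thm} (glue the Schwarzschild trapped sphere $\mathbb{S}_S^{\T}$ characteristically to $\mathbb{S}_{RN}^{\T}$, sub-extremal of charge-ratio $q\in(0,1)$ and mass $M_f$), which requires exactly $\frac{|q_0|M_f(1-q)}{q}\gg 1$; this is the step modeled on \cite{KehleUnger}, including the use of Borsuk--Ulam on $\mathbb{S}^2$, but adapted to trapped rather than event-horizon spheres, which is why the $(1-q)$ factor (distance from extremality) appears. I would also verify that the Schwarzschild mass $M_0$ and trapped radius $R_0$ produced by the previous step can be chosen compatibly with the charging step's input requirements (again using the free parameters), and that running the composite in the gauge of Definition~\ref{lapse.normalized} yields a lapse-normalized family $D(s)$ with the claimed endpoint being a trapped Reissner--Nordstr\"{o}m sphere of mass $M_f$, charge $\varsigma q M_f$ (the sign $\varsigma$ being a trivial reflection freedom in $Q$), and some $R_f$ in the trapped/sub-extremal window. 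The \emph{main obstacle} I anticipate is the bookkeeping that makes the chaining of these three gluings consistent: one must exhibit a single choice of the intermediate data ($R_A$, the $k$-jet of $\phi$ at $\mathbb{S}_A$, and the Schwarzschild parameters $M_0, R_0$) that \emph{simultaneously} satisfies (i) the pinching inequality $1 < [R_A|\rd_v\phi|_{\mathbb{S}_A}]^2 < 1+\tfrac1{k-1}$ of step~1, (ii) the solvability of the characteristic matching in step~2, and (iii) the input hypotheses of step~3 — and then check that the given data $\big(R,M_f,q\big)$ under the hypotheses \eqref{small.R.condition} and $\frac{|q_0|M_f(1-q)}{q}\gg 1$ indeed admit such a choice; the smallness/largeness conditions in the statement are precisely what open up the parameter window in which all three constraints are jointly feasible. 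Everything else is an application of the already-established gluing propositions and the transitivity of spacelike gluing.
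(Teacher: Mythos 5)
Your proposal follows essentially the same route as the paper's proof: chain Proposition~\ref{spacelike.gluing.thm} and Proposition~\ref{uncharged.null.gluing.prop} (packaged as Theorem~\ref{uncharged.gluing.thm}) with the charged characteristic gluing of Theorem~\ref{charged.gluing.thm}, then convert the composite spacelike--characteristic data into a genuine spatial gluing by solving \eqref{1.1}--\eqref{5.1} forward from $\Sigma_0\cup C_0$ and taking a spacelike slice free of anti-trapped spheres. The parameter bookkeeping you flag as the main obstacle is exactly where the paper does its explicit work (setting $R_S^{\T}=(1+\eta)R$, $2M_S=(1+\eta)^2R$, $R_A=(1+\eta)^3R$ with $\eta=O(\delta_0\sqrt{1-q^2})$, and if necessary inserting an extra trivial Schwarzschild trapped-cone gluing to shrink $R_i$ into the window \eqref{Mi.choice}), with \eqref{small.R.condition} being precisely what makes that window nonempty.
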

	
	Note that, in the case $q=0$, Theorem~\ref{main.gluing.thm} permits spatial gluing to Schwarzschild. We apply Theorem~\ref{main.gluing.thm} below to construct  scalar field analogues to the celebrated Oppenheimer--Snyder solution (corresponding to Theorem~\ref{OS.thm.intro} from the introduction) in providing  spacetimes which are spatially-homogeneous near the center $\Gamma$, and Schwarzschild/Reissner--Nordstr\"{o}m near timelike infinity $i^+$: this is the object of Theorem~\ref{uncharged.thm} and Theorem~\ref{charging.thm}, respectively. Note that in Theorem~\ref{main.gluing.thm} below, we are allowed to prescribe the area-radius  of the regular sphere $R>0$ arbitrarily, which then constraints $M_f$ to be comparable to $R$; however, in subsequent applications of Theorem~\ref{main.gluing.thm}, such freedom will not be allowed, as we will require a sufficiently small $R$ to evolve the equations towards the past (see already the proofs of Theorem~\ref{uncharged.thm}, Theorem~\ref{charging.thm} and Theorem~\ref{EH.AF.thm}). A contrario, we will also see in the proof of Theorem~\ref{main.gluing.thm} that $R_f$ can be chosen more flexibly.

	\section{Intermediate results proven in our previous work \cite{bif}}\label{oldresults.section}

	In this section, we summarize some of the results proven in our previous work \cite{bif}.
	
	\subsection{A priori estimates}\label{apriori.est.section}

	We now place ourselves in the setting of Theorem~\ref{main.thm}, Statement~\ref{main.thm.I} and consider the Cauchy development of data on $C_{out} = \{u_0\} \times [v_0,+\infty)$ and $\Cin =[u_0,u_F) \times  \{v_0\}$.
	
	\begin{prop}\label{apriori.prop1}[Proposition 4.2 in \cite{bif}]. We  assume that the estimates \eqref{hyp1} on $C_{out}$ are true. 	 Then, $\uch$ satisfies the estimate $$u_0 \leq\uch \leq u_F,$$ and there exists $v_0(D)>0$ large enough so that $[u_0, \uch]\times [v_0,+\infty) \subset \T$. Furthermore, for all $ (u,v) \in [u_0, \uch]\times [v_0,+\infty)$, the following estimates hold\blue{, denoting $r_{|\CH}(u):= r_{CH}(u)$} \begin{align}
			&0.9 L_-\ v^{-2s}\leq -r\rd_v r(u,v) \leq  1.1 L_+\ v^{-2s}	\label{lambda.est.prelim},\\  &  \label{r.est.prelim}  r^2_{CH}(u) + \frac{0.9 L_-}{2s-1}\ v^{1-2s}\leq r^2(u,v) \leq r^2_{CH}(u) + \frac{1.1 L_+}{2s-1}\ v^{1-2s},\\ &  \bigl| r|\rd_u r|(u,v) -r|\rd_u r|_{CH}(u) \bigr| \leq E(D)\ e^{1.9 K_- v} \label{nu.est.prelim}.
		\end{align}

	\end{prop}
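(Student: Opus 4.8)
The plan is a continuity (bootstrap) argument on the characteristic rectangle $[u_0,\uch]\times[v_0,+\infty)$, run with $v_0$ taken large depending on every constant appearing in \eqref{hyp1} and on $u_F-u_0$. First, $\uch\in[u_0,u_F]$ is immediate: $u_0$ lies in the set defining $\uch$ since $r_{CH}(u_0)>0$ by \eqref{hyp1}, and $\uch\le u_F$ by construction; moreover $\rd_u r<0$ throughout (no anti-trapped spheres, propagated by \eqref{RaychU}). The substance is the propagation of the three quantitative estimates, which I would close jointly with the auxiliary bootstrap bounds $\Omega^2(u,v)\ls D\,e^{2K_-v+Cv^{1-\eta}}$, $|Q|(u,v)\ls D$, $|\phi|^2(u,v)\ls D$, transported control on the rescaled derivatives $\theta=r\rd_v\phi$, $\xi=r\rd_u\phi$, and $\rd_v r<0$ (i.e. containment in $\T$). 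The engine is the pair of first-order transport identities coming from \eqref{Radius3},
\[
\rd_u(-r\rd_v r)=\tfrac{\Omega^2}{4}\bigl(1-\tfrac{Q^2}{r^2}\bigr),\qquad
\rd_v\bigl(r|\rd_u r|\bigr)=\tfrac{\Omega^2}{4}\bigl(1-\tfrac{Q^2}{r^2}\bigr),
\]
whose source is exponentially small in $v$; the Raychaudhuri equations \eqref{RaychU}--\eqref{RaychV}, which keep $\rd_u r/\Omega^2$ and $\rd_v r/\Omega^2$ negative once negative (so the rectangle stays trapped); the lapse equation \eqref{Omega} (equivalently \eqref{Omega2}); and the matter equations \eqref{Field2}--\eqref{Field3}, \eqref{chargeUEinstein}--\eqref{ChargeVEinstein}.

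Concretely I would let $u_*$ be the supremum of those $u_1\in[u_0,\uch)$ for which a bootstrap version of all of the above — the estimates \eqref{lambda.est.prelim}, \eqref{r.est.prelim}, \eqref{nu.est.prelim} with weakened constants (say $0.8L_-,1.2L_+,2E(D)$) and the auxiliary bounds with doubled constants — holds on $[u_0,u_1]\times[v_0,+\infty)$. This set is nonempty (the inequalities hold on $C_{out}$ by \eqref{hyp1}, and spread to a thin strip in $u$ by continuity and the smallness of $\Omega^2$ at each fixed $v$) and open by local well-posedness; the crux is that the bootstrap constants \emph{improve}, forcing $u_*=\uch$. Integrating the first identity in $u$ from $C_{out}$,
\[
-r\rd_v r(u,v)=-r\rd_v r(u_0,v)+\int_{u_0}^{u}\tfrac{\Omega^2}{4}\bigl(1-\tfrac{Q^2}{r^2}\bigr)(u',v)\,du',
\]
the error is $\ls(u_F-u_0)\,e^{2K_-v+Cv^{1-\eta}}v^{2s-1}$ — using $|Q|\ls D$ and the bootstrapped lower bound $r^2(u,v)\gtrsim v^{1-2s}$ to control $|1-Q^2/r^2|$ by a power of $v$ — which is $\ll v^{-2s}$ for $v_0$ large, so the $[L_-,L_+]$ window degrades only to $[0.9L_-,1.1L_+]$, i.e.\ \eqref{lambda.est.prelim}. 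Integrating this in $v$ and using $r^2(u,v)-r_{CH}^2(u)=2\int_v^{\infty}(-r\rd_v r)(u,v')\,dv'$ yields \eqref{r.est.prelim}. Integrating the second identity on $[v,+\infty)$ simultaneously produces the limit $r|\rd_u r|_{CH}(u)$ and \eqref{nu.est.prelim}, since $\int_v^{\infty}\tfrac{\Omega^2}{4}|1-Q^2/r^2|\,dv'\ls\int_v^{\infty}e^{2K_-v'+Cv'^{1-\eta}}v'^{2s-1}\,dv'\ls e^{1.9K_-v}$ (the factor $1.9$ absorbs the subexponential correction and the integration). In parallel, $\Omega^2\ls D\,e^{2K_-v+Cv^{1-\eta}}$ is recovered by integrating \eqref{Omega} in $u$ from $u_0$, the new source terms $-2\Re(D_u\phi\,\overline{D_v\phi})$, $\tfrac{\Omega^2}{2r^2}$, $-\tfrac{\Omega^2Q^2}{r^4}$ and $\tfrac{2\rd_u r\,\rd_v r}{r^2}$ all being small under the bootstrap once $\theta,\xi$ are controlled (from \eqref{Field2}--\eqref{Field3} by a nested transport estimate) and every factor $1/r$ is matched by a compensating $r$; likewise $|Q|,|\phi|$ stay bounded by integrating \eqref{chargeUEinstein} and \eqref{Field}, whose sources are integrable in $v$ (exponentially small, or $\ls v^{-2s}$ times a power of $v$, integrable for $s>\tfrac12$). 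Once all bootstrap constants improve, $u_*=\uch$; the sharp bound $r^2(u,v)\ge r_{CH}^2(u)+\tfrac{0.9L_-}{2s-1}v^{1-2s}>0$ shows $r$ never vanishes at finite $v$ on $[u_0,\uch]$, hence $[u_0,\uch]\times[v_0,+\infty)\subset\T$, and the constants pass to the limit $u_1\to\uch$ (and $v\to+\infty$).

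I expect the main obstacle to be the coupled, potentially degenerate control of the lapse and the matter near $\uch$: the $-r\rd_v r$ and $r|\rd_u r|$ estimates cannot close without $\Omega^2\ls D\,e^{2K_-v+Cv^{1-\eta}}$, yet that bound requires controlling the scalar-field flux $\Re(D_u\phi\,\overline{D_v\phi})\sim|\theta||\xi|/r^2$ and $Q$ all the way up to $u=\uch$, where $r$ may degenerate to $0$ and $1/r^2$ grows like a power of $v$. The resolution is that every dangerous $1/r$ is compensated — either by a factor of $r$ built into $\theta=r\rd_v\phi$, $\xi=r\rd_u\phi$, $-r\rd_v r$, $r\rd_u r$, or by the quantitative lower bound $r^2\gtrsim v^{1-2s}$ — so one never faces worse than polynomial-in-$v$ growth, always beaten by the exponential decay of $\Omega^2$, provided $v_0$ is large in terms of $D,C,L_\pm,\eta,s,u_F-u_0$. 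A secondary technical point is that the conclusion is claimed on $[u_0,\uch]\times[v_0,+\infty)$, closed up to the limiting value $\uch$; this is handled by running the bootstrap on $[u_0,u_1]$ with $u_1<\uch$ and constants uniform in $u_1$, then letting $u_1\to\uch$.
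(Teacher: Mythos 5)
The paper does not reprove this proposition here — it is imported verbatim from \cite{bif} — so I am judging your argument on its own terms. Your skeleton is the right one: a bootstrap on the rectangle, driven by the transport identities $\rd_u(-r\rd_v r)=\rd_v(r|\rd_u r|)=\tfrac{\Omega^2}{4}(1-Q^2/r^2)$ from \eqref{Radius3}, integrated in $u$ from $C_{out}$ for \eqref{lambda.est.prelim} and in $v$ from $v=+\infty$ for \eqref{nu.est.prelim}, with the lower bound on $r^2$ then giving the containment in $\T$. The problem is the closure mechanism you propose for the auxiliary quantities, which is both un-initializable and quantitatively wrong.

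First, initialization: your bootstrap transports $\theta=r\rd_v\phi$ and $\xi=r\rd_u\phi$ via \eqref{Field2}--\eqref{Field3} and feeds them into the lapse equation \eqref{Omega}. But \eqref{hyp1} contains \emph{no} bound on $D_v\phi$ or $D_u\phi$ on $C_{out}$ (those are the separate hypotheses \eqref{hyp2}, \eqref{hyp4}--\eqref{hyp3}, needed only for the later statements of Theorem~\ref{main.thm}), and nothing at all is assumed on $\Cin$. So your nested transport estimate for $\theta,\xi$ has no data to start from, and the proposition is precisely designed to avoid needing any. Second, even granting such bounds, the pointwise propagation of $\Omega^2\ls D e^{2K_-v+Cv^{1-\eta}}$ fails: the term $\tfrac{2\rd_u r\,\rd_v r}{r^2}$ in \eqref{Omega} is positive in $\T$ and, under your own bootstrap ($r|\rd_u r|\ls 1$, $-r\rd_v r\approx v^{-2s}$, $r^{-2}\ls v^{2s-1}$), of size $v^{2s-2}$; integrating in $u$ and then in $v$ it contributes $\sim v^{2s-1}$ to $\log\Omega^2$, which for $s\ge 1$ (the regime of every application, $s=4$ or $s>\tfrac32$) overwhelms the allowed correction $Cv^{1-\eta}$. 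Your heuristic that ``polynomial growth is always beaten by the exponential decay of $\Omega^2$'' does not apply here because the polynomial sits in the exponent. The correct mechanism never uses a pointwise bulk bound on $\Omega^2$ at all: the Raychaudhuri monotonicity $\rd_u(\nu/\Omega^2)\le 0$ from \eqref{RaychU} gives $\int_{u_0}^{u}\Omega^2(u',v)\,du'\le \frac{\Omega^2}{|\nu|}(u_0,v)\,r(u_0,v)$ directly from the data on $C_{out}$ (where \eqref{hyp1} bounds $\Omega^2$ and bounds $|\nu|$ below), and this integrated quantity is all that the transport identities require; the dual estimate $\int_{v}^{\infty}\Omega^2\,dv'\le\frac{\Omega^2}{|\lambda|}r$ is exactly the trick displayed in the proof of Proposition~\ref{apriori.prop2} in this paper. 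Likewise $Q$ and $\phi$ are controlled not pointwise through the wave equation but by Cauchy--Schwarz of \eqref{chargeUEinstein}--\eqref{ChargeVEinstein} against the Raychaudhuri fluxes together with these integrated lapse bounds. As written, your proof does not close.
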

	
	\noindent 
	Note that Proposition~\ref{apriori.prop1} \blue{immediately} proves Theorem~\ref{breakdown.thm.new} from the Section~\ref{intro.section}.
	
	From Proposition~\ref{apriori.prop1}, it is clear that the gauge~\eqref{gauge.U.future} can be imposed, which we will do in the rest of the \blue{section}. We will also fix the gauge freedom so that $\uch=0$ (see Section~\ref{geometricframework}). Therefore, we have for all $u_0 \leq u \leq 0$: \begin{equation*}\begin{split}
			&(r\rd_u r)_{CH}(u) \equiv -1,\\ &  r_{CH}^2(u) =2|u|. 
		\end{split}
	\end{equation*}

	\subsection{Sufficient condition for  a local breakdown of the Cauchy horizon} The following proposition is important in the new proof of the breakdown of weak null singularities (Section~\ref{breakdown.reproof}).
	\begin{prop}\label{localbreak.prop} [Proposition 4.3 in \cite{bif}]. Assume that the estimates \eqref{hyp1} on $C_{out}$ are satisfied. Moreover, assume that $[u_0,u_F) \times\{v_0\} \subset \T$ and $\underset{ u \rightarrow u_F}{\lim} r(u,v_0)=0$. Then $\uch < u_F$ and for all $\uch < u \leq u_F$, there exists $v_{\mathcal{S}}(u)<+\infty$ such that \begin{equation*}
			\lim_{ v \rightarrow v_{\mathcal{S}}(u)} r(u,v) =0.
		\end{equation*}
	\end{prop}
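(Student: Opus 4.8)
The plan is to combine the a priori estimates of Proposition~\ref{apriori.prop1} on the causal rectangle $[u_0,\uch]\times[v_0,+\infty)\subset\T$ with a contradiction/continuation argument for what happens past $\uch$. First I would recall that, by definition, $\uch=\sup\{u:\lim_{v\to+\infty}r(u,v)>0\}$, so for $u>\uch$ we have $\lim_{v\to+\infty}r(u,v)=0$ — but this is a limit, not a statement that $r$ reaches $0$ at finite $v$; the content of the proposition is precisely to upgrade this. The hypothesis $[u_0,u_F)\times\{v_0\}\subset\T$ and $\lim_{u\to u_F}r(u,v_0)=0$ is the engine: it forces the ingoing cone $\Cin$ to itself collapse, and since $\T$ is preserved under outgoing null evolution (the $\partial_v r<0$ sign propagates by \eqref{RaychV}, as noted after the definition of $\T$), the entire strip to the future of $\Cin$ stays trapped.

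The first step is to show $\uch<u_F$. I would argue by contradiction: if $\uch=u_F$, then the estimate \eqref{r.est.prelim} from Proposition~\ref{apriori.prop1} holds on all of $[u_0,u_F)\times[v_0,+\infty)$, giving $r^2(u,v)\geq r_{CH}^2(u)+\tfrac{0.9L_-}{2s-1}v^{1-2s}$. Since $s>\tfrac12$, the $v$-dependent term decays, so taking $v\to+\infty$ yields $r_{CH}^2(u)=\lim_{v\to+\infty}r^2(u,v)$; but I also need a \emph{lower} bound on $r_{CH}(u)$ that stays positive as $u\to u_F$, which would contradict $\lim_{u\to u_F}r(u,v_0)=0$ combined with monotonicity. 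The key mechanism: on $\Cin$, $r(u,v_0)\to 0$; integrating $\partial_u r=\nu<0$ (no anti-trapped spheres) and controlling $r_{CH}(u)-r(u,v_0)=\int_{v_0}^{+\infty}\lambda(u,v)\,dv$, which is negative since $\lambda=\partial_v r<0$ in $\T$, gives $r_{CH}(u)\leq r(u,v_0)\to 0$. So if $\uch=u_F$ we would need $r_{CH}(u)>0$ for $u<u_F$ yet $r_{CH}(u)\to 0$; this is not yet a contradiction by itself, so the real point is that $r_{CH}(u)>0$ on all of $[u_0,u_F)$ forces, via the Raychaudhuri/monotonicity structure and the quantitative control of $\nu$ in \eqref{nu.est.prelim}, that $r(u,v_0)$ cannot reach $0$ in finite $u$ — contradicting the hypothesis. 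This is essentially Statement~\ref{main.thm.III} of Theorem~\ref{main.thm}, whose proof in \cite{bif} I would invoke directly.

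The second step, given $\uch<u_F$: for fixed $u\in(\uch,u_F]$, I need $v_{\mathcal{S}}(u)<+\infty$ with $\lim_{v\to v_{\mathcal{S}}(u)}r(u,v)=0$. Since $(u,v_0)\in\T$, the whole ray $\{u\}\times[v_0,+\infty)$ (as long as it exists in $\mathcal{Q}^+$) lies in $\T\cup\mathcal{A}$, so $r(u,\cdot)$ is non-increasing in $v$. If it stayed bounded below by some $\delta>0$, then $\lim_{v\to+\infty}r(u,v)=r_\infty\geq\delta>0$, contradicting $u>\uch$. Hence $r(u,v)\searrow 0$; I then need to rule out that this happens only as $v\to+\infty$ rather than at finite $v$. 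Here I would use the Raychaudhuri equation \eqref{RaychV} together with the trapped-region lower bound on $-\lambda/\Omega^2$ type quantities, or more directly the structure of equation \eqref{Radius3}: $\partial_v(r\nu)=\tfrac{\Omega^2}{4}(1-Q^2/r^2)$, which becomes large and negative as $r\to 0$ (assuming $Q$ bounded, which holds under \eqref{hyp1}), forcing $r\nu$ to decrease fast enough that $r$ hits $0$ in finite $v$. The main obstacle is exactly this last quantitative step — converting ``$r\to 0$ monotonically'' into ``$r=0$ at finite $v$'' — which requires a careful ODE-type argument near the singularity using that $Q$ stays bounded and $\partial_u r$ stays bounded away from $0$ from \eqref{nu.est.prelim}; but since this is precisely Statement~\ref{main.thm.II}/\ref{main.thm.III} of Theorem~\ref{main.thm} applied on the rectangle, and Theorem~\ref{main.thm} is quoted as a black box, the proof reduces to checking that the hypotheses of Statement~\ref{main.thm.III} are met, which they are by assumption.
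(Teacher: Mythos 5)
There is a genuine gap in both halves of your argument, and your fallback is circular. This proposition is Proposition 4.3 of \cite{bif} and is an \emph{ingredient} of Theorem 3.1 of \cite{bif} (= Theorem~\ref{main.thm} here); indeed its two conclusions are literally the first sentences of Statements~\ref{main.thm.III} and \ref{main.thm.II} of that theorem, so you cannot "invoke Statement~\ref{main.thm.III} as a black box" to prove it. For the claim $\uch<u_F$, the decisive point — which you circle around but never land on — is to evaluate \eqref{r.est.prelim} at $v=v_0$: it gives $r^2(u,v_0)\geq r_{CH}^2(u)+\frac{0.9L_-}{2s-1}v_0^{1-2s}\geq\frac{0.9L_-}{2s-1}v_0^{1-2s}>0$ \emph{uniformly} in $u\in[u_0,\uch]$, so if $\uch=u_F$ this immediately contradicts $\lim_{u\to u_F}r(u,v_0)=0$. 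This uniform-in-$u$ lower bound is exactly the "non-shrinking trapped rectangle" of Theorem~\ref{breakdown.thm.new}. You instead chase a positive lower bound on $r_{CH}(u)$ as $u\to u_F$, which does not exist (Proposition~\ref{apriori.prop2} shows $r_{CH}(u)\to0$ there) and is not needed, and you use \eqref{r.est.prelim} only in the useless direction $v\to+\infty$.

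For the finiteness of $v_{\mathcal{S}}(u)$, your monotonicity argument correctly reduces matters to excluding that $r(u,\cdot)\to0$ only as $v\to+\infty$ (i.e., $\mathcal{S}_{i^+}=\emptyset$), but the mechanism you propose has the wrong sign: by \eqref{Radius3}, $\partial_v(r\nu)=-\frac{\Omega^2}{4}\left(1-\frac{Q^2}{r^2}\right)$, which becomes large and \emph{positive} as $r\to0$ unless $Q$ vanishes faster than $r$, so it pushes $r\nu$ toward zero rather than forcing collapse at finite $v$. The argument that actually works is the one written out in the proof of Proposition~\ref{apriori.prop2} of this paper: if $r(u,v)\to0$ as $v\to+\infty$ for all $u$ in some interval $(\uch,\usi]$, then the estimates of Proposition~\ref{apriori.prop1} extend to $[u_0,\usi]\times[v_0,+\infty)$ and $(r\partial_u r)_{|\CH}\equiv0$ on that interval; \eqref{nu.est.prelim} then forces $|r\partial_u r|(u,v)\lesssim e^{1.9K_-v}$ there, and integrating in $u$ at fixed $v$ and letting $v\to+\infty$ collapses the interval to a point, a contradiction. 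Neither this quantitative step nor the $v=v_0$ evaluation appears in your proposal, so as written it does not close.
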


	\section{Conditional applications to one or two-ended spacetimes}\label{section.global.cond}

	In this section, we address the proof of the conditional results, Theorem~\ref{main.thm.global.i} (one-ended case),  Theorem~\ref{main.thm.2end.i} (two-ended case) and  Theorem~\ref{inext.thm} ($C^2$-inextendibility result), and we relegate the proof of the (more difficult to prove) unconditional results    Theorem~\ref{main.thm.global.ii} and  Theorem~\ref{main.thm.2end.ii} to Section~\ref{section.global.uncond1} and Section~\ref{section.global.uncond2} respectively.
	
	\paragraph{Discussion of the assumptions}
	
	We emphasize that the main conditions in our conditional results are the assumptions \eqref{hyp1}, \eqref{hyp2}, \eqref{hyp4}--\eqref{hyp3} on an outgoing cone inside the black hole. In the two-ended case, these assumptions are \emph{known} to hold for generic asymptotically flat spherically symmetric black hole solutions of \eqref{1.1}--\eqref{5.1} for $q_0=0$ by the work  of Luk--Oh \cite{JonathanStab,JonathanStabExt,twotails} and\footnote{Indeed, \cite{twotails} is a conditional result on the satisfaction of weak decay assumptions on the black hole exterior. These assumptions have been proved to hold, however, in the recent work of Gautam \cite{Gautam}, see the discussion below.} the recent work \cite{Gautam} (see also  Lemma~\ref{propagation.lemma} in Section~\ref{section.global.uncond2}). In particular, imposing the  assumptions \eqref{hyp1}, \eqref{hyp2}, \eqref{hyp4}--\eqref{hyp3} is only \blue{needed} in the case $q_0 \neq 0$ (both in the one and two-ended cases) not covered by \cite{Gautam,JonathanStab,JonathanStabExt,twotails}, which is the only case allowing to study one-ended black holes, as we explained in the introduction. To  provide  unconditional analogues of our results for generic Cauchy data  would require to understand the dynamics in the exterior of the black hole for \eqref{1.1}--\eqref{5.1} in the $q_0 \neq 0$ case, which is an open problem (see, however \cite{Moi2,MoiDejan} for partial progress).
	\paragraph{Strategy of the proof in the one-ended case}
	In the one-ended case, our conditional result -- Theorem~\ref{main.thm.global.i} -- additionally requires the absence of a locally-naked singularity on which $r>0$, i.e., $\mathcal{CH}_{\Gamma} =\emptyset$, recalling the terminology of Theorem~\ref{Kommemi.thm}. In view of Theorem~\ref{breakdown.thm} (which we reprove as part of   Theorem~\ref{main.thm.global.i}), assuming $\mathcal{CH}_{\Gamma} =\emptyset$ is sufficient to obtain a non-empty spacelike singularity $\mathcal{S}$ described by  the estimates of Theorem~\ref{main.thm}.
	
	\paragraph{Strategy of the proof in the two-ended case}
	In the two-ended case, locally-naked singularities do not arise, since $\Gamma=\emptyset$ (recalling the first part of Theorem~\ref{Dafermos.thm}). However, we recall that the two-ended case presents the additional complication that a black hole spacetime may not contain a spacelike singularity, i.e., $\mathcal{S}=\emptyset$, as Theorem~\ref{Dafermos.thm} demonstrates.  Our conditional result, of course, only applies to black holes such that  $\mathcal{S}\neq\emptyset$. In Section~\ref{section.global.uncond2}, we will show that the class of such black holes is non-empty as part of our unconditional  results in the two-ended case (Theorem~\ref{main.thm.2end.ii}), relying on the arguments leading to the proof of Theorem~\ref{breakdown.thm.new}.

	\subsection{A priori characterization of the spacetime boundary}
	
	In what follows, we prove a result of independent interest, which is that the Cauchy horizon $\CH$ (which obeys estimates given by \eqref{hyp1} that are essentially equivalent to mass inflation) cannot be followed by an ingoing light cone $\mathcal{S}_{i^+}$ on which $r$ extends to $0$. We also show that, if the Cauchy horizon breaks down, then $r$ tends to $0$ towards its endpoint. \blue{In what follows, we set initial data on $C_{out} \cup \underline{C}_{in}$, where $C_{out}=\{u_0\} \times [v_0,+\infty)$ and $\underline{C}_{in}=[u_0,u_F] \times \{v_0\}$. We also recall the notation $r_{|\CH}(u) = r_{CH}(u)$.}
	
	\begin{prop}\label{apriori.prop2}
		We  assume that the estimates \eqref{hyp1} on $C_{out}$ are true, and that $\uch<u_F$. Then \begin{equation}\begin{split}
				\lim_{v\rightarrow+\infty} r(\uch,v) = \lim_{u \rightarrow \uch,\ u< \uch} r_{CH}(u)=0.		 	\end{split}
		\end{equation} Moreover, $\mathcal{S}_{i^+}=\{ \uch < u \leq u_F, \underset{v\rightarrow+\infty}{\lim} r(u,v) = 0 \}=\emptyset$ and for all $\uch<u \leq u_F$, there exists $v_{\mathcal{S}}(u)<+\infty$ such that \begin{equation*}
			\lim_{ v \rightarrow v_{\mathcal{S}}(u)} r(u,v) =0.
		\end{equation*}
	\end{prop}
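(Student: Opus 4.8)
The plan is to leverage Proposition~\ref{apriori.prop1} and Proposition~\ref{localbreak.prop} to control the geometry on $[u_0,\uch]\times[v_0,+\infty)$, and then to run a continuation/monotonicity argument past $\uch$ to exclude $\mathcal{S}_{i^+}$. First I would record the consequences of Proposition~\ref{apriori.prop1}: under \eqref{hyp1}, the causal rectangle $[u_0,\uch]\times[v_0,+\infty)$ lies in $\T$, so $\lambda=\rd_v r<0$ there, and moreover \eqref{r.est.prelim} gives $r^2(u,v) = r^2_{CH}(u) + O(v^{1-2s})$ with $r_{CH}(u)=\lim_{v\to+\infty} r(u,v)$ the decreasing limit along each outgoing cone. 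To get $\lim_{v\to+\infty} r(\uch,v)=0$ and $\lim_{u\to\uch^-} r_{CH}(u)=0$, I would argue by contradiction: if $r_{CH}(\uch)=\lim_{v\to+\infty} r(\uch,v)>0$, then by continuity and the $\rd_u r<0$ bound \eqref{nu.est.prelim} (which shows $r\rd_u r$ stays close to its limit $-1$ on $\CH$, hence $\rd_u r$ is bounded away from $0$), one could extend the region where $r_{CH}(u)>0$ strictly past $\uch$, contradicting the definition $\uch=\sup\{u : \lim_{v\to+\infty} r(u,v)>0\}$. The monotonicity $u\mapsto r_{CH}(u)$ decreasing (from $\nu<0$) combined with $r_{CH}(u)>0$ for $u<\uch$ forces the two one-sided limits to agree, so both equal $0$.

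Next I would treat the statement $\mathcal{S}_{i^+}=\emptyset$, which I expect to be the main obstacle. The set $\mathcal{S}_{i^+}=\{\uch<u\leq u_F : \lim_{v\to+\infty} r(u,v)=0\}$ consists of outgoing cones strictly to the future of $\CH$ along which $r\to 0$ only as $v\to+\infty$ rather than at some finite $v_{\mathcal{S}}(u)$. The key point is that once we are in the trapped region with the quantitative lower bound $-r\rd_v r(u,v)\gtrsim v^{-2s}$ from \eqref{lambda.est.prelim}, integrating in $v$ forces $r^2$ to decrease by at least $\frac{c}{2s-1}(v_0^{1-2s}-v^{1-2s})$; but this is a \emph{bounded} decrease as $v\to+\infty$ because $s>\frac12$, so the pointwise decay rate alone does not immediately drive $r$ to zero at $+\infty$. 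The resolution must come from propagating the estimate in $u$: for $u$ slightly larger than $\uch$, $r_{CH}(u)$ would have to be $0$, but then the Raychaudhuri-type monotonicity \eqref{RaychV} together with the fact that $r_{CH}(\uch)=0$ and $\rd_u r$ is bounded away from $0$ near $\CH$ (so $r(u,v)\leq r(\uch,v)+ C(u-\uch)$, and actually $r(u,v)$ becomes comparable to $r(\uch,v) - c(u-\uch)<0$ for $v$ large) shows $r$ must hit $0$ at a \emph{finite} $v=v_{\mathcal{S}}(u)<+\infty$. I would make this precise by fixing $u\in(\uch,u_F]$, noting $r(u,v)\leq r(\uch,v) + \int_{\uch}^u |\rd_u r|(u',v)\,du'$ is not quite what is needed; instead use $r(u,v) = r(\uch,v) + \int_{\uch}^{u}\rd_u r(u',v)\,du'$ with $\rd_u r<0$, and bound $-\rd_u r$ from below using \eqref{RaychU} and the trapped-region control, so that for $v$ large enough $r(\uch,v)$ is tiny while $\int_{\uch}^u(-\rd_u r)\,du'$ is bounded below by a positive constant independent of $v$ — forcing $r(u,v)$ to reach $0$ at finite $v$, hence $u\notin\mathcal{S}_{i^+}$ and simultaneously establishing the existence of $v_{\mathcal{S}}(u)<+\infty$.

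Finally, the last claim — existence of $v_{\mathcal{S}}(u)<+\infty$ for all $\uch<u\leq u_F$ — is essentially the content of Proposition~\ref{localbreak.prop} once we verify its hypotheses, namely $[u_0,u_F)\times\{v_0\}\subset\T$. Here I would need to argue that the breakdown assumption $\uch<u_F$ together with \eqref{hyp1} (which places $[u_0,\uch]\times\{v_0\}\subset\T$) and the continuation of the trapped region (by \eqref{RaychV}, trappedness propagates along outgoing cones, but we need it to persist in $u$ too, which follows since crossing out of $\T$ would require an apparent horizon sphere, incompatible with $r_{CH}(u)=0$ for $u>\uch$ — if $\lambda$ became $\geq 0$ the limit $r_{CH}(u)$ would be positive). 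Alternatively, and more robustly, the finite-$v_{\mathcal{S}}(u)$ conclusion already drops out of the contradiction argument in the previous paragraph, so Proposition~\ref{localbreak.prop} serves mainly as a cross-check. I expect the delicate step to be the uniform-in-$v$ lower bound on $\int_{\uch}^u(-\rd_u r)(u',v)\,du'$, since near $\CH$ the factor $r$ in $r\rd_u r\approx -1$ is small, so $|\rd_u r|$ is in fact \emph{large}, which works in our favor; the care needed is to ensure these estimates, established a priori on $[u_0,\uch]$, extend by continuity to a neighborhood $[u_0,\uch+\epsilon)$ and that no competing boundary component (an $\mathcal{A}$ sphere, or premature $r=0$) obstructs the argument before $v$ is taken large.
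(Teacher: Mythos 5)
Your proposal is sound in outline, and for the central claim $\mathcal{S}_{i^+}=\emptyset$ it runs on essentially the same mechanism as the paper: a lower bound on $-r\rd_u r$ (equivalently on $-\rd_u(r^2)$), uniform in $v$ and valid for $u$ slightly past $\uch$, is incompatible with $r^2(\uch,v)\to 0$ and $r^2(u,v)>0$. The paper implements this by revisiting the bootstrap of Proposition~\ref{apriori.prop1} on the enlarged rectangle $[u_0,\usi]\times[v_0,+\infty)$ (legitimate precisely because the contradiction hypothesis $\mathcal{S}_{i^+}\neq\emptyset$ guarantees existence there), noting that $r_{CH}\equiv 0$ on $(\uch,\usi]$ forces $(r\rd_u r)_{|\CH}\equiv 0$ there, and contradicting this with the exponential convergence estimate \eqref{nu.est.prelim} after fixing a fresh $u$-gauge \eqref{gauge.new} normalized on $\{v=v_0\}$. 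The one technical point your sketch elides is exactly this gauge issue: the normalization \eqref{gauge.U.future} degenerates for $u>\uch$ (since $r_{CH}=0$ there), so the lower bound on $-r\rd_u r$ you need must be set up through a normalization on an ingoing cone rather than on $\CH$.

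For the first claim your route is genuinely different. You propose a continuation/openness argument (if $r_{CH}(\uch)>0$, re-run the bootstrap slightly past $\uch$ and contradict the supremum defining $\uch$), which can be made to work but requires extending the solution, and the estimates of Proposition~\ref{apriori.prop1}, to complete outgoing cones beyond $\uch$. The paper instead integrates $-\rd_u\rd_v(r^2)=\tfrac{\Omega^2}{2}(1-Q^2/r^2)$ over the thin strip $\{u\in[\uch-\ep,\uch+\ep],\ V\leq v\leq v_{\mathcal{B}}(u)\}$ and bounds the bulk by $O(\ep)$ using the trapped-region Raychaudhuri bound $\int\Omega^2\,dv\leq\frac{\Omega^2}{|\rd_v r|}r$; letting $\ep\to 0$ gives $\lim_{u\to\uch^-}r_{CH}(u)=0$ with no continuation past $\uch$ needed, using only the a priori existence of $v_{\mathcal{B}}(u)$ from the extension principle. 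Finally, for the existence of $v_{\mathcal{S}}(u)<+\infty$ for \emph{all} $u\in(\uch,u_F]$ (not just near $\uch$), do not route through Proposition~\ref{localbreak.prop}: its hypotheses ($[u_0,u_F)\times\{v_0\}\subset\T$ with $r(u,v_0)\to 0$ as $u\to u_F$) are not available in Proposition~\ref{apriori.prop2}. The correct finish is the one the paper uses: for each $u>\uch$ the extension principle provides $v_{\mathcal{B}}(u)\in\RR\cup\{+\infty\}$ with $r\to 0$, and $\mathcal{S}_{i^+}=\emptyset$ rules out $v_{\mathcal{B}}(u)=+\infty$.
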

	
	\begin{proof} Note that by Proposition~\ref{apriori.prop1}, $(\uch,v_0) \in \T$. Since $\uch< u_F$, and $\T$ is open, there exists $\epsilon>0$ such that $[\uch,\uch+\ep] \times \{v_0\} \subset \T$ and by the monotonicity of \eqref{RaychV} we deduce that  $([\uch,\uch+\ep] \times [v_0,+\infty))\cap \mathcal{Q}^+\subset \T$ (recalling from Section~\ref{preliminary} that  $\mathcal{Q}^+$ is the notation used for the Penrose diagram).
		
		Since $\uch< u_F$ by definition, then it means that for every $\uch < u< u_F$, there exists $v_{\mathcal{B}}(u) \in \RR \cup \{+\infty\}$ such that (as a straightforward consequence of the extension principle of  \cite{Kommemi}) $$ \lim_{v\rightarrow v_{\mathcal{B}}(u)} r(u,v)=0.$$ 
		
		\noindent Let $V< \underset{u\in[\uch,\uch+\ep]}{\inf}\ v_{\mathcal{B}}(u)$, a sufficiently large constant.	We then integrate $-\rd_u \rd_v (r^2)$ on $\{ u\in [\uch-\ep,\uch+\ep],\ V \leq v\leq v_{\mathcal{B}}(u) \} $ (adopting the convention that  $v_{\mathcal{B}}(u)=+\infty$ for $u\leq \uch$) using \eqref{Radius} and obtain, exploiting the monotonicity of \eqref{RaychV} and the fact that $ \{ u\in [\uch-\ep,\uch+\ep], V \leq v\leq v_{\mathcal{B}}(u) \} \subset \T $ (namely we use  $\int_{v_1}^{v_2} \Omega^2(u,v) dv \leq \frac{\Omega^2}{|\rd_v r|}(u,v_1) r(u,v_1)$):    \begin{equation*}
			r^2(\uch+\ep,V)- r^2(\uch-\ep,V) + r^2_{CH}(\uch-\ep) \leq \int_{\uch-\ep}^{\uch+\ep} \int_{V}^{v_{\mathcal{B}}(u)} \Omega^2(u,v) dv du  \lesssim \ep.\end{equation*}
		Taking $\ep \rightarrow 0$ gives (using the continuity of $u \rightarrow r^2(u,V)$  \blue{for fixed $V$})  $$ \lim_{u \rightarrow \uch,\ u< \uch} r_{CH}(u)=0. $$  Then, using \eqref{r.est.prelim} we obtain $$\lim_{v\rightarrow+\infty} r(\uch,v)=0.$$

		Let us prove that $\mathcal{S}_{i^+}=\{\uch < u \leq u_F, \underset{v\rightarrow+\infty}{\lim} r(u,v) = 0 \}:=(\uch,\usi] = \emptyset$ by contradiction. Assuming that $\mathcal{S}_{i^+}=\{\uch< u \leq u_F, \underset{v\rightarrow+\infty}{\lim} r(u,v) = 0 \} \neq \emptyset$, we revisit the proof of Proposition~\ref{apriori.prop1} to show \eqref{lambda.est.prelim}, \eqref{r.est.prelim}, \eqref{nu.est.prelim} are still  valid on the larger rectangle $ (u,v) \in [u_0, \usi]\times [v_0,+\infty) \subset \T$, and we have $r_{|\blue{\mathcal{S}_{i^+}}}(u)=0$ for all $\uch \leq u \leq \usi$, hence $[r\rd_u r]_{|\blue{\mathcal{S}_{i^+}}}(u)=0$ for all $\uch \leq u \leq \usi$ in a regular $u$-gauge. To choose the $u$-gauge for $\uch< u \leq \usi$, we first choose $v_0$ sufficiently large so that $$ |r\rd_u r|(\uch,v_0) \geq \frac{1}{2},$$ then choose the $u$-gauge to be for all  $\uch< u \leq \usi$  \begin{equation}\label{gauge.new}
			r\rd_u r(u,\blue{v_0})= 	r\rd_u r(\uch,\blue{v_0}).
		\end{equation}	 But, then we note that integrating \eqref{nu.est.prelim} on the ingoing cone $\{v=v_0\}$ under the gauge~\eqref{gauge.new} gives $$ \frac{u-\uch}{2}   \ls e^{1.9 K_- \blue{v_0}} ,$$\blue{where the implicit constants are independent of $v_0$: thus}, as $\blue{v_0}\rightarrow +\infty$, \blue{we get} $u-\uch=0$, which is obviously a contradiction.
	\end{proof}

	\subsection{Conditional applications to one/two-ended black holes}
	
	In this section, we provide the proof of  Theorem~\ref{main.thm.global.i} (one-ended case) and of Theorem~\ref{main.thm.2end.i} (two-ended case). We will also address  Theorem~\ref{inext.thm}.
	
	\subsubsection{Absence of an ingoing collapsed null cone $\mathcal{S}_{i^+}$}
	Consider a one-ended spherically symmetric spacetime in the sense of Section~\ref{preliminary} and under the assumptions of Theorem~\ref{main.thm.global.i}. Then $\mathcal{S}_{i^+} =\emptyset$ follows directly from Proposition~\ref{apriori.prop2}, which proves Statement~\ref{A} of Theorem~\ref{main.thm.global.i}. A similar result is obtained in the two-ended case (Statement~\ref{Atwo} of Theorem~\ref{main.thm.2end.i}).
	
	\subsubsection{New proof of the breakdown of the Cauchy horizon}\label{breakdown.reproof}
	
	Statement~\ref{B} of Theorem~\ref{main.thm.global.i}  follows immediately from Proposition~\ref{apriori.prop1}. Indeed, consider  a one-ended spherically symmetric spacetime in the sense of Section~\ref{preliminary} and denote $\uch$ the $u$-value of the future end-point of $\CH$, and $u_{\Gamma}$ the $u$-value of $b_{\Gamma}$ in the notations of Theorem~\ref{Kommemi.thm}. Since we are assuming \eqref{hyp1}, by Proposition~\ref{apriori.prop1}, there exists $v_0$ such that $\{u\}\times [v_0,+\infty) \subset \mathcal{T}\subset \mathcal{Q}$ for all $u\leq \uch$ so we must have $\uch< u_{\Gamma}$, and therefore  $\mathcal{S}^1_{\Gamma} \cup \mathcal{CH}_{\Gamma} \cup \mathcal{S}^2_{\Gamma} \cup  \mathcal{S}  \neq \emptyset$.

	\subsubsection{Non-emptiness of $\mathcal{S}$ and Kasner estimates}

	We turn to the proof of Statements~\ref{C.stat}--\ref{D.stat} of Theorem~\ref{main.thm.global.i}, for which we assume $\mathcal{CH}_{\Gamma} = \emptyset$. By Statement~\ref{A} of Theorem~\ref{main.thm.global.i}, $\mathcal{S}_{i^+}=\emptyset $. But $\mathcal{S} \cup \mathcal{S}_1^{\Gamma} \cup \mathcal{S}_2^{\Gamma} \neq \emptyset$ by Statement~\ref{B} of Theorem~\ref{main.thm.global.i} (breakdown of the Cauchy horizon), and there exists  $v_T\in \RR$ such that $[\uch-\epsilon,\uch) \times [v_T,+\infty)\subset \T\subset \mathcal{Q}$ and $C>0$, $p>0$ such that for all $(u,v) \in~ [\uch-~\epsilon,\uch) \times~ [v_T,+\infty)$  \begin{equation}\label{lower}
		r(u,v) \geq C v^{-p}>0.
	\end{equation}

	Next, we show by contradiction that $\mathcal{S}\neq \emptyset$. Assume not, i.e., that $\mathcal{S}= \emptyset$.  Since $\mathcal{S} \cup \mathcal{S}_1^{\Gamma} \cup \mathcal{S}_2^{\Gamma} \neq \emptyset$, then it must mean that $\mathcal{S}_1^{\Gamma} \cup \mathcal{S}_2^{\Gamma} \neq \emptyset$. Then the endpoint of $\mathcal{S}_1^{\Gamma}\cup \mathcal{S}_2^{\Gamma}$  intersects the endpoint of $\CH$, which contradicts  \eqref{lower}. Therefore,  $\mathcal{S}\neq \emptyset$.

	This means that the endpoint of $\mathcal{S}$ intersects the endpoint of $\CH$. Therefore, since $[\uch-\epsilon,\uch) \times [v_T,+\infty)\subset \T$, one can construct a $\Cin= [u_0,u_F) \times \{ v_T\}\subset \T$ such that $r(u,v_T) \rightarrow 0 $ as $u\rightarrow u_F$ and apply Statement~\ref{main.thm.III} of Theorem~\ref{main.thm}. This means that $\mathcal{S} \cap \{ u \leq \uch+\epsilon\}$  is spacelike for small enough $\epsilon>0$ and described by the Kasner asymptotics of Theorem~\ref{main.thm}. The proof of Statement~\ref{Btwo} of Theorem~\ref{main.thm.2end.i} is obtained analogously. 
	
	\subsubsection{Mass inflation}
	
	Finally, the proof of Statement~\ref{E.stat} of  Theorem~\ref{main.thm.global.i} follows from the additional  assumption that there exists $u_s\in \RR$ such that \eqref{hyp1}, \eqref{hyp2}, \eqref{hyp4}-\eqref{hyp3} hold on $\{u_0\} \times [v_0(u_0),+\infty)$ for all $\uH < u_0 \leq u_s$,  which immediately leads to mass inflation \blue{on} this interval, i.e., for all $\blue{\uH<}u\leq u_s$: $$ \lim_{v \rightarrow+\infty} \varpi(u,v)= \lim_{v \rightarrow+\infty} \mathfrak{m}(u,v) =+\infty.$$
	Then, by  the propagation of the Hawking mass blow-up to the future (see, e.g., \cite{breakdown}, Lemma 4.9), this extends to the whole Cauchy horizon $\CH$, i.e., for all $\uH<u\leq \uch$
	\begin{equation}\label{mass.inflation}
		\lim_{v \rightarrow+\infty} \varpi(u,v)= \lim_{v \rightarrow+\infty} \mathfrak{m}(u,v) =+\infty.
	\end{equation}

	\noindent The proof of Statement~\ref{Ctwo} of Theorem~\ref{main.thm.2end.i} is obtained analogously. This concludes both the proofs of   Theorem~\ref{main.thm.global.i} and Theorem~\ref{main.thm.2end.i}. 
	\subsubsection{Proof of Theorem~\ref{inext.thm}}
	
	We conclude this section with the proof of   Theorem~\ref{inext.thm}, i.e., the fact that the \blue{above spacetimes} are $C^2$-future-inextendible.  Note indeed  that the Kretschmann scalar $K= R_{\alpha \beta \gamma \delta} R^{\alpha \beta \gamma \delta}$ is infinite on $\mathcal{S}=\{r=0\}$, hence there can be no $C^2$-extension through $\mathcal{S}$ \cite{Kommemi,JonathanStab}. Furthermore,  by mass inflation \eqref{mass.inflation},  there can be no $C^2$-extension either through $\CH$ (see \cite{Moi4}), therefore the spacetime is $C^2$-future-inextendible. Thus,  Theorem~\ref{inext.thm} is proved.

	\section{Unconditional constructions of one-ended spacetimes}\label{section.global.uncond1}
	
	In this section, we address  the proof of the unconditional results in the one-ended case (Theorem~\ref{main.thm.global.ii}), building up on some of the soft arguments of Section~\ref{section.global.cond} while providing a new strategy  to construct examples of spacetimes satisfying  \eqref{hyp1}, \eqref{hyp2}, \eqref{hyp4}--\eqref{hyp3} in Theorem~\ref{main.thm}. This is, by far, the most involved section of the manuscript.

	In the one-ended case, it is not possible to study \eqref{1.1}--\eqref{5.1} in spherical symmetry with $q_0=0$  (except if $F\not \equiv 0$), which is the only case where  \eqref{hyp1}, \eqref{hyp2}, \eqref{hyp4}--\eqref{hyp3} are known to hold: this well-known phenomenon is due to the presence a regular center $\Gamma$, which is incompatible with a non-trivial $F$ in the $q_0=0$ case (see \cite{review}[Section 5] for a   discussion). Therefore, we have to study charged scalar fields, which have more complicated dynamics than uncharged ones (see, e.g., the discussion in \cite{review}). To circumvent this difficulty, we make use a gluing/scattering procedure in several steps to obtain an asymptotically flat solution of \eqref{1.1}--\eqref{5.1} with $q_0\neq 0$, for which we show that \eqref{hyp1}, \eqref{hyp2}, \eqref{hyp4}--\eqref{hyp3} are satisfied on an outgoing cone inside the black hole, which we summarize below. A more detailed outline of the construction is also  available in Section~\ref{outline.section}.
	
	\begin{itemize}
		
		\item  As  a first step in Section~\ref{uncharged.section}, we construct uncharged black hole solutions of  \eqref{1.1}--\eqref{5.1} with $q_0 \neq 0$, but $F\equiv 0$ and a real-valued scalar field $\phi$ (Theorem~\ref{uncharged.thm}). Near the center, the singularity is spacelike and modeled after a FLRW metric (Corollary~\ref{uncharged.cor}), while near $i^+$ (including part of the trapped region) and $\mathcal{I}^+$, the metric is exactly that of Schwarzschild spacetime.
		The initial data are regular, localized, posed on $\RR^3$, free of trapped surfaces and asymptotically flat, non-intersecting with the black hole region. The construction relies on a new spacelike-characteristic gluing strategy in the uncharged case (Theorem~\ref{uncharged.gluing.thm}).  This step also allows to complete the proof of Theorem~\ref{OS.thm.intro} in the uncharged case $(q=0)$.
		
		\item In Section~\ref{charging.section}, we want to ``charge'' the previously constructed spacetime (Theorem~\ref{charging.thm}) and replace the Schwarzschild event horizon by that of a Reissner--Nordstr\"{o}m black hole. To do this, we glue the  trapped Schwarzschild sphere  $(u,v) = (\blue{\uH+}\delta,v_{Q=0})$ to a Reissner--Nordstr\"{o}m trapped sphere $(u,v) = (\blue{\uH+\delta},v_0)$, \blue{where $v_0 > v_{Q=0}$}, invoking a new characteristic gluing \blue{result} inspired by the Kehle--Unger strategy \cite{KehleUnger}  (Theorem~\ref{charged.gluing.thm}). 
		
		As in the previous step, the singularity is FLRW spacelike near the center  (Corollary~\ref{charging.cor}), while near $i^+$ (including part of the trapped region and a Cauchy horizon $\CH$) and $\mathcal{I}^+$, the metric is exactly (sub-extremal) Reissner--Nordstr\"{o}m. \blue{At the end of this step, the proof of Theorem~\ref{OS.thm.intro} is achieved}.

		\item We then need to generalize the charged black hole construction to an arbitrary, dynamical event horizon that converges to  Reissner--Nordstr\"{o}m but is not exactly  Reissner--Nordstr\"{o}m  in  Section~\ref{right.section} (Theorem~\ref{EH.AF.thm} and Corollary~\ref{EH.AF.cor}). Contrary to the   exact  Reissner--Nordstr\"{o}m case, the construction of the asymptotically flat end is then non-trivial and involves quantitative estimates in the black hole exterior.

		First, we  choose our event horizon late-time behavior by picking $\phi_{|\mathcal{H}^+}$ to be a given  (smooth) function $\Phi_H(v)$  (in Eddington--Finkelstein $v$ from the gauge choice \eqref{gauge.EH.v}) such that  for some $s>\frac{3}{2}$ \begin{equation}\label{decay.s}
			|\Phi_H|(v),\  |\rd_v\Phi_H|(v) \ls v^{-s}.
		\end{equation} 
		We then solve for \eqref{1.1}--\eqref{5.1} with initial data on the bicharacteristic hypersurface (in the gauge \eqref{gauge.EH.U}) \begin{equation}\begin{split}
				& \Cin\cup [\mathcal{H}^+\cap [v_0,+\infty)],\\ & \Cin= [\blue{\uH},\blue{\uH}+\delta]\times \{v=v_0\},\\ & \mathcal{H}^+\cap [v_0,+\infty)=\{u=\blue{\uH}\}\times [v_0,+\infty).
			\end{split}
		\end{equation} where we want the ingoing data on $\Cin$ to be exactly Reissner--Nordstr\"{o}m. For compatibility at the intersection sphere $(u,v)=(\blue{\uH},v_0)$, we want to cut $\phi_{|\mathcal{H}^+}$ off: let $\chi_0$, a smooth, compactly supported function such that $\chi(x) =1$ for $0 \leq x\leq 1$, and $\chi(x)=0$ for $x\geq 2$: we impose the initial data  \begin{equation}\label{data.in}
			\phi(u,v_0) =0, \text{ for } \blue{\uH}\leq u\leq\blue{\uH}+\delta,
		\end{equation}
		\begin{equation}\label{data.EH.intro}
			\phi(\blue{\uH},v) = \Phi_H(v) \left[1-\chi(v-v_0)\right], \text{ for } v\geq v_0.
		\end{equation}
		Assuming $\delta$ small enough, Theorem~\ref{CH.thm.SS} allows to solve  \eqref{1.1}--\eqref{5.1} in the spacetime rectangle $(u,v)\in [\blue{\uH},\blue{\uH}+\delta] \times [v_0,+\infty)$ and shows the presence of a non-empty Cauchy horizon $\CH$. 
		
		We can then glue the resulting spacetime region to that of Theorem~\ref{uncharged.thm} and obtain a spacetime ``to the left of its event horizon'' (Proposition~\ref{EH.AF.1st.prop}).
		
		To complete the proof of Theorem~\ref{main.thm.global.ii}, we still need to construct the right-part of the Penrose diagram\blue{, i.e., the asymptotically flat end}. 	To construct the asymptotically flat end, we use a ``scattering'' argument in the black hole exterior starting from event horizon outgoing data and regular ingoing data and propagating from left to right, taking advantage of  spherical symmetry (Propositions~\ref{RS.prop}--\ref{nullinf.prop}, Corollary~\ref{nullinf.cor}, Lemma~\ref{i0.Delta.lemma} and Proposition~\ref{i0.prop}). This step is strongly inspired from the author's previous work \cite{Moi2,Soffer70}  and takes advantage of the smallness of the black hole final charge.

		\item In Section~\ref{scattering.section}, we turn to the ``middle-part of the Penrose diagram''; \blue{namely, the region in the black hole interior near the event horizon.} This spacetime region has already been constructed in the previous steps, but we seek precise information of the metric there (Theorem~\ref{nonlinearscat.thm} and Corollary~\ref{nonlinearscat.cor}). The goal is to	 show  that  the assumptions \eqref{hyp1}, \eqref{hyp2}, \eqref{hyp4}--\eqref{hyp3} of Theorem~\ref{main.thm} are satisfied inside the black hole. The proof relies on   estimates from \cite{Moi} (for \eqref{hyp1}, \eqref{hyp2}), and a refinement of the scattering estimates in \cite{MoiChristoph} for  \eqref{hyp4}--\eqref{hyp3}.
		
		\item   In Section~\ref{final.section},  we complete the construction by applying Theorem~\ref{main.thm} to obtain the coexistence of null and spacelike singularities in the interior of our one-ended black hole. In addition to the null boundary $\CH$ (Cauchy horizon), we note that the only remaining boundary component is $\mathcal{S}=\{r=0\}$ (in the terminology of Theorem~\ref{Kommemi.thm}). While $\mathcal{S}=\{r=0\}$ need not be entirely spacelike (it might have null segments), it has by construction two distinct spacelike subsets which are both tidally contractive: \begin{itemize}
			\item An isotropic FLRW-like spacelike singularity near the center 
			$\Gamma$, with  $(\frac{1}{3},\frac{1}{3},\frac{1}{3})$ Kasner exponents.

			\item A  spacelike singularity with positive Kasner exponents degenerating to $(1,0,0)$ as $v\rightarrow +\infty$, as a direct application of Theorem~\ref{main.thm}.
		\end{itemize}
	\end{itemize}  	
	
	\subsection{The detailed outline of the construction}\label{outline.section}
	
	Now we start  the proof of  Theorem~\ref{main.thm.global.ii}. We describe a first construction  of a spherically symmetric spacetime $(\mathcal{M},g,\phi,Q)$ solving \eqref{1.1}--\eqref{5.1}  satisfying the following properties: for any arbitrarily large $k\geq 2$: \begin{enumerate}
		\item\label{ST1} $(\mathcal{M},g,\phi,Q)$ is the MGHD of one-ended asymptotically flat $C^{k}$ spherically symmetric initial data on some spacelike  $\Sigma\approx \RR^3$, with no  trapped or anti-trapped surfaces. $b_{\Gamma} = (u_{\Gamma},v_{\Gamma})$ denotes the endpoint of  $\Gamma\neq \emptyset$.
		\item \label{ST2} 
		There exists a  $v_{Q=0}>v_{\Gamma}$ such that  $F\equiv 0$ on $\mathcal{M} \cap \{v \leq v_{Q=0}\}$ (Einstein-scalar-field solution). 
		\item \label{ST3} There exists  $v_{\Gamma}<\blue{v_{L}}<v_{Q=0}$ such that  $g$  is spatially-homogeneous on $\mathcal{M} \cap \{v \leq \blue{v_{L}}\}$ (FLRW metric). Moreover, denoting $\mathcal{B}$, the future boundary of $\mathcal{M}$, $\mathcal{B} \cap \{v_{\Gamma}\leq v \leq \blue{v_{L}}\}$ is a spacelike singularity $\blue{\mathcal{S}_L} \subset \mathcal{S}  \subset \{r=0\}$.

		\item \label{ST5} The following estimates are satisfied on the event horizon  $\{u=u_{\mathcal{H}^+},\  v \geq \blue{v_{\Gamma}(\uH)}\}$, in the  gauge choices \eqref{gauge.EH.v}, \eqref{A.gauge}, for some $s>\frac{3}{2}$ and for a constant $\omer \in \RR-\{0\}$ to be fixed later, depending on the black hole parameters and $\delta>0$\blue{: for $v$ sufficiently large} \begin{equation}\label{EH.bound2}\begin{split}
				&	| \phi|(	u_{\mathcal{H}^+},v) \approx v^{-s},\ |\rd_v( \phi e^{i q_0 \omer v})|(	u_{\mathcal{H}^+},v) \approx v^{-s-1},\ |\rd_v^{2}( \phi e^{i q_0 \omer v})|(	u_{\mathcal{H}^+},v)\ls v^{-s-2},\\ &  |\rd_v^{3}( \phi e^{i q_0 \omer v})|(	u_{\mathcal{H}^+},v)\ls v^{-s-3},\ |\Im( \phi e^{i q_0 \omer v})|(	u_{\mathcal{H}^+},v)\ls v^{-s-\delta}.\end{split}
		\end{equation} 
		
		\item \label{ST6} The event horizon is  transversely $C^k$-regular, namely \blue{for all $v\geq v_{\Gamma}(\uH)$,} there exist $\ep>0$, $D(v)>0$  such that for all $u\in [u_{\mathcal{H}^+}-\ep,u_{\mathcal{H}^+}+\ep]$, $1\leq i \leq k$: \begin{equation}\label{EH.boundU}
			| \phi|(	u,v),\  |D_u^{i} \phi|(	u,v) \lesssim  D(v),
		\end{equation} and moreover $\mathcal{H}^+$ is located in the strict causal future of $\Sigma$.
		
	\end{enumerate}

	From Theorem~\ref{CH.thm.SS}, we know that $\CH \neq \emptyset$ and that \eqref{hyp2} is satisfied. \eqref{hyp1} is partially satisfied (however, we have not yet proven a lower bound on $|\rd_v r|$ or $|\rd_u r|$). We will come back to arranging that \eqref{hyp1}, \eqref{hyp4}--\eqref{hyp3} are satisfied later using the scattering theory in the black hole interior.

	Once  \eqref{hyp1}, \eqref{hyp2}, \eqref{hyp4}--\eqref{hyp3}  are shown to hold, and after an application of  Theorem~\ref{main.thm.global.i}, the proof of Theorem~\ref{main.thm.global.ii}  reduces to the  construction of a spacetime $(\mathcal{M},g,\phi,Q)$ solving \eqref{1.1}--\eqref{5.1} with $q_0\neq 0$,  and  satisfying \ref{ST1}--\blue{\ref{ST6}}. To carry out this construction, we proceed as follows (see Figure~\ref{fig:final_bif}). \begin{enumerate}[A.]
		\item\label{A.} Start with a spatially-homogeneous  solution of \eqref{1.1}--\eqref{5.1} with $F\equiv 0$ (but $q_0\neq 0$), representing a FLRW metric with $\RR^3$ topology (Proposition~\ref{FRLW.prop}). Show that a spacelike singularity $\blue{\mathcal{S}_L}\subset\{r=0\}$  forms in finite time. Truncate this solution $\mathcal{M}$ to only keep  the region $\mathcal{M}\cap \{ v \leq \blue{v_L}\}$ for some $\blue{v_L}> v_{\Gamma}$.  Then, denoting $\mathcal{B}$ the terminal boundary of $\mathcal{M}$, $\mathcal{B}\cap \{ v \leq \blue{v_L}\}$ is a spacelike singularity $\blue{\mathcal{S}_L}$ and $\mathcal{M}\cap \{ v \leq \blue{v_L}\}$ contains a spacelike hypersurface $\blue{\Sigma_L}$ with  no trapped surface. This truncated solution contains a non-empty regular center $\Gamma$ with $\blue{\Sigma_L} \cap \Gamma \neq \emptyset$\blue{; we denote $\mathbb{S}_L= (u_L,v_L)= \blue{\Sigma_L} \cap \{v= v_L\} \in \R$}.
		
		\item\label{B.} We use the new uncharged spacelike-characteristic gluing result (Theorem~\ref{uncharged.gluing.thm}) to glue \blue{the} FLRW sphere 
		$\blue{\mathbb{S}_{L}}= (u_H,\blue{v_L})$ to  a Schwarzschild trapped sphere  $\mathbb{S}_S^{\T}=(u_S^{\T},v_S^{\T})$. The first step is to glue $\blue{\mathbb{S}_{L}}$ to	
		a sphere $\mathbb{S}_{\mathcal{A}}$ belonging to the apparent horizon in a spacelike-fashion. We then glue characteristically $\mathbb{S}_{\mathcal{A}}=(u_A,v_A)$ to a Schwarzschild trapped sphere  $\mathbb{S}_S^{\T}=(u_S^{\T},v_S^{\T})$. This construction can be completed to give an uncharged black hole with a Schwarzschild event horizon and a Schwarzschild asymptotically flat end (Theorem~\ref{uncharged.thm}/Corollary~\ref{uncharged.cor}). By Theorem~\ref{Kommemi.thm}, the terminal boundary only consists of a spacelike singularity $\mathcal{S}\subset \{r=0\}$.

		\item \label{C.}  Let $\teal{v_0} > v_S^{\T}$.  We will glue the Schwarzschild trapped sphere  $\mathbb{S}_S^{\T}=(u_S^{\T},v_S^{\T})$ to a Reissner--Nordstr\"{o}m trapped sphere  $\mathbb{S}_{RN}^{\T}=(u_{S}^{\T}=u_{RN}^{\T},\blue{v_{0}})$ via a new characteristic gluing result (Theorem~\ref{charged.gluing.thm})  inspired from \cite{KehleUnger}. As in \cite{KehleUnger}, this step imposes restrictions on the Reissner--Nordstr\"{o}m parameters \blue{$(M,e)$} (small charge \blue{$e$ in our case}). This construction can be completed to produce a charged black hole with sub-extremal Reissner--Nordstr\"{o}m event horizon, a Cauchy horizon $\CH$, a spacelike singularity $\mathcal{S}$ and a Reissner--Nordstr\"{o}m asymptotically flat end (Theorem~\ref{charging.thm}/Corollary~\ref{charging.cor}).

		\item \label{D.} We trivially glue the Reissner--Nordstr\"{o}m trapped sphere  $\mathbb{S}_{RN}^{\T}=(u_{RN}^{\T},v_{0})$ towards a Reissner--Nordstr\"{o}m regular sphere  $\blue{\mathbb{S}_{RN}^{\R}}=(u_{\mathcal{H}^+},v_{0})$ in the ingoing past direction\blue{, with $u_{\mathcal{H}^+}<u_{RN}^{\T}$}. Then, we impose event horizon   data as in  \eqref{data.EH.intro} satisfying \eqref{EH.bound2} on the outgoing cone $\mathcal{H}^+=\{u=u_{\mathcal{H}^+}\}\times [v_{0},+\infty)$, which will turn out to be the black hole's event horizon once the construction is completed.

		\item \label{E.} We pose regular ingoing data on  $[u_{\mathcal{H}^+}-\ep,u_{\mathcal{H}^+}]\times \{\blue{v_0}\}$  smoothly connecting to \eqref{data.in} at the sphere $(u,v) = (\blue{\uH},v_0)$, and we solve sideways (taking advantage of spherical symmetry) to obtain a solution in the spacetime rectangle  $[u_{\mathcal{H}^+}-\epsilon,u_{\mathcal{H}^+}]\times [v_0,+\infty) $ comprising a portion of null infinity $\mathcal{I}^+\cap \{ u_{\mathcal{H}^+}-\epsilon \leq u \leq u_{\mathcal{H}^+},\ v=+\infty\}$. For this step, we  use the smallness of $\ep$ and of the black hole charge $e$.
		
		\item \label{F.} Impose compactly supported (or more generally decaying at a rate $|u|^{-q}$ for $q\gg1$ as $u\rightarrow -\infty$) data on  $\mathcal{I}^+\cap \{ -\infty< u \leq u_{\mathcal{H}^+}-\ep\}$ and combined with the outgoing data on  $\{u=u_{\mathcal{H}^+}-\ep\} \times[v_0,+\infty)$, and solve backwards up to the asymptotically flat end $i^0$. We have  constructed the right part of the Penrose diagram.
		
		\item\label{G.} We invoke Theorem~\ref{CH.thm.SS} to show the existence of $\CH$, a null Cauchy horizon in the black hole interior and estimates in a spacetime rectangle  $[u_{\mathcal{H}^+},u_{\mathcal{H}^+}+\ep]\times [v_{s},+\infty) $, where $v_s$ is a large constant.
		
		\item \label{H.} We use a refinement of the nonlinear scattering arguments based on \cite{MoiChristoph} to prove that \eqref{hyp1}, \eqref{hyp2}, \eqref{hyp4}--\eqref{hyp3} are satisfied inside the black hole for profiles satisfying  \eqref{EH.bound2} (a much more refined assumption than \eqref{decay.s}).

		\item \label{I.} We note the presence of an ingoing trapped cone $\Cin\subset \{ v=v_s\}$ towards which $\{r=0\}$, which is part of the assumptions of Theorem~\ref{main.thm}.

	\end{enumerate}

	\noindent Carrying out these  Steps~\ref{A.}--\ref{I.} will conclude the proof as an application of  Theorem~\ref{main.thm.global.i}. 
	
	We note that Steps~\ref{B.}--\ref{G.} provides a general result (Theorem~\ref{EH.AF.thm}) allowing to glue any uncharged spacetime region (on the left) to an asymptotically flat  charged black hole converging to Reissner--Nordstr\"{o}m of small charge but   with  arbitrary event horizon late-time tails prescribed by \eqref{data.EH.intro} (modulo the mild decay assumption \eqref{decay.s}, which must be satisfied with $s>\frac{3}{2}$). It is precisely this freedom which allows to satisfy the assumptions of Theorem~\ref{main.thm}, after we specify a well-chosen profile $\Phi_H(v)$ from \eqref{data.EH.intro} such that \eqref{EH.bound2} is satisfied.

	\subsection{Construction of uncharged one-ended black hole spacetimes}\label{uncharged.section}
	
	In this section, we consider real-valued solutions of \eqref{1.1}--\eqref{5.1} with $q_0\neq 0$, but $F\equiv 0$, $A\equiv 0$ (Einstein-scalar-field model). We start with the following uncharged gluing result, allowing to glue any spherically symmetric spacetime to a Schwarzschild black hole interior. 
	We recall the definition of a first singularity from Definition~\ref{first.sing.def}. 
	
	\begin{thm}\label{uncharged.thm} Let $k \in \mathbb{N}$, $k\geq 2$ and  $(\mathcal{M}_L,g_L,\phi_L)$, a subset of the MGHD of \blue{$C^k$}
		spherically symmetric asymptotically flat initial data on a hypersurface $\Sigma_L$ with one end for \eqref{1.1}--\eqref{5.1}  containing no anti-trapped spheres  and no trapped spheres  and such that $b_{\Gamma}$ is a first singularity.

		Then, there exists  $C^k$ solutions $(\mathcal{M},g,\phi)$ of \eqref{1.1}--\eqref{5.1} with $F\equiv 0$ with the following properties: \begin{itemize}
			\item $(\mathcal{M},g,\phi)$ is the MGHD of smooth
			spherically symmetric asymptotically flat initial data on a spacelike hypersurface $\magenta{\Sigma_0}$ with one end for \eqref{1.1}--\eqref{5.1}  containing no anti-trapped spheres and no trapped spheres.
			\item The black hole region of $(\mathcal{M},g,\phi)$  is non-empty with an event horizon $\mathcal{H}^+$ and, moreover, $\mathcal{H}^+$ does not intersect $\magenta{\Sigma_0}$, i.e., it is located in the strict causal future of $\magenta{\Sigma_0}$.

			\item  There exists an incoming null  cone $\underline{C}_S$ and an outgoing null  $C_{S}$ intersecting at a trapped sphere $\mathbb{S}_S^{\T}$\blue{--which is  the future endpoint of  $\underline{C}_S$ and the past endpoint of $C_S$--}such that $\mathcal{M} \cap J^{+}(\underline{C}_{S}) \cap J^{-}(C_{S}) $ is isometric to a Schwarzschild metric with some mass $M>0$. In particular,   $\mathbb{S}_S^{\T}$ is a Schwarzschild trapped sphere, and $\mathcal{H}^+ \cap J^{+}(\underline{C}_{S})$ is coincides with a \blue{future} affine complete portion of  a Schwarzschild event horizon. 
			\item There exists an incoming null  cone $\underline{C}_{v_L}$  such that $\mathcal{M} \cap J^{-}(\underline{C}_{v_L})$ coincides with $\mathcal{M}_L \cap J^{-}(\underline{C}_{v_L})$.  Moreover, $\underline{C}_{v_L}$ can be chosen to be in the complement of the causal past of $b_{\Gamma}$.
		\end{itemize}
	\end{thm}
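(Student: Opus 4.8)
The plan is to realise $(\mathcal{M},g,\phi)$ as the maximal development of initial data on a hypersurface $\Sigma$ assembled from three blocks — a piece of $\mathcal{M}_L$ near the center, a spacelike \emph{transition hypersurface}, and a piece of an exact Schwarzschild exterior — where the transition is produced by combining a spacelike and a characteristic gluing of sphere data in the sense of Section~\ref{gluing.section}. Since $b_{\Gamma}$ is a first singularity, I would first fix an ingoing cone $\underline{C}_{v_H}=\{v=v_H\}\cap\mathcal{M}_L$ with $v_H$ slightly larger than $v_{\Gamma}$, so that $\underline{C}_{v_H}$ is entirely regular and disjoint from $J^-(b_{\Gamma})$, together with a regular sphere $\mathbb{S}_H\in\underline{C}_{v_H}\cap\mathcal{R}$, whose induced $C^k$ sphere data $D_H$ is uncharged (as $F\equiv0$ on $\mathcal{M}_L$) and has $\partial_v r>0$. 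The region $\mathcal{M}_L\cap J^-(\underline{C}_{v_H})$ — which contains $\Gamma$, the endpoint $b_{\Gamma}$ and the incipient singular boundary there, but \emph{not} $\mathcal{M}_L$'s asymptotically flat end — is kept verbatim, and everything to the future of $\underline{C}_{v_H}$ is rebuilt.

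The gluing proceeds in two steps, and establishes Theorem~\ref{uncharged.gluing.thm}. \emph{(Spacelike step.)} Using the spacelike constraint ODEs of Definition~\ref{spacelike.gluing.def} and Proposition~\ref{spacelike.gluing.thm}, I glue $D_H$ spatially, within the regular region, to an apparent-horizon sphere data $D_{\mathcal{A}}$ (i.e.\ $\partial_v r=0$) as a $C^k$ solution, through a spacelike hypersurface $\Sigma_{\mathcal{A}}$ with $\Sigma_{\mathcal{A}}\setminus\mathbb{S}_{\mathcal{A}}\subset\mathcal{R}$ and $\partial_u r<0$ throughout; here $D_{\mathcal{A}}$ is essentially free subject only to the obstruction-type inequality $1<[R_{\mathcal{A}}\,|\partial_v\phi|_{|\mathbb{S}_{\mathcal{A}}}]^2<1+\tfrac{1}{k-1}$, so the area-radius $R_{\mathcal{A}}$ and the field data at $\mathbb{S}_{\mathcal{A}}$ remain at my disposal. \emph{(Characteristic step.)} Keeping $D_{\mathcal{A}}$ compatible with that inequality, I further specialise it via Proposition~\ref{uncharged.null.gluing.prop} so that $D_{\mathcal{A}}$ can be characteristically glued — as a $C^k$ solution of the null constraints along the outgoing cone from $\mathbb{S}_{\mathcal{A}}$ — to a Schwarzschild sphere data $\mathbb{S}_S^{\T}$ of some mass $M>0$ that is only mildly trapped (near the bifurcation sphere). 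Beyond $\mathbb{S}_S^{\T}$ the data is \emph{exactly} Schwarzschild, so one attaches an exact Schwarzschild solution filling $J^+(\underline{C}_S)\cap J^-(C_S)$, which — precisely because $\mathbb{S}_S^{\T}$ is only mildly trapped — contains an affine-complete segment of a Schwarzschild event horizon and a Schwarzschild asymptotically flat end.

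Finally I would assemble and evolve. Define $\Sigma$ as the concatenation of (a) a Cauchy hypersurface of $\mathcal{M}_L\cap J^-(\underline{C}_{v_H})$ from $\Gamma$ to $\mathbb{S}_H$; (b) the transition $\Sigma_{\mathcal{A}}$ from $\mathbb{S}_H$ to $\mathbb{S}_{\mathcal{A}}$; and (c) a Cauchy hypersurface of the attached Schwarzschild region running from (a suitable sphere on the future boundary of the characteristic region near $\mathbb{S}_{\mathcal{A}}$) out to $i^0$. The matching conditions built into the two gluings make $\Sigma$ a $C^k$ (smooth away from the two seams), asymptotically flat, one-ended spacelike hypersurface with $\Sigma\approx\mathbb{R}^3$; it is free of trapped and anti-trapped spheres — pieces (a), (c) by construction, piece (b) because the spacelike gluing stays in $\mathcal{R}\cup\mathcal{A}$ with $\partial_u r<0$. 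Its maximal development $\mathcal{M}$ then coincides with $\mathcal{M}_L$ on $J^-(\underline{C}_{v_H})$, is $C^k$, and by finite-speed-of-propagation arguments contains both the characteristic region and the exact Schwarzschild region; hence $\mathcal{M}$ has a non-empty black hole region with event horizon $\mathcal{H}^+$ that agrees at late retarded times with a Schwarzschild event horizon, $\mathcal{H}^+\cap J^+(\underline{C}_S)$ being an affine-complete portion of it. One then checks $\mathcal{H}^+\cap\Sigma=\emptyset$, hence $\mathcal{H}^+\subset I^+(\Sigma)$: since $\mathcal{M}_L$ contains no trapped spheres, the trapped region of $\mathcal{M}$ is created strictly to the future of the gluing interface, and the late-time Schwarzschild structure pins down $\mathcal{H}^+$ strictly in the future of $\Sigma$. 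Theorem~\ref{Kommemi.thm} then identifies the remaining terminal boundary component as $\mathcal{S}=\{r=0\}$, completing all the listed properties (and, with $\mathcal{M}_L$ taken FLRW, the isotropic spacelike singularity near $\Gamma$ of Corollary~\ref{uncharged.cor}).

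I expect the main obstacle to be the characteristic step: characteristic gluing of two prescribed sphere data sets is obstructed in general, and the whole point is to spend the freedom in $D_{\mathcal{A}}$ inherited from the flexible spacelike step to land \emph{exactly} on Schwarzschild data, while simultaneously honouring the inequality $1<[R_{\mathcal{A}}|\partial_v\phi|]^2<1+\tfrac{1}{k-1}$ forced by the spacelike step — exactly the sort of compatibility that can fail, and the point at which the deep trapped-region regime (inaccessible to perturbative gluing) is genuinely exploited. The reason this final leg \emph{must} be characteristic rather than spacelike is that $\mathbb{S}_{\mathcal{A}}$ is a non-Schwarzschild ($\phi\neq0$) apparent horizon and a spacelike gluing to a Schwarzschild apparent horizon is impossible, precisely because it would violate the same inequality. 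A secondary, more routine but delicate, difficulty is the global bookkeeping for block (c): ensuring the exact Schwarzschild region is large enough to supply a genuine asymptotically flat end $i^0$ and an affine-complete event-horizon segment, and that the three blocks fit together into a single globally hyperbolic $C^k$ spacetime admitting $\Sigma$ as a Cauchy hypersurface.
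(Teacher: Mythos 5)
Your overall architecture coincides with the paper's: keep $\mathcal{M}_L$ below an ingoing cone $\underline{C}_{v_H}$, spatially glue a regular uncharged sphere to an apparent-horizon sphere (Proposition~\ref{spacelike.gluing.thm}), spend the resulting freedom in the transversal data at $\mathbb{S}_{\mathcal{A}}$ to characteristically glue onto a mildly trapped Schwarzschild sphere (Proposition~\ref{uncharged.null.gluing.prop}), attach the exact Schwarzschild region, and assemble a trapped-surface-free spacelike hypersurface by backward local evolution and Cauchy stability. You also correctly identify why the last leg must be characteristic rather than spacelike. Up to and including the third and fourth bullet points of the statement, this is essentially the paper's proof.

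The genuine gap is in your treatment of the second bullet, $\mathcal{H}^+\cap\Sigma=\emptyset$. Your argument — ``since $\mathcal{M}_L$ contains no trapped spheres, the trapped region of $\mathcal{M}$ is created strictly to the future of the gluing interface, and the late-time Schwarzschild structure pins down $\mathcal{H}^+$ strictly in the future of $\Sigma$'' — does not work: the event horizon is $\overline{J^-(\mathcal{I}^+)}\setminus J^-(\mathcal{I}^+)$, a teleological object, and its location is not controlled by the trapped region of $\Sigma$. A Cauchy slice can be entirely free of trapped and anti-trapped spheres and still intersect $\mathcal{H}^+$ (a Schwarzschild slice through the bifurcation sphere does exactly this), and for the hypersurface you build — which runs out to an exact Schwarzschild asymptotically flat end whose horizon extends arbitrarily far into the past in retarded time — the intersection is in fact what happens generically. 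The paper concedes precisely this (``In the present construction, however, $\mathcal{H}^+$ intersects $\Sigma$\dots'') and then runs a separate, non-trivial argument to produce a second hypersurface $\Sigma_0$ strictly below $\mathcal{H}^+$: it is here, and only here, that the hypothesis that $b_{\Gamma}$ is a \emph{first singularity} is really used, via $\lim_{p\to b_{\Gamma}} r(p)=0$ from \eqref{r.bGamma.0}. One replaces $\Sigma_L$ by a slice $\Sigma_L'$ close to the terminal boundary of $\mathcal{M}_L$ so that the gluing sphere $\mathbb{S}_R'$ has arbitrarily small area-radius and is arbitrarily close to $\mathcal{H}^+$ in the ingoing direction; the smallness of $r$ on the resulting ball $\mathcal{B}_{reg}$ guarantees backward existence on its whole past domain of influence and that the ingoing cone from its boundary sphere stays above a past slice $\Sigma'_{-\ep}$, which is what lets $\Sigma_0$ be threaded from $\Gamma$ under the event horizon. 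Your proposal uses the first-singularity hypothesis only to position $\underline{C}_{v_H}$, so this mechanism — and with it the second bullet of the theorem — is missing.
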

	
	\begin{rmk}\label{mass.prescribed.rmk}
		In Theorem~\ref{uncharged.thm}, as stated, we are not allowed to fix the final black hole mass $M>0$ (although the proof reveals a finite permissible range). This is only due to the construction of an initial hypersurface $\magenta{\Sigma_0}$ that does not intersect the event horizon. Should this condition be relaxed to  merely obtaining an initial hypersurface $\magenta{\Sigma_0}$ free of trapped or anti-trapped surfaces, one can then choose any mass above a minimal value determined by  $(\mathcal{M}_L,g_L,\phi_L)$, as the proof of Theorem~\ref{uncharged.gluing.thm} shows.
	\end{rmk} \blue{Theorem~\ref{uncharged.thm} will be proven in Section~\ref{global.uncharged.subsec}.}
	Next, we apply Theorem~\ref{uncharged.thm} to the case where $(\mathcal{M}_L,g_L,\phi_L)$ is a FLRW spacetime near the center $\Gamma$ (Corollary~\ref{uncharged.cor}). This results in the construction of scalar field analogues of the celebrated Oppenheimer--Snyder solution corresponding to Theorem~\ref{OS.thm.intro} in the case $q=0$.
	
	The uncharged gluing procedure in Theorem~\ref{uncharged.thm}, and its application to the construction of a class of spacetimes in Corollary~\ref{uncharged.cor}    are depicted in Figure~\ref{fig:unchargedgluing}.
	
	\begin{cor}\label{uncharged.cor} Let $k \in \mathbb{N}$, $k\geq 2$.
		There exists $C^k$  solutions $(\mathcal{M},g,\phi)$ of \eqref{1.1}--\eqref{5.1} with $F\equiv 0$ with the following properties: \begin{itemize}
			\item $(\mathcal{M},g,\phi)$ is the MGHD of smooth
			spherically symmetric asymptotically flat initial data on a spacelike hypersurface $\magenta{\Sigma_0}$ with one end for \eqref{1.1}--\eqref{5.1}  containing no anti-trapped spheres  and no trapped spheres.
			
			\item The black hole region of $(\mathcal{M},g,\phi)$  is non-empty with an event horizon $\mathcal{H}^+$ and, moreover, $\mathcal{H}^+$ does not intersect $\magenta{\Sigma_0}$, i.e., it is located in the strict causal future of $\magenta{\Sigma_0}$.
			
			\item The MGHD terminal boundary of   $(\mathcal{M},g,\phi)$ is \begin{equation}
				\mathcal{S}=\{r=0\},
			\end{equation} a spacelike singularity.
			
			\item  There exists an incoming null  cone $\underline{C}_S$ and an outgoing null  $C_{S}^{\T}$ intersecting at a trapped sphere $\blue{\mathbb{S}_S^{\T}}$\blue{--which is  the future endpoint of  $\underline{C}_S$ and the past endpoint of $C_S$--}such that $\mathcal{M} \cap J^{+}(\underline{C}_{S}) \cap J^{-}(C_{S}) $ is isometric to a Schwarzschild metric with some mass $M>0$. In particular,   $\mathbb{S}_S^{\T}$ is a Schwarzschild trapped sphere and $\mathcal{H}^+ \cap J^{+}(\underline{C}_{S})$  coincides with a future affine complete portion of  a Schwarzschild event horizon. 
			
			\item There exists an incoming null  cone $\underline{C}_{v_L}$  in the complement of the causal past of $b_{\Gamma}$ such that $\mathcal{M} \cap J^{-}(\underline{C}_{v_L})$ is spatially homogeneous. Moreover, $\mathcal{S}_L:=  \mathcal{S}\cap J^{-}(\underline{C}_{v_L})$ is spacelike and coincides with the \blue{terminal} singularity \blue{$\{T=T_S\}$} of a FLRW metric with $\RR^3$ topology.
			
		\end{itemize}
	\end{cor}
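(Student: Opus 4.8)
The plan is to obtain Corollary~\ref{uncharged.cor} as an application of Theorem~\ref{uncharged.thm} with the left-hand spacetime $(\mathcal{M}_L,g_L,\phi_L)$ taken to be an FLRW spacetime, followed by a reading-off of the terminal boundary from Theorem~\ref{Kommemi.thm}. First I would invoke Proposition~\ref{FRLW.prop} to fix a real-valued, spatially homogeneous solution of \eqref{1.1}--\eqref{5.1} with $F\equiv 0$ and $A\equiv 0$ (so the value of $q_0$ is irrelevant, as $Q\equiv 0$) of the form \eqref{FLRW.intro}, with $\mathbb{R}^3$ spatial topology, a regular centre $\Gamma$, and developing a crushing spacelike singularity $\mathcal{S}_H\subset\{r=0\}$ in finite time $t=T_S$; in particular $b_\Gamma$ is a first singularity in the sense of Definition~\ref{first.sing.def}. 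I would then restrict to a neighbourhood of $\Gamma$: picking an ingoing cone $\underline{C}_{v_H}=\{v=v_H\}$ with $v_H>v_\Gamma$, the region $J^{-}(\underline{C}_{v_H})$ of this spacetime is a wedge containing $\Gamma$ and $b_\Gamma$, bounded to the future by $\mathcal{S}_H$, lying outside the causal past of $b_\Gamma$, and entirely contained in the regular region: in the contracting FLRW phase one has $\partial_v r>0$ and $\partial_u r<0$ at every area-radius below $1/|\dot a|$, which is automatic near the centre. This wedge, call it $\mathcal{M}_L$, satisfies the hypotheses needed for the gluing; the asymptotic-flatness requirement stated for $\mathcal{M}_L$ in Theorem~\ref{uncharged.thm} is not actually used in its proof (cf.\ Remark~\ref{mass.prescribed.rmk}, where the underlying Theorem~\ref{uncharged.gluing.thm} is invoked directly), the asymptotic flatness being instead supplied to the \emph{output} hypersurface $\Sigma$ by the Schwarzschild end it is glued into.

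Applying Theorem~\ref{uncharged.thm} with this $\mathcal{M}_L$ and the given $k\geq 2$ then yields a $C^k$ solution $(\mathcal{M},g,\phi)$ of \eqref{1.1}--\eqref{5.1} with $F\equiv 0$ that is the MGHD of smooth, spherically symmetric, one-ended asymptotically flat initial data on a spacelike $\Sigma$ with no trapped or anti-trapped spheres; whose black hole region is non-empty, with an event horizon $\mathcal{H}^+$ strictly to the future of $\Sigma$; which contains an ingoing cone $\underline{C}_S$ and an outgoing cone $C_S$ meeting at a trapped sphere $\mathbb{S}_S^{\T}$ with $\mathcal{M}\cap J^{+}(\underline{C}_S)\cap J^{-}(C_S)$ isometric to a Schwarzschild black hole of some mass $M>0$ and $\mathcal{H}^+\cap J^{+}(\underline{C}_S)$ a future affine-complete Schwarzschild event horizon; and which contains the ingoing cone $\underline{C}_{v_H}$, outside the causal past of $b_\Gamma$, with $\mathcal{M}\cap J^{-}(\underline{C}_{v_H})$ isometric to the FLRW wedge above, hence spatially homogeneous, so that $\mathcal{S}_H:=\mathcal{S}\cap J^{-}(\underline{C}_{v_H})$ is the spacelike FLRW singularity of $\mathbb{R}^3$ topology. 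Every assertion of Corollary~\ref{uncharged.cor} except the precise description of the terminal boundary is then immediate.

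It remains to show that the terminal boundary is exactly the spacelike singularity $\mathcal{S}=\{r=0\}$. By Theorem~\ref{Kommemi.thm} the future boundary is contained in $b_\Gamma\cup\mathcal{S}^1_\Gamma\cup\mathcal{CH}_\Gamma\cup\mathcal{S}^2_\Gamma\cup\mathcal{S}\cup\mathcal{S}_{i^+}\cup\CH\cup i^+\cup\mathcal{I}^+\cup i^0$, so I need to exclude $\mathcal{S}^1_\Gamma,\mathcal{CH}_\Gamma,\mathcal{S}^2_\Gamma,\mathcal{S}_{i^+},\CH$. Since $\mathcal{M}$ contains an isometric copy of a Schwarzschild black hole interior abutting timelike infinity, with future affine-complete horizon, the terminal boundary emanating from $i^+$ coincides with Schwarzschild's, namely the spacelike $\{r=0\}$; as $r\to 0$ along every future causal curve there, no null segment of positive area-radius emanates from $i^+$, whence $\CH=\emptyset$ and then $\mathcal{S}_{i^+}=\emptyset$. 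Similarly, on $J^{-}(\underline{C}_{v_H})$, which contains the whole range $v\in(v_\Gamma,v_H)$ of a punctured neighbourhood of $b_\Gamma$, the spacetime is exactly FLRW, whose singularity is spacelike up to $b_\Gamma$ with no adjacent outgoing null segment; but $\mathcal{S}^1_\Gamma,\mathcal{CH}_\Gamma,\mathcal{S}^2_\Gamma$ are, by Theorem~\ref{Kommemi.thm}, connected half-open outgoing null segments emanating from $b_\Gamma$, so each would have to meet that range, contradicting the FLRW structure; hence $\mathcal{S}^1_\Gamma=\mathcal{CH}_\Gamma=\mathcal{S}^2_\Gamma=\emptyset$. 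The only component left is $\mathcal{S}$, which is non-empty because $\mathcal{S}\supseteq\mathcal{S}_H\neq\emptyset$, so the terminal boundary is $\mathcal{S}=\{r=0\}$; it is spacelike near $i^+$ (Schwarzschild) and near $b_\Gamma$ (FLRW), and globally spacelike once the null components above are ruled out, given the achronality of $\mathcal{S}$ in Theorem~\ref{Kommemi.thm} and the standard structure of $\{r=0\}$-singularities for spherically symmetric Einstein-scalar-field solutions. This finishes the proof, and in particular establishes Theorem~\ref{OS.thm.intro} in the case $q=0$.

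The substantive work is all inside Theorem~\ref{uncharged.thm} — the spacelike--characteristic gluing — which is assumed here. Within the corollary the only points demanding care are the elementary verification in the first step that a neighbourhood of the centre of a contracting FLRW spacetime is free of trapped and anti-trapped spheres and that $b_\Gamma$ is a first singularity, and the bookkeeping of the last step, where the exact Schwarzschild structure near $i^+$ and the exact FLRW structure near $b_\Gamma$ are used to exclude every a priori admissible null or collapsed terminal-boundary component of Theorem~\ref{Kommemi.thm}. I would expect the latter — making sure no exotic boundary piece sneaks in between the FLRW and Schwarzschild regions — to be the part one has to be most careful about.
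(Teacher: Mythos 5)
Your route is the paper's route: feed an FLRW spacetime (Proposition~\ref{FRLW.prop}) into Theorem~\ref{uncharged.thm} and then read off the terminal boundary from Theorem~\ref{Kommemi.thm} using the exact Schwarzschild structure near $i^+$ and the exact FLRW structure near $b_\Gamma$. The boundary-exclusion step at the end is fine (the paper additionally invokes Christodoulou's $F\equiv 0$ monotonicity to shrink the a priori list to $\mathcal{CH}_\Gamma$ and $\mathcal{CH}_{i^+}$ before excluding those two, but your component-by-component exclusion reaches the same conclusion).

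The one place you cut a corner is the application of Theorem~\ref{uncharged.thm} itself. You apply it with $\mathcal{M}_L$ equal to a bare FLRW wedge, which is not a subset of the MGHD of \emph{asymptotically flat} one-ended data, and you justify this by asserting that the hypothesis ``is not actually used,'' citing Remark~\ref{mass.prescribed.rmk}. That remark is about prescribing the final mass, not about dispensing with asymptotic flatness of $\mathcal{M}_L$, so the citation does not support the claim; and as a matter of method one should not invoke a theorem outside its stated hypotheses without reworking its proof. The paper avoids this entirely with a short explicit step: it takes a spacelike hypersurface $\Sigma_L\subset\mathcal{R}\cap\{t\geq T_S-\ep\}$ through the FLRW centre (using \eqref{FRLW.A.est} to see the apparent horizon stays away from $\underline{C}_\Gamma$ near $T_S$) and then \emph{extends} $\Sigma_L$ to an asymptotically flat hypersurface by solving the constraint ODEs \eqref{r.constraint}--\eqref{varpi.constraint} with $\lambda>0$, $\nu<0$ and compactly supported $\theta,\xi$; the resulting MGHD is the $\mathcal{M}_L$ to which Theorem~\ref{uncharged.thm} is applied verbatim, with $b_\Gamma$ a first singularity. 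You should do the same (or explicitly re-run the proof of Theorem~\ref{uncharged.thm} to confirm only the region inside $\mathbb{S}_R$ is used). Separately, your claim that the wedge $J^{-}(\underline{C}_{v_H})$ is ``entirely contained in the regular region'' is false: since $\rho_{\mathcal{A}}(t)\to 0$ as $t\to T_S$ while $v_H>v_\Gamma$, the apparent horizon enters any such wedge near the singularity. What is actually needed (and true) is only that the chosen \emph{initial hypersurface} is free of trapped and anti-trapped spheres.
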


	\subsubsection{FLRW spacetimes}
	In this section, we consider so-called collapsing FLRW metrics $g$ on $ [0,T_{S}) \times \RR^3$  of the following form  \begin{equation}\label{FRLW0}
		g= -dt^2 + a^2(t) \left( d\rho^2 + \rho^2 d\sigma_{\mathbb{S}^2}\right).
	\end{equation} 
	\begin{rmk}
		Note that \eqref{FRLW0} is not asymptotically flat (since, precisely, it is spatially-homogeneous with non-trivial $a(t)$). However, it will later be truncated to obtain an asymptotically flat end, while retaining the region near the center $\Gamma$, notably a subset of its spacelike singularity $\{t=T_S\} \times \RR^3$.
	\end{rmk}
	We look for solutions of \eqref{1.1}--\eqref{5.1}  with $F\equiv 0$, which translates into the following system of ODEs  \begin{equation}\label{FRLW1}
		\ddot{a}(t) + 3 (\dot{\phi})^2 a(t)=0,
	\end{equation}\begin{equation} \label{FRLW2}
		a^3(t) \dot{\phi}(t) = a_0^3 \dot{\phi_0},
	\end{equation}where $\dot{f}$ denotes $\frac{df(t)}{dt}$ and $f_0$ denotes $f(t=0)$.
	
	Viewing  $(g,\phi)$ as a spherically symmetric solution of \eqref{1.1}--\eqref{5.1}, we can define its area-radius function $r(t,\rho) = a(t) \rho$, and $\mathcal{A}$ the apparent horizon, $\mathcal{R}$ the regular region and $\mathcal{T}$ the trapped region as in Section~\ref{preliminary}.
	\begin{prop}\label{FRLW.prop}
		Let $(g,\phi)$ be a solution of  \eqref{1.1}--\eqref{5.1}  with $F\equiv 0$ on $ [0,T_{S}) \times \RR^3$   of the form \eqref{FRLW0} with initial data \begin{equation}\begin{split}
				a(t=0)= a_0>0,\ \dot{a}(t=0)<0,\  \dot{\phi_0} \neq 0.
			\end{split}
		\end{equation} Then, $0<T_S<\infty$, and $\mathcal{S}=\{T=T_S\}$ can be attached as a future spacetime boundary, a spacelike singularity at which $r=0$. Moreover, $(g,\phi)$ takes the spherically symmetric form \eqref{gdef}, where \begin{equation}\begin{split}
				&u(t,\rho) = A(t) - \rho,\ v(t,\rho) = A(t) + \rho,\ A(t) = \int_0^{t} a^{-1}(t') dt',\\ & \rd_u r = \nu_H(t,\rho):=\frac{a(t)}{2} [\dot{a}(t)\rho-1],\ \rd_v r = \lambda_H(t,\rho):=\frac{a(t)}{2} [\dot{a}(t)\rho+1],\\ & \Omega^2=\Omega^2_H(t,\rho):= 4a^2(t),\\ & r(t,\rho)= a(t)\rho. \end{split}
		\end{equation}  
		
		Finally, the following estimates hold  as $t\rightarrow T_S$, defining $a_{S}=  a_0  \cdot \left[\frac{3\sqrt{3}}{\sqrt{2}} |\dot{\phi}_0| \right]^{\frac{1}{3}}$,\ $\phi_S= \frac{-sign(\dot{\phi}_0)\sqrt{2}}{3\sqrt{3}}$: \begin{equation}\label{FRLW.est}
			a(t) \sim a_{S} \cdot  (T_S-t)^{\frac{1}{3}},\ \dot{\phi}(t) \sim  \frac{-\phi_S}{ T_S-t},\  \phi(t) \sim  \phi_{S}\ln(T_S-t).
		\end{equation}
		Denoting $\rho_{\mathcal{A}}(t)>0$ such that  $(\rho_{\mathcal{A}}(t),t) \in \mathcal{A}$, and $\rho_{v_0}(t)$ such that $v(t,\rho_{v_0}(t))=v_0$ also gives the following asymptotics as $t\rightarrow T_S$: \begin{equation}\label{FRLW.A.est}\begin{split}
				& 	\rho_{\mathcal{A}}(t) \sim   \frac{3}{a_S} (T_S-t)^{\frac{2}{3}} ,\\ &  	\rho_{v_0}(t)-v_0 \sim \frac{3}{2a_S} (T_S-t)^{\frac{2}{3}}. \end{split}.\end{equation}
	\end{prop}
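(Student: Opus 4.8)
The plan is to reduce \eqref{1.1}--\eqref{5.1} with \(F\equiv0\) to an explicitly solvable ODE system, integrate it, and then re-express the resulting spatially homogeneous metric in the gauge \eqref{gdef}. \textbf{Reduction.} First I would record that a metric \eqref{FRLW0} carried by a purely \(t\)-dependent real scalar field solves \eqref{1.1}--\eqref{5.1} with \(F\equiv0\) precisely when \(a\) and \(\phi\) satisfy \eqref{FRLW1}, \eqref{FRLW2} and the Hamiltonian constraint of \eqref{1.1}; as the slices are the flat \(\RR^3\), this constraint selects the member with vanishing energy constant among the first integrals of \eqref{FRLW1}. Writing \(C:=a_0^3\dot\phi_0\neq0\), \eqref{FRLW2} gives \(\dot\phi=C/a^3\), and multiplying \eqref{FRLW1} by \(\dot a\) and integrating yields \(\tfrac12\dot a^2=\tfrac{3C^2}{4a^4}+E\); the constraint forces \(E=0\), so with \(\dot a(0)<0\) one is left with the separable equation \(\dot a=-\sqrt{3/2}\,|C|\,a^{-2}\), \(a(0)=a_0\). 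A line of checking is needed to see that \(E=0\) is propagated by \eqref{FRLW1}--\eqref{FRLW2}, so that the solution below genuinely solves the full system.

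\textbf{Integration, asymptotics, and the singularity.} Integrating gives \(a^3(t)=a_0^3-3\sqrt{3/2}\,|C|\,t\), strictly decreasing and positive on \([0,T_S)\) and vanishing exactly at \(T_S=(3\sqrt{3/2}\,|\dot\phi_0|)^{-1}\in(0,\infty)\); thus the maximal interval is \([0,T_S)\) and \(a(t)=a_S(T_S-t)^{1/3}\) with \(a_S\) as in the statement. Then \(\dot\phi=C/a^3=-\phi_S/(T_S-t)\), and one integration gives \(\phi(t)=\mathrm{const}+\phi_S\ln(T_S-t)\); this yields \eqref{FRLW.est} (the first two relations are exact, the third up to the additive constant, which is dominated by \(\ln(T_S-t)\to-\infty\)). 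For fixed \(\rho\), \(r(t,\rho)=a(t)\rho\to0\) as \(t\to T_S^-\), while the comoving energy density \(\propto\dot\phi^2\sim(T_S-t)^{-2}\) (equivalently the Kretschmann scalar) diverges; since \(\nabla t\) is everywhere timelike the slices \(\{t=\mathrm{const}\}\) are spacelike, so \(\mathcal S=\{t=T_S\}\) attaches as a spacelike curvature singularity on which \(r\equiv0\).

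\textbf{Double-null form and apparent horizon.} Since \(a^{-1}(t)\sim a_S^{-1}(T_S-t)^{-1/3}\) is integrable near \(T_S\), the function \(A(t)=\int_0^t a^{-1}\) extends continuously to \([0,T_S]\) — this is what lets the null coordinates reach \(\mathcal S\). The change of variables \(u=A(t)-\rho\), \(v=A(t)+\rho\) puts \(g\) in the conformally flat double-null form \eqref{gdef} with \(r=a\rho\) and conformal factor a constant multiple of \(a^2\); the chain rule (using \(\partial_u t=\partial_v t=a/2\), \(\partial_\rho r=a\)) then gives \(\partial_u r=\nu_H\), \(\partial_v r=\lambda_H\) as in the statement, and one verifies that \((\Omega_H^2,r,\phi,Q\equiv0)\) solves \eqref{Omega}--\eqref{RaychV} with the help of \eqref{FRLW1}, \eqref{FRLW2} and \(E=0\). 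Finally, \(\nu_H<0\) everywhere, so a sphere is trapped iff \(\lambda_H<0\) and the apparent horizon is \(\{\rho=\rho_{\mathcal A}(t)\}\) with \(\rho_{\mathcal A}(t)=1/|\dot a(t)|\); inserting \(|\dot a(t)|=\sqrt{3/2}\,|C|\,a_S^{-2}(T_S-t)^{-2/3}\) and \(a_S^3=3\sqrt{3/2}\,|C|\) gives \(\rho_{\mathcal A}(t)=\tfrac{3}{a_S}(T_S-t)^{2/3}\), while the cone \(\{v=v_0\}\) is \(\{\rho=v_0-A(t)\}\) and \(A(T_S)-A(t)=\int_t^{T_S}a^{-1}\sim\tfrac{3}{2a_S}(T_S-t)^{2/3}\), giving the remaining assertion in \eqref{FRLW.A.est} (after the natural normalisation of the base point of \(A\), i.e.\ of the \(v\)-gauge). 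I do not expect a genuine obstacle: everything is elementary, and the only mildly delicate points are the propagation of the constraint in the first step and the gauge bookkeeping matching the homogeneous chart to \eqref{gdef}.
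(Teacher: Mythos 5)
Your proof is correct, and it takes a genuinely different route from the paper's. The paper never invokes the Hamiltonian constraint: it treats \eqref{FRLW1}--\eqref{FRLW2} as the full system for arbitrary data $(a_0,\dot a_0,\dot\phi_0)$, recasts it as the second-order equation \eqref{FRLW7} for $c=a^3/a_0^3$, and integrates the phase-plane equation for $(c,\dot c)$ by separation of variables to obtain the closed form \eqref{ODE.solved}, from which $\dot c(t)\to-\tfrac{3\sqrt3}{\sqrt2}|\dot\phi_0|$ and hence the asymptotics \eqref{FRLW.est}, \eqref{FRLW.A.est} follow. You instead impose the $tt$-Einstein equation, which (in the normalization of \eqref{FRLW1}) is exactly your condition $E=0$ on the first integral $\tfrac12\dot a^2=\tfrac{3C^2}{4a^4}+E$, and this collapses everything to $\dot a=-\sqrt{3/2}\,|C|\,a^{-2}$, i.e.\ $a^3$ exactly linear in $t$, so that all the stated asymptotics become exact identities. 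The two arguments are complementary in an instructive way: the constraint surface $\dot c(0)^2=\tfrac{27}{2}\dot\phi_0^2$ is precisely where the denominator in \eqref{ODE.solved} vanishes, i.e.\ the paper's closed-form solution is the generic (off-constraint) branch while yours is the degenerate branch (where one reads off $c\ddot c=0$, hence $\dot c$ constant, directly from \eqref{FRLW7}); since the proposition hypothesizes an actual solution of \eqref{1.1}--\eqref{5.1}, your branch is the one that is literally relevant, whereas the paper's computation shows the same leading behaviour is robust for the unconstrained ODE system. Your remaining steps — integrability of $a^{-1}$ near $T_S$ so that $A$ extends to $[0,T_S]$, the chain-rule computation of $\nu_H,\lambda_H,\Omega_H^2$, the location $\rho_{\mathcal A}=1/|\dot a|$ of the apparent horizon, and the observation that the last line of \eqref{FRLW.A.est} presupposes normalizing the base point of $A$ at $T_S$ — all match the paper's (the paper is silent on the normalization point, which you correctly flag). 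The one phrase I would tighten is ``$E=0$ is propagated by \eqref{FRLW1}--\eqref{FRLW2}'': $E$ is a first integral by construction, so it is automatically constant; what actually requires the one line of checking is that the $tt$-constraint is equivalent to $E=0$ and that, together with \eqref{FRLW1}--\eqref{FRLW2}, it implies the remaining components of \eqref{1.1}.
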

	\begin{proof}

		We introduce 
		$A(t) = \int_0^{t} a^{-1}(t') dt'$ and the null coordinates \begin{equation*}
			u(t,\rho) = A(t)-\rho,\ v(t,\rho) = A(t)+\rho
		\end{equation*} noting that \begin{equation*}
			du = a^{-1}(t) dt - d\rho,\   dv = a^{-1}(t) dt + d\rho,\ \rd_u = \frac{a(t)}{2} \rd_t - \frac{1}{2} \rd_{\rho},\ \rd_v = \frac{a(t)}{2} \rd_t + \frac{1}{2} \rd_{\rho}.
		\end{equation*} and one can write the metric as \begin{equation*}
			g = a^2(t) \left(- du dv + \frac{[v-u]^2}{4}  d\sigma_{\mathbb{S}^2}\right)
		\end{equation*} so that the null lapse is $$\Omega^2= 4a^2(t)>0$$ and the area radius is $$ r=\frac{ a(t)  [v-u]}{2} = a(t) \rho$$
		
		\noindent	Note that \begin{equation*}
			\rd_u  r = \frac{a(t)}{2} [ \dot{a}(t) \rho-1]<0,\  	\rd_v  r = \frac{a(t)}{2} [ \dot{a}(t) \rho+1]
		\end{equation*}
		
		\noindent	So based on this, the apparent horizon $\mathcal{A}=\{\rd_v r= 0\}$ is necessarily spacelike, with \begin{equation*}
			\mathcal{A}=\{ \rho= - [\dot{a}(t)]^{-1}\}.
		\end{equation*}
		We also have  \begin{equation*}
			\Gamma=\{\rho=0\} \subset	\mathcal{R}=\{ \rho < - [\dot{a}(t)]^{-1}\},\ 	\mathcal{T}=\{ \rho> - [\dot{a}(t)]^{-1}\}.
		\end{equation*}
		Recall indeed that $\dot{a}(0)<0$, and by \eqref{FRLW1}, $\dot{a}(t)\blue{\leq \dot{a}_0}<0$ for all $t\geq 0$. It is thus clear that there exists $0<T_S< \frac{a_0}{-\dot{a_0}}$ such that \begin{equation}
			\lim_{t\rightarrow T_S} a(t)=0.
		\end{equation}
		
		\noindent	\eqref{FRLW1}, \eqref{FRLW2} can be cast in the following equivalent formulation: defining $c(t) = \frac{a^3(t)}{a_0^3}$, we have
		\begin{equation}\label{FRLW7}
			c(t)\ddot{c}(t) + 9  [ \dot{\phi_0}]^2  = \frac{2}{3} [\dot{c}(t)]^2
		\end{equation}
		It is useful to subsequently introduce the variable $v(t)= \dot{c}(t)$ and obtain the system  \begin{equation}\begin{split}
				&\dot{c}(t) = v(t),\\ & c(t) \dot{v}(t)= \frac{2 v^2}{3}- 9 [\dot{\phi}_0]^2.
			\end{split}
		\end{equation} which can be solved by separation of variables as \begin{equation}
			\frac{dc}{c} = \frac{v}{ \frac{2 v^2}{3} - 9 [\dot{\phi}_0]^2} dv,
		\end{equation} which can be solved as \begin{equation}\label{ODE.solved}
			c(t) = \frac{c_0 }{\bigl|  v^2_0- \frac{27}{2} [\dot{\phi}_0]^2\bigr|^{\frac{3}{4}}} \bigl|  v^2(t)- \frac{27}{2} [\dot{\phi}_0]^2\bigr|^{\frac{3}{4}}
		\end{equation}
		
		This means that $$ \lim_{t\rightarrow T_S} v(t) = - \frac{3\sqrt{3}}{\sqrt{2}} |\dot{\phi}_0|,$$ from which we deduce $$ \lim_{t\rightarrow T_S} \frac{c(t)}{T_S-t} = \frac{3\sqrt{3}}{\sqrt{2}}|\dot{\phi}_0|,$$ 
		from which \eqref{FRLW.est} and \eqref{FRLW.A.est} follow.

	\end{proof}
	
	\subsubsection{Uncharged spacelike gluing in spherical symmetry}
	
	Our objective  is to construct a class of  one-ended spherically symmetric solutions of \eqref{1.1}--\eqref{5.1} with $F\equiv 0$, which coincide in some spacetime regions with the FLRW solutions constructed in Proposition~\ref{FRLW.prop}. First, in this section, we turn to the spacelike gluing problem, which we will use a tool towards that goal.

	For this, we must solve the spacelike constraints equations  for \eqref{1.1}--\eqref{5.1} with $F\equiv 0$ in spherical symmetry. We start with a hypersurface $\magenta{\Sigma_G} =\{ v+u = 0\}$, which we parametrize by $\rho= v-u= 2v = -2u$. We prescribe \begin{equation}\begin{split}
			& \rd_v r_{|\magenta{\Sigma_G}}(\rho) = \lambda(\rho),\\ &  \rd_u r_{|\magenta{\Sigma_G}}(\rho) = \nu(\rho),\\ &  \rd_v \phi_{|\magenta{\Sigma_G}}(\rho) = T(\rho),\\ &  \rd_u \phi_{|\magenta{\Sigma_G}}(\rho) = X(\rho).
		\end{split}
	\end{equation} Note that  prescribing the function $\lambda(\rho)$ corresponds to fixing the $v$-gauge, while prescribing the function $\nu(\rho)$ corresponds to fixing the $u$-gauge, recall\blue{ing} the introduction of  gauge~\ref{gauge.spacelike}. Of course, since $\rho= v-u$, the parametrization of $\magenta{\Sigma_G}$ by $\rho$ is thus fixed by the choice of $\lambda(\rho)$ and $\nu(\rho)$. We then construct the area-radius $r(\rho)$, the Hawking mass $\varpi(\rho)$ and the scalar field $\phi(\rho)$ by solving the following constraint equations: \begin{equation}\label{r.constraint}
		\rd_{\rho} r(\rho)  = \lambda(\rho) -\nu(\rho),
	\end{equation}\begin{equation}\label{psi.constraint}
		\rd_{\rho} \phi(\rho)  =T(\rho) - X(\rho),
	\end{equation}  	\begin{equation}\label{varpi.constraint}
		\rd_{\rho} \varpi(\rho)  = \frac{1}{2} (1-\frac{2\varpi(\rho)}{r(\rho)})\left(r^2(\rho)\lambda^{-1}(\rho) T^2(\rho) +  r^2(\rho)|\nu|^{-1}(\rho) X^2(\rho)\right).
	\end{equation}
	
	\noindent We are now ready to state our main uncharged gluing result.  We refer the reader to Section~\ref{gluing.section} for precise definition of what it means to glue two spheres ``spatially''.
	
	\begin{thm}[Uncharged spacelike gluing]\label{uncharged.gluing.thm}
		Let $k\in \mathbb{N}$ and $\mathbb{S}_R$ a regular uncharged  $C^k$ data sphere of area-radius $R$ and Hawking mass $M>0$, with $R>2M$.   Let $R_A>R$ and $0<R_S^{\mathcal{T}}<R_A$, $M_S>\frac{R_S^{\mathcal{T}}}{2}$. Then, $\mathbb{S}_R$  can be glued spatially within the regular region to an apparent horizon uncharged data sphere $\mathbb{S}_A$ of area-radius $R_A$ and Hawking mass $\frac{R_A}{2}$. In turn, $\mathbb{S}_A$ can be glued characteristically to a Schwarzschild trapped sphere  $\mathbb{S}_S^{\mathcal{T}}$   of area-radius $R_S^{\mathcal{T}}$ and Hawking mass $M_S$.

	\end{thm}

	\begin{figure}[H]	\begin{center}
			\includegraphics[width=100 mm, height=50 mm]{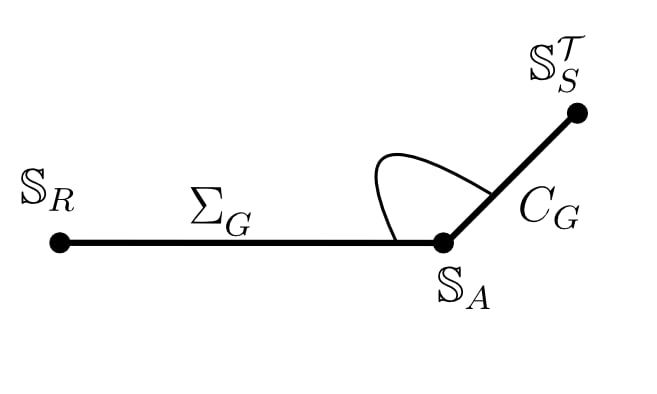
			}
		\end{center}
		\caption{The spacelike-characteristic gluing strategy in the proof of Theorem~\ref{uncharged.gluing.thm}. \blue{The bump represents the ingoing short pulse used for gluing, \teal{made quantitative} in Corollary~\ref{cor.pulse.1} and Corollary~\ref{cor.pulse.2} respectively.}}\label{fig:unchargedgluing}\end{figure}
	\magenta{The proof of Theorem~\ref{uncharged.gluing.thm} will be given below, after proving preliminary results: Proposition~\ref{spacelike.gluing.thm}, Corollary~\ref{cor.pulse.1}, Proposition~\ref{uncharged.null.gluing.prop} and Corollary~\ref{cor.pulse.2}.}
	In the proposition below, \blue{we emphasize again that} the $u$ and $v$ gauge will be determined by gauge~\ref{gauge.spacelike}, as we will see in the proof.
	\begin{prop}\label{spacelike.gluing.thm} Let $k \in \mathbb{N}$.
		Let $\mathbb{S}_1\in \R$ a regular $C^k$ sphere data with area-radius $r_1$ and mass $M_1>0$ such that \begin{equation}
			r_1> 2M_1.
		\end{equation} Let $M_2>\frac{r_1}{2}$, $\mathcal{N}_0<0$ and $(\varPhi,\varPhi_u^{1},...,\varPhi_u^{k},\varPhi_v^{1},...,\varPhi_v^{k})$ a list of $(2k+1)$ real numbers.  Denoting $\theta_2 = 2M_2 \varPhi_v^{1}$, assume that \begin{equation}\label{key.theta.bound}
			1<\theta_2^2< 1 + \frac{1}{k-1}.
		\end{equation}  
		Then, there exists  $\mathbb{S}_2\in \A$, an apparent horizon  $C^k$ sphere data  with area-radius $r_2=2M_2$, mass $M_2$ such that, in $C^k$ lapse-normalized gauge  \begin{equation} \label{CK.lapse.normalized}
			\Omega^2_{|\mathbb{S}_2} =1,\  \rd_u^{i} \Omega^2_{|\mathbb{S}_2}  = \rd_v^{i} \Omega^2_{|\mathbb{S}_2}=0 \text{ for all } 1\leq i \leq k.
		\end{equation}  the scalar field on $\mathbb{S}_2$ is given by  $(\varPhi,\varPhi_u^{1},...,\varPhi_u^{k},\varPhi_v^{1},...,\varPhi_v^{k})$, $\rd_u r_{|\mathbb{S}_2}= \mathcal{N}_0$ and  $\mathbb{S}_2$ can be spatially glued to  $\mathbb{S}_1$ in an uncharged way and  \blue{on a hypersurface $\magenta{\Sigma_G}$ within the regular region}.

	\end{prop}
	\begin{proof} 
		
		Let $(\varrho,\varrho_u^{1},...,\varrho_u^{k+1},\varrho_v^{1},...,\varrho_v^{k+1})$, $(\omega,\omega_u^{1},...,\omega_u^{k},\omega_v^{1},...,\omega_v^{k})$, $(\varphi,\varphi_u^{1},...,\varphi_u^{k},\varphi_v^{1},...,\varphi_v^{k})$ corresponding to the sphere data $\mathbb{S}_1$. We know that $\varrho>0$, $\omega>0$, and $\varrho_v^{1}>0$ by assumption. With no loss of generality,  assume that $\mathbb{S}_1$ is $C^k$ lapse-normalized (gauge choice), i.e.,  $(\omega,\omega_u^{1},...,\omega_u^{k},\omega_v^{1},...,\omega_v^{k})=(1,0,...0,0,...0)$. We also denote \begin{equation}\begin{split}
				&	\varrho:= r_1>0,\\ & 2M_1:=  r_1 [ 1+ \varrho_u^{1} \varrho_v^{1} ]>0,\\ & \phi_1:= \varphi \in \RR.
			\end{split}
		\end{equation}
		
		Start with $\magenta{\Sigma_G}$ a hypersurface parametrized by $\rho$, and isolate the ``initial sphere'' $\rho=\rho_1$ and the ``final sphere'' $\rho=\rho_2$ with $\rho_1<\rho_2$. 
		We will construct a triplet of functions $(r(\rho),\varpi(\rho),\phi(\rho))$ for $\rho_1 \leq \rho \leq \rho_2$,
		which we will use to obtain a solution of the constraint equations induced by \eqref{1.1}--\eqref{5.1} on $\magenta{\Sigma_G}$ in spherical symmetry.
		
		We then fix the following initial conditions: \begin{equation}\begin{split}\label{initial}
				&  \varpi(\rho_1) = M_1 >0\\  & r(\rho_1) = r_1>2M_1,\\ & \phi(\rho_1) = \varphi \in \RR.
			\end{split}
		\end{equation} Let $\lambda(\rho)$, $\nu(\rho)$,  $T(\rho)$, $X(\rho)$  smooth, and freely prescribed but obeying the initial compatibility conditions (recall that $\rd_v$ and $\rd_u$ derivatives can be obtained from the $\rd_{\rho}$ derivatives and the equations of Section~\ref{preliminary}), i.e., for all $1\leq j \leq k$ \begin{equation}\begin{split}
				& \lambda(\rho_1) = \varrho_v^{1},\ \rd_v^{j}\lambda(\rho_1)= \varrho_v^{j+1},\\ &
				\nu(\rho_1) = \varrho_u^{1},\  \rd_u^{j}\nu(\rho_1)= \varrho_u^{j+1},\\ &  \rd_v^{j-1}T(\rho_1) = \varphi_v^{j},\\   &  \rd_u^{j-1}X(\rho_1) = \varphi_u^{j}.
			\end{split}
		\end{equation} 
		Moreover, we \blue{impose} that $\lambda(\rho)$, $\nu(\rho)$ satisfy the following conditions:
		\begin{align}
			&\label{prescribed}  \lambda(\rho) >0 \text{ for all } \rho_1 \leq \rho< \rho_2,\\  & \lambda(\rho_2)=0,\\ &\label{prescribed2}  \nu(\rho) <0 \text{ for all } \rho_1 \leq \rho\leq \rho_2,\\  &\label{average.lambda-rho}\int_{\rho_1}^{\rho_2} [\lambda(\rho)-\nu(\rho) ]d\rho = 2M_2-r_1 >0.
		\end{align}  
		\blue{Imposing that $\lambda(\rho)$ and $\nu(\rho)$ are $C^k$ functions},	we   \blue{introduce} their Taylor expansions at order $k$ as $\rho \rightarrow \rho_2$   \begin{equation}\begin{split}
				& \lambda(\rho) = \sum_{q=1}^{k-1}\mathcal{L}_q (\rho_2-\rho)^{q}+ O((\rho_2-\rho)^{k}),\\ &
				\nu(\rho) = \sum_{q=0}^{k-1}\mathcal{N}_q (\rho_2-\rho)^{q}+ O((\rho_2-\rho)^{k}).
			\end{split}
		\end{equation}\blue{We obviously have $\mathcal{N}_0<0$ and subsequently,}  we \blue{also impose} that $$ \mathcal{L}_1>0.$$
		We also impose that $T(\rho)$, $X(\rho)$ satisfy the integral condition \begin{equation}
			\int_{\rho_1}^{\rho_2} [T(\rho)-X(\rho) ] d\rho =\Phi-\phi_1. 
		\end{equation} 
		
		\noindent	 In the sequel, we will denote $\theta:= rT$ and $\xi:= rX$.

		Then, we finally construct $r(\rho)$, $\varpi(\rho)$  and $\phi(\rho)$ by solving  the system of ODEs \eqref{r.constraint}, \eqref{psi.constraint}, \eqref{varpi.constraint} with initial conditions \eqref{initial}. 		By \eqref{prescribed}, \eqref{prescribed2}, $r$ is monotonically increasing, therefore $r(\rho )\geq r_1>0$. Therefore, by \eqref{varpi.constraint} (a linear equation in $\varpi$, for $r$, $\lambda$, $\nu$, $\theta$, $\xi$ given)  we have that $\varpi(\rho)$ is well-defined for all $\rho_1 \leq \rho \leq \rho_2$. 
		
		Now, let us show that $\varpi(\rho) > 0$ by showing a stronger result, i.e. that $\rd_{\rho} \varpi(\rho) > 0$ for all $\rho_1 \leq \rho < \rho_2$. If we show this, then $\varpi(\rho) > M_1>0$ so the mass will be positive.

		Suppose, by contradiction, that there exists $\rho_1 < \rho <  \rho_2$ such that $2\varpi(\rho)> r(\rho)$. Then, let $\rho^{*}= \inf\{\rho_1\leq\rho\leq \rho_2,\ \frac{2\varpi(\rho^{*})}{r(\rho^{*})}\geq 1\}$. By \eqref{initial}, $\rho^{*}>\rho_1$ and by continuity, $\frac{2\varpi(\rho^{*})}{r(\rho^{*})}= 1$.  By   definition of $\rho^{*}$, for all $\rho_1< \rho< \rho^{*}$: \begin{equation}\label{ineq}
			1-\frac{2\varpi(\rho)}{r(\rho)}>0.
		\end{equation} By \eqref{varpi.constraint}, we know that $\rd_{\rho}\varpi(\rho^{*})=0$. 
		Therefore, we have,  as $\rho \rightarrow \rho^{*}$ \begin{equation}\begin{split}\label{eq}
				&	\varpi(\rho) =  	\varpi(\rho^{*}) + O( [\rho - \rho^{*}]^2),\ r(\rho)= 2\varpi(\rho^{*}) +[\lambda(\rho^{*})-\nu(\rho^{*})] [\rho-\rho^{*}] +  O( [\rho - \rho^{*}]^2),\\ & 1-\frac{2\varpi(\rho)}{r(\rho)} = \frac{[\lambda(\rho^{*})-\nu(\rho^{*})] [\rho-\rho^{*}]}{2\varpi(\rho^{*})}+ O( [\rho - \rho^{*}]^2).\end{split}
		\end{equation} In particular, \eqref{eq} shows that $1-\frac{2\varpi(\rho)}{r(\rho)} <0$ for all $\rho< \rho^{*}$ sufficiently close to $\rho^{*}$, a contradiction with \eqref{ineq}.
		
		\noindent Conclusion:  for all $\rho_1 \leq \rho< \rho_2$, $2\varpi(\rho)> r(\rho)$. Thus,  $\varpi(\rho)>0$ for all $\rho_1 \leq \rho \leq \rho_2$.
		
		Now, we can solve the linear equation \eqref{varpi.constraint} on $\varpi$  using an integrating factor: we obtain for all $\rho_1 \leq \rho < \rho_2$:
		
		\begin{equation}\label{ODE.varpi.solved}
			\frac{r(\rho)}{2}- \varpi(\rho) = G(\rho) \left[\frac{r_1}{2} -M_1  + \int_{\rho_1}^{\rho} \frac{\rd_{\rho} r}{2G(\rho')}  d\rho' \right],
		\end{equation} where  \begin{equation}
			G(\rho) = \exp(- \int_{\rho_1}^{\rho} [\frac{\theta^2}{r\lambda}+ \frac{\xi^2}{r|\nu|}](\rho') d\rho' )
		\end{equation}
		Note that as $\rho \rightarrow \rho_2$, \begin{equation}
			\frac{\theta^2(\rho)}{r\lambda(\rho)}+ \frac{\xi^2}{r|\nu|} \sim \frac{\theta_2^2}{2M_2\mathcal{L}_1 } [\rho_2-\rho]^{-1},
		\end{equation}
		hence $G(\rho_2)=0$. In fact, there exists a constant $G_0>0$ such that as $\rho \rightarrow \rho_2$, \begin{equation}
			G(\rho) \sim G_0 [\rho_2-\rho]^{\frac{\theta_2^2}{2M_2\mathcal{L}_1 }}.
		\end{equation}
		We now \blue{impose} that \begin{equation}\label{alpha>1}
			\alpha:=\frac{\theta_2^2}{2M_2\mathcal{L}_1 }> 1,
		\end{equation} therefore $\frac{\rd_{\rho} r}{G(\rho)}$ is not integrable as $\rho \rightarrow \rho_2$, and it is easy to see that as $\rho \rightarrow \rho_2$: \begin{equation}
			G(\rho)\int_{\rho_1}^{\rho} \frac{\rd_{\rho} r}{2G(\rho')}  d\rho'  \sim \frac{|\mathcal{N}_0|}{2[\alpha-1]} [\rho_2-\rho],
		\end{equation} which gives, as $\rho \rightarrow \rho_2$: \begin{equation}\label{varpi.rho}
			1-\frac{2\varpi(\rho)}{r(\rho)} \sim  \frac{|\mathcal{N}_0|}{2M_2[\alpha-1]} [\rho_2-\rho].
		\end{equation}
		In fact, if we \blue{impose} the following  condition, which is stronger than \eqref{alpha>1}: \begin{equation}\label{alpha>k+1}
			\alpha:=\frac{\theta_2^2}{2M_2\mathcal{L}_1 }> k,
		\end{equation} we get a Taylor expansion of the following form: \begin{equation}
			G(\rho)\int_{\rho_1}^{\rho} \frac{\rd_{\rho} r}{2G(\rho')}  d\rho'  = \frac{|\mathcal{N}_0|}{2[\alpha-1]} [\rho_2-\rho] +\frac{1}{2} \sum_{q=1}^{k-1} \frac{\mathcal{L}_q+\mathcal{N}_q}{\alpha-1-q} [\rho-\rho_2]^{q+1}+O\left([\rho-\rho_2]^{\alpha}\right).
		\end{equation}
		
		Therefore by \eqref{ODE.varpi.solved} and \eqref{alpha>k+1}, we get as $\rho \rightarrow \rho_2$: \begin{equation}\label{varpi.rho2}\begin{split}
				&\varpi(\rho) = M_2 - \frac{\alpha |\mathcal{N}_0|}{ 2[\alpha-1]} [\rho_2-\rho]-\frac{1}{2} \sum_{q=1}^{k-1} [\frac{1}{\alpha-1-q}+\frac{1}{q}] [\mathcal{L}_q+\mathcal{N}_q] [\rho-\rho_2]^{q+1}+O\left([\rho-\rho_2]^{\alpha}\right),\\ & r(\rho)- 2\varpi(\rho) =  \frac{ |\mathcal{N}_0|}{ [\alpha-1]} [\rho_2-\rho]+  \sum_{q=1}^{k-1} \frac{\mathcal{L}_q+\mathcal{N}_q}{\alpha-1-q} [\rho-\rho_2]^{q+1}+O\left([\rho-\rho_2]^{\alpha}\right).\end{split}
		\end{equation} In particular, $\varpi(\rho)$ is $C^{k}$ on $[\rho_1,\rho_2]$.
		
		Let us define for all $\rho_1 \leq \rho < \rho_2$: \begin{equation}
			\Omega^2(\rho):= \frac{-\lambda(\rho) \nu(\rho)}{1-\frac{2\varpi(\rho)}{r(\rho)}}>0.
		\end{equation}
		By \eqref{varpi.rho}, $\Omega^2(\rho)$ extends continuously to $\rho=\rho_2$ and \begin{equation}
			\Omega^2(\rho_2) = \theta_2^2 - 2M_2 \mathcal{L}_1.
		\end{equation}
		
		Note that, by \eqref{varpi.rho2}, $\Omega^2(\rho)$ is $C^{k}$ on $[\rho_1,\rho_2]$. 	 Ultimately, we want to the final sphere $\mathbb{S}_2$ to be lapse-normalized,  in particular, we impose the gauge $\Omega^2(\rho_2)=1$, which fixes the value of $\mathcal{L}_1>0$ to be \begin{equation}\label{L.condition}
			2M_2 \mathcal{L}_1=\theta_2^2 - 1.
		\end{equation} and therefore requires, due to \eqref{alpha>k+1} that \begin{equation}\label{theta.cond}
			1<	 \theta_2^2< 1+\frac{1}{k-1}.
		\end{equation}
		
		Recall that the prescription of $\lambda(\rho)$ and $\nu(\rho)$ fixed the $v$-gauge choice and $u$-gauge choice respectively (in particular at the final sphere $\mathbb{S}_2$). However, it is well-known that any $C^k$ data sphere is $C^k$ gauge-equivalent to a lapse-normalized sphere \cite{KehleUnger}; we denote  	$\Phi_k(\rd_v \phi(\rho_2),\rd_v^2 \phi(\rho_2),...,\rd_v^{k} \phi(\rho_2), \rd_u \phi(\rho_2),\rd_u^2 \phi(\rho_2),...,\rd_u^{k} \phi(\rho_2)) $  to be the image of  $(\rd_v \phi(\rho_2),\rd_v^2 \phi(\rho_2),...,\rd_v^{k} \phi(\rho_2), \rd_u \phi(\rho_2),\rd_u^2 \phi(\rho_2),...,\rd_u^{k} \phi(\rho_2))$  under the gauge transform that makes $\mathbb{S}_2$ $C^k$ lapse-normalized. 
		
		Since we have already normalized $\Omega^2(\rho_2)=1$, $\rd_v \phi(\rho_2)$ is unchanged  and thus will still satisfy \eqref{theta.cond} (since $r(\rho_2)$ is gauge-invariant).  Since the gauge-transform  $\Phi_k$ is a bijection, we can choose  $(\rd_v \phi(\rho_2),\rd_v^2 \phi(\rho_2),...,\rd_v^{k} \phi(\rho_2))$ and $(\rd_u \phi(\rho_2),\rd_u^2 \phi(\rho_2),...,\rd_u^{k} \phi(\rho_2)) $ so that, as prescribed by the statement of the proposition: \begin{equation}
			\Phi_k(\rd_v \phi(\rho_2),\rd_v^2 \phi(\rho_2),...,\rd_v^{k} \phi(\rho_2), \rd_u \phi(\rho_2),\rd_u^2 \phi(\rho_2),...,\rd_u^{k} \phi(\rho_2)) = (\Phi_v^{1},...,\Phi_v^{k}, \Phi_u^{1}, ... \Phi_u^{k}).
		\end{equation}Fixing  $\rd_u r_{|\mathbb{S}_2}= \mathcal{N}_0$ can also be arranged following a similar logic. This concludes the proof of the proposition.

	\end{proof}
	
	\blue{We will then refine  the analysis of Proposition~\ref{spacelike.gluing.thm} to show that, if the left-most sphere comes from $C^1$ Cauchy data, the gluing procedure can be achieved by a \emph{short pulse}, i.e.,  a solution of the constraints whose largeness is localized near the right-most sphere. The subsequent result (Corollary~\ref{cor.pulse.1}) will not be used in the proof of Theorem~\ref{uncharged.gluing.thm} but instead in that of Theorem~\ref{uncharged.thm}, most notably to construct Cauchy data strictly to the past of the event horizon \magenta{(see also Corollary~\ref{cor.pulse.2} \blue{later} for the characteristic gluing analogue of Corollary~\ref{cor.pulse.1}).}

		\begin{cor}\label{cor.pulse.1} [Short pulse spacelike gluing].
			Let $k \in \mathbb{N}$ and  $\magenta{\Sigma'_L}\approx \RR^3$ a non-characteristic hypersurface  within the regular region on which we pose $C^k$ Cauchy data  satisfying the constraint equations.
			
			There exist $C>0$, $D_{\pm}>0$, such that for any sufficiently small $R_1>0$\teal{, $\ep\in (0,1]$} and numbers $M_2>\frac{R_1}{2}$, $\mathcal{N}_0<0$, $(\varPhi,\varPhi_u^{1},...,\varPhi_u^{k},\varPhi_v^{1},...,\varPhi_v^{k})$ \magenta{ and} if \eqref{key.theta.bound} from Proposition~\ref{uncharged.gluing.thm}  holds, together with the following extra conditions 
			\begin{equation}\label{large.2nd.deriv}
				D_- \leq |	\teal{\ep} R_1^2 \Phi^2_v - \magenta{\Phi^{1}_v}|\leq D_+,
			\end{equation}
			\begin{equation}\label{silly.3nd.deriv}
				\magenta{	|\varPhi_v^{3}| \leq C  \teal{\ep^{-2}} R_1^{-4},}
			\end{equation}
			\begin{equation}\label{silly.condition}
				|\varPhi_u^{1}| \leq \magenta{\frac{C}{2},}
			\end{equation}

			then, we can glue the $C^k$ sphere  $\mathbb{S}_1 \subset\magenta{\Sigma'_L}$ with area-radius $R_1$ to a $C^k$ sphere $\mathbb{S}_2 \in \A$ of area-radius $R_2=2M_2$, mass $M_2$ and scalar field given by $$(\phi,\rd_u \phi,...,\rd_u^{k} \phi,\rd_v \phi,..., \rd_v^{k} \phi)_{|\mathbb{S}_2}=(\varPhi,\varPhi_u^{1},...,\varPhi_u^{k},\varPhi_v^{1},...,\varPhi_v^{k})$$ and $$\rd_u r_{|\mathbb{S}_2}= \mathcal{N}_0 $$ in $C^k$ lapse-normalized gauge \eqref{CK.lapse.normalized} along a new hypersurface \magenta{$\Sigma_G'$} within the regular region and such that the following \magenta{short pulse} bounds hold, parametrizing $\magenta{\Sigma_G'}$ by $\rho=v-u\blue{\in [\rho_1,\rho_2]}$: for all $\rho_1\leq \rho \leq \rho_2$ \begin{equation}\label{shortA}
				|\rd_u \phi|_{|\magenta{\Sigma_G'}}(\rho),\  	r|\rd_v \phi|_{|\magenta{\Sigma_G'}}(\rho) \leq  C,
			\end{equation} and\teal{, if $\rho_2> \rho_1+ \ep R_1^2$, then} for all  $\rho_1\leq \rho \leq \rho_2- \teal{\ep} R_1^2$\begin{equation}\label{shortB}
				|\rd_v \phi|_{|\magenta{\Sigma_G'}}(\rho) \leq  C.
			\end{equation}

		\end{cor}
		\begin{proof}
			The proof is essentially the same as for Proposition~\ref{spacelike.gluing.thm}, though we need to show in addition that the short pulse conditions \eqref{shortA}, \eqref{shortB} are satisfied under the extra assumptions \eqref{large.2nd.deriv}, \eqref{silly.3nd.deriv}, \eqref{silly.condition}.

			\magenta{First, note that, since $\rd_{\rho} = \rd_v  - \rd_u$ and \eqref{Field} holds, we obtain  (in the notation of Proposition~\ref{spacelike.gluing.thm}: $T(\rho) = \rd_v \phi$, $X(\rho) = \rd_u \phi$):
				\begin{equation*}\begin{split}
						&	\rd_v^2\phi:=	\rd_{\rho} \rd_v \phi - \frac{\lambda}{r} \rd_u \phi  - \frac{\nu}{r} \rd_v \phi =  \rd_{\rho} T - \frac{\lambda}{r} X - \frac{\nu}{r} T, \\ &\rd_v^3\phi:=	\rd_{\rho} \rd_v^2 \phi -\rd_v( \frac{\lambda}{r} \rd_u \phi  - \frac{\nu}{r} \rd_v \phi),\\ & = 	\rd_{\rho} \rd_v^2 \phi +( \frac{\lambda[\lambda+\frac{\nu \mu}{1-\mu}]}{r^2 }- r^{-1} \rd_\rho \lambda) X\phi - \frac{(1-2\mu) \lambda \nu}{r^2(1-\mu)} T\phi+\frac{\nu}{r} \rd_v^2\phi.
					\end{split}
				\end{equation*}
				
				and $\rd_v^2 \phi(\rho=\rho_2) = \Phi^{2}_v$, $\rd_v^3\phi(\rho=\rho_2) = \Phi^{3}_v$. We start prescribing  $X(\rho)$ satisfying \eqref{shortA} for all $\rho_1 \leq \rho \leq \rho_2$. Now, we need that to prescribe  $T(\rho)$ such that $R_2|T(\rho_2)|\in[1,1+k^{-1}]$, and  \begin{equation}\begin{split} \label{imp.proof}
						T(\rho_2-\teal{\ep} R_1^2) = \blue{\Phi^1_v} - \int_{\rho_2-\teal{\ep} R_1^2}^{\rho_2} \rd_{\rho} T(\rho) d\rho \in [-C,C].\end{split}
				\end{equation}
				Now, note that since $R_1$ is small, and as a consequence of \eqref{large.2nd.deriv}, \eqref{silly.3nd.deriv},  we have $$\rd_{\rho} T = \rd_v^2 \phi + O (R_2^{-2})= \rd_v^2 \phi + O (R_1^{-2}),$$ $$ \rd_{\rho} \rd_v^2 \phi=\rd_v^3 \phi + O(R_2^{-3})+ O(R_2^{-1} \rd_v^2\phi)=\rd_v^3 \phi + O(R_1^{-3})+ O(R_1^{-1} \rd_v^2\phi).$$
				So, for $R_1$ sufficiently small, 	one can indeed prescribe $T(\rho)$ such that $\rho_2- \teal{\ep} R_1^2 \leq \rho \leq \rho_2$: \begin{equation}\label{silly1}
					\rd_\rho T(\rho) = \Phi^2_v + O(\teal{\ep^{-1}}R_1^{-2}) ,
				\end{equation}
				\begin{equation}\label{silly2}
					\rd_{\rho}	\rd_v^2 \phi(\rho) = O(\teal{\ep^{-2}}R_1^{-4}).
				\end{equation}    In turn, \eqref{silly1}, \eqref{silly2} result in  \begin{equation}\label{l}
					T(\rho_2- \teal{\ep}R_1^2) =\Phi^1_v-  \teal{\ep} R_1^2 \Phi^2_v + O(1)=O(1).
				\end{equation}Then,  one can then arrange $T(\rho) \blue{=O(1)}$ for $\rho_1 \leq \rho \leq \rho_2 -\teal{\ep}R_1^2$, which combined with  \eqref{l} shows that \eqref{shortB} holds and \blue{also} concludes the proof of \eqref{shortA}.
			}

		\end{proof}

	}
	
	\blue{We then turn to characteristic gluing, continuing towards the proof of Theorem~\ref{uncharged.thm} and Theorem~\ref{uncharged.gluing.thm}.} The next proposition offers a characteristic uncharged gluing result, allowing to glue an apparent horizon sphere $\mathbb{S}_A$ to a Schwarzschild trapped surface. We make a very strong assumption by not allowing traversal scalar field derivatives  at $\mathbb{S}_A$ to be prescribed, which is the reason for which  the argument is straightforward. Ultimately we will combine Proposition~\ref{spacelike.gluing.thm} and Proposition~\ref{uncharged.null.gluing.prop} to prove Theorem~\ref{uncharged.gluing.thm}: the reason why a weaker gluing result such as Proposition~\ref{uncharged.null.gluing.prop} is sufficient comes from the fact that Proposition~\ref{spacelike.gluing.thm} offers tremendous flexibility in prescribing the  traversal scalar field derivatives  at $\mathbb{S}_A$, which compensates for the rigidity of Proposition~\ref{uncharged.null.gluing.prop}.
	
	In the sequel,  the $v$-gauge will be fixed by gauge \eqref{gauge.unit.lapse}.
	\begin{prop} \label{uncharged.null.gluing.prop} Let $k\in \mathbb{N}$, $\ep_S>0$ and $r_S>0$. Let $\mathbb{S}_A\in \A$ an apparent horizon lapse-normalized $C^k$ sphere data with $(r,\varpi)= (2M_{A},M_{A})$ with $r_S<2M_{A}\magenta{<(1+\ep_S ) r_S}$: in particular, $\varrho_v^1 =0$ on $\mathbb{S}_A$. We  fix $k$ real-numbers $( \varphi^{1}_v,... \varphi^{k}_v)$ such that in the lapse-normalized gauge, \begin{equation}
			( \rd_v\phi_{|\mathbb{S}_A},..., \rd_v\phi_{|\mathbb{S}_A}^{k})=(  \varphi^{1}_v,... \varphi^{k}_v).
		\end{equation}

		\noindent There exists $k+1$ real numbers $(\varphi, \varphi^{1}_u,... \varphi^{k}_u)$  and $\mathcal{N}_0<0$ such that, assuming \begin{equation}
			(\phi_{|\mathbb{S}_A}, \rd_u\phi_{|\mathbb{S}_A},..., \rd_u\phi_{|\mathbb{S}_A}^{k},\rd_v\phi_{|\mathbb{S}_A},..., \rd_v\phi_{|\mathbb{S}_A}^{k})=(\varphi, \varphi^{1}_u,... \varphi^{k}_u, \varphi^{1}_v,... \varphi^{k}_v),
		\end{equation} \begin{equation}
			\varrho_u^{1} = \mathcal{N}_0,
		\end{equation} then, $\mathbb{S}_A$ can be characteristically glued \magenta{along an outgoing cone $C_G$} to $\mathbb{S}_S^{\T}$ a trapped Schwarzschild sphere with area-radius $r_S$ and Hawking mass $M_S>0$ such that \begin{equation}
			1-\frac{2M_S}{r_S} = -\ep_S.
		\end{equation}
		The associated characteristic data can be chosen to have no 
		anti-trapped spheres, namely $\rd_u r <0$.

	\end{prop}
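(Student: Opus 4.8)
The plan is to realize the gluing as the construction of characteristic data on an outgoing null cone $C=\{u=u_A\}\times[v_A,v_S]$ emanating from $\mathbb{S}_A$ towards the future into the trapped region, terminating at the Schwarzschild sphere $\mathbb{S}_S^{\T}$ at $v=v_S$. First I would fix the $v$-gauge by gauge~\eqref{gauge.unit.lapse} so that $\Omega^2\equiv 1$ along $C$; since we are in the uncharged setting ($F\equiv 0$, $A\equiv 0$, $\phi$ real), the only free datum on $C$ is the scalar field $\phi(v)$, and everything else is recovered from the constraint ODEs, which in this gauge decouple conveniently: $r$ solves the linear second-order equation $\rd_v^2 r=-r(\rd_v\phi)^2$ with $r(v_A)=2M_{A}$ and $\rd_v r(v_A)=\lambda(v_A)=0$ (apparent horizon); by \eqref{Radius3} the product $r\nu$ is affine in $v$, namely $r\nu(v)=2M_{A}\mathcal{N}_0-\tfrac14(v-v_A)$; the Hawking mass is then algebraic, $\varpi=\tfrac r2+2r\nu\lambda$ by \eqref{murelation}, equivalently $\rd_v\varpi=2(r\nu)\rd_v\lambda\ge 0$ in agreement with \eqref{massVEinstein}; and the transversal scalar derivatives $\rd_u^j\phi$ propagate along $C$ by the linear transport equations obtained from \eqref{Field3} through successive $\rd_v$-commutation, the first of which reads $\rd_v(r\rd_u\phi)=-\nu\,\rd_v\phi$.

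Next I would choose the profile: take $\phi$ on $[v_A,v_1]$ to be any smooth function realizing the prescribed Taylor data $\rd_v^j\phi(v_A)=\varphi_v^j$ for $1\le j\le k$, with $\|\rd_v\phi\|_{L^2}$ as small as we like and concentrated near $v_A$, and set $\phi\equiv 0$ on $[v_1,+\infty)$. Since $\rd_v^2 r=-r(\rd_v\phi)^2\le 0$, the function $r$ is concave and non-increasing, so keeping $\|\rd_v\phi\|$ small forces $r(v_1)$ to remain close to $2M_{A}>r_S$; on the coasting segment $[v_1,+\infty)$ the solution is \emph{exactly} Schwarzschild ($\rd_v^2 r=0$ so $r$ is affine with slope $\lambda(v_1)<0$; $\varpi$ is constant; $Q\equiv 0$; and $\phi$ and all its derivatives vanish). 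Hence I may set $v_S:=v_1+\bigl(r(v_1)-r_S\bigr)/|\lambda(v_1)|$, at which $r(v_S)=r_S$, and declare $\mathbb{S}_S^{\T}$ to be the sphere data at $(u_A,v_S)$: it is genuine Schwarzschild trapped sphere data of area-radius $r_S$ and mass $\varpi(v_1)$, and the required identity $1-\tfrac{2M_S}{r_S}=-\epsilon_S$ is equivalent by \eqref{murelation} to $\nu(v_S)\lambda(v_S)=\tfrac{\epsilon_S}{4}$, which holds automatically once $r(v_S)=r_S$ and $\varpi(v_S)=M_S$.

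It then remains to calibrate $\mathcal{N}_0<0$ and to pick the value $\varphi=\phi(v_A)$ and the transversal data $\varphi_u^1,\dots,\varphi_u^k$. Because $\rd_v\varpi=2(r\nu)\rd_v\lambda$ with both factors negative on $C$, the quantity $\varpi(v_1)=\varpi(v_S)$ depends monotonically and affinely on $\mathcal{N}_0$, sweeping out $[\varpi(v_1)|_{\mathcal{N}_0=0},+\infty)$ as $\mathcal{N}_0$ runs over $(-\infty,0]$; moreover $\varpi(v_1)|_{\mathcal{N}_0=0}\ge M_{A}$, with the excess over $M_A$ made arbitrarily small by front-loading $\rd_v\phi$ near $v_A$. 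Thus any target $M_S>M_{A}$ is attained for a suitable $(\phi,\mathcal{N}_0)$ --- the condition $M_S>M_A$ being the one genuine constraint, which merely records the monotonicity of the Hawking mass along outgoing cones (in the applications of the proposition the parameters of $\mathbb{S}_A$ are arranged so that $M_S>M_A$). Finally I fix $\varphi$ by $\varphi=-\int_{v_A}^{v_1}\rd_v\phi\,dv$ so that $\phi(v_S)=0$, and I determine $\varphi_u^1,\dots,\varphi_u^k$ by solving the transport equations for $\rd_u^j\phi$ along $C$ under the terminal conditions $\rd_u^j\phi(v_S)=0$ (which on the coasting segment force each $\rd_u^j\phi$ to be constant equal to $0$, so one back-propagates through $[v_A,v_1]$). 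Anti-trapped spheres are excluded at once, since $r\nu(v)=2M_{A}\mathcal{N}_0-\tfrac14(v-v_A)<0$ on $C$ forces $\nu<0$; and $C^k$-regularity of the glued data is immediate because $\phi$ vanishes identically near $v_S$.

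The main obstacle is hitting the two targets $r(v_S)=r_S$ and $\varpi(v_S)=M_S$ simultaneously while respecting the rigid monotonicities of the flow --- $r$ only decreases, $\varpi$ only increases, $\nu$ stays negative --- and without letting $r$ reach $0$ before $v_S$; this is precisely why one front-loads and shrinks $\rd_v\phi$, so that the bulk of the decrease of $r$ occurs during an affine Schwarzschild coast, leaving $\mathcal{N}_0$ free to set the mass. A secondary, purely technical, point is the order-$k$ bookkeeping of the transversal derivatives $\rd_u^j\phi$ (and of $\rd_u^j\Omega^2$, $\rd_u^j r$), which amounts to a routine but lengthy linear back-propagation along $C$; the freedom in choosing $\varphi,\varphi_u^1,\dots,\varphi_u^k$ granted by the statement is exactly what this bookkeeping consumes.
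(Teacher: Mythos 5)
Your proposal is correct and follows essentially the same route as the paper's proof: outgoing characteristic data in the lapse-normalized gauge, a free scalar profile matching the prescribed jets at $\mathbb{S}_A$ and vanishing to order $k$ at the far end, integration of the null Raychaudhuri equation and of the affine equation for $r\nu$, backward propagation of the transversal jets from zero terminal conditions to determine $(\varphi,\varphi_u^1,\dots,\varphi_u^k)$, tuning of $\mathcal{N}_0$ to hit the target Hawking mass, and a final exact Schwarzschild coast down to area-radius $r_S$. Your explicit observation that the construction forces $\varpi(v_1)>M_A$ (so that only targets $M_S>M_A$ are reachable) is also present, implicitly, in the paper's mass formula $M(v_2)=M_A[1+|\mathcal{N}_0||\lambda|(v_2)]+|\lambda|(v_2)\Delta v-\Delta r>M_A$, so you have surfaced the same constraint rather than introduced a new gap.
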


	\begin{proof}
		
		We freely choose $\phi(v)$ for all $v_1 \leq v \leq v_2$, subject to the following conditions,  for all $1 \leq j \leq k$:\begin{equation}\begin{split}\label{phi.boundary.conditions}
				&\phi(v_1) = \varphi, \\ & \rd_v^{j}\phi (v_1) =\varphi_v^{j}, \\ & \phi(v_2)=0,\\ & 	\rd_v^{j}\phi (v_2) =0.
			\end{split}
		\end{equation} In the lapse normalization $\Omega^2(v)=1$, we solve $\lambda=\rd_v r$ as such (using \eqref{RaychV}) \begin{equation}\begin{split} & r(v_1)= 2M_A,\\ 
				&	\lambda(v_1) =0,\end{split}
		\end{equation}
		\begin{equation}\label{lambda.raych}
			\rd_v	\lambda(v) = - r(v) [\rd_v \phi]^2(v).
		\end{equation} 
		
		Then,	we pose the following ``final conditions'' for $\xi$: for all $0\leq j \leq k-1$ \begin{equation}
			\rd_u^{j} \xi(v_2) =0
		\end{equation} and subjected to the following recursive system of ODEs in $v$ obtained by formally differentiating in $u$  the equation \eqref{Field2}, i.e., $	\rd_v	 \xi = -\frac{\nu}{r} \theta$, and using the notation $\theta(v) = r(v) \rd_v\phi(v)$:  \begin{equation}\label{xi.evolution}
			\rd_v	\rd_u^{j} \xi = -	\rd_u^{j}[ \frac{\nu}{r} \theta]=  -	\rd_u^{j}[ \frac{\nu}{r}] \theta+   \sum_{q=1}^j\binom{j}{q}   \rd_u^{j-q}[ \frac{\nu}{r}]  \rd_u^{q-1} [-\frac{\lambda}{r} \xi],  	\end{equation} where we formally replace, in the above, $\rd_u r$ by $\nu$, $\rd_u [r\lambda]$ by $-1$, $\rd_u^{2}[r\lambda]$ by $0$ etc... consistently with \eqref{Radius} and the gauge choice $\Omega^2=1$. Note that the RHS of \eqref{xi.evolution} contains at most $j-1$ derivatives of $\xi$, and moreover, since $\theta$, $\lambda$, $\nu$ and $r$ are already fixed, it is a linear ODE in $\xi$ which is straightforward to solve.  In particular, the above procedure determines \begin{equation}\label{du.xi.v1}
			\rd_u^{j} \xi(v_1) 
		\end{equation}for all $0\leq j \leq k-1$. 	Then, we solve for $\nu$ and its $u$ derivatives  by initializing them at $v=v_1$:  for all $0 \leq j \leq k$:\begin{equation}\begin{split}
				&\rd_u^{j}\nu(v_1) = \mathcal{N}_j,
			\end{split}
		\end{equation} where $\mathcal{N}_0<0$ is imposed by assumption and $\mathcal{N}_j$ for $1 \leq j \leq k$ are determined by the following lapse-normalization $u$-condition for all  $1 \leq j \leq k$, originating from \eqref{RaychU}:\begin{equation}
			\rd_u^{j} \nu(v_1)=  -   	\rd_u^{j-1} [ r^{-1} \xi^2] (v_1),
		\end{equation} recalling that, by induction, the RHS is determined by the knowledge of \eqref{du.xi.v1}.
		
		Then, defining $n:= r\nu$ and
		\begin{equation}
			(n(v_1), \rd_u n(v_1) , ..., \rd_u^{k} n(v_1) ) = (n_0, n_1,..., n_{k}),
		\end{equation} where $n_0= 2M_A \mathcal{N}_0$, $n_1= 2M_A \rd_u \nu(v_1) + \mathcal{N}_0^2$ and similarly, the other $n_j$'s are uniquely determined from the  $\mathcal{N}_j$'s in this fashion, using the recurrence relation induced by $\rd_u n = r\rd_u \nu + \nu^2$.  Then we solve the  following ODE for all $1\leq j \leq k$: \begin{equation}
			\rd_v n  = -1,\  \rd_v \rd_u^{j} n =0,
		\end{equation} which is trivially solved by \begin{equation}
			n(v) = n_0 - [v-v_1],\   \rd_u^{j} n(v) = n_j,
		\end{equation} and again $\nu$ and its $\rd_u$ derivatives can be obtained straightforwardly using  $\rd_u n = r\rd_u \nu + \nu^2$. We note that $\nu(v)<0$ for all $v_1 \leq v \leq v_2$ (since $n_0<0$).  This gives, in particular, \begin{equation}\label{nu.v2}
			-\nu(v_2) = \frac{2M_A |\mathcal{N}_0| + [v_2-v_1]}{r(v_2)}.
		\end{equation}

		Now, denote $F(v)= r(v) [\rd_v \phi]^2(v)$: by \eqref{phi.boundary.conditions} and \eqref{lambda.raych}, the jets of order $k-1$ of $F(v)$ are fixed both at $v=v_1$ and $v=v_2$, and $F(v) \geq 0$ is non-identically zero.
		
		Note that,  \begin{equation}\begin{split}
				& -\lambda(v_2)  = \int_{v_1}^{v_2} F(v ) dv,\\ & \Delta r:= R_A-r(v_2)=   \int_{v_1}^{v_2}[\int^{v}_{v_1} F(v' ) dv'] dv,  \end{split}
		\end{equation}
		Then, since $2M(v_2)-r(v_2) = r(v_2)\nu(v_2)\lambda(v_2)$ \blue{as a consequence of $\Omega^2(v)=1$}, we have, also using \eqref{nu.v2}  and denoting $\Delta v= v_2-v_1$ and $\Delta r = 2M_A - r(v_2)>0$:
		\begin{equation}\label{M2.eq}
			M(v_2)= M_A  \left[1 + |\mathcal{N}_0| |\lambda|(v_2)\right] + |\lambda|(v_2) \Delta v- \Delta r 
		\end{equation} 
		
		Clearly, we have, since $F\geq 0$ \begin{equation}\label{inequality}
			\Delta r \leq   [\Delta v] |\lambda|(v_2),
		\end{equation} hence $M(v_2)>M_A$.  Let us fix $F$, respecting the above conditions. By choosing $\Delta v$ sufficiently small, we can arrange that $r_S<r(v_2) < 2M_A$, since $0<r_S<2M_A$. By \eqref{inequality}, one can arrange $F$  in a way that $$ |\lambda|(v_2) \Delta v- \Delta r < M_S-M_A.$$ Then, $|\lambda|(v_2)$ and $\Delta r$ being fixed in this way, and invoking \eqref{M2.eq}, we can choose $\mathcal{N}_0$  such that \begin{equation}
			M(v_2) = M_S.
		\end{equation}
		
		The sphere $\{v=v_2\}$ is $C^k$ Schwarzschild with area-radius $r_S<r(v_2)<2M_A$ and  Hawking mass $M_S$. This Schwarzschild trapped sphere can be continued to the future by an outgoing Schwarzschild cone with constant mass $M=M_S$ up to the sphere $r=r_S$, which concludes the proof of proposition.

	\end{proof}
	\blue{Then, similarly to the pair Proposition~\ref{spacelike.gluing.thm}/Corollary~\ref{cor.pulse.1}, we devise a refinement of Proposition~\ref{uncharged.null.gluing.prop} that shows that the short pulse structure  introduced  for spacelike gluing in  Proposition~\ref{spacelike.gluing.thm} is preserved in the characteristic gluing of Proposition~\ref{uncharged.null.gluing.prop}. Similarly to Corollary~\ref{cor.pulse.1}, Corollary~\ref{cor.pulse.2} below will not be used in the proof of Theorem~\ref{uncharged.gluing.thm} but in that of Theorem~\ref{uncharged.thm} instead.

		\begin{cor}\label{cor.pulse.2} Under the assumptions of Proposition~\ref{uncharged.null.gluing.prop}, assume that  $\varphi_v^{1}$, $\varphi_v^{2}$, $\varphi_v^{3}$  to satisfy \eqref{key.theta.bound}, \eqref{large.2nd.deriv}, \eqref{silly.3nd.deriv}. Then, for sufficiently small $M_A>0$, the conclusion of  Proposition~\ref{uncharged.null.gluing.prop} holds on an outgoing cone $C_G'$ and one can arrange that $\varphi_u^{1}$ satisfies \eqref{silly.condition} with $M_2=M_A$, where $C>0$ is  defined in the statement of Corollary~\ref{cor.pulse.1}.
			
			\magenta{Moreover, one can also arrange for the  short pulse bounds of Corollary~\ref{cor.pulse.1} to hold, namely there exists $C'>0$ independent of $M_A$ such that, in the $v$-gauge determined by \eqref{gauge.unit.lapse},
				for all $v_1\leq v \leq v_2$ \begin{equation}\label{shortA.null}
					|\rd_u \phi|_{|C_G'}(v),\  	r|\rd_v \phi|_{|C_G'}(v) \leq  C',
				\end{equation} and, additionally, if $v_2-v_1> \epsilon R_2^2$, then for all  $v_1+ 	\teal{\ep}R_2^2\leq v \leq v_2$\begin{equation}\label{shortB.null}
					|\rd_v \phi|_{|C_G'}(v) \leq  C'.
				\end{equation} 
			}
		\end{cor}
		
		\begin{proof}
			\magenta{The proof is essentially the same as that of Proposition~\ref{uncharged.null.gluing.prop}, but in addition we need to show that \eqref{shortA.null}, \eqref{shortB.null} hold. Recall that, in the proof of Proposition~\ref{uncharged.null.gluing.prop}, we fix $\xi(v=v_2)=0$, hence by \eqref{Field2} \begin{equation}\label{calc}
					\rd_u \phi(v) =\blue{-} r^{-1}(v)\int_{\blue{v}}^{\blue{v_2}} \nu \rd_v \phi(v') dv'= r^{-1}(v) \int_{\blue{v}}^{\blue{v_2}}\frac{2M_A |\mathcal{N}_0| + v'-v_1}{r(v')} \rd_v \phi(v') dv'.
				\end{equation}
				
				Now, note that, as in the proof of Corollary~\ref{cor.pulse.1}, we can arrange, as a consequence of \eqref{key.theta.bound}, \eqref{large.2nd.deriv}, \eqref{silly.3nd.deriv} that $\rd_v \phi$ is a short pulse, in the sense that for all $v_1 + 	\teal{\ep} R_2^2 \leq v\leq v_2$ \begin{equation*}
					|\rd_v \phi|(v) \leq C,
				\end{equation*} where $C>0$ is a $R_2$-independent constant,  while (assuming $R_2$ small enough), for all $v_1 \leq v \leq v_1 + 	\teal{\ep} R_2^2$  \begin{equation*}
					|\rd_v \phi|(v) \leq C R_2^{-1}.
				\end{equation*} This combined with \eqref{calc}  immediately gives that
				$	\rd_u \phi(v) = O(1)$ as desired, since $M_A=M_2=\frac{R_2}{2}$.

			}
		\end{proof}
	} 
	
	We finally turn to the proof of Theorem~\ref{uncharged.gluing.thm}. \blue{Note that that Corollary~\ref{cor.pulse.1} and Corollary~\ref{cor.pulse.2} will not be used in the proof of Theorem~\ref{uncharged.gluing.thm}, although they will be used in the proof of Theorem~\ref{uncharged.thm} to construct initial data strictly to the past of the event horizon}, \magenta{which requires tracking the short pulse quantitatively}.
	
	\begin{proof}
		
		\blue{We now prove Theorem~\ref{uncharged.gluing.thm}.} Let $R_A>R>2M$ and $0<R_S^{\mathcal{T}}<R_A$, $M_S>\frac{R_S^{\mathcal{T}}}{2}$, and the regular sphere $\mathbb{S}_R \in \R$. Let $\mathcal{N}_0<0$ and $(\Phi,\Phi_u^{1},...,\Phi_u^{k}, \Phi_v^{1},...,\Phi_v^{k})$ a list of  $(2k+1)$ real numbers such that $\Phi_v^{1}$ satisfies \begin{equation}\label{theta.bound.final}
			1<R_A^2[\Phi_v^{1}]^2<1+\frac{1}{k-1}.
		\end{equation}
		
		By applying Proposition~\ref{spacelike.gluing.thm} with $(\mathbb{S}_1,r_1,M_1)=(\mathbb{S}_R,R,M)$, $M_2=\frac{R_A}{2}$, one can glue $\mathbb{S}_R$ spatially to $\mathbb{S}_A\in \A$, an apparent horizon sphere  with area-radius $R_A$, Hawking mass $\frac{R_A}{2}$ and such that, in the lapse-normalized gauge $\Omega^2_{|\mathbb{S}_A}=1$,  $\rd_v^{j}\Omega^2_{|\mathbb{S}_A}=\rd_u^{j}\Omega^2_{|\mathbb{S}_A}=0$ for all $0\leq j \leq k$, the following holds:\begin{equation}
			(\phi_{|\mathbb{S}_A}, \rd_u\phi_{|\mathbb{S}_A},..., \rd_u\phi_{|\mathbb{S}_A}^{k},\rd_v\phi_{|\mathbb{S}_A},..., \rd_v\phi_{|\mathbb{S}_A}^{k})=(\Phi,  \Phi^{1}_u,... \Phi^{k}_u, \Phi^{1}_v,... \Phi^{k}_v).
		\end{equation}  
		
		Then, we can apply Proposition~\ref{uncharged.null.gluing.prop} with $M_A=\frac{R_A}{2}$, $r_S= R_S^{\mathcal{T}}$, $\ep_S = \frac{2M_S}{r_S}-1>0$ and  $( \varphi_v^{1},...,\varphi_v^{k})=( \Phi_v^{1},...,\Phi_v^{k})$. We can then choose $\mathcal{N}_0<0$ and  $( \Phi,\Phi_u^{1},...,\Phi_u^{k})$  so that the conclusion of Proposition~\ref{uncharged.null.gluing.prop} holds, and thus $\mathbb{S}_A$ can be spatially glued to $\mathbb{S}_S^{\T}$ a Schwarzschild trapped sphere of area-radius $ R_S^{\mathcal{T}}$ and Hawking mass $M_S$. This concludes the proof of Theorem~\ref{uncharged.gluing.thm}.

	\end{proof}

	\subsubsection{Global uncharged spacetime constructions}\label{global.uncharged.subsec}

	\blue{In this section, we complete the proof of Theorem~\ref{uncharged.thm} and Corollary~\ref{uncharged.cor} as an application of Theorem~\ref{uncharged.gluing.thm}. To this effect, we first prove a global existence result (Proposition~\ref{lemma.Cauchy.christo}) towards the past for initial data prescribed as in Corollary~\ref{cor.pulse.1}.  Global existence is obtained within the (past) domain of dependence exploiting the fact that the initial data consists of an \emph{ingoing pulse} (by which we mean it moves away from the center in the past direction, and towards the center in the future direction). While we formulate this as a global existence result towards the past to apply it directly to the initial data of Corollary~\ref{cor.pulse.1}, there is an analogous result towards the future if the data is  assumed to be outgoing. Before turning to global existence, we recall a standard quantitative local-existence result for \eqref{1.1}--\eqref{5.1} in spherical symmetry (towards both the past or future indiscriminately). Note that the initial data is set on a ball of finite but non-small area-radius $R_{\infty}>0$ to avoid the $r\rightarrow \infty$ limit.
		\begin{lemma}\label{LWP}
			Let $\Sigma'=\{u+v=0\}$ spherical hypersurface diffeomorphic to a ball in $\RR^3$ parametrized by $\rho \in [0,\rho_{\infty}]$,  we define the spherical eikonal functions $(u,v)$ by their Cauchy data $v_{|\Sigma'}(\rho) = \rho=-u_{|\Sigma'}(\rho) $, and we set initial data on $\Sigma'$ as $(\rd_u r,\rd_v r, \rd_u \phi, \rd_v \phi)$, and denote $r(\rho_{\infty})= R_{\infty}>0$.
			
			Assume that $N \leq N_{max}$, where $N$ is a $C^1$-initial data norm defined as \begin{equation}
				N= \max \{  \sup_{\Sigma'} |\rd_u r|,\ \sup_{\Sigma'} |\rd_v r|,\ \sup_{\Sigma'} |\rd_u \phi|,\ \sup_{\Sigma'} |\rd_v \phi|,\ R_{\infty}\}.
			\end{equation} Then, there exists $\tau(N_{max})>0$,  such that there exists a unique solution of \eqref{1.1}--\eqref{5.1} with $F\equiv 0$, and the above prescribed initial data in the spacetime region $\mathcal{D}_{N_{max}}:=\mathcal{D}_{N_{max}}^{F}\cup  \mathcal{D}_{N_{max}}^{P}$, where $$\mathcal{D}_{N_{max}}^{F}=\{ 0\leq u+v \leq \tau(N_{max}),\ v\leq \rho_{\infty}\},$$ $$\mathcal{D}_{N_{max}}^{P}=\{ -\tau(N_{max}) \leq u+v \leq 0,\ u \geq -\rho_{\infty}\}.$$ Moreover, if $\Sigma'$ does not contain any trapped or anti-trapped spheres, neither does $\mathcal{D}_{N_{max}}$.
		\end{lemma}
		\begin{proof}
			This is a standard local well-posedness argument in $C^1$-regularity: note that $\mathcal{D}_{N_{max}}^{F}$ is a subset of the future domain of dependence of $\Sigma'$, while  $\mathcal{D}_{N_{max}}^{P}$ is a subset of the past domain of dependence of $\Sigma'$.
	\end{proof}}
	\noindent \blue{Then, we turn to a global existence result in the past-direction for ingoing initial data. Note indeed that in Proposition~\ref{lemma.Cauchy.christo}, only $\rd_v \phi$ is allowed to be large, and its largeness is confined to a small region of initial data. Contrary to Lemma~\ref{LWP},  we will pose initial data on a ball of small area-radius $R>0$ in Proposition~\ref{lemma.Cauchy.christo} below.  
		\begin{prop}\label{lemma.Cauchy.christo} Let $C>0$. For all $R>0$,  we define $\Sigma'=\{u+v=0\}$ spherical hypersurface diffeomorphic to a ball in $\RR^3$ parametrized by $\rho \in [0,\rho_{R}]$, and we define the spherical eikonal functions $(u,v)$ by their Cauchy data $v_{|\Sigma'}(\rho) = \rho=-u_{|\Sigma'}(\rho) $, and we set initial data on $\Sigma'$ as $(\rd_u r=\nu(\rho),\rd_v r=\lambda(\rho), \rd_u \phi=X(\rho), \rd_v \phi=T(\rho))$, and assume that $r(\rho_{R})= R$ and that   $\lambda(\rho)$ and $\nu(\rho)$ are continuous functions on  $[0,\rho_R]$  such that\begin{equation}
				\label{lambda.hyp}	\lambda(\rho)>0, \text{ for all } \ \rho \in [0,\rho_R), \text{ and } (\lambda,\lambda')(\rho_R) \neq (0,0),
			\end{equation}
			\begin{equation}\label{nu.hyp}
				\nu(\rho)<0, \text{ for all } \ \rho \in [0,\rho_R].
			\end{equation}
			\noindent 
			\teal{Let $\ep\in (0,1]$.} We assume that following bounds for $T(\rho)$ and $X(\rho)$:  for $0\leq \rho \leq \rho_R- 			\teal{\ep}R^2$,
			\begin{equation}\begin{split} &\label{short1}
					|X|(\rho),\ |T|(\rho) \leq C\\ & \end{split}
			\end{equation} and for all $\rho_R- 			\teal{\ep}R^2\leq \rho \leq \rho_R$,  \begin{equation}
				\label{short2}|X|(\rho) \leq C,
			\end{equation}
			\begin{equation}
				\label{short3}	|T|(\rho) \leq C\cdot R^{-1}.
			\end{equation}

			\noindent Then, assuming that 
			$R>0$ is small enough,  there exists a unique solution of \eqref{1.1}--\eqref{5.1} with $F\equiv 0$, and the above prescribed initial data in $\mathcal{D}=\{ v_{\Gamma}(u)\leq v \leq -u,\ u \geq -\rho_R\}$,  the  past domain of dependence of $\Sigma'$.
			
			\noindent	  \magenta{We also have the analogous  statement for the characteristic initial value problem if $C_{out}$ is an outgoing cone emanating from the center $\Gamma$, and $\underline{C}_{in}$ whose future end-sphere has radius $R>0$ and coincides with $C_{out}$'s future end-sphere. We assume  $C_{out}=\{u=u_2,\ v_{\Gamma}(u_1) \leq v\leq v_2\}$ and  $\underline{C}_{in}=\{v=v_2,\ u_1 \leq u \leq u_2\}$ so that for all $u\in [u_1,u_2]$ \begin{equation}\label{data.charact}
					|\rd_u \phi|_{|\underline{C}_{in}}(u,v_2),\ |\nu|_{|\underline{C}_{in}}(u,v_2) \leq  C,
				\end{equation} and for  $v\in [v_{\Gamma}(u_1),v_2]$ 
				\begin{equation}\label{data.charact2}
					r|\rd_v\phi|_{|C_{out}}(u_2,v),\ 	|\log(\kappa)|_{|C_{out}}(u_2,v) \leq  C,
				\end{equation} and finally\teal{, if $v_2> v_{\Gamma}(u_1) + \ep R^2$, then} for  $v\in [v_{\Gamma}(u_1),v_2- 			\teal{\ep}R^2]$   \begin{equation}\label{data.charact3}
					|\rd_v\phi|_{C_{out}}(u_2,v) \leq  C.
				\end{equation} 
				Assume $|u_2-u_1|\leq \delta$, where $\delta>0$ is  a  sufficiently small constant (independent of $R$ and $\teal{\ep}$), and that $R$ is sufficiently small. Then, there exists  a unique solution of \eqref{1.1}--\eqref{5.1} with $F\equiv 0$, and the above prescribed initial data in $\mathcal{D}_{\mathcal{C}}=\{ v_{\Gamma}(u)\leq v \leq v_2,\  u_1\leq u \leq u_2\}$, which is the  past domain of dependence of $C_{out}\cup \underline{C}_{in} $. }
			
		\end{prop} 
		
		\begin{proof} Proceeding as in the proof of \magenta{Proposition~\ref{spacelike.gluing.thm}} and exploiting \eqref{lambda.hyp} and \eqref{nu.hyp} (this is the only place in the proof \magenta{where} the signs of $\lambda_{|\Sigma'}$ and $\nu_{|\Sigma'}$ are used), we find that there exists $D>1$ such that \begin{equation}\label{data.Omega}
				D^{-1}\leq |\nu|_{|\Sigma'},\ \Omega^2_{|\Sigma'}\leq D.
			\end{equation}
			
			\noindent We are interested in the past domain of dependence of $\Sigma'$, i.e., the region $\mathcal{D}=\{ v_{\Gamma}(u)\leq v \leq -u,\ u \geq -\rho_R\}$. Let us  isolate the sub-region $\mathcal{D}_{imp}=\{ \rho_R - 			\teal{\ep}R^2 \leq v  \leq -u,\ u \geq -\rho_R\}$, where we claim the ``impulsive behavior'' of the scalar field is localized. We make the following bootstrap assumptions in the region $\mathcal{D}_{imp}$:  \begin{equation}\label{B.imp1}
				|\rd_u \phi| \leq A,
			\end{equation}
			\magenta{	\begin{equation}\label{B.nu}
					[10 D]^{-1}	 \leq |\nu|\leq 10D
			\end{equation}}
			\begin{equation}\label{B.imp2}
				\frac{R}{2}\leq r \leq 2R,
			\end{equation}  for a large constant $A>0$  independent of $R$.  \noindent Then, integrating \eqref{RaychU} in $\mathcal{D}_{imp}$ using  \eqref{B.imp1},  \magenta{\eqref{B.nu}},  \eqref{B.imp2}  gives, for $R$ small enough \begin{equation}
				|\log|(\frac{\kappa(u,v)}{\kappa_{|\Sigma'}(v)})\ls R^2 A^2 \ll 1,
			\end{equation} so \begin{equation}\label{kappa.imp}
				1\ls	\kappa \ls 1.
			\end{equation}Then, integrating \eqref{Radius} under the form $\rd_v\log(-r\nu)= \kappa r^{-1}$ gives
			\begin{equation}
				|\log(-r\nu)(u,v)-\log(-r\nu)_{|\Sigma'}(-u)|\ls R^{-1} \Delta v \ls  R,
			\end{equation} from which we deduce the following estimate, using \eqref{data.Omega}, \eqref{kappa.imp}: for all $(u,v)\in \mathcal{D}_{imp}$:\begin{equation}\label{nu.Omega.imp}
				[4D]^{-1}\leq |\nu|(u,v),\ \Omega^2(u,v)\leq 4D,
			\end{equation}\magenta{which improves on \eqref{B.nu}.} Integrating $\nu= \rd_u r$ in $(u,v)\in \mathcal{D}_{imp}$ gives \begin{equation}
				|r(u,v) - r_{|\Sigma'}(v)| \ls  R^2,
			\end{equation} thus, for 
			$R$ small enough, \eqref{B.imp2} is improved. Similarly, integrating \eqref{Radius} gives:  for all $(u,v)\in \mathcal{D}_{imp}$
			\begin{equation}\label{lambda.est.LWP}
				|\lambda|(u,v) \ls 1.
			\end{equation} 
			
			\noindent	Recall, with the usual notation $\theta=r\rd_v \phi$, $\xi= r\rd_u \phi$, $\lambda=\rd_v r$, $\nu=\rd_u r$: \begin{equation}\begin{split}
					& \rd_u \theta = -\frac{\lambda}{r} \xi,\\ &  \rd_v \xi = -\frac{\nu}{r} \theta.
				\end{split}
			\end{equation} which, upon integrating in $u$ and using \eqref{lambda.est.LWP} gives  for all $(u,v)\in \mathcal{D}_{imp}$ \begin{equation}
				|\theta(u,v) - \theta_{|\Sigma'}(v)| \ls  A \cdot \Delta u \ls A  R^2, \text{ implying }  	|\theta|(u,v) \ls 1.
			\end{equation} Then, upon integrating in $v$
			gives  for all $(u,v)\in \mathcal{D}_{imp}$ \begin{equation}
				|\xi(u,v) - \xi_{|\Sigma'}(u)| \ls  r^{-1}\Delta v \ls   R, \text{ implying }  	|\rd_u \phi|(u,v) \ls |\rd_u \phi|_{|\Sigma'}(u)+ 1  \ls 1,
			\end{equation} which, for $A$ large enough, improves on bootstrap \eqref{B.imp1} and gives well-posedness in the region $\mathcal{D}_{imp}$. 
			
			It remains to show global existence within $\mathcal{D}-\mathcal{D}_{imp}$\magenta{, the region where the solution is ``not large''}. \blue{First}, to obtain global-wellposedness in the spacetime triangle $\{ v_{\Gamma}(u)\leq v\leq -u,\  u \geq -\rho_R + \teal{\ep} R^2\}$, we can invoke Lemma~\ref{LWP} as such: artificially extend the initial data on $\Sigma'\cap [\mathcal{D}-\mathcal{D}_{imp}]$ in a way that the $C^1$ norm of Lemma~\ref{LWP} is not too large, i.e., $N_{max}=O(1)$, then invoke  Lemma~\ref{LWP} to obtain a solution up to $u+v\geq -\tau_{N_{max}}$, where $\tau_{N_{max}}$ is independent of $R$. Then, by taking $R$ small enough if necessary, we see that $\{v_{\Gamma}(u)\leq v\leq-u,\ u \geq -\rho_R + \teal{\ep} R^2 \}\subset \{-\tau_N\leq u+v \leq 0\}$ and by the domain of dependence property, the solution in this region corresponds to that of  the original problem (before extending artificially the initial data). Then, one can obtain a solution in the \magenta{remaining} region $\{v_{\Gamma}(u) \leq v \leq \rho_R -\teal{\ep} R^2,\ -\rho_R \leq u\leq -\rho_R+ \teal{\ep}R^2 \}$ using a similar local well-posedness result, taking advantage of the smallness of the $u$-difference $O( R^2)$ and the fact that the $C^1$ norm of $\phi$ is $O(1)$ on the characteric hypersurfaces $\{u=-\rho_R+\teal{\ep} R^2\} \cup \{v=\rho_R-\teal{\ep} R^2\}$.
			
			Note that by Cauchy stability, there is no anti-trapped spheres in $\mathcal{D}$, and by monotonicity of $\rd_u(r\lambda)<0$, we also have $\mathcal{D}-\Sigma' \subset \mathcal{R}$. \magenta{This concludes the proof of the statement for spacelike initial data; for the  characteristic  initial data  analogue, the proof is identical.} 
	\end{proof}}

	\begin{rmk}
		\blue{	The global well-posedness argument in the proof of Proposition~\ref{lemma.Cauchy.christo} contains two different statements: first, global well-posedness within the region $\mathcal{D}-\mathcal{D}_{imp}$ can (also) be established as a consequence of Christodoulou's global existence result for small BV data \cite{Christo2}, provided we assume slightly more regularity on the initial data and choose $R$ small. However, the Christodoulou BV norm is not small in the sub-region $\mathcal{D}_{imp}$: here, we instead leverage the fact that the solution is ingoing (i.e., $\theta$ is of size $1$, but $\xi$ is $O(R)$ small) combined with the direction of evolution: we are solving towards the past. It is crucial to note that global existence towards the future is not always true for the initial data prescribed in Proposition~\ref{lemma.Cauchy.christo}. For example, in the context of Proposition~\ref{spacelike.gluing.thm}, the endsphere $\{\rho=\rho_R\}$ of the initial surface $\Sigma'$ is marginally trapped and it can be proven that a spacelike singularity $\mathcal{S}=\{r=0\}$ will form in its future domain of development \cite{Christo1,Christo2}.}
	\end{rmk}

	\blue{Finally}, we turn to the proof of Theorem~\ref{uncharged.thm} and Corollary~\ref{uncharged.cor}.
	\begin{proof}~\begin{enumerate}[Step 1.]
			\item \label{Step1}\blue{[First construction with event horizon intersecting the initial hypersurface]. We start with a  simpler construction where the event horizon $\HH$ intersects the initial data $\magenta{\Sigma_{0}}$}. Let $(\mathcal{M}_L,g_L,\phi_L)$, inducing initial data on $\Sigma_L$ and fix $\mathbb{S}_R$ to be a non-central sphere on $\Sigma_L$, with its induced $C^k$ sphere data, in particular its area-radius $R$ and Hawking mass $M$ with $R>2M$. By assumption, $\mathbb{S}_R \subset \mathcal{R}$. Then by Theorem~\ref{uncharged.gluing.thm}, we can glue $\mathbb{S}_R$ spatially to $\mathbb{S}_A$ \blue{along some hypersurface $\Sigma_G$}, which in turn can be glued characteristically to  a trapped Schwarzschild sphere  $\mathbb{S}_S^{\T}$, see Figure~\ref{fig:constbif1}.  
			\blue{We choose for some small $\eta>0$: \begin{equation}\begin{split}
						&R_S^{\mathcal{\tau}}= [1+\eta] R,\\ &2M_S = [1+\eta]R_S^{\mathcal{\tau}}= [1+\eta]^2 R,\\ & R_A = [1+\eta] [2M_S]=[1+\eta]^2R_S^{\mathcal{\tau}}=[1+\eta]^3 R.
					\end{split}
				\end{equation}	 Let $\Sigma_G$ the spacelike hypersurface connecting $\mathbb{S}_R$ to $\mathbb{S}_A$ and $C_G$ the outgoing light cone connecting  $\mathbb{S}_A$ to $\mathbb{S}_S^{\T}$, see Figure~\ref{fig:unchargedgluing} \magenta{and Figure~\ref{fig:constbif1}}. Note that $C_G$ is trapped, except at its past endpoint $\mathbb{S}_A$ \magenta{(i.e., $C_G-\mathbb{S}_A\subset \T$)}. Extend $\mathbb{S}_S^{\T}$  to a past-directed ingoing light cone $\underline{C}_S$ which is exactly Schwarzschild with Hawking mass $M_S$ and past endpoint $\mathbb{S}_S^{reg}$, \magenta{regular sphere} of area-radius $R_A= 2M_S(1+\eta)$.
				
				Choosing $\eta>0$ small shrinks the size of $\underline{C}_S$ to $0$. Let us set  bicharacteristic  initial data for \eqref{1.1}--\eqref{5.1} on the $C_G \cup \underline{C}_S$ and solve backwards. For   $\eta$ small enough, invoking local-wellposedness, we obtain a solution of  \eqref{1.1}--\eqref{5.1} in a whole open spacetime rectangle $\mathcal{D}_A$ bounded to the past by an ingoing cone $\underline{C}_A$ emanating from  $\mathbb{S}_A$ in the past-direction and an outgoing cone $C_{-\eta}$  emanating from  $\mathbb{S}_S^{reg}$  in the past-direction, see Figure~\ref{fig:constbif1}.
				
				\begin{figure}	\begin{center}
						\includegraphics[width=120mm, height=50 mm]{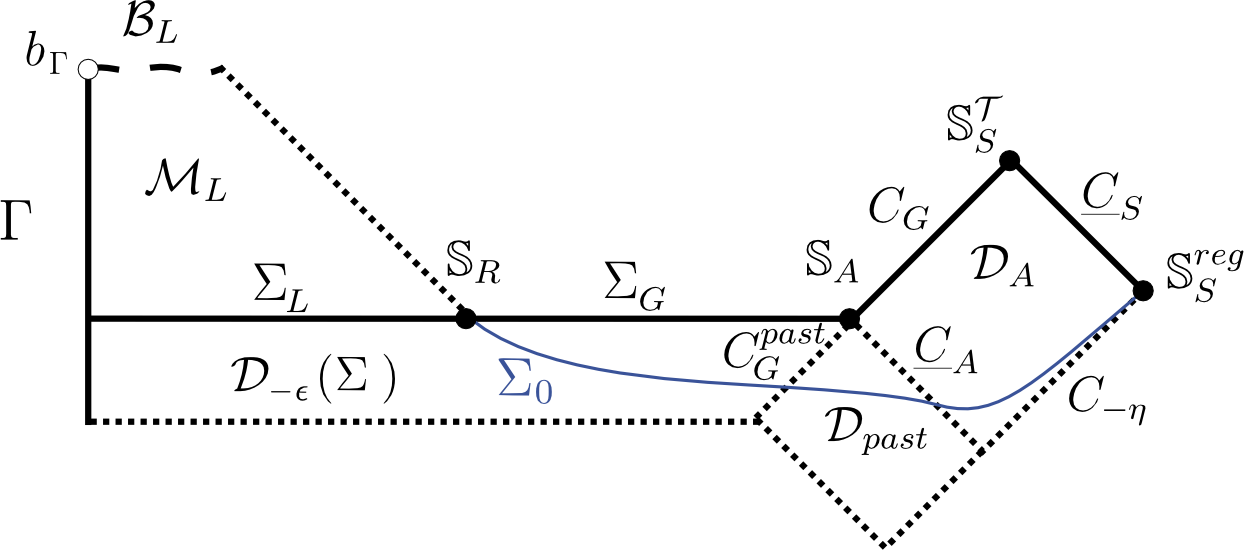}
					\end{center}
					\caption{The \magenta{first} construction strategy   in the proof of Theorem~\ref{uncharged.thm}. $\magenta{\Sigma_0}$ spacelike is constructed to be free of any trapped or anti-trapped spheres. \magenta{ Note, however, that the event horizon $\mathcal{H}^+$ will inevitably intersect $\Sigma_{0}$.}}\label{fig:constbif1}\end{figure}
				
				\noindent By Cauchy stability, $\rd_u r <0$ on $\mathcal{D}_A$, i.e., $\mathcal{D}_A$  does not contain any anti-trapped sphere.
				
				Moreover, by \eqref{RaychV}, $C_{-\eta} \subset \mathcal{R}$. Furthermore, by the monotonicity property $\rd_u \lambda <0$ from \eqref{Radius} and the fact that $\mathbb{S}_A \in \mathcal{A}$, we also have   $\underline{C}_A\magenta{-\mathbb{S}_A} \subset \mathcal{R}$ and thus $\mathcal{D}_A\magenta{-\mathbb{S}_A} \subset \R$.

				\noindent Now, we can solve  \eqref{1.1}--\eqref{5.1} backwards with \magenta{spacelike} initial data on $\magenta{\Sigma:=\Sigma_L}\cup\Sigma_G$: we obtain a solution of \eqref{1.1}--\eqref{5.1} on a small domain $\magenta{\mathcal{D}_{-\ep}(\Sigma)}$ emanating from $\Sigma$  in the past-direction \magenta{by Lemma~\ref{LWP}}. In particular, we denote $C_G^{past}\subset \magenta{\mathcal{D}_{-\ep}(\Sigma)}$ a small outgoing cone emanating from $\mathbb{S}_A$   in the past-direction (but not including its future endpoint $\mathbb{S}_A$). We note that, because of \eqref{RaychV} again, $C_G^{past}\subset \mathcal{R}$, and again by Cauchy stability,   $\rd_u r <0$ on $\magenta{\mathcal{D}_{-\ep}(\Sigma)}$, i.e., $\magenta{\mathcal{D}_{-\ep}(\Sigma)}$  does not contain any anti-trapped sphere.
				
				We can then pose characteristic initial data on  $C_G^{past}\cup \underline{C}_A$, and solve \eqref{1.1}--\eqref{5.1} backwards (see again Figure~\ref{fig:constbif1}): by local-wellposedness (\magenta{Lemma~\ref{LWP}}), taking $C_G^{past}$ shorter if necessary, we obtain a solution in a small spacetime rectangle $\mathcal{D}_{past}$ emanating from  $C_G^{past}\cup \underline{C}_A$  in the past-direction,  and yet again by Cauchy stability,   $\mathcal{D}_{past}$  does not contain any anti-trapped sphere. Moreover, by \eqref{RaychV} and the fact that $\underline{C}_A\subset \mathcal{R}$, we know that  $\mathcal{D}_{past} -  \mathbb{S}_A \subset \mathcal{R}$.

				By Cauchy stability, one can then  construct a  spacelike hypersurface $\magenta{\Sigma_{0}} \subset \mathcal{R}$ connecting $\mathbb{S}_R$ to   $\mathbb{S}_S^{reg}$, passing through $\magenta{\mathcal{D}_{-\ep}(\Sigma)}$, $\mathcal{D}_{past}$, $\mathcal{D}_A$ and , as depicted in Figure~\ref{fig:constbif1}.  We can then extend $\mathbb{S}_S^{reg}$ to its past as the Schwarzschild metric exactly, and extend $\Sigma_0$ into an asymptotic flat (in fact, exactly Schwarzschild) spacelike hypersurface}: \teal{then, the event horizon $\mathcal{H}^+$ of this spacetime will intersect $\Sigma_0$.} 
			
			\item  \blue{[Choosing a spacelike hypersurface to the future and gluing of spheres with small radii]. In the construction of Step~\ref{Step1}}, $\mathcal{H}^+$ intersects $\magenta{\Sigma_{0}}$, so to conclude the proof, we need to construct another spacelike hypersurface $\magenta{\Sigma_{0}}'$ strictly to the past of $\Sigma$ such that  $\mathcal{H}^+$ does not intersect $\magenta{\Sigma_{0}}'$\teal{, see Figure~\ref{fig:constbif2}}. 
			
			The key is that, with no loss of generality, one can replace $\Sigma_L$ by $\Sigma_L'$, another spacelike hypersurface in $(\mathcal{M}_L,g_L,\phi_L)$ to the future of $\Sigma_L$, and arbitrarily close to the terminal boundary of $\mathcal{M}_L$. Then, since $b_{\Gamma}$ is a first singularity by assumption,  we can integrate \eqref{Radius} (see \cite{Kommemi}) to show that \begin{equation}\label{r.bGamma.0}
				\lim_{p \rightarrow b_{\Gamma}} r(p) = 0.
			\end{equation} 
			We then revisit \blue{Step~\ref{Step1}} with $\Sigma_L'$ instead of $\Sigma_L$ and denote $\mathbb{S}_R'$ the regular \blue{(left-most)} uncharged sphere to be glued, with area-radius $R'>0$ \magenta{and mass $0<M'<\frac{R'}{2}$}. 
			By taking $\Sigma_L'$ sufficiently close to the terminal boundary of $\mathcal{M}_L$, one can always choose $R'$ to be arbitrarily small while ensuring that $\mathbb{S}_R'$ is in the strict future of $\underline{C}_{\Gamma}$ (the ingoing light cone emanating from $b_{\Gamma}$ towards its past), as a consequence of \eqref{r.bGamma.0} \magenta{(see Figure~\ref{fig:constbif3})}.  
			\blue{In what follows, our objective is to glue $\mathbb{S}_R'$ to $\mathbb{S}_A$, an apparent horizon sphere of area-radius $R_A> R'$ along the spacelike hypersurface $\magenta{\Sigma'_G}$ via a short pulse. \magenta{Defining $\Sigma'= \Sigma_L'\cup \Sigma_G'$,} the additional short pulse information will allow us to obtain a solution in $\mathcal{D}(\Sigma')$, the past domain of dependence\footnote{\magenta{Note that, in Step~\ref{Step1}, where we were not using any short pulse information, we could only obtain a small domain $\mathcal{D}_{-\ep}(\Sigma)$ by local existence and the construction leaves no space for a hypersurface $\Sigma_0'$ to the past of the event horizon, see Figure~\ref{fig:constbif1}.}} of $\Sigma'$,	 see Figure~\ref{fig:constbif3}}.
			\begin{figure}	\begin{center}
					\includegraphics[width=80 mm, height=70 mm]{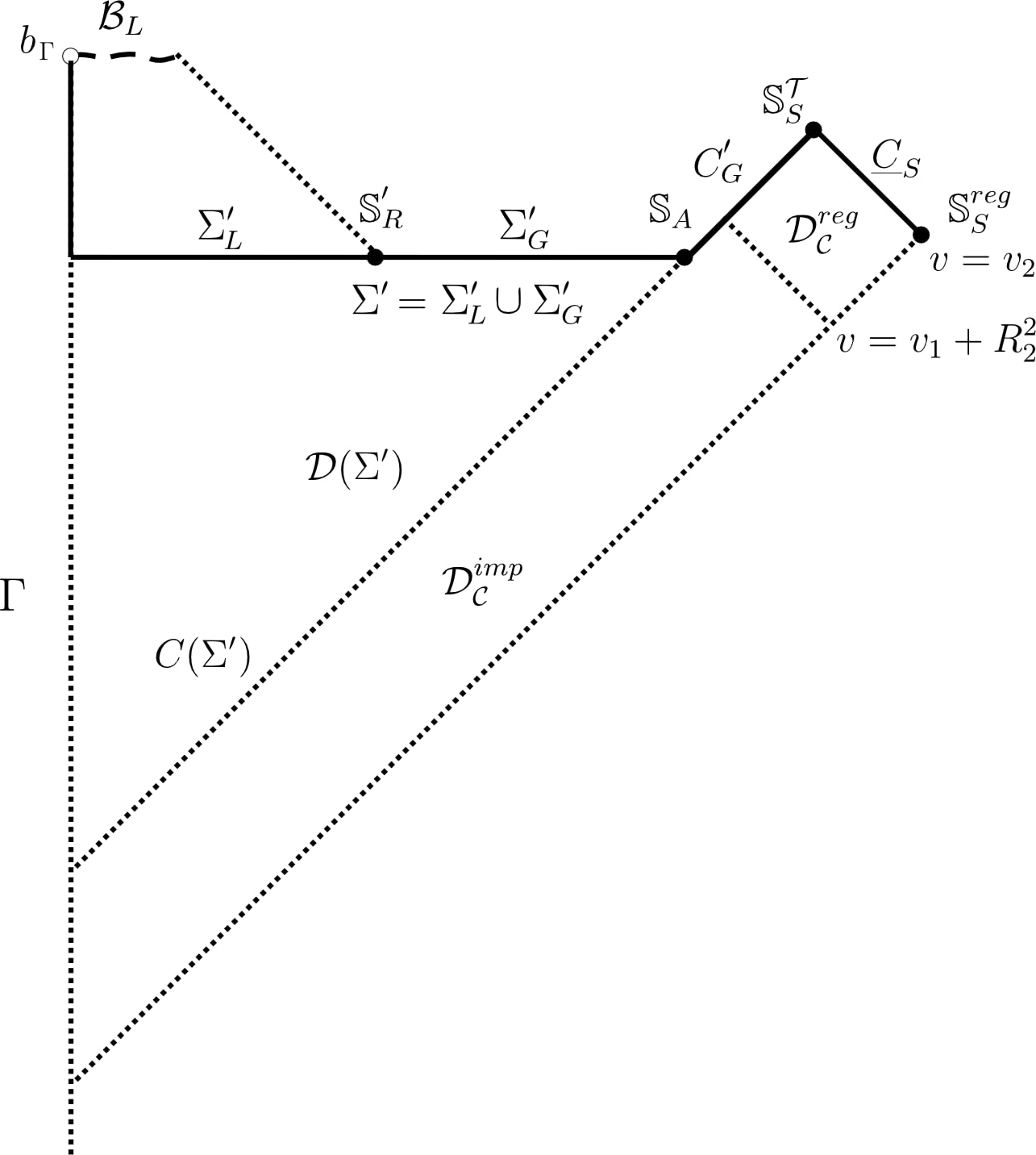}
				\end{center}
				\caption{The detailed implementation  of the proof of Theorem~\ref{uncharged.thm}. \magenta{With a quantitative ingoing short pulse near the event horizon sphere $\mathbb{S}_A$, we can solve backwards globally towards the past into $\mathcal{D}(\Sigma') \cup \mathcal{D}_{\mathcal{C}}^{imp}\cup \mathcal{D}_{\mathcal{C}}^{reg}$.}}\label{fig:constbif3}\end{figure}
			
			\item\blue{[Imposing short pulse initial data, invoking Corollary~\ref{cor.pulse.1} and Corollary~\ref{cor.pulse.2}].} \blue{We proceed as in Step~\ref{Step1} via spacelike-characteristic gluing, but instead of invoking \magenta{Theorem~\ref{uncharged.gluing.thm} (which relies on Proposition~\ref{spacelike.gluing.thm} and Proposition~\ref{uncharged.null.gluing.prop} for spacelike/characteristic gluing respectively), we   use a short pulse in the quantitative sense, invoking   Corollary~\ref{cor.pulse.1} for spacelike gluing and Corollary~\ref{cor.pulse.2} for characteristic gluing.}

				In more details, we choose, as in Step~\ref{Step1}, for some small $\eta>0$: \begin{equation}\begin{split}
						&R_S^{\mathcal{\tau}}= [1+\eta] R',\\ &2M_S = [1+\eta]R_S^{\mathcal{\tau}}= [1+\eta]^2  R',\\ & R_A = [1+\eta] 2M_S=[1+\eta]^2R_S^{\mathcal{\tau}}=[1+\eta]^3  R',
					\end{split}
				\end{equation}
				\blue{	and we now revisit the proof of Theorem~\ref{uncharged.gluing.thm}. Let $\mathcal{N}_0<0$ and $(\Phi,\Phi_u^{1},...,\Phi_u^{k}, \Phi_v^{1},...,\Phi_v^{k})$ a list of  $(2k+1)$ real numbers such that $\Phi_v^{1}$, $\Phi_v^{2}$, $\Phi_v^{3}$, $\Phi_u^{1}$ satisfy the short pulse assumptions \eqref{key.theta.bound}, \eqref{large.2nd.deriv}, \eqref{silly.3nd.deriv}, \eqref{silly.condition}.  
					
					By applying Corollary~\ref{cor.pulse.1} with $(\mathbb{S}_1,r_1,M_1)=(\mathbb{S}^{'}_R,R',M')$, $M_2=\frac{R_A}{2}$\magenta{, $R_A=R_2$,} \teal{ and $\ep=1$,} one can glue $\mathbb{S}_R'$ spatially to $\mathbb{S}_A\in \A$, an apparent horizon sphere  with area-radius $R_A$, Hawking mass $\frac{R_A}{2}$ \magenta{along an hypersurface $\Sigma'_G$} and such that, in the lapse-normalized gauge $\Omega^2_{|\mathbb{S}_A}=1$,  $\rd_v^{j}\Omega^2_{|\mathbb{S}_A}=\rd_u^{j}\Omega^2_{|\mathbb{S}_A}=0$ for all $0\leq j \leq k$, the following holds:\begin{equation}
						(\phi_{|\mathbb{S}_A}, \rd_u\phi_{|\mathbb{S}_A},..., \rd_u\phi_{|\mathbb{S}_A}^{k},\rd_v\phi_{|\mathbb{S}_A},..., \rd_v\phi_{|\mathbb{S}_A}^{k})=(\Phi,  \Phi^{1}_u,... \Phi^{k}_u, \Phi^{1}_v,... \Phi^{k}_v),
					\end{equation}   \magenta{and moreover the short pulse properties \eqref{shortA}, \eqref{shortB} hold throughout $\Sigma'=\Sigma_L'\cup \Sigma_G'$.}

					Then, we can apply Corollary~\ref{cor.pulse.2} with $M_A=\frac{R_A}{2}$, $r_S= R_S^{\mathcal{T}}$, $\ep_S = \frac{2M_S}{r_S}-1>0$ and  $( \varphi_v^{1},...,\varphi_v^{k})=( \Phi_v^{1},...,\Phi_v^{k})$. We can then choose $\mathcal{N}_0<0$ and  $( \Phi,\Phi_u^{1},...,\Phi_u^{k})$  so that the conclusion of Corollary~\ref{cor.pulse.1} holds, and thus $\mathbb{S}_A$ can be spatially glued to $\mathbb{S}_S^{\T}$ a Schwarzschild trapped sphere of area-radius $ R_S^{\mathcal{T}}$ and Hawking mass $M_S$ \magenta{along a null cone $C_G'
						=\{u_G'\} \times [v_1,v_2]$ and \eqref{shortA.null}, \eqref{shortB.null} hold  on $C_G'$.}
					
			}}

			\item \blue{[Solving backwards-in-time invoking Proposition~\ref{lemma.Cauchy.christo} and completion of the proof of Theorem~\ref{uncharged.thm}].}
			\magenta{Differently from Step~\ref{Step1}, we then invoke Proposition~\ref{lemma.Cauchy.christo} to
				solve backwards for \eqref{1.1}--\eqref{5.1} with induced spacelike initial data on $\Sigma':=\Sigma_L'\cup \Sigma_G'$, viewed as the interior of a ball of radius $R_A= [1+\eta^3] R'$, and since $R'$ can be chosen to be arbitrarily small, the assumptions of Proposition~\ref{lemma.Cauchy.christo} are indeed satisfied and we obtain}	
			existence within $\mathcal{D}(\Sigma')$ defined as the whole past domain of dependence 
			of $\Sigma'$, and, \blue{as a consequence of Proposition~\ref{lemma.Cauchy.christo}, $\mathcal{D}(\Sigma')$ does not contain any trapped or anti-trapped sphere. We denote $C(\Sigma')$, the outgoing cone which is the past boundary of $\mathcal{D}(\Sigma')$, see \magenta{Figure~\ref{fig:constbif3}}.

			}
			
			\blue{		Then, we imitate the procedure from Step~\ref{Step1} in attaching \teal{$\mathbb{S}_S^{\T}=(u_G',v_2)$ to a regular Schwarzschild sphere $\mathbb{S}_S^{reg}=(u_S^{\R},v_2)$ through} a Schwarzschild cone $\underline{C}_S=[u_S^{\R},u_G']\times \{v_2\}$ of small length \magenta{$|u_G'-u_S^{\R}|\leq \delta$  (independent of $R'$)}. 		\magenta{Let us first solve a characteristic initial value problem ``away from the short pulse'':  we 	 solve backwards for \eqref{1.1}--\eqref{5.1} with  data on $  \big(C_G' \cap \{v_1 + R_2^2 \leq v \leq v_2\}\big) \bigcup \underline{C}_S$.}
				\teal{Since the data is $O(1)$  in $C^1$ norm,	b}y standard local well-posedness as in Lemma~\ref{LWP}, if $\delta$ is sufficiently small,	we obtain a causal region  $\mathcal{D}^{reg}_{\mathcal{C}}$  (which is free of anti-trapped surfaces by Cauchy stability), as  depicted in  Figure~\ref{fig:constbif3}. \magenta{Then, invoking again Proposition~\ref{lemma.Cauchy.christo} with characteristic initial data $C_{out} = C(\Sigma') \cup \big(C_G' \cap \{v_1  \leq v \leq v_1+ R_2^2\}\big)$, $\underline{C}_{in}= [u_S^{\R},u_G']\times \{v_1+R^2_2\}$ and (note that \eqref{data.charact}, \eqref{data.charact2}, \eqref{data.charact3} are satisfied as a direct consequence of the construction)}  provides a unique solution of \eqref{1.1}--\eqref{5.1} in $\mathcal{D}_{\mathcal{C}}^{imp}$, the whole past domain of dependence of $C_{out} \cup \underline{C}_{in}$. \magenta{Note that $\mathcal{D}_{\mathcal{C}}=\mathcal{D}_{\mathcal{C}}^{reg}\cup \mathcal{D}_{\mathcal{C}}^{imp}$ is the whole past domain of dependence of $C(\Sigma') \cup C_G' \cup \underline{C}_S$, as desired. We can then extend $\mathbb{S}^{reg}_S$ into a Schwarzschild domain up to null infinity as in Step~\ref{Step1}. After doing so, we see that the event horizon $\mathcal{H}^+$   necessarily intersects $\mathcal{D}_{\mathcal{C}}$, since  $\mathbb{S}^{\T}_S\in \T$ and $\mathbb{S}^{reg}_S\in\R$ \teal{is extended  to the future as a Schwarzschild outgoing cone in the black hole exterior}, see \magenta{Figure~\ref{fig:constbif2}}. \\Finally, for $C^{k}$ initial data, the solution remains $C^{k-1}$ (one derivative is lost due to the presence of the center, see e.g. \cite{KehleUnger}) in its past domain of dependence \teal{$\mathcal{D}(\Sigma') \cup \mathcal{D}_{\mathcal{C}}$}.}

			}
			
			\magenta{As a conclusion, one} can then construct  $\Sigma_0'$ spacelike \magenta{with $C^{k-1}$ initial data}, and non-intersecting with $\mathcal{H}^+$  as in \magenta{Figure~\ref{fig:constbif2}}. 	 This concludes the proof of Theorem~\ref{uncharged.thm}.

			\begin{figure}[]	\begin{center}
					\includegraphics[width=82 mm, height=70 mm]{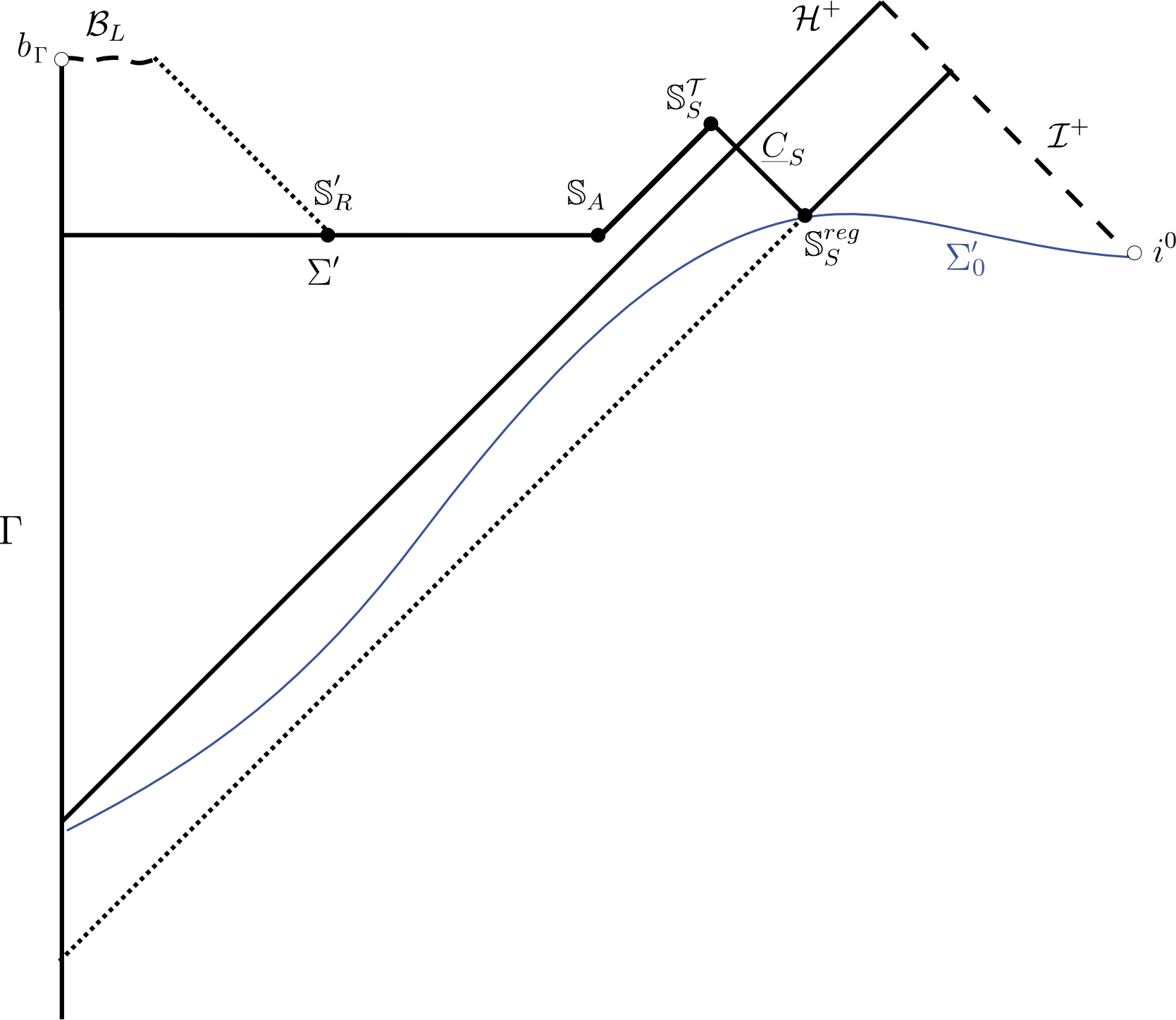
					}
				\end{center}
				\caption{\magenta{The final step in the proof of Theorem~\ref{uncharged.thm}: $\Sigma_0'$ is spacelike, strictly to the past of the event horizon $\mathcal{H}^+$, passing by $\mathbb{S}_S^{reg}$ and isometric to  Schwarschild to the right of $\mathbb{S}_S^{reg}$} \blue{(in particular, $\Sigma_0'$ is asymptotically flat)}.}\label{fig:constbif2}\end{figure}
			
			\item \blue{[Initiating the proof of Corollary~\ref{uncharged.cor}].} First, consider the spatially-homogeneous spacetime of Proposition~\ref{FRLW.prop} and let $\underline{C}_{\Gamma}$ the ingoing cone emanating from  $b_{\Gamma}$--the endpoint of the center $b_{\Gamma}=(T=T_S, \rho=0)$--in the past-direction. Note that, by \eqref{FRLW.A.est}, the apparent horizon $\mathcal{A}$ is spacelike for $t$ sufficiently close to $T_S$, therefore $\underline{C}_{\Gamma} \cap \{t\geq T_S-\ep\} \subset \mathcal{R}$ for $\ep>0$ small enough.

			So, we can create a spacelike hypersurface $\Sigma_L\subset \mathcal{R}\cap \{t\geq T_S-\ep\}  $ which intersects the center $\Gamma$ and whose endpoint $\mathbb{S}_L$ is strictly to the future of $\underline{C}_{\Gamma}$, since  $\mathcal{R}$ is open. 
			
			One can then  extend $\Sigma_L$ into an asymptotically flat spacelike hypersurface by solving the following  ODEs \eqref{r.constraint}--\eqref{varpi.constraint} as in Proposition~\ref{spacelike.gluing.thm}, imposing $\lambda(\rho)>0$, $\nu(\rho)<0$, $\theta(\rho)$, $\xi(\rho)$ smooth to match with $\mathbb{S}_L$ and assuming $\theta(\rho)$, $\xi(\rho)$ compactly supported. One then obtains the MGHD  $(\mathcal{M}_L,g_L,\phi_L)$ of the induced initial data  on $\Sigma_L$, free of anti-trapped or trapped  spheres, and such that $b_{\Gamma}$ is a first singularity, so  one can then apply Theorem~\ref{uncharged.thm} to conclude. 
			
			\item \blue{[Completion of the proof of Corollary~\ref{uncharged.cor} and  Theorem~\ref{OS.thm.intro} for $q=0$].} Note that the statement regarding the terminal boundary of the MGHD consisting of $\mathcal{S}=\{r=0\}$, a $C^1$-spacelike singularity follows from Theorem~\ref{Kommemi.thm} together with the monotonicity properties specific to the $F\equiv 0$ case, as first established by Christodoulou in \cite{Christo1}. In more details, we note that, in addition to $\mathcal{S}$, the only other possible boundary components are $\mathcal{CH}_{\Gamma}$, a Cauchy horizon emanating from the center and $\mathcal{CH}_{i^+}$, a Cauchy horizon emanating from $i^+$. In this construction, the metric is exactly Schwarzschild near $i^+$, thus\footnote{Note that, more generally, we always have $\mathcal{CH}_{i^+}=\emptyset$ if $F\equiv 0$ as proven in \cite{Christo1}.} $\mathcal{CH}_{i^+}=\emptyset$. Moreover, the fact that the spacetime is exactly FLRW to the past of a cone $C_H$ shows the terminal boundary is an (isotropic) spacelike singularity  in this region, and therefore $\mathcal{CH}_{\Gamma}=\emptyset$ too. This concludes the proof of Corollary~\ref{uncharged.cor} and of Theorem~\ref{OS.thm.intro} in the case $q=0$.
		\end{enumerate}

	\end{proof}

	\subsection{Charging of the uncharged spacetime via trapped spheres gluing}\label{charging.section}
	
	In this section, we start from the uncharged spacetime constructed in Theorem~\ref{uncharged.thm} and we ``charge'' it via characteristic gluing, with a strategy similar to that of \cite{KehleUnger}. The main difference is that, here, we must stay away from the extremal case, the main object of study in \cite{KehleUnger}, since we glue an uncharged trapped surface to a charged one. Therefore, the argument does not  strictly follow from \cite{KehleUnger}, although the methods are similar.
	
	As a result, we will be able to show a charged analogue of Theorem~\ref{uncharged.thm} and Corollary~\ref{uncharged.cor}, as seen below. The strategy builds upon the methods previously developed in the proofs of Theorem~\ref{uncharged.thm} and Corollary~\ref{uncharged.cor}. It is also in this section that we will prove Theorem~\ref{main.gluing.thm}.

	\begin{thm}\label{charging.thm} Let $k \in \mathbb{N}$, $k\geq 2$ and  $(\mathcal{M}_L,g_L,\phi_L)$, a subset of the MGHD of \blue{$C^{k}$}
		spherically symmetric asymptotically flat initial data on a hypersurface $\Sigma_L$ with one end for \eqref{1.1}--\eqref{5.1}  containing no anti-trapped spheres and no trapped spheres and such that $b_{\Gamma}$ is a first singularity.
		
		Then, there exists $q_L \in (0,1)$ such that for	 all  $q \in (0,q_L)$ and $\varsigma =\pm 1$, there exist $C^k$ solutions $(\mathcal{M},g,F,\phi)$ of \eqref{1.1}--\eqref{5.1} with $F\neq  0$ with the following properties: \begin{itemize}
			\item $(\mathcal{M},g,F,\phi)$ is the MGHD of 
			spherically symmetric asymptotically flat initial data on a spacelike hypersurface $\Sigma$ with one end for \eqref{1.1}--\eqref{5.1}  containing no anti-trapped spheres and no trapped spheres.
			\item The black hole region of $(\mathcal{M},g,F,\phi)$  is non-empty with an event horizon $\mathcal{H}^+$ and, moreover, $\mathcal{H}^+$ does not intersect $\Sigma$, i.e., it is located in the strict causal future of $\Sigma$.

			\item  There exist $M_f>0$ and an incoming null  cone $\underline{C}_{RN}$ and an outgoing null  $C_{RN}$ intersecting at a trapped sphere $\mathbb{S}_{RN}^{\T}$\blue{--which is  the future endpoint of  $\underline{C}_{RN}$ and the past endpoint of $C_{RN}$--}
			such that $\mathcal{M} \cap J^{+}(\underline{C}_{RN}) \cap J^{-}(C_{RN}) $ is isometric to a  metric with mass $M_f$ and charge $\varsigma q M_f$. In particular,   $\mathbb{S}_{RN}^{\T}$ is a Reissner--Nordstr\"{o}m trapped sphere and $\mathcal{H}^+ \cap J^{+}(\underline{C}_{RN})$ is coincides with a future affine complete portion of  a Reissner--Nordstr\"{o}m event horizon. 
			\item There exists an incoming null  cone $\underline{C}_{v_L}$ such that $\mathcal{M} \cap J^{-}(\underline{C}_{v_L})$ coincides with $\mathcal{M}_L \cap J^{-}(\underline{C}_{v_L})$. Moreover,  $\underline{C}_{v_L}$ can be chosen to be in the complement of the causal past of $b_{\Gamma}$.
		\end{itemize}
	\end{thm}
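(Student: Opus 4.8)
The plan is to deduce Theorem~\ref{charging.thm} from the uncharged construction of Theorem~\ref{uncharged.thm} by inserting a charged characteristic gluing step, in the same spirit as the $\mathcal{M}_L$-to-Schwarzschild strategy but now replacing the Schwarzschild trapped region near $i^+$ by a Reissner--Nordstr\"{o}m one. First I would apply Theorem~\ref{uncharged.thm} to $(\mathcal{M}_L,g_L,\phi_L)$ to obtain an uncharged $C^k$ solution $(\widetilde{\mathcal{M}},\widetilde{g},\widetilde{\phi})$ with a Schwarzschild trapped sphere $\mathbb{S}_S^{\T}$ of mass $M_S$ and area-radius $R_S^{\T}$, sitting on an exactly-Schwarzschild ingoing cone $\underline{C}_S$ and a Schwarzschild outgoing cone $C_S$. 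The heart of the matter is then to glue $\mathbb{S}_S^{\T}$ characteristically — say in the outgoing direction — to a Reissner--Nordstr\"{o}m trapped sphere $\mathbb{S}_{RN}^{\T}$ of mass $M_f$ and charge $\varsigma q M_f$, with $0<q<1$ subextremal; this is the charged analogue of Proposition~\ref{uncharged.null.gluing.prop} and will be isolated as a separate statement (the ``Theorem~\ref{charged.gluing.thm}'' referenced in the outline). The key constraint, inherited from the Kehle--Unger formalism, is a smallness/largeness condition of the form $\tfrac{|q_0| M_f (1-q)}{q}\gg 1$ (the away-from-extremality version of their $\tfrac{|q_0|M_f}{q}\gg1$), which forces us to restrict $q$ to a small interval $(0,q_L)$ — hence the appearance of $q_L\in(0,1)$ in the statement.

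The gluing step itself I would carry out as follows. On the outgoing cone from $\mathbb{S}_S^{\T}$ I prescribe a charged scalar field profile $\phi(v)$ supported away from the endpoints, turning on the charge $Q$ through \eqref{ChargeVEinstein} so that $Q$ interpolates from $0$ at $\mathbb{S}_S^{\T}$ to $\varsigma q M_f$ at the target sphere, while integrating \eqref{RaychV}, \eqref{Radius3} and the mass equation \eqref{massVEinstein} to control $\lambda=\rd_v r$, $r$, and $\varpi$. The matching of the full $C^k$ jet (transversal derivatives $\rd_u^j$ of $r$, $\phi$, $Q$, $A_u$) at the target Reissner--Nordstr\"{o}m sphere is where the Borsuk--Ulam-type argument of \cite{KehleUnger} enters: one shows the map from the free data to the mismatch vector vanishes somewhere, using the antipodal symmetry $\varsigma\mapsto-\varsigma$ together with a degree/parity count, exactly as in their Lemma establishing charged event-horizon gluing, but adapted to trapped spheres where the weak-trappedness ($1-\tfrac{2M_f}{r}<0$ but close to $0$) gives the needed slack. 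Once $\mathbb{S}_{RN}^{\T}$ is reached, I extend it by an exactly Reissner--Nordstr\"{o}m ingoing cone $\underline{C}_{RN}$ and outgoing cone $C_{RN}$ of mass $M_f$ and charge $\varsigma q M_f$, and then — as in the proof of Theorem~\ref{uncharged.thm} — solve \eqref{1.1}--\eqref{5.1} backwards from the bicharacteristic data on $C_0\cup\underline{C}_{RN}$ (and from $\Sigma_0$) using local well-posedness and Cauchy stability, choosing the relevant size parameter $\eta$ small so that the resulting region is free of trapped and anti-trapped spheres, and finally pushing the hypersurface down below the event horizon using the ``$b_\Gamma$ is a first singularity'' hypothesis (so $r\to0$ at $b_\Gamma$, giving room to place $\Sigma$ strictly to the past of $\mathcal{H}^+$).

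I would then assemble: the region near $\Gamma$ is unchanged from $\mathcal{M}_L$ (this yields the last bullet, $\mathcal{M}\cap J^-(\underline{C}_{v_H})$ coinciding with $\mathcal{M}_L$, with $\underline{C}_{v_H}$ outside $J^-(b_\Gamma)$); the region $J^+(\underline{C}_{RN})\cap J^-(C_{RN})$ is exactly Reissner--Nordstr\"{o}m of mass $M_f$ and charge $\varsigma q M_f$, so $\mathbb{S}_{RN}^{\T}$ is a Reissner--Nordstr\"{o}m trapped sphere and $\mathcal{H}^+\cap J^+(\underline{C}_{RN})$ is a future-affine-complete piece of a Reissner--Nordstr\"{o}m event horizon; the initial data on $\Sigma$ is asymptotically flat (being exactly Reissner--Nordstr\"{o}m for large $r$ after extending $\mathbb{S}_{RN}^{\text{reg}}$ to its past) and free of (anti-)trapped spheres; and $\mathcal{H}^+$ lies strictly to the future of $\Sigma$ by the backward-evolution argument. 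The $C^k$-regularity is preserved because the spacelike and characteristic gluings are performed at the level of $C^k$ sphere data and jets, and local well-posedness propagates $C^k$ regularity. The main obstacle, as flagged, is the charged characteristic gluing lemma: verifying that the Kehle--Unger Borsuk--Ulam machinery still closes when the source is an uncharged (Schwarzschild) \emph{trapped} sphere rather than a regular Minkowski one and the target is a \emph{subextremal trapped} Reissner--Nordstr\"{o}m sphere — in particular tracking precisely how the condition $\tfrac{|q_0|M_f(1-q)}{q}\gg1$ arises and yields the admissible range $q\in(0,q_L)$ — together with the bookkeeping needed to keep the entire past evolution free of anti-trapped surfaces while threading $\mathcal{H}^+$ down to the center.
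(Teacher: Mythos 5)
Your proposal follows essentially the same route as the paper: uncharged spacelike--characteristic gluing to a Schwarzschild trapped sphere (Theorem~\ref{uncharged.gluing.thm}), then the charged characteristic gluing to a subextremal Reissner--Nordstr\"{o}m \emph{trapped} sphere under the away-from-extremality condition $\frac{|q_0|M_f(1-q)}{q}\gg 1$ (which is exactly what forces $q<q_L$), followed by backwards evolution via local well-posedness and Cauchy stability to build $\Sigma$ free of (anti-)trapped spheres, and the first-singularity property of $b_\Gamma$ (so $r\to 0$ there) to place a hypersurface strictly to the past of $\mathcal{H}^+$. The one inaccuracy is in your description of the Borsuk--Ulam step: the odd continuous map is $\alpha\mapsto(\rd_u\phi(v_f,\alpha),\dots,\rd_u^{k}\phi(v_f,\alpha))$ on the sphere $\{|\alpha|=A\}$ of scalar-field amplitudes (odd because $r$, $Q$, $\varpi$ depend only on $|\phi_\alpha|^2$), not an antipodal symmetry in the charge sign $\varsigma$, which is fixed separately by the choice of phase in the ansatz \eqref{phi.gluing}.
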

	
	Next, similarly to  Corollary~\ref{uncharged.cor}, we provide an application of Theorem~\ref{charging.thm} in the case where $\mathcal{M}_L$ is a FLRW spacetime, which will also immediately give the proof of Theorem~\ref{OS.thm.intro}.

	\begin{cor}\label{charging.cor} Let $k \in \mathbb{N}$, $k\geq 2$. Then, there exists $q_L \in(0,1)$ such that  	for all $q \in (0,q_L)$, $\varsigma =\pm 1$, there exist $C^k$ solutions $(\mathcal{M},g,\phi)$  of \eqref{1.1}--\eqref{5.1} with $F\neq  0$ with the following properties: \begin{itemize}
			\item $(\mathcal{M},g,F,\phi)$ is the MGHD of 
			spherically symmetric asymptotically flat initial data on a spacelike hypersurface $\Sigma$ with one end for \eqref{1.1}--\eqref{5.1}  containing no anti-trapped spheres and no trapped spheres.

			\item The black hole region of $(\mathcal{M},g,F,\phi)$  is non-empty with an event horizon $\mathcal{H}^+$ and, moreover, $\mathcal{H}^+$ does not intersect $\Sigma$, i.e., it is located in the strict causal future of $\Sigma$.
			\item  There exist $M_f$ and an incoming null  cone $\underline{C}_{RN}$ and an outgoing null  $C_{RN}$ intersecting at a trapped sphere $\mathbb{S}_{RN}^{\T}$\blue{--which is  the future endpoint of  $\underline{C}_{RN}$ and the past endpoint of $C_{RN}$--}such that $\mathcal{M} \cap J^{+}(\underline{C}_{RN}) \cap J^{-}(C_{RN}) $ is isometric to a  metric with mass $M_f$ and charge $\varsigma q M_f$. In particular,   $\mathbb{S}_{RN}^{\T}$ is a Reissner--Nordstr\"{o}m trapped sphere and $\mathcal{H}^+ \cap J^{+}(\underline{C}_{RN})$ is coincides with a future affine complete portion of  a Reissner--Nordstr\"{o}m event horizon.
			
			\item The MGHD terminal boundary of   $(\mathcal{M},g,F,\phi)$ is \begin{equation}
				\CH\cup	\mathcal{S},
			\end{equation} where $\CH\neq \emptyset$ (the Cauchy horizon) is a null boundary emanating from $i^+$ on which $r$ extends to a non-zero function, which is constant near $i^+$ (Reissner--Nordstr\"{o}m Cauchy horizon near $i^+$), and $r$ extends to $0$ on $\mathcal{S}$, which is a curvature singularity. 
			
			\item There exists an incoming null  cone $\underline{C}_{v_L}$  in the complement of the causal past of $b_{\Gamma}$ such that $\mathcal{M} \cap J^{-}(\underline{C}_{v_L})$ is spatially homogeneous and moreover $F\equiv 0$ in  $\mathcal{M} \cap J^{-}(\underline{C}_{v_L})$. Moreover, $\mathcal{S}_L:=  \mathcal{S}\cap J^{-}(\underline{C}_{v_L})$ is spacelike and coincides with the singularity of a FLRW metric with $\RR^3$ topology.

		\end{itemize}
	\end{cor}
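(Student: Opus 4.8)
The plan is to follow the scheme of the proof of Corollary~\ref{uncharged.cor}, now invoking the charged gluing result Theorem~\ref{charging.thm} in place of its uncharged counterpart Theorem~\ref{uncharged.thm}, and then to read off the terminal boundary from Theorem~\ref{Kommemi.thm}. First I would produce the input data $(\mathcal{M}_L,g_L,\phi_L)$ required by Theorem~\ref{charging.thm}, exactly as in the first part of the proof of Corollary~\ref{uncharged.cor}: start with the collapsing FLRW spacetime of Proposition~\ref{FRLW.prop}, which has a finite-time spacelike singularity $\{t=T_S\}$ with $r\to 0$; by the asymptotics \eqref{FRLW.A.est} the apparent horizon $\mathcal{A}$ is spacelike near $T_S$, so the past-directed ingoing cone $\underline{C}_{\Gamma}$ emanating from $b_{\Gamma}=(T_S,0)$ remains in $\R$ for $t$ close to $T_S$. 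Pick a spacelike hypersurface $\Sigma_L\subset\R$ meeting $\Gamma$ whose endpoint sphere $\mathbb{S}_L$ lies strictly to the future of $\underline{C}_{\Gamma}$, then extend it, via the spherically symmetric spacelike constraint ODEs \eqref{r.constraint}--\eqref{varpi.constraint} with $\lambda>0$, $\nu<0$ and compactly supported $\theta,\xi$ matching $\mathbb{S}_L$, into an asymptotically flat hypersurface. Its MGHD $(\mathcal{M}_L,g_L,\phi_L)$ is an $F\equiv 0$, one-ended asymptotically flat solution, free of trapped and anti-trapped spheres, coinciding with FLRW near $\Gamma$, with $b_{\Gamma}$ a first singularity in the sense of Definition~\ref{first.sing.def}.

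Next I would apply Theorem~\ref{charging.thm} to $(\mathcal{M}_L,g_L,\phi_L)$: this yields $q_L\in(0,1)$ such that for all $q\in(0,q_L)$ and $\varsigma=\pm1$ a $C^k$ solution $(\mathcal{M},g,F,\phi)$ of \eqref{1.1}--\eqref{5.1} with $F\not\equiv0$ — the MGHD of one-ended asymptotically flat data on some $\Sigma$ free of trapped and anti-trapped spheres, with event horizon $\mathcal{H}^+$ strictly to the future of $\Sigma$ — together with an ingoing cone $\underline{C}_{v_H}$ in the complement of $J^-(b_{\Gamma})$ such that $\mathcal{M}\cap J^-(\underline{C}_{v_H})$ is the spatially homogeneous ($F\equiv0$) FLRW region of $\mathcal{M}_L$, and a causal diamond $J^+(\underline{C}_{RN})\cap J^-(C_{RN})$ isometric to sub-extremal Reissner--Nordstr\"om of mass $M_f$ and charge $\varsigma q M_f$ whose event-horizon part is an affine-complete portion of the Reissner--Nordstr\"om event horizon. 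The first five bullet points of the corollary, together with the identification of $\mathcal{S}_H:=\mathcal{S}\cap J^-(\underline{C}_{v_H})$ with the FLRW isotropic singularity, are immediate.

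It then remains to pin down the terminal boundary, which is the heart of the argument. Applying Theorem~\ref{Kommemi.thm}, the a priori list of components is $b_{\Gamma},\mathcal{S}^1_{\Gamma},\mathcal{CH}_{\Gamma},\mathcal{S}^2_{\Gamma},\mathcal{S},\mathcal{S}_{i^+},\CH$. Near $b_{\Gamma}$ the solution is exactly FLRW with $F\equiv0$, whose terminal boundary is the everywhere-spacelike isotropic singularity $\{t=T_S\}$; since this carries no null outgoing segment, $\mathcal{S}^1_{\Gamma}=\emptyset$, whence $\mathcal{CH}_{\Gamma}$ (which would emanate from $b_{\Gamma}\cup\mathcal{S}^1_{\Gamma}=b_{\Gamma}$) is also empty as the boundary is spacelike there, and hence $\mathcal{S}^2_{\Gamma}=\emptyset$; moreover $r\to0$ along $\mathcal{S}_H$ and the Kretschmann scalar diverges, so $\mathcal{S}\supseteq\mathcal{S}_H\neq\emptyset$ is a curvature singularity. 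Near $i^+$ the solution is exactly sub-extremal Reissner--Nordstr\"om of nonzero charge, so $\CH$ coincides there with the smooth Reissner--Nordstr\"om Cauchy horizon: $\CH\neq\emptyset$ and $r$ extends to the constant $r_-(M_f,\varsigma q M_f)>0$ on a neighborhood of $i^+$ in $\CH$. Finally, to exclude $\mathcal{S}_{i^+}$ I would use that the whole diamond $J^+(\underline{C}_{RN})\cap J^-(C_{RN})$ is Reissner--Nordstr\"om, so $\CH$ restricted there has $r\equiv r_->0$ and no collapsed ingoing cone emanates from $i^+$ in the Reissner--Nordstr\"om geometry; combined with the monotonicity of $r$ along null rays in the trapped region separating the Reissner--Nordstr\"om diamond from the FLRW diamond (as used in the proof of Corollary~\ref{uncharged.cor} and of Proposition~\ref{apriori.prop2}, forcing $\CH$ to break down with $r\to0$ directly into $\mathcal{S}$ rather than through an intermediate $r\to0$ ingoing segment), this gives $\mathcal{S}_{i^+}=\emptyset$. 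Hence the terminal boundary is exactly $\CH\cup\mathcal{S}$, which simultaneously completes the proof of Theorem~\ref{OS.thm.intro} for $q\neq0$.

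I expect the main obstacle to be precisely this terminal-boundary classification — excluding the collapsed ingoing cone $\mathcal{S}_{i^+}$ and the outgoing null segments $\mathcal{S}^1_{\Gamma},\mathcal{CH}_{\Gamma},\mathcal{S}^2_{\Gamma}$ — because, unlike the one-ended conditional results of Section~\ref{section.global.cond}, the quantitative mass-inflation hypotheses \eqref{hyp1} are \emph{not} available here: the metric is exactly Reissner--Nordstr\"om, hence non-singular, along $\CH$ near $i^+$. The exclusions must instead be extracted purely from the exactness of the model metrics near the two corners $i^+$ and $b_{\Gamma}$, supplemented by soft monotonicity/extension-principle arguments in the trapped region in between, which is the only genuinely new ingredient beyond what was already done for Corollary~\ref{uncharged.cor}.
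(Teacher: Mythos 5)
Your proposal follows essentially the same route as the paper: the paper's proof of Corollary~\ref{charging.cor} consists precisely of constructing $(\mathcal{M}_L,g_L,\phi_L)$ from the FLRW spacetime of Proposition~\ref{FRLW.prop} exactly as in the proof of Corollary~\ref{uncharged.cor}, applying Theorem~\ref{charging.thm}, and reading off the boundary structure from Theorem~\ref{Kommemi.thm} together with the exactness of the model metrics near $b_{\Gamma}$ (FLRW, hence spacelike boundary, excluding $\mathcal{S}^1_{\Gamma}\cup\mathcal{CH}_{\Gamma}\cup\mathcal{S}^2_{\Gamma}$) and near $i^+$ (Reissner--Nordstr\"om, hence $\CH\neq\emptyset$ with constant $r=r_-$ there). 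One caveat: your sketched exclusion of $\mathcal{S}_{i^+}$ via "monotonicity as in Proposition~\ref{apriori.prop2}" would not go through as stated, since that proposition's argument hinges on the quantitative hypotheses \eqref{hyp1}, which fail here precisely because $\CH$ is non-singular near $i^+$; the corollary's boundary statement should be read as identifying only the $r>0$ null piece and the $\{r=0\}$ piece (whose causal character is pinned down only near $\Gamma$), which is all the paper's argument delivers.
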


	\begin{figure}	\begin{center}
			\includegraphics[width=100 mm, height=50 mm]{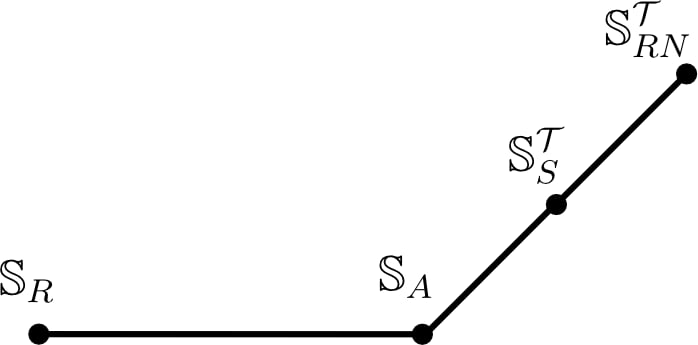}
		\end{center}
		\caption{The spacelike-characteristic gluing strategy \magenta{of Theorem~\ref{main.gluing.thm}, also}  used in the proof of Theorem~\ref{charging.thm}. \magenta{Its proof  combines} Theorem~\ref{uncharged.gluing.thm} and the new charged characteristic result Theorem~\ref{charged.gluing.thm}.}\label{fig:chargedgluing}\end{figure}

	\subsubsection{Stationary black hole outgoing: the trapped sphere case}\label{trapped.gluing.section}
	
	Inspired by the gluing techniques via a charged scalar field of \cite{KehleUnger}, we now prove that  a  Schwarzschild trapped (or event horizon) sphere can be glued to a Reissner--Nordstr\"{o}m trapped sphere. Recall that to glue a  regular Schwarzschild sphere to a Reissner--Nordstr\"{o}m event horizon  sphere in \cite{KehleUnger}, the authors require a large mass $M_f$ or a small charge ratio $q$, \blue{assuming a condition of the following form holds}:  \begin{equation}\label{large.condition.ext}
		\frac{|q_0| M_f }{q} \gg 1,
	\end{equation}  where $(e,M_f)$ are the charges and mass of the Reissner--Nordstr\"{o}m event horizon  sphere and $q= \frac{|e|}{M_f} \in (0,1]$. Obviously, if we require the final  Reissner--Nordstr\"{o}m sphere to be trapped instead of marginally trapped, we now must restrict   $q= \frac{|e|}{M_f} \in (0,1)$. Moreover, we replace \eqref{large.condition.ext} by the more demanding condition  \begin{equation}\label{large.condition2}
		\frac{|q_0| M_f[1-q] }{q} \gg 1,
	\end{equation}  which now penalizes Reissner--Nordstr\"{o}m trapped spheres that are too close to extremality. The following theorem below provides a precise statement of this new gluing result based on an adaption of techniques of \cite{KehleUnger}. 
	
	In this section, the $v$-gauge will be fixed by gauge~\eqref{gauge.unit.lapse}.\begin{thm} \label{charged.gluing.thm}[Characteristic charged trapped surfaces gluing]. Let $k\in \mathbb{N}$ and  $\delta_0\in (0,1)$ be a small, but fixed constant. Let $\mathbb{S}_i$ be a   trapped or marginally trapped  Schwarzschild $C^k$  data sphere of mass $M_i$ and radius $R_i$, with $R_i\leq 2M_i$, and   $\mathbb{S}_f$ be a Reissner--Nordstr\"{o}m   $C^k$  data  sphere of mass $M_f$, charge $\magenta{e}$ and radius $R_f$. We assume the sub-extremality condition $q= \frac{|\magenta{e}|}{M_f}\in(0,1)$, and that $\mathbb{S}_f$ is a trapped Reissner--Nordstr\"{o}m sphere, namely:  $R_f \in (r_-(M_f,\magenta{e}), r_+(M_f,\magenta{e}))$, where $r_{\pm}(M_f,\magenta{e}) = M_f \pm \sqrt{M_f^2-\magenta{e}^2}= M_f (1\pm \sqrt{1-q^2})$.  We also make the following largeness assumption:  \begin{equation}\label{large.condition}
			\frac{|q_0| M_f [1-q]}{q} \geq   \teal{100 \delta_0^{-2}}.
		\end{equation} 
		Moreover, we choose $0<R_f<R_i$ such that the trapped Reissner--Nordstr\"{o}m sphere of mass $M_f$, charge $e$ and radius $R_f$ is away from the event horizon and Cauchy horizon, more precisely:
		\begin{equation}\label{Mi.choice}
			1 - (1-\delta_0) \sqrt{1-q^2}<\frac{R_f}{M_f} +\delta_0 \sqrt{1-q^2}< \frac{R_i}{M_f} < 1 + (1-\delta_0) \sqrt{1-q^2}.
		\end{equation}
		Then, the $C^k$ Schwarzschild (event horizon or trapped) sphere $\mathbb{S}_i$ can be characteristically glued to the \blue{$C^k$} Reissner--Nordstr\"{o}m trapped sphere  $\mathbb{S}_f$. The associated characteristic data can be chosen to have no 
		anti-trapped spheres, namely $\rd_u r <0$.
	\end{thm}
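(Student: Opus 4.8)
The plan is to adapt the charged characteristic gluing construction of Kehle--Unger \cite{KehleUnger} to the trapped--trapped situation. Following gauge~\eqref{gauge.unit.lapse}, I would work on an outgoing null cone $C=\{u=u_0\}\times[v_i,v_f]$ with $\Omega^2\equiv 1$, placing $\mathbb{S}_i$ at $v=v_i$ and aiming to realize $\mathbb{S}_f$ at $v=v_f$. The only genuinely free datum on $C$ is the complex profile $\phi(v)$ (equivalently $\theta=r\,\partial_v\phi$); then $\lambda=\partial_v r$, $r$, $Q$ and $\varpi$ are obtained by integrating the tangential equations \eqref{RaychV}, \eqref{Radius3}, \eqref{ChargeVEinstein}, \eqref{massVEinstein} from the Schwarzschild values at $v_i$ (so $Q(v_i)=0$, $r(v_i)=R_i$, $\varpi(v_i)=M_i$), while the transversal jets $\partial_u^j r$, $\partial_u^j\Omega^2$, $\partial_u^j Q$ and $\xi^{(j)}=\partial_u^j(r\phi)$ along $C$ solve linear transport ODEs in $v$ — obtained by repeatedly $\partial_u$-differentiating \eqref{Radius3}, \eqref{Omega3}, \eqref{chargeUEinstein}, \eqref{Field3} — whose data at $v_i$ are the Schwarzschild jets (this is the same mechanism already used in Proposition~\ref{uncharged.null.gluing.prop}). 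Gluing $\mathbb{S}_i$ to $\mathbb{S}_f$ as a $C^k$ sphere data set then amounts to choosing $\phi$ so that $\phi$ and all its $v$- and $u$-derivatives vanish at $v_f$, $r(v_f)=R_f$, $\varpi(v_f)=M_f$, $Q(v_f)=\varsigma q M_f$, and the remaining transversal metric/charge jets land on their Reissner--Nordstr\"om values (the last being automatic if $\phi$ is taken to vanish identically on a left-neighbourhood of $v_f$, so that $C$ is an exact Einstein--Maxwell cone there).

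\textbf{Charging mechanism and the role of \eqref{large.condition}, \eqref{Mi.choice}.} I would first exhibit a \emph{core profile} carrying the bulk of the transition. Taking $\phi(v)=\varepsilon\,\chi(v)e^{i\omega v}$ for a fixed interior bump $\chi$ and a large frequency $\omega$, one has $\Im(\phi\overline{\partial_v\phi})\approx-\omega\varepsilon^2\chi^2$, so \eqref{ChargeVEinstein} builds charge at rate $\sim q_0\omega\varepsilon^2 r^2\chi^2$, whereas by \eqref{RaychV} and \eqref{massVEinstein} the radius decrement and mass increment are controlled by $\int r(\partial_v\phi)^2\sim\omega^2\varepsilon^2$ together with the manifestly integrable charge contribution $q_0 Q r\,\Im(\phi\overline{\partial_v\phi})=\tfrac{1}{2r}\partial_v(Q^2)$. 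The decisive point is that the ratio (charge produced)/(metric distortion) scales like $\tfrac{q_0}{\omega}\times(\text{length})$, so by taking $\omega$ large relative to $|q_0|M_f$ and $\varepsilon$ correspondingly small one can deliver $|Q|=q M_f$ while keeping $(r,\varpi)$ within an arbitrarily small neighbourhood of a prescribed monotone path from $(R_i,M_i)$ to $(R_f,M_f)$ (the sign $\varsigma$ is flipped via $\phi\mapsto\bar\phi$, which sends $Q\mapsto-Q$ and leaves $r,\lambda,\varpi$ unchanged). Hypothesis \eqref{large.condition}, $\tfrac{|q_0|M_f(1-q)}{q}\geq\delta_0^{-4}$, is exactly the budget making this possible: the factor $1/q$ provides the frequency room to reach the target charge before the metric drifts, and the extra factor $(1-q)$ — absent in \cite{KehleUnger}, where one glues to the \emph{marginally} trapped event horizon — encodes that $\mathbb{S}_f$ must now be kept a definite distance from \emph{both} $r_-(M_f,e)$ and $r_+(M_f,e)$, which is precisely the content of \eqref{Mi.choice}. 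Along the way $Q^2\leq q^2M_f^2<r_-^2\leq r^2$, so \eqref{Radius3} gives $\partial_v(r\nu)=-\tfrac14(1-Q^2/r^2)<0$; since $r\nu<0$ at $\mathbb{S}_i$ it stays negative, i.e.\ $\partial_u r<0$ on $C$, so the associated characteristic data has no anti-trapped spheres.

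\textbf{Matching the residual jets via Borsuk--Ulam.} After the core profile is fixed, there remain only finitely many endpoint conditions at $v_f$ — essentially the vanishing of the transversal scalar-field jets $\partial_u^j\phi(v_f)$ for $1\leq j\leq k$, the $v$-jets of $\phi$ at both ends being arranged by multiplying $\chi$ by a suitable polynomial cutoff. I would introduce a finite-dimensional family of admissible perturbations of $\phi$ (small complex-amplitude bumps inserted on $C$) and study the induced continuous map $\Psi:S^{N}\to\mathbb{R}^{N}$ from perturbation parameters to the vector of residual mismatches, where $N$ is the number of conditions. The structural input, inherited from \cite{KehleUnger}, is a discrete symmetry of the equations built from $\phi\mapsto-\bar\phi$ together with a reflection, which acts as the antipodal map on the relevant parameter space while reversing the sign of the residual mismatches, so that $\Psi$ is odd; Borsuk--Ulam then furnishes a zero, yielding a $C^k$ sphere data set interpolating $\mathbb{S}_i$ and $\mathbb{S}_f$ and completing the proof (and a smallness/continuity argument ensures the perturbations do not spoil the no-anti-trapped property or the quantitative bounds of the previous step).

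\textbf{Main obstacle.} The hard part is the quantitative balancing in the second step: one must simultaneously (i) pump enough charge to reach $|Q|=q M_f$, (ii) keep $r$ strictly inside the \emph{two-sided} trapped band $(r_-(M_f,e),r_+(M_f,e))$ — this two-sidedness, unlike the one-sided event-horizon gluing of \cite{KehleUnger}, is why the factor $(1-q)$ must appear in \eqref{large.condition} — and (iii) steer $\varpi$ to land exactly on $M_f$ despite the fact that the mass grows under \emph{both} the (sign-definite) scalar gradient term and the charge term; reconciling (i)--(iii) is precisely what forces the smallness of $\delta_0$ and the inequalities \eqref{Mi.choice}, and the estimates driving the argument must be uniform enough to survive the Borsuk--Ulam perturbations.
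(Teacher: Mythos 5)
Your outline shares the paper's two structural pillars---an oscillatory charged profile (the paper uses $\phi_{\alpha}(v)=[\sum_j\alpha_j\chi_j(v)]e^{-iv/\Delta v}$ in the unit-lapse gauge) to pump charge, and the Borsuk--Ulam theorem applied to the odd map $\alpha\mapsto(\partial_u\phi(v_f,\alpha),\dots,\partial_u^k\phi(v_f,\alpha))$ to kill the transversal scalar jets---but the quantitative core of your second step, where the proof actually lives, has two genuine problems. First, the frequency heuristic is inverted: with $\phi=\varepsilon\chi e^{i\omega v}$ one has $\Delta Q\sim q_0r^2\omega\varepsilon^2\,\Delta v$ while $\Delta\lambda\sim r\omega^2\varepsilon^2\,\Delta v$, so the charge produced per unit metric distortion is $\sim q_0r/\omega$ and \emph{decreases} as $\omega$ grows; taking ``$\omega$ large relative to $|q_0|M_f$'' makes matters worse, not better. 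The paper takes $\omega\sim 1/\Delta v$, for which the net radius change is $\Delta r\sim qM_f/(|q_0|R_i)$ independently of further tuning, and it is \eqref{large.condition} alone (not a frequency choice) that makes this small compared with the width $2M_f\sqrt{1-q^2}$ of the trapped band. Consequently your picture of tracking an arbitrary prescribed monotone path from $(R_i,M_i)$ to $(R_f,M_f)$ is not achievable: the scalar-field segment necessarily leaves $r$ within $O(\delta_0^2R_i)$ of $R_i$, and the descent to $R_f$ is performed afterwards by appending a trivial exact Reissner--Nordstr\"om trapped outgoing segment.

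Second, your item (iii)---steering $\varpi$ to land on $M_f$---is flagged as the main obstacle but never resolved, and it cannot be resolved by forward integration of the $\nu$-transport from the Schwarzschild value at $v_i$: the quantity $\varpi(v_f)$ is then fixed by \eqref{murelation} from data you no longer control (taking $\phi\equiv0$ near $v_f$ only makes that segment electrovacuum, i.e.\ Reissner--Nordstr\"om of \emph{some} mass, not of mass $M_f$). The missing idea is that $\nu=\partial_ur$ solves a first-order transport equation in $v$ (\eqref{Radius3}) and may be prescribed at either endpoint: the paper imposes at $v_f$ the exact Reissner--Nordstr\"om value $-\partial_ur(v_f)=\bigl(1-\tfrac{2M_f}{r(v_f)}+\tfrac{e^2}{r^2(v_f)}\bigr)/\partial_vr(v_f)$, which makes $\varpi(v_f)=M_f$ automatic, and integrates \emph{backwards}; the anti-trapped check then becomes $-r\partial_ur(v)=-r\partial_ur(v_f)-\int_v^{v_f}(1-Q^2/r^2)\,dv'\geq -r\partial_ur(v_f)-\Delta v$, and this is the second place where \eqref{large.condition} and \eqref{Mi.choice} are used, via the lower bound $-r\partial_ur(v_f)\gtrsim\delta_0^{-2}\Delta v$. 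Your forward monotonicity argument for $\partial_ur<0$ instead rests on the chain $Q^2\leq q^2M_f^2<r_-^2\leq r^2$, whose middle inequality is false for every $q\in(0,1)$ (it is equivalent to $1+q<1-q$), so the sign of $1-Q^2/r^2$ is not under control by your argument. The architecture is right, but without the backward prescription of $\nu$ both the mass matching and the no-anti-trapped-sphere claim are unproved.
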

	\begin{proof}
		The proof  is inspired \blue{from} Theorem B in \cite{KehleUnger}, with some   technical modifications. Like in \cite{KehleUnger}, \teal{the goal is to glue to a Reissner--Nordstr\"{o}m trapped sphere of radius $r(v_f)> R_f$  and} we  impose the following ansatz  for the scalar field: for some  $\alpha \in \mathbb{C}^N$, we define $\phi(v)$ for $v \in [v_i,v_f]$ as \begin{equation}\label{phi.gluing}\phi_{\alpha}(v) = [\sum_{j=1}^{k+1} \alpha_j \chi_j(v) ]e^{-i\frac{ v}{\Delta v}},
		\end{equation} where the $\chi_j(v)$'s are smooth compactly supported cut-offs with disjoint support, and $\chi_j(v)$ has support in $[v_{j-1},v_{j}]$ where $\{v_i,v_0,..., v_{k+1},v_f\}$ is an equipartition of $[v_i,v_f]$ with $v_0>v_i$  and $v_{k+1}< v_f$.  Note that $\phi(v,\alpha)\equiv 0$ for $v\in [v_i,v_0] \cup [v_{k+1},v_f]$.
		We will also define \begin{equation}
			q:= \frac{|e|}{M_f} \in (0,1),
		\end{equation}
		\begin{equation}
			\Delta v = v_f - v_i >0,
		\end{equation}  and we will take \begin{equation}\label{alpha.size}
			\frac{|e|}{  |q_0| R_i^2}\ls 	|\alpha|^2 \ls \frac{|e|}{ |q_0| R_i^2}.
		\end{equation} 
		We also initialize various quantities (note, however, that the value $\rd_u r$ is fixed at $v=v_f$) as such 
		\begin{equation*}
			Q(v_i,\alpha) =0,
		\end{equation*}
		\begin{equation*}
			r(v_i,\alpha) = R_i,\
		\end{equation*} \begin{equation*}
			\rd_v r (v_i,\alpha) = 1-\frac{2M_i}{R_i} := - \ep_i \leq 0,
		\end{equation*}
		\begin{equation*}
			-\rd_u r (v_f,\alpha) = \frac{1-\frac{2M_f}{r(v_f,\alpha)} + \frac{e^2}{r^2(v_f,\alpha)}}{\rd_v r(v_f,\alpha)},
		\end{equation*}
		
		We will moreover, as in \cite{KehleUnger}, set the gauge to be $\Omega^2=1$, $A_v \equiv 0$, and solve the following set of ODEs with the above initial conditions \begin{equation}\label{ODE.gluing}\begin{split}
				& \rd_{v}^2 r = - r |\rd_v \phi|^2,\\ &  \rd_v Q = q_0 r^2 \Im(\bar{\phi} \rd_v \phi),\\ & \rd_v(-r\rd_u r ) = 1-\frac{Q^2}{r^2}.
			\end{split}
		\end{equation}

		\noindent First, by monotonicity and \eqref{Mi.choice}, we have \begin{equation}
			r(v,\alpha) \leq R_i \leq  2M_f.
		\end{equation} Then, note that differentiating \eqref{phi.gluing} gives \begin{equation}
			\rd_v \phi \approx \frac{\alpha}{\Delta v},
		\end{equation} and   integrating  \eqref{ODE.gluing} thus gives  \begin{equation}\label{lambda.est.gluing}
			|\lambda|(v) \lesssim \ep_i +  \frac{R_i}{\Delta v} |\alpha|^2; 
		\end{equation} and thus, integrating in $v$ again gives
		\begin{equation}\label{delta.r}
			\underset{ v_i \leq v \leq v_f}{\sup} |r(v,\alpha) - R_i| \magenta{\ls  \ep_i \Delta v + |\alpha|^2 R_i \ls |\alpha|^2 R_i,}
		\end{equation} 
		where the last  inequality follows choosing $\Delta v$ small enough. 
		\magenta{By \eqref{large.condition} and \eqref{alpha.size}, $|\alpha|^2$ is sufficiently small so that $\text{(RHS of } \eqref{delta.r})\leq \frac{R_i}{3}$, thus}	 $r(v,\alpha)$ is always comparable to $R_i$ \teal{for all $v_i \leq v\leq v_f$}. 
			\teal{Denoting $$  \Delta=\frac{|q_0| M_f [1-q] \delta_0^2}{q}\geq 100,$$ note further that by \eqref{Mi.choice}, we have $R_i \geq \delta_0 M_f \sqrt{1-q^2}$, hence \begin{equation}\label{lower.r0}
					\frac{	|\alpha|^2 R_i}{M_f} = \frac{q }{|q_0| M_f (1-q)} \frac{(1-q) M_f}{R_i} \leq \frac{\delta_0^2}{\Delta} \frac{\delta_0^{-1} (1-q)}{\sqrt{1-q^2}} \leq \frac{\delta_0}{\Delta} \sqrt{1-q^2}\leq \frac{\delta_0}{100} \sqrt{1-q^2}.
				\end{equation} Note that \eqref{lower.r0} combined with by \eqref{delta.r} ensures in particular that $r(v_f,\alpha) > R_f$. Moreover, by \eqref{delta.r} again, we find that \begin{equation}\label{lower.r}
					\frac{ r(v_f,\alpha)}{M_f} \geq 1 - (1-\frac{\delta_0}{2}) \sqrt{1-q^2}.
			\end{equation} } Then, integrating \eqref{ODE.gluing} gives \begin{equation}
				|Q|(v_f,\alpha) \approx q_0 R_i^2 \alpha^2 \approx |e|,
			\end{equation} thus, as in \cite{KehleUnger}, we can choose $\alpha$ so that \begin{equation}
				Q(v_f,\alpha) = e,
			\end{equation} exactly. Now, note that\blue{, since $\Omega^2 \equiv 1$ by gauge choice,} $r\rd_u r$ at $v=v_f$ is \blue{given by} \begin{equation}\label{dur.init}
				-r\rd_u r(v_f,\alpha) =  r (v_f,\alpha)\frac{1-\frac{2M_f}{r(v_f,\alpha)}+ \frac{e^2}{r^2(v_f,\alpha)}}{\rd_v r (v_f,\alpha)}=M_f^2  \frac{[1+\sqrt{1-q^2}-\frac{r(v_f,\alpha)}{M_f}][\frac{r(v_f,\alpha)}{M_f}-(1-\sqrt{1-q^2})]}{-r\rd_v r (v_f,\alpha)}\end{equation} As a consequence of \eqref{lower.r}, we have \begin{equation}
				-r\rd_u r(v_f,\alpha) \gtrsim\frac{\teal{ \delta_0} M_f^2 [1-q] }{R_i [-\rd_v r]}.
			\end{equation}
			
			Thus, combining \eqref{dur.init} with \eqref{lambda.est.gluing} gives, by \eqref{large.condition} and choosing  $\Delta v$ small:
			\begin{equation}
				-r\rd_u r(v_f,\alpha)   \gtrsim \Delta v \frac{|q_0|\teal{\delta_0} M_f [1-q]}{q}\gtrsim \teal{\delta_0^{-1}} \Delta v.
			\end{equation}
			
			Finally, note that for all $v \in [v_i,v_f]$, \begin{equation}
				-r\rd_u r(v,\alpha) = 	-r\rd_u r(v_f,\alpha)  -  \int_v^{v_f} [1-\frac{Q^2(v',\alpha)}{r^2(v',\alpha)} ] dv' \geq 	-r\rd_u r(v_f,\alpha) - (v_f-v) \gtrsim \teal{\delta_0^{-1}}\Delta v>0.
			\end{equation}
			
			To conclude and obtain zero ingoing scalar field derivatives at $v=v_f$, we use the argument of \cite{KehleUnger}: for each $A>0$, $F_{A}: \alpha \in \{\alpha \in \RR^{k+1},\ |\alpha|=A \} \rightarrow (\rd_u \phi(v_f,\alpha),..., \rd_u^{k} \phi(v_f,\alpha))$ is continuous and odd, so by the Borsuk--Ulam theorem,  there exists $\alpha^{*}$ such that  $|\alpha^{*}|=A$ and $(\rd_u \phi(v_f,\alpha^{*}),..., \rd_u^{k} \phi(v_f,\alpha^{*}))=(0,...,0)$.

			Thus, we have successfully  glued the  Schwarzschild event horizon $C^k$ data sphere $\mathbb{S}_i$   to the Reissner--Nordstr\"{o}m trapped $C^k$ data sphere  of radius $r(v_f,\alpha)$, mass $M_f$ and charge $e$  with no anti-trapped surface, i.e., $\rd_u r<0$. Here, $r(v_f,\alpha)>R_f$ so to conclude the proof, one trivially glues the above Reissner--Nordstr\"{o}m trapped sphere  to $\mathbb{S}_f$  by a Reissner--Nordstr\"{o}m outgoing cone composed of trapped spheres.
			
		\end{proof}
		
		\teal{\begin{rmk}\label{charged.gluing.rmk}
				The proof of Theorem~\ref{charging.thm} shows that $\phi$ \magenta{and $\Delta v$} can be chosen so that for all $v_i\leq v\leq v_f$ \begin{equation}
					\label{Delta.v.choice}
					M_f	|\rd_v\phi|(v) \lesssim \ep_i \sqrt{ \frac{|q_0| M_f}{q}},\ 	\magenta{	\Delta v \ls \ep_i^{-1}  \frac{M_f}{R_i}\frac{q}{|q_0|}}.
				\end{equation} 	Thus, for $q\in(0,\frac{1}{2})$ and assuming that \eqref{large.condition} is sharp in the sense that 
				\begin{equation}\label{large.condition.sharp}
					\frac{|q_0| M_f}{q} \approx \delta_0^{-2},
				\end{equation}
				we get that  $\phi$  can be chosen so that for all $v_i\leq v\leq v_f$ \begin{equation}\label{charged.gluing.rmk.est2}
					M_f	|\rd_v\phi|(v) \lesssim \ep_i \delta_0^{-1}.
				\end{equation}
				
		\end{rmk}}
		\subsubsection{Global charged spacetime constructions}

		We are now ready to prove Theorem~\ref{main.gluing.thm}, Theorem~\ref{charging.thm} and Corollary~\ref{charging.cor}. \begin{proof}
			We will prove Theorem~\ref{main.gluing.thm} and Theorem~\ref{charging.thm} together. Starting with Theorem~\ref{main.gluing.thm}, we set $R>0$, $q\in [0,1)$, $M_f>0$ satisfying \eqref{small.R.condition} and $\delta_0 \in (0,1)$ small such that \eqref{large.condition} is satisfied.
			
			Similarly to  the proof of Theorem~\ref{uncharged.thm} \magenta{(Step~\ref{Step1})},  we choose for some small $\eta>0$: \begin{equation}\label{choice}\begin{split}
					&R_S^{\mathcal{\tau}}= [1+\eta] R,\\ &2M_S = [1+\eta]R_S^{\mathcal{\tau}}= [1+\eta]^2 R,\\ & R_A = [1+\eta] 2M_S=[1+\eta]^2R_S^{\mathcal{\tau}}=[1+\eta]^3 R,
				\end{split}
			\end{equation} then, we apply Theorem~\ref{uncharged.gluing.thm}
			with these values, obtaining the spacelike gluing of $\mathbb{S}_R$ to the apparent horizon sphere $\mathbb{S}_A$ on some hypersurface $\magenta{\Sigma_G}$, and then the characteristic gluing of $\mathbb{S}_A$ to the trapped Schwarzschild sphere $\mathbb{S}_S^{\T}$ on some null cone $\magenta{C_G}$, in the notations of Theorem~\ref{uncharged.thm} and Figure~\ref{fig:constbif1}.  Note that, by \eqref{small.R.condition}, we have \begin{equation}\label{small.R.condition.2}
				1-\sqrt{1-q^2}< \frac{R_S^{\mathcal{\tau}}}{M_f [1+\eta]}< 1+\sqrt{1-q^2}.
			\end{equation}
			By choosing $\eta = O(\delta_0 \sqrt{1-q^2})$, one can then arrange (while keeping $\eta$ small) that \begin{equation}
				1-(1-\delta_0)\sqrt{1-q^2}< \frac{R_S^{\mathcal{\tau}}}{M_f}.
			\end{equation}
			
			If  $\frac{R_S^{\mathcal{\tau}}}{M_f}<1+(1-\delta_0) \sqrt{1-q^2}$, we choose $R_i= R_S^{\mathcal{\tau}}$. If not, we can find $R_S^{',\mathcal{\tau}}< R_S^{\mathcal{\tau}}$ such that  \begin{equation}		1-(1-\delta_0)\sqrt{1-q^2}< \frac{R_S^{',\mathcal{\tau}}}{M_f}<1+(1-\delta_0) \sqrt{1-q^2},
			\end{equation} and glue trivially $\mathbb{S}_S^{\T}$ to the trapped Schwarzschild sphere $\mathbb{S}_S^{',\T}$ of mass $M_S$ and area-radius $R_S^{',\mathcal{\tau}}$; after which we choose $R_i = R_S^{',\mathcal{\tau}}$.   Then, we set $R_f>0$ such that \begin{equation}\label{Rf.choice}
				1-(1-\delta_0) \sqrt{1-q^2} < \frac{R_f}{M_f} + \delta_0 \sqrt{1-q^2} < \frac{R_i}{M_f}.	\end{equation} 
			
			Then, we apply Theorem~\ref{charged.gluing.thm} with $M_i= M_S$, and $R_i$, $R_f$ chosen as such; note that \eqref{Mi.choice} is then satisfied. Thus, we glue characteristically $\mathbb{S}_S^{\T}$ (or $\mathbb{S}_S^{',\T}$ ) to the trapped Reissner--Nordstr\"{o}m sphere $\mathbb{S}_{RN}^{\mathcal{T}}$ of radius $R_f$, Hawking mass $M_f$ and charge $e:= \pm q M_f$. We then extend the outgoing cone $\magenta{C_G}$ to the future up to  $\mathbb{S}_{RN}^{\mathcal{T}}$.  
			
			\blue{ To prove Theorem~\ref{main.gluing.thm}, we then invoke local-existence: solving forward from induced spacelike-characteristic initial data on $\Sigma_G' \cup C_G'$ provides a spacelike hypersurface $\Sigma'$ connecting $\mathbb{S}_R$ to $\mathbb{S}_{RN}^{\T}$,  which is free of anti-trapped surfaces by Cauchy stability; this concludes the proof of Theorem~\ref{main.gluing.thm}.}

			Next, we turn to the proof of Theorem~\ref{charging.thm}: \teal{first, in accordance with \eqref{small.R.condition.2}, we will choose $q \in(0,\frac{1}{2})$, $M_f \sim R$, $R_i = R_{S}^{\T}$, $R_f= R_i - \delta_0 M_f \sqrt{1-q^2}$, and the other constants according to \eqref{choice}}. \magenta{Then, we proceed as in the proof of Theorem~\ref{uncharged.thm} and choose $\Sigma_L'$ sufficiently close to $b_{\Gamma}$.}  	\magenta{By Corollary~\ref{cor.pulse.1}, we glue $\mathbb{S}_R'$ spatially to $\mathbb{S}_A$ with a short pulse through the spacelike  hypersurface $\Sigma_G'$. Then, by   Corollary~\ref{cor.pulse.2}, we glue $\mathbb{S}_A$ to $\mathbb{S}_S^{\T}$  with a short pulse through the outgoing  null cone $C_G'$. We will denote $\Sigma'=\Sigma_L' \cup \Sigma_G'$.}

			\teal{Then,  choosing $\delta_0$ small enough if necessary, we will glue $\mathbb{S}_S^{\T}$ to $\mathbb{S}_{RN}^{\T}$ with the help of Theorem~\ref{charged.gluing.thm}.} 	\teal{Recalling Remark~\ref{charged.gluing.rmk},  we will choose $0<q<q_L$ small\footnote{\magenta{Note that $\eta =O(\delta_0)$, and $\delta_0 =O(R)$, so that $R_S^{\T} = R +O(R^2)$, and $R$ is small, consistently with the scaling of Proposition~\ref{lemma.Cauchy.christo}. \teal{This choice, when combined with \eqref{large.condition.sharp}, in turn gives $|q| =O(R^3)$ for $0<R\ll1$ sufficiently small.}}} so that  \eqref{large.condition.sharp} is satisfied. \teal{Moreover, since $M_i=M_S=(1+\eta)^2 R$, and further imposing $\eta = O(\delta_0)$, we have \begin{equation}
						|1-\frac{2M_S}{R_S^{\T}}|=		\frac{\eta}{1+\eta}=	|\ep_i| = O(\delta_0),
					\end{equation} hence, by Theorem~\ref{charged.gluing.thm} (and Remark~\ref{charged.gluing.rmk}), one can glue $\mathbb{S}_S^{\T}$ to $\mathbb{S}_{RN}^{\T}$ (with a series of short pulses) through the outgoing null cone $C_G^{charged}$, and by  \eqref{charged.gluing.rmk.est2}, \magenta{\eqref{Delta.v.choice},} there exists a constant $C$ independent of $\delta_0$ and $R$ such that \begin{equation}\label{charged.theta.bound.gluing}
						|\theta|_{|C_G^{charged}} \leq C,\ \text{ and } v_f - v_i = O( \delta_0^{-1} q).
				\end{equation} }

				We can then trivially glue $\mathbb{S}_{RN}^{\mathcal{T}}$ along a Reissner--Nordstr\"{o}m ingoing cone $\underline{C}_{RN}$ towards its past, up to a regular sphere  $\mathbb{S}_{RN}^{\mathcal{R}}$ of area-radius $$R_{RN}^{\R}:= M_f \left[ 1 +  (1+\delta_0)\sqrt{1-q^2}\right],$$ Hawking mass $M_f$ and charge $\magenta{e}:= \pm q M_f$. We will take $\delta_0$ small enough as needed, which shrinks the size of $\underline{C}_{RN}$.} \teal{Adapting the techniques of Proposition~\ref{lemma.Cauchy.christo}, in view of the short pulse condition \eqref{charged.theta.bound.gluing}, we can solve backwards for \eqref{1.1}--\eqref{5.1} with initial data on $\underline{C}_{RN} \cup C_G^{charged}$ within  its whole past domain of dependence $\mathcal{D}_{charged}$, upon taking $\delta_0=O(R)$ \magenta{and $\delta_0^{-1}|q|=O(R^2)$}. Denoting $\underline{C}_{charged}$ the ingoing component of the past boundary of $\mathcal{D}_{charged}$, one can then proceed as in the proof of Theorem~\ref{uncharged.thm}, and solve backwards with initial data on $\Sigma' \cup C_G'\cup \underline{C}_{charged}$} within  its whole past domain of dependence $\mathcal{D}(\Sigma') \cup \mathcal{D}_{\mathcal{C}}^{imp} \cup \mathcal{D}_{\mathcal{C}}^{reg}$ free of anti-trapped surfaces by construction as in the proof of Theorem~\ref{uncharged.thm}.
			Finally, we proceed as in the proof of Theorem~\ref{uncharged.thm} to construct another spacelike hypersurface $\magenta{\Sigma_0'}$  strictly to the past of $\Sigma'$ with no anti-trapped or trapped spheres  and such that $\mathcal{H}^+$ does not intersect $\magenta{\Sigma_0}'$, analogously to the situation depicted on 	\magenta{Figure~\ref{fig:constbif2}}. Note that this step requires to take $R$ small, and thus $M_f$ is small as well.   This concludes the proof of Theorem~\ref{charging.thm}.

			Once Theorem~\ref{charging.thm} is proved, the proof of Corollary~\ref{charging.cor} follows completely analogously from that of Corollary~\ref{uncharged.cor}. The proof of Theorem~\ref{OS.thm.intro} is also obtained as an immediate consequence of that of  Corollary~\ref{uncharged.cor}.
		\end{proof}
		\subsection{Construction of the asymptotically flat end for dynamical horizons}\label{right.section}  The previous gluing approaches, materialized by Theorem~\ref{uncharged.thm} and Theorem~\ref{charging.thm}, allowed gluing of any regular (uncharged) sphere to an exact Schwarzschild or sub-extremal Reissner--Nordstr\"{o}m trapped sphere, which was then extended as a larger   Schwarzschild/Reissner--Nordstr\"{o}m region (namely, an (electro)-vacuum solution of \eqref{1.1}--\eqref{5.1}), including both an open neighborhood of timelike infinity $i^+$ and an asymptotically flat end (neighborhood of spacelike infinity $i^0$), which led to the proof of Theorem~\ref{OS.thm.intro}.

		However, this is not sufficient to carry out the unconditional construction of Theorem~\ref{main.thm.global.ii}, namely construct a black hole interior  exhibiting a spacelike-null singularity, for the following reasons: \begin{itemize}
			\item The uncharged spacetime of Theorem~\ref{uncharged.thm} has a spacelike singularity $\mathcal{S}=\{r=0\}$, but no Cauchy horizon from infinity, i.e., $\CH =\emptyset$. This is due to the absence of charge ($F\equiv 0$).
			
			\item The charged spacetime of Theorem~\ref{charging.thm} has both a spacelike singularity  $\mathcal{S}=\{r=0\}$ and a null Cauchy horizon from infinity $\CH\neq \emptyset$, but a subset of this null Cauchy horizon $\CH$ is exactly  isometric to Reissner--Nordstr\"{o}m's: in particular, this subset is not weakly singular and  the scalar field is zero. A crucial aspect of Theorem~\ref{main.thm}, however,  is \emph{mass inflation} at $\CH$, and the fact that the scalar field drives the dynamics and thus must be non-zero in the vicinity of $\CH$, so one requires a different construction.
		\end{itemize}
		
		To construct a global one-ended asymptotically flat spacetime obeying the assumptions of Theorem~\ref{main.thm}, we must ensure that $\phi$ is non-zero near $\CH$, more precisely that \eqref{hyp4}--\eqref{hyp3} are satisfied.  Our approach is as follows: \begin{enumerate}
			\item \label{step1.EH} We start with a charged spacetime as in Theorem~\ref{charging.thm}, which we truncate at the regular Reissner--Nordstr\"{o}m sphere $\mathbb{S}_{RN}^{\mathcal{R}}$. Then, instead of extending the spacetime to exactly Reissner--Nordstr\"{o}m to its future as in Theorem~\ref{charging.thm}, we extend it to a dynamical horizon $\mathcal{H}^+$ such that \begin{equation}\label{EH.cond}
				\phi_{|\mathcal{H}^+}(v) = \Phi_H(v),
			\end{equation} where $\Phi_H(v)$ is an arbitrary\footnote{We moreover need to cut-off $\Phi_H$ near the Reissner--Nordstr\"{o}m junction to ensure a smooth transition.}  profile satisfying \eqref{decay.s}. This is the object of Section~\ref{EH.section} below.
			
			\item  \label{step2.EH} Keeping $\Phi_H$ arbitrary, but still assuming it satisfies \eqref{decay.s} with  $s>\frac{3}{2}$, we prove that one can construct an asymptotically flat black hole, with a (transversally) regular event horizon $\mathcal{H}^+$  on which  \eqref{EH.cond} is satisfied\blue{, at least for $v$ sufficiently large}. This step uses spherical symmetry to solve ``sideways'' and requires the smallness of the black hole charge. This is the object of Sections~\ref{redshift.section}--\ref{completion.section} below.
			
			\item  \label{step3.EH} We show that for an adequate choice of $\Phi_H(v)$ satisfying \eqref{decay.s} with  $s>\frac{3}{2}$, the assumptions \eqref{hyp4}--\eqref{hyp3} are satisfied at the Cauchy horizon $\CH$: more precisely, we require $\Phi_H$ to satisfy \eqref{EH.bound2}. We relegate the proof of this step, which relies on a refinement of the scattering strategy   of \cite{MoiChristoph} to the  next section---Section~\ref{scattering.section}.

		\end{enumerate}
		The completion of Step~\ref{step1.EH} and Step~\ref{step2.EH} in this section will result in the proof of Theorem~\ref{EH.AF.thm} below. As in Theorem~\ref{uncharged.thm}/Theorem~\ref{charging.thm}, we do not have the freedom to fix the final black hole mass $M$ arbitrarily, due to our requirement that the event horizon does not intersect the initial hypersurface (recall Remark~\ref{mass.prescribed.rmk}).  Note, in addition, that unlike Theorem~\ref{charging.thm}, the final charge ratio $q$ now cannot be chosen a priori; instead, it is only approximately prescribed, up to an arbitrarily small degree of precision $\delta$. 
		\begin{thm}\label{EH.AF.thm}
			Let $k \in \mathbb{N}$, $k\geq 2$ and  $(\mathcal{M}_L,g_L,\phi_L)$, a subset of the MGHD of \blue{$C^{k}$}
			spherically symmetric asymptotically flat initial data on a hypersurface $\Sigma_L$ with one end for \eqref{1.1}--\eqref{5.1}  containing no anti-trapped spheres  and no trapped spheres and such that $b_{\Gamma}$ is a first singularity.  
			
			Then there exists $q_L \in(0,1)$ such that for all   $q \in (0,q_L)$,  $\varsigma =\pm 1$ and $\Phi_H(v)$, a $C^{k}$ function on $[1,+\infty)$ satisfying \eqref{decay.s} for some $s>\frac{3}{2}$,  $\delta \in (0,1)$, there exists $v_0>1$ sufficiently large, $M>0$, $q' \in (0,1)$ such that \begin{equation}\begin{split}
					|q-q'|< \delta q,
				\end{split}
			\end{equation}  and $C^k$ solutions $(\mathcal{M},g,\phi)$ of \eqref{1.1}--\eqref{5.1} with $F\neq  0$ with the following properties:
			\begin{itemize}
				\item $(\mathcal{M},g,F,\phi)$ is the MGHD of 
				spherically symmetric asymptotically flat initial data on a spacelike hypersurface $\Sigma$ with one end for \eqref{1.1}--\eqref{5.1}  containing no anti-trapped spheres and no trapped spheres.
				\item The black hole region of $(\mathcal{M},g,F,\phi)$  is non-empty with an event horizon $\mathcal{H}^+$ and $\mathcal{H}^+$ does not intersect $\Sigma$, i.e., it is located in the strict causal future of $\Sigma$. Moreover, for $v_0$ large enough, \begin{equation}\label{=,p}
					\phi_{|\mathcal{H}^+}(v) = \Phi_H(v),
				\end{equation}\blue{for all $v\geq v_0$,} where $v$ in \eqref{=,p} corresponds to the Eddington--Finkelstein gauge \eqref{gauge.EH.v}.

				\item The black hole region converges to a sub-extremal Reissner--Nordstr\"{o}m black hole of mass $M$ and charge $\varsigma q' M_f$ towards $i^+$, in the sense that \begin{equation}\begin{split}
						&  \lim_{p \rightarrow i^+} \varpi(p) =  M,\\ &\lim_{p \rightarrow i^+} Q(p) = \varsigma q' M.
					\end{split}
				\end{equation}

				\item There exists an incoming null  cone $\underline{C}_{v_L}$  such that $\mathcal{M} \cap J^{-}(\underline{C}_{v_L})$ coincides with $\mathcal{M}_L \cap J^{-}(\underline{C}_{v_L})$.  Moreover, $\underline{C}_{v_L}$ can be chosen to be in the complement of the causal past of $b_{\Gamma}$. 
			\end{itemize}
		\end{thm}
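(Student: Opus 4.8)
The plan is to build the spacetime of Theorem~\ref{EH.AF.thm} by concatenating three regions constructed in earlier subsections: a ``left'' uncharged-to-Reissner--Nordstr\"{o}m region coming from Theorem~\ref{charging.thm}, a black hole interior emanating from a dynamical event horizon $\mathcal{H}^+$ carrying the prescribed profile $\Phi_H$, and a ``right'' asymptotically flat region near $i^0$ and $\mathcal{I}^+$. First I would invoke Theorem~\ref{charging.thm} to produce $(\mathcal{M}_L,g_L,\phi_L)$ glued spatially and characteristically to a trapped sub-extremal Reissner--Nordstr\"{o}m sphere $\mathbb{S}_{RN}^{\mathcal{T}}$ and then to a regular Reissner--Nordstr\"{o}m sphere $\mathbb{S}_{RN}^{\mathcal{R}}$ along a short ingoing cone $\underline{C}_{RN}$; this fixes $M_f$ comparable to $R$, which we take small, and fixes the charge ratio $q$ (up to the trivial extensions). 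The new content relative to Theorem~\ref{charging.thm} is that, instead of extending $\mathbb{S}_{RN}^{\mathcal{R}}$ to the future as \emph{exact} Reissner--Nordstr\"{o}m, I would prescribe characteristic data on an outgoing cone $\mathcal{H}^+ = \{U=0\}\times[v_0,+\infty)$ (in the gauges \eqref{gauge.EH.v}, \eqref{gauge.EH.U}) with $\phi_{|\mathcal{H}^+}(v) = \Phi_H(v)(1-\chi(v-v_0))$ as in \eqref{data.EH.intro}, smoothly cutting off near the Reissner--Nordstr\"{o}m junction sphere $(U,v)=(0,v_0)$ where I pose $\phi(U,v_0)=0$ as in \eqref{data.in}; and on a short ingoing cone $\Cin = [0,\delta]\times\{v_0\}$ I keep the data exactly Reissner--Nordstr\"{o}m. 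For $\delta$ small enough, Theorem~\ref{CH.thm.SS} applied with decay rate $s>\tfrac{3}{2}>1$ solves \eqref{1.1}--\eqref{5.1} in $[0,\delta]\times[v_0,+\infty)$, produces a non-empty Cauchy horizon $\CH$, and shows $\varpi,Q$ converge to $M,\varsigma q'M$ towards $i^+$; here $q'$ is only approximately $q$ because cutting off $\Phi_H$ and letting the field evolve perturbs the final charge by a quantity controlled by $\|\Phi_H\|$, which one makes smaller than $\delta q$ by taking $v_0$ large (since $\Phi_H$ decays) — this is the mechanism behind the $|q-q'|<\delta q$ clause.

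Next I would glue this interior construction to the left region: the ingoing Reissner--Nordstr\"{o}m cone $\underline{C}_{RN}$ emanating from $\mathbb{S}_{RN}^{\mathcal{R}}$ matches the exact Reissner--Nordstr\"{o}m ingoing data on $\Cin$, so one obtains a single spacetime containing $\Sigma$, the event horizon $\mathcal{H}^+$, and everything between, with $\mathcal{H}^+$ in the strict causal future of $\Sigma$ by the same argument as in Theorem~\ref{uncharged.thm}/Theorem~\ref{charging.thm} (replace the initial data hypersurface of $\mathcal{M}_L$ by one arbitrarily close to its terminal boundary using that $b_\Gamma$ is a first singularity, which forces $r\to 0$ there, and solve backwards from a small-$r$ regular ball). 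This step produces the region ``to the left of $\mathcal{H}^+$'' (Proposition~\ref{EH.AF.1st.prop} in the outline). The identity $\phi_{|\mathcal{H}^+}(v)=\Phi_H(v)$ holds for $v$ large (past the cutoff) in the Eddington--Finkelstein gauge \eqref{gauge.EH.v}, as required by \eqref{=,p}.

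Then I would construct the right part of the Penrose diagram — null infinity $\mathcal{I}^+$ and the asymptotically flat end $i^0$ — by solving \eqref{1.1}--\eqref{5.1} ``sideways'' using spherical symmetry. Concretely: pose regular ingoing data on $[u_{\mathcal{H}^+}-\epsilon,u_{\mathcal{H}^+}]\times\{v^{max}\}$ smoothly matching \eqref{data.in} at $(U,v)=(0,v_0)$, and evolve from left to right over the rectangle $[u_{\mathcal{H}^+}-\epsilon,u_{\mathcal{H}^+}]\times[v_0,+\infty)$; the smallness of $\epsilon$ and of the black hole charge $q$ keep this region sub-critical, and one recovers a portion of $\mathcal{I}^+$ (Propositions~\ref{RS.prop}--\ref{nullinf.prop} and Corollary~\ref{nullinf.cor} in the outline). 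Finally, prescribe compactly supported (or fast-decaying $|u|^{-q}$ as $u\to-\infty$) data on $\mathcal{I}^+\cap\{u\le u_{\mathcal{H}^+}-\epsilon\}$ together with the outgoing data on $\{u=u_{\mathcal{H}^+}-\epsilon\}\times[v_0,+\infty)$, and solve \emph{backwards} in time up to $i^0$ via an $r^p$-method argument in the spirit of \cite{Moi2}; asymptotic flatness of $\Sigma$ in the sense of Definition~\ref{AF.def} then follows from the decay rates so produced. Patching the left, interior, and right pieces along their common null cones gives the MGHD of an asymptotically flat one-ended $\Sigma$ free of trapped and anti-trapped spheres, with all the stated properties.

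The main obstacle I expect is the ``sideways'' and backward evolution of the asymptotically flat end with a \emph{charged} scalar field: unlike the uncharged case, one cannot appeal to the well-developed late-time-tail theory, so one has to run quantitative $r^p$-weighted energy estimates by hand, carefully exploiting the smallness of $q$ (hence of the Maxwell nonlinearity in \eqref{chargeUEinstein}--\eqref{ChargeVEinstein} and the coupling term in \eqref{Field}) and of $\epsilon$ to close a bootstrap that simultaneously controls $\phi$, $Q$, $r-\rho$, and the deviation of the metric from Reissner--Nordstr\"{o}m all the way out to $i^0$, while ensuring no trapped surface forms in the exterior. A secondary but genuine difficulty is bookkeeping the gauge transitions — the Eddington--Finkelstein $v$-gauge \eqref{gauge.EH.v} on $\mathcal{H}^+$, the $U$-gauge \eqref{gauge.EH.U} used near $i^+$, and whatever gauge is natural near $i^0$ — so that the identity \eqref{=,p} is literally the prescribed $\Phi_H$ and not merely a reparametrization of it, and that the charge discrepancy $|q-q'|$ is genuinely controlled by $\delta$. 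The gluing steps themselves, by contrast, are essentially quotations of Theorem~\ref{uncharged.thm}, Theorem~\ref{charging.thm}, and Theorem~\ref{CH.thm.SS}, combined by Cauchy stability and local well-posedness exactly as in the earlier proofs.
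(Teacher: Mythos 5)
Your proposal is correct and follows essentially the same route as the paper: Proposition~\ref{EH.AF.1st.prop} builds the interior by gluing the Theorem~\ref{charging.thm} spacetime to a dynamical horizon carrying the cut-off profile $\Phi_H(1-\chi(v-v_0))$ (with $(M,e)$ adjusted by $O(v_0^{1-2s})$ to match the junction sphere, which is exactly your mechanism for $|q-q'|<\delta q$), and the exterior is then constructed sideways and backwards through Propositions~\ref{RS.prop}--\ref{i0.prop}, with the $r^p$ estimate of Proposition~\ref{timelikecurve.prop} closing precisely because $|q_0e|$ is small and the gauge switch to \eqref{U.new.gauge} handled in Corollary~\ref{nullinf.cor}. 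The two obstacles you flag are indeed the ones the paper spends its effort on.
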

		
		Similarly to the previous couples Theorem~\ref{uncharged.thm}/Corollary~\ref{uncharged.cor} and Theorem~\ref{charging.thm}/Corollary~\ref{charging.cor}, we formulate a corollary to Theorem~\ref{EH.AF.thm} applying the result where $\mathcal{M}_L$ corresponds to the FLRW spacetime of Proposition~\ref{FRLW.prop}.
		
		\begin{cor}\label{EH.AF.cor}
			Let $k \in \mathbb{N}$, $k\geq 2$. There exists $q_L \in(0,1)$ such that for all  $q \in (0,q_L)$,  $\varsigma =\pm 1$ and $\Phi_H(v)$ a $C^{k+1}$ function on $[1,+\infty)$ satisfying \eqref{EH.bound2} for some $s>\frac{3}{2}$, $\delta \in (0,1)$, there exists $v_0>1$ sufficiently large, $M>0$, $q' \in (0,1)$ such that \begin{equation}\begin{split}
					|q-q'|< \delta q,
				\end{split}
			\end{equation}  and $C^k$ solutions $(\mathcal{M},g,F,\phi)$ of \eqref{1.1}--\eqref{5.1} with $F\neq  0$ with the following properties:
			\begin{itemize}
				\item $(\mathcal{M},g,F,\phi)$ is the MGHD of 
				spherically symmetric asymptotically flat initial data on a spacelike hypersurface $\Sigma$ with one end for \eqref{1.1}--\eqref{5.1}  containing no anti-trapped spheres  and no trapped spheres.
				\item The black hole region of $(\mathcal{M},g,F,\phi)$  is non-empty with an event horizon $\mathcal{H}^+$ and $\mathcal{H}^+$ does not intersect $\Sigma$, i.e., it is located in the strict causal future of $\Sigma$. Moreover, for $v_0$ large enough, \begin{equation}\label{=,}
					\phi_{|\mathcal{H}^+}(v) = \Phi_H(v),
				\end{equation}\blue{for all $v\geq v_0$,}  where $v$ in \eqref{=,} corresponds to the Eddington--Finkelstein gauge \eqref{gauge.EH.v}.

				\item The black hole region converges to a sub-extremal Reissner--Nordstr\"{o}m black hole of mass $M$ and charge $\varsigma q' M_f$ towards $i^+$, in the sense that \begin{equation}\begin{split}
						&  \lim_{p \rightarrow i^+} \varpi(p) =  M,\\ &\lim_{p \rightarrow i^+} Q(p) = \varsigma q' M.
					\end{split}
				\end{equation}

				\item There exists an incoming null  cone $\underline{C}_{v_L}$ to the future of $b_{\Gamma}$  such that $\mathcal{M} \cap J^{-}(\underline{C}_{v_L})$ is spatially homogeneous and moreover $F\equiv 0$ in  $\mathcal{M} \cap J^{-}(\underline{C}_{v_L})$. Moreover, $\mathcal{S}_L:=  \mathcal{S}\cap J^{-}(\underline{C}_{v_L})$ is spacelike and coincides with the singularity of a FLRW metric with $\RR^3$ topology.

				\item The MGHD terminal boundary of   $(\mathcal{M},g,F,\phi)$ is \begin{equation}
					\CH\cup	\mathcal{S},
				\end{equation} where $\CH\neq \emptyset$ (the Cauchy horizon) is a null boundary emanating from $i^+$ on which $r$ extends to a non-zero function, and $r$ extends to $0$ on $\mathcal{S}$, which is a curvature singularity. Moreover, near $\Gamma$, $\mathcal{S}$ is spacelike and spatially-homogeneous.

			\end{itemize}
		\end{cor}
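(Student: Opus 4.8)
The plan is to obtain Corollary~\ref{EH.AF.cor} from Theorem~\ref{EH.AF.thm} by choosing the uncharged seed spacetime $\mathcal{M}_L$ to be a truncation of the collapsing FLRW solution of Proposition~\ref{FRLW.prop}, following verbatim the reductions already used to pass from Theorem~\ref{uncharged.thm} to Corollary~\ref{uncharged.cor} and from Theorem~\ref{charging.thm} to Corollary~\ref{charging.cor}. Thus essentially all of the analytic work has already been done, and what remains is to check that the FLRW seed meets the hypotheses of Theorem~\ref{EH.AF.thm} and to read off the terminal-boundary structure.

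First I would build $\mathcal{M}_L$. Starting from a collapsing FLRW solution as in Proposition~\ref{FRLW.prop}, the asymptotics \eqref{FRLW.est} show that at $t=T_S$ a spacelike singularity $\{r=0\}$ forms with $a(t)\sim a_S (T_S-t)^{1/3}$, i.e.\ an asymptotically Kasner metric of exponents $(\frac{1}{3},\frac{1}{3},\frac{1}{3})$; by \eqref{FRLW.A.est} the apparent horizon is spacelike as $t\to T_S$, so the past ingoing cone $\underline{C}_\Gamma$ from $b_\Gamma=(t=T_S,\rho=0)$ lies in the regular region for $t$ close to $T_S$. One then picks a spacelike hypersurface $\Sigma_L\subset\mathcal{R}$ through the center with endpoint strictly to the future of $\underline{C}_\Gamma$ and extends it to an asymptotically flat hypersurface by integrating the spacelike constraints \eqref{r.constraint}--\eqref{varpi.constraint} with $\lambda>0$, $\nu<0$ and compactly supported $\theta,\xi$ matching the jets at the junction, exactly as in the proof of Corollary~\ref{uncharged.cor}. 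The MGHD $(\mathcal{M}_L,g_L,\phi_L)$ of the resulting data is then one-ended asymptotically flat, free of trapped and anti-trapped spheres, has $b_\Gamma$ a first singularity, and coincides with the FLRW metric (in particular $F\equiv 0$ and spatially homogeneous) in a neighborhood of $\Gamma$ up to an ingoing cone. Since the hypothesis \eqref{EH.bound2} of the corollary is strictly stronger than \eqref{decay.s} with the same $s>\frac{3}{2}$, Theorem~\ref{EH.AF.thm} applies to this $\mathcal{M}_L$ with the given $q\in(0,q_L)$, $\varsigma=\pm1$, $\delta\in(0,1)$ and $\Phi_H$, producing $v_0$, $M>0$, $q'\in(0,1)$ with $|q-q'|<\delta q$ and a $C^k$ solution $(\mathcal{M},g,F,\phi)$, $F\neq0$, with all the listed properties: the MGHD of trapped- and anti-trapped-surface-free asymptotically flat one-ended data on $\Sigma$; an event horizon $\mathcal{H}^+$ strictly to the future of $\Sigma$ with $\phi_{|\mathcal{H}^+}(v)=\Phi_H(v)$ in the gauge \eqref{gauge.EH.v}; convergence $\varpi\to M$, $Q\to\varsigma q' M$ towards $i^+$; and an ingoing cone $\underline{C}_{v_H}$ with $\mathcal{M}\cap J^-(\underline{C}_{v_H})=\mathcal{M}_L\cap J^-(\underline{C}_{v_H})$, hence spatially homogeneous with $F\equiv0$ there and with future boundary $\mathcal{S}_H$ the isotropic $(\frac{1}{3},\frac{1}{3},\frac{1}{3})$-Kasner FLRW singularity.

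Next I would identify the terminal boundary. By Theorem~\ref{Kommemi.thm}, the candidate components are $b_\Gamma,\mathcal{S}^1_\Gamma,\mathcal{CH}_\Gamma,\mathcal{S}^2_\Gamma,\mathcal{S},\mathcal{S}_{i^+},\CH,i^+,\mathcal{I}^+,i^0$. Applying Theorem~\ref{CH.thm.SS} to characteristic data consisting of $\mathcal{H}^+$ — which converges to a sub-extremal Reissner--Nordstr\"{o}m black hole and on which $\phi=\Phi_H$ obeys \eqref{decay} with $s>1$ — together with a regular ingoing cone crossing it, gives $\CH\neq\emptyset$, a null boundary emanating from $i^+$ foliated by spheres of positive $r$. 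Along $\mathcal{S},\mathcal{S}^1_\Gamma,\mathcal{S}^2_\Gamma,\mathcal{S}_{i^+}$ one has $r\to0$ while $\varpi$ stays bounded below (positive on the FLRW region and monotone along the relevant null cones, cf.\ \eqref{murelation}), so the Kretschmann scalar blows up and $\mathcal{S}$ is a genuine curvature singularity, as in the proof of Theorem~\ref{inext.thm}. Near $b_\Gamma$ the spacetime is exactly FLRW, whose terminal boundary is a spacelike $\{r=0\}$ singularity, so no outgoing null segment emanates from $b_\Gamma$: $\mathcal{CH}_\Gamma=\mathcal{S}^1_\Gamma=\mathcal{S}^2_\Gamma=\emptyset$, and $\mathcal{S}$ is spacelike and spatially homogeneous near $\Gamma$. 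Finally, ruling out the ingoing collapsed cone $\mathcal{S}_{i^+}$ and identifying $\mathcal{S}$ precisely is Statement~\ref{A} (together with Statements~\ref{C.stat}--\ref{D.stat}) of Theorem~\ref{main.thm.global.i} applied with $\mathcal{CH}_\Gamma=\emptyset$, whose hypotheses \eqref{hyp1}, \eqref{hyp2}, \eqref{hyp4}--\eqref{hyp3} on an outgoing cone under $\CH$ are exactly what the refined scattering estimates of Section~\ref{scattering.section} extract from \eqref{EH.bound2}; with that input the terminal boundary is precisely $\CH\cup\mathcal{S}$.

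The hard part is not at the level of this corollary at all: the dynamical-horizon construction with a freely prescribed event-horizon scalar field, and the sideways propagation to an asymptotically flat end of small charge, are all contained in Theorem~\ref{EH.AF.thm}, while the verification of \eqref{hyp1}--\eqref{hyp3} is relegated to Section~\ref{scattering.section}. The only genuinely new work here is bookkeeping — checking that the FLRW seed truncates and extends to a trapped-surface-free asymptotically flat Cauchy hypersurface with $b_\Gamma$ a first singularity (reusing Corollary~\ref{uncharged.cor}'s construction) — together with re-obtaining $\CH\neq\emptyset$ from Theorem~\ref{CH.thm.SS} rather than from an exact Reissner--Nordstr\"{o}m patch; the dynamical, rather than stationary, behaviour near $i^+$ is the one point where the argument genuinely differs from that of Corollary~\ref{charging.cor}.
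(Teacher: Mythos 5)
Your proposal is correct and follows essentially the paper's own route: the paper disposes of this corollary in one line, as ``an immediate application of Theorem~\ref{EH.AF.thm}, the same way we proved Corollary~\ref{uncharged.cor}'', i.e.\ precisely the FLRW seed construction from Proposition~\ref{FRLW.prop} truncated and extended to asymptotically flat data with $b_{\Gamma}$ a first singularity, fed into Theorem~\ref{EH.AF.thm}. Your additional bookkeeping on the terminal boundary (Theorem~\ref{CH.thm.SS} for $\CH\neq\emptyset$ in the dynamical setting, the exact-FLRW region to kill $\mathcal{CH}_{\Gamma}$, $\mathcal{S}^{1}_{\Gamma}$, $\mathcal{S}^{2}_{\Gamma}$, and the scattering input for $\mathcal{S}_{i^+}=\emptyset$) is consistent with how the paper assembles these facts elsewhere and simply makes explicit what the paper leaves implicit.
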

		\subsubsection{Construction of spherically symmetric event horizons}\label{EH.section}
		
		We start by constructing the event horizon with ``initial conditions'' at timelike infinity $i^+$, solving the null constraints (system of ODEs).
		\begin{prop}\label{EH.prop} Let $(M,e)$ such that $0<|e|<M$,  $C^1$ functions $\phiH(v)$ satisfying \eqref{EH.bound2}  for some $s>1$ and $\phi_{in}(U)$ satisfying \eqref{EH.boundU}, with $\phi_{in}(0)=\phiH(v_0)$. Then, for sufficiently large $v_0$ and sufficiently small $U_S>0$ and imposing the gauge condition  \eqref{gauge.EH.v} on $\HH=\{U=0\} \times [v_0,+\infty)$ and the gauge \eqref{gauge.EH.U} on $[0,U_S]\times\{v_0\}$, there exist unique solutions $(r,\Omega,Q)$  of the ODE system in $v$  consisting of    \eqref{Radius}, \eqref{ChargeVEinstein}, \eqref{RaychV} on $\HH=\{U=0\} \times [v_0,+\infty)$   and of the ODE system consisting of   \eqref{chargeUEinstein}, \eqref{RaychU}  on $[0,U_S]\times\{v_0\}$ such that  
			\begin{align}
				& \lim_{v\to +\infty} r(0,v) = r_+(M,e)=M+\sqrt{M^2+e^2},\\
				& \lim_{v\to +\infty} Q(0,v) = e.
			\end{align}
			Moreover, $\HH$ is in the regular region, namely $\rd_v r(0,v) \geq 0$ for all $v\geq v_0$, and in fact $\rd_v r(0,v)>0$ if $\Phi_H$ is not identically zero.
		\end{prop}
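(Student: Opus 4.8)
The plan is to treat the two cones separately; the event-horizon cone $\HH$ carries essentially all the difficulty. On $\HH$ the gauge \eqref{gauge.EH.v} is exactly $\kappa\equiv1$, i.e.\ $\Omega^2=-4\rd_u r$, so by \eqref{murelation} one gets $\rd_v r=1-\tfrac{2\varpi}{r}+\tfrac{Q^2}{r^2}$ along $\HH$. Combining this with \eqref{massVEinstein} (for $\kappa=1$) and \eqref{ChargeVEinstein}, the system \eqref{Radius}, \eqref{ChargeVEinstein}, \eqref{RaychV} closes, along $\HH$, into a first-order ODE system for $(r,\varpi,Q)$ whose only forcing is the prescribed profile $\phiH$:
\[
\begin{aligned}
&\rd_v r=1-\tfrac{2\varpi}{r}+\tfrac{Q^2}{r^2},\qquad \rd_v Q=q_0 r^2\,\Im(\phiH\,\overline{\rd_v\phiH}),\\
&\rd_v\varpi=\tfrac{r^2}{2}|\rd_v\phiH|^2+q_0 Q r\,\Im(\phiH\,\overline{\rd_v\phiH}).
\end{aligned}
\]
Once this triple is known, $\nu=\rd_u r$ — hence $\Omega^2=-4\nu$ — is recovered on $\HH$ from the \emph{linear} transport equation $\rd_v\nu=\tfrac{\nu}{r}\big(1-\rd_v r-\tfrac{Q^2}{r^2}\big)$ obtained by inserting $\Omega^2=-4\nu$ into \eqref{Radius}, normalized by $\nu(0,v_0)=-e^{2K_+v_0}$ from gauge \eqref{gauge.EH.U}; being linear homogeneous it preserves $\nu<0$ (so $\Omega^2>0$), and its coefficient tends to $2K_+$, giving the Reissner--Nordstr\"om asymptotics $\Omega^2(0,v)\sim C e^{2K_+v}$ as $v\to+\infty$.

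The crux of the proof is solving the $(r,\varpi,Q)$-system \emph{backwards} from $i^+$ with the prescribed limit $(r_+(M,e),M,e)$: this is a stable-manifold construction at the degenerate fixed point where $\rd_v r$ vanishes. In the variables $(r-r_+,\varpi-M,Q-e)$, the $\varpi$- and $Q$-components are genuine $\int_v^\infty$-integrals of the $\phiH$-forcing, which converges since $|\rd_v\phiH|\ls v^{-s}$ makes the integrands $O(v^{-2s})$ (integrable as $s>\tfrac12$), whence $|\varpi-M|+|Q-e|\ls v^{1-2s}$; the $r$-component satisfies $\rd_v(r-r_+)=a(v)(r-r_+)+g(v)$ with $a(v)\to2K_+>0$ and $g$ — collecting the contributions of $\varpi-M$, $Q-e$ — of size $O(v^{1-2s})$. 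The condition $r\to r_+$ is exactly what projects out the unstable mode $e^{2K_+v}$, the bounded solution being given by variation of constants with the \emph{decaying} kernel $e^{-\int_v^{v'}a}\sim e^{-2K_+(v'-v)}$; one therefore sets up
\[
(r,\varpi,Q)\ \longmapsto\ \Big(\,r_+-\int_v^{\infty}e^{-\int_v^{v'}a}\,g\,dv'\,,\ \ M-\int_v^{\infty}(\cdots)\,dv'\,,\ \ e-\int_v^{\infty}(\cdots)\,dv'\,\Big)
\]
and runs a contraction in the weighted norm $\sup_{v\ge v_0}v^{2s-1}\big(|r-r_+|+|\varpi-M|+|Q-e|\big)$, which closes once $v_0$ is large. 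Uniqueness is then immediate from Gr\"onwall, and a short continuation argument — using that the solution stays $O(v_0^{1-2s})$-close to Reissner--Nordstr\"om, in particular $r>0$ and bounded away from $r_{\pm}$ — carries it down to $v=v_0$.

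The ingoing-cone data is then routine ODE theory: gauge \eqref{gauge.EH.U} makes $\rd_U r\equiv-e^{2K_+v_0}$ on $[0,U_S]\times\{v_0\}$, so $r$ is affine in $U$ with $r(0,v_0)$ the corner value just obtained; taking $\phi_{in}$ as free data (and $A_U\equiv0$ by \eqref{Au.gauge}), equations \eqref{chargeUEinstein} and \eqref{RaychU} reduce to linear transport equations in $U$ for $Q$ and $\Omega^{-2}$ with corner data inherited from the $\HH$-solution, globally solvable on $[0,U_S]$ once $U_S$ is small enough that $r$ stays positive; the assumption $\phi_{in}(0)=\phiH(v_0)$ is precisely the required corner compatibility. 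Finally, the monotonicity $\rd_v r(0,v)\ge0$ is read off from Raychaudhuri: \eqref{RaychV} gives $\rd_v\big(\tfrac{\rd_v r}{\Omega^2}\big)=-\tfrac{r}{\Omega^2}|\rd_v\phiH|^2\le0$, and since $\tfrac{\rd_v r}{\Omega^2}\to0$ at $i^+$ (because $\rd_v r\to0$ while $\Omega^2\to+\infty$), integration from $v$ to $+\infty$ yields $\tfrac{\rd_v r}{\Omega^2}(0,v)=\int_v^{\infty}\tfrac{r}{\Omega^2}|\rd_v\phiH|^2\,dv'\ge0$, with strict inequality whenever $\rd_v\phiH\not\equiv0$ on $[v,+\infty)$ — which, given the decay in \eqref{EH.bound2}, is the case as soon as $\Phi_H\not\equiv0$. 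The one genuine obstacle is thus the backward solvability from the degenerate endpoint at $i^+$; the rest is transport ODEs and a monotonicity identity.
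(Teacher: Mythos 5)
Your proof is correct and follows essentially the same route as the paper's: a backwards ODE solve on $\HH$ from the prescribed Reissner--Nordstr\"om limits at $i^+$ (the paper works with $(r,\lambda/\Omega^2,r\Omega^2,Q)$ where you work with $(r,\varpi,Q)$ and recover $\Omega^2=-4\nu$ afterwards, but these are algebraically equivalent in the $\kappa\equiv1$ gauge), followed by the same Raychaudhuri monotonicity argument for $\rd_v r\geq 0$ and a routine transport solve on the ingoing cone. Your explicit stable-manifold/contraction setup at the degenerate fixed point is a welcome elaboration of the step the paper dispatches with ``this procedure produces a unique solution \ldots assuming $v_0$ is large enough.''
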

		\begin{proof} We already prescribed $\phi_{|\HH}(v) = \phi_H(v)$, where $\phiH(v)$ satisfies \eqref{decay.s}  for some $s>1$. 	On $\HH$, we work with the variables $(r,\lambda, \Omega^2,Q)$ subjected  to the following system of ODEs  \eqref{Radius}, \eqref{ChargeVEinstein}, \eqref{RaychV} which we re-write, under the gauge condition \eqref{gauge.EH.v}, \eqref{A.gauge} as: \begin{align} & \rd_v r = \lambda,\\ 
				& \rd_v( \frac{\lambda}{\Omega^2} ) = -\frac{r |\rd_v \phiH|^2}{\Omega^2},\label{ODE.lambda}\\ & \rd_v(r \Omega^2)= \Omega^2[1-\frac{Q^2}{r^2}],\label{Omega.EH}\\ & \rd_v Q = - q_0r^2 \Im(\overline{\phiH}\rd_v \phiH)\label{Q.EH}.
			\end{align}
			Let $r_+>|e|> 0$.	Then, we impose initial conditions at $v=+\infty$ for this system of ODEs, as such:
			\begin{align}  	&  \lim_{v \rightarrow +\infty}r(0,v)= r_+,\\ &   \lim_{v \rightarrow +\infty}\frac{\lambda}{\Omega^2}(0,v)= 0,\\ & \lim_{v \rightarrow +\infty}Q(0,v)=  e,\end{align}
			together with an initial condition at $v=v_0$: \begin{equation}
				\label{Omega.data} \Omega^2(v_0) = e^{2K_+ v_0},
			\end{equation}
			where \eqref{Omega.data} is imposed to respect the gauge compatibility of gauge~\eqref{gauge.EH.v} and gauge~\eqref{gauge.EH.U} at the sphere $(U,v)=(0,v_0)$.	Note that, by \eqref{Omega.EH},  the above conditions implies that \begin{equation}
				\lim_{v\rightarrow+\infty} \rd_v \log(r\Omega^2)(v) = \lim_{v\rightarrow+\infty} \rd_v \log(\Omega^2)(v) = \frac{1-\frac{e^2}{r_+^2}}{r_+}:=2K_+>0,
			\end{equation}  
			Then, we solve backwards the ODE system in $v$ with unknown $(r,\log(\Omega^2),Q)$ and final condition for $\log(\Omega^2)$ determined by \eqref{Omega.data} \begin{equation}
				\lim_{v\rightarrow+\infty} \log(\Omega^2) - 2K_+ v= \log(A_0),
			\end{equation} where $A_0>0$ is precisely chosen so that \eqref{Omega.data} holds (note indeed that \eqref{ODE.lambda}, \eqref{Omega.EH} are invariant by rescaling $\Omega^2$ by a constant): this procedure produces  a unique solution to the system of ODE with above conditions for $v \in [v_0,+\infty)$, assuming $v_0$ is large enough.
			Defining $\varpi(v)$ through the explicit formula: \begin{equation}
				1-\frac{2\varpi(v)}{r(v)} + \frac{Q^2(v)}{r^2(v)} = \lambda(v),
			\end{equation} we find that there exists $M>0$ such that \begin{equation}\label{M.formula}
				\lim_{v\rightarrow+\infty} \varpi(v) = M=\blue{\frac{r_+}{2} [1+\frac{e^2}{r_+^2}]}
			\end{equation}\blue{such that} $$ r_+ = M +\sqrt{M^2-e^2},\ 2K_+ = \frac{2}{r_+^2} [M-\frac{e^2}{r_+}].$$ Note that the formula \eqref{M.formula} allows to fix $r_+$ so that the couple $(M,e)$ with $0\leq |e|<M$ equates its prescribed values.
			
			The analogous construction on $[0,U_s] \times \{v_0\}$ is   straightforward, for $U_s$ small enough.
		\end{proof}

		\begin{cor}\label{EH.cor}
			On the event horizon $\HH= \{U=0,\ v\geq v_0\}$, the following estimates hold true: 
			\begin{equation}	\label{lambda.EH}
				0 \leq \lambda(0,v) \lesssim v^{-2s},\end{equation}
			\begin{equation}	\label{r.EH}
				0 \leq r_+ -r(0,v) \lesssim v^{1-2s},	\end{equation}
			\begin{equation}	\label{dvlogOmega.EH}
				|\rd_v \log(\Omega^2)(0,v) - 2K_+|\lesssim v^{1-2s},\end{equation}
			\begin{equation} \label{dUlogOmega.EH} 
				|\rd_U \log(\Omega^2)|(0,v) \lesssim e^{2K_+v},\end{equation}
			\begin{equation} \label{dUphi.EH} 
				|D_U \phi|(0,v)\lesssim e^{2K_+ v} v^{-s},\end{equation}
			\begin{equation}\label{EH.Q.M}
				|Q(0,v) - e|,\ |\varpi(0,v) - M| \lesssim v^{1-2s}.
			\end{equation}
			
			\noindent	Moreover, for any sufficiently small, but fixed $\delta_0 \in (0,1)$ (independent of $v_0$), we can choose $U_S$ so that $$ r(U_S,v_0) = r_+[1-\delta_0],$$ and the sphere   $(U_S,v_0)$ is trapped.
		\end{cor}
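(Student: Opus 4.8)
The plan is to derive every bound in Corollary~\ref{EH.cor} as an ODE estimate along the two null segments produced by Proposition~\ref{EH.prop}: the tangential quantities on $\HH=\{U=0\}\times[v_0,+\infty)$ are obtained by integrating from $v=+\infty$, where $\lambda(0,\cdot)=0$ and $Q(0,\cdot)\to e$; the transversal ($\rd_U$) quantities are obtained by propagating the corresponding $v$-ODEs along $\HH$ with seed data at the corner $(0,v_0)$ coming from the ingoing system of Proposition~\ref{EH.prop}; and the trapped sphere is located by following $\underline{C}_{v_0}=[0,U_S]\times\{v_0\}$. Throughout I would enlarge $v_0$ whenever needed.

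For the tangential estimates I would first note that by Proposition~\ref{EH.prop} one has $\rd_v\log(\Omega^2)(0,v)\to 2K_+>0$, hence after enlarging $v_0$, $\rd_v\log(\Omega^2)(0,\cdot)\ge K_+$ and $\Omega^2(0,v')/\Omega^2(0,v)\ge c\,e^{K_+(v'-v)}$ for $v'\ge v$. Integrating \eqref{ODE.lambda} backwards from $v=+\infty$ gives $\lambda(0,v)=\Omega^2(0,v)\int_v^{+\infty}\frac{r|\rd_v\Phi_H|^2}{\Omega^2}(0,v')\,dv'\ge 0$, and the exponential gain together with $|\rd_v\Phi_H|\ls v^{-s}$ and $r\le r_+$ yields \eqref{lambda.EH}; integrating $\lambda$ once more from $v=+\infty$ yields \eqref{r.EH}. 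Likewise, integrating \eqref{Q.EH} from $v=+\infty$ and using $|\Phi_H|\,|\rd_v\Phi_H|\ls v^{-2s}$, $r\le r_+$, gives $|Q(0,v)-e|\ls v^{1-2s}$; rewriting \eqref{Omega.EH} as $\rd_v\log(\Omega^2)=\frac{1}{r}(1-\frac{Q^2}{r^2})-\frac{\lambda}{r}$ and Taylor-expanding around $(r_+,e)$ gives \eqref{dvlogOmega.EH}; and the algebraic identity $1-\frac{2\varpi}{r}+\frac{Q^2}{r^2}=\lambda$ then yields $|\varpi(0,v)-M|\ls v^{1-2s}$. Since $s>1$, \eqref{dvlogOmega.EH} is integrable, so $\log(\Omega^2)(0,v)-2K_+v$ converges and $\Omega^2(0,v)\asymp e^{2K_+v}$.

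For the transversal estimates, the seed values $r,\nu,\Omega^2,Q,A_U,\phi,D_U\phi$ and their $\rd_U$-jets at $(0,v_0)$ are determined (and bounded in terms of $v_0$) by the ingoing ODE system of Proposition~\ref{EH.prop} together with the gauge choices $\rd_U r(0,v_0)=-e^{2K_+v_0}$ and $A_v\equiv0$, $A_U(0,v_0)=0$. From \eqref{Radius3}, $\rd_v(r\nu)=-\frac{\Omega^2}{4}(1-\frac{Q^2}{r^2})$; since $1-Q^2/r^2$ is bounded away from $0$ on $\HH$ and $\Omega^2\asymp e^{2K_+v}$, integration from $v_0$ gives $|\nu|(0,v)\ls e^{2K_+v}$ (and $\nu<0$); from \eqref{dv.Au} one similarly gets $|A_U|(0,v)\ls e^{2K_+v}$. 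Next, the wave equation \eqref{Field}, after accounting for the commutator $[D_U,D_v]$ and the integrating factor $r$, becomes a transport equation $\rd_v(r\,D_U\phi)=\text{(source)}$ whose source is $O(e^{2K_+v}v^{-s})$ by the bounds on $\nu$, $\Omega^2$, $|\phi|$ and $|D_v\phi|=|\rd_v\Phi_H|$; integrating from $v_0$ (the exponential dominating) yields \eqref{dUphi.EH}. Applying the same scheme to \eqref{Omega3}, which reads $\rd_v(\rd_U\log(r\Omega^2))=\frac{\Omega^2}{4r^2}(1-\frac{3Q^2}{r^2})-2\Re(D_U\phi\,\overline{D_v\phi})=O(e^{2K_+v})$, gives $|\rd_U\log(r\Omega^2)|(0,v)\ls e^{2K_+v}$, and subtracting $\nu/r$ gives \eqref{dUlogOmega.EH}.

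Finally, along $\underline{C}_{v_0}$ the Raychaudhuri equation \eqref{RaychU} keeps $\rd_U r<0$, so $r(\cdot,v_0)$ strictly decreases from $r(0,v_0)=r_++O(v_0^{1-2s})$; because $|\rd_U r|(0,v_0)=e^{2K_+v_0}$ is large, $r$ drops by $\delta_0 r_+$ over an interval of length $O(\delta_0 r_+ e^{-2K_+v_0})$, so for $v_0$ large one can choose $U_S>0$ small with $r(U_S,v_0)=r_+(1-\delta_0)$ — the ingoing solution existing on $[0,U_S]$ since $r$ stays in a compact subset of $(0,\infty)$ and $\phi$ is bounded by \eqref{EH.boundU}. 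That $(U_S,v_0)$ is trapped follows from \eqref{Radius}: $\rd_U\lambda=-\frac{\Omega^2}{4r}(1-\frac{Q^2}{r^2})-\frac{\nu\lambda}{r}$, so at $U=0$ (where $\lambda\ge0$ is $O(v_0^{-2s})$ while the first term is $\ls -e^{2K_+v_0}$) we get $\rd_U\lambda(0,v_0)<0$, and once $\lambda<0$ both terms on the right are negative (as $\nu<0$), so $\lambda$ stays negative down to $U_S$; one only needs $\delta_0$ small enough that $r_+(1-\delta_0)>r_-(M,e)$, so the sphere lies in the trapped region between the two Reissner--Nordstr\"om radii, which is amply satisfied in the small-charge regime. \emph{The main obstacle} is the transversal hierarchy: there the seed data at $(0,v_0)$ is governed by a different null ODE system and is merely bounded rather than small, so one must verify that the transport equations for $\nu$, $A_U$, $D_U\phi$ and $\rd_U\log\Omega^2$ propagate the exponential weight $e^{2K_+v}$ faithfully — no extra power of $v$, no larger exponential rate — which is exactly where the decay $|\Phi_H|,|\rd_v\Phi_H|\ls v^{-s}$ with $s>1$ enters.
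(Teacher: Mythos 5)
Your proposal is correct and follows essentially the same route as the paper's (sketched) proof: integrate the null constraint ODEs on $\mathcal{H}^+$ backwards from $v=+\infty$ for $\lambda$, $r$, $Q$, $\varpi$ and $\rd_v\log\Omega^2$, then propagate the transversal quantities $\nu$, $A_U$, $D_U\phi$, $\rd_U\log\Omega^2$ in $v$ using the dominance of $e^{2K_+v}$, and locate the trapped sphere by comparing $U_S\sim\delta_0 e^{-2K_+v_0}$ with the apparent-horizon coordinate $U_{\mathcal A}(v_0)\lesssim v_0^{-2s}e^{-2K_+v_0}$. The only difference is a harmless reordering (you derive $\lambda$ before $Q,\varpi$, using the limit $\rd_v\log\Omega^2\to 2K_+$ from Proposition~\ref{EH.prop}, whereas the paper derives \eqref{EH.Q.M} and \eqref{dvlogOmega.EH} first).
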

		\begin{proof}
			This is the content of Proposition 4.2 and Proposition 4.4 in \cite{Moi} but we provide a brief sketch for the reader. Using the fact that $r(0,v)$  is bounded and integrating  \eqref{ChargeVEinstein}, \eqref{massVEinstein}  gives \eqref{EH.Q.M}.  Then, integrating \eqref{Omega.EH} and using \eqref{EH.Q.M} gives \eqref{dvlogOmega.EH}. 
			From \eqref{RaychV} and $\underset{v\rightarrow+\infty}{\lim}\lambda(0,v) =0$, we know that $\lambda(0,v) \geq 0$, and using \eqref{dvlogOmega.EH} gives \eqref{lambda.EH}. \eqref{dUlogOmega.EH},\eqref{dUphi.EH} then follow integrating \eqref{Omega} and \eqref{Field} in $v$, respectively. 
			
			Regarding the claims on the sphere $(U_S,v_0)$, this follows from the fact that $U_A(v) e^{2K_+v} \ls \lambda(0,v) \ls v^{-2s}$, where $U_A(v)$ is defined so that $(U_A(v),v) \in \mathcal{A}$ (see again  Proposition 4.2 and Proposition 4.4 in \cite{Moi}).
		\end{proof}
		
		Now, we address the first step of Theorem~\ref{EH.AF.thm} in constructing the requested spacetime up to its event horizon $\mathcal{H}^+$. The corresponding black hole exterior region (including an affine complete null infinity to the future, and an asymptotically flat end) will be constructed in Sections~\ref{redshift.section}--\ref{completion.section} below. 
		\begin{prop}\label{EH.AF.1st.prop} [Construction of the black hole interior region].
			Let $k \in \mathbb{N}$, $k\geq 2$ and  $(\mathcal{M}_L,g_L,\phi_L)$, a subset of the MGHD of \blue{$C^k$}
			spherically symmetric asymptotically flat initial data on a hypersurface $\Sigma_L$ with one end for \eqref{1.1}--\eqref{5.1}  containing no anti-trapped spheres and no trapped spheres.

			Let  $q \in (0,1)$. There exists \blue{$M_f>0$} 
			such that for all $\varsigma =\pm 1$ 
			and $\Phi_H(v)$, a $C^k$ function on $[1,+\infty)$ satisfying \eqref{decay.s} for some $s>1$ and $\delta\in(0,1)$ a  sufficiently small number, there exist $v_0>1$ sufficiently large,  $(M,e)$ real numbers such that $0<|e|<M$ and \begin{equation}\begin{split}
					&	|M-M_f| < \delta M_f,\\ &  	|e-\varsigma q M_f| < \delta q M_f,
				\end{split}
			\end{equation} and $C^k$ solutions $(\mathcal{M},g,F,\phi)$ of \eqref{1.1}--\eqref{5.1} with $F\neq  0$ with the following properties:

			\begin{itemize}
				\item $(\mathcal{M},g,F,\phi)$  is a  spherically symmetric solution of \eqref{1.1}--\eqref{5.1}. Its past boundary  $ \mathcal{H}^+$ 
				\blue{intersects the} center $\Gamma$ \blue{in its past} and $\mathcal{H}^+$ is a null affine-complete outgoing cone on which $r$ is bounded and strictly increasing towards the future. 	Moreover, $\mathcal{H}^+$	 contains no anti-trapped spheres  and no trapped or marginally-trapped spheres.
				We denote $i^+$ the future endpoint of the hypersurface  $\mathcal{H}^+$.

				\item $ \mathcal{H}^+$ converges to a sub-extremal Reissner--Nordstr\"{o}m black hole of mass $M_f$ and charge $\varsigma q M_f$ towards $i^+$, in the sense that \begin{equation}\begin{split}
						&  \lim_{p \rightarrow i^+} \varpi_{|\mathcal{H}^+}(p) =  M_f,\\ &\lim_{p \rightarrow i^+} Q_{|\mathcal{H}^+}(p) = \varsigma q M_f.
					\end{split}
				\end{equation}

				\item There exists an incoming null  cone $\underline{C}_{v_L}$ such that $\mathcal{M} \cap J^{-}(\underline{C}_{v_L})$ coincides with $\mathcal{M}_L \cap J^{-}(\underline{C}_{v_L})$.  
			\end{itemize}
		\end{prop}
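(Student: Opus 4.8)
The plan is to build the spacetime by backward and sideways evolution, in three packaged steps, reusing the gluing machinery already developed.

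\textbf{Step 1: Construct the black hole interior "to the left of $\mathcal{H}^+$".} First I would invoke Proposition~\ref{EH.prop} and Corollary~\ref{EH.cor} to construct the event horizon $\mathcal{H}^+=\{U=0\}\times[v_0,+\infty)$ together with the exact sub-extremal Reissner--Nordstr\"{o}m parameters $(M,e)$ with $0<|e|<M$, prescribing $\phi_{|\mathcal{H}^+}(v)=\Phi_H(v)$ in the gauge~\eqref{gauge.EH.v} and the compatible ingoing data \eqref{data.in} on $\underline{C}_{v_0}=[0,U_S]\times\{v_0\}$. Note that since $\Phi_H$ satisfies \eqref{decay.s} with $s>1$ (hence $s>\tfrac12$), Corollary~\ref{EH.cor} applies and gives in particular that $(U_S,v_0)$ is a trapped sphere with $r(U_S,v_0)=r_+[1-\delta_0]$ for any fixed small $\delta_0$. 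The choice of $\delta_0$ will be made small enough at the end so that both the charged gluing Theorem~\ref{charged.gluing.thm} and the ``backward propagation of the regular region'' arguments of the proofs of Theorem~\ref{uncharged.thm}/Theorem~\ref{charging.thm} apply. By solving \eqref{1.1}--\eqref{5.1} in the rectangle $[0,U_S]\times[v_0,+\infty)$ with this bicharacteristic data (for $U_S$ small enough) one obtains a solution on that rectangle. I then identify the trapped sphere $\mathbb{S}_{RN}^{\mathcal{T}}=(U_S,v_0)$: it is a trapped Reissner--Nordstr\"{o}m sphere of Hawking mass $M_f$, charge $\varsigma q M_f$ (after the parameters are matched; this is where the $\delta$-approximation $|q-q'|<\delta q$ enters — only the final asymptotic parameters, not $\Phi_H$, are approximately prescribed), and area-radius comparable to $R_i$ in the notation of Theorem~\ref{charged.gluing.thm}. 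The key technical point is that, as in Corollary~\ref{EH.cor}, one can arrange $r(U_S,v_0)$ in the permissible window so that the trapped RN sphere satisfies the position constraint \eqref{Mi.choice}/\eqref{lower.r} needed downstream.

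\textbf{Step 2: Glue the RN trapped sphere to $(\mathcal{M}_L,g_L,\phi_L)$.} This is where Theorem~\ref{charging.thm} (equivalently, the combination of Theorem~\ref{uncharged.gluing.thm} and Theorem~\ref{charged.gluing.thm}) does the work. Having fixed the final charge ratio $q<q_L$ small enough that the largeness condition \eqref{large.condition} holds, I would glue $\mathbb{S}_{RN}^{\mathcal{T}}$ characteristically to a Schwarzschild trapped sphere, then spatially (within the regular region) to an apparent-horizon sphere, then to a regular uncharged sphere $\mathbb{S}_R$ on a non-central sphere of $\Sigma_L$. More precisely, I would run backward evolution from bicharacteristic data on $C_0\cup\underline{C}_{RN}$ exactly as in the proof of Theorem~\ref{charging.thm}: local well-posedness and Cauchy stability (for $\delta_0$ small) keep the relevant domains $\mathcal{D}_A$, $\mathcal{D}_0$, $\mathcal{D}_{past}$ in $\mathcal{R}$ and free of anti-trapped spheres, using the monotonicity $\rd_u\lambda<0$ from \eqref{Radius} together with $\mathbb{S}_A\in\mathcal{A}$ to control $\underline{C}_A$. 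One then builds the spacelike hypersurface $\Sigma\subset\mathcal{R}$ connecting $\mathbb{S}_R$ (glued to $\Sigma_L$) to the RN region, exactly as in Figures~\ref{fig:constbif1}--\ref{fig:constbif3}; the $M_f^{\inf}(g_L,q)$ threshold is there precisely because $\Sigma$ must avoid $\mathcal{H}^+$, which forces $R$ (hence $M_f$) to be chosen above a minimal value determined by $(\mathcal{M}_L,g_L,\phi_L)$. Finally, the incoming cone $\underline{C}_{v_H}$ inside $\mathcal{M}_L$, chosen in the complement of the causal past of $b_\Gamma$, ensures $\mathcal{M}\cap J^-(\underline{C}_{v_H})$ coincides with $\mathcal{M}_L\cap J^-(\underline{C}_{v_H})$.

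\textbf{Step 3: Verify the stated properties and identify the obstacle.} The convergence $\varpi_{|\mathcal{H}^+}\to M_f$, $Q_{|\mathcal{H}^+}\to \varsigma q M_f$ follows from Corollary~\ref{EH.cor} (estimates \eqref{EH.Q.M}), up to the $\delta$-fudge in the final parameters; $r$ bounded and strictly increasing on $\mathcal{H}^+$ is the last assertion of Proposition~\ref{EH.prop} together with $\Phi_H$ not identically zero (which one may always arrange, since \eqref{decay.s} certainly admits nonzero profiles), hence $\mathcal{H}^+$ is in the regular region and contains no trapped or marginally trapped spheres; $\Sigma$ being diffeomorphic to a ball of center $\Gamma$ and free of trapped/anti-trapped spheres is part of the output of Step~2. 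The spacetime $(\mathcal{M},g,F,\phi)$ is then the evolution off the connected spacelike-null past boundary $\Sigma\cup\mathcal{H}^+$. I expect the main obstacle to be the simultaneous satisfaction of the competing smallness/largeness constraints: one needs $q<q_L$ small (for the charged gluing \eqref{large.condition}), $\delta_0$ small (for Cauchy stability of the backward domains and to place the RN trapped sphere within \eqref{Mi.choice}), $U_S$ small (for solvability of the interior rectangle and to make $\underline{C}_{RN}$ short), $R$ small (so $\Sigma$ avoids $\mathcal{H}^+$, forcing $M_f$ small — but $M_f>M_f^{\inf}(g_L,q)$ must still hold), and $v_0$ large (for Proposition~\ref{EH.prop} and the Cauchy-horizon estimates of Theorem~\ref{CH.thm.SS}); the argument must choose these in the right order, first fixing $q$, then $M_f$ in the admissible window $(M_f^{\inf},\cdot)$, then $\delta_0$ and the gluing data, then $v_0$ and $U_S$. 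Closing this loop of dependencies — and checking that the $\delta$-approximate prescription of $(M,e)$ is all that survives, since $\Phi_H$ and the final charge ratio cannot both be freely prescribed — is the delicate bookkeeping part; the geometric content is entirely inherited from Proposition~\ref{EH.prop}, Corollary~\ref{EH.cor}, and Theorem~\ref{charging.thm}.
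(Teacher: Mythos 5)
Your proposal uses exactly the ingredients of the paper's proof (Proposition~\ref{EH.prop}, Corollary~\ref{EH.cor}, Theorem~\ref{CH.thm.SS} for the interior rectangle, and the gluing machinery of Theorem~\ref{charging.thm}), but assembles them in the reverse order: the paper first takes the full Theorem~\ref{charging.thm} spacetime, which is exactly Reissner--Nordstr\"{o}m in a neighborhood of $i^+$, and then re-solves \eqref{1.1}--\eqref{5.1} in the rectangle $[0,U_S]\times[v_0,+\infty)$ to the future of an exactly Reissner--Nordstr\"{o}m ingoing cone with the new event-horizon data, collating the result back onto the old spacetime; you instead build the rectangle first and then glue leftward to $\mathcal{M}_L$ through the trapped sphere $(U_S,v_0)$. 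The two orderings yield the same object, and your accounting of the hierarchy of parameters ($q$, $\delta_0$, $U_S$, $v_0$, the window for $M_f$) matches the paper's.

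One point needs repair. Your Step~2 requires $(U_S,v_0)$ to be an \emph{exactly} Reissner--Nordstr\"{o}m $C^k$ sphere data set, since Theorem~\ref{charged.gluing.thm} glues onto an exact RN trapped sphere. Prescribing $\phi_{|\mathcal{H}^+}(v)=\Phi_H(v)$ with only the corner compatibility \eqref{data.in} is not enough: the transversal ($v$-)derivatives of the solution along the ingoing cone $[0,U_S]\times\{v_0\}$ are obtained by integrating the transport equations \eqref{Radius}, \eqref{Field} in $U$ starting from the corner $(0,v_0)$, so they inherit $\rd_v^{j}\Phi_H(v_0)\neq 0$, and the resulting sphere data at $(U_S,v_0)$ is then not RN. The paper avoids this by imposing the cut-off data \eqref{data.EH.intro}, $\phi(0,v)=(1-\chi(v-v_0))\Phi_H(v)$, which vanishes identically on $\{0\}\times[v_0,v_0+1]$; the solution is then exactly RN on the future domain of dependence of $([0,U_S]\times\{v_0\})\cup(\{0\}\times[v_0,v_0+1])$, and in particular the full $C^k$ sphere data at $(U_S,v_0)$ is RN. (This is also why the proposition's conclusion does not assert $\phi_{|\mathcal{H}^+}=\Phi_H$; that identity only holds for large $v$ and is recorded in Theorem~\ref{EH.AF.thm}.) Two smaller corrections: the $\delta$-closeness of $(M,e)$ to $(M_f,\varsigma q M_f)$ arises from matching $\varpi(0,v_0)=M_f$ and $Q(0,v_0)=\varsigma q M_f$ against the asymptotic parameters at $i^+$, which differ by $O(v_0^{1-2s})$ by \eqref{EH.Q.M}; and the threshold $M_f^{inf}$ originates in the gluing constraints of Theorem~\ref{main.gluing.thm} (the final mass must dominate the radius of the regular sphere chosen on $\Sigma_L$), not in keeping $\Sigma$ away from $\mathcal{H}^+$ --- in this proposition $\Sigma$ and $\mathcal{H}^+$ are adjacent pieces of the past boundary, and the requirement that the event horizon lie strictly to the future of the initial hypersurface only enters later, in Theorem~\ref{EH.AF.thm}.
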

		\begin{proof}
			Let $(\mathcal{M},g,F,\phi)$ the spacetime constructed as an application of Theorem~\ref{charging.thm}, whose assumptions are satisfied. We denote $M_{f}$ the mass of the Reissner--Nordstr\"{o}m black hole, and $e_f$ its charge, with $$ e_f= \varsigma q M_f,$$ and, as in  Theorem~\ref{charging.thm}, \blue{recall the presence of the Reissner--Nordstr\"{o}m  regular sphere $\mathbb{S}_{RN}^{\R}$  with radius $$ R_{RN}^{\R} = M_f [1 + (1+\delta_0) \sqrt{1-q^2}],$$ where $\delta\in (0,1)$ is a small number.}

			Let $M$ sufficiently close to $M_f$, $e$ sufficiently close to $e_f$, to be chosen later. We choose the $v$-coordinate according to the gauge choice \eqref{gauge.EH.v} on the event horizon $\mathcal{H}^+$ and since the metric is Reissner--Nordstr\"{o}m in a broader region, we have\begin{equation}\label{gauge.RN}
				\frac{-4\rd_u r}{\Omega^2}(u,v) =1, \text{ which is equivalent to } \rd_v r= 1-\frac{2M_f}{r} + \frac{e_f^2}{r^2}
			\end{equation} for all $u\leq u_{RN}^{\blue{\R}}$ and $v\geq v_{RN}^{\blue{\R}}$, where  $\mathbb{S}_{RN}^{\blue{\R}}=(u_{RN}^{\blue{\R}},v_{RN}^{\blue{\R}})$. 	Let $v_0>1$ and $U_S>0$ such that the conclusion of Proposition~\ref{EH.prop} applies. With no loss of generality and by taking $v_0$ larger if necessary, we can assume that $v_0 > v_{RN}^{\blue{\R}}$. \magenta{We  then construct a dynamical event horizon: first, extend $\mathbb{S}_{RN}^{\R}$ into the Reissner--Nordstr\"{o}m regular cone $\{u_{RN}^{\R}\} \times [v_{RN}^{\R},v_0]$.} \blue{Note  that, defining the tortoise coordinate 
				$r_*(u,v,M,e) = r(u,v)+ \frac{1}{2K_-(M,e)}\log(r(u,v)-r_-(M,e))  + \frac{1}{2K_+(M,e)}\log(r_+(M,e)-r(u,v))$, then, by \eqref{gauge.RN}, we have the following identity: \begin{equation}\label{r.v.eq}
					v_0 - v_{RN}^{\R} = r^{*}(u_{RN}^{\R},v_0,M_f,e_f) -  r^{*}(u_{RN}^{\R}, v_{RN}^{\R} ,M_f,e_f).
				\end{equation}

			}

			\noindent		Then, \magenta{in the $U$-gauge \eqref{gauge.EH.U}, we set $U(u_{RN}^\R)=0$, and we collate  bicharacteristic initial data on $\left([0,U_S] \times \{v_0\}\right) \cup \left(\{U=0\} \times [v_0,+\infty)\right)$ to $\{U=0\} \times [v_{RN}^{\R} , v_0]$. Then,} we apply Proposition~\ref{EH.prop} with \blue{the} choice $(M,e)$ to the bicharacteristic initial data on $\left([0,U_S] \times \{v_0\}\right) \cup \left(\{U=0\} \times [v_0,+\infty)\right)$ as follows:  \begin{equation}\begin{split}
					&	\phi(U,v_0) \equiv 0,\ 	\varpi(U,v_0) \equiv M_f,\ Q(U,v_0) \equiv  \varsigma q M_f, \\ &  \phi(0,v) = \left(1-\chi(v-v_0)\right) \Phi_H(v),
				\end{split}
			\end{equation} where $\chi$ a cut-off function such that $\chi(x)=1$ when $0\leq x\leq 1$, and $\chi(x)=0$ when $x\geq 2$. Note that \begin{equation}\label{M.Q.R.est}
				|\varpi(0,v_0) - M| \ls v_0^{1-2s},\  |Q(0,v_0) - e| \ls v_0^{1-2s},\ \blue{ |r(0,v_0)- r_+(M,e)|,\   |r(0,v_0)- r_+(\varpi(0,v_0),Q(0,v_0))|\ls v_0^{1-2s},}
			\end{equation} so we can choose $(M,e)$ to be $(M_f,e_f )+O( v_0^{1-2s})$, so that 
			$\varpi(0,v_0)=M_f,\ Q(0,v_0)= e_f$.
			\blue{By \eqref{r.v.eq} and \eqref{M.Q.R.est}, we also have $r^{*}(u_{RN}^{\R}, v_{RN}^{\R} ,M_f,e_f)= -v_0 + v_{RN}^{\R} +O(\log(v_0))$, implying  that \begin{equation}
					r(u_{RN}^{\R}, v_{RN}^{\R})= R_{RN}^{\R}= M_f [1+(1+\delta_0 )\sqrt{1-q^2}] = r_+(M_f, e_f) +O(e^{-v_0}),
				\end{equation}  thus   $\delta_0$ can be made arbitrarily small for $v_0$ sufficiently large and, also choosing $M_f$ small, we can repeat the proof of Theorem~\ref{charging.thm} to construct the spacetime region $\{U \geq 0,\ v_{\Gamma}(U)\leq v\leq v_{RN}^{\R} \}$, which we collate to the outgoing cone $\{U=0  \}\times [v_{RN}^{\R},+\infty)$: note that $\HH:=\{U=0\} \times  [v_{\Gamma}(U=0),+\infty) \subset \R$,  and $r_{|\HH}$ is bounded.}

			\blue{Finally,} we apply Theorem~\ref{CH.thm.SS} to obtain  a solution of \eqref{1.1}--\eqref{5.1} in the spacetime rectangle $(U,v) \in [0,U_S] \times [v_0,+\infty)$, which is exactly  Reissner--Nordström on $[0,U_S] \times \blue{[v_{RN}^{\R},v_0]}$\blue{: this completes the proof}.

		\end{proof}
		\subsubsection{Backwards propagation from the event horizon to a constant-$r$ curve}\label{redshift.section}
		
		We now consider the spacetime of Proposition~\ref{EH.AF.1st.prop}, which is the causal future  $\Sigma \cup \mathcal{H}^+$, a spherically symmetric solution of \eqref{1.1}-\eqref{5.1} with $F\neq 0$. We additionally assume that \eqref{EH.bound2} holds on $\mathcal{H}^+$ for some $s>\frac{3}{2}$ (note that this is more demanding that the assumption $s>1$ from Proposition~\ref{EH.AF.1st.prop}). With no loss of generality, we  assume that $\frac{3}{2}<s<2$ for convenience of notation. We denote \begin{equation}
			\lim_{v\rightarrow+\infty} \varpi_{|\mathcal{H}^+}(v)=M>0,\ 	\lim_{v\rightarrow+\infty} Q_{|\mathcal{H}^+}(v)=e\neq 0,
		\end{equation} where we choose  $|e|$ sufficiently small, possibly depending on $q_0$, $M$ and $s$: in particular, $|e|<M$. In this section, we begin the construction of an asymptotically flat end that we will ``glue'' to the past of this spacetime. 
		Let $\ep>0$ chosen so that $\ep e^{2K_+ v_0}$ is sufficiently small. We impose initial data on the ingoing cone $[-\ep,0] \times \{v_0\}$ satisfying \eqref{EH.boundU}. Our first objective is to solve ``rightward'' in the spacetime rectangle $[-\ep,0] \times \{v_0\}\cup \{0\} \times [v_0,+\infty)$, for $\ep$ small enough and $v_0$ large enough. However, in this section, we will only solve up to a timelike curve $\gamma_{R_0}$ on which $r$ converges to a large constant $R_0>r_+$. The construction of the solution to the future of $\gamma_{R_0}$ will be completed in the next Section~\ref{timelike.section} and Section~\ref{nullinf.section}.

		We recall the gauge condition \eqref{gauge.EH.v}, which we supplement with the following gauge choice for $U$: \begin{equation}\label{U.gauge.past}
			\rd_U r(U,v_0) =  -e^{2K_+(M,e) v_0},
		\end{equation}   corresponding to gauge \eqref{gauge.EH.U}. We will later switch to the different $u$-gauge \eqref{gauge.U.past} (see Section~\ref{nullinf.section}), although we first need to show it is well-defined, since it is a teleological gauge.
		
		Then, we can apply Corollary~\ref{EH.cor} to obtain estimates on the event horizon $\mathcal{H}^+ = \{0\}\times [v_0,+\infty)$. We will establish estimates in the following red-shift region defined as \begin{equation}
			\mathcal{R}= \{ -\ep \leq U \leq 0,\ v_0 \leq v \leq v_{	\mathcal{R}}(U)\},\  v_{	\mathcal{R}}(U)= v_0+ [2K_+]^{-1}\ln(\frac{\ep}{|U|}) = v_0  +u(U)- [2K_+]^{-1} \log(\ep^{-1}),
		\end{equation}  where $u(U)$ is defined so that $-U = [2K_+]^{-1}e^{-2K_+ u}$ and $\Delta = [2K_+]^{-1} \ln(\ep^{-1})$.
		\begin{prop}\label{RS.prop}
			For all $(U,v) \in 	\mathcal{R}$, the following estimates hold: 
			\begin{equation}\label{SF.R}
				|\phi|(U,v) + |\rd_v \phi|(U,v) \ls v^{-s},
			\end{equation}
			\begin{equation}\label{SF.U.R}
				|D_U \phi|(U,v) \ls e^{2K_+ v} v^{-s},
			\end{equation}
			\begin{equation}\label{kappa.R}
				|\kappa(U,v) -1| \ls \ep  v^{-2s},
			\end{equation}
			\begin{equation}\label{Q.M.R}
				|\varpi(U,v) -M|,\ |Q(U,v) -e|  \ls   v^{1-2s},
			\end{equation}
			\begin{equation}\label{Omega.R}
				\bigl|\log(\Omega^2)(U,v)- 2K_+(u+v) \bigr| \ls |U| e^{2K_+v},
			\end{equation}
			\begin{equation}\label{duOmega.R}
				|\rd_U\log	(\Omega^2)(U,v)|,\ |A_U|(U,v) \ls e^{2K_+v},
			\end{equation}\begin{equation}\label{dvOmega.R}
				|\rd_v\log	(\Omega^2)(U,v) - 2K(u,v)|\ls v^{-2s}.
			\end{equation}
			
			\noindent	On the curve $\gamma_{	\mathcal{R}}= \{(u,v_{	\mathcal{R}}(u)),\ u \geq \Delta\}$  \begin{equation}\label{iota.R}
				\bigl|	\iota(u,v_{	\mathcal{R}}(u))- 1\bigr| \ls u^{-s},
			\end{equation}
			\begin{equation}\label{dulogOmega.R}
				\bigl|\rd_u \log(\Omega^2)(u,v_{	\mathcal{R}}(u))+2K(u,v_{	\mathcal{R}}(u))\bigr|\ls u^{-s}.
			\end{equation}
		\end{prop}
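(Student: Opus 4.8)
The proof will follow the red-shift (``no-shift'') region analysis of \cite{Moi} — Proposition~\ref{RS.prop} being essentially its charged version with the sharp rate $s>\tfrac{3}{2}$ (WLOG $\tfrac{3}{2}<s<2$), in the spirit of Propositions~4.2 and~4.4 there — carried out by a continuity/bootstrap argument inside the region $\mathcal{R}$. At one's disposal are the event-horizon estimates of Corollary~\ref{EH.cor} on $\{U=0\}\times[v_0,+\infty)$ (in particular $|\phi|,|\rd_v\phi|\lesssim v^{-s}$, $|D_U\phi|\lesssim e^{2K_+v}v^{-s}$, $|Q-e|,|\varpi-M|\lesssim v^{1-2s}$, $|\rd_v\log\Omega^2-2K_+|\lesssim v^{1-2s}$) and the bounded ingoing data on $[-\ep,0]\times\{v_0\}$ obeying \eqref{EH.boundU}; local existence of the solution in the rectangle $[-\ep,0]\times\{v_0\}\cup\{0\}\times[v_0,+\infty)$ for $\ep$ small and $v_0$ large is standard spherically-symmetric well-posedness. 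The single geometric mechanism behind every estimate is that $\mathcal R$ is cut out precisely so that $|U|\,e^{2K_+(v-v_0)}\le\ep$ on $\mathcal R$; since $\Omega^2=-4\kappa\,\rd_U r$ and $\kappa\approx 1$, this forces $U$-integrals of $\Omega^2$-weighted sources to carry a gain of the small parameter $\ep\,e^{2K_+v_0}$ uniformly in $v$ — the quantitative expression of the red-shift.

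Concretely, I would run a continuity argument in $U$ decreasing from $0$, assuming \eqref{SF.R}–\eqref{dvOmega.R} with enlarged constants on $[U_1,0]\cap\mathcal R$ and improving them, in the following order. \emph{(i) Scalar field.} Write the transport equation \eqref{Field.psi} for $\psi=r\phi$ and integrate $\rd_v\psi$ from $\mathcal H^+$ in the $U$-direction — its source is $O(\Omega^2 r^{-2})=O(-\kappa r^{-2}\rd_U r)$, whose $U$-integral is small by the red-shift — then integrate $\psi$ in $v$ off the horizon bound $v^{-s}$; this yields \eqref{SF.R}. For \eqref{SF.U.R}, integrate $\xi=rD_U\phi$ in $v$ off the horizon datum $|D_U\phi|(0,v)\lesssim e^{2K_+v}v^{-s}$ using the $U$-version of \eqref{Field3}, keeping the $e^{2K_+v}$ weight under control via $|U|\le\ep e^{-2K_+(v-v_0)}$. \emph{(ii) Raychaudhuri.} Equation \eqref{RaychU} integrated in $U$ gives \eqref{kappa.R}: the error $\int r|D_U\phi|^2\Omega^{-2}\,dU$ acquires the red-shift gain after inserting \eqref{SF.U.R}. \emph{(iii) Mass and charge.} Integrating \eqref{massUEinstein}, \eqref{massVEinstein}, \eqref{chargeUEinstein}, \eqref{ChargeVEinstein} against the $v^{-2s}$ sources gives $|\varpi-M|,|Q-e|\lesssim v^{1-2s}$, i.e.\ \eqref{Q.M.R}. \emph{(iv) Lapse.} With $r$, $\varpi$, $Q$ now pinned to their Reissner--Nordström values, so that $2K(u,v)\to 2K_+$, the equation \eqref{Omega} (equivalently \eqref{Omega2}) for $\log\Omega^2$, together with the gauge data \eqref{Omega.data} and \eqref{U.gauge.past}, gives \eqref{Omega.R}, \eqref{dvOmega.R}, \eqref{duOmega.R}; the bound on $A_U$ then follows from \eqref{dv.Au}. \emph{(v) The curve $\gamma_{\mathcal R}$.} On $\gamma_{\mathcal R}$ one has $u=v-v_0+[2K_+]^{-1}\log\ep^{-1}$ up to a fixed additive constant, so $v^{-s}$ decay becomes $u^{-s}$ decay, and \eqref{iota.R}, \eqref{dulogOmega.R} are read off from \eqref{Omega.R} and from $\lambda=\rd_v r$ converging to its Reissner--Nordström value with error $O(u^{-s})$.

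The main obstacle is the exponential-weight bookkeeping imposed by the gauge. Because the $U$-gauge \eqref{U.gauge.past} makes transversal quantities ($\rd_U r$, $D_U\phi$, $\rd_U\log\Omega^2$, $A_U$) of size $\sim e^{2K_+v}$, every integration in $v$ threatens to accumulate the red-shift exponential; the argument closes only because $\mathcal R$ has the precise logarithmic upper boundary $v_{\mathcal R}(U)$, so the growing weight $e^{2K_+v}$ is always exactly balanced against the shrinking measure $|U|\le\ep e^{-2K_+(v-v_0)}$ — and it is this balance, not mere boundedness, that produces the genuine smallness in \eqref{kappa.R} and \eqref{duOmega.R}. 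A secondary subtlety is that the Eddington--Finkelstein coordinate $u$ occurring in \eqref{Omega.R}–\eqref{dulogOmega.R} and in gauge \eqref{gauge.U.past} is teleological, the limit $\lim_{v\to+\infty}\tfrac{4\rd_v r}{\Omega^2}$ defining it being itself a consequence of the estimates being proved; one therefore performs the whole bootstrap in the manifestly well-defined non-teleological $U$-gauge \eqref{U.gauge.past} and only afterwards relabels $-U=[2K_+]^{-1}e^{-2K_+u}$, checking a posteriori that this agrees with \eqref{gauge.U.past}. Propagation of the absence of (anti-)trapped spheres across $\mathcal R$, needed implicitly, follows at once from the monotonicity built into \eqref{RaychU}–\eqref{RaychV} together with $\lambda=\rd_v r>0$ and $\nu=\rd_u r<0$ on the two initial cones.
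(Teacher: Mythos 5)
Your proposal is correct in substance and follows essentially the same route as the paper, which runs a bootstrap in $\mathcal R$ driven by the defining smallness $|U|\,e^{2K_+v}\le \ep\, e^{2K_+v_0}$ and defers the details to Proposition 4.5 of \cite{Moi}. Two remarks. First, a small but genuine misstep in your step (i): to get the decay of $\phi$ itself in \eqref{SF.R} you propose to ``integrate $\psi$ in $v$ off the horizon bound $v^{-s}$'', but $\int_{v_0}^{v}(v')^{-s}\,dv'\sim v_0^{1-s}$ is a constant, so this only yields boundedness of $\phi$, not $v^{-s}$ decay; the correct move (and the paper's) is to integrate $D_U\phi$ in $U$ from the event horizon, where $|\phi|(0,v)\ls v^{-s}$ is already known from Corollary~\ref{EH.cor}, picking up the small factor $|U|e^{2K_+v}\le\ep e^{2K_+v_0}$. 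Second, for \eqref{SF.U.R} the paper uses the genuine red-shift transport estimate \eqref{Fieldshift}, conjugating $rD_U\phi/\nu$ by $e^{av}$ with $0<a<K_+/4$ and exploiting the damping coefficient $a-2\kappa K<0$ coming from $K\ge 3K_+/4$; your direct $v$-integration of \eqref{Field3} also closes here, because $\xi$ does not appear on its right-hand side and $\int_{v_0}^{v}e^{2K_+v'}(v')^{-s}dv'\ls (2K_+)^{-1}e^{2K_+v}v^{-s}$ is endpoint-dominated, so no accumulation occurs — this is a legitimate, slightly more elementary variant. Your treatment of \eqref{iota.R}--\eqref{dulogOmega.R} is somewhat glib (the paper itself flags these as ``trickier'' and defers to Proposition 4.6 of \cite{Moi}, since they require comparing $-\rd_u r$ with $\rd_v r$ in the teleologically normalized $u$-gauge rather than merely relabeling $v^{-s}$ as $u^{-s}$), but the overall architecture you describe is the intended one.
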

		
		\begin{proof}
			The proof follows entirely from that of Proposition 4.5 in \cite{Moi}, but we give a brief sketch of the proof for the reader. We make the following bootstrap assumptions:
			\begin{equation} \label{B2}
				|\phi|+|\partial_v \phi| \leq 4C v^{-s},
			\end{equation}
			\begin{equation} \label{Bnew}
				|D_U \phi|	 \leq D e^{2K_+ v} v^{-s},
			\end{equation}
			\begin{equation} \label{B3}
				|\rd_U r|	 \leq D e^{2K_+ v},
			\end{equation}
			\begin{equation} \label{B4}
				\frac{1}{2} \leq \kappa \leq 1,
			\end{equation}
			\begin{equation} \label{B5}
				|Q-e| \leq 4 |e|,
			\end{equation} where $C>0$ is chosen so that  \begin{equation} 
				|\phi|(0,v)+|\partial_v \phi|(0,v) \leq C v^{-s}.
			\end{equation}
			First, we integrate \eqref{B3} in $U$ to obtain for $v_0$ large enough and using  Corollary~\ref{EH.cor} \begin{equation}
				|r(U,v) - r_+ | \lesssim  |r(0,v) - r_+| + |U| e^{2K_+ v} \lesssim \frac{r_+}{10},
			\end{equation} where in the last line we have used the fact that  $|U| e^{2K_+ v}\leq \ep e^{2K_+ v_0}$ in this region. Similarly integrating \eqref{massUEinstein} and \eqref{chargeUEinstein} gives
			\begin{equation} 
				| \varpi(U,v) - M|,\ 	| Q(U,v) -e| \lesssim v^{1-2s}.
			\end{equation}	which improves on \eqref{B5} for $v_0$ large enough. Then integrating \eqref{dv.Au} give
			$$ |A_U| \lesssim  e^{2K_+v},$$ hence
			$$ |\partial_U \phi | \lesssim D  e^{2K_+ v} v^{-s}.$$
			We can then  integrate  to get : 
			\begin{equation} \label{phinew}
				| \phi(U,v)-\phi(0,v)| \lesssim  \ep e^{2K_+ v_0} v^{-s},
			\end{equation} thus \eqref{B2} is improved for $\ep$ small enough.	Let $a>0$. We can rewrite \eqref{Field} together with \eqref{Radius} as: 
			\begin{equation} \label{Fieldshift}
				\partial_v ( e^{av} r\frac{D_U \phi}{\nu_H}) =  \left( a- \kappa( 2K- rm^2 |\phi|^2) \right) e^{av} r\frac{D_U \phi}{\nu_H} - e^{av}\partial_v \phi +\kappa e^{av} r m^2 \phi .
			\end{equation}By the earlier estimates, we get
			$$ |K(U,v)-K_+| \ls \ep e^{2K_+ v_0} ,$$ and therefore $K(U,v)$ is lower bounded by $\frac{3K_+}{4}$ for small enough $\ep$.	Choosing say $0<a < \frac{K_{+}}{4}$ and integrating \eqref{Fieldshift}  then improves on  bootstrap \eqref{Bnew}.
			
			Integrating \eqref{RaychU} in $U$ using the above estimates then also improves  on \eqref{B4} for $\ep$ small enough. 	The estimates on $\Omega$ then easily follow from the integration of \eqref{Omega}; the reader can consult \cite{Moi} for details.
			
			\noindent	\eqref{iota.R} and \eqref{dulogOmega.R} are a bit trickier to prove but the argument is identical to that of Proposition 4.6 in \cite{Moi}.

		\end{proof}

		Next, we turn to the  region where $r$ is bounded and away from the event horizon $\mathcal{H}^+$, i.e., the region $\{  v_{	\mathcal{R}}(U)\leq v\leq v_{R_0}(U),\ u \geq \Delta \}$, where $v_{R_0}(u)=R_0+u$, where $R_0> r_+$ is a large constant to be determined later. This region is  analogous to the no-shift region $\mathcal{N}$ in \cite{Moi} and the proof of the estimates is very similar, using a Gr\"{o}nwall-like iteration. 
		
		\begin{prop} \label{noshift.prop}
			
			\blue{For all $(u,v)\in\{  v_{	\mathcal{R}}(U)\leq v\leq v_{R_0}(U),\ u \geq \Delta \}$,}	the following estimates hold:
			\begin{equation} \label{phivNSprop}
				|\phi|(u,v)+|\partial_v \phi|(u,v) \lesssim_{R_0} v^{-s}
			\end{equation}	\begin{equation} \label{phiUNSprop}
				|D_u \phi|(u,v)  \lesssim_{R_0}  v^{-s} ,
			\end{equation}
			
			\begin{equation} \label{OmegaPropNSprop}
				|\log(\frac{\Omega^2(u,v)}{4})-\log(1-\frac{2M}{r(u,v)}+\frac{e^2}{r^2(u,v)})|\lesssim_{R_0}  v^{1-2s} ,
			\end{equation}  
			\begin{equation} \label{kappaNStpropprop}
				0 \leq 1-\kappa(u,v) \lesssim_{R_0} v^{-2s},
			\end{equation}						\begin{equation} \label{iotaNStpropprop}
				|1-\iota(u,v)| \lesssim_{R_0}  v^{-s},
			\end{equation}
			\begin{equation} \label{partialuNSOmegaprop2}
				|\partial_u \log(\Omega^2)(u,v)+2\frac{M-\frac{e^2}{r(u,v)}}{r^2(u,v)}  |  \lesssim_{R_0}   v^{-s},
			\end{equation}
			\begin{equation} \label{partialvNSOmegaprop2}
				|\partial_v \log(\Omega^2)(u,v)-2K(u,v) |  \lesssim_{R_0}   v^{-2s},
			\end{equation}

			\begin{equation} \label{QNS2} |Q(u,v)-e| \lesssim_{R_0}  v^{1-2s}, 
			\end{equation}
			\begin{equation} \label{MNS2} |\varpi(u,v)-M| \lesssim_{R_0}  v^{1-2s} .
			\end{equation}

		\end{prop}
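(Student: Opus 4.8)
The plan is to run a continuity/bootstrap argument coupled with a Grönwall-type iteration, following closely the treatment of the no-shift region $\mathcal{N}$ in \cite{Moi}; the only genuinely new feature compared to that reference is the charged scalar field, which enters only through lower-order terms carrying the small prefactor $|e|$ and through the gauge potential $A_U$. The geometry of the region is favorable: since $R_0>r_+$ is a fixed constant and $r$ is monotone, one has $\frac{r_+}{2}\leq r\leq R_0$ throughout $\mathcal{N}_0:=\{v_{\mathcal{R}}(U)\leq v\leq v_{R_0}(U),\ u\geq \Delta\}$; the $v$-extent of $\mathcal{N}_0$ at fixed $u$ is bounded by $R_0-v_0+[2K_+]^{-1}\log(\ep^{-1})\lesssim_{R_0}1$; and there is no exponential weight $e^{2K_+v}$ as in the red-shift region — which is exactly why the constants are allowed to depend on $R_0$. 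Moreover $v_{\mathcal{R}}(u)=v_0+u-[2K_+]^{-1}\log(\ep^{-1})$ and $v_{R_0}(u)=R_0+u$, so $v-u$ ranges over a bounded interval and decay in $v$ is interchangeable with decay in $u$.

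I would set up bootstrap assumptions of the shape $|\phi|+|\partial_v\phi|\leq 4B\,v^{-s}$, $|D_u\phi|\leq B\,v^{-s}$, $|Q-e|\leq 4|e|$, $\frac12\leq\kappa\leq 1$, $|\iota-1|\leq B\,u^{-s}$, with $B=B(R_0)$ large, seeded by the boundary estimates \eqref{SF.R}--\eqref{dulogOmega.R} of Proposition~\ref{RS.prop} on the curve $\gamma_{\mathcal{R}}$ and the ray $\{u=\Delta\}$, and close them in the following order. First, integrating \eqref{chargeUEinstein}, \eqref{ChargeVEinstein}, \eqref{massUEinstein}, \eqref{massVEinstein} in $v$ and in $u$ and using the bounded $v$-extent of $\mathcal{N}_0$ yields \eqref{QNS2}, \eqref{MNS2} — improving the charge bootstrap — and, via \eqref{dv.Au}, control of $A_U$. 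Next, integrating the wave equation, most conveniently in the form \eqref{Field.psi} for $\psi=r\phi$, or equivalently the transport equations \eqref{Field2}--\eqref{Field3} for $\theta=r\partial_v\phi$ and $\xi=r\partial_u\phi$, with the charged terms now under control, a Grönwall iteration over the bounded region $\mathcal{N}_0$ improves the bootstrap on $|\phi|$, $|\partial_v\phi|$, $|D_u\phi|$ to $\lesssim_{R_0}v^{-s}$, i.e.\ \eqref{phivNSprop}, \eqref{phiUNSprop}. With the scalar field and charge in hand, \eqref{RaychU} integrated in $u$ gives \eqref{kappaNStpropprop}, \eqref{RaychV} integrated in $v$ gives \eqref{iotaNStpropprop} (the weaker $v^{-s}$ rate being inherited from \eqref{iota.R}), and then \eqref{Omega} or \eqref{Omega2} integrated in $v$ and in $u$ against the explicit Reissner--Nordström value $4(1-\frac{2M}{r}+\frac{e^2}{r^2})$ of the leading term yields \eqref{OmegaPropNSprop}, \eqref{partialuNSOmegaprop2}, \eqref{partialvNSOmegaprop2}. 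Taking $v_0$ large and $\ep$ small — both permissible, as in Proposition~\ref{RS.prop} — recovers every bootstrap constant strictly.

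The main obstacle is propagating the \emph{sharp} decay rates ($v^{-s}$, $v^{1-2s}$, $v^{-2s}$) — and not merely boundedness — through the coupled iteration: the wave equation \eqref{Field.psi} carries a zeroth-order coefficient that is $O(1)$ and does not decay, so a naive estimate destroys the decay; one must instead peel off the Reissner--Nordström background value of that coefficient and estimate only the difference, which is the device underlying the no-shift analysis of \cite{Moi}. The charged contributions (those involving $A_U$ and the commutator term $\frac{iq_0Q\Omega^2}{2r^2}$) add bookkeeping but no essential difficulty, since $\Omega^2=O(1)$ on $\mathcal{N}_0$ and $|e|$ is small (chosen depending on $q_0$, $M$, $s$), so they come with a small prefactor and never disturb the leading-order balance. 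Once these estimates are established on $\mathcal{N}_0$, up to the curve $\gamma_{R_0}$ on which $r$ tends to $R_0$, the construction of the solution to the future of $\gamma_{R_0}$ is carried out in Section~\ref{timelike.section} and Section~\ref{nullinf.section}.
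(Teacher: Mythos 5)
Your proposal is correct and takes essentially the same route as the paper, which likewise reduces everything to the no-shift-region argument of \cite{Moi} (Appendix B): the paper merely discretizes your Gr\"{o}nwall step by cutting $\mathcal{N}$ into finitely many strips of $v$-width $\eta_0$ and inducting strip by strip, accepting a constant that grows exponentially in the number of strips --- which is exactly the exponential of your Gr\"{o}nwall constant over the bounded $(v-u)$-extent of the region. One small clarification: for the scalar-field bounds \eqref{phivNSprop}--\eqref{phiUNSprop} the ``naive'' Gr\"{o}nwall already preserves the $v^{-s}$ rate (since $v$ varies only by an $R_0$-dependent bounded amount across $\mathcal{N}$, decay is not destroyed, only the constant degrades), so peeling off the Reissner--Nordstr\"{o}m coefficient is not needed there, although the comparison with the background is of course what produces \eqref{OmegaPropNSprop}, \eqref{partialuNSOmegaprop2} and \eqref{partialvNSOmegaprop2}.
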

		\begin{proof}
			For the proof, we cut $\mathcal{N}=\{  v_{	\mathcal{R}}(U)\leq v\leq v_{R_0}(U),\ u \geq \Delta \}$ into small regions \begin{equation*}
				\mathcal{N}=\overset{N}{\underset{i=1}{\cup}} \mathcal{N}_i,
			\end{equation*} where  $\mathcal{N}_i= \{ v_{i-1}(u)
			\leq v \leq  v_{i}(u),\ u\geq \Delta\}$ and $v_0(u)= v_{\mathcal{R}}(u)$, $v_N(u) = v_{R_0}(u)$ and $v_{i}(u) - v_{i}(u) =\eta_0$, a small number. We then prove the result by induction on $i$, making use of a standard bootstrap method and obtaining exponential growth in $N$. The reader interested in the details can consult \cite{Moi}, Appendix B which employs the same proof.
		\end{proof}

		\subsubsection{Backwards propagation from a constant-$r$ curve to a timelike geodesic}\label{timelike.section} We define the timelike curve  $\gamma= \{ (u,v_{\gamma}(u)),\ v_{\gamma}(u)  = \frac{ 3u}{2},\  u \geq \Delta \}$; note that $\gamma$ is not technically a timelike geodesic but it approaches one as $u\rightarrow+\infty$, corresponding to hyperbolic motion on the Reisser--Nordström \blue{black hole}. We will prove estimates in the following  spacetime region \begin{equation}
			\{ u \geq \Delta,\ v_R(u) \leq v \leq v_{\gamma}(u)\}.
		\end{equation} The proof relies on the use of the $r^p$-method and takes advantage of the smallness of $e$, similarly to the approach employed in \cite{Moi2}. However, a major difference is that the well-controlled boundary terms are located on a (timelike) curve $\gamma_{R_0}$ to the past of the region of integration, and therefore one can only apply for $r^p$ method for $p<0$, trying to keep $|p|$ as small as possible.
		\begin{prop}\label{timelikecurve.prop}\blue{Let $\eta > \frac{-1+\sqrt{1+4q_0^2 e^2}}{2}$. Assuming $|q_0 e|$ is sufficiently small, the following estimates hold for all $(u,v)\in \{ u \geq \Delta,\ v_R(u) \leq v \leq v_{\gamma}(u)\}$}
			\begin{equation}\label{phi.gamma}
				|\phi|(u,v) \lesssim u^{-s+\frac{1}{2}+\eta},
			\end{equation}
			\begin{equation}\label{duphi.gamma}
				r|D_u \phi|(u,v),\	|D_u \psi|(u,v) \lesssim u^{-s+\frac{1}{2}+\eta},
			\end{equation}
			\begin{equation} \label{dvphi.gamma}
				r|\rd_v \phi|(u,v),\ |\rd_v \psi|(u,v) \lesssim u^{-s+\frac{1}{2}+\eta},
			\end{equation}
			\begin{equation} \label{M.Q.gamma}
				|\varpi(u,v) -M |,\	|Q(u,v) -e | \ls  r^{1+\eta}u^{-2s}\ls  u^{-2s+1+\eta},
			\end{equation}\begin{equation} \label{iota.gamma}
				|\iota(u,v ) -1|\ls  u^{-s},
			\end{equation}
			\begin{equation} \label{kappa.gamma}
				|\kappa(u,v ) -1|\ls \log(u) u^{-2s+1+2\eta},
			\end{equation}
			\begin{equation}\label{Omega.gamma}
				\bigl|	\frac{\Omega^2(u,v)}{4}  - (1-\frac{2M}{r(u,v)} + \frac{e^2}{r^2(u,v)})\bigr|\ls  u^{-s},
			\end{equation}
			\begin{equation}
				\bigr|	\rd_v \log(\Omega^2) - \frac{2}{r^2} [M-\frac{e^2}{r}]\bigl|\ls u^{-2s+1+2\eta}.
			\end{equation}
		\end{prop}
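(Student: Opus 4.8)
The plan is a continuity/bootstrap argument in the region $\mathcal{D}_\gamma := \{u \geq \Delta,\ v_{R_0}(u) \leq v \leq v_\gamma(u)\}$, whose backbone is a weighted $r^p$-energy estimate for the radiation field $\psi = r\phi$ with a \emph{negative} exponent $p$ kept as close to $0$ as possible, in the spirit of \cite{Moi2} but adapted to the fact that the well-controlled boundary data sits on the past curve $\gamma_{R_0}$ (where $r=R_0$ is bounded) rather than at null infinity. Rewriting \eqref{Field.psi} by means of \eqref{murelation} yields the Coulomb-type wave equation $D_u\partial_v\psi = V\psi$ with $V = \tfrac{\Omega^2}{4r^2}\bigl(-\tfrac{2\varpi}{r}+\tfrac{2Q^2}{r^2}+iq_0Q\bigr)$. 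First I would fix $\eta\in(0,s-\tfrac32)$ small and impose bootstrap assumptions on $\mathcal{D}_\gamma$: $|\phi|,\ r|\partial_v\phi|,\ r|D_u\phi| \leq A\,u^{-s+1/2+\eta}$, $|\varpi-M|+|Q-e| \leq A\,r^{1+\eta}u^{-2s}$, $\tfrac12\leq\kappa,\iota\leq2$, and $R_0\leq r\leq 2(v-u)$, for $A$ large. The data for the bootstrap is: on $\gamma_{R_0}$, Proposition~\ref{noshift.prop} provides the far stronger rate $|\phi|+|\partial_v\phi|\lesssim_{R_0} v^{-s}\lesssim_{R_0} u^{-s}$ together with matching control of $\varpi,Q,\kappa,\iota,\Omega^2$; on the short outgoing cone $\{u=\Delta\}\cap\mathcal{D}_\gamma$, where $r$ is bounded, the prescribed data \eqref{EH.boundU} and local well-posedness give a bound by a fixed constant. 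Note that the choice $v_\gamma(u)=\tfrac32 u$ is merely a convenient one keeping $\gamma$ in the near-Reissner--Nordstr\"om regime with $r\approx u/2$, the genuine wave zone to the future of $\gamma$ being handled in the following sections; $\mathcal{D}_\gamma$ is precisely the transition region where $r$ grows from $R_0$ to $\sim u$.

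The heart of the argument is the $r^p$-estimate. Multiplying $D_u\partial_v\psi=V\psi$ by $2r^p\overline{\partial_v\psi}$ and taking real parts gives the identity $\partial_u(r^p|\partial_v\psi|^2) = p\,\nu\,r^{p-1}|\partial_v\psi|^2 + 2\mathrm{Re}(r^p V\psi\,\overline{\partial_v\psi})$; integrating this over $\mathcal{D}_\gamma$ — coupled with the analogous identity controlling the ingoing flux of $D_u\psi$, run as a joint Gr\"onwall scheme — should produce an outgoing $r^p$-flux bound of the form $\int r^p|\partial_v\psi|^2\,dv\lesssim_{R_0} u^{-2s}$. The crucial points are: (i) since $p<0$, the weight $r^p$ is largest on $\gamma_{R_0}$, so it is exactly the well-controlled boundary term there that closes the estimate — this is why only $p<0$ is available, in contrast with the usual forward $r^p$-method where the good flux lives at null infinity; (ii) the borderline term in $2\mathrm{Re}(r^p V\psi\,\overline{\partial_v\psi})$ is the Coulomb piece $\tfrac{iq_0e}{r^2}\psi$, whose contribution is only scale-critically integrable in $r$ over a region where $r\to\infty$, and it is absorbed by exploiting the smallness of $|e|$ relative to $q_0,M,s$, exactly as in \cite{Moi2}; (iii) one keeps $|p|$ as small as possible so that converting the $r^p$-flux to pointwise control at the outer edge $r\sim u$ of $\mathcal{D}_\gamma$ costs only the power $u^{1/2+\eta}$, which is the origin of the loss relative to the $u^{-s}$ rate on $\gamma_{R_0}$. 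To extract the pointwise bounds \eqref{phi.gamma}--\eqref{dvphi.gamma} one feeds the $\partial_v$-commuted equation \eqref{Field.dvpsi} into the same scheme to control $\partial_v^2\psi$, then combines the fluxes by Cauchy--Schwarz against $r^{-p}$ and uses $\phi=\psi/r$, $r\partial_v\phi=\partial_v\psi-\tfrac{\lambda}{r}\psi$, and the companion ingoing estimate for $D_u\psi$.

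Once the scalar-field bounds are established, the geometric quantities are recovered by integrating the constraint and transport equations outward from $\gamma_{R_0}$ using the $r$-weighted scalar-field control: integrating \eqref{massVEinstein} (and \eqref{massUEinstein}) along outgoing (resp.\ ingoing) rays, with $dv\approx dr$, yields $|\varpi-M|\lesssim r^{1+\eta}u^{-2s}$, and likewise $|Q-e|\lesssim r^{1+\eta}u^{-2s}$ from \eqref{chargeUEinstein}--\eqref{ChargeVEinstein}; integrating \eqref{RaychU} in the form $\partial_u\kappa=-\kappa\,\tfrac{r|D_u\phi|^2}{\nu}$ in $u$ at fixed $v$, where $|\nu|\gtrsim1$ and $\int\tfrac{du}{r}\approx\log u$ since $r\approx v-u$, gives $|\kappa-1|\lesssim\log(u)\,u^{-2s+1+2\eta}$, while \eqref{RaychV} gives the cleaner $|\iota-1|\lesssim u^{-s}$; finally \eqref{Omega2}/\eqref{Omega3} and \eqref{Radius3} upgrade these to the stated control of $\Omega^2$ and $\partial_v\log\Omega^2$ and confirm $r\approx v-u$ throughout $\mathcal{D}_\gamma$. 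Collecting all of this improves every bootstrap constant and closes the argument.

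I expect the main obstacle to be the negative-$p$ $r^p$-estimate: making it close requires the right hierarchy of weights (possibly a family $p_0<p_1<0$ as in the forward theory), a nonlinear treatment of the Coulomb term $\tfrac{iq_0e}{r^2}\psi$ in which the smallness of $|e|$ is genuinely used, and sharp bookkeeping of the $u^{1/2+\eta}$ loss so that it feeds correctly into the $v$-integrations producing the $\varpi$, $Q$, $\kappa$ bounds. The passage from weighted fluxes to the pointwise bounds \eqref{phi.gamma}--\eqref{dvphi.gamma}, which forces us to also estimate $\partial_v^2\psi$ via \eqref{Field.dvpsi}, is the other delicate point, though it is by now fairly standard.
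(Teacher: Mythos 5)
Your proposal matches the paper's proof in its essentials: a bootstrap on $Q$, $\Omega^2$, $\partial_u r$, $\partial_v r$ over the region between $\gamma_{R_0}$ and $\gamma$, a negative-exponent $r^p$ multiplier estimate for \eqref{Field.psi} anchored on the well-controlled past boundary $\gamma_{R_0}$ (Proposition~\ref{noshift.prop}), absorption of the scale-critical Coulomb term via the smallness of $|e|$ (the paper's condition is $\frac{4|q_0e|}{2+p}<p$, i.e.\ $p>-1+\sqrt{1+4|q_0e|^2}$, which is what forces $p=2\eta$ and hence the $r^{\eta}$ loss), and recovery of \eqref{M.Q.gamma}--\eqref{Omega.gamma} by integrating the transport equations. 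Two implementation points differ. First, the paper never commutes with $\partial_v$ and never runs a joint ingoing-flux Gr\"onwall scheme: the Hardy inequality \eqref{Hardy.gamma} used to absorb the Coulomb term carries the pointwise term $\frac{r^{-p}}{2+p}|\phi|^2(u,v)$ on its left-hand side, so the flux bound $E_{-p}[\psi](u)\lesssim u^{1-2s}$ immediately yields $|\phi|\lesssim r^{\eta}u^{\frac12 -s}$; feeding this back into \eqref{Field.psi} gives $|D_u\partial_v\psi|\lesssim r^{-1+\eta}u^{\frac12 -s}$ pointwise, and \eqref{duphi.gamma}--\eqref{dvphi.gamma} then follow by a single integration in $u$ or $v$, so your detour through the commuted equation \eqref{Field.dvpsi} and Cauchy--Schwarz is unnecessary. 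Second, your asserted flux bound $\int r^{p}|\partial_v\psi|^2\,dv\lesssim u^{-2s}$ is too strong by a factor of $u$: the correct normalization is $u^{1-2s}$, and it is precisely this extra $u^{\frac12}$, combined with the $r^{\eta}$ from the weight, that produces the stated rate $u^{-s+\frac12+\eta}$ --- consistent with your final answer but not with your intermediate claim, so the bookkeeping you flagged as delicate does need the correction.
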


		\begin{proof}

			We make the following bootstrap asumptions:\phantom{\eqref{B2.gamma},\eqref{B3.gamma}} \begin{equation}\label{B1.gamma}
				|Q|(u,v) \leq 2 |e|,
			\end{equation}\begin{equation}\label{B2.gamma}
				|	\frac{\Omega^2(u,v)}{4}-1| \leq \frac{10M}{r},
			\end{equation}\begin{equation}\label{B3.gamma}
				|\rd_u r+1|(u,v) \leq \frac{10M}{r},
			\end{equation}\begin{equation}\label{B4.gamma}
				|\rd_v r-1|(u,v) \leq \frac{10M}{r}.\end{equation}
			
			Let $p>0$.	Then, we take advantage of \eqref{Field.psi} which we multiply by 
			$ r^{-p} \overline{\rd_v\psi} $ and we take the real-part. Note the identity \begin{equation}
				r^{-p} \Re(\overline{\rd_v \psi} D_u \rd_v \psi )=   r^{-p}  \rd_u( \frac{|\rd_v \psi |^2}{2}) =  \rd_u (r^{p}   \frac{|\rd_v \psi |^2}{2}) - p r^{-p-1} [-\nu] |\rd_v \psi |^2,
			\end{equation}  from which we get, under the bootstrap assumptions   \eqref{B1.gamma}--\eqref{B4.gamma}, and taking $R_0$ large enough  \begin{equation}\label{ID.gamma}
				\bigl|	-\rd_u( r^{-p}   \frac{|\rd_v \psi |^2}{2}) + pu r^{-p-1}|\nu|  |\rd_v \psi |^2\bigr| \leq   2|q_0 e| r^{-2-p} |\psi| |\rd_v\psi|.
			\end{equation}
			To control the RHS, we will first prove a  Hardy inequality. Let $ v_{R_0}(u) \leq v \leq v_{\gamma}(u)$ and $P>-2$: \begin{equation}\begin{split}
					&\int_{v_{R_0}(u)}^{v}  r^{-P-3} |\psi|^2 dv' \ls	\int_{v_{R_0}(u)}^{v} \lambda r^{-P-3} |\psi|^2 dv' = 	-[2+P]^{-1}\int_{v_{R_0}(u)}^{v}\rd_v (r^{-P-2}) |\psi|^2 dv' \\ &\leq  [2+P]^{-1} R_0^{-P} |\phi|^2(u,v_{R_0}(u))+ \frac{2}{2+P} \int_{v_{R_0}(u)}^{v}r^{-P-2} |\psi \rd_v \psi|dv \ls u^{-2s} +  	\int_{v_{R_0}(u)}^{v} r^{-P-1} |\rd_v\psi|^2 dv',\end{split}
			\end{equation} where we have invoked Proposition~\ref{noshift.prop} and the bootstrap assumption \eqref{B4.gamma}. In fact, we even proved the following stronger  and more precise estimates\blue{: there exists a constant $C_P>0$ such that} \begin{equation}\label{Hardy.gamma}\begin{split}
					& \frac{r^{-P}}{2+P} |\phi|^2(u,v)+	\int_{v_{R_0}(u)}^{v} r^{-P-3} |\psi|^2 dv' \leq \frac{4}{(P+2)^2}	\int_{v_{R}(u)}^{v} r^{-P-1} |\rd_v\psi|^2 dv'+ C_{P}  u^{-2s},\end{split}
			\end{equation} and \begin{equation}\label{Hardy.gamma2}
				\int_{v_{R_0}(u)}^{v} r^{-P-2} |\psi| |\rd_v \psi| dv'\ls [\int_{v_{R_0}(u)}^{v} r^{-P-1}  |\rd_v \psi|^2 dv']^{\frac{1}{2}} [\int_{v_{R_0}(u)}^{v} r^{-P-3}  | \psi|^2 dv']^{\frac{1}{2}} \leq  \frac{2}{P+2}	\int_{v_{R_0}(u)}^{v} r^{-P-1} |\rd_v\psi|^2 dv'+ C_P u^{-2s}.
			\end{equation}
			We combine \eqref{Hardy.gamma2} with $P=p$ with  \eqref{ID.gamma}, which gives a coercive estimate as long as $\frac{4 |q_0 e|}{2+p}<p$, i.e., $p>p_{min}(q_0e)=-1+\sqrt{1+4|q_0e|^2}$, resulting in the following estimate, for such $p$ \begin{equation}\label{rp.est1}
				E_{-p}[\psi](u)= \int_{v_{R_0}(u)}^{+\infty} r^{-p-2} |\psi||\rd_v \psi|(u,v') dv'+  \int_{v_{R_0}(u)}^{+\infty} r^{-p-3} | \psi|^2(u,v') dv'+ \int_{v_{R_0}(u)}^{+\infty} r^{-p-1} |\rd_v \psi|^2(u,v') dv' \ls u^{1-2s},
			\end{equation} where we invoked Proposition~\ref{noshift.prop} to control the boundary term in the past. Let $\eta>0$ small: one can choose $|e|$ small enough so that $-1+\sqrt{1+4|q_0e|^2}< 2\eta$, hence we will take $p=2\eta$ in what follows. Thus, \eqref{rp.est1}, by \eqref{Hardy.gamma},  gives in turn \begin{equation}\label{phi.proof}
				|\phi|(u,v) \ls u^{-s+\frac{1}{2}} r^{\eta} \ls  u^{-s+\frac{1}{2}+\eta} .
			\end{equation}
			Now, combining \eqref{phi.proof} with \eqref{Field.psi}, we get \begin{equation}
				|D_u \rd_v \psi|(u,v) \ls r^{-1+\eta} u^{-s+\frac{1}{2}},
			\end{equation} which gives the following estimate upon integration in $u$ or $v$ respectively, and invoking the estimates of Proposition~\ref{noshift.prop} (recall that $v$ and $u$ are comparable in this region):
			\begin{equation}
				|D_u\psi|(u,v),\ 	|\rd_v \psi|(u,v) \ls u^{-s+\frac{1}{2}} r^{\eta}  \ls  u^{-s+\frac{1}{2}+\eta},
			\end{equation}  and by \eqref{phi.proof}, we also have
			\begin{equation}
				r|D_u\phi|(u,v),\ r	|\rd_v \phi|(u,v) \ls u^{-s+\frac{1}{2}} r^{\eta}  \ls  u^{-s+\frac{1}{2}+\eta}.
			\end{equation}

			It is then easy to integrate \eqref{Omega}, \eqref{Radius}, \eqref{chargeUEinstein} using the above estimates to improve on \eqref{B1.gamma}--\eqref{B4.gamma} and prove all the remaining estimates (note that to prove \eqref{Omega.gamma}, we use \eqref{murelation} together with \eqref{M.Q.gamma}, \eqref{iota.gamma}, \eqref{kappa.gamma}).
			
		\end{proof}

		\subsubsection{Backwards propagation from a timelike geodesic to null infinity} \label{nullinf.section}
		In this section, we will prove estimates in  the  spacetime region  to the future on $\gamma$ and up to null infinity, namely\begin{equation}
			\{ u \geq \Delta,\  v\geq v_{\gamma}(u)\}.
		\end{equation}

		\begin{prop}\label{nullinf.prop}   The following estimates hold for all  $u \geq \Delta,\  v\geq v_{\gamma}(u)$:
			\begin{equation}\label{phi.I}
				|\psi|(u,v) \lesssim u^{\frac{3}{2}-s+\eta},
			\end{equation}
			\begin{equation}\label{duphi.I}
				r|D_u \phi|(u,v),\		|D_u \psi|(u,v) \lesssim u^{\frac{1}{2}-s+\eta}
			\end{equation}
			\begin{equation} \label{dvphi.I}
				|\rd_v \psi|(u,v) \lesssim v^{-s+\frac{1}{2}+\eta},\ r|\rd_v \phi|(u,v) \ls v^{-1}u^{\frac{3}{2}-s+\eta},
			\end{equation}
			\begin{equation} \label{M.Q.I}
				|\varpi(u,v) -M |,\	|Q(u,v) -e | \ls u^{3-2s  + 2\eta},
			\end{equation}\begin{equation} \label{iota.I}
				|\iota(u,v ) -1|\ls  u^{-s},
			\end{equation}
			\begin{equation} \label{kappa.I}
				|\kappa(u,v ) -1|\ls  v^{-1}u^{2-2s+2\eta},
			\end{equation}
			\begin{equation}\label{Omega.I}
				\bigl|	\frac{\Omega^2(u,v)}{4}  - (1-\frac{2M}{r(u,v)} + \frac{e^2}{r^2(u,v)})\bigr|\ls  u^{-2s+1+2\eta}[1+u^{2}v^{-1}],
			\end{equation}
			\begin{equation}\label{dvlogOmega.I}
				|\rd_v \log(\Omega^2)|(u,v) \ls v^{-2}.
			\end{equation}
		\end{prop}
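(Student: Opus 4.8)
\textbf{Proof plan for Proposition~\ref{nullinf.prop}.} The plan is to run a bootstrap/continuity argument in the spacetime region $\{u\geq \Delta,\ v\geq v_\gamma(u)\}$, propagating the estimates from the timelike curve $\gamma$ (which we have from Proposition~\ref{timelikecurve.prop}) outward toward null infinity via the $r^p$-method, now for the positive-$p$ hierarchy, in the spirit of \cite{Moi2}. First I would set up bootstrap assumptions of the same schematic form as \eqref{B1.gamma}--\eqref{B4.gamma}, i.e., $|Q|\leq 2|e|$, $|\Omega^2/4 - 1|\lesssim M/r$, $|\rd_u r +1|\lesssim M/r$, $|\rd_v r - 1|\lesssim M/r$, valid on a sub-rectangle, and then improve them. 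The key structural identity is again \eqref{Field.psi}: multiplying by $r^p \overline{\rd_v \psi}$ and taking real parts gives, schematically,
\begin{equation*}
\rd_u\!\left(r^p \frac{|\rd_v\psi|^2}{2}\right) + p\, r^{p-1}|\nu|\,|\rd_v\psi|^2 = O\!\left(|q_0 e|\, r^{p-2}|\psi||\rd_v\psi|\right),
\end{equation*}
so that, upon integrating in $u$ from $\gamma$, one controls the weighted flux $\int r^{p-1}|\rd_v\psi|^2\,dv$ provided the boundary term on $\gamma$ is already bounded by Proposition~\ref{timelikecurve.prop}. The companion Hardy inequality — the analogue of \eqref{Hardy.gamma}, \eqref{Hardy.gamma2} but with the reference curve now $\gamma$ rather than $\gamma_{R_0}$, and integrating toward larger $r$ — closes the estimate for all $p > p_{\min}(q_0 e) = -1 + \sqrt{1+4|q_0 e|^2}$, which can be made smaller than any fixed $2\eta$ by choosing $|e|$ small. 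Taking $p = 1 + 2\eta$ (so that the weight is strong enough to reach $\mathcal{I}^+$) yields \eqref{phi.I} for $|\psi|$ after one more Hardy/Sobolev-type step, and then \eqref{Field.psi} integrated once more in $u$ and in $v$ gives the derivative bounds \eqref{duphi.I}, \eqref{dvphi.I}; note the asymmetry in the $v$-decay of $\rd_v\psi$ versus $r|\rd_v\phi|$ comes from the extra $r^{-1}$ and the fact that $v \gtrsim r$ in this region while $u$ is comparable to $v_\gamma^{-1}(v)$ near $\gamma$.

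Once the scalar-field estimates are in hand, the metric and matter-field estimates \eqref{M.Q.I}--\eqref{dvlogOmega.I} follow by direct integration: \eqref{massUEinstein} and \eqref{chargeUEinstein} integrated in $u$ from $\gamma$ give \eqref{M.Q.I}; \eqref{RaychU} integrated in $U$ gives the $\iota$ estimate \eqref{iota.I} and, together with \eqref{RaychV} integrated in $v$ (using $\rd_v\psi$ control), the $\kappa$ estimate \eqref{kappa.I}; then \eqref{murelation} combined with \eqref{M.Q.I}, \eqref{iota.I}, \eqref{kappa.I} yields \eqref{Omega.I}, and finally \eqref{Omega2} (or \eqref{ChargeVEinstein} plus \eqref{massVEinstein}) integrated gives \eqref{dvlogOmega.I}. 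This also improves the four bootstrap assumptions for $v_0$ large enough and $|e|$ small enough, closing the continuity argument. At this stage one should also check that $r\to+\infty$ along outgoing cones (so that $\mathcal{I}^+$ is genuinely reached), which follows from $\rd_v r \to 1$ and the size of the region.

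The main obstacle I anticipate is the sign/direction issue in the $r^p$-method: here the ``good'' boundary data sit on the curve $\gamma$ which lies to the \emph{past} of the region of integration (one integrates the $\rd_u$-flux identity \emph{outward} in $r$, i.e., forward in $v$ relative to $\gamma$), so the bulk term $p\,r^{p-1}|\nu|\,|\rd_v\psi|^2$ has the favorable sign only for $p>0$, but a large positive $p$ would make the weight too strong to be controlled by the $\gamma$-data, which only decay like $u^{1-2s}$ in energy. The balance forces $p$ just above $p_{\min}(q_0 e)$, hence the necessity of taking $|e|$ small (depending on $q_0$, $M$, $s$) — this is why the proposition's hypotheses fix $|e|$ small. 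A secondary technical point is the matching at $\gamma$ itself, where the change of null foliation/gauge between the ``timelike geodesic'' region and this region must be tracked so that the flux boundary terms are genuinely those estimated in Proposition~\ref{timelikecurve.prop}; this is routine but must be done carefully, as in \cite{Moi2,Soffer70}. Finally, one must verify that the teleological $u$-gauge \eqref{gauge.U.past} is well-defined here, i.e., that $\frac{4\rd_v r}{\Omega^2}\to 1$ as $v\to+\infty$ along each outgoing cone, which is a consequence of \eqref{kappa.I} and \eqref{dvlogOmega.I}.
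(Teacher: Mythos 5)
Your overall architecture (bootstrap, propagate from $\gamma$, recover the metric and charge quantities by integrating the Raychaudhuri and mass/charge equations) matches the paper, and the second half of your plan for \eqref{M.Q.I}--\eqref{dvlogOmega.I} is essentially what is done. But the core analytic step --- rerunning the $r^p$-hierarchy with $p=1+2\eta>0$ from data on $\gamma$ --- has a sign problem that I do not think you can repair as stated. In the region $\{u\geq\Delta,\ v\geq v_{\gamma}(u)\}$ the curve $\gamma$ sits at $u=u_{\gamma}(v)=\tfrac{2v}{3}$, i.e.\ at the \emph{largest} value of $u$ on each slice of constant $v$, and the estimate must be propagated toward \emph{smaller} $u$. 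Integrating $\rd_u\bigl(r^{p}|\rd_v\psi|^2\bigr)+p\,r^{p-1}|\nu|\,|\rd_v\psi|^2=\ldots$ backwards in $u$ from $u_{\gamma}(v)$ places the bulk term on the wrong side of the inequality when $p>0$: it is not a coercive term available to absorb the charge term $|q_0e|\,r^{p-2}|\psi||\rd_v\psi|$, but an error that must itself be controlled, and a Gr\"onwall treatment of it costs a factor of order $(v/u)^{p}$, which for $p=1+2\eta$ destroys the decay. This is precisely why Proposition~\ref{timelikecurve.prop} is forced to use \emph{negative} powers of $r$ (the weight $r^{-p}$ with $p=2\eta$), as the paper states explicitly before that proposition; the same constraint applies here, and your remark that ``the bulk term has the favorable sign only for $p>0$'' has the direction of integration reversed. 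You are also missing the one bootstrap assumption that distinguishes this region from the previous one, namely $r(u,v)\geq v/10$; without it none of the $v$-weights in \eqref{dvphi.I}--\eqref{kappa.I} are reachable.

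The paper's actual proof dispenses with the $r^p$ method entirely in this region. Once $r\gtrsim v$ is bootstrapped, the potential in \eqref{Field.psi} is pointwise $O(|q_0e|\,v^{-2})$ (the terms $-\tfrac{\Omega^2}{4r^2}$ and $-\tfrac{\nu\lambda}{r^2}$ cancel to leading order under the bootstrap, and the surviving charge term dominates), so $|D_u\rd_v\psi|\leq\eta\,v^{-2}|\psi|$ for $|q_0e|$ small. Integrating this in $u$ from $\gamma$, whose data Proposition~\ref{timelikecurve.prop} supplies, bounds $|\rd_v\psi|$ by $v^{-s+\frac12+\eta}$ plus $\eta$ times a weighted supremum of $|\psi|$; integrating that in $v$ from $v_{\gamma}(u)$ (using $s>\tfrac32+\eta$) bounds $|\psi|$ by $u^{\frac32-s+\eta}$ times $1+\eta\cdot(\text{the same supremum})$, and the smallness of $\eta$ closes the loop by absorption. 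If you insist on an energy-based route you would have to take the exponent negative (or at most $O(\eta)$), at which point you have gained nothing over the pointwise argument.
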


		\begin{proof}

			As in the proof of Proposition~\ref{timelikecurve.prop}, we make the following bootstrap assumptions: \begin{equation}\label{B1.I}
				|Q|(u,v) \leq 2 |e|,
			\end{equation}\begin{equation}\label{B2.I}
				|\frac{\Omega^2(u,v)}{4}-1|\leq \frac{10M}{r},
			\end{equation}\begin{equation}\label{B3.I}
				|\rd_u r+1|(u,v) \leq \frac{10M}{r},
			\end{equation}
			\begin{equation}\label{B4.I}
				|\rd_v r-1|(u,v) \leq \frac{10M}{r}.\end{equation}
			Additionally, we make the bootstrap assumption \begin{equation}\label{r.lower}
				r(u,v) \geq \frac{v}{10}.
			\end{equation}
			
			\noindent	Then, choosing $|q_0e|$ small enough, we have    \begin{equation}
				|D_u \rd_v \psi|(u,v) \leq 400 |q_0 e|   v^{-2} |\psi|(u,v)\leq \eta v^{-2}|\psi|(u,v).
			\end{equation} from which we deduce, invoking Proposition~\ref{timelikecurve.prop}, there exists $D>0$ such that (recall $s<2<\frac{5}{2}$) \begin{equation}
				| \rd_v \psi|(u,v) \ls v^{-s+ \frac{1}{2}+\eta} +\eta   v^{-s+\frac{1}{2}+\eta} \sup_{u\leq  u ' \leq u_\gamma(v)} [u']^{s-\frac{3}{2}-\eta}|\psi|(u',v).
			\end{equation} Integrating now in $v$ and invoking Proposition~\ref{timelikecurve.prop}  again gives (recall $s>\frac{3}{2}+\eta$) 
			\begin{equation}
				|\psi|(u,v) \ls u^{-s+ \frac{3}{2}+\eta} [1+\eta \sup_{u\leq  u ' \leq u_\gamma(v)} [u']^{s-\frac{3}{2}-\eta}|\psi|(u',v)],
			\end{equation} so since $\eta$ is small, we close the estimate and improve the bootstrap assumptions, following the same strategy  as in Proposition~\ref{timelikecurve.prop}.
			
		\end{proof}

		Before turning to the construction of spatial infinity, we will need to redefine the gauge choice $u$ (initially chosen to be \eqref{U.gauge.past}) according to gauge~\ref{gauge.U.past}, i.e., for all $u\geq \Delta$: \begin{equation}\label{U.new.gauge}
			\lim_{v\rightarrow+\infty} \iota(u,v) = 1.
		\end{equation}  \eqref{iota.I} shows that the new coordinate $u'$ thus defined is comparable to $u$, and in fact  obeys the following estimate as $u\rightarrow+\infty$ \begin{equation}\label{u'}
			u'= u (1+O(u^{-s})).
		\end{equation}  We will abuse notation and still use $\iota(u,v)$ for $\frac{\rd_v r}{\Omega^2}$ in the new $(u',v)$ coordinate system and also keep using the letter $u$ instead of $u'$.
		\begin{cor}\label{nullinf.cor}
			Under the new gauge choice \eqref{U.new.gauge}, the following estimates hold for $u\geq \Delta$, $v \geq v_{\gamma}(u)$:
			\begin{equation} \label{iota.II}
				|\iota(u,v ) -1|\ls v^{-2} u^{-2s+3+2\eta},
			\end{equation} \begin{equation}\label{nu.II}
				\bigl| -\nu(u,v)- (1-\frac{2\varpi(u,v)}{r(u,v)}+ \frac{e^2}{r^2(u,v)}) \bigr| \ls r^{-2} u^{-2s+3+2\eta},
			\end{equation}\begin{equation}\label{lambda.II}
				|\lambda(u,v)-1|\ls v^{-1},
			\end{equation} \begin{equation}\label{Omega.II}
				\bigl|	\frac{\Omega^2(u,v)}{4}  - (1-\frac{2M}{r(u,v)} + \frac{e^2}{r^2(u,v)})\bigr|\ls  u^{-2s+3+2\eta}v^{-1},
			\end{equation} \begin{equation}\label{dulogOmega.II}
				|	\rd_u \log(\Omega^2)|\ls r^{-2}.
			\end{equation} Moreover, \eqref{duphi.I}, \eqref{dvphi.I}, \eqref{M.Q.I}, \eqref{kappa.I}, \eqref{dvlogOmega.I} still hold.
		\end{cor}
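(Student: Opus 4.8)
The plan is to deduce Corollary~\ref{nullinf.cor} from Proposition~\ref{nullinf.prop} by the change of gauge $u\mapsto u'$ prescribed by \eqref{U.new.gauge}, following the corresponding step in \cite{Moi}. The first point to settle is that this teleological gauge is well-defined on $\{u\geq\Delta,\ v\geq v_\gamma(u)\}$. Writing the Raychaudhuri equation \eqref{RaychV} as $\rd_v(1/(4\iota))=-(r/\Omega^2)|\rd_v\phi|^2$ and inserting \eqref{dvphi.I}, the bootstrap lower bound $r\gtrsim v$ of \eqref{r.lower}, and $\Omega^2\asymp 1$ from \eqref{Omega.I}, one gets $|\rd_v(1/\iota)|(u,\cdot)\lesssim v^{-3}u^{3-2s+2\eta}$, which is integrable towards $v=+\infty$. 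Hence $\iota_\infty(u):=\lim_{v\to+\infty}\iota(u,v)$ exists, and by \eqref{iota.I} it obeys $|\iota_\infty(u)-1|\lesssim u^{-s}$, so $\iota_\infty(u)>1/2$ for $\Delta$ large. Defining $u'$ by $du'/du=\iota_\infty(u)$ and fixing the residual additive constant so that $u'-u\to0$, the convergence of $\int(\iota_\infty-1)$ (which holds since $s>1$) yields \eqref{u'}, so $u$ and $u'$ are comparable throughout the region.

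Next I would record how the relevant quantities transform: under $u\mapsto u'$ the functions $\Omega^2$, $\nu$, $\iota$, $D_u\phi$ and $D_u\psi$ are all divided by $\iota_\infty(u)$, while $r$, $\lambda$, $\varpi$, $Q$, $\phi$, $\psi$, $\rd_v\phi$, $\rd_v\psi$, $\kappa$ and $\rd_v\log\Omega^2$ are unchanged (the invariance of $\kappa=-\Omega^2/(4\nu)$ being exactly why \eqref{kappa.I} survives). Combined with the comparability of $u$ and $u'$, this immediately transfers \eqref{duphi.I}, \eqref{dvphi.I}, \eqref{M.Q.I}, \eqref{kappa.I} and \eqref{dvlogOmega.I} to the $(u',v)$-gauge. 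The gain of the new gauge is that now $1/\iota(u,v)\to1$ as $v\to+\infty$ and $1/\iota$ is nonincreasing in $v$, so that $0\leq \frac{1}{\iota(u,v)}-1=4\int_v^{+\infty}(r/\Omega^2)|\rd_v\phi|^2(u,v')\,dv'\lesssim v^{-2}u^{3-2s+2\eta}$ by the same estimate as above; since $\iota$ is bounded below and $r\asymp v$, this is \eqref{iota.II}. Estimate \eqref{nu.II} then follows from \eqref{murelation}, which gives $-\nu=\iota\,(1-2\varpi/r+Q^2/r^2)$, hence $-\nu-(1-2\varpi/r+Q^2/r^2)=(\iota-1)(1-2\varpi/r+Q^2/r^2)$, together with $|1-2\varpi/r+Q^2/r^2|\lesssim1$.

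For the last three estimates: \eqref{lambda.II} is just the now-proven bootstrap bound \eqref{B4.I}, $|\lambda-1|\leq 10M/r\lesssim v^{-1}$. For \eqref{Omega.II} I would use $\Omega^2/4=\iota\lambda$ together with \eqref{iota.II} to reduce to $|\lambda-(1-2M/r+e^2/r^2)|\lesssim v^{-1}u^{3-2s+2\eta}$, which I would obtain by integrating the transport equation for $r\lambda-(r-2M+e^2/r)$ along $\{u=\mathrm{const}\}$ from $\gamma$ towards $\mathcal{I}^+$ — the only genuinely new source being $-r^2|\rd_v\phi|^2\lesssim v^{-2}u^{3-2s+2\eta}$, the boundary datum on $\gamma$ being controlled by \eqref{Omega.gamma}, \eqref{iota.gamma}. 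Finally, \eqref{dulogOmega.II} comes from integrating \eqref{Omega} along $\{u'=\mathrm{const}\}$ from $\mathcal{I}^+$ (where $\rd_{u'}\log\Omega^2\to0$ since $\Omega^2\to 4$), the Reissner--Nordstr\"om part of the source collapsing to $O(r^{-3})$ after substituting \eqref{murelation} and $\Omega^2=4\iota\lambda$; in the new gauge one must also add the contribution $\tfrac{1}{\iota_\infty(u)}\tfrac{d}{du}\log\iota_\infty(u)$, which is $O(r^{-2})$ by differentiating in $u$ the integral formula for $\iota_\infty$ from the first step and using the $u$-derivative bounds already obtained.

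I expect the main obstacle to be the interplay between the well-definedness step and \eqref{dulogOmega.II}: one must verify that Proposition~\ref{nullinf.prop} — proved by a bootstrap that does not by itself record $\rd_u$-estimates in the null-infinity region — furnishes enough decay, with the right balance of powers of $u$ against $v$, both to make $\iota_\infty(u)$ well-defined and to control its $u$-derivative. Supplying the missing $\rd_u$-bounds (by differentiating \eqref{Radius}, \eqref{Omega} or \eqref{Omega2} and re-running a short bootstrap) and carefully carrying the factor $u'-u=O(u^{1-s})$ through the translation from the old to the new gauge is where the care is needed; the rest is routine once Proposition~\ref{nullinf.prop} is in hand.
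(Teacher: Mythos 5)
Your overall architecture matches the paper's (admittedly terse) proof: \eqref{iota.II} comes from integrating the Raychaudhuri equation \eqref{RaychV} for $\lambda/\Omega^2=1/(4\iota)$ from $\mathcal{I}^+$ in the renormalized gauge, \eqref{nu.II} is then the algebraic identity \eqref{murelation}, and the remaining bounds follow from the structure equations. Your preliminary verification that the teleological gauge \eqref{U.new.gauge} is well-defined and that $u'-u=O(u^{1-s})$ is a welcome addition that the paper leaves implicit, and your shortcut for \eqref{lambda.II} via the closed bootstrap \eqref{B4.I} together with \eqref{r.lower} is legitimate (the paper instead re-derives it from $\rd_u\log\lambda=O(r^{-2})$ using \eqref{Radius3}).

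The one step that would fail as written is your derivation of the intermediate bound $|\lambda-(1-\frac{2M}{r}+\frac{e^2}{r^2})|\ls v^{-1}u^{3-2s+2\eta}$ for \eqref{Omega.II}. There is no clean transport equation for $r\lambda$ in the $v$-direction: \eqref{Radius3} governs $\rd_u(r\lambda)$ and $\rd_v(r\nu)$, not $\rd_v(r\lambda)$. Computing $\rd_v[r\lambda-(r-2M+e^2/r)]$ via Raychaudhuri produces, besides $-r^2|\rd_v\phi|^2$ (which is indeed $\ls v^{-2}u^{3-2s+2\eta}$ and harmless), the sources $\lambda(\lambda-1)+\lambda e^2/r^2+r\lambda\,\rd_v\log\Omega^2$. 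Their leading parts cancel only up to terms involving $\rd_v\log\Omega^2-2K$, for which Proposition~\ref{nullinf.prop} provides no estimate (\eqref{dvlogOmega.I} bounds $\rd_v\log\Omega^2$ alone), and up to $O(r^{-1})$ remainders carrying no smallness in $u$ whose $v$-integral from $\gamma$ is at best logarithmic --- so the route cannot produce the factor $u^{3-2s+2\eta}$ that distinguishes \eqref{Omega.II} from the weaker \eqref{lambda.II}. The repair is exactly the identity you already invoke for \eqref{nu.II}: \eqref{murelation} gives $\lambda=\kappa(1-\frac{2\varpi}{r}+\frac{Q^2}{r^2})$, hence $\frac{\Omega^2}{4}=\iota\kappa(1-\frac{2\varpi}{r}+\frac{Q^2}{r^2})$, and \eqref{Omega.II} follows purely algebraically from \eqref{iota.II}, \eqref{kappa.I} and \eqref{M.Q.I}; this is the paper's route. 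A smaller point on \eqref{dulogOmega.II}: the gauge-change contribution $\iota_\infty^{-1}\frac{d}{du}\log\iota_\infty(u)$ depends on $u$ alone, so it cannot be $O(r^{-2})$ uniformly as $v\to+\infty$ unless it vanishes; the clean statement is that in the new gauge $\rd_u\log\Omega^2\to0$ on $\mathcal{I}^+$ (by your dominated-convergence argument applied to $\log\Omega^2\to\log 4$), so one integrates \eqref{Omega} backwards from $\mathcal{I}^+$ with zero data and the $O(r^{-3})$ source you correctly identify via \eqref{murelation}, and the extra term never appears separately.
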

		
		\begin{proof}

			Redoing the proof of Proposition~\ref{nullinf.prop} under the new gauge \eqref{U.new.gauge}  gives \eqref{iota.II}. By \eqref{murelation}, this gives \eqref{nu.II}. Then, using \eqref{Radius3}, we get \begin{equation}
				\rd_u\log \lambda= \frac{-\nu Q^2}{r^3} - \frac{-\iota^{-1}+1}{r} [1-\frac{Q^2}{r^2}]- \frac{\nu+1}{r}[1-\frac{Q^2}{r^2}]=O(r^{-2}),
			\end{equation} which, after integrating in $u$, gives \eqref{lambda.II}. Then, by \eqref{murelation} combined with \eqref{M.Q.I}, \eqref{kappa.I}, \eqref{iota.II},  we obtain \eqref{Omega.II}. Integrating \eqref{Omega}, we also obtain \eqref{dulogOmega.II}.
		\end{proof}
		\subsubsection{Backwards propagation from null infinity to spatial infinity}\label{spatial.inf.section}
		
		From Proposition~\ref{timelikecurve.prop} and Proposition~\ref{nullinf.prop}, we have obtained a solution of \eqref{1.1}--\eqref{5.1} in the spacetime rectangle $[\Delta,+\infty] \times [v_0,+\infty)$. To construct the asymptotically flat end, we impose initial data on  $ \mathcal{I}^+ \cap (-\infty,\Delta]$, where  $\mathcal{I}^+=\{v=+\infty\}$. We  still impose the $u$-gauge \eqref{U.new.gauge} as in the previous section.

		We will  also assume decay towards spatial initial $i^0=\{u=-\infty,v=+\infty\}$ in  that there exists a constant $D>0$,  such that  for $u\leq -\Delta$, the following estimate holds:\begin{equation}\label{global.data.bound}
			|\psi_{|\mathcal{I}^+}|(u)\leq D |u|^{-s+\frac{3}{2} +\eta},\ |D_u\psi_{|\mathcal{I}^+}|(u)  \leq D |u|^{-s+\frac{1}{2} +\eta}.
		\end{equation}

		First, we fix $\Delta'>0$ a large constant and construct the solution in the spacetime rectangle $(u,v) \in [-\Delta',\Delta] \times [v_0,+\infty)$. We denote $D'(\Delta')$ such that for all $-\Delta'\leq u\leq \Delta$: \begin{equation}\label{local.data.bound}
			|\psi_{|\mathcal{I}^+}|(u),\ 	|D_u\psi_{|\mathcal{I}^+}|(u)  \leq D'. 
		\end{equation}
		\begin{lemma}\label{i0.Delta.lemma}
			Assuming that $v_0$ is large enough, depending on $\Delta$ and $\Delta'$, the following estimates hold for all $-\Delta'\leq u \leq \Delta$, $v\geq v_0$:
			\begin{equation}
				|\frac{\Omega^2(u,v)}{4}-1|,\ |	\rd_v r (u,v)-1|,\ |	\rd_u r(u,v)+1| \ls v^{-1}.
			\end{equation}
			\begin{equation}
				|\psi|(u,v),\ |D_u \psi|(u,v),\ r|D_u \phi|(u,v),\ |Q|(u,v),\ |\varpi|(u,v) \leq C.
			\end{equation}
			\begin{equation}
				|\rd_v \psi|(u,v) \ls v^{-s+\frac{1}{2}+ \eta},\ r|\rd_v \phi|(u,v) \ls v^{-1},
			\end{equation}\begin{equation}
				|\rd_u \log(\Omega^2)|(u,v),\ |\rd_v \log(\Omega^2)|(u,v) \ls v^{-2},
			\end{equation}
			\begin{equation}\label{kappa.loc}
				|\kappa(u,v)-1|\ls v^{-1},
			\end{equation}
			\begin{equation}\label{iota.loc}
				|\iota(u,v)-1|\ls v^{-2}.
			\end{equation}
		\end{lemma}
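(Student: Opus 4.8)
The plan is to prove Lemma~\ref{i0.Delta.lemma} by a single continuity/bootstrap argument on the semi-infinite rectangle $\mathcal{R}_{\Delta'}:=[-\Delta',\Delta]\times[v_0,+\infty)$, where one solves the characteristic initial value problem for \eqref{1.1}--\eqref{5.1} with data on the outgoing cone $\{u=\Delta\}\times[v_0,+\infty)$ --- on which \emph{all} the asserted estimates already hold by Proposition~\ref{nullinf.prop} and Corollary~\ref{nullinf.cor} --- and on $\mathcal{I}^+\cap[-\Delta',\Delta]$, on which the radiation field obeys \eqref{local.data.bound} and the $u$-gauge \eqref{U.new.gauge} fixes $\iota\to1$. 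One first observes that the limits of $\lambda$ and $\iota$ on $\mathcal{I}^+$ are both $1$ and constant in $u$ there (matching their values at $u=\Delta$ by \eqref{lambda.II} and \eqref{iota.I}, since $\partial_u\partial_v r=-\tfrac{\Omega^2}{4r}-\tfrac{\nu\lambda}{r}+\tfrac{\Omega^2Q^2}{4r^3}\to0$ as $r\to\infty$ by \eqref{Radius}), whence $\Omega^2/4\to1$ and $\partial_u\log\Omega^2\to0$ on $\mathcal{I}^+$; this pins down the boundary behaviour for all the $v$-integrations. Since the $u$-extent of $\mathcal{R}_{\Delta'}$ is bounded by $\Delta+\Delta'$ and $r\approx v\to+\infty$ there, the problem is genuinely perturbative off Minkowski, the sole role of taking $v_0=v_0(\Delta,\Delta')$ large being to absorb the $O(v_0^{-1})$ and $O(v_0^{-s+\frac32+\eta})$ error constants produced by integrating over the $u$- and $v$-intervals. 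I would run the bootstrap with fixed large multiples $A,B$ of the data constants on the quantities $|\Omega^2/4-1|+|\partial_v r-1|+|\partial_u r+1|\le Av^{-1}$, $\ |\psi|+|D_u\psi|+r|D_u\phi|+|Q|+|\varpi|\le B$, $\ |\partial_v\psi|\le Bv^{-s+\frac12+\eta}$, $\ |\partial_u\log\Omega^2|+|\partial_v\log\Omega^2|\le Bv^{-2}$, $\ |\kappa-1|\le Bv^{-1}$, $\ |\iota-1|\le Bv^{-2}$.

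The wave estimate is the heart of the matter. Writing \eqref{Field.psi} as $D_u\partial_v\psi=P\,\psi$, one uses the identity $\tfrac{-4\partial_u r\,\partial_v r}{\Omega^2}=1-\tfrac{2\varpi}{r}+\tfrac{Q^2}{r^2}$ (i.e.\ \eqref{murelation}) to see that the \emph{a priori} $O(r^{-2})$ part $-\tfrac{\Omega^2}{4r^2}-\tfrac{\partial_u r\,\partial_v r}{r^2}$ of $P$ equals $\tfrac{\Omega^2}{4r^2}\big(\tfrac{2\varpi}{r}-\tfrac{Q^2}{r^2}\big)=O(r^{-3})$; together with the manifestly $O(r^{-3})$ charge term $\tfrac{\Omega^2 Q^2}{4r^4}-\tfrac{iq_0Q\Omega^2}{4r^2}$ (the smallness of $|e|$, already assumed, being convenient here but not essential) and the bound $|D_u\phi\,\overline{D_v\phi}|\lesssim r^{-2}v^{-1}$ coming from $|D_u\phi|\lesssim B/r$ and $r|D_v\phi|\lesssim v^{-1}$, this gives $|D_u\partial_v\psi|\lesssim r^{-2}|\psi|\lesssim Bv^{-2}$. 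Integrating in $u$ from $\{u=\Delta\}$, where $|\partial_v\psi|\lesssim v^{-s+\frac12+\eta}$ by \eqref{dvphi.I}, improves the bound on $\partial_v\psi$. Here the hypothesis $s>\frac32$ (from \eqref{EH.bound2}, with $\eta$ small) is exactly what makes $\int_v^\infty (v')^{-s+\frac12+\eta}\,dv'\lesssim v^{-s+\frac32+\eta}$ finite: integrating $\partial_v\psi$ in $v$ from $\mathcal{I}^+$ and using \eqref{local.data.bound} improves $|\psi|\le B$, and integrating $D_u\partial_v\psi$ in $v$ from $\mathcal{I}^+$ (using $|D_u\psi_{|\mathcal{I}^+}|\le D'$) improves $|D_u\psi|\le B$. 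From $rD_u\phi=D_u\psi-\tfrac{\nu}{r}\psi$ and $r\partial_v\phi=\partial_v\psi-\tfrac{\lambda}{r}\psi$ one then recovers $r|D_u\phi|\lesssim B$ and, since $-s+\frac12+\eta<-1$, $r|\partial_v\phi|\lesssim v^{-1}$.

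The metric and current quantities follow by integrating the structure equations with these bounds. Equation \eqref{RaychV} in the form $\partial_v\big((4\iota)^{-1}\big)=-r|D_v\phi|^2/\Omega^2=O(v^{-3})$, integrated from $\mathcal{I}^+$ where $\iota=1$, gives $|\iota-1|\lesssim v^{-2}$; equation \eqref{RaychU} integrated in $u$ from $\{u=\Delta\}$ (using \eqref{kappa.I}) gives $|\kappa-1|\lesssim v^{-1}$; then $\partial_v r=\Omega^2/(4\iota)$ and $\partial_u r=-\Omega^2/(4\kappa)$ with the $\Omega^2$-bound give $|\partial_v r-1|,|\partial_u r+1|\lesssim v^{-1}$, hence $r(u,v)=v-u+O(1)$ and $r^{-1}\lesssim v^{-1}$. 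Equation \eqref{Omega} rearranged via \eqref{murelation} has right-hand side $O(r^{-3})$ --- the potentially $O(r^{-2})$ terms $\tfrac{\Omega^2}{2r^2}+\tfrac{2\partial_u r\,\partial_v r}{r^2}$ again cancelling to $\tfrac{\Omega^2}{2r^2}(\tfrac{2\varpi}{r}-\tfrac{Q^2}{r^2})$ exactly as in the wave equation --- so $|\partial_u\partial_v\log\Omega^2|\lesssim v^{-3}$; integrating this in $u$ from $\{u=\Delta\}$ (via \eqref{dvlogOmega.I}) and then in $v$ from $\mathcal{I}^+$ (via the gauge normalization above, and via \eqref{dulogOmega.II} for the $\partial_u$ component) yields $|\partial_v\log\Omega^2|,|\partial_u\log\Omega^2|\lesssim v^{-2}$ and $|\Omega^2/4-1|\lesssim v^{-1}$. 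Finally \eqref{chargeUEinstein} and \eqref{massUEinstein}, whose right-hand sides are $O(1)$ under these bounds (e.g.\ $r^2|\phi\,D_u\phi|=|\psi|\,r|D_u\phi|\lesssim B^2$), integrated in $u$ over the bounded interval keep $|Q|,|\varpi|\lesssim B$. With $A,B$ chosen as fixed multiples of the data constants and then $v_0$ taken large, all bootstrap inequalities close, and a standard continuity argument --- together with characteristic local well-posedness, which simultaneously provides existence of the solution on all of $\mathcal{R}_{\Delta'}$ --- completes the proof.

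The main obstacle I anticipate is bookkeeping rather than anything conceptual: one must track which boundary ($\{u=\Delta\}$ or $\mathcal{I}^+$) each quantity is integrated from, arrange the constants into a consistent hierarchy (the $\psi$-estimates feed the metric estimates, which feed back into $P$ in \eqref{Field.psi}), and carefully establish the gauge-normalized behaviour of $\Omega^2$ and $\partial_u\log\Omega^2$ on $\mathcal{I}^+$ so that the $v$-integrations start from controlled data. The hypotheses enter only through the pair ``$s>\frac32$'' (needed for $v$-integrability of $\partial_v\psi$, hence boundedness of $\psi$ from \eqref{local.data.bound}) and ``$v_0$ large'' (to absorb the $\Delta,\Delta'$-dependent error constants); the smallness of $|e|$ is used merely for notational convenience.
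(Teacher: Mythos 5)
Your proposal is correct and follows essentially the same route as the paper: a bootstrap on the rectangle $[-\Delta',\Delta]\times[v_0,+\infty)$, the key step being $|D_u\partial_v\psi|\lesssim v^{-2}$ from \eqref{Field.psi} under the bootstrap, integrated in $u$ from $\{u=\Delta\}$ (where Corollary~\ref{nullinf.cor} controls $\partial_v\psi$) and then in $v$ from $\mathcal{I}^+$ (where \eqref{local.data.bound} controls $\psi$), with the metric quantities recovered from the structure equations over the bounded $u$-interval and $v_0$ taken large to absorb the $\Delta,\Delta'$-dependent constants. The paper's own proof is terser (it leaves the metric estimates as "easy to derive"); your filled-in details, including the $O(r^{-3})$ cancellation via \eqref{murelation}, are consistent with it.
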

		
		\begin{proof}
			We make the bootstrap assumptions: \begin{equation}\label{B1.loc}
				\frac{1}{2}\leq \frac{\Omega^2(u,v)}{4},\ \rd_v r(u,v),\ -\rd_u r(u,v) \leq 2,
			\end{equation}
			\begin{equation}\label{B2.loc}
				|\psi|(u,v) \leq 2D',
			\end{equation}
			\begin{equation}\label{B3.loc}
				|Q|(u,v) \leq C(D',e,\Delta),
			\end{equation}
			\begin{equation}\label{B4.loc}
				r(u,v) \geq \frac{v}{10}.
			\end{equation}

			\noindent	Then, \eqref{Field.psi} with \eqref{B1.loc}, \eqref{B2.loc}, \eqref{B3.loc} gives \begin{equation}
				|D_u \rd_v \psi|\ls v^{-2},
			\end{equation} which upon integrating in $u$ and invoking Corollary~\ref{nullinf.cor}, gives \begin{equation}
				|\rd_v \psi(u,v) | \ls v^{-s+\frac{1}{2}+\eta} + v^{-2}\ls v^{-s+\frac{1}{2}+\eta},
			\end{equation} hence, integrating in $v$ gives   \begin{equation}
				| \psi(u,v) -\psi_{|\mathcal{I}^+}(u)| \ls v^{-s+\frac{3}{2}+\eta},
			\end{equation} which is enough to improve \eqref{B2.loc} for $v_0$ large enough. It is then easy to derive the rest of the estimates and improve the bootstrap assumptions. 
		\end{proof}

		Finally, we   turn to the propagation towards spacelike infinity $i^0=\{u=-\infty,\ v=+\infty\}$. 
		
		\begin{prop}\label{i0.prop}
			There exists a large constant $D>0$ such that	for all $u\leq -\Delta'$, $v\geq D[-\Delta'-u]+v_0$
			\begin{equation}\label{Omega.lambda.nu.i0}
				|	\frac{\Omega^2}{4}(u,v)-1|,\ |	\rd_v r (u,v)-1|,\ |	\rd_u r(u,v)+1| \ls v^{-1},
			\end{equation}
			\begin{equation}\label{Q.M.i0}|Q|(u,v),\ |\varpi|(u,v) \leq C,
			\end{equation}
			\begin{equation}\label{psi.i0}
				|\psi|(u,v)  \ls |u|^{-s+\frac{3}{2}+ \eta},
			\end{equation}
			\begin{equation}\label{dupsi.i0}
				|D_u \psi|(u,v),\ 	r|D_u \phi|(u,v)  \ls u^{-s+\frac{1}{2}+ \eta},
			\end{equation}
			\begin{equation}\label{dvpsi.i0}
				|\rd_v \psi|(u,v) \ls v^{-s+\frac{1}{2}+ \eta},\ r|\rd_v \phi|(u,v) \ls v^{-1}|u|^{-s+\frac{3}{2}+ \eta},
			\end{equation}\begin{equation}\label{dlogOmega.i0}
				|\rd_u \log(\Omega^2)|(u,v),\ |\rd_v \log(\Omega^2)|(u,v) \ls v^{-2},
			\end{equation}
			\begin{equation}\label{kappa.i0}
				|\kappa(u,v)-1|\ls v^{-1},
			\end{equation}
			\begin{equation}\label{iota.i0}
				|\iota(u,v)-1|\ls v^{-2},
			\end{equation}
			\begin{equation}\label{dlambda.i0}
				|\rd_v \lambda|(u,v),\ |\rd_u \lambda|(u,v),\ 	|\rd_v \nu|(u,v)\ls v^{-2},
			\end{equation}
			\begin{equation}\label{dnu.i0}
				|\rd_u \nu|(u,v)\ls \blue{v^{-1}u^{-1}}.
			\end{equation}
		\end{prop}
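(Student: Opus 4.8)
\textbf{Proof strategy for Proposition~\ref{i0.prop}.}

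The plan is to run a single large-scale bootstrap argument in the region $\mathcal{D}_{i^0}=\{u\leq -\Delta',\ v\geq D[-\Delta'-u]+v_0\}$, propagating backwards from $\mathcal{I}^+$ (where the data \eqref{global.data.bound} are prescribed) and from the ingoing cone $\{u=-\Delta'\}$ (where Lemma~\ref{i0.Delta.lemma} provides the seed estimates). First I would set up bootstrap assumptions of the same shape as in Lemma~\ref{i0.Delta.lemma} and Proposition~\ref{nullinf.prop}: a lower bound $r(u,v)\geq \frac{v}{10}$ (this is where the constant $D$ and the requirement $v\geq D[-\Delta'-u]+v_0$ enter — it guarantees that even far along the ingoing direction we have not yet reached small radius), the metric-component bounds $|\tfrac{\Omega^2}{4}-1|, |\lambda-1|, |\nu+1|\leq \tfrac{10M}{r}$, the charge bound $|Q|\leq 2|e|$, and the radiation-field bound $|\psi|\leq 2D|u|^{-s+3/2+\eta}$. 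The region is globally ``large-$r$'' so all geometric quantities are close to Minkowski (or rather to the small-charge Reissner--Nordstr\"om at large $r$, which is a small perturbation of Minkowski), which is exactly the regime where the $r^p$-method and the Hardy inequalities used in Proposition~\ref{timelikecurve.prop} and Proposition~\ref{nullinf.prop} apply with room to spare.

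The key steps, in order: (i) Use \eqref{Field.psi} together with the bootstrap assumptions to get $|D_u\rd_v\psi|\ls |q_0 e| r^{-2}|\psi|\ls \eta v^{-2}|\psi|$ once $|e|$ is chosen small; integrate this in $v$ from $\mathcal{I}^+$ (using the data bound on $D_u\psi_{|\mathcal{I}^+}$) to obtain $|\rd_v\psi|(u,v)\ls v^{-s+1/2+\eta}$ plus a self-improving term, exactly as in Proposition~\ref{nullinf.prop}. (ii) Integrate $\rd_v\psi$ in $v$ to recover $|\psi|(u,v)-|\psi_{|\mathcal{I}^+}|(u)\ls v^{-s+3/2+\eta}$, which improves the bootstrap on $\psi$ since the data term is $|u|^{-s+3/2+\eta}$ and $v\gtrsim |u|$ in this region. (iii) Integrate the wave equation in the $u$-direction from the cone $\{u=-\Delta'\}$ — using the Lemma~\ref{i0.Delta.lemma} bounds there — to get $|D_u\psi|(u,v)\ls |u|^{-s+1/2+\eta}$; the comparison of $u$ and $v$ along $\mathcal{I}^+$ is governed by the gauge \eqref{U.new.gauge}, which forces $\iota\to1$ and hence $u,v$ essentially proportional. (iv) Feed these scalar-field bounds into the constraint/evolution equations \eqref{massUEinstein}, \eqref{chargeUEinstein}, \eqref{RaychU}, \eqref{RaychV}, \eqref{Omega}, \eqref{Radius3}: integrating the Raychaudhuri equations in $u$ and $v$ controls $\kappa-1$ and $\iota-1$, integrating \eqref{massUEinstein} and \eqref{chargeUEinstein} gives the $Q-e$ and $\varpi-M$ bounds, and \eqref{murelation} then yields the $\Omega^2$, $\lambda$, $\nu$ estimates, improving the metric-component bootstraps for $v_0$ (equivalently, the region's lower edge) large. (v) Finally, for \eqref{dlambda.i0}--\eqref{dnu.i0}, differentiate \eqref{Radius3} (i.e.\ $\rd_u(r\lambda)=\rd_v(r\nu)=\tfrac{\Omega^2}{4}(1-\tfrac{Q^2}{r^2})$) and combine with the already-established $O(v^{-1})$ control on $\lambda,\nu$ and the $O(v^{-2})$ control on $\rd_u\log\Omega^2, \rd_v\log\Omega^2$ from integrating \eqref{Omega}; one writes $\rd_v\lambda = \lambda\rd_v\log\lambda$ and uses $\rd_v\log\lambda = \rd_v\log(r\lambda)-\lambda/r = O(v^{-2})$ after cancellations, and symmetrically for $\nu$.

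\textbf{Main obstacle.} The delicate point is not any single estimate but the \emph{geometry of the region and the gauge bookkeeping}: one must verify that $\{v\geq D[-\Delta'-u]+v_0\}$ with a single universal $D$ is genuinely foliated by ingoing/outgoing segments on which all the above integrations stay inside the region and inside the large-$r$ regime — i.e.\ that the characteristics emanating from $\{u=-\Delta'\}$ and from $\mathcal{I}^+$ cover $\mathcal{D}_{i^0}$ without escaping where $r$ could become small. This requires first establishing $r\geq v/10$ a priori (closing that bootstrap from $\rd_v r\approx 1$, $\rd_u r\approx -1$ and the seed value $r\gtrsim v$ on the region's past edge), and being careful that the constant $D$ chosen for the region is compatible with the constant $D$ appearing in the bound $r\geq v/10$. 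A secondary subtlety is that the $r^p$ Hardy estimates of Proposition~\ref{timelikecurve.prop} were stated with boundary terms on a timelike curve to the \emph{past}; here the well-controlled boundary is $\mathcal{I}^+$ (to the future in the $v$-direction but to the past in the $u$-direction near $i^0$), so one integrates the $r^p$ identity in the $u$-direction rather than the $v$-direction, and the smallness of $|q_0 e|$ is what makes the resulting zeroth-order term in the wave equation a harmless perturbation with no need for a Hardy inequality at all — the estimate \eqref{psi.i0}--\eqref{dvpsi.i0} follows from direct double integration of $|D_u\rd_v\psi|\ls \eta v^{-2}|\psi|$ in the large-$r$ region, which is genuinely easier than the $\gamma_{R_0}$ analysis. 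Once the region is correctly set up, every step is a routine Gr\"onwall/bootstrap iteration of the type already carried out in Sections~\ref{timelike.section}--\ref{spatial.inf.section}.
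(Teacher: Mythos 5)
Your proposal follows essentially the same route as the paper: a bootstrap in $\mathcal{D}_{i^0}$ with $r\geq v/10$ and near-Minkowski metric components, double integration of \eqref{Field.psi} using the scattering data on $\mathcal{I}^+$ and the seed bounds of Lemma~\ref{i0.Delta.lemma} on $\{u=-\Delta'\}$, with the constant $D$ in $v\geq D[-\Delta'-u]+v_0$ entering exactly where you say it does (to keep the error accumulated over the $u$-integration, which is of size $v^{-1}[-\Delta'-u]\sup|\psi_{|\mathcal{I}^+}|$, subordinate to the data term), followed by routine integration of the matter and metric equations.

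Two bookkeeping points should be fixed. First, your steps (i) and (iii) have the integration directions swapped: integrating $D_u\rd_v\psi$ in $v$ from $\mathcal{I}^+$ (where $D_u\psi_{|\mathcal{I}^+}$ is prescribed) produces the bound \eqref{dupsi.i0} on $D_u\psi$, whereas the bound \eqref{dvpsi.i0} on $\rd_v\psi$ comes from integrating in $u$ from the cone $\{u=-\Delta'\}$ where Lemma~\ref{i0.Delta.lemma} applies; this matters because it is precisely the $u$-integration whose length must be compared to $v$. Second, for \eqref{dlambda.i0}--\eqref{dnu.i0}, equation \eqref{Radius3} only controls the mixed derivatives $\rd_u\lambda$ and $\rd_v\nu$ (combined with \eqref{murelation} and the $O(v^{-1})$ bounds on $\Omega^2,\lambda,\nu,\varpi/r$); the same-direction derivatives $\rd_v\lambda$ and $\rd_u\nu$ are instead obtained from the Raychaudhuri equations, e.g.\ $\rd_v\log\lambda-\rd_v\log(\Omega^2)=-r|\rd_v\phi|^2/\lambda$ together with $|\rd_v\log(\Omega^2)|\ls v^{-2}$ and \eqref{dvpsi.i0} — your identity $\rd_v\log\lambda=\rd_v\log(r\lambda)-\lambda/r$ is true but there is no equation directly bounding $\rd_v(r\lambda)$. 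Neither point is a conceptual gap; the argument closes as you describe once these are corrected.
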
	\begin{proof}
			We make the following bootstrap assumptions		\begin{equation}\label{B1.i0}
				\frac{1}{2}\leq \frac{\Omega^2(u,v)}{4},\ \rd_v r(u,v),\ -\rd_u r(u,v) \leq 2,
			\end{equation}
			\begin{equation}\label{B2.i0}
				|\psi|(u,v) \leq   4 \sup_{-\Delta' \leq u'\leq u}|\psi_{|\mathcal{I}^+}|(u'),
			\end{equation}
			\begin{equation}\label{B3.i0}
				|Q|(u,v) \leq C,
			\end{equation}
			\begin{equation}\label{B4.i0}
				r(u,v) \geq \frac{v}{10}.
			\end{equation}

			We get, by \eqref{Field.psi} \begin{equation}
				|D_u \rd_v \psi| \leq C v^{-2}  \sup_{-\Delta' \leq u'\leq u}|\psi_{|\mathcal{I}^+}|(u'),
			\end{equation} hence, integrating in $u$, invoking Lemma~\ref{i0.Delta.lemma} \begin{equation}
				| \rd_v \psi| \leq D  [v^{-s+\frac{1}{2}+\eta}+   v^{-2}[-\Delta'-u] \sup_{-\Delta' \leq u'\leq u}|\psi_{|\mathcal{I}^+}|(u')].
			\end{equation} Thus, now integrating in $v$:  \begin{equation}
				|  \psi(u,v) - \psi_{|\mathcal{I}^+}(u)| \leq D' v^{-s+\frac{3}{2}+\eta} + D v^{-1}[-\Delta'-u]   \sup_{-\Delta' \leq u'\leq u}|\psi_{|\mathcal{I}^+}|(u').
			\end{equation} To improve on \eqref{B2.i0}, we just need that $v_0$ large enough and moreover \begin{equation}
				v \geq  D [-\Delta'-u]+ v_0.
			\end{equation} The other bootstrap assumptions improvements and other claimed estimates  are then straightforward to derive, except perhaps \eqref{dlambda.i0}, \eqref{dnu.i0}, which require a bit more explanation. Invoking \eqref{RaychV} under the form $\rd_v \log(\lambda)-\rd_v \log(\Omega^2) = \frac{-r|\rd_v\phi|^2}{\lambda}$, together with \eqref{dvpsi.i0}, \eqref{dlogOmega.i0}  shows that \begin{equation}
				|\rd_v \log(\lambda)|\ls v^{-2}.
			\end{equation} Similarly,  invoking \eqref{RaychU} together with  \eqref{dupsi.i0}, \eqref{dlogOmega.i0} shows that \begin{equation}
				|\rd_u\log(\nu)|\ls \blue{v^{-1}u^{-1}}.
			\end{equation}
			
			\noindent	Now, invoking \eqref{Radius} combined with \eqref{Omega.lambda.nu.i0} immediately gives \begin{equation}
				|\rd_u \lambda|(u,v),\ |\rd_v \nu|(u,v) \ls v^{-2},
			\end{equation} which completes the proof of \eqref{dlambda.i0}, \eqref{dnu.i0}.
		\end{proof}

		\subsubsection{Construction of the charged one-ended black hole}\label{completion.section}
		
		We are now ready to complete the proof of Theorem~\ref{EH.AF.thm} and Corollary~\ref{EH.AF.cor}.
		
		\begin{proof} We start with $(\mathcal{M},g,F,\phi)$ solution of \eqref{1.1}--\eqref{5.1} with $F \neq 0$,  the spacetime of Proposition~\ref{EH.AF.1st.prop} \blue{to the future of $\mathcal{H}^+$, outgoing cone connecting  $\Gamma$ to $i^+$, which will later act as the  event horizon}. 	The gauge is chosen to be gauge \eqref{gauge.EH.v} and gauge~\ref{gauge.U.past}  (which we recall comes from \eqref{U.new.gauge}, which  is a teleological choice).
			
			\begin{enumerate}[A.]

				\item \label{stepA} [Existence up to a constant-$r$ timelike curve near  $\mathcal{H}^+$]. We pose consider the characteristic initial value problem for \eqref{1.1}--\eqref{5.1} with data on $[-\ep,0]\times \{v_{0}\} \cup \{0\}\times [v_0,+\infty)$ recalling that $v_0 >1$ is sufficiently large. Then, we invoke Proposition~\ref{RS.prop} and Proposition~\ref{noshift.prop} with characteristic initial data on $[-\ep,0]\times  \{v_0\} \cup \{0\}\times [v_0,+\infty)$ to solve \eqref{1.1}--\eqref{5.1} in the region $\{-\ep \leq U \leq 0,\ v_0 \leq v\leq v_{R_0}(u)\}$, where we recall $ v_{R_0}(u)$ is defined so that $ v_{R_0}(u)+u = R_0$, where $R_0$ is a large constant.
				
				\item \label{stepB} [Existence up to a far-away timelike curve $\gamma$ near  $\mathcal{H}^+$]. We invoke Proposition~\ref{timelikecurve.prop} to solve \eqref{1.1}--\eqref{5.1} in the region $\{u \geq \Delta,\ v_{R_0}(u)\leq v\leq v_{\gamma}(u) \}$,  where we recall $ v_{\gamma}(u)=\frac{3u}{2}$,  and $u= [2 K_+]^{-1} \ln(|U|^{-1})$, with $\Delta= [2 K_+]^{-1} \ln(\ep^{-1})$.
				
				\item  \label{stepC} [Existence up to null infinity $\mathcal{I}^+$ near  $\mathcal{H}^+$]. We invoke Proposition~\ref{nullinf.prop} and Corollary~\ref{nullinf.cor} to solve \eqref{1.1}--\eqref{5.1} in the region $\{u \geq \Delta,\  v\geq v_{\gamma}(u) \}$ ($\mathcal{R}_{\mathcal{H}}$ in Figure~\ref{fig:ext_bif}).
				
				\item \label{stepD} [Existence up to spacelike infinity $i^0$]. We invoke Lemma~\ref{i0.Delta.lemma}   to solve \eqref{1.1}--\eqref{5.1} in the region $\{-\Delta' \leq u \leq \Delta,\  v\geq v_0  \}$ ($\mathcal{R}_{\mathcal{I}}$ in Figure~\ref{fig:ext_bif}).,  and subsequently  Proposition~\ref{i0.prop} to solve \eqref{1.1}--\eqref{5.1} in the region $\{u \leq -\Delta',\  v\geq v_0 + D[-\Delta'-u] \}$, for some (sufficiently large) constant $D>0$ ($\mathcal{D}_{i^{0}}$ in Figure~\ref{fig:ext_bif}).
				
			\end{enumerate}
			
			Combining Steps~\ref{stepA}--\ref{stepD} gives a solution of \eqref{1.1}--\eqref{5.1} in  the following spacetime region (Figure~\ref{fig:ext_bif})\begin{equation}\begin{split} & \mathcal{M}':= R_{\mathcal{H}} \cup R_{\mathcal{I}}\cup \mathcal{D}_{i^0}, \\ &
					R_{\mathcal{H}} := \{\blue{u \geq \Delta},\  v\geq v_{0}\},\\ & R_{\mathcal{I}} := \{ -\Delta'\leq u \leq \Delta,\  v\geq v_0\},\\ & \mathcal{D}_{i^0}  :=\{  u \leq -\Delta',\  v\geq v_0+ D[-\Delta'-u]\}.\end{split}
			\end{equation}
			
			\begin{figure}	\begin{center}
					\includegraphics[width=100 mm, height=50 mm]{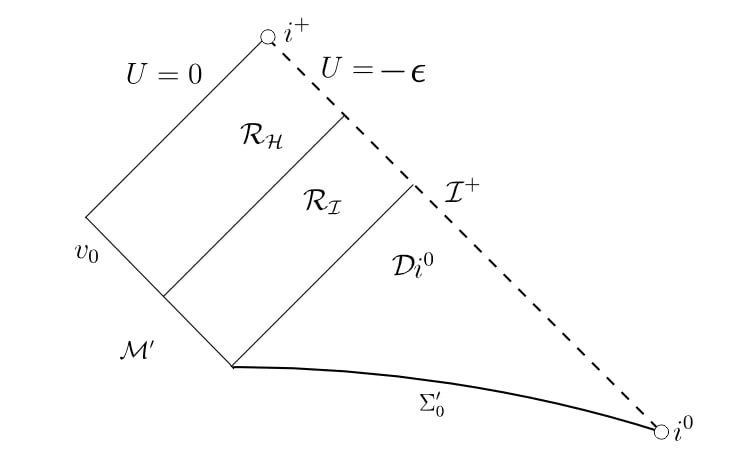}
				\end{center}
				\caption{Construction of an asymptotically flat end corresponding to $\mathcal{M}'$, the right of the event horizon in the Penrose diagram, here labeled as $\{U=0\}$ as part of the proof of Theorem~\ref{EH.AF.thm}.}\label{fig:ext_bif}\end{figure}
			
			Then, one can glue $\mathcal{M}$ to $\mathcal{M}'$ to obtain a global spacetime. Finally, \blue{by Cauchy stability, one can solve \eqref{1.1}-\eqref{5.1} into a small neighborhood towards the past of $\mathcal{H}^+ \cup \mathcal{R}_{\mathcal{H}}\cup \mathcal{R}_{\mathcal{I}}$, and this region is free of trapped or anti-trapped surfaces. Subsequently, we can} construct a spacelike hypersurface $\Sigma_0' \subset \mathcal{M}'$ with the following properties (see Figure~\ref{fig:final_bif}): \begin{enumerate}
				\item $\Sigma_0'$ does not intersect with the event horizon $\mathcal{H}^+$.
				\item $\Sigma_0'$ terminates to the left at the center $\Gamma$. 
				\item In a neighborhood of $\Gamma$, the induced data on $\Sigma_0'$ coincides with that of $\blue{\Sigma_L}$ from Proposition~\ref{EH.AF.1st.prop}.
				\item $\Sigma_0'$ does not have any (marginally) trapped sphere, i.e., $\lambda_{|\Sigma_0'}>0$.
				\item $\Sigma_0'$ does not have any (marginally) antitrapped sphere, i.e., $\nu_{|\Sigma_0'}<0$.
				\item For some $\Delta''>\Delta'$, $\Sigma_0' \cap \{u\leq -\Delta''\}$ coincides with $\{v= D[-\Delta'+u]+v_0,\ u\leq -\Delta''\} $.
			\end{enumerate}	\begin{figure}[H]	\begin{center}
					\includegraphics[width=100 mm, height=50 mm]{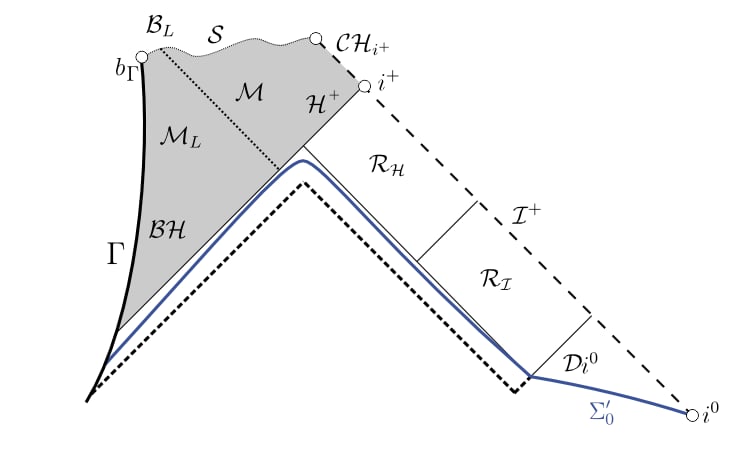}
				\end{center}
				\caption{Asymptotically flat one-ended black hole construction in the proof of Theorem~\ref{EH.AF.thm}.}\label{fig:final_bif}\end{figure} Moreover, by Theorem~\ref{CH.thm.SS}, the MGHD terminal boundary admits a non-empty null component $\CH$ on which $r$ extends to a non-zero function.
			Finally, we need to check that the spacetime is asymptotically flat in the sense of Definition~\ref{AF.def}. By definition, $\mathcal{M}'\cap J^{+}(\Sigma_0')$ is the MGHD of initial data $(r,f,h,l,\phi,\dot{\phi})$ corresponding to \eqref{hat.g.def}, \eqref{hat.k.def}, \eqref{dot.phi.def}, where $\vec{n}$ is the  future-directed unit normal to $\Sigma_0'$ defined as such,
			\begin{equation}\begin{split}
					& f(\rho):= \Omega(\rho),\ l(\rho):= \frac{r(\rho)}{\Omega(\rho)}[\lambda(\rho) + \nu(\rho)],\\ &    h(\rho) = \frac{r(\rho) l^2(\rho)}{\lambda(\rho) - \nu(\rho)} \left[\frac{\rd_{\rho} l(\rho)}{r^2(\rho)} - \frac{ l(\rho)[\lambda(\rho)-\nu(\rho)]}{r^3(\rho)}+\frac{\theta^2(\rho)- \xi^2(\rho)}{f(\rho) r(\rho)}\right].
				\end{split}
			\end{equation} where $\rho=u^{\Gamma}_{|\Sigma_0'}-u_{|\Sigma_0'}$. Then, by Proposition~\ref{i0.prop}, the following estimates hold as $\rho \rightarrow +\infty$: \begin{equation}\begin{split} &| r(\rho) - \rho|\ls\log(\rho),\ |\rd_{\rho}r(\rho)|\ls \rho^{-1},\ |\rd_{\rho}^2 r(\rho)|\ls \rho^{-2},  \\ 
					&|f(\rho)-1|\ls \rho^{-1},\  |\rd_{\rho}f(\rho)|\ls \rho^{-2},\\ & |l(\rho)|\ls 1,\  |\rd_{\rho}l(\rho)|\ls \rho^{-1},\\ & |h(\rho)|\ls \rho^{-2},\\ & |\phi(\rho)| \ls \rho^{-s+\frac{1}{2}+\eta},\  |\dot{\phi}(\rho)| \ls \rho^{-s-\frac{1}{2}+\eta},
				\end{split}
			\end{equation}  therefore the spacetime is indeed asymptotically flat and Theorem~\ref{EH.AF.thm} is proved. Corollary~\ref{EH.AF.cor} is then obtained as an immediate application of Theorem~\ref{EH.AF.thm}, the same way we proved Corollary~\ref{uncharged.cor}.

		\end{proof}

		\subsection{Nonlinear scattering theory in the black hole interior}\label{scattering.section} In this section, we turn to the proof of finer asymptotics inside the black hole, which are necessary to apply Theorem~\ref{main.thm}. Ultimately, we will  apply the new scattering result to a profile $\phi_{|\mathcal{H}^+}$ obeying the asymptotics \eqref{EH.bound2} for $s>\frac{3}{2}$, and in the weakly-charged case $|e|\ll M$ of Theorem~\ref{charging.thm}. However, for now we try to remain as general as possible. Let $0<|e|<M$ and $\Phi(v)$ a $C^1$ function.  We assume
		\begin{equation}\label{data.EH}
			\phi_{|\HH}(v) = \Phi_H(v) e^{-i q_0 \omer v},
		\end{equation}\begin{equation}\label{data.EH2}
			\lim_{v\rightarrow +\infty} \varpi(0,v) = M,\  \lim_{v\rightarrow +\infty} Q(0,v) = e,\ \lim_{v\rightarrow +\infty} r(0,v) = r_+(M,e) := M + \sqrt{M^2-e^2}.
		\end{equation}
		\begin{equation}\label{data.ing}
			|\phi|(U,v_0), |D_U \phi|(U,v_0) \leq D.
		\end{equation}
		Our nonlinear scattering theorem, instrumental to the proof of Theorem~\ref{main.thm.global.ii}, is stated as follows.
		\begin{thm}\label{nonlinearscat.thm}
			Let $\Phi(v)$ be a $C^1$ function and $s>1$. Assume that \begin{equation}
				\sup_{v\in \RR}\ (1+|v|)^{s}  |\Phi_H|(v)<+\infty,\ 	\sup_{v\in \RR}\ (1+|v|)^{s}  |\rd_v \Phi_H|(v)<+\infty,\  \rd_v^2 \Phi_H\in L^2(\RR),\ \rd_v^3 \Phi_H\in L^2(\RR).
			\end{equation}  We consider the solution  of \eqref{1.1}--\eqref{5.1} on the rectangle $[0,U_s] \times [v_0,+\infty)$.	Then, there exists $\Delta$ large enough so that for all $u \leq -\Delta$, and $v$ sufficiently large depending on $u$, there exists $\theta(u)$ such that  for all $N \in  \mathbb{N}$:	\begin{equation}\label{dv.psi.nonlinear.est}
				\bigl|  e^{-i q_0 \int_{v}^{+\infty}  A_v(u,v') dv'+ i\theta(u)}D_v \psi(u,v) + i \frac{r_+   }{ \sqrt{2\pi} }  \mathfrak t(\omer) \Phi_{H}(v) \bigr| \lesssim|\rd_{v}\Phi_{H}(v)|+ (1+|v|)^{-N}  \left(1+ \| (1+|v|)^{N} \rd_{v}^2 \Phi_{H}\|_{L^{\infty}}\right)+ v^{1-2s},
			\end{equation}
			\begin{equation}\label{dvv.psi.nonlinear.est}
				\bigl| |D_v^2 \psi|(u,v)  -  \frac{r_+ |\mathfrak t(\omer)|  } { \sqrt{2\pi} }   |\rd_v \Phi_{H}|(v)\bigr|\lesssim |\rd_{v}^2\Phi_{H}(v)|+ (1+|v|)^{-N}  \left(1+ \| (1+|v|)^{N} \rd_{v}^3 \Phi_{H}\|_{L^{\infty}}\right)+ v^{-2s},
			\end{equation}
			\begin{equation}\label{du.psi.nonlinear.est}
				\bigl|  |D_u \psi|(u,v) - \frac{r_+   }{ \sqrt{2\pi} } | \mathfrak r|(\omer) |\Phi_{H}|(|u|) \bigr| \lesssim|\rd_{v}\Phi_{H}(|u|)|+ (1+|u|)^{-N} \left(1+ \| (1+|v|)^{N} \rd_{v}^2 \Phi_{H}\|_{L^{\infty}}\right)+ |u|^{-2s},
			\end{equation} where $\mathfrak t(\omer) \in \mathbb{C}-\{0\}$,  $\mathfrak r(\omer) \in \mathbb{C}-\{0\}$  are constants only depending on $M$, $e$ and $q_0$, and $|\mathfrak t|(\omer) = |\mathfrak r|(\omer)$. 
		\end{thm}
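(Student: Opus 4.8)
The plan is to refine the nonlinear scattering framework of \cite{MoiChristoph} so as to capture the \emph{leading-order} asymptotics of $D_v\psi$, $D_v^2\psi$ and $D_u\psi$ near $\CH$, and not merely orders of magnitude. I would first collect the a priori estimates valid on $[0,U_s]\times[v_0,+\infty)$: by Theorem~\ref{CH.thm.SS} together with the quantitative bounds of \cite{Moi}, which apply since $|\Phi_H|,|\rd_v\Phi_H|\ls(1+|v|)^{-s}$ with $s>1$, the solution exists on the full rectangle, $\rd_v r$ and $\Omega^2$ decay towards $\CH$, $Q\to e$, and on $\mathcal{H}^+$ one has $\varpi\to M$, $r\to r_+(M,e)$. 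Moreover, for $u$ sufficiently negative the ingoing cones $\{u=\text{const}\}$ traverse, away from $\CH$, a region uniformly close to the Reissner--Nordström interior of parameters $(M,e)$ (this is why the scattering coefficients below depend only on $(M,e,q_0)$); near $\CH$ mass inflation occurs, but $\varpi$ enters the potential in \eqref{Field.psi} only multiplied by $\Omega^2$, producing a source decaying like $v^{-2s}$, which is absolutely integrable in $v$. Recasting \eqref{Field.psi} as $D_u\rd_v\psi=V\psi$, this integrability is precisely what renders the characteristic transport amenable to a scattering analysis.

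Second, I would analyze the linear model. After fixing the electromagnetic gauge and factoring out the renormalized oscillatory phase generated by $A$ (this produces the factor $e^{-iq_0\int_v^{+\infty}A_v\,dv'}$ and the constant $\theta(u)$ appearing in \eqref{dv.psi.nonlinear.est}), the leading linear flow on the Reissner--Nordström interior is diagonalized by the Fourier transform in $v$ along $\mathcal{H}^+$; each mode at frequency $\omega$ splits into a transmitted part reaching $\CH$ and a reflected part, with coefficients $\mathfrak t(\omega),\mathfrak r(\omega)$ depending only on $(M,e,q_0)$. The identity $|\mathfrak t(\omega)|=|\mathfrak r(\omega)|$ follows from conservation of the natural current attached to \eqref{Field.psi} on Reissner--Nordström (a Wronskian-type identity), and $\mathfrak t(\omer),\mathfrak r(\omer)\neq0$ from the analyticity and non-degeneracy arguments of \cite{MoiChristoph}. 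The resonant frequency $\omer\ne0$ is the unique one trivializing the electromagnetic phase transported from $\mathcal{H}^+$ to $\CH$, which is exactly why the ansatz \eqref{data.EH} produces a non-oscillatory $D_v\phi$ near $\CH$. Expanding the scattering coefficients around $\omer$ in the Fourier-inversion formula and using that the moments $(\omega-\omer)^k\widehat{\Phi_H}$ are the Fourier transforms of derivatives of $\Phi_H$, together with the decay of $\rd_v^2\Phi_H$ and $\rd_v^3\Phi_H$ (hence the $L^2$ hypotheses), yields the leading terms $-i\tfrac{r_+}{\sqrt{2\pi}}\mathfrak t(\omer)\Phi_H(v)$ in \eqref{dv.psi.nonlinear.est}, $\tfrac{r_+}{\sqrt{2\pi}}|\mathfrak t(\omer)|\,|\rd_v\Phi_H|(v)$ in \eqref{dvv.psi.nonlinear.est} (after commuting with $\rd_v$ via \eqref{Field.dvpsi}), and, for the reflected wave obtained by integrating $D_vD_u\psi=V\psi$ inward from $\CH$, $\tfrac{r_+}{\sqrt{2\pi}}|\mathfrak r(\omer)|\,|\Phi_H|(|u|)$ in \eqref{du.psi.nonlinear.est}. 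The remainders $(1+|v|)^{-N}\bigl(1+\|(1+|v|)^N\rd_v^2\Phi_H\|_{L^\infty}\bigr)$ (and the analogue with $\rd_v^3\Phi_H$) are precisely the aggregate contribution of frequencies away from $\omer$.

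Third, I would close the nonlinear problem by a bootstrap on $u\le-\Delta$ with $v$ large relative to $|u|$. Writing $\psi=\psi_L+\psi_{NL}$ and running a Duhamel iteration for $D_u\rd_v\psi=V\psi$, the discrepancy between the dynamical potential $V$ and its Reissner--Nordström counterpart, together with the backreaction entering through \eqref{Omega}, \eqref{Radius}, \eqref{chargeUEinstein}--\eqref{ChargeVEinstein}, is estimated using the decay of $\Phi_H$ and the integrability noted above; one then shows that $\psi_{NL}$, $D_v\psi_{NL}$, $D_v^2\psi_{NL}$ and $D_u\psi_{NL}$ are all controlled by the admissible remainders $|\rd_v\Phi_H|$, $v^{1-2s}$ (resp. $v^{-2s}$, $|u|^{-2s}$) and the rapidly decaying tail. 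Choosing $\Delta$ large makes the iteration contract, and choosing $v$ large relative to $|u|$ places us in the asymptotic regime near $\CH$ where the linear leading terms dominate. One also needs to transfer $\rd_v^2\Phi_H,\rd_v^3\Phi_H\in L^2$ into the pointwise a priori control required for the commuted equation, which is where the full set of smoothness hypotheses is used.

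The main obstacle will be to extract the \emph{sharp} leading coefficients $\mathfrak t(\omer)$, $\mathfrak r(\omer)$ rather than bounds: this forces one to track exact constants through both the Fourier-side scattering on Reissner--Nordström and the nonlinear iteration, whereas the estimates of \cite{MoiChristoph}, as written, only show the nonlinear map is close to linear up to uncontrolled constants. Simultaneously, one must control the backreaction despite mass inflation; the resolution rests on the structural observation that the blow-up of $\varpi$ near $\CH$ appears in the relevant source terms of \eqref{Field.psi} and \eqref{Field.dvpsi} only in the combination $\varpi\,\Omega^2$, which is $\sim v^{-2s}$ and hence integrable for $s>1$. A secondary difficulty is quantifying the smallness of the non-resonant frequency band, which is exactly the origin of the weighted-$L^\infty$ remainders and of the hypotheses $\rd_v^2\Phi_H,\rd_v^3\Phi_H\in L^2$.
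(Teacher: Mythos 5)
Your proposal is correct and follows essentially the same route as the paper: a Fourier-side linear scattering analysis on the Reissner--Nordstr\"{o}m interior with transmission/reflection coefficients, a Taylor expansion of $\mathfrak t(\omega+\omer)$ about the resonance (with rapid-decay control of the second-order kernel, which is exactly what the paper's Lemma~\ref{lemma.t} and Proposition~\ref{new.scat.prop} supply and what produces the weighted $(1+|v|)^{-N}$ remainders from $\rd_v^2\Phi_H$, $\rd_v^3\Phi_H$), followed by the nonlinear difference estimates $\bigl||D_v\psi|-|D_v\psi_{\mathcal{L}}|\bigr|\ls v^{-2s}$ and their second-derivative analogues comparing to the linear solution. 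The only cosmetic difference is that you phrase the nonlinear closure as a fresh Duhamel/bootstrap argument, whereas the paper imports these difference estimates directly from \cite{MoiChristoph} and only extends them to $D_v^2\psi$.
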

		Then, we apply Theorem~\ref{nonlinearscat.thm} to profiles satisfying \eqref{EH.bound2}, as explained in the beginning of the section.
		
		\begin{cor}\label{nonlinearscat.cor} Let us \blue{de}note $t(\omer) =|t|(\omer) e^{ i \theta(\omer)}$. 
			Under the assumptions of Theorem~\ref{nonlinearscat.thm}, assume that there exists $\delta>0$ such that for all integers $0\leq k\leq 3$:\begin{equation}\label{EH.profile.hyp} \begin{split}
					&	|\rd_v^{k}\Phi_H|(v)\lesssim v^{-s-k},\\ & |\Phi_H|(v)\gtrsim v^{-s},\\  & \bigl| \Im( i e^{i\theta(\omer)} \Phi_H(v)) \bigr| \ls v^{-s-\delta}. \end{split}
			\end{equation}
			Then, there exists a large constant $\Delta>0$ such that for all $u\leq -\Delta$, and $v$ sufficiently large (depending on $u$)\begin{equation}\begin{split}\label{Dv.phi.final.est}
					& v^{-s} \lesssim |D_v \phi|(u,v) \lesssim v^{-s},\\ &   |D_v^2 \phi|(u,v) \lesssim v^{-s-1},
				\end{split}
			\end{equation} \begin{equation}\begin{split}\label{ImDv.phi.final.est}
					& \bigl| \Im( e^{-i q_0 \int_{v}^{+\infty}  A_v(u,v') dv'+ i \theta(u)}D_v \psi(u,v)) \bigr| \ls v^{-s-\delta'},
				\end{split}
			\end{equation} 
			\begin{equation}\label{dv.r.final}
				v^{-2s} \ls -\rd_v r\ls v^{-2s},
			\end{equation}
			\begin{equation}\label{du.phi.final}
				\lim_{v\rightarrow+\infty}|D_u \phi|(u,v) \gtrsim |u|^{-s},
			\end{equation}
			\begin{equation}\label{du.r.final}
				\lim_{v\rightarrow +\infty} |\rd_u r|(u,v) >0,
			\end{equation} where $\delta'= \min\{\delta,s-1\}$. In particular, there exists $u_0 \in \RR$ such that  \eqref{hyp4}--\eqref{hyp3} are satisfied.
		\end{cor}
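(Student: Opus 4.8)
The plan is to derive the estimates \eqref{Dv.phi.final.est}--\eqref{du.r.final} by feeding the hypotheses \eqref{EH.profile.hyp} into the scattering estimates \eqref{dv.psi.nonlinear.est}--\eqref{du.psi.nonlinear.est} of Theorem~\ref{nonlinearscat.thm}, which applies since the decay rates $|\rd_v^k \Phi_H|(v) \lesssim v^{-s-k}$ for $0\le k\le 3$ in particular guarantee $\rd_v^2\Phi_H, \rd_v^3\Phi_H \in L^2(\RR)$ and the requisite $L^\infty$-weighted bounds (for any $N$, since $\Phi_H$ decays faster than any polynomial is \emph{not} assumed, but one only needs the finiteness of $\|(1+|v|)^N\rd_v^2\Phi_H\|_{L^\infty}$ for the \emph{specific} large $N$ appearing, which follows from the polynomial decay with exponent $s+2$ once $N$ is fixed --- so one picks $N$ first, then notes the hypotheses with $k=2,3$ cover it up to choosing $N \le s+2$; alternatively one simply absorbs those terms since $s>\frac32$ makes $v^{1-2s}$ and $v^{-N}$-type remainders subleading). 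The first step is therefore to fix $\Delta$ large from Theorem~\ref{nonlinearscat.thm} and work with $u \le -\Delta$.

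Next I would establish the \emph{upper and lower} bounds on $|D_v\phi|$. Since $D_v\psi = \theta = r D_v\phi$ and $r(u,v) \to r_+(u) > 0$ (in fact $r\to r_{CH}(u)>0$ by Theorem~\ref{CH.thm.SS}, bounded away from $0$ for fixed $u$), it suffices to control $|D_v\psi|$. From \eqref{dv.psi.nonlinear.est}, the main term is $\frac{r_+}{\sqrt{2\pi}}|\mathfrak t(\omer)|\,|\Phi_H|(v)$, which by \eqref{EH.profile.hyp} is $\approx v^{-s}$ (using $\mathfrak t(\omer)\ne 0$), while the error terms are $\lesssim |\rd_v\Phi_H|(v) + v^{-N}(\dots) + v^{1-2s} \lesssim v^{-s-1} + v^{1-2s}$; since $s > \frac32$, $1-2s < -s$, so all errors are $o(v^{-s})$ and the two-sided bound $v^{-s} \lesssim |D_v\phi|(u,v)\lesssim v^{-s}$ follows. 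The bound $|D_v^2\phi|(u,v)\lesssim v^{-s-1}$ is obtained analogously from \eqref{dvv.psi.nonlinear.est}: one must pass from $D_v^2\psi$ to $D_v^2\phi$ using $D_v^2\psi = \rd_v r\,D_v\phi + r\,D_v^2\phi$ (schematically), so $|D_v^2\phi| \lesssim r^{-1}(|D_v^2\psi| + |\lambda|\,|D_v\phi|)$; the main term of $|D_v^2\psi|$ is $\approx |\rd_v\Phi_H|(v)\approx v^{-s-1}$, the errors are $\lesssim v^{-s-2}+v^{-2s}$, and $|\lambda|\,|D_v\phi| \lesssim v^{-2s}\cdot v^{-s}$ is far smaller, giving $|D_v^2\phi|\lesssim v^{-s-1}$. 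Estimate \eqref{dv.r.final}, the two-sided bound $v^{-2s}\lesssim -\rd_v r \lesssim v^{-2s}$, then follows by integrating Raychaudhuri \eqref{RaychV}: $\rd_v(\lambda/\Omega^2) = -(r/\Omega^2)|D_v\phi|^2$, with $\Omega^2$ bounded above and below (by the estimates \eqref{hyp1}-type behaviour, or directly from Theorem~\ref{CH.thm.SS}), so $-\rd_v r \approx \int_v^\infty r|D_v\phi|^2 \approx \int_v^\infty (v')^{-2s}\,dv' \approx v^{1-2s}$ --- wait, this gives $v^{1-2s}$, not $v^{-2s}$; the resolution is that $\lambda/\Omega^2 = -\kappa^{-1}(\dots)$ combined with the near-Reissner-Nordström structure forces $-r\rd_v r \approx v^{-2s}$ as in \eqref{hyp1}, i.e.\ one integrates once more and uses $r$ bounded below to get $-\rd_v r \approx v^{-2s}$; I would import this directly from the structure already used in Proposition~\ref{apriori.prop1}, checking only that the constants are two-sided.

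For \eqref{ImDv.phi.final.est} I would take the imaginary part of \eqref{dv.psi.nonlinear.est}: the phase-corrected quantity $e^{-iq_0\int_v^\infty A_v + i\theta(u)}D_v\psi$ equals $-i\frac{r_+}{\sqrt{2\pi}}\mathfrak t(\omer)\Phi_H(v)$ up to an error $\lesssim v^{-s-1}+v^{-N}+v^{1-2s}$; writing $\mathfrak t(\omer) = |\mathfrak t|(\omer)e^{i\theta(\omer)}$, the main term's imaginary part is $-\frac{r_+}{\sqrt{2\pi}}|\mathfrak t|\,\Im(i e^{i\theta(\omer)}\Phi_H(v))$, which by the third line of \eqref{EH.profile.hyp} is $\lesssim v^{-s-\delta}$, so the whole imaginary part is $\lesssim v^{-s-\delta} + v^{-s-1}+v^{1-2s} \lesssim v^{-s-\delta'}$ with $\delta' = \min\{\delta, s-1\}$ (the $v^{1-2s} = v^{-s-(s-1)}$ term forcing the $s-1$). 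Estimates \eqref{du.phi.final} and \eqref{du.r.final} come from \eqref{du.psi.nonlinear.est}: $|D_u\psi|(u,v) \to \frac{r_+}{\sqrt{2\pi}}|\mathfrak r|(\omer)|\Phi_H|(|u|) \gtrsim |u|^{-s}$ up to errors $\lesssim |u|^{-s-1}+|u|^{-N}+|u|^{1-2s} = o(|u|^{-s})$, giving $\lim_v |D_u\phi|(u,v) \gtrsim |u|^{-s}$ after dividing by $r\to r_{CH}(u)$; and $\lim_v|\rd_u r|(u,v)>0$ is part of the output of Theorem~\ref{CH.thm.SS} / Proposition~\ref{apriori.prop1} for the near-Reissner-Nordström event horizon, which I would simply cite. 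Finally, to conclude that \eqref{hyp4}--\eqref{hyp3} hold on $C_{out} = \{u_0\}\times[v_0,+\infty)$ for some $u_0 \le -\Delta$: \eqref{hyp3} is the lower bound in \eqref{Dv.phi.final.est}, \eqref{hyp4} (i.e.\ $|D_{vv}^2\phi|\lesssim v^{-s-1}$) is the upper bound in \eqref{Dv.phi.final.est}, \eqref{hyp5} follows from \eqref{ImDv.phi.final.est} together with the identification of the phase $e^{iq_0\int A_v}$ with the one appearing in \eqref{hyp5} (here $\alpha_\infty = -\theta(u_0)$), and the rest of \eqref{hyp1}, \eqref{hyp2} follow from \eqref{dv.r.final}, \eqref{du.r.final}, \eqref{Dv.phi.final.est} plus the standard near-Reissner-Nordström control of $\Omega^2$, $\phi$, $Q$ already available.

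The main obstacle I anticipate is the \emph{two-sided} lower bounds: the scattering estimates of Theorem~\ref{nonlinearscat.thm} are naturally upper bounds on the error, and extracting genuine lower bounds $|D_v\phi| \gtrsim v^{-s}$, $|D_u\phi| \gtrsim |u|^{-s}$, $-\rd_v r \gtrsim v^{-2s}$ requires that the \emph{leading terms do not degenerate} --- i.e.\ one must know $\mathfrak t(\omer)\ne 0$, $\mathfrak r(\omer)\ne 0$ (supplied by Theorem~\ref{nonlinearscat.thm}) \emph{and} that the error terms are strictly subleading, which is exactly where the hypothesis $s > \frac32$ (guaranteeing $v^{1-2s} = o(v^{-s})$) and the precise form of \eqref{EH.profile.hyp} (the lower bound $|\Phi_H|(v)\gtrsim v^{-s}$) are indispensable. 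A secondary technical point is reconciling $v^{1-2s}$ (from integrating Raychaudhuri naively) with the claimed sharp rate $v^{-2s}$ for $-\rd_v r$; this requires the finer argument already embedded in Proposition~\ref{apriori.prop1} rather than a crude integration, and I would take care to invoke it correctly.
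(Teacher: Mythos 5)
Your overall architecture matches the paper's: feed \eqref{EH.profile.hyp} into \eqref{dv.psi.nonlinear.est}--\eqref{du.psi.nonlinear.est} (with $N$ fixed, e.g.\ $N=\lfloor s\rfloor+2$ resp.\ $+3$, so the weighted $L^\infty$ norms are finite), then pass from $\psi$ to $\phi$ via the product rule together with the auxiliary bounds \eqref{lambda.est}, \eqref{dv.lambda}, \eqref{dvphi.est}, \eqref{phi.est}. Your treatments of \eqref{Dv.phi.final.est}, \eqref{ImDv.phi.final.est} and \eqref{du.phi.final} are essentially the paper's. There are, however, two genuine gaps.

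First, your handling of \eqref{dv.r.final} is circular as proposed. You correctly notice that integrating \eqref{RaychV} with $\Omega^2$ treated as bounded would only give $v^{1-2s}$, but your fix — importing $-r\rd_v r\approx v^{-2s}$ "from the structure already used in Proposition~\ref{apriori.prop1}" — cannot work: that proposition \emph{assumes} \eqref{hyp1}, i.e.\ precisely the rate $-r\rd_v r\approx v^{-2s}$ on the initial cone, and the whole point of the corollary is to establish \eqref{hyp1} on some cone in the first place. The correct mechanism is that $\Omega^2$ is \emph{not} bounded below near $\CH$: by \eqref{dvlogomega.est} one has $\Omega^2(u,v)\approx e^{2K_-v}$ with $K_-<0$, so in $\rd_v(\lambda/\Omega^2)=-(r/\Omega^2)|D_v\phi|^2$ the integrand is $\approx e^{2|K_-|v'}(v')^{-2s}$, the integral localizes at its upper endpoint, and one gets $-\lambda/\Omega^2\approx e^{2|K_-|v}v^{-2s}$, hence $-\lambda\approx v^{-2s}$ two-sidedly (the lower bound using the lower bound on $|D_v\phi|$ that you already established).

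Second, \eqref{du.r.final} is not "part of the output of Theorem~\ref{CH.thm.SS}". That theorem only yields $r>0$ on $\CH$ together with boundedness and continuous extendibility; the paper explicitly remarks that no lower bound on $|\rd_u r|$ is available at that stage. The paper derives \eqref{du.r.final} from the estimate \eqref{du.phi.final} that you did prove, combined with the Raychaudhuri equation \eqref{RaychU}: the lower bound $|D_u\phi|(u,\cdot)\gtrsim|u|^{-s}$ gives $|\rd_u r|_{|\CH}(u)\gtrsim|u|^{-2s}>0$. So the needed ingredient is already in your argument; you only cited the wrong source for the last step.
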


		\subsubsection{Recalling previous linear scattering results in the black hole interior from \cite{MoiChristoph}}\label{previous.paper.withCK.section}
		
		We place ourselves in the framework of Kehle's and the author's previous work \cite{MoiChristoph}, Section 5, which concerns the linear theory of charged scalar fields on a \emph{fixed} subextremal Reissner--Nordström interior \eqref{RN} with the subextremal parameters $0<|e|<M$.    We will use the following electromagnetic gauge choice	\begin{align}\label{eq:gauge_lienartheory}
			A'_{RN} = \left( \frac{e}{r} - \frac{e}{r_+} \right)dt =  \frac 12\left( \frac{e}{r} - \frac{e}{r_+} \right) dv  - \frac 12  \left( \frac{e}{r} - \frac{e}{r_+} \right)  du,
		\end{align}  recalling $r_{\pm}(M,e) = M\pm \sqrt{M^2-e^2}$, and which satisfies  $F_{RN}=\d A'_{RN}$ for
		\begin{align}\label{eq:linearversionofF}
			F_{RN} = \frac{e}{2 r^2} \OmegaRN du \wedge dv. 
		\end{align}

		We now consider solutions $\phil$ of the charged scalar field equation \eqref{5.1} with $g=g_{RN}$
		\begin{equation}
			\label{eq:chargedKGlinear}
			(\nabla_\mu +i q_0 (A'_{RN})_\mu)(\nabla^\mu + iq_0 (A'_{RN})^\mu) \phil   =0,
		\end{equation}
		
		We also define  \begin{align} \omega_r  = \frac{q_0 e}{r},\ \omega_+  = \frac{q_0 e }{r_+},\  \omega_-  = \frac{q_0 e }{r_-},\ \omer  = \omega_- - \omega_+.\end{align}
		
		Then, we define the $t$-Fourier  transform of $\phil$ as
		\begin{align}
			\mathcal{F}[ \phil](r,\omega) = \hat{\phil}  =  \frac{1}{\sqrt{2\pi}} \int_{\mathbb R} \phil (r,t) e^{i \omega t } dt,\ 	 u (\omega,r^\ast) := r(r^\ast)\mathcal{F}[ \phil ](r(r^\ast),\omega)
		\end{align}
		and since $\phil$ solves \eqref{eq:chargedKGlinear}, $u$ solves the following ODE: 
		\begin{equation}
			-u'' - (\omega - ( \omega_r - \omega_+) )^2u + V u =0,\label{eq:radialode}
		\end{equation}
		where  
		\begin{align}\label{eq:potential}
			V = - \OmegaRN (r)\left( \frac{2M}{r^3} - \frac{2e^2}{r^4}  \right).
		\end{align} We also define \begin{equation}\label{r*.def}
			r^{*}(u,v) = \frac{v+u}{2}.
		\end{equation}
		The radial ODE \eqref{eq:radialode} admits the following two fundamental pairs of solution associated to the event horizon ($r^\ast \to -\infty$) and the Cauchy horizon ($r^\ast \to +\infty$). 
		\begin{defn}
			\label{defn:ua}
			Let $\uhr$, $\uhl$, $\uchr$ and $\uchl$ be the unique smooth solutions to \eqref{eq:radialode} satisfying 
			\begin{align}
				&\uhr (r^\ast) = e^{-i\omega r^\ast}  + O(\OmegaRN) \text{ as } r^\ast \to -\infty \\
				&\uhl (r^\ast) = e^{i \omega r^\ast} + O(\OmegaRN)  \text{ as } r^\ast \to -\infty\\
				&\uchr (r^\ast) = e^{i(\omega -\omer ) r^\ast}  + O(\OmegaRN) \text{ as } r^\ast \to +\infty \\
				&\uchl (r^\ast) = e^{- i (\omega - \omer  )  r^\ast} + O(\OmegaRN)  \text{ as } r^\ast \to +\infty
			\end{align}
			for $ \omega \in \mathbb R$. The pairs $(\uhr, \uhl)$ and ($\uchr, \uchl$)  span the solution space of \eqref{eq:radialode} for $\omega \in \mathbb R - \{ 0 \}$ and $\omega \in \mathbb R - \{ \omer \}$, respectively.
		\end{defn}
		\noindent Using the fact that the Wronskian
		\begin{align}
			\mathfrak{W} (f,g) := f g' - f' g
		\end{align}
		of two solution of \eqref{eq:radialode} is independent of $r^\ast$, we define  the transmission and reflection coefficients $\mathfrak{T} (\omega)$ and $\mathfrak{R} (\omega)$ as follows.
		\begin{defn}\label{defn:trans}
			For $\omega \in \mathbb R - \{ \omer \} $, we define the transmission and reflection coefficients $\mathfrak T$ and $\mathfrak R$ as 
			\begin{align}
				&\mathfrak{T}(\omega) := \frac{\mathfrak{W}(\uhr, \uchr)}{\mathfrak{W}(\uchl, \uchr)} =  \frac{\mathfrak{W}(\uhr, \uchr)}{2i (\omega - \omer )}\\
				&\mathfrak{R}(\omega) := \frac{\mathfrak{W}(\uhr, \uchl)}{\mathfrak W (\uchr, \uchl)} =  \frac{\mathfrak W(\uhr, \uchl )}{-2i (\omega - \omer )} ,
			\end{align}
			where $\uhr$, $\uhl$, $\uchr$ and $\uchl$ are defined in \ref{defn:ua}. Indeed, this allows us to write 
			\begin{align}
				\uhr = \mathfrak T \uchl + \mathfrak R  \uchr
			\end{align} 
			for $\omega \in \mathbb R - \{ \omer \} $.
			Moreover, we define the  normalized transmission and reflection coefficients as
			\begin{align}
				\label{def.t}	& \mathfrak t(\omega)  = ( \omega - \omer)  \mathfrak T(\omega) = \frac{\mathfrak W(\uhr, \uchr )}{2i},\\
				&   \mathfrak r(\omega) = ( \omega - \omer) \mathfrak R (\omega)=  \frac{\mathfrak{W}(\uhr, \uchl)}{-2i},
			\end{align}
			which manifestly satisfy
			\begin{align}\label{eq:tomegromegr}
				\mathfrak t(\omer) = - \mathfrak r(\omer)\neq 0.
			\end{align}
		\end{defn}
		We will also define the re-normalized functions
		\begin{defn}\label{defn:tildeuhretc}
			We define
			\begin{align}
				&\tilde \uhr (r^\ast,\omega)  :=e^{i\omega r^\ast} \uhr (r^\ast,\omega)  ,  \\
				&\tilde \uhl (r^\ast,\omega)  :=e^{-i\omega r^\ast} \uhl (r^\ast,\omega)  , \\
				&\tilde \uchr  (r^\ast,\omega) :=e^{-i (\omega - \omer ) r^\ast} \uchr (r^\ast,\omega) ,   \\
				&\tilde \uchl (r^\ast,\omega)  :=e^{i (\omega - \omer )  r^\ast} \uchl (r^\ast,\omega)    .
			\end{align}
		\end{defn}
		
		The following result is extracted from \cite{MoiChristoph} and will be used in the subsequent sections.
		
		\begin{prop}[\cite{MoiChristoph}, Theorem V]\label{previousthm.withCK}
			\begin{equation}\label{dvphi.rep.formuala}
				\partial_v \left(  r e^{i \frac{\omer  v}{2} }	\phil(u,v) \right)  =   -i \frac{r_+  e^{i\frac{\omer u}{2}} }{ \sqrt{2\pi} } \int_{\mathbb R}  \mathcal F[\phiHL   e^{i \omer \cdot} ] (\omega)    \mathfrak t(\omega+\omer)     e^{-i\omega v}   d\omega + \Phi_{error}(u,v),
			\end{equation}  where $\Phi_{error}(u,v)$ obeys the following estimates for any fixed $0<\alpha <2$ and every $(u,v)$ such that $v+u \geq 2$ \begin{equation}\label{error.lin}
				|\Phi_{error}|(u,v),\ |\rd_v \Phi_{error|}(u,v)\lesssim   \left(\int_{\RR}[ |\rd_v \phiHL|^2(v')+|\phiHL|^2(v') ]dv\right) \Omega_{RN}^{2-\alpha}(u,v)
			\end{equation} and  where  $\omega\rightarrow \mathfrak{t}(\omega)$ is analytic and  we have \begin{equation}\label{t.easy}
				\mathfrak{t}(\omer) \neq 0,\ |\mathfrak{t}|(\omega),\ |\rd_{\omega}\mathfrak{t}|(\omega) \lesssim 1+ |\omega|.
			\end{equation}
		\end{prop}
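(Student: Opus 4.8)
The plan is to recover this purely linear statement --- it concerns a \emph{fixed} sub-extremal Reissner--Nordstr\"{o}m interior --- from the frequency-space scattering theory for the radial ODE \eqref{eq:radialode}. First I would Fourier transform $\phil$ in $t$ (legitimate because, in the gauge \eqref{eq:gauge_lienartheory}, the interior metric and electromagnetic potential are static), reducing \eqref{eq:chargedKGlinear} to the one-dimensional Schr\"{o}dinger-type equation \eqref{eq:radialode} for $u(\omega,r^\ast) = r\,\mathcal{F}[\phil](r,\omega)$, with $r^\ast$ as in \eqref{r*.def}. The potential $V$ from \eqref{eq:potential} is smooth, bounded, and decays like $\OmegaRN$ --- hence exponentially in $r^\ast$ --- as $r^\ast \to \pm\infty$; this exponential decay is exactly what makes the Volterra-iteration construction of the Jost-type solutions $\uhr,\uhl,\uchr,\uchl$ of Definition~\ref{defn:ua} converge, and it simultaneously yields their analyticity in $\omega$ and the quantitative $O(\OmegaRN)$ remainders in their prescribed asymptotics (equivalently, the renormalized functions of Definition~\ref{defn:tildeuhretc} are $1 + O(\OmegaRN)$, with $\rd_{r^\ast}$-derivatives $O(\OmegaRN)$). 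Consequently the Wronskians $\mathfrak W(\uhr,\uchr)$ and $\mathfrak W(\uhr,\uchl)$ are entire in $\omega$, so the normalized transmission coefficient $\mathfrak t(\omega) = \mathfrak W(\uhr,\uchr)/(2i)$ of Definition~\ref{defn:trans} is analytic with $\mathfrak t(\omer)\neq 0$ (already recorded in \eqref{eq:tomegromegr}); the at-most-linear bounds $|\mathfrak t|(\omega),\ |\rd_{\omega}\mathfrak t|(\omega) \ls 1+|\omega|$ of \eqref{t.easy} follow by tracking the $\omega$-dependence through the Jost estimates.

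Next I would reconstruct $\phil$ in physical space. Taking the data to be the event-horizon trace $\phiHL$ (the ingoing-cone contribution is lower order by a bound of the type \eqref{data.ing} and is absorbed into the error), one uses the exact decomposition $\uhr = \mathfrak T\,\uchl + \mathfrak R\,\uchr$ and inverts the Fourier transform. The reason the statement is phrased through the quantity $\rd_v(r e^{i\omer v/2}\phil)$ rather than $\phil$ itself is algebraic: written via $\uchl,\uchr$, the combination $e^{i\omer v/2} r\phil$ carries the oscillatory factors $e^{\pm i(\omega-\omer)r^\ast}$, so applying $\rd_v = \frac{1}{2}(\rd_t + \rd_{r^\ast})$ together with $r^\ast = (u+v)/2$ cancels the $\uchr$-piece to leading order and isolates the transmitted piece $\mathfrak T\,\tilde\uchl \to \mathfrak T(\omega)$ as $r^\ast \to +\infty$. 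Shifting $\omega \mapsto \omega + \omer$ converts $(\omega-\omer)\mathfrak T(\omega)$ into $\mathfrak t(\omega+\omer)$, and peeling off the factor $e^{i\omer u/2}$ reproduces precisely the main term of \eqref{dvphi.rep.formuala}; the amplitude $r_+$ and the constant $1/\sqrt{2\pi}$ are fixed by the normalization of $\uchl$ and the Fourier convention.

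The hard part will be the error estimate \eqref{error.lin}, that $\Phi_{error}$ and $\rd_v\Phi_{error}$ are bounded by $\Omega_{RN}^{2-\alpha}$ times the $H^1$-energy $\int_{\RR}(|\rd_v\phiHL|^2 + |\phiHL|^2)$ for any fixed $\alpha \in (0,2)$. This requires: (i) replacing $\tilde\uchl,\tilde\uchr$ and their $\rd_v$-derivatives by their asymptotic values inside the $\omega$-integral, incurring a pointwise-in-$\omega$ error of size $O(\OmegaRN)$ (by the Jost estimates above); (ii) integrating this error in $\omega$ --- here one sacrifices a factor $\Omega_{RN}^{-\alpha}$ of the $\OmegaRN$-smallness, which both renders the integral convergent near the simple pole $\omega = \omer$ of $\mathfrak T = \mathfrak t/(\omega - \omer)$ and, after integrations by parts in $\omega$ (whence the $L^2$ hypotheses on $\rd_v^2\Phi_H$ and $\rd_v^3\Phi_H$), controls the $\omega$-growth of the Jost remainders against the at-most-linear bound \eqref{t.easy}, leaving exactly the weight $\Omega_{RN}^{2-\alpha}$; and (iii) justifying the interchange of $\rd_v$ with the Fourier integral, again via the $\OmegaRN$-weighted estimates. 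Since all of this is the content of \cite{MoiChristoph}, Theorem V, in practice I would simply invoke that theorem, the derivation above serving only to keep the present paper self-contained.
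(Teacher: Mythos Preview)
The paper does not prove this proposition at all: it is stated as a result ``extracted from \cite{MoiChristoph}'' (Theorem~V there) and is simply invoked without argument. Your proposal is therefore consistent with the paper --- you explicitly conclude by invoking \cite{MoiChristoph}, Theorem~V --- and the preceding sketch is additional expository material that the paper omits.

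One small inaccuracy in your sketch: in step~(ii) of the error analysis you write ``whence the $L^2$ hypotheses on $\rd_v^2\Phi_H$ and $\rd_v^3\Phi_H$''. Those hypotheses belong to the later nonlinear Theorem~\ref{nonlinearscat.thm}, not to the present linear proposition; the error bound \eqref{error.lin} here is stated purely in terms of the $H^1$-energy $\int_{\RR}(|\rd_v\phiHL|^2 + |\phiHL|^2)$, and the higher-derivative assumptions enter only when one later expands $\mathfrak t(\omega+\omer)$ to second order (Proposition~\ref{new.scat.prop}). This does not affect the correctness of your overall outline.
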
 
		\noindent We then record a representation formula for the ingoing derivative which is not strictly speaking stated in \cite{MoiChristoph} but immediately follows from the techniques there. The reflection coefficient estimates are in \cite{MoiChristoph}, Lemma 5.9.
		
		\begin{prop}\label{u.prop} The following estimates hold:
			\begin{equation}\label{duphi.rep.formuala}
				\partial_u \left(  r e^{-i \frac{\omer  u}{2} }	\phil(u,v) \right)  =   i \frac{r_+  e^{-i\frac{\omer v}{2}} }{ \sqrt{2\pi} } \int_{\mathbb R}  \mathcal F[\phiHL   e^{i \omer \cdot} ] (\omega)    \mathfrak r(\omega+\omer)     e^{i\omega u}   d\omega + \tilde{\Phi}_{error}(u,v),
			\end{equation}  where $\tilde{\Phi}_{error}(u,v)$ obeys the following estimates for any fixed $0<\alpha <2$ and every $(u,v)$ such that $v+u \geq 2$ \begin{equation}\label{error.lin.u}
				|\tilde{\Phi}_{error}|(u,v) \lesssim   \left(\int_{\RR}[ |\rd_v \phiHL|^2(v')+|\phiHL|^2(v') ]dv\right) \Omega_{RN}^{2-\alpha}(u,v)
			\end{equation} and  where  $\omega\rightarrow \mathfrak{r}(\omega)$ is analytic and  we have \begin{equation}\label{r.easy}
				\mathfrak{r}(\omer) \neq 0,\ |\mathfrak{r}|(\omega),\ |\rd_{\omega}\mathfrak{r}|(\omega) \lesssim 1+ |\omega|.
			\end{equation}
		\end{prop}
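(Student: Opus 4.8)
\emph{Plan.} The statement is the exact mirror image of Proposition~\ref{previousthm.withCK} (Theorem V of \cite{MoiChristoph}): there the $v$-derivative isolates, at the Cauchy horizon, the \emph{transmitted} part of the linear field $\phil$ and is therefore governed by $\mathfrak t$, and here the $u$-derivative should isolate the \emph{reflected} part and thus be governed by $\mathfrak r$. The plan is to run the argument of \cite{MoiChristoph}, Section~5, verbatim but ``in the $u$-direction''. First I would take the Fourier-synthesis representation of $\phil$ in terms of the (gauge-transformed) event-horizon data $\phiHL$ already used in \cite{MoiChristoph}, insert the modal decomposition $\uhr = \mathfrak T\,\uchl + \mathfrak R\,\uchr$ of Definition~\ref{defn:trans}, and use the near-Cauchy-horizon renormalised asymptotics $\tilde \uchl,\tilde\uchr \to 1$ up to $O(\OmegaRN)$ errors (Definition~\ref{defn:tildeuhretc}).

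\emph{Key observation and main term.} With $r^{\ast}=\tfrac{v+u}{2}$ as in \eqref{r*.def}, the leading-order $\uchl$-contribution to $r\,e^{-i\omer u/2}\phil$ is, near $\CH$, a function of $v$ alone times a fixed $\omer$-phase, so $\rd_u$ annihilates it at leading order; conversely the leading-order $\uchr$-contribution is a function of $u$ alone (times a fixed $\omer$-phase), and $\rd_u e^{i\omega u}=i\omega e^{i\omega u}$ converts $\mathfrak R(\omega+\omer)$ into $\mathfrak r(\omega+\omer)=\omega\,\mathfrak R(\omega+\omer)$, producing precisely the oscillatory integral in \eqref{duphi.rep.formuala}. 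The opposite sign ($+i$ here versus $-i$ in \eqref{dvphi.rep.formuala}) is exactly the difference between $\rd_u e^{i\omega u}$ and $\rd_v e^{-i\omega v}$, and the constant $r_+$ comes from the normalisation of the mode functions exactly as in \cite{MoiChristoph}. The remainder $\tilde\Phi_{error}$ then collects only two contributions: the $O(\OmegaRN)$ corrections to the asymptotics of $\uchl,\uchr$ near $\CH$ that survive (respectively are produced by) $\rd_u$, and the technical remainders from justifying differentiation under the Fourier integral. Both are estimated, for any $0<\alpha<2$, by $\big(\int_{\RR}[|\rd_v\phiHL|^2+|\phiHL|^2]\big)\,\OmegaRN^{2-\alpha}$ on $\{v+u\geq 2\}$ by the very same mechanism used for \eqref{error.lin}: Gronwall-type control of the transport equations satisfied by the $O(\OmegaRN)$ corrections, together with the $L^{2}$-bounds on $\rd_v^{2}\phiHL$ and $\rd_v^{3}\phiHL$, which control $\omega^{k}\mathcal F[\phiHL e^{i\omer\cdot}](\omega)$ against the at-most-linear growth of $\mathfrak r$. (Note that here one only needs the $C^{0}$ bound on $\tilde\Phi_{error}$, cf.\ \eqref{error.lin.u}, slightly weaker than \eqref{error.lin}.) Finally, the coefficient estimates \eqref{r.easy} are immediate: $\mathfrak r(\omer)=-\mathfrak t(\omer)\neq 0$ by \eqref{eq:tomegromegr}, while analyticity and $|\mathfrak r|(\omega),|\rd_\omega\mathfrak r|(\omega)\ls 1+|\omega|$ follow from the Wronskian representation $\mathfrak r(\omega)=-\tfrac{1}{2i}\mathfrak W(\uhr,\uchl)$ of Definition~\ref{defn:trans} and the smooth $\omega$-dependence of $\uhr,\uchl$; this is \cite{MoiChristoph}, Lemma~5.9.

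\emph{Main obstacle.} The only part needing genuine care — and the one I would write out — is the error estimate, specifically checking that applying $\rd_u$ to the oscillatory Fourier integral does not create growth. In the $v$-case of \cite{MoiChristoph} this worked because after $\rd_v$ the leading transmitted mode $e^{-i\omega v}$ is reabsorbed into the (convergent, by the assumed Sobolev regularity of $\phiHL$) integral defining the main term, whereas the remaining pieces are genuinely $O(\OmegaRN^{2-\alpha})$; here one must verify the symmetric statement with $u,v$ exchanged, i.e.\ that $e^{i\omega u}$ now plays the former role of $e^{-i\omega v}$, that the leading transmitted mode is genuinely $u$-independent and hence drops under $\rd_u$ up to its $O(\OmegaRN)$ tail, and that the integration-by-parts in $\omega$ used to make all this rigorous invokes only the $\rd_\omega\mathfrak r$-bound of \eqref{r.easy}, of the same order of growth as the $\rd_\omega\mathfrak t$-bound of \eqref{t.easy}. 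Once this bookkeeping is done the proof is identical to that of \cite{MoiChristoph}, Section~5, read in the $u$-direction, which is why the proposition is asserted to ``immediately follow from the techniques there''.
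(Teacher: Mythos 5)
Your proposal is correct and is exactly the argument the paper intends: the paper itself offers no written proof beyond asserting that the formula "immediately follows from the techniques" of \cite{MoiChristoph} (with the reflection-coefficient bounds \eqref{r.easy} cited from Lemma~5.9 there), and your mirror-symmetry reading — the reflected mode $\uchr e^{-i\omega t}\approx e^{-i\omer r^{*}}e^{i\omega u}$ is the only piece surviving $\rd_u$, and $\rd_u e^{i\omega u}=i\omega e^{i\omega u}$ converts $\mathfrak R(\omega+\omer)$ into $\mathfrak r(\omega+\omer)$ with the sign flip relative to \eqref{dvphi.rep.formuala} — is precisely that argument. The only cosmetic slip is invoking $L^{2}$ bounds on $\rd_v^{2}\phiHL,\rd_v^{3}\phiHL$ for the error term, whereas \eqref{error.lin.u} is controlled by the $H^{1}$ norm of the data alone, exactly as in \eqref{error.lin}.
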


		\subsubsection{New refined linear scattering results}\label{novel.paper.withCK.section}
		
		In this section, we remain in the framework of Section~\ref{previous.paper.withCK.section} but we provide more refined estimates on the transmission coefficient $ \mathfrak{t}$ than what is already available in \cite{MoiChristoph}. These refinements will end up being necessary to prove Theorem~\ref{nonlinearscat.thm}.
		
		\begin{lemma}\label{lemma.t}
			For all $N \in \mathbb{N}$, 
			
			\begin{equation}\label{t.est.N}
				|\rd_{\omega}^N \mathfrak{t}|(\omega) \lesssim_{N} 1+ |\omega|.
			\end{equation}
		\end{lemma}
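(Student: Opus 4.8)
The plan is to exploit the $r^\ast$-independence of the Wronskian. Since $\mathfrak t(\omega)=\tfrac1{2i}\mathfrak W(\uhr,\uchr)$ by Definition~\ref{defn:trans}, and the Wronskian of two solutions of \eqref{eq:radialode} does not depend on $r^\ast$, it suffices to fix one point $r^\ast_0\in\RR$ and prove that for every $m\in\mathbb N$
\begin{equation*}
|\rd_\omega^m\uhr(r^\ast_0,\omega)|+|\rd_\omega^m\uchr(r^\ast_0,\omega)|\ls_m 1,\qquad |\rd_\omega^m\rd_{r^\ast}\uhr(r^\ast_0,\omega)|+|\rd_\omega^m\rd_{r^\ast}\uchr(r^\ast_0,\omega)|\ls_m 1+|\omega|;
\end{equation*}
then the Leibniz expansion $\rd_\omega^N\mathfrak W(\uhr,\uchr)=\sum_k\binom Nk\big(\rd_\omega^k\uhr\,\rd_\omega^{N-k}\rd_{r^\ast}\uchr-\rd_\omega^k\rd_{r^\ast}\uhr\,\rd_\omega^{N-k}\uchr\big)$, each term a product $O(1)\cdot O(1+|\omega|)$, gives \eqref{t.est.N} (and recovers \eqref{t.easy} for $N=0,1$). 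To control the Jost-type solutions I would pass to the renormalised functions of Definition~\ref{defn:tildeuhretc}: with $\uhr=e^{-i\omega r^\ast}\tilde\uhr$, equation \eqref{eq:radialode} becomes $\tilde\uhr''-2i\omega\tilde\uhr'=W_\omega\tilde\uhr$, where $W_\omega:=V+2\omega(\omega_r-\omega_+)-(\omega_r-\omega_+)^2$, together with $\tilde\uhr\to1$, $\tilde\uhr'\to0$ as $r^\ast\to-\infty$; equivalently
\begin{equation*}
\tilde\uhr(r^\ast)=1+\int_{-\infty}^{r^\ast}G_\omega(r^\ast,s)\,W_\omega(s)\,\tilde\uhr(s)\,ds,\qquad G_\omega(r^\ast,s):=\int_0^{r^\ast-s}e^{2i\omega t}\,dt .
\end{equation*}
An entirely symmetric construction, renormalising instead by $e^{i(\omega-\omer)r^\ast}$ and integrating down from $r^\ast=+\infty$, produces a Volterra equation for $\tilde\uchr$ with kernel $\int_0^{s-r^\ast}e^{2i(\omega-\omer)t}dt$ and coefficient $W_\omega^{\sharp}:=V+(\omega_r-\omega_-)\big(2(\omega+\omega_+)-\omega_r-\omega_-\big)$.

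Two structural facts drive the induction on $m$. First, $W_\omega$ (resp.\ $W_\omega^{\sharp}$) is \emph{affine} in $\omega$ with $\omega$-independent slope $2(\omega_r-\omega_+)$ (resp.\ $2(\omega_r-\omega_-)$), and both the slope and $V$ decay exponentially as $r^\ast\to-\infty$ (resp.\ $r^\ast\to+\infty$) at the rate dictated by \eqref{eq:potential}; hence $\rd_\omega^j W_\omega\equiv0$ for $j\ge2$, while $|W_\omega(r^\ast)|\ls(1+|\omega|)e^{-c|r^\ast|}$ and $|\rd_\omega W_\omega(r^\ast)|\ls e^{-c|r^\ast|}$ uniformly for $r^\ast\le r^\ast_0$. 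Second, the kernel and all its $\omega$-derivatives are explicit, $\rd_\omega^j G_\omega(r^\ast,s)=\int_0^{r^\ast-s}(2it)^j e^{2i\omega t}dt$, so they are regular at $\omega=0$ and satisfy the two-sided bound $|\rd_\omega^j G_\omega(r^\ast,s)|\ls_j\min\big(|r^\ast-s|^{\,j+1},\,|r^\ast-s|^{\,j}/|\omega|\big)$. Differentiating the Volterra equation $m$ times then gives a Volterra equation of the \emph{same} type for $\rd_\omega^m\tilde\uhr$, whose inhomogeneity is a finite sum of integrals of $\rd_\omega^a G_\omega\cdot\rd_\omega^b W_\omega\cdot\rd_\omega^c\tilde\uhr$ with $a+b+c=m$, $c<m$. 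The Gr\"onwall constant $\int_{-\infty}^{r^\ast_0}|G_\omega(r^\ast_0,s)|\,|W_\omega(s)|\,ds$ is $O(1)$ uniformly in $\omega$: for $|\omega|\ge1$ the bound $|G_\omega|\ls\min(|r^\ast_0-s|,1/|\omega|)$ absorbs the $|\omega|$ from $W_\omega$, and for $|\omega|\le1$ the polynomial bound plus the exponential decay suffice; the same mechanism controls the inhomogeneity by $O(1)$ — each surviving factor $\rd_\omega^b W_\omega$ with $b\ge1$ carries no power of $\omega$, and the $r^\ast$-polynomial growth of $\rd_\omega^a G_\omega$ and of the inductively bounded $\rd_\omega^c\tilde\uhr$ is killed by the exponential decay of $W_\omega$. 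This yields $|\rd_\omega^m\tilde\uhr(r^\ast)|\ls_m1$ for $r^\ast\le r^\ast_0$, and, since $\rd_{r^\ast}\tilde\uhr(r^\ast)=\int_{-\infty}^{r^\ast}e^{2i\omega(r^\ast-s)}W_\omega(s)\tilde\uhr(s)\,ds$, also $|\rd_\omega^m\rd_{r^\ast}\tilde\uhr(r^\ast_0)|\ls_m1+|\omega|$ (the one surviving power of $|\omega|$ coming only from $b=0$, which the non-decaying kernel $e^{2i\omega(r^\ast-s)}$ cannot absorb but which is integrated against the exponentially small $W_\omega$). Undoing the renormalisation, $\rd_\omega^m\uhr$ picks up only $r^\ast_0$-polynomial factors and $\rd_\omega^m\rd_{r^\ast}\uhr$ one extra factor of $|\omega|$ from $\rd_{r^\ast}e^{-i\omega r^\ast}$, and the same holds for $\uchr$ with $\omega$ replaced by $\omega-\omer$; substituting into the Leibniz expansion gives $|\rd_\omega^N\mathfrak t(\omega)|\ls_N1+|\omega|$.

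The main obstacle I expect is bookkeeping: verifying that the total $\omega$-growth remains linear rather than compounding to $(1+|\omega|)^N$. This is exactly what the two-sided bound on $\rd_\omega^jG_\omega$, combined with the affine and exponentially-decaying structure of $W_\omega$, is designed to deliver — after each $\omega$-differentiation at most one factor still carries a power of $\omega$, and that factor always multiplies an exponentially small potential term, so either the $1/|\omega|$ in the kernel bound (large $|\omega|$) or the exponential decay (bounded $|\omega|$) keeps every contribution $O(1+|\omega|)$. Minor care is also needed at the distinguished frequencies $\omega=0$, where the free basis $e^{\pm i\omega r^\ast}$ (equivalently $(\uhr,\uhl)$) degenerates, and $\omega=\omer$, where $(\uchr,\uchl)$ degenerates; both are harmless because the kernels $\int_0^x e^{2i\omega t}dt$ and $\int_0^x e^{2i(\omega-\omer)t}dt$ and the renormalised solutions $\tilde\uhr,\tilde\uchr$ extend analytically across them, so the estimates hold uniformly on all of $\RR$. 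Alternatively — should one prefer a softer argument — the same Volterra analysis shows that $\mathfrak t$ extends holomorphically to a strip $\{|\Im\omega|<\delta_\ast\}$ with $|\mathfrak t(\omega)|\ls1+|\omega|$ there (the exponential decay of $V$, $\omega_r-\omega_+$ and $\omega_r-\omega_-$ provides the room), and then \eqref{t.est.N} follows immediately from Cauchy's estimate $|\rd_\omega^N\mathfrak t(\omega)|\le N!\,(\delta_\ast/2)^{-N}\sup_{|\zeta-\omega|=\delta_\ast/2}|\mathfrak t(\zeta)|$.
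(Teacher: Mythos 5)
Your argument is correct and is essentially the paper's proof: both differentiate the Volterra integral equations for the renormalised Jost solutions $\tilde u$ in $\omega$, obtain $|\rd_\omega^m\tilde u|\lesssim_m 1$ and $|\rd_\omega^m\rd_{r^\ast}\tilde u|\lesssim_m 1+|\omega|$ uniformly, and conclude via a Leibniz expansion of the $r^\ast$-independent Wronskian evaluated at a fixed point ($r^\ast=0$ in the paper). The only difference is that you spell out the kernel bounds and bookkeeping that the paper delegates to the ``standard Volterra estimates'' of \cite{MoiChristoph}, and you add a valid alternative via holomorphic extension plus Cauchy's estimate.
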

		\begin{proof}
			This lemma can be viewed as a refinement of Lemma 5.9 in \cite{MoiChristoph}, which itself is based  on Lemma 5.7 in \cite{MoiChristoph}. Recall, for instance, that $\tilde \uchr$ solves the Volterra integral equation (equation (5.77) in \cite{MoiChristoph})
			\begin{align}  
				\tilde \uchr (r^\ast, \omega)=  1 +  \int_{r^\ast}^{+\infty} & \frac{\sin[ (\omega - \omer)(r^\ast - y) ]}{\omega -\omer} e^{- i (\omega - \omer) (r^\ast - y)} \\
				& \left[V(y) - (\omega_- - \omega_{r(y)})(2 \omega + 2\omega_+ - \omega_- - \omega_{r(y)} ) \right] \tilde \uchr (\omega, y) dy.
			\end{align} Then, taking $\rd_{\omega}^N$ derivatives and using standard Volterra estimates as in \cite{MoiChristoph} generalizes the result into  \begin{equation}\begin{split}
					&|\rd_{\omega}^{N}	\tilde \uchr (r^\ast, \omega)|\ls \Omega^2_{RN}(r^{*}), \\  &	|\rd_{\omega}^{N} \rd_{r^{*}}\tilde \uchr (r^\ast, \omega)| \lesssim \Omega^2_{RN}(r^{*}) (1+|\omega|),
				\end{split}
			\end{equation} for $r^{*} \geq 0$, and the analogous estimate replacing  $	\tilde \uchr$ by $	\tilde \uchl$. Moreover, we also get, for $r^{*}\leq 0$: \begin{equation}\begin{split}
					&|\rd_{\omega}^{N}	\tilde \uhr (r^\ast, \omega)|\ls \Omega^2_{RN}(r^{*}), \\  &	|\rd_{\omega}^{N} \rd_{r^{*}}\tilde \uhr (r^\ast, \omega)| \lesssim \Omega^2_{RN}(r^{*}) (1+|\omega|).
				\end{split}
			\end{equation} Now, as in \cite{MoiChristoph}, we argue evaluating the Wronskian at $r^{*}=0$, for all $N\in \mathbb{N}$, \begin{align} & |\partial_\omega^{N} \mathfrak t |\lesssim |\partial_\omega^{N}  \mathfrak W(\uhr, \uchr)|   \lesssim \sum_{k=0}^{N} | \mathfrak W(\partial_\omega^{k} \uhr,  \partial_\omega^{N-k}\uchr) | (r^\ast =0)  \lesssim 1+|\omega|, \end{align} which concludes the proof.
		\end{proof}
		Then, in the next step, we take advantage of Lemma~\ref{lemma.t} to prove a refinement of Proposition~\ref{previousthm.withCK}:
		\begin{prop}\label{new.scat.prop} The following estimates hold true: for any $N \in \mathbb{N}$:
			\begin{equation}\label{dv.psi.linear.est}
				\bigl|	\partial_v \left(  r e^{i \frac{\omer  v}{2} }	\phil(u,v) \right)  + i \frac{r_+  e^{i\frac{\omer u}{2}} }{ \sqrt{2\pi} } \left( \mathfrak t(\omer) \Phi_{H}(v)  + \mathfrak t'(\omer) \rd_{v}\Phi_{H}(v)\right) \bigr| \lesssim (1+|v|)^{-N}  \left(\| (1+|v|)^{N} \rd_{v}^2 \Phi_{H}\|_{L^{\infty}}+ \|  \rd_{v}^2 \Phi_{H}\|_{L^{2}}\| \right),
			\end{equation}
			\begin{equation}\label{dvv.psi.linear.est}
				\bigl|	\partial_v^2 \left(  r e^{i \frac{\omer  v}{2} }	\phil(u,v) \right)  -\frac{r_+  e^{i\frac{\omer u}{2}} }{ \sqrt{2\pi} } \left( \mathfrak t(\omer) \rd_v \Phi_{H}(v)  + \mathfrak t'(\omer) \rd_{v}^2\Phi_{H}(v)\right) \bigl|\lesssim (1+|v|)^{-N}  \left(\| (1+|v|)^{N} \rd_{v}^3 \Phi_{H}\|_{L^{\infty}}+ \|  \rd_{v}^3 \Phi_{H}\|_{L^{2}}\| \right),
			\end{equation} 
			\begin{equation}\label{du.psi.linear.est}\begin{split}
					&	\bigl|	\partial_u \left(  r e^{-i \frac{\omer  u}{2} }	\phil(u,v) \right)  -i \frac{r_+  e^{-i\frac{\omer v}{2}} }{ \sqrt{2\pi} } \left( \mathfrak r(\omer) \Phi_{H}(v)  + \mathfrak r'(\omer) \rd_{v}\Phi_{H}(v)\right) \bigr| \\&\lesssim (1+|v|)^{-N}  \left(\| (1+|v|)^{N} \rd_{v}^2 \Phi_{H}\|_{L^{\infty}}+ \|  \rd_{v}^2 \Phi_{H}\|_{L^{2}}\| \right).\end{split}
			\end{equation} 
		\end{prop}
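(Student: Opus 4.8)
\textbf{Proof proposal for Proposition~\ref{new.scat.prop}.}

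The plan is to start from the exact representation formula \eqref{dvphi.rep.formuala} of Proposition~\ref{previousthm.withCK}, namely
\[
\partial_v \left(  r e^{i \frac{\omer  v}{2} }	\phil(u,v) \right)  =   -i \frac{r_+  e^{i\frac{\omer u}{2}} }{ \sqrt{2\pi} } \int_{\mathbb R}  \mathcal F[\phiHL   e^{i \omer \cdot} ] (\omega)    \mathfrak t(\omega+\omer)     e^{-i\omega v}   d\omega + \Phi_{error}(u,v),
\]
and perform a Taylor expansion of the transmission coefficient $\mathfrak{t}$ at the resonant frequency $\omer$. Write $\mathfrak{t}(\omega+\omer) = \mathfrak{t}(\omer) + \mathfrak{t}'(\omer)\,\omega + R_2(\omega)$, where $|R_2(\omega)| \lesssim \omega^2 \sup_{|\omega'|\le|\omega|}|\mathfrak{t}''(\omega')| \lesssim \omega^2(1+|\omega|)$ by the refined bound \eqref{t.est.N} of Lemma~\ref{lemma.t} with $N=2$. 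Then I would plug this expansion into the Fourier integral: the constant term contributes $\mathfrak{t}(\omer)\,\mathcal{F}^{-1}[\mathcal F[\phiHL e^{i\omer\cdot}]](v) = \mathfrak{t}(\omer)\,\phiHL(v)e^{i\omer v}$ — which is essentially $\Phi_H(v)$ up to the oscillation already accounted for on the left-hand side; the linear term $\mathfrak{t}'(\omer)\,\omega$ produces, via the identity $\omega\mathcal F[f](\omega) = i\mathcal F[f'](\omega)$, a term proportional to $\rd_v\Phi_H(v)$; and the quadratic remainder $R_2(\omega)$ must be estimated.

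The error analysis has two pieces. First, the contribution of $R_2(\omega)$: since $|R_2(\omega)| \lesssim \omega^2(1+|\omega|)$, I would write $\omega^2 \mathcal F[\phiHL e^{i\omer\cdot}](\omega) = -\mathcal F[(\phiHL e^{i\omer\cdot})'']$ and $\omega^3\mathcal F[\cdots] = -i\mathcal F[(\cdots)''']$, so that the inverse Fourier transform of $R_2(\omega)\mathcal F[\phiHL e^{i\omer\cdot}](\omega)$ is controlled in $L^\infty_v$ by $\|\rd_v^2\Phi_H\|_{L^1} + \|\rd_v^3\Phi_H\|_{L^1}$ (and one can absorb lower-order $\rd_v\Phi_H$, $\Phi_H$ terms that arise from differentiating the oscillatory factor $e^{i\omer v}$, using the decay hypotheses). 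To land the stated mixed $L^\infty$/$L^2$ bound on the right-hand side of \eqref{dv.psi.linear.est}, I would interpolate: split the $\omega$-integral at $|\omega|\le 1$ and $|\omega|>1$; on the bounded-frequency piece use $\|\cdot\|_{L^1_\omega}\lesssim\|\cdot\|_{L^2_\omega}=\|\cdot\|_{L^2_v}$ by Plancherel applied to $\rd_v^2\Phi_H$, and on the high-frequency piece pay an extra $(1+|\omega|)^{-N}$ factor by further integrating by parts in $v$ (equivalently dividing and multiplying by $\langle\omega\rangle^N$), which accounts for the $(1+|v|)^{-N}\|(1+|v|)^N\rd_v^2\Phi_H\|_{L^\infty}$ weight — this is just the standard non-stationary-phase/Fourier-decay mechanism. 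Second, the genuine error term $\Phi_{error}(u,v)$ from \eqref{error.lin} is bounded by $\Omega_{RN}^{2-\alpha}(u,v)\big(\|\rd_v\phiHL\|_{L^2}^2 + \|\phiHL\|_{L^2}^2\big)$, which decays super-polynomially in $v$ for fixed $u\le-\Delta$ (since $\Omega_{RN}^2 \sim e^{2K_- v}$ there), hence is harmlessly absorbed into the $(1+|v|)^{-N}$ terms once $v$ is large depending on $u$.

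For \eqref{dvv.psi.linear.est} I would simply differentiate \eqref{dvphi.rep.formuala} once more in $v$ — bringing down an extra factor $-i\omega$ inside the integral, i.e.\ shifting everything by one derivative of $\Phi_H$ — and repeat the same Taylor-plus-error argument, now using \eqref{t.est.N} with $N=2$ again and the $L^2$/$L^\infty$ bounds on $\rd_v^3\Phi_H$; one also needs $\rd_v\Phi_{error}$, which \eqref{error.lin} already provides with the same $\Omega_{RN}^{2-\alpha}$ decay. For \eqref{du.psi.linear.est} the argument is verbatim the same starting from the companion representation \eqref{duphi.rep.formuala} of Proposition~\ref{u.prop}, with $\mathfrak t$ replaced by $\mathfrak r$ and with the roles of $u$ and $v$ exchanged appropriately; note the relevant reflection-coefficient derivative bounds follow from the same Wronskian-at-$r^*=0$ computation as in Lemma~\ref{lemma.t}, applied to $\mathfrak r = \mathfrak W(\uhr,\uchl)/(-2i)$ rather than $\mathfrak t$. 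The main obstacle is bookkeeping: correctly tracking the oscillatory prefactors $e^{\pm i\omer v/2}$, $e^{\pm i\omer u/2}$ so that the constant-in-$\omega$ term of the expansion really reproduces $\Phi_H(v)$ (and not $\phiHL(v)$ with a stray phase), and organizing the interpolation between the $L^2$-Plancherel estimate and the $L^\infty$-weighted non-stationary-phase estimate so that the precise right-hand side of the proposition — with its $(1+|v|)^{-N}$ weight on the $L^\infty$ norm but no weight on the $L^2$ norm — comes out cleanly; everything else is a routine consequence of the refined Volterra estimates already packaged in Lemma~\ref{lemma.t}.
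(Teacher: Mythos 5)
Your skeleton is the same as the paper's: start from the representation formula \eqref{dvphi.rep.formuala} (resp.\ \eqref{duphi.rep.formuala}), Taylor-expand $\mathfrak t$ (resp.\ $\mathfrak r$) at $\omer$ using Lemma~\ref{lemma.t}, identify the zeroth- and first-order terms with $\Phi_H$ and $\rd_v\Phi_H$, and absorb $\Phi_{error}=O(\Omega_{RN}^{2-\alpha})$ into the rapidly decaying right-hand side. The gap is in your treatment of the second-order Taylor remainder, which is the heart of the estimate. First, the pointwise bound $|R_2(\omega)|\lesssim \omega^2(1+|\omega|)$ is too crude: at high frequency it costs $|\omega|^3$, which is why your argument drags $\rd_v^3\Phi_H$ into \eqref{dv.psi.linear.est}, a norm that does not appear in the statement. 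The correct bound is $|R_2(\omega)|\lesssim \omega^2(1+|\omega|)^{-1}$, obtained by combining Taylor at small $\omega$ with the linear growth $|\mathfrak t(\omega+\omer)|\lesssim 1+|\omega|$ at large $\omega$. Second, and more seriously, you only use $\mathfrak t''$ at one point; you never invoke the bounds on \emph{all} $\omega$-derivatives of the renormalized remainder symbol, yet these are precisely what generate the $(1+|v|)^{-N}$ decay. Your low-frequency piece yields $\|\rd_v^2\Phi_H\|_{L^2}$ with \emph{no} $(1+|v|)^{-N}$ prefactor, which is strictly weaker than \eqref{dv.psi.linear.est} since $N$ is arbitrary; and your claim that the inverse Fourier transform of $R_2\,\mathcal F[\Phi_H]$ is controlled in $L^\infty_v$ by $\|\rd_v^2\Phi_H\|_{L^1}+\|\rd_v^3\Phi_H\|_{L^1}$ rests on the false inequality $\|\hat f\|_{L^1_\omega}\le\|f\|_{L^1_v}$ (Hausdorff--Young goes the other way).

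The missing step is the following. Set $\tilde{\mathfrak t}_{\omer}(\omega):=\omega^{-2}\left(\mathfrak t(\omega+\omer)-\mathfrak t(\omer)-\mathfrak t'(\omer)\,\omega\right)$ and use the full strength of Lemma~\ref{lemma.t} to show $|\rd_\omega^{N}\tilde{\mathfrak t}_{\omer}|(\omega)\lesssim (1+|\omega|)^{-1}$ for every $N$. By Plancherel this gives $K:=\mathcal F[\tilde{\mathfrak t}_{\omer}]\in L^1(\RR_v)$ and $(1+|v|)^{N}K\in L^2(\RR_v)$ for every $N$. The remainder contribution is then the convolution $\rd_v^2\Phi_H * K$ (up to constants and phases), and the Peetre inequality $(1+|v|)^N\lesssim_N(1+|v-y|)^N+(1+|y|)^N$ splits it into a term bounded by $\|(1+|\cdot|)^N\rd_v^2\Phi_H\|_{L^\infty}\|K\|_{L^1}$ and, via Cauchy--Schwarz, a term bounded by $\|\rd_v^2\Phi_H\|_{L^2}\|(1+|\cdot|)^N K\|_{L^2}$; this produces exactly the right-hand side of \eqref{dv.psi.linear.est}. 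The same mechanism, shifted by one derivative, gives \eqref{dvv.psi.linear.est} (after controlling $\rd_{r^*}^2\uchl,\rd_{r^*}^2\uchr=O(\Omega_{RN}^2)$ via the radial ODE), and the verbatim argument with $\mathfrak r$ gives \eqref{du.psi.linear.est}. Without this convolution-plus-weight-splitting step your proof does not close.
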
 
		\begin{proof}
			This proposition builds up on the claims of Theorem V  in \cite{MoiChristoph} and essentially follows from its proof. We nonetheless will give some rapid details on how to obtain the claimed refinements on the estimates of Theorem V in \cite{MoiChristoph}. First, Statement C in Theorem V  in \cite{MoiChristoph} gives
			
			\begin{equation}
				\partial_v \left(  r e^{i \frac{\omer  v}{2} }	\phil(u,v) \right)  =   -i \frac{r_+  e^{i\frac{\omer u}{2}} }{ \sqrt{2\pi} } \int_{\mathbb R}  \mathcal F[\phiHL   e^{i \omer \cdot} ] (\omega)    \mathfrak t(\omega+\omer)     e^{-i\omega v}   d\omega + \Phi_{error}(u,v),
			\end{equation}  where $$|\Phi_{error}|(u,v) \lesssim \Omega_{RN}(u,v).$$
			
			Then, we can write, using Lemma~\ref{lemma.t} to obtain  an improvement of \cite{MoiChristoph}, equation (5.182): for any $N \in \mathbb{N}$, \begin{equation}\begin{split}\label{t.est}
					& \mathfrak{t}(\omega+\omer) = \mathfrak t(\omer) + \mathfrak t'(\omer) \omega +  \omega^2 \tilde{\mathfrak{t}}_{\omer}(\omega),  \\  & |\tilde{\mathfrak{t}}_{\omer}|(\omega),\ |\rd_{\omega}^{N}\tilde{\mathfrak{t}}_{\omer}|(\omega)\ls (1+|\omega|)^{-1},
				\end{split}
			\end{equation} from which we deduce, by Plancherel Theorem, that for any $N \in \mathbb{N}$: \begin{equation}\label{t.est2}
				(1+|v|)^{N} \mathcal{F}[\tilde{\mathfrak{t}}_{\omer}] \in L^2(\RR_v),
			\end{equation} hence \begin{equation}\label{t.est3}
				\mathcal{F}[\tilde{\mathfrak{t}}_{\omer}] \in L^1(\RR_v).
			\end{equation} The conclusion of this becomes, defining $\Phi_{H}(v):= \phiHL(v)   e^{i \omer v}$: \begin{equation}\begin{split}
					&\partial_v \left(  r e^{i \frac{\omer  v}{2} }	\phil(u,v) \right)  =   -i \frac{r_+  e^{i\frac{\omer u}{2}} }{ \sqrt{2\pi} } \int_{\mathbb R}  \mathcal F[\Phi_{H}] (\omega)    \mathfrak t(\omega+\omer)     e^{-i\omega v}   d\omega + \Phi_{error}(u,v)\\ &= -i \frac{r_+  e^{i\frac{\omer u}{2}} }{ \sqrt{2\pi} } \left( \mathfrak t(\omer) \Phi_{H}(v)  + \mathfrak t'(\omer) \rd_{v}\Phi_{H}(v)+\int_{\mathbb R}  \mathcal F[\rd_{v}^2 \Phi_{H} ] (\omega)    \tilde{\mathfrak t}_{\omer}(\omega)     e^{-i\omega v}   d\omega\right) + \Phi_{error}(u,v).\end{split}
			\end{equation}  	Then, we can use  \eqref{t.est}, \eqref{t.est2}, \eqref{t.est3} to estimate the expression as such to write  (see equation (5.191) in \cite{MoiChristoph}, where the same argument is used): \begin{equation}\begin{split}\label{fourier.est}
					&\bigl|\int_{\mathbb R}  \mathcal F[\rd_{v}^2 \Phi_{H} ] (\omega)    \tilde{\mathfrak t}_{\omer}(\omega)     e^{-i\omega v}   d\omega\bigl|\\  \lesssim& (1+|v|)^{-N} \left(\| (1+|v|)^{N} \rd_{v}^2 \Phi_{H}\|_{L^{\infty}}\| \mathcal{F}[\tilde{\mathfrak t}_{\omer}]\|_{L^{1}}+ \|  \rd_{v}^2 \Phi_{H}\|_{L^{2}}\| (1+|v|)^{N}\mathcal{F}[\tilde{\mathfrak t}_{\omer}]\|_{L^{2}}\right)\\ \ls & (1+|v|)^{-N} \left(\| (1+|v|)^{N} \rd_{v}^2 \Phi_{H}\|_{L^{\infty}}+\|  \rd_{v}^2 \Phi_{H}\|_{L^{2}}\right).
				\end{split}
			\end{equation}
			\noindent	Then,	equation (5.178) in \cite{MoiChristoph} gives
			\begin{align}\nonumber
				\partial_v^2(e^{i \omer r^\ast} r	\phil(u,v) )  = 	&\frac{r_+  e^{i\omer u} }{ \sqrt{2\pi} }  \textup{p.v.}\int_{\mathbb R}  \Big[ \mathcal F[\phiHL \chi_{\leq v_1} e^{i \omer \cdot} ] (\omega) \\ &   \cdot \frac{ \mathfrak r(\omega+\omer) \partial_{v}^2  \tilde \uchr(\omega + \omer , r^{*}) e^{i \omega u} + \mathfrak t(\omega+\omer)  \partial_{v}^2 ( \tilde \uchl(\omega+\omer,r^{*}) e^{-i\omega v} )}{\omega } \Big] d \omega.\label{eq:termsofpartialvphi}
			\end{align}
			Recall from the proof of Lemma~\ref{lemma.t} that 	$\frac{d \uchl}{dr^{*}} =   O(\Omega^2_{RN}(u,v))$ and 	$\frac{d \uchr}{dr^{*}} =   O(\Omega^2_{RN}(u,v))$.	It is not difficult to also show that \begin{equation}
				\frac{d^2 \uchl}{d(r^{*})^2} =  \left([\omega-(\omega_r-\omega_+)]^2 - (\omega-\omer)^2 \right)\uchl+ O(\Omega^2_{RN}(u,v))= O(\Omega^2_{RN}(u,v)),
			\end{equation}\begin{equation}
				\frac{d^2 \uchr}{d(r^{*})^2} =  \left([\omega-(\omega_r-\omega_+)]^2 - (\omega-\omer)^2 \right)\uchr+ O(\Omega^2_{RN}(u,v))= O(\Omega^2_{RN}(u,v)).
			\end{equation}
			Therefore, we end up with   \begin{equation}\begin{split}
					&	\partial_v^2 \left(  r e^{i \frac{\omer  v}{2} }	\phil(u,v) \right)  =   - \frac{r_+  e^{i\frac{\omer u}{2}} }{ \sqrt{2\pi} } \int_{\mathbb R}  \mathcal F[\phiHL   e^{i \omer \cdot} ] (\omega) \omega   \mathfrak t(\omega+\omer)     e^{-i\omega v}   d\omega + \Phi_{error}'(u,v)\\ &  = - \frac{r_+  e^{i\frac{\omer u}{2}} }{ \sqrt{2\pi} } \int_{\mathbb R}  \mathcal F[\rd_v(\phiHL   e^{i \omer \cdot}) ] (\omega)    \mathfrak t(\omega+\omer)     e^{-i\omega v}   d\omega + \Phi_{error}'(u,v),\end{split}
			\end{equation} where $$ |\Phi_{error}'|(u,v)\lesssim \Omega_{RN}(u,v).$$Then, using again the notation  $\Phi_{H}(v):= \phiHL(v)   e^{i \omer v}$ and \eqref{t.est}, we obtain \begin{equation}\begin{split}
					&	\partial_v^2 \left(  r e^{i \frac{\omer  v}{2} }	\phil(u,v) \right)  =   - \frac{r_+  e^{i\frac{\omer u}{2}} }{ \sqrt{2\pi} } \int_{\mathbb R}  \mathcal F[\phiHL   e^{i \omer \cdot} ] (\omega) \omega   \mathfrak t(\omega+\omer)     e^{-i\omega v}   d\omega + \Phi_{error}'(u,v)\\ &  = - \frac{r_+  e^{i\frac{\omer u}{2}} }{ \sqrt{2\pi} } \left( \mathfrak t(\omer) \rd_v \Phi_{H}(v)  + \mathfrak t'(\omer) \rd_{v}^2\Phi_{H}(v)+\int_{\mathbb R}  \mathcal F[\rd_{v}^3 \Phi_{H} ] (\omega)    \tilde{\mathfrak t}_{\omer}(\omega)     e^{-i\omega v}   d\omega\right) + \Phi_{error}'(u,v) .\end{split}
			\end{equation}
			Finally, similarly to \eqref{fourier.est}, we obtain \begin{equation}\begin{split}
					&	\bigl|\int_{\mathbb R}  \mathcal F[\rd_{v}^3 \Phi_{H} ] (\omega)    \tilde{\mathfrak t}_{\omer}(\omega)     e^{-i\omega v}   d\omega\bigl| \\ \lesssim &(1+|v|)^{-N} \left(\| (1+|v|)^{N} \rd_{v}^3 \Phi_{H}\|_{L^{\infty}}\| \mathcal{F}[\tilde{\mathfrak t}_{\omer}]\|_{L^{1}}+ \|  \rd_{v}^3 \Phi_{H}\|_{L^{2}}\| (1+|v|)^{N}\mathcal{F}[\tilde{\mathfrak t}_{\omer}]\|_{L^{2}}\right)\\ \lesssim &(1+|v|)^{-N} \left(\| (1+|v|)^{N} \rd_{v}^3 \Phi_{H}\|_{L^{\infty}}+ \|  \rd_{v}^3 \Phi_{H}\|_{L^{2}}\right),
				\end{split}
			\end{equation}
			which concludes the proof of \eqref{dv.psi.linear.est}, \eqref{dvv.psi.linear.est}.  \eqref{du.psi.linear.est} is obtained similarly.
			
		\end{proof}

		\subsubsection{Recalling previous nonlinear estimates in the black hole interior from \cite{MoiChristoph}}
		We now retrieve estimates previously proven in \cite{MoiChristoph}, Proposition 6.16. Strictly speaking, the estimates of \cite{MoiChristoph} are stated assuming $s<1$, whereas here we assume to the contrary $s>1$, so rates of the form $O(v^{1-2s})$ or $O(v^{1-3s})$ are replaced by  the weaker rates $v^{-s}$ and $v^{-2s}$ in the $s>1$ case, respectively. We introduce the curve $\gamma$ from \cite{Moi,MoiChristoph} as $\gamma=\{(u,v),\ u+v =-\Delta'+ \frac{2s}{2|K_-|}\log(v),\ u \leq -\Delta\}$ for some constant $\Delta'>0$, and the notations $u_{\gamma}(v)$ and $v_{\gamma}(u)$ designed so that $(u_{\gamma}(v),v)\in \gamma $ and  $(u,v_{\gamma}(u))\in \gamma $.
		\begin{prop}[\cite{MoiChristoph}, Proposition 6.16.]\label{MoiChristoph.nonlinear.prop} Let $u\leq -\Delta$. Then, for $v \geq v_{\gamma}(u)$
			\begin{equation}\label{dvlogomega.est}
				|\rd_v \log(\Omega^2)(u,v)-2K_-| \lesssim v^{-s},
			\end{equation}
			\begin{equation}\label{lambda.est}
				|\rd_v r|(u,v) \lesssim v^{-2s},
			\end{equation}
			\begin{equation}\label{dvphi.est}
				|\rd_v \phi|(u,v) \lesssim v^{-s},
			\end{equation}
			\begin{equation}\label{dvphi.diff.est}
				\bigl|		|D_v \psi|(u,v)- 	|D_v \psi_{\mathcal{L}}|(u,v) \bigr|\lesssim v^{-2s}.
			\end{equation}
			\begin{equation}\label{duphi.diff.est}
				\bigl|		|D_u \psi|(u,v)- 	|D_u \psi_{\mathcal{L}}|(u,v) \bigr|\lesssim |u|^{-2s}\log(u).
			\end{equation}  Moreover, there exists $\theta(u)$ such that  in the gauge choice \eqref{A.gauge}
			\begin{equation}\label{dvphi.diff.est2}
				\bigl|		\rd_v \psi(u,v)-  e^{i \theta(u)}	\rd_v \psi_{\mathcal{L}}(u,v) \bigr|\lesssim v^{1-2s}.
			\end{equation} 
			
			Finally, denoting $\delta g=\{ r-r_{RN},\ \log(\Omega^2)- \log(\Omega^2_{RN})\}$, $\delta A_u = \{ A_u- A_u^{RN}\}$ and $\delta \phi=\phi- \phil$, we have the following difference estimates (away from the Cauchy horizon $\CH$): for all  $u\leq - \Delta$, $v\leq v_{\gamma}(u)$:\begin{equation}\label{delta.g}
				|\delta g|(u,v)+ |\rd_u \delta g|(u,v)+ |\rd_v \delta g|(u,v)+ |\delta A_u|(u,v) \ls v^{-s},
			\end{equation}
			\begin{equation}\label{delta.phi}
				|\delta \phi|(u,v)+  	|\rd_u \delta \phi|(u,v) + 	|\rd_v \delta \phi|(u,v) \ls v^{-2s}.
			\end{equation}
		\end{prop}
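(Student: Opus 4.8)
The plan is to follow the proof of Proposition~6.16 of \cite{MoiChristoph} essentially verbatim, the only substantive change being the bookkeeping of decay exponents. In \cite{MoiChristoph} the event-horizon scalar field was allowed to decay slowly, $s<1$, so the quadratic source terms in the Einstein--Maxwell system produced error rates such as $O(v^{1-2s})$ and $O(v^{1-3s})$; here $s>1$, so for $v$ large $v^{1-2s}\le v^{-s}$ and $v^{1-3s}\le v^{-2s}$, and one simply propagates the weaker (but tidier) bounds $v^{-s}$ and $v^{-2s}$ throughout. I would therefore not re-derive the argument from scratch, but organise the black hole interior $\{u\le-\Delta\}$ into three regions and indicate the comparison scheme in each.

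First I would treat the region away from the Cauchy horizon, $v\le v_\gamma(u)$, itself split into a red-shift region adjacent to $\mathcal{H}^+$ (in the teleological gauge~\ref{gauge.U.past}) and a ``no-shift'' region of bounded area-radius. Starting from the event-horizon data \eqref{data.EH}, \eqref{data.EH2} and the ingoing bound \eqref{data.ing}, a standard bootstrap built on the monotonicity equations \eqref{RaychU}, \eqref{RaychV}, \eqref{massUEinstein}, \eqref{massVEinstein} gives $|\phi|+|D_v\phi|\lesssim v^{-s}$, $|D_U\phi|\lesssim\Omega^2\, v^{-s}$ and closeness of $(r,\Omega^2,Q,A_u)$ to Reissner--Nordstr\"om. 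The difference estimates \eqref{delta.g}, \eqref{delta.phi} then follow by integrating the difference of \eqref{Radius}, \eqref{Omega}, \eqref{chargeUEinstein}, \eqref{ChargeVEinstein}, \eqref{dv.Au} against the exact Reissner--Nordstr\"om solution, and of \eqref{Field} (equivalently \eqref{Field.psi}) against the linear field $\phil$ solving \eqref{eq:chargedKGlinear} with the same event-horizon data; the quadratic sources are $O(v^{-2s})$ thanks to the scalar-field decay, so $|\delta g|+|\rd_u\delta g|+|\rd_v\delta g|+|\delta A_u|\lesssim v^{-s}$ and $|\delta\phi|+|\rd_u\delta\phi|+|\rd_v\delta\phi|\lesssim v^{-2s}$.

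The heart of the matter is the region $v\ge v_\gamma(u)$ up to $\CH=\{v=+\infty\}$, where $\Omega^2$ transitions from red-shift to blue-shift and the Hawking mass inflates to $+\infty$ at the rate dictated by \eqref{EH.bound2}. Here I would propagate, from $\gamma$ towards $v=+\infty$, the estimates \eqref{dvlogomega.est}, \eqref{lambda.est}, \eqref{dvphi.est} using \eqref{Omega2}, \eqref{Radius3} and \eqref{Field.psi}; as in \cite{MoiChristoph}, the key structural point is that each error term carries a factor $\Omega^2\sim e^{2K_- v}$ absorbed either by that exponential decay or by the smallness of $r-r_{CH}$, closing a Gr\"onwall iteration. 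To obtain \eqref{dvphi.diff.est}--\eqref{dvphi.diff.est2} and \eqref{duphi.diff.est} one compares the commuted wave equation \eqref{Field.psi} for $\psi=r\phi$ with its linear counterpart for $\psi_{\mathcal{L}}=r\phil$: the difference solves a transport equation whose source is controlled by $\delta g$, $\delta A_u$, $\delta\phi$ from the previous region propagated up to $\CH$. Choosing the phase $\theta(u)$ to absorb the (purely $u$-dependent) discrepancy between the electromagnetic gauge \eqref{A.gauge} used for $\phi$ and the gauge \eqref{eq:gauge_lienartheory} used for $\phil$ — concretely, by integrating the difference of the two electromagnetic potentials along $\gamma$, as in \cite{MoiChristoph} — removes the leading oscillatory mismatch and leaves an $O(v^{1-2s})$ remainder, hence \eqref{dvphi.diff.est2}; taking absolute values gives \eqref{dvphi.diff.est}, and the analogous estimate in the $u$-direction yields \eqref{duphi.diff.est}, the extra $\log(u)$ arising from a borderline weighted integral near $i^+$, precisely as in \cite{MoiChristoph}.

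The main obstacle I anticipate is the uniformity of all these bounds up to $\CH$: the accumulated errors are integrals of $\Omega^2$-weighted quadratic terms over $v$-intervals of unbounded length, and one must order the Gr\"onwall iteration so that they remain bounded by $v^{-2s}$ rather than growing. This is the content of Section~6 of \cite{MoiChristoph}; the adaptation is routine once one checks that no step there used $s<1$ in an essential way and that replacing it by $s>1$ only improves the relevant summability. I would therefore present the argument as a careful transcription of that section, with the exponents $v^{1-2s}$ and $v^{1-3s}$ replaced by $v^{-s}$ and $v^{-2s}$ at each occurrence.
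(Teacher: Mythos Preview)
Your proposal is correct and matches the paper's approach exactly: the paper does not prove this proposition but imports it from \cite{MoiChristoph}, Proposition~6.16, with the sole remark that since here $s>1$ rather than $s<1$, the rates $O(v^{1-2s})$ and $O(v^{1-3s})$ are replaced by the weaker $v^{-s}$ and $v^{-2s}$. Your sketch of the region decomposition and Gr\"onwall scheme is a faithful outline of what Section~6 of \cite{MoiChristoph} actually does, and is more detailed than what the paper itself provides.
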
 
		
		Then, we collect some extra estimates which follow  from the analysis in \cite{MoiChristoph}. These estimates are sub-optimal, but nonetheless sufficient for our purpose.
		
		\begin{prop}\label{MoiChristoph.nonlinear.prop.new} Let $u\leq -\Delta$. Then, for $v\geq v_{\gamma}(u)$:
			\begin{equation}\label{dv.lambda}
				|\rd_v^{2} r|(u,v) \lesssim v^{-2s},
			\end{equation}
			\begin{equation}\label{phi.est}
				|\phi|(u,v) \lesssim |u|^{1-s}.
			\end{equation}
			\begin{equation}\label{dvdvphi.diff.est}
				\bigl|		|D_v^2 \psi|(u,v)- 	|D_v^2 \psi_{\mathcal{L}}|(u,v) \bigr|\lesssim v^{-2s}.
			\end{equation}
		\end{prop}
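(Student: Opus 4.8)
The goal is Proposition~\ref{MoiChristoph.nonlinear.prop.new}, which supplements Proposition~\ref{MoiChristoph.nonlinear.prop} with three additional estimates in the region $u\leq-\Delta$, $v\geq v_{\gamma}(u)$: a bound on $|\rd_v^2 r|$, a bound on $|\phi|$, and a difference estimate $\big||D_v^2\psi| - |D_v^2\psi_{\mathcal L}|\big|\lesssim v^{-2s}$. The overall strategy is to run the same bootstrap/Grönwall scheme as in \cite{MoiChristoph}, Section 6 (which already yields all of Proposition~\ref{MoiChristoph.nonlinear.prop}), but carry along the second $v$-derivative of $\psi$ (equivalently $\rd_v^2 r$ via \eqref{RaychV}) as an extra bootstrapped quantity, and then subtract the linear estimate \eqref{dvv.psi.linear.est} from Proposition~\ref{new.scat.prop} to get the difference bound. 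The point that Proposition~\ref{MoiChristoph.nonlinear.prop.new} stresses — ``sub-optimal, but sufficient'' — means I do not need sharp rates, just the stated $v^{-2s}$ and $|u|^{1-s}$.

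\textbf{Step 1: the bound on $\phi$.} From \eqref{dvphi.est} in Proposition~\ref{MoiChristoph.nonlinear.prop} we have $|\rd_v\phi|(u,v)\lesssim v^{-s}$ for $v\geq v_{\gamma}(u)$. Integrating in $v$ from the curve $\gamma$ to $v$ and using that on $\gamma$ one has $v_{\gamma}(u)\approx|u|$ (from the defining relation $u+v = -\Delta' + \frac{2s}{2|K_-|}\log v$), together with the bound $|\phi|_{|\gamma}(u)\lesssim |u|^{1-s}$, which itself follows from integrating \eqref{delta.phi} off the Reissner--Nordström region plus the fact that $\phi_{\mathcal L}$ on $\gamma$ is controlled by $|u|^{1-s}$ (the linear radiation field bound, cf. Proposition~\ref{timelikecurve.prop}/Proposition~\ref{nullinf.prop} applied in the interior analogue), one obtains
\begin{equation}
|\phi|(u,v)\lesssim |u|^{1-s} + \int_{v_{\gamma}(u)}^{v}(v')^{-s}\,dv' \lesssim |u|^{1-s} + |u|^{1-s}\lesssim |u|^{1-s},
\end{equation}
using $s>1$ so the $v$-integral converges and is dominated by its lower endpoint $\approx |u|^{1-s}$. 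This gives \eqref{phi.est}.

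\textbf{Step 2: the bound on $\rd_v^2 r$ and the $D_v^2\psi$ difference.} Here the work is to commute the $\psi$-equation \eqref{Field.psi} once more in $v$; the resulting equation \eqref{Field.dvpsi} is already recorded in the excerpt, so the ingredients are all present. Schematically $D_u\rd_v^2\psi = P\cdot \psi + R\cdot\rd_v\psi$ where the coefficients $P,R$ are built from $\Omega^2, r, \lambda, \nu, Q, \rd_v\lambda$ and $\rd_v\log\Omega^2$; all of these are already estimated in Proposition~\ref{MoiChristoph.nonlinear.prop} (note $|\rd_v\lambda|=|\rd_v^2 r|$ appears on the RHS of \eqref{Field.dvpsi}, so this is a closed bootstrap: assume $|\rd_v^2 r|\lesssim v^{-2s}$, plug into \eqref{Field.dvpsi}, integrate $D_u\rd_v^2\psi$ in $u$ from $\gamma$ to close). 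The dangerous-looking term $\Omega^2\lambda/r^3\cdot iq_0$ and the $\rd_v\log(\Omega^2)\Omega^2 iq_0 Q/r^2$ term both carry an $\Omega^2 = O(e^{2K_- v})$ factor and are harmless; the genuinely borderline term is $\frac{3\nu\lambda^2}{r^3}-\frac{\nu\rd_v\lambda}{r^2}$, which by $|\lambda|\lesssim v^{-2s}$, $|\rd_v\lambda|\lesssim v^{-2s}$ (bootstrap), $|\nu|\lesssim 1$ is $O(v^{-2s})$ — so after integrating in $u$ over a bounded $u$-interval one recovers $|\rd_v^2\psi|\lesssim v^{-2s}$, hence via $\rd_v^2 r = -\rd_v(r|\rd_v\phi|^2) $-type identities and \eqref{RaychV} also $|\rd_v^2 r|\lesssim v^{-2s}$, giving \eqref{dv.lambda}. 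For \eqref{dvdvphi.diff.est} I would write $\delta(\rd_v^2\psi):=\rd_v^2\psi - e^{i\theta(u)}\rd_v^2\psi_{\mathcal L}$ and derive its transport equation by subtracting the linearized version of \eqref{Field.dvpsi} on Reissner--Nordström; the source terms are controlled by $\delta g$, $\delta\phi$ and lower-order differences already bounded by $v^{-s}$ or $v^{-2s}$ in \eqref{delta.g}--\eqref{delta.phi} and \eqref{dvphi.diff.est2}, so $|\delta(\rd_v^2\psi)|\lesssim v^{1-2s}$, and in particular $\big||D_v^2\psi|-|D_v^2\psi_{\mathcal L}|\big|\lesssim v^{-2s}$ (absorbing the $v^{1-2s}$ into $v^{-2s}$ using $s>1$, matching the pattern of the other difference estimates in Proposition~\ref{MoiChristoph.nonlinear.prop}).

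\textbf{Main obstacle.} The delicate point is the self-referential appearance of $\rd_v^2 r$ (i.e. $\rd_v\lambda$) inside the coefficient $P$ in \eqref{Field.dvpsi}: one must check the bootstrap constant does not blow up, i.e. that the $u$-integration is over a region of bounded $u$-length (true here since $u\leq-\Delta$ is fixed and we integrate from $v_\gamma(u)$) and that the coefficient of $\rd_v^2 r$ on the RHS, namely $\nu/r^2$, stays bounded — which it does since $r$ is bounded below away from $0$ throughout $\{v\geq v_\gamma(u)\}$ (we are strictly below $\mathcal{S}$ and strictly to the past of where $r\to 0$, and in fact $r\to r_-(M,e)>0$ along $\mathcal H^+$-adjacent cones). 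A secondary bookkeeping issue is confirming that the linear quantity $\rd_v^2\psi_{\mathcal L}$ genuinely obeys $|\rd_v^2\psi_{\mathcal L}|\lesssim v^{-s}$ near $\CH$, which follows from \eqref{dvv.psi.linear.est} in Proposition~\ref{new.scat.prop} once one checks $\Phi_H$ has enough integrable derivatives — but in the generality of Proposition~\ref{MoiChristoph.nonlinear.prop.new} we only need the sub-optimal $v^{-s}$, which is immediate from the representation formula plus $\Phi_H\in$ (the stated decay class). Everything else is a routine repetition of the Grönwall machinery of \cite{MoiChristoph}, Section 6.
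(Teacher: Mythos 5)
Your treatment of \eqref{phi.est} matches the paper's (integrate \eqref{dvphi.est} in $v$ from the curve $\gamma$, using $v_{\gamma}(u)\approx|u|$ and $s>1$). For \eqref{dv.lambda}, however, you set up an unnecessary bootstrap through the commuted equation \eqref{Field.dvpsi}: the bound is immediate from the Raychaudhuri equation \eqref{RaychV} rewritten as $-\rd_v\lambda+\lambda\,\rd_v\log(\Omega^2)=-r|D_v\phi|^2$, since \eqref{dvlogomega.est}, \eqref{lambda.est} and \eqref{dvphi.est} already control every term on the right at the rate $v^{-2s}$. Your ``closure'' of the bootstrap is in fact exactly this one-line argument, so nothing is wrong, but the scaffolding around it is superfluous and obscures that no self-referential estimate is needed.

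The genuine gap is in your proof of \eqref{dvdvphi.diff.est}. You propose to derive a transport equation for the full complex difference $\rd_v^2\psi-e^{i\theta(u)}\rd_v^2\psi_{\mathcal L}$ in the region $v\geq v_{\gamma}(u)$ and to bound its source terms by \eqref{delta.g}--\eqref{delta.phi}. But those difference estimates are only established for $v\leq v_{\gamma}(u)$ --- they are explicitly stated ``away from the Cauchy horizon'' --- and they cannot be expected to hold at the rate you need in the late blue-shift region, where the nonlinear geometry deviates substantially from Reissner--Nordstr\"{o}m (mass inflation, polynomial versus exponential decay of $\lambda$, uncontrolled relative phase). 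This is precisely why the statement, like the other difference estimates of Proposition~\ref{MoiChristoph.nonlinear.prop}, is phrased in terms of absolute values. The paper's argument splits accordingly: first prove the difference estimate for $v\leq v_{\gamma}(u)$ using \eqref{delta.g} and \eqref{delta.phi}; then, for $v\geq v_{\gamma}(u)$, integrate \eqref{Field.dvpsi} in $u$ from $u_{\gamma}(v)$ --- using the already-established bounds, including \eqref{dv.lambda}, to show the right-hand side contributes only $O(v^{-2s})$ --- so that $\bigl||D_v^2\Psi|(u,v)-|D_v^2\Psi|(u_{\gamma}(v),v)\bigr|\lesssim v^{-2s}$, with the analogous statement for $\Psi_{\mathcal L}$; combining the three estimates gives \eqref{dvdvphi.diff.est}. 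You need to replace your difference-transport step with this separate propagation of $|D_v^2\Psi|$ and $|D_v^2\Psi_{\mathcal L}|$ from the curve $\gamma$.
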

		\begin{proof}
			\eqref{phi.est} is easily obtained integrating \eqref{dvphi.est} in $v$. For \eqref{dv.lambda}, we invoke \eqref{RaychV} which we write as \begin{equation}
				-	\rd_v \lambda + \lambda \rd_v \log(\Omega^2) = -r|D_v \phi|^2,
			\end{equation} and \eqref{dv.lambda} then follows from \eqref{dvlogomega.est},  \eqref{lambda.est}, \eqref{dvphi.est}. 
			
			Proving \eqref{dvdvphi.diff.est} is slightly more involved but follows from the same strategy as \cite{MoiChristoph}, Section 6: we control the difference between $\rd_v^2\psi$ and  $\rd_v^2\psi_{\mathcal{L}}$ with the help of \eqref{Field.dvpsi}. First, we prove \eqref{dvdvphi.diff.est} for $v\leq v_{\gamma}(u)$ using the difference estimates \eqref{delta.g} and \eqref{delta.phi} of Proposition~\ref{MoiChristoph.nonlinear.prop}. Then, in the region  $v\geq v_{\gamma}(u)$, one can integrate \eqref{Field.dvpsi} using the rest of the estimates (including \eqref{dv.lambda}) to estimate its RHS, and obtain \begin{equation}
				\bigl|	|D_v^2 \Psi|(u,v)  - |D_v^2 \Psi|(u_{\gamma}(v),v) \bigr|\ls v^{-2s},
			\end{equation} and, of course, a similar estimates holds  for $\bigl|	|D_v^2 \Psi_{\mathcal{L}}|(u,v)  - |D_v^2 \Psi_{\mathcal{L}}|(u_{\gamma}(v),v) \bigr|$. Combining the two estimates then gives  \eqref{dvdvphi.diff.est} in the region  $ v\geq v_{\gamma}(u),\ u\leq -\Delta$.
		\end{proof}
		
		\subsubsection{Completing the proof of Theorem~\ref{nonlinearscat.thm} and Corollary~\ref{nonlinearscat.cor}}
		
		Theorem~\ref{nonlinearscat.thm} follows immediately  from  Proposition~\ref{new.scat.prop} combined 
		with Proposition~\ref{MoiChristoph.nonlinear.prop} and Proposition~\ref{MoiChristoph.nonlinear.prop.new}, notably \eqref{dvphi.diff.est}, \eqref{dvphi.diff.est2} and \eqref{dvdvphi.diff.est} under the gauge \eqref{A.gauge}. We then turn to the proof of Corollary~\ref{nonlinearscat.cor}. \begin{proof}
			The following estimate: \begin{equation}\begin{split}
					& v^{-s} \lesssim |D_v \phi|(u,v) \lesssim v^{-s},\\ &   |D_v^2 \psi|(u,v) \lesssim v^{-s-1}.
				\end{split}
			\end{equation} follows immediately from Theorem~\ref{nonlinearscat.thm} as a consequence of  \eqref{dv.psi.nonlinear.est} and \eqref{dvv.psi.nonlinear.est}  combined with \eqref{EH.profile.hyp}  given that $s>1$ (note that we can choose $N=\lfloor s \rfloor+2$ in \eqref{dv.psi.nonlinear.est}  and  $N=\lfloor s \rfloor+3$ in \eqref{dvv.psi.nonlinear.est}). Then, using the formula together with \begin{equation}\begin{split}
					& D_v \psi = r D_v \phi  + [\rd_v r] \phi, \\ &  D_v^2 \psi = r D_v^2 \phi  + [\rd_v^2 r] \phi+2[\rd_v r] D_v\phi,
				\end{split}
			\end{equation} and \eqref{lambda.est}, \eqref{dv.lambda}, \eqref{dvphi.est}, \eqref{phi.est} concludes the proof of \eqref{Dv.phi.final.est}. To obtain \eqref{dv.r.final}, it is then enough to integrate \eqref{RaychV}, taking advantage of \eqref{Dv.phi.final.est}. \eqref{ImDv.phi.final.est} and \eqref{du.phi.final} are obtained similarly.

			Then, to prove \eqref{du.r.final}, note that \eqref{du.phi.final}, valid for all $u\leq -\Delta$ by Corollary~\ref{nonlinearscat.cor}  shows by \eqref{RaychU} that for all $u\leq -\Delta$ \begin{equation}
				|\rd_u r|_{|\CH}(u) \gtrsim |u|^{-2s},
			\end{equation}  and thus, 	$|\rd_u r|_{|\CH}(u)>0$. This concludes the proof of Corollary~\ref{nonlinearscat.cor}.
		\end{proof}

		\subsection{Piecing everything together and application of Theorem~\ref{main.thm}}\label{final.section}

		We are now finally ready to prove Theorem~\ref{main.thm.global.ii} as advertised at the beginning of this section.
		
		\begin{proof}
			Let $\Phi_H$ satisfying \eqref{EH.bound2} for some $s>\frac{3}{2}$ and $q \in(0,1)$ sufficiently small. We apply Corollary~\ref{EH.AF.cor} to construct an asymptotically flat  black hole spacetime with initial data on $\Sigma$ with no trapped or anti trapped spheres, with regular event horizon  $\mathcal{H}^+$ strictly to the future of $\Sigma$, a Cauchy horizon $\CH$, and $\mathcal{S}=\{r=0\}$ which is FLRW near $\Gamma$. Then, we impose that $\Phi_H$ satisfies the assumptions of Corollary~\ref{nonlinearscat.cor} and thus, by Corollary~\ref{nonlinearscat.cor}, the assumptions  \eqref{hyp1}, \eqref{hyp2} and \eqref{hyp4}--\eqref{hyp3} of Theorem~\ref{main.thm} are satisfied under the Cauchy horizon $\CH$ for $u\leq -\Delta$. To show that $\CH$ is weakly singular, we apply Theorem~\ref{main.thm.global.i}, Statement~\ref{E.}.	Note that, because the spacetime is FLRW near $\Gamma$, $\mathcal{CH}_{\Gamma} =\emptyset$ in the language of Theorem~\ref{Kommemi.thm} (in other words, there are no locally naked singularities). Thus, we can apply  Theorem~\ref{main.thm.global.i} to obtain the desired black hole construction. This concludes the proof of  Theorem~\ref{main.thm.global.ii}.
		\end{proof}
		Note that the second statement of Theorem~\ref{inext.thm} in the one-ended case immediately follows from Theorem~\ref{main.thm.global.ii}.
		\section{Unconditional constructions of  two-ended spacetimes}\label{section.global.uncond2}
		
		In this section, we address  the proof of the unconditional results in the two-ended case, i.e.,  Theorem~\ref{main.thm.2end.ii}.

		In the two-ended case, we are allowed to work with $q_0=0$, where the results of Luk--Oh \cite{JonathanStabExt,twotails} and Gautam \cite{Gautam} are available. In fact, the class of spacetime we construct is rather large (contrary to the one-ended case), since it applies to any generic solution which happens to have a non-empty crushing singularity $\mathcal{S}=\{r=0\}$. We note that this set of solutions we construct is open in some topology, since the property $\mathcal{S}\neq \emptyset$ is stable to perturbations (by a soft Cauchy stability argument). We show additionally that the class of generic solutions such that $\mathcal{S} \neq \emptyset$ is non-empty by a new argument relying of Theorem~\ref{breakdown.thm.new}.

		In this subsection, we assume that $q_0 =0$ and that the scalar field $\phi$ is real-valued and we consider a two-ended spacetime MGHD arising from admissible initial data $(\mathcal{M},g_0,K_0,\phi_0,\phi_1,e) \in \mathcal{G}$, where $\mathcal{G}$ is the generic set (in  smooth topology) on which $L^2$-averaged lower bounds hold on the event horizon, as constructed in \cite{JonathanStab,JonathanStabExt}.
		
		Before turning to the proof of Theorem~\ref{main.thm.2end.ii}, we prove a lemma regarding the propagation of scalar field upper and lower bounds. Since these estimates follow from known techniques \cite{MihalisPHD,JonathanStab,massinflationYakov,Moi}, we will simply sketch the proof. We also note that the proof of the following lemma only applies to the uncharged scalar field case, i.e, $q_0=0$ in \eqref{1.1}--\eqref{5.1}. The charged scalar field case $q_0 \neq 0$ is more complicated and requires the use of Fourier transform, see Section~\ref{scattering.section}.
		
		\begin{lemma}\label{propagation.lemma}
			We consider   $C^1$ initial data on $\mathcal{H}^+ \cup \Cin$ in the framework of Theorem~\ref{CH.thm.SS}. Let $s>1$.
			
			\begin{itemize}
				\item Suppose that as $v\rightarrow+\infty$ \begin{equation}\label{upper.1.hyp}
					|\rd_v \phi|_{|\mathcal{H}^+}(v) \ls v^{-s}.
				\end{equation} Then, there exists $u_s<0$ sufficiently negative so that for all $u\leq u_s$, $v$ sufficiently large \begin{equation}\label{upper.1}
					|\rd_v \phi|\blue{(u,v)}\ls v^{-s}.
				\end{equation}
				\item Assume that \eqref{upper.1.hyp} holds and moreover that as $v\rightarrow+\infty$ \begin{equation}\label{lower.1.hyp}
					\rd_v \phi_{|\mathcal{H}^+}(v)  \gtrsim v^{-s}.
				\end{equation} Then, for all $u\leq u_s$, $v$ sufficiently large \begin{equation}\label{lower.1}
					\rd_v \phi\blue{(u,v)} \gtrsim v^{-s}.
				\end{equation}
				\item Assume that \eqref{upper.1.hyp} holds and moreover that as $v\rightarrow+\infty$ \begin{equation}\label{upper.2.hyp}
					|	\rd_v^2 \phi|_{|\mathcal{H}^+}(v) \ls v^{-s-1}.
				\end{equation} Then, for all $u\leq u_s$, $v$ sufficiently large \begin{equation}\label{upper.2}
					|	\rd_v^2 \phi|\blue{(u,v)} \ls v^{-s-1}.
				\end{equation}
			\end{itemize}
			
		\end{lemma}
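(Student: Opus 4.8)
The plan is to propagate the event-horizon estimates $\eqref{upper.1.hyp}$, $\eqref{lower.1.hyp}$, $\eqref{upper.2.hyp}$ into the black hole interior by integrating the wave equation $\eqref{Field.psi}$ (with $q_0=0$, so $D=\rd$) along ingoing and outgoing cones in the region between the event horizon $\mathcal{H}^+$ and the curve $\gamma$ of $\eqref{MoiChristoph.nonlinear.prop}$, and then using the already-established interior estimates (Proposition~\ref{RS.prop}, Proposition~\ref{noshift.prop}, Proposition~\ref{timelikecurve.prop}, and their analogues for $q_0=0$ from \cite{Moi}) past $\gamma$ up to a neighborhood of the Cauchy horizon $\CH$. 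Since Theorem~\ref{CH.thm.SS} already gives $\CH \neq \emptyset$ and crude upper bounds $|\phi|, |\rd_v\phi| \ls [1+|v|]^{-s}$ in the whole interior, the content here is the \emph{preservation of the precise rate and of the lower bound}.

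First I would work in the red-shift region $\mathcal{R}$ adjacent to $\mathcal{H}^+$: writing $\eqref{Field.psi}$ in terms of $\psi = r\phi$ as a transport equation $\rd_u(\rd_v\psi) = \mathcal{B}(u,v)\psi$ where $\mathcal{B} = -\frac{\Omega^2}{4r^2} - \frac{\rd_u r\, \rd_v r}{r^2} + \frac{\Omega^2 Q^2}{4r^4}$, one integrates from $\mathcal{H}^+$ in the ingoing $u$-direction. Using the estimates $\eqref{Omega.R}$, $\eqref{Q.M.R}$, $\eqref{kappa.R}$ (valid for $q_0=0$ a fortiori, or their Reissner--Nordström analogues from \cite{Moi}) one controls $\mathcal{B}$ and a redshift-weighted Grönwall argument (the key being the factor $e^{2K_+ v}$ in the lapse, which makes $\int \mathcal{B}$ integrable against $v^{-s}$) shows that $\rd_v\psi(u,v) = \rd_v\psi_{|\mathcal{H}^+}(v) + O(\ep\, e^{2K_+ v_0} v^{-s})$, hence the rate $v^{-s}$ and, for $\ep$ small and $|u|$ large (i.e. $|U|$ small), the lower bound $\eqref{lower.1.hyp}$ persist. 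The analogous commuted equation for $\rd_v^2\psi$ (equation $\eqref{Field.dvpsi}$) is handled the same way to propagate $\eqref{upper.2.hyp}$. Then in the no-shift region $\mathcal{N}$ and up to the timelike curve $\gamma$ one iterates the Grönwall estimate over finitely many strips exactly as in Proposition~\ref{noshift.prop} and Proposition~\ref{timelikecurve.prop}, picking up only bounded constants — crucially these constants do not depend on $u$, so choosing $u_s$ sufficiently negative (equivalently starting far enough along $\mathcal{H}^+$) absorbs the error terms and keeps both the rate and the lower bound. Finally, past $\gamma$ and into the blue-shift region near $\CH$, one invokes the estimates of Proposition~\ref{MoiChristoph.nonlinear.prop} and Proposition~\ref{MoiChristoph.nonlinear.prop.new} (which, for $q_0=0$, reduce to the estimates of \cite{Moi}): $\eqref{dvphi.est}$ gives $|\rd_v\phi| \ls v^{-s}$ and the difference estimate machinery, together with the fact that in the no-shift region $\rd_v\psi$ is already pinned to $\rd_v\psi_{|\mathcal{H}^+}$ up to lower-order terms, delivers the lower bound on a full neighborhood $\{u \le u_s\}$ of $\CH$.

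The main obstacle I anticipate is the \emph{lower bound} $\eqref{lower.1}$, rather than the upper bounds, because lower bounds are not preserved under Grönwall estimates automatically: one must show the accumulated error is strictly smaller than the main term. This forces a quantitative tracking of constants — in particular that the error in $\rd_v\psi(u,v) - \rd_v\psi_{|\mathcal{H}^+}(v)$ is $O(v^{-s-\delta})$ or $O(\ep\, v^{-s})$ with a small prefactor, uniformly for $u \le u_s$ — and explains why the conclusion is only stated for $u$ sufficiently negative. The sign-definiteness of $\rd_v\phi_{|\mathcal{H}^+}$ in $\eqref{lower.1.hyp}$ is used here: it lets one conclude $\rd_v\phi(u,v) = \rd_v\phi_{|\mathcal{H}^+}(v)(1 + o(1)) \gtrsim v^{-s}$ once the relative error is shown to be $o(1)$. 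Since, as the authors note, all the individual ingredients are standard (\cite{MihalisPHD,JonathanStab,massinflationYakov,Moi}), I would present this as a sketch: set up the transport equation for $\psi$, state the redshift-weighted Grönwall lemma, note the region-by-region iteration up to $\gamma$ with $u$-independent constants, and then quote Proposition~\ref{MoiChristoph.nonlinear.prop}--\ref{MoiChristoph.nonlinear.prop.new} specialized to $q_0=0$ to finish in the blue-shift region, emphasizing only the point where smallness of the error is needed for the lower bound.
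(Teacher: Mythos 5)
Your treatment of the upper bound \eqref{upper.1} matches the paper's (the standard red-shift / no-shift / early blue-shift / late blue-shift decomposition with region-by-region Gr\"onwall estimates), so that part is fine. But your argument for the lower bound \eqref{lower.1} has a genuine gap. You propose to show that the accumulated error $\rd_v\psi(u,v)-\rd_v\psi_{|\mathcal{H}^+}(v)$ is $O(\ep\, v^{-s})$ with a small prefactor, absorbed by taking $u_s$ sufficiently negative. This works in the red-shift region (where the error carries the factor $|U|e^{2K_+v}$), but it fails in the no-shift region: there $\Omega^2\sim 1$, the $u$-extent of the region is a \emph{fixed} $O(1)$ length independent of how negative $u$ is, and the source $\mathcal{B}\psi$ in $\rd_u\rd_v\psi=\mathcal{B}\psi$ contributes $\int|\mathcal{B}||\psi|\,du = O(1)\cdot v^{-s}$ — the same order as the main term with a constant that is not small and cannot be made small by any choice of $u_s$. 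A sign-blind Gr\"onwall estimate therefore cannot rule out cancellation of the main term.

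The paper closes this gap with a mechanism you do not mention: the Dafermos monotonicity special to $q_0=0$ and real-valued $\phi$. Integrating \eqref{Field2} in $u$ and then \eqref{Field3} in $v$ yields a double Duhamel representation for $\theta=r\rd_v\phi$ whose kernel $\frac{-\lambda}{r}\cdot\frac{-\nu}{r}$ is \emph{positive} wherever $\lambda<0$ (i.e.\ everywhere to the future of the red-shift region), while the single-integral term $\int\frac{-\lambda}{r}\xi(\cdot,v_0)\,du'$ is controlled by red-shift decay of the initial data. Bootstrapping positivity of $\theta$ then gives $\theta(u,v)\geq r\rd_v\phi_{|\mathcal{H}^+}(v)+O(v^{-2s})$: the dangerous same-order contributions are not small, they are \emph{favorably signed}. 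Without invoking this structure (or an equivalent positivity argument), your proof of \eqref{lower.1} does not go through. Secondarily, for \eqref{upper.2} the paper does not directly Gr\"onwall the commuted equation \eqref{Field.dvpsi} as you suggest; it compares with the linear solution on exact Reissner--Nordstr\"om, commutes with the Killing field $T=\rd_v-\rd_u$ to get the improved rate $v^{-s-1}$ for $\rd_v^2\phi_{\mathcal{L}}$, and then uses the nonlinear difference estimate $|\rd_v^2\psi-\rd_v^2\psi_{\mathcal{L}}|\ls v^{-2s}$ together with $s>1$. Your direct approach would need to contend with the non-decaying coefficient $\rd_v\log\Omega^2\to 2K_-$ in the commuted equation, which is precisely why the paper calls this step "more subtle" and routes through the linear theory.
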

		
		\begin{proof}
			The proof of \eqref{upper.1} is well-known, see e.g., \cite{JonathanStab,Moi}; to give a brief sketch, we split the spacetime rectangle $[-\infty,u_s] \times [v_0,+\infty)$ into several regions where different estimates are proved: \begin{itemize}
				\item the red-shift region $\blue{\mathfrak{R}}=\{u+v \leq -\Delta\}$ for $\Delta>0$, where red-shift estimates are exploited. Note that $\lambda<0$ to the future of $\blue{\mathfrak{R}}$, and inside $\blue{\mathfrak{R}}$, $\lambda(u,v) \ls v^{-2s}$.
				\item the no-shift region $\mathcal{N}=\{-\Delta \leq u+v \leq \Delta_N\}$ for $\Delta_N>0$ where a Gr\"{o}nwall argument is used.
				\item the early blue-shift region  $\mathcal{EB}=\{\Delta_N \leq u+v \leq -\Delta'+ \frac{2s}{2|K_-|(M,e)}\log(v)\}$ for $\Delta'>0$, where the logarithmic-size of the region is exploited.
				\item the late blue-shift region  $\mathcal{LB}=\{ u+v \geq -\Delta'+ \frac{2s}{2|K_-|(M,e)}\log(v)\}$, where the blue-shift is exploited, specifically the estimate $\Omega^2 \ls v^{-2s}$ in this region. Note  that $|\lambda|(u,v) \ls v^{-2s}$ for $(u,v) \in \mathcal{LB}$.
			\end{itemize}
			For \eqref{lower.1}, we use a special monotonicity for \eqref{Field} in the case $q_0=0$, first exploited by Dafermos \cite{MihalisPHD,Mihalis1}; we quickly sketch the main argument here. Integrating \eqref{Field2} in $u$, we get\blue{, in the notations $\theta=r\rd_v \phi$ and $\xi=r\rd_u \phi$} \begin{equation}
				\theta(u,v)  =  	r\rd_v \phi_{|\mathcal{H}^+}(v) + \int^{u}_{-\infty} \frac{-\lambda }{r} \xi(u',v) du',
			\end{equation} then integrating \eqref{Field3} in $v$ gives  \begin{equation}
				\theta(u,v)  =  	r\rd_v \phi_{|\mathcal{H}^+}(v) + \int^{u}_{-\infty} \frac{-\lambda }{r}(u',v) \xi(u',v_0) du'+ \int^{u}_{-\infty} \frac{-\lambda }{r}(u',v)[ \int^{v}_{v_0} \frac{-\nu }{r}(u',v')\theta(u',v') dv'] du'.
			\end{equation} Note that $\int^{u}_{-\infty} \frac{-\lambda }{r}(u',v) \xi(u',v_0) du'=O(e^{-2K_+v})$ if $(u,v) \in \blue{\mathfrak{R}}\cup \mathcal{N} \cup \mathcal{EB}$ (red-shift estimate on the initial data term $\xi(u',v_0)$) and  $\int^{u}_{-\infty} \frac{-\lambda }{r}(u',v) \xi(u',v_0) du'=O(v^{-2s})$ if $(u,v ) \in  \mathcal{LB}$. Moreover, $\frac{-\lambda }{r}(u',v)>0$, except if $(u',v) \in \blue{\mathfrak{R}}$, in which case $\lambda(u',v)=O(v^{-2s})$. All in all, after bootstrapping that $\theta$ remains positive, we get   \begin{equation}
				\theta(u,v)  \geq  	r\rd_v \phi_{|\mathcal{H}^+}(v) +O(v^{-2s}),
			\end{equation}
			which then gives \eqref{lower.1} as a consequence of \eqref{lower.1.hyp}.
			
			\noindent \eqref{upper.2} is slightly more subtle to prove. First, consider $\phi_{\mathcal{L}}$ a solution of the linear wave equation on \eqref{RN2} \begin{equation}
				\Box_{g_{RN}} \phi_{\mathcal{L}}=0,
			\end{equation} assuming the initial data on $\mathcal{H}^+$ satisfy \eqref{upper.1.hyp}, \eqref{upper.2.hyp}. Since the Reissner--Nordstr\"{o}m metric is stationary, one can commute $\phi_{\mathcal{L}}$ with the Killing vector field $T= \rd_v - \rd_u$, and then \eqref{upper.1.hyp} is satisfied for $T \phi_{\mathcal{L}}$ at the level $s+1$ (since $[\rd_u\phi_{\mathcal{L}}]_{|\mathcal{H}^+}=0$ \blue{and $[\rd_u ]_{|\mathcal{H}^+}=0$}, due to red-shift), so by \eqref{upper.1}, we have for all $(u,v) \in \mathcal{LB}$:\begin{equation}
				|\rd_v T\phi_{\mathcal{L}}|(u,v) \ls v^{-s-1},
			\end{equation} and by \eqref{Field} (applied on the Reissner--Nordstr\"{o}m metric), we show that $\rd_v \rd_v \phi_{\mathcal{L}}$ decays faster and thus  we have \begin{equation}\label{dvv.phi.l}
				|\rd_v^2 \phi_{\mathcal{L}}|(u,v)\blue{,\ 	|\rd_v^2 \psi_{\mathcal{L}}|(u,v)} \ls v^{-s-1},
			\end{equation}\blue{for all $(u,v) \in \mathcal{LB}$.} Then, we argue as in Section~\ref{scattering.section} and compare the linear and nonlinear scalar field estimates: from Proposition~\ref{MoiChristoph.nonlinear.prop.new}, we know that for $(u,v) \in \mathcal{LB}$, \begin{equation}
				|\rd_v^2 \psi(u,v) - \rd_v^2 \psi_{\mathcal{L}}(u,v)| \ls v^{-2s}.
			\end{equation} Since $s>1$, \eqref{upper.2} is proved as a consequence of $|\lambda|(u,v) \ls v^{-2s}$ \blue{and $|\rd_v\lambda|(u,v) \ls v^{-2s}$} for \blue{all} $(u,v) \in \mathcal{LB}$ \blue{(invoking again Proposition~\ref{MoiChristoph.nonlinear.prop.new} for this estimate)} and \eqref{dvv.phi.l}.
		\end{proof}
		Finally, we turn to the proof of Theorem~\ref{main.thm.2end.ii}.
		\begin{proof}   We start with a two-ended spacetime MGHD arising from admissible initial data $(\mathcal{M},g_0,K_0,\phi_0,\phi_1,e) \in \mathcal{G}$. We will focus on the right-most side of the two-ended black hole, where we denote $\mathcal{H}_R$ and $\mathcal{CH}_R$ the event and Cauchy horizons, respectively; the left-side is treated analogously.  First, \eqref{hyp5} is trivially satisfied since $\phi$ is real-valued.    To show that \eqref{hyp1},  \eqref{hyp2}, \eqref{hyp4} and \eqref{hyp3} are satisfied for all $\uH < u_0 < \uch$, we first appeal to \cite{Gautam}, Theorem 1.15 (which itself relies on an application of  Main Theorem 2 in \cite{twotails}, upon showing its assumptions are satisfied) showing that on the event horizon $\mathcal{H}_R$: \begin{equation}\label{EHbound}
				\rd_v \phi_{|\mathcal{H}_R}(v) = D v^{-4} + o(v^{-4}),\ \rd^2_{v v} \phi_{|\mathcal{H}_R}(v)= O(v^{-5}) \text{ as } v\rightarrow+\infty,
			\end{equation} for some $D\in \mathbb{R}$ and $D\neq 0$  from the fact that $(\mathcal{M},g_0,K_0,\phi_0,\phi_1,e) \in \mathcal{G}$, see \cite{JonathanStabExt}. As a consequence of \eqref{EHbound}, we can apply Theorem~\ref{CH.thm.SS} to show that $\mathcal{CH}_{R} \neq \emptyset$, and symmetrically  $\mathcal{CH}_{L} \neq \emptyset$.  Finally, by \cite{Gautam}, Corollary 1.1.4 (which relies \blue{too} on \cite{massinflationYakov} ), we know that mass inflation occurs, i.e., \begin{equation}\label{mass.inf.2end}
				\varpi_{|\mathcal{CH}_{L}}=+\infty,\  \varpi_{|\mathcal{CH}_{R}}=+\infty.
			\end{equation} \noindent	Then, by Lemma~\ref{propagation.lemma}, for all $u_{\mathcal{H}^+}<u<u_s$: \begin{equation}\label{insidebound}
				\frac{D}{2 } v^{-4} \leq \rd_v \phi(u,v) \lesssim v^{-4},\ |\rd^2_{v v} \phi(u,v)|\ls v^{-5} \text{ as } v\rightarrow+\infty.
			\end{equation} Therefore,  \eqref{hyp1} (note that the lower bound on $|\rd_u r|$ in \eqref{hyp1} is a consequence of \eqref{mass.inf.2end}, see \cite{Moi4}), \eqref{hyp2},  \eqref{hyp4} and \eqref{hyp3} hold for $s=4$ as a consequence of \eqref{insidebound}. 
			
			Then, by Statement~\ref{Atwo} of Theorem~\ref{main.thm.2end.i}, we have $\mathcal{S}_{i^+} =\emptyset$ already.  Moreover, define $\mathcal{G}' \subset \mathcal{G}$ as the subset of spacetimes such that $\mathcal{S}\neq \emptyset$ (we will show below that $\mathcal{G}'\neq \emptyset$). Theorem~\ref{main.thm} also shows that there exists  $u_{T} <\uch$, $v_{T} \in \RR$ such that  $(u_T,\uch]\times[v_T,+\infty) \subset \T$. Since $\T$ is open, there exists $\ep>0$ such that   $(u_T,\uch+\ep)\times\{v_T\} \subset \T$ (note that since $\mathcal{S}\neq \emptyset$, $(u_T,\uch+\ep)\times\{v_T\} \subset \mathcal{Q}^+$ for $\ep>0$ small enough). By the monotonicity of \eqref{RaychV},  $\blue{\mathcal{Q}^+\cap\{  u_T<u<\uch+\ep,\ v\geq v_T\}}\subset \T$. Therefore, there exists $u_F\in \RR$ such that $\mathcal{Q}^+\cap\left([u_T,u_F)\times \{v_T\} \right)\subset \T$, and $\lim_{u \rightarrow u_F} r(u,v_T) =0$. Thus, we can apply Theorem~\ref{main.thm} to the region $\mathcal{Q}^+\cap \{ u \geq u_T, v\geq v_T\}$, therefore there exists $\delta>0$ such that $\mathcal{S}\cap \{u \leq \uch+\delta\}$ is spacelike and the Kasner asymptotics \eqref{K1}-\eqref{K5} hold.\\

			To show that $\mathcal{G}'\neq \emptyset$, we will modify $\mathcal{M}$ (of course, it could be the case that $\mathcal{S}\neq \emptyset$ in $\mathcal{M}$ already, but since we do not know that, we will modify $\mathcal{M}$ regardless if whether  $\mathcal{S}= \emptyset$ or not) to obtain another admissible spacetime with $\mathcal{S}\neq \emptyset$. The strategy relies on the argument leading to the proof Theorem~\ref{breakdown.thm.new}, embodied by Proposition~\ref{apriori.prop1}. Let us fix $u_{\mathcal{H}_R} < u_0^{R} \leq u_{\mathcal{CH}_R}$. By Proposition~\ref{apriori.prop1}, we know that there exist  constants $C_R^{+}>0$, $C_R^{-}>0$, $v_0^{R}$ (only depending on $u_0^{R}$) such that  for all $u_{0}^{R}\leq u \leq u_{\mathcal{CH}_R}$, $v\geq v_0^{R}$ \begin{equation}\label{last.crucial}
				C_R^{-}\ v^{1-2s}		\leq	|r^2(u,v)- r^2_{|\mathcal{CH}_R}(u)| \leq C_R^{+}\ v^{1-2s}.
			\end{equation} In particular, for all $u_{0}^{R}\leq u \leq u_{\mathcal{CH}_R}$, $v\geq v_0^{R}$ \begin{equation}\label{last.crucial'}
				C_R^{-}\ v^{1-2s}		\leq	r^2(u,v).
			\end{equation} 
			
			Let $V_R\geq v_0^{R}$. There is, of course, the analogous notion on the left,  with evident notations $v_0^{L}$, $u_0^{L}$, $U_L$.\\		 We construct a new two-ended black hole manifold $\mathcal{M}'$ as such: \begin{enumerate}
				\item \label{last0} $\mathcal{M}' $ is the MGHD of admissible two-ended asymptotically flat initial data on a  spacelike hypersurface $\Sigma'$.
				\item \label{last1} $\mathcal{M}' \cap \{ u \leq u_0^{R}, v \geq V_R\}= \mathcal{M} \cap \{ u \leq u_0^{R}, v \geq V_R\}$ and    $\mathcal{M}' \cap \{ v \leq v_0^{L}, u \geq U_L\}= \mathcal{M} \cap \{v \leq v_0^{L}, u \geq U_L\}$. In particular, $\mathcal{M}$ and $\mathcal{M'}$ have the same asymptotically flat ends and thus $\mathcal{M}'\in \mathcal{G}$.
				\item \label{last2} As a consequence of the above,  $\mathcal{CH}_R\neq \emptyset$ and $\mathcal{CH}_L\neq \emptyset$ in $\mathcal{M}'$ and $\mathcal{CH}_R$, $\mathcal{CH}_L$ are weakly singular in the sense of mass inflation, i.e.,  \eqref{mass.inf.2end} holds.
				
				\item  \label{last3} $\mathcal{S}\neq \emptyset$ in $\mathcal{M}'$.

			\end{enumerate}
			
			Note that for \eqref{last2}, the Hawking mass blow up on $\CH$ is clear in the regions   $\mathcal{M}' \cap \{ u \leq u_0^{R}, v \geq V_R\}$  and $\mathcal{M}' \cap \{ v \leq v_0^{L}, u \geq U_L\}$, since they coincide with the corresponding regions in $\mathcal{M}$ where mass inflation occurs on $\CH$ by \eqref{mass.inf.2end},  and \eqref{last2} then follows from the propagation of the Hawking mass blow-up towards the future. We now describe the construction and how to arrange for \eqref{last0} and \eqref{last3} to be satisfied. On the right side, let us pose \blue{new} initial data on the ingoing cone \begin{equation}
				\underline{C}_{V_R} = [u_0^{R}, u_F^{R}] \times \{V=V_R\},
			\end{equation} such that at the future end-point of $	\underline{C}_{V_R}$  \begin{equation}\label{r.small2}
				0<	r^2(u_F^{R},V_R) < \frac{C_R^{-}}{4} V_R^{1-2s}.
			\end{equation}  Then, we consider the solution of \eqref{1.1}--\eqref{5.1} with $q_0=0$ with bicharacteristic data on 	$\underline{C}_{V_R} \cup \left( \{u=u_0^{R}\}\times[V_R,+\infty)\right)$ and note that \eqref{last.crucial} is still satisfied since $C_R^{-}$ only depends on the estimates \eqref{hyp1} on the cone $\{u=u_0^{R}\}
			\times[V_R,+\infty)$, which are unchanged. We claim that for the new solution in the spacetime region $\mathcal{M}'\cap\{u_0^{R}\leq u \leq u_F^{R},\ v\geq V_R\}$, the right Cauchy horizon breaks down, namely the following is true \begin{equation}\label{last.crucial2}
				u_{\mathcal{CH}_R}< u_F^{R}.
			\end{equation} 
			Indeed, if \eqref{last.crucial2} did not hold, \eqref{r.small2}  would contradict \eqref{last.crucial'}. So \eqref{last.crucial2} indeed holds, and by Proposition~\ref{apriori.prop2}, there exists $\mathcal{S}=\{r=0\} \neq \emptyset$, a future-spacetime boundary that can be attached to the spacetime region $\mathcal{M}'\cap\{u_0^{R}\leq u \leq u_F^{R},\ v\geq V_R\}$,  which shows \eqref{last3} is true.

			To conclude, we still need to show \eqref{last1} holds. For this, note that we have constructed two solutions of \eqref{1.1}--\eqref{5.1}: on the right, $\mathcal{M}'_R$ in the spacetime region $\{u \leq u_F^{R},\ v\geq V_R\}$ and \blue{on the left,} $\mathcal{M}'_L$ in the spacetime region  $\{v \leq v_F^{L},\ u\geq U_L\}$. To connect these two, we can choose $(u_F^{R},V_R) = (U_L,v_F^{L})$ and connect them at the sphere $(U_L,V_R)$. Then, we invoke local-well posedness for \eqref{1.1}--\eqref{5.1} with bicharacteristic initial data   on \begin{equation}
				\left(	\{U_L\} \times [V_R-\ep,V_R] \right) \cup  	\left(	 [U_L-\ep,U_L]\times \{V_R\}  \right),
			\end{equation} and choosing $\ep>0$ small enough, we obtain a solution of \eqref{1.1}--\eqref{5.1} in the spacetime rectangle $[U_L-\ep,U_L] \times [V_R-\ep,V_R]$. Moreover, noting that $(U_L,V_R)$ is trapped, i.e.,  $\rd_u r(U_L,V_R)<0$, and $\rd_v r(U_L,V_R)<0$, by Cauchy stability, we have that \begin{equation}
				\rd_u r(u,v)<0,\ \rd_v r(u,v)<0 \text{ for all } (u,v) \in [U_L-\ep,U_L] \times [V_R-\ep,V_R].
			\end{equation}
			Thus, it is easy to construct a two-ended asymptotically spacelike hypersurface $\Sigma'$ which is admissible in the sense of \cite{JonathanStabExt}, i.e., that there exists $u_L^{ad}$, $v_L^{ad}$ such that \begin{equation}
				\rd_u r>0 \text{ on } \Sigma' \cap \{ u < u_L^{ad}\},\ 	\rd_u r<0 \text{ on } \Sigma' \cap \{ u > u_L^{ad}\},
			\end{equation}
			\begin{equation}
				\rd_v r>0 \text{ on } \Sigma' \cap \{ v < v_R^{ad}\},\ 	\rd_v r<0 \text{ on } \Sigma' \cap \{ v> v_R^{ad}\}.
			\end{equation} Moreover, we can arrange that $\Sigma'$ in $\mathcal{M}'$ coincides with $\Sigma$ in  $\mathcal{M}$ when the area-radius $r$ is sufficiently large.
			
		\end{proof}
		\begin{rmk}
			Note that ensuring \eqref{r.small2} requires a large scalar field on the ingoing cone $\underline{C}_{V_R}$ (this is a consequence of the Raychaudhuri equation \eqref{RaychU}, see also \cite{bif}, Section 8 where this idea is discussed). Therefore, the construction of Theorem~\ref{main.thm.2end.ii} is indeed a large-data one, which we know is necessary to obtain $\mathcal{S}\neq \emptyset$, since by Theorem~\ref{Dafermos.thm}, small scalar field data leads, on the contrary,  to   $\mathcal{S}= \emptyset$.
		\end{rmk}

		\small
		\bibliographystyle{plain}
		\bibliography{bibliography.bib}

\begin{thebibliography}{10}

\bibitem{Fournoxakis}
Spyros Alexakis and Grigorios Fournodavlos.
\newblock Stable space-like singularity formation for axi-symmetric and
  polarized near-{S}chwarzschild black hole interiors.
\newblock {\em Ann. PDE}, 11(1):Paper No. 10, 203, 2025.

\bibitem{DejanAn}
Xinliang An and Dejan Gajic.
\newblock {Curvature Blow-up and Mass Inflation in Spherically Symmetric
  Collapse to a Schwarzschild Black Hole}.
\newblock {\em Arch. Ration. Mech. Anal.}, 247(3):51, 2023.

\bibitem{ACR3}
Stefanos Aretakis, Stefan Czimek, and Igor Rodnianski.
\newblock Characteristic gluing to the {K}err family and application to
  spacelike gluing.
\newblock {\em Comm. Math. Phys.}, 403(1):275--327, 2023.

\bibitem{ACR2}
Stefanos Aretakis, Stefan Czimek, and Igor Rodnianski.
\newblock The characteristic gluing problem for the {E}instein vacuum
  equations: linear and nonlinear analysis.
\newblock {\em Ann. Henri Poincar\'e}, 25(6):3081--3205, 2024.

\bibitem{ACR1}
Stefanos Aretakis, Stefan Czimek, and Igor Rodnianski.
\newblock The characteristic gluing problpreem for the {E}instein equations and
  applications.
\newblock {\em Duke Math. J.}, 174(2):355--402, 2025.

\bibitem{BKL1}
Vladimir~A. Belinski, Isaak~M. Khalatnikov, and Evgeny~M. Lifshitz.
\newblock Oscillatory approach to a singular point in relativistic cosmology.
\newblock {\em Soviet Physics Uspekhi}, 13(6):745, 1971.

\bibitem{BKL2}
Vladimir~A. Belinski, Isaak~M. Khalatnikov, and Evgeny~M. Lifshitz.
\newblock A general solution of the {E}instein equations with a time
  singularity.
\newblock {\em Advances in Physics}, 31(6):639--667, 12, 1982.

\bibitem{BieriChruscielYau}
Lydia Bieri, Piotr Chru\'sciel, and Shing-Tung Yau, editors.
\newblock {\em Nonlinear analysis in geometry and applied mathematics},
  volume~1 of {\em Harvard University Center of Mathematical Sciences and
  Applications (CMSA) Series in Mathematics}.
\newblock International Press, Somerville, MA, 2017.
\newblock Papers based on the lectures on ``Nonlinear Equations'' presented at
  the Harvard Center of Mathematical Sciences and Applications (CMSA) during
  the 2015--2016 program.

\bibitem{Carlotto.review}
Alessandro Carlotto.
\newblock {The general relativistic constraint equations}.
\newblock {\em {Living Reviews in Relativity}}, 24, 2021.

\bibitem{Christo4}
Demetrios Christodoulou.
\newblock Violation of cosmic censorship in the gravitational collapse of a
  dust cloud.
\newblock {\em Comm. Math. Phys.}, 93(2):171--195, 1984.

\bibitem{Christo1}
Demetrios Christodoulou.
\newblock The formation of black holes and singularities in spherically
  symmetric gravitational collapse.
\newblock {\em Comm. Pure Appl. Math.}, 44(3):339--373, 1991.

\bibitem{Christo2}
Demetrios Christodoulou.
\newblock Bounded variation solutions of the spherically symmetric
  {E}instein-scalar field equations.
\newblock {\em Comm. Pure Appl. Math.}, 46(8):1131--1220, 1993.

\bibitem{Christo.existence}
Demetrios Christodoulou.
\newblock Examples of naked singularity formation in the gravitational collapse
  of a scalar field.
\newblock {\em Ann. of Math.}, 140(3):607--653, 1994.

\bibitem{Christo3}
Demetrios Christodoulou.
\newblock The instability of naked singularities in the gravitational collapse
  of a scalar field.
\newblock {\em Ann. of Math.}, 149(1):183--217, 1999.

\bibitem{ChristoCQG}
Demetrios Christodoulou.
\newblock On the global initial value problem and the issue of singularities.
\newblock {\em Classical Quantum Gravity}, 16(12A):A23--A35, 1999.

\bibitem{Chrusciel.gluing}
Piotr~T. Chru\'sciel.
\newblock Long time existence from interior gluing.
\newblock {\em Classical Quantum Gravity}, 34(14):145016, 18, 2017.

\bibitem{ChruscielDelay}
Piotr~T. Chru\'sciel and Erwann Delay.
\newblock On mapping properties of the general relativistic constraints
  operator in weighted function spaces, with applications.
\newblock {\em M\'em. Soc. Math. Fr. (N.S.)}, (94):vi+103, 2003.

\bibitem{Corvino}
Justin Corvino.
\newblock Scalar curvature deformation and a gluing construction for the
  {E}instein constraint equations.
\newblock {\em Comm. Math. Phys.}, 214(1):137--189, 2000.

\bibitem{Corvino.review}
Justin Corvino.
\newblock {G}luing for the {E}instein constraint equations.
\newblock {\em Recent advances in general relativity: an issue in memory of
  Yvonne Choquet-Bruhat. Comptes Rendus. Mécanique}, 353:29--52, 2025.

\bibitem{CorvinoSchoen}
Justin Corvino and Richard~M. Schoen.
\newblock On the asymptotics for the vacuum {E}instein constraint equations.
\newblock {\em J. Differential Geom.}, 73(2):185--217, 2006.

\bibitem{CR}
Stefan Czimek and Igor Rodnianski.
\newblock Obstruction-free gluing for the {E}instein equations.
\newblock \emph{arxiv:2210.09663}, preprint, 2022.

\bibitem{MihalisPHD}
Mihalis Dafermos.
\newblock Stability and instability of the {C}auchy horizon for the spherically
  symmetric {E}instein-{M}axwell-scalar field equations.
\newblock {\em Ann. of Math.}, 158(3):875--928, 2003.

\bibitem{Dafermos:2004jp}
Mihalis Dafermos.
\newblock {Price's law, mass inflation, and strong cosmic censorship}.
\newblock In {\em {7th Hungarian Relativity Workshop (RW 2003)}}, pages 79--90,
  2004.

\bibitem{Mihalis1}
Mihalis Dafermos.
\newblock The interior of charged black holes and the problem of uniqueness in
  general relativity.
\newblock {\em Comm. Pure Appl. Math.}, 58(4):445--504, 2005.

\bibitem{nospacelike}
Mihalis Dafermos.
\newblock Black holes without spacelike singularities.
\newblock {\em Comm. Math. Phys.}, 332(2):729--757, 2014.

\bibitem{MihalisICM}
Mihalis Dafermos.
\newblock The mathematical analysis of black holes in general relativity.
\newblock In {\em Proceedings of the {I}nternational {C}ongress of
  {M}athematicians---{S}eoul 2014. {V}ol. {III}}, pages 747--772. Kyung Moon
  Sa, Seoul, 2014.

\bibitem{KerrStab}
Mihalis Dafermos and Jonathan Luk.
\newblock The interior of dynamical vacuum black holes {I}: {T}he
  {$C^0$}-stability of the {K}err {C}auchy horizon.
\newblock {\em Ann. of Math.}, 202(2):309--630, 2025.

\bibitem{PriceLaw}
Mihalis Dafermos and Igor Rodnianski.
\newblock A proof of {P}rice's law for the collapse of a self-gravitating
  scalar field.
\newblock {\em Invent. Math.}, 162(2):381--457, 2005.

\bibitem{FournodavlosLuk}
Grigorios Fournodavlos and Jonathan Luk.
\newblock Asymptotically {K}asner-like singularities.
\newblock {\em Amer. J. Math.}, 145(4):1183--1272, 2023.

\bibitem{FournodavlosRodnianskiSpeck}
Grigorios Fournodavlos, Igor Rodnianski, and Jared Speck.
\newblock Stable big bang formation for {E}instein's equations: the complete
  sub-critical regime.
\newblock {\em J. Amer. Math. Soc.}, 36(3):827--916, 2023.

\bibitem{Franzen2}
Anne~T. Franzen.
\newblock Boundedness of massless scalar waves on {K}err interior backgrounds.
\newblock {\em Ann. Henri Poincar\'e}, 21(4):1045--1111, 2020.

\bibitem{MoiDejan}
Dejan Gajic and Maxime Van~de Moortel.
\newblock Late-time tails for scale-invariant wave equations with a potential
  and the near-horizon geometry of null infinity.
\newblock \emph{arXiv:2401.13047}, preprint, 2024.

\bibitem{Gautam}
Onyx Gautam.
\newblock Late-time tails and mass inflation for the spherically symmetric
  {E}instein--{M}axwell--scalar field system.
\newblock \emph{arXiv:2504.12370}, preprint, 2024.

\bibitem{Hawking}
Stephen~W. Hawking and George~F.R. Ellis.
\newblock {\em The large scale structure of space-time}.
\newblock Cambridge University Press, London-New York, 1973.

\bibitem{Hintzgluing1}
Peter Hintz.
\newblock Gluing small black holes into initial data sets.
\newblock {\em Comm. Math. Phys.}, 405(5):Paper No. 114, 76, 2024.

\bibitem{LC3}
Philip Isett, Yuchen Mao, Sung-Jin Oh, and Zhongkai Tao.
\newblock Integral formulas for under/overdetermined differential operators via
  recovery on curves and the finite-dimensional cokernel condition {I}: General
  theory.
\newblock \emph{arXiv:2509.04617}, preprint, 2025.

\bibitem{Kasner}
Edward Kasner.
\newblock Geometrical theorems on {E}instein's cosmological equations.
\newblock {\em Amer. J. Math. {\bf 43}, 217--221}, 1921.

\bibitem{KehleUnger3}
Christoph Kehle and Ryan Unger.
\newblock Event horizon gluing and black hole formation in vacuum: the very
  slowly rotating case.
\newblock {\em Adv. Math.}, 452:Paper No. 109816, 40, 2024.

\bibitem{KehleUnger}
Christoph Kehle and Ryan Unger.
\newblock Gravitational collapse to extremal black holes and the third law of
  black hole thermodynamics.
\newblock {\em J. Eur. Math. Soc.}, online first, 2025.

\bibitem{MoiChristoph}
Christoph Kehle and Maxime Van~de Moortel.
\newblock Strong {C}osmic {C}ensorship in the presence of matter: the decisive
  effect of horizon oscillations on the black hole interior geometry.
\newblock {\em Anal. PDE}, 17(5):1501–1592, 2024.

\bibitem{Kommemi}
Jonathan Kommemi.
\newblock The global structure of spherically symmetric charged scalar field
  spacetimes.
\newblock {\em Comm. Math. Phys.}, 323(1):35--106, 2013.

\bibitem{Warren1}
Warren Li.
\newblock Kasner-{L}ike {D}escription of {S}pacelike {S}ingularities in
  {S}pherically {S}ymmetric {S}pacetimes with {S}calar {M}atter.
\newblock {\em Ann. PDE}, 11(1):5, 2025.

\bibitem{fluctuating}
Warren Li and Maxime Van~de Moortel.
\newblock Kasner {B}ounces and {F}luctuating {C}ollapse {I}nside {H}airy
  {B}lack {H}oles with {C}harged {M}atter.
\newblock {\em Ann. PDE}, 11(1):3, 2025.

\bibitem{JonathanWeakNull}
Jonathan Luk.
\newblock Weak null singularities in general relativity.
\newblock {\em J. Amer. Math. Soc.}, 31(1):1--63, 2018.

\bibitem{JonathanStab}
Jonathan Luk and Sung-Jin Oh.
\newblock Strong cosmic censorship in spherical symmetry for two-ended
  asymptotically flat initial data {I}. {T}he interior of the black hole
  region.
\newblock {\em Ann. of Math.}, 190(1):1--111, 2019.

\bibitem{JonathanStabExt}
Jonathan Luk and Sung-Jin Oh.
\newblock Strong cosmic censorship in spherical symmetry for two-ended
  asymptotically flat initial data {II}: the exterior of the black hole region.
\newblock {\em Ann. PDE}, 5(1):6, 194, 2019.

\bibitem{twotails}
Jonathan Luk and Sung-Jin Oh.
\newblock Late time tail of waves on dynamic asymptotically flat spacetimes of
  odd space dimensions.
\newblock \emph{arXiv:2504.12370}, preprint, 2024.

\bibitem{massinflationYakov}
Jonathan Luk, Sung-Jin Oh, and Yakov Shlapentokh-Rothman.
\newblock A scattering theory approach to {C}auchy horizon instability and
  applications to mass inflation.
\newblock {\em Ann. Henri Poincar\'{e}}, 24(2):363--411, 2023.

\bibitem{LC2}
Yuchen Mao, Sung-Jin Oh, and Zhongkai Tao.
\newblock Initial data gluing in the asymptotically flat regime via solution
  operators with prescribed support properties.
\newblock \emph{arXiv:2308.13031}, preprint, 2023.

\bibitem{LC1}
Yuchen Mao and Zhongkai Tao.
\newblock Localized initial data for {E}instein equations.
\newblock \emph{arXiv:2210.09437}, preprint, 2022.

\bibitem{gravitation}
Charles~W. Misner, Kip~S. Thorne, and John~Archibald Wheeler.
\newblock {\em Gravitation}.
\newblock W. H. Freeman and Co., San Francisco, CA, 1973.

\bibitem{ONeill}
Barrett O'Neill.
\newblock {\em Semi-{R}iemannian geometry}, volume 103 of {\em Pure and Applied
  Mathematics}.
\newblock Academic Press, Inc. [Harcourt Brace Jovanovich, Publishers], New
  York, 1983.

\bibitem{OppenheimerSnyder}
J.~Robert Oppenheimer and Hartland Snyder.
\newblock On continued gravitational contraction.
\newblock {\em Phys. Rev.}, 56(5):455--459, 1939.

\bibitem{KGSchw1}
Federico Pasqualotto, Yakov Shlapentokh-Rothman, and Maxime Van~de Moortel.
\newblock The asymptotics of massive fields on stationary spherically symmetric
  black holes for all angular momenta.
\newblock \emph{arXiv:2303.17767}, preprint, 2023.

\bibitem{PenroseSCC}
Roger Penrose.
\newblock Gravitational collapse: The role of general relativity.
\newblock {\em Nuovo Cimento}, 1:252, 1969.

\bibitem{penrose1974gravitational}
Roger Penrose.
\newblock Gravitational collapse.
\newblock In C{\'e}cile DeWitt-Morette, editor, {\em Gravitational Radiation
  and Gravitational Collapse}, volume 64 of IAU Symposium, pages 82--91.
  Springer, 1974.

\bibitem{Penrose1979}
Roger Penrose.
\newblock {Singularities and time asymmetry}.
\newblock In {\em General Relativity: An Einstein Centenary Survey}, pages
  581--638. 1980.

\bibitem{KGSchw2}
Yakov Shlapentokh-Rothman and Maxime Van~de Moortel.
\newblock Polynomial time decay for solutions of the {K}lein-{G}ordon equation
  on a subextremal {R}eissner-{N}ordstr\"om black hole.
\newblock {\em Duke Math. J.}, 175(1):1--134, 2026.

\bibitem{Moi}
Maxime Van~de Moortel.
\newblock Stability and instability of the sub-extremal
  {R}eissner-{N}ordstr\"{o}m black hole interior for the
  {E}instein-{M}axwell-{K}lein-{G}ordon equations in spherical symmetry.
\newblock {\em Comm. Math. Phys.}, 360(1):103--168, 2018.

\bibitem{Moi4}
Maxime Van~de Moortel.
\newblock Mass inflation and the {$C^2$}-inextendibility of spherically
  symmetric charged scalar field dynamical black holes.
\newblock {\em Comm. Math. Phys.}, 382(2):1263--1341, 2021.

\bibitem{Moi2}
Maxime Van~de Moortel.
\newblock Decay of weakly charged solutions for the spherically symmetric
  {M}axwell-charged-scalar-field equations on a {R}eissner-{N}ordstr\"{o}m
  exterior space-time.
\newblock {\em Ann. Sci. \'{E}c. Norm. Sup\'{e}r. (4)}, 55(2):283--404, 2022.

\bibitem{breakdown}
Maxime Van~de Moortel.
\newblock The breakdown of weak null singularities inside black holes.
\newblock {\em Duke Math. J.}, 172(15):2957--3012, 2023.

\bibitem{violent}
Maxime Van~de Moortel.
\newblock Violent {N}onlinear {C}ollapse in the {I}nterior of {C}harged {H}airy
  {B}lack {H}oles.
\newblock {\em Arch. Ration. Mech. Anal.}, 248, 2024.

\bibitem{bif}
Maxime Van~de Moortel.
\newblock The coexistence of null and spacelike singularities inside
  spherically symmetric black holes.
\newblock \emph{arXiv:2504.12370}, preprint, 2025.

\bibitem{Soffer70}
Maxime Van~de Moortel.
\newblock An extension of the {$r^p$} method for wave equations with
  scale-critical potentials and first-order terms.
\newblock {\em Ann. Appl. Math.}, 41(1):112--154, 2025.

\bibitem{review}
Maxime Van~de Moortel.
\newblock The {S}trong {C}osmic {C}ensorship {C}onjecture.
\newblock {\em Recent advances in general relativity: an issue in memory of
  Yvonne Choquet-Bruhat. Comptes Rendus. Mécanique}, 353:415--454, 2025.

\end{thebibliography}

	\end{document}